\date{}
\title{Non-deterministic, probabilistic, and quantum effects through
  the lens of event structures (Technical report)}
\author{Vítor Fernandes \and
Marc de Visme \and
Benoît Valiron}
\def\csname opt@stmaryrd.sty\endcsname
\theoremstyle{definition}
\newtheorem{definition}{Definition}[section]
\theoremstyle{definition}
\newtheorem{example}[definition]{Example}
\theoremstyle{definition}
\newtheorem{lemma}[definition]{Lemma}
\theoremstyle{definition}
\newtheorem{proposition}[definition]{Proposition}
\theoremstyle{definition}
\newtheorem{theorem}[definition]{Theorem}
\theoremstyle{definition}
\theoremstyle{definition}
\newtheorem{remark}{Remark}
\newcommand{\adjoint}{\dagger}
\newcommand{\es}[1]{\mathrm{#1}}
\newcommand{\pes}[3]{(#1,\, #2,\, #3)}
\newcommand{\ppes}[4]{(#1,\, #2,\, #3,\, #4)}
\newcommand{\pespq}[2]{(#1,\, #2)}
\newcommand{\qpes}[4]{(#1,\, #2,\, #3,\, #4)}
\newcommand{\init}[1]{\mathcal{I}(#1)}
\newcommand{\minconflict}{\mathrel{\!\!\!\xymatrix@C=0.5cm{{}\ar@{~}[r]&{}}\!\!\!\!}}
\newcommand{\confmax}[1]{\mathcal{C}_{\text{max}}(#1)}
\newcommand{\confES}[1]{\mathcal{C}(#1)}
\newcommand{\qvar}[1]{ \text{qVar}(#1)}
\newcommand{\chain}{\mathrel{\ooalign{\raise.225em\hbox{$\rule{0.4cm}{0.5pt}$}\cr
  \hidewidth\hbox{$\subset\mkern-12mu$}\cr}}}
\newcommand{\cchain}[1]{\stackrel{#1}{\chain}\ }
\newcommand{\dropc}[4]{d^{(#1)}_{#2}[#3\, ;\, #4]}
\newcommand{\econc}[2]{#1 \text{ co } #2}
\newcommand\extconflict{\mathrel{\ooalign{\lower.525em\hbox{%
        $\stackrel{\mathclap{\normalfont\mbox{$\lrcorner
              \llcorner$}}}{\urcorner \ulcorner}$}}}}
\newcommand{\com}[1]{#1} 
\newcommand{\fvar}[1]{ FV(#1)}
\newcommand{\bvar}[1]{ BV(#1)}
\newcommand{\seq}[2]{#1\, ;\, #2}
\newcommand{\meas}[3]{\com{M}(#1, #2, #3)} 
\newcommand{\nd}[2]{#1\, \square\, #2} 
\newcommand{\probC}[3]{#1\, +_{#2}\, #3}
\newcommand{\conc}[2]{#1\, ||\, #2}
\newcommand{\qwhi}[2]{while\ \com{M}(#1, #2)} 
\newcommand{\rec}[2]{\mu #1 . #2}
\newcommand\uconvex{\mathrel{\ooalign{$\cup$\cr
  \hidewidth\raise.0ex\hbox{$-\mkern-.4mu$}\cr}}}
\newcommand{\xtwoheadrightarrow}[2][]{%
  \xrightarrow[#1]{#2}\mathrel{\mkern-14mu}\rightarrow
}
\newcommand{\mf}[1]{ \llbracket #1 \rrbracket }
\newcommand{\catfont}[1]{\mathsf{#1}}
\newcommand{\Subs}[2]{\catfont{Sub}_{}}
\newcommand{\sfunfont}[1]{\mathrm{#1}}
\newcommand{\Pow}{\sfunfont{P}}
\newcommand{\Dist}{\sfunfont{D}}
\newcommand\adjunct[2]{\xymatrix@=8ex{\ar@{}[r]|{\top}\ar@<1mm>@/^2mm/[r]^{{#2}}
& \ar@<1mm>@/^2mm/[l]^{{#1}}}}
\newcommand\adjunctop[2]{\xymatrix@=8ex{\ar@{}[r]|{\bot}\ar@<1mm>@/^2mm/[r]^{{#2}}
& \ar@<1mm>@/^2mm/[l]^{{#1}}}}
\newcommand\retract[2]{\xymatrix@=8ex{\ar@{}[r]|{}\ar@<1mm>@/^2mm/@{^{(}->}[r]^{{#2}}
& \ar@<1mm>@/^2mm/@{->>}[l]^{{#1}}}}
\newcommand{\ie}{\emph{i.e.}}
\tikzstyle{event}=[fill=white, draw=white, shape=rectangle, minimum size=0.5cm, tikzit shape=rectangle, tikzit draw=black, align=center]
\tikzstyle{command}=[fill=none, draw=none, shape=rectangle, tikzit draw=none, align=center]
\tikzstyle{bullet}=[fill=black, draw=black, shape=circle, inner sep=0pt, minimum size=3pt, tikzit shape=circle, align=center]
\tikzstyle{rectangle}=[fill=white, draw=black, shape=rectangle, align=center]
\tikzstyle{circle-1}=[fill=white, draw=white, shape=circle, align=center]
\tikzstyle{circle}=[fill=white, draw=black, shape=circle, radius=0.5cm, align=center]
\tikzstyle{ellipse}=[fill=white, draw=black, shape=ellipse, align=center]
\tikzstyle{gate}=[fill=white, draw=black, shape=rectangle, align=center]
\tikzstyle{state}=[fill=none, draw=none, shape=rectangle, align=center]
\tikzstyle{wiggle}=[decorate, decoration={snake, segment length=5mm, amplitude=0.5mm}, -, tikzit draw=red]
\tikzstyle{arrow}=[-, arrows={-Stealth[inset=0pt, angle=90:5pt, open]}]
\tikzstyle{op}=[->, draw=black, fill=none]
\tikzstyle{prob}=[decorate, decoration={snake, segment length=5mm, amplitude=0.5mm}, ->, tikzit draw=green]
\tikzstyle{traco}=[dashed, ->]
\tikzstyle{cchain}=[-, arrows={-Arc Barb[reversed]}]
\begin{document}
\maketitle

\tableofcontents

\newpage
\section{Introduction}
Concurrency is pervasive in modern computer architecture. Starting in the early 1960s, the study of
its semantics, both operational and denotational, and within different paradigms (from interleaving
to the so-called true concurrency) became a highly active research area with concrete implications
in language design.

In the interleaving paradigm, saying that two atomic actions $a$ and $b$ are in parallel is
interpreted as $a$ then $b$ or $b$ then $a$. On the other hand, from a true concurrent point of
view, the same command is interpreted as $a$ and $b$, which are causally unrelated. We focus on the
latter interpretation for which event structures~\cite{winskel82,winskel88} are a known model.

An event structure is a partial order with a conflict relation on events. If $a$ and $b$ are in
conflict, then they are incompatible events, \ie\ they cannot be performed in the same
computation. Furthermore, event structures are very flexible, and proof of that is the fact that
they have been used to study several computational effects: parallelism~\cite{winskel88},
probabilities~\cite{varacca06a,varacca07,visme19}, quantum effects~\cite{clairambault19,winskel14},
shared weak memory~\cite{castellan16}, etc.

Despite all the work around event structures on different computational effects, when the goal is to
provide denotational semantics to a programming language, they seem to play a secondary role. More
often than not, they serve as the backbone of some much more complex models, such as games and
strategies~\cite{castellan17,paquet20,clairambault19}. Some exceptions are the works of
Winskel~\cite{winskel88,winskel82}, in which he used event structures to give denotational semantics
to CCS~\cite{milner89}, and Marc de Visme~\cite{visme19}, in which two notions of conflict are used
in order to accommodate both probabilistic and non-deterministic choices in a probabilistic
extension of CCS~\cite{baier97}, who have used event structures as the primary model.

\paragraph{Contribution.}
In this paper, we aim at giving event structures the leading role as a computational model.  Our
work combines parallelism with three different algebraic effects: non-determinism, probabilities,
and quantum. For each algebraic effect, we propose a small imperative-style programming language
together with suitable operational semantics, wherein for the non-deterministic and quantum cases,
we used a simple labeled transition system -- or, in the probabilistic case, a labeled Segala
automaton~\cite{segala95,sokolova04}.

We rely on different flavors of event structures. For the non-deterministic case, we use the event
structures defined by Winskel~\cite{winskel88} as a base model. For the probabilistic case, we use
probabilistic event structures~\cite{winskel14}. For the quantum case, we consider a restriction of
the definition in~\cite{winskel14}, which we call Unitary event structures. This modification allows
us to extend~\cite[Theorem 3]{winskel14}, which states that quantum event structures without events
in conflict are probabilistic event structures when given an initial state, by dropping the
necessity of having an empty conflict relation.

We also show that the operational and denotational semantics are sound and adequate for the three
different algebraic effects considered. We do it by checking that the words created by the
operational semantics and the covering chains in event structures, which are essentially finite
sequences of events, coincide.

\paragraph{Remark.} The majority of the proofs are put in appendix to improve readability.

\newpage
\section{Event Structures}\label{sec:es}
In the imperative setting, the evaluation of a program is commonly accompanied by a memory that
changes accordingly the execution of said program, where each step performed by the computation is
not labeled.  On the other side we have a process algebra approach, in which states are dropped and
each step of the computation is labeled by the action that caused the occurrence of the computation.
Although we intend to model an imperative language, our approach is similar to the latter.  This
decision comes from the use of event structures.  By dropping the state we can use the usual
definitions of event structures~\cite{winskel84,winskel82}.  Since we want to model an imperative
language, we need to have the notion of state. Well, since we label the transitions we perform, we
can create a trace of the actions that were performed. By doing this, we can apply each instruction
in the trace to a given state.


Informally, an event structure~\cite{winskel88} is composed of a set of events, together with a
notion of causality given by a partial order on events: if $e \leq e'$ then $e'$ depends on $e$
(another way of interpreting $e \leq e'$ is $e'$ occurs after $e$), and a notion of conflict between
events: if $e \# e'$ then either $e$ occurs or $e'$ occurs, which is a behavior similar to a
non-deterministic choice.
\begin{definition}[Event Structures]\label{def:pes}
  Define an \textit{event structure} to be a structure $\es{E} = \pes{E}{\leq}{\#}$ consisting of a
  set $E$ of events, which are partially ordered by $\leq$, the \textit{causal dependency relation},
  and a binary, symmetric, irreflexive relation $\# \subseteq E \times E$, the \textit{conflict
    relation}, satisfying:
  \begin{itemize}
  \item $\{e' \mid e' \leq e\}$ is finite
  \item $e \# e' \leq e'' \Rightarrow e \# e''$
  \end{itemize}
  for all $e, e', e'' \in E$.
\end{definition}

Summing up, the first condition tells us that the downward closure of an event $e$ must be finite,
\ie\ the set of events that $e$ causally depends on needs to be finite, and the second condition
tells us that the conflict relation is hereditary.

\begin{definition}[Concurrent Event]
  Two events, $e,e'$ are said \emph{concurrent} iff
  $\neg (e \leq e') \wedge \neg (e' \leq e) \wedge \neg (e \# e')$. In other words, two events are
  concurrent when they are not causally dependent and are not in conflict.
\end{definition}

\begin{definition}[Configuration]
  A configuration is a subset of the set of events, $x \subseteq E$, that are
  \begin{align*}
    & \text{conflict-free: } \forall e, e' \in x\ .\ \neg (e \# e')\\
    & \text{down-closed: } \forall e,e'\ .\ e' \leq e \wedge e \in x \Rightarrow e' \in x
  \end{align*}
  We then denote by $\mathcal{C}^\infty(\es{E})$ the set of all configurations and by
  $\confES{\es{E}}$ the set of finite configurations.
\end{definition}

\begin{definition}[Covering chain]
  Let $\es{E} = \pes{E}{\leq}{\#}$ be a event structure, $e \in E$, and
  $x \in \confES{\es{E}}$.  Denote by $x \stackrel{e}{\chain\ } x \cup \{e\}$ if $e \not\in x$
  and $(x \cup \{e\}) \in \confES{\es{E}}$. A \emph{covering chain} on a configuration
  $x \in \confES{\es{E}}$ is a finite sequence of events $e_1 e_2 \dots e_n$ such that
  \[
    \emptyset \stackrel{e_1}{\chain\ } x_1 \stackrel{e_2}{\chain\ } x_2 \stackrel{e_3}{\chain\ } \dots
    \stackrel{e_n}{\chain\ } x_{n+1} = x
  \]
\end{definition}

\begin{definition}[Cover]
  Let $\es{E}$ be a event structure and $x,y \in \confES{\es{E}}$.  Say that $y$ covers
  $x$, pictured as $x \chain\ y$, if $x \subset y$ with nothing in between
  ($\not\exists z\ .\ x \subset z \subset y$).
\end{definition}

Later on we may find useful to say $y \cchain{e_1, \dots e_n} x_1, \dots x_n$ when
$y \chain\ x_1, \dots, x_n$.

\begin{definition}[Maximal configuration]
  Let $\es{E}$ be a event structure and $x \in \confES{\es{E}}$.  Say that $x$ is a
  maximal configuration iff $\not\exists y \in \confES{\es{E}}$ such that $x \chain\ y$.
  Denote by $\confmax{\es{E}}$ the set of maximal configurations.
\end{definition}

An interesting property of maximal configurations is that two distinct maximal configurations have
events in conflict. This is useful when showing the validity of sequential composition's definition
in the probabilistic realm (Lemma~\ref{lem:seq-es2}).
\begin{lemma}\label{lem:max-config-e-conflict}
  Let $\es{E}$ be an event structure and $x, x' \in \confmax{\es{E}}$. If $x \neq x'$ then
  $\exists e \in x, e' \in x'$ such that $e \# e'$.
\end{lemma}
\begin{proof}
  Let $x, x' \in \confmax{\es{E}}$ such that $x \neq x'$.

  Pick $e \in x \backslash x' = \{e \in x \mid e \not\in x'\}$ and let
  $M = \{a \in E \mid a \leq e \text{ and } a \not\in x'\}$ be the set of events that $e$ causally
  depends on and not in $x'$.  Clearly we have $M \neq \emptyset$, since $e \in M$. By
  Definition~\ref{def:pes}, $\{e' \mid e' \leq e\}$ is finite. Thus we can choose $a$ minimal in $M$
  with respect to $\leq$, \ie\ $\forall a' < a\ .\ a' \not\in M$ (in other words, choose the first
  event $a$ that $e$ causally depends on that is not in $x'$). Then every $a' \in E$ such that
  $a' < a$ is in $x'$, \ie\ $a' \in x'$. Thus, $x' \cup \{a\}$ is downward closed.  Since $x'$ is
  maximal, $x' \cup \{a\}$ cannot be a configuration, hence it must fail conflict-freeness:
  $\exists e' \in x'$ such that $a \# e'$.  By hereditary of conflict, $a \leq e$ and $a \# e'$, we
  have $e \# e'$.
\end{proof}

Later on we shall find useful to simplify how covering chains are represented.
We then let $\omega = e_1 e_2 \dots e_n$ and denote
$\emptyset \stackrel{e_1}{\chain\ } x_1 \stackrel{e_2}{\chain\ } x_2 \stackrel{e_3}{\chain\ } \dots
\stackrel{e_n}{\chain\ } x_{n+1} = x$ simply by $\emptyset \stackrel{w}{\chain\ } x$.

Since the causal relation is a partial order we know that it is transitive.  Furthermore, the
conflict relation is hereditary over events. Hence if we want to draw an event structure using these
two relations we would have to add a lot of redundant information, which would make the event
structure hard to understand.  To ease such task, we find it useful to use the notions of
\emph{immediate causality}, pictured by $\rightarrowtriangle$, and \emph{minimal conflict},
represented by $\minconflict$.  Let $\es{E}$ be an event structure such that $e_1, e_2 \in E$.  We
say $e_1 \rightarrowtriangle e_2$ iff $e_1 < e_2$ and $\not\exists e_3 \in E\, .\, e_1 < e_3 < e_2$.
We say $e_1 \minconflict e_2$ iff $e_1 \# e_2$ and whenever $e'_1 \leq e_1$, $e'_1 \# e'_2$, and
$e'_2 \leq e_2$ we have $e_1 = e'_1$ and $e_2 = e'_2$. Note that it is possible to deduce the causal
and conflict relations from the immediate causality and minimal conflict relations.

Example~\ref{ex:es-1} aims to get the reader familiarized with event structures.
\begin{example}\label{ex:es-1}
  Figure~\ref{fig:ex1-1} shows an event structure with four events, $a$, $b$, $c$, and $d$,
  where: $b$ causally depends on $a$, $c$ and $d$ are concurrent events which are in conflict
  with $a$, and consequently also with $b$.  Furthermore, note that $a$ is in minimal conflict
  with $c$ and $d$.  The set of configurations, \ie\ the set of possible computations, is
  $\set{\emptyset, \set{a}, \set{c}, \set{d}, \set{a,b}, \set{c,d}}$.  Furthermore note that the
  configuration $\set{c,d}$, which is composed of two concurrent events, has two possible
  covering chains: $\emptyset \cchain{d} \set{d} \cchain{c} \set{c,d}$ and
  $\emptyset \cchain{c} \set{c} \cchain{d} \set{c,d}$.
  \begin{figure}[ht!]
    \centering
    \begin{tikzpicture}
      \begin{pgfonlayer}{nodelayer}
        \node [style=event] (0) at (-2, 3) {$a$};
        \node [style=event] (1) at (-2, 2) {$b$};
        \node [style=event] (2) at (-0.5, 3) {$c$};
        \node [style=event] (3) at (0.25, 3) {$d$};
      \end{pgfonlayer}
      \begin{pgfonlayer}{edgelayer}
        \draw [style=arrow] (0) to (1);
        \draw [style=wiggle] (0) to (2);
        \draw [style=wiggle, bend left] (0) to (3);
      \end{pgfonlayer}
    \end{tikzpicture}
    \caption{Example of an event structure}
    \label{fig:ex1-1}
  \end{figure}
\end{example}

The last concept we will introduce in this section is that of a map of event structures, which can
be total or partial. Its usefulness will be demonstrated when showing that unitary event structures
with an initial state correspond to probabilistic event structures. Intuitively, a map of event
structures $f : \es{E} \rightarrow \es{E}'$ tells how the occurrence of an event $e$ in $\es{E}$
implies the occurrence of $f(e) = e'$ in $\es{E}'$. To be more precise, if $x$ in $\es{E}$, \ie\
$x \cchain{e}\ x \cup \{e\}$ then $f[x] \cchain{f(e)}\ f[x \cup \{e\}]$ in $\es{E}'$, whenever
$f(e)$ is defined.  In words, this means that if an event $e$ can be added to a configuration $x$ to
form a new configuration $x \cup \{e\}$ in $\es{E}$, then the same behavior holds in $\es{E}'$ when
$f(e)$ is defined. Formally, a map of event structures is defined as follows:
\begin{definition}[Map event structures]\label{def:map-es}
  Let $\es{E} = \pes{E}{\leq}{\#}, \es{E}' = \pes{E'}{\leq'}{\#'}$ be event structures.  A
  partial/total map $f$ from $\es{E}$ to $\es{E}'$ is a partial/total function
  $f : E \rightharpoonup E'$ such that:
  \begin{align*}
  \text{(Configuration Preserving)} & \forall x \in \confES{\es{E}} \Rightarrow f(x) \in \confES{\es{E}'} \\
  \text{(Locally injective)} & \forall (a \neq b) \in x \in \confES{\es{E}}, \text{ if $f$ is defined in both then }
    f(a) \neq f(b)
  \end{align*}
  where $f[x] = \set{ f(e) \mid e \in x,\ f(e) \text{ defined}}$
\end{definition}

The preserving configuration condition is self-explanatory. It suffices to recall what was
previously said, \ie\ that if an event $e$ is added to a configuration $x$ to form a new
configuration $x \cup \{e\}$ in $\es{E}$, then the same behavior holds in $\es{E}'$, when $f(e)$ is
defined. The locally injective condition assures that no two distinct events, under the same
configuration, are mapped to the same image event. This prevents concurrent events, or causally
dependent events from being mapped to the same event.  To understand why this condition is
necessary, let us explore an example illustrating a spurious map between event structures.
\begin{example}\label{ex:map-wrong-es}
  Consider the event structures in Figure~\ref{fig:ex4-2} linked by a broken map of event
  structures.
  \begin{figure}[ht!]
    \centering
    \begin{tikzpicture}
      \begin{pgfonlayer}{nodelayer}
        \node [style=event] (0) at (0, 1) {$a$};
        \node [style=event] (1) at (0, 0) {$c$};
        \node [style=event] (2) at (1, 1) {$b$};
        \node [style=event] (3) at (3.5, 1) {$a$};
        \node [style=event] (4) at (4.5, 1) {$b$};
        \node [style=none] (6) at (1.5, 0.5) {};
        \node [style=none] (7) at (3, 0.5) {};
        \node [style=none] (8) at (2.25, 0.75) {$f$};
      \end{pgfonlayer}
      \begin{pgfonlayer}{edgelayer}
        \draw [style=arrow] (0) to (1);
        \draw [style=op] (6.center) to (7.center);
      \end{pgfonlayer}
    \end{tikzpicture}
    \caption{Broken map of event structures}
    \label{fig:ex4-2}
  \end{figure} 

  The function $f$ maps $c$ to $a$ and applies the identity on $a$ and $b$.  Let us denote the event
  structure on the left by $\es{E}$ and the one on the right by $\es{E}'$. The set of configurations
  of $\es{E}$ is $\{\emptyset, \{a\}, \{b\}, \{a,c\}, \{a,b\}, \allowbreak \{a,b,c\}\}$ and the set of
  configurations of $\es{E}'$ is $\{\emptyset, \{a\}, \{b\}, \{a,b\}\}$.

  It is not difficult to check that the configurations of $\es{E}$ are preserved. However, the map
  becomes invalid because it does not respect the local injective condition. Let us consider two
  specific configurations from $\es{E}$: $\{a,b\}$ and $\{a,c\}$. The local injective condition is
  satisfied for $\{a,b\}$, since $a$ and $b$ belong to $\{a,b\}$ and are mapped to themselves. On
  the other hand, the configuration $\{a,c\}$ fails to satisfy the local injective condition, since
  $a$ and $c$ are mapped to $a$, \ie\ causally related events are being mapped to the same image
  event. Thus, different events are mapped to the same event, which contradicts the locally
  injective condition.

  Another way to understand the local injective condition is by looking at the size of the
  configurations, especially when the map is total. If the map is total, the size of the
  configurations must remain the same. In other words, if a configuration has $n$ events in
  $\es{E}$, the corresponding configuration in $\es{E}'$ must also have $n$ events.
\end{example}

We now develop an example that shows a valid map of event structures.
\begin{example}\label{ex:map-es-1}
  In Figure~\ref{fig:ex4-1}, we have a map of event structures $f$ that maps $a$ to itself and the
  conflicting events $c$ and $b$ to $d$. The map $f$ is total, and it is valid because it preserves
  configurations and obeys local injectivity. Each configuration in $\es{E}$ is mapped to a
  configuration in $\es{E}'$, and distinct events within the same configuration are mapped to
  distinct events. Furthermore, since $f$ is total, we notice that the size of covering chains are
  preserved.  Consider the covering chain $\emptyset \cchain{c} \set{c} \cchain{a} \set {a, c}$.
  The respective covering chain after applying $f$ is
  $\emptyset \cchain{d} \set{d} \cchain{a} \set{a, d}$.  This is only possible because the local
  injective condition ensures that different events in the same configuration must map to distinct
  events.
  \begin{figure}[ht!]
    \centering
    \begin{tikzpicture}
      \begin{pgfonlayer}{nodelayer}
        \node [style=event] (0) at (0, 1) {$a$};
        \node [style=event] (1) at (0, 0) {$b$};
        \node [style=event] (2) at (1, 1) {$c$};
        \node [style=event] (3) at (3.5, 1) {$a$};
        \node [style=event] (4) at (4.5, 1) {$d$};
        \node [style=event] (5) at (4.5, 0) {$e$};
        \node [style=none] (6) at (1.25, 0.5) {};
        \node [style=none] (7) at (3.25, 0.5) {};
        \node [style=none] (8) at (2.25, 0.75) {$f$};
      \end{pgfonlayer}
      \begin{pgfonlayer}{edgelayer}
        \draw [style=arrow] (0) to (1);
        \draw [style=wiggle] (2) to (1);
        \draw [style=arrow] (4) to (5);
        \draw [style=op] (6.center) to (7.center);
      \end{pgfonlayer}
    \end{tikzpicture}
    \caption{Map event structures}
    \label{fig:ex4-1}
  \end{figure}

  Although the event structure on the right has an event $e$ that is not in the image of $f$, this
  does not invalidate the map. The reason is that $e$ is not causally required by any event in the
  image of $f$.
\end{example}

With these definitions presented, we are now prepared to advance to the next stage, where we present
the language that we intend to model with event structures, \ie\ its syntax and respective
operational semantics in terms of a small-step and n-step.  After presenting the language, we
present the constructions made on event structures to capture the behavior of the language
operator's. Then we show how to interpret commands of the language using event structures and, at
last, we show that both semantics are sound and adequate.

\subsection{Language}\label{subsec:lan-1}
To present the language we consider a set of atomic actions $Act$ ranged over by $a$ (examples of
atomic actions are assignments, or unitary application, etc...).

The set of commands allowed by the language are given by the following grammar:
\[
  \com{C} ::= \com{skip} \mid a \in Act \mid \seq{\com{C}}{\com{C}} \mid \nd{\com{C}}{\com{C}} \mid
  \conc{\com{C}}{\com{C}}
\]

To define the operational semantics, we add a new command to the language, denoted by $\checkmark$,
that indicates the end of a computation.

We denote by $L = Act \cup \set{sk}$ the set of labels, which is ranged by $l$, and we consider a
\emph{terminal command}, denoted by $\checkmark$, representing the end of a computation.  The
small-step semantics is then defined as the smallest relation
$\xlongrightarrow{l} \subseteq C \times L \times (C \cup \set{\checkmark})$ obeying the rules in
Figure~\ref{fig:op-small1}. 
\begin{figure}[ht!]
  \centering
  \begin{align*}
    \com{skip} \xrightarrow{sk} \checkmark
    \hspace*{1cm}
    \com{a} \xrightarrow{a} \checkmark
    \hspace*{1cm}
    \infer{\seq{\com{C}_1}{\com{C}_2} \xrightarrow{l} \com{C}_2}{ \com{C}_1 \xrightarrow{l} \checkmark }
    \hspace*{1cm}
    \infer{
    \seq{\com{C}_1}{\com{C}_2} \xrightarrow{l} \seq{ \com{C}'_1  }{ \com{C}_2  }
    }
    {
    \com{C}_1 \xrightarrow{l} \com{C}'_1
    }
  \end{align*}
  \begin{align*}
    \infer{\nd{\com{C}_1}{\com{C}_2} \xrightarrow{l} \checkmark}{ \com{C}_1 \xrightarrow{l} \checkmark }
    \hspace*{1cm}
    \infer{
    \nd{\com{C}_1}{\com{C}_2} \xrightarrow{l} \com{C}'_1
    }
    {
    \com{C}_1 \xrightarrow{l} \com{C}'_1
    }
    \hspace*{1cm}
    \infer{\nd{\com{C}_1}{\com{C}_2} \xrightarrow{l} \checkmark}{ \com{C}_2 \xrightarrow{l} \checkmark }
    \hspace*{1cm}
    \infer{
    \nd{\com{C}_1}{\com{C}_2} \xrightarrow{l} \com{C}'_2
    }
    {
    \com{C}_2 \xrightarrow{l} \com{C}'_2
    }
  \end{align*}
  \begin{align*}
    \infer{\conc{\com{C}_1}{\com{C}_2} \xrightarrow{l} \com{C}_2}{ \com{C}_1 \xrightarrow{l} \checkmark }
    \hspace*{1cm}
    \infer{
    \conc{\com{C}_1}{\com{C}_2} \xrightarrow{l} \conc{ \com{C}'_1  }{ \com{C}_2  }
    }
    {
    \com{C}_1 \xrightarrow{l} \com{C}'_1
    }
    \hspace*{1cm}
    \infer{\conc{\com{C}_1}{\com{C}_2} \xrightarrow{l} \com{C}_1}{ \com{C}_2 \xrightarrow{l} \checkmark }
    \hspace*{1cm}
    \infer{
    \conc{\com{C}_1}{\com{C}_2} \xrightarrow{l} \conc{ \com{C}_1  }{ \com{C}'_2  }
    }
    {
    \com{C}_2 \xrightarrow{l} \com{C}'_2
    }
  \end{align*}
  \caption{Rules of the small-step operational semantics}
  \label{fig:op-small1}
\end{figure}

Define a word to be a sequence of labels:
\[
  \omega ::= l \mid l : \omega 
\]
where $l : \omega$ appends $l$ to the beginning of $\omega$.  A word can also be seen as an element
of $L^+$, \ie\ a possibly infinite sequence of labels without the empty sequence. Despite $L^+$
allows the possibility of having infinite words, by now we focus only on the finite words.

Define the $n$-step transition,
$\xrightarrow{\omega} \subseteq \com{C} \times L^+ \times (\com{C} \cup \{\checkmark\})$, where $n$ is the length of the words, as follows:
\begin{figure}[ht!]
  \centering
  \begin{align*}
    \infer{ \com{C} \xtwoheadrightarrow{l} \com{C'}  }{ \com{C} \xrightarrow{l} \com{C'}  }
    \hspace*{2cm}
    \infer{
    \com{C} \xtwoheadrightarrow{l : \omega'} \com{C'}
    }{
    \com{C} \xrightarrow{l} \com{C''}
    \qquad
    \com{C''} \xtwoheadrightarrow{\omega'} \com{C'}
    }
  \end{align*}
  \caption{Rules of the n-step operational semantics}
  \label{fig:op-nstep1}
\end{figure}

\begin{example}\label{ex:small-1}
  The initial program is $\nd{\seq{\com{a}}{\com{b}}}{\conc{\com{c}}{\com{d}}}$, from which we have
  three possible transitions: by $a$, $c$ or $d$. If we transit by $a$, we reach the command
  $\com{b}$, which we execute to finish the computation. Otherwise, we could either transit via $c$
  and then execute $d$, or transit via $d$ and then execute $c$, in order to finish the computation.

  With the support of Figure~\ref{fig:ex2-1} together with the above explanation, we can
  straightforwardly deduce the words that can be formed by the n-step semantics: $a$, $c$, $d$,
  $ab$, $cd$, and $dc$.
  \begin{figure}[ht!]
    \centering
    \begin{tikzpicture}
      \begin{pgfonlayer}{nodelayer}
        \node [style=event] (0) at (0, 0) {$\nd{\seq{\com{a}}{\com{b}}}{\conc{\com{c}}{\com{d}}}$};
        \node [style=event] (1) at (-1.75, -1.5) {$\com{b}$};
        \node [style=event] (2) at (2, -1.5) {$\com{c}$};
        \node [style=event] (3) at (0, -1.5) {$\com{d}$};
        \node [style=event] (4) at (-1.75, -3) {$\checkmark$};
        \node [style=event] (5) at (0, -3) {$\checkmark$};
        \node [style=event] (6) at (2, -3) {$\checkmark$};
        \node [style=none] (7) at (-1.25, -0.75) {$a$};
        \node [style=none] (8) at (-2, -2.25) {$b$};
        \node [style=none] (9) at (0.25, -0.75) {$c$};
        \node [style=none] (10) at (1.5, -0.75) {$d$};
        \node [style=none] (11) at (0.25, -2.25) {$d$};
        \node [style=none] (12) at (2.25, -2.25) {$c$};
      \end{pgfonlayer}
      \begin{pgfonlayer}{edgelayer}
        \draw [style=op] (0) to (1);
        \draw [style=op] (1) to (4);
        \draw [style=op] (0) to (3);
        \draw [style=op] (3) to (5);
        \draw [style=op] (0) to (2);
        \draw [style=op] (2) to (6);
      \end{pgfonlayer}
    \end{tikzpicture}
    \caption{Labeled transition system of $\nd{\seq{\com{a}}{\com{b}}}{\conc{\com{c}}{\com{d}}}$}
    \label{fig:ex2-1}
  \end{figure}
\end{example}

\begin{example}\label{ex:small-1-1}
  The initial program is $\conc{(\seq{\com{a}}{\com{b}})}{\com{c}}$, from which we have two possible
  transitions: either by $a$ or by $c$. If we transit by $c$ we go to the command
  $\seq{\com{a}}{\com{b}}$, where we execute $a$ followed by $b$ to complete the computation.  On
  the other hand, case we transition by $a$, we reach the command $\conc{\com{b}}{\com{c}}$, which
  allows two possible transitions: first $b$ and then $c$ or first $c$ and then $b$.

  With the support of Figure~\ref{fig:ex2-1} together with the above explanation, we can
  straightforwardly deduce the words that can be formed by the n-step semantics: $a$, $c$, $ab$,
  $ac$, $ca$, $abc$, $acb$, and $cab$.
  \begin{figure}[ht!]
    \centering
    \begin{tikzpicture}
      \begin{pgfonlayer}{nodelayer}
        \node [style=command] (0) at (0, 11) {$\conc{(\seq{\com{a}}{\com{b}})}{\com{c}}$};
        \node [style=command] (1) at (-1.25, 9.75) {$\conc{\com{b}}{\com{c}}$};
        \node [style=command] (2) at (-2.25, 8.5) {$\com{c}$};
        \node [style=command] (3) at (-2.25, 7.25) {$\checkmark$};
        \node [style=command] (4) at (-0.25, 8.5) {$\com{b}$};
        \node [style=command] (5) at (-0.25, 7.25) {$\checkmark$};
        \node [style=command] (6) at (1.25, 9.75) {$\seq{\com{a}}{\com{b}}$};
        \node [style=command] (7) at (1.25, 8.5) {$\com{b}$};
        \node [style=command] (8) at (1.25, 7.25) {$\checkmark$};
        \node [style=command] (10) at (0.75, 10.5) {$c$};
        \node [style=command] (11) at (-1, 10.5) {$a$};
        \node [style=command] (12) at (1.5, 9.25) {$a$};
        \node [style=command] (13) at (1.5, 8) {$b$};
        \node [style=command] (14) at (-0.5, 9.25) {$c$};
        \node [style=command] (15) at (0, 8) {$b$};
        \node [style=command] (16) at (-2, 9.25) {$b$};
        \node [style=command] (17) at (-2.5, 8) {$c$};
      \end{pgfonlayer}
      \begin{pgfonlayer}{edgelayer}
        \draw [style=op] (0) to (6);
        \draw [style=op] (6) to (7);
        \draw [style=op] (7) to (8);
        \draw [style=op] (0) to (1);
        \draw [style=op] (1) to (4);
        \draw [style=op] (4) to (5);
        \draw [style=op] (1) to (2);
        \draw [style=op] (2) to (3);
      \end{pgfonlayer}
    \end{tikzpicture}
    \caption{Labeled transition system of $\conc{(\seq{\com{a}}{\com{b}})}{\com{c}}$}
    \label{fig:ex2-1}
  \end{figure}
\end{example}

\subsection{Constructions on Event Structures}\label{subsec:constructions-es-1}
Having defined the language, \ie\ its syntax and operational semantics, we now focus on event
structures.  We need to define the constructions on event structures that captures the effects of
sequential composition, non-deterministic choice, and parallel composition.

To capture the behavior of the language's operators, we need to define them in terms of event
structures.

Let us begin with sequential composition. Consider $\seq{\com{a}}{\com{b}}$ to be the sequential
combination of two actions, $a$ and $b$. According to the rules in Figure~\ref{fig:op-small1} we
execute $\com{b}$ after $\com{a}$ has been executed, which with an event structure view means that
$b$ causally depends on $a$.  As a first attempt to define sequential composition of two events
structures, $\seq{\es{E}_1}{\es{E}_2}$, one might try to connect every event of $\es{E}_1$ with
every event of $\es{E}_2$. However, this approach fails to interpret programs like
$\seq{(\nd{\com{a}}{\com{b}})}{\com{c}}$, as show in Figure~\ref{fig:ill-seq}. This failure arises
because there are two ways to reach event $a$, which come from conflicting events. According to
the definition of event structures, the conflict relation is hereditary, and an event is not in
conflict with itself. Thus, we would end up with an invalid event structure.  To address this
issue, we introduce a `copy' for each event of $\es{E}_2$ regarding the different ways it can be
reached. For example, in the aforementioned program, we create two copies of $a$: one indicating
it was reached by executing event $b$, and another indicating it was reached by executing event
$c$, as can be seen in Figure~\ref{fig:good-seq}.
\begin{figure}[ht!]
  \centering
  \begin{subfigure}{0.4\textwidth}
    \centering
    \begin{tikzpicture}
      \begin{pgfonlayer}{nodelayer}
        \node [style=event] (0) at (-2, 7) {};
        \node [style=event] (1) at (-2, 7) {$\com{a}$};
        \node [style=event] (2) at (0, 7) {$\com{b}$};
        \node [style=event] (3) at (-1, 6) {$\com{c}$};
      \end{pgfonlayer}
      \begin{pgfonlayer}{edgelayer}
        \draw [style=wiggle] (1) to (2);
        \draw [style=arrow] (1) to (3);
        \draw [style=arrow] (2) to (3);
      \end{pgfonlayer}
    \end{tikzpicture}
    \caption{Unwanted sequential composition}
    \label{fig:ill-seq}
  \end{subfigure}
  \hfill
  \begin{subfigure}{0.4\textwidth}
    \centering
    \begin{tikzpicture}
      \begin{pgfonlayer}{nodelayer}
        \node [style=event] (0) at (-1, 12) {};
        \node [style=event] (1) at (-1, 12) {$\com{a}$};
        \node [style=event] (2) at (1, 12) {$\com{b}$};
        \node [style=event] (3) at (-1, 11) {$\com{c_a}$};
        \node [style=event] (4) at (1, 11) {$\com{c_b}$};
      \end{pgfonlayer}
      \begin{pgfonlayer}{edgelayer}
        \draw [style=wiggle] (1) to (2);
        \draw [style=arrow] (1) to (3);
        \draw [style=arrow] (2) to (4);
      \end{pgfonlayer}
    \end{tikzpicture}
    \caption{Good sequential composition}
    \label{fig:good-seq}
  \end{subfigure}
\end{figure}


To capture the intended behavior in event structures, we make use of maximal configurations, as
shown in the sequential definition of event structures.
\begin{definition}[Sequential Composition]\label{def:pes-seq1}
  Let $\es{E_1} = \pes{E_1}{\leq_1}{\#_1}$ and $\es{E_2} = \pes{E_2}{\leq_2}{\#_2}$ be event
  structures.  Define $\seq{\es{E_1}}{\es{E_2}} = \pes{E}{\leq}{\#}$ as:
  \begin{align*}
    & E = E_1 \uplus (E_2 \times \confmax{\es{E}_1}) \\
    & \leq\ = \set{e_1 \leq e'_1 \mid e_1 \leq_1 e'_1} \cup
      \set{(e_2,x) \leq (e'_2,x) \mid \ e_2 \leq_2 e'_2} \cup
      \set{ e_1 \leq (e_2,x) \mid e_1 \in x  } \\
    & \#\ = \set{ e \# e' \mid \exists (e_1 \leq e, e'_1 \leq e')\ .\ e_1 \#_1 e'_1 }
      \cup \set{ (e_2,x) \# (e'_2,x) \mid e_2 \#_2 e'_2  }
  \end{align*}
  where $E_2 \times \confmax{\es{E}_1} = \set{ (e,x) \mid e \in E_2,\ x \in \confmax{\es{E}_1}}$ and
  $\uplus$ denotes the disjoint union~\footnote{The proper definition of the disjoint union is
    $A \uplus B = \{(0,a) | a \in A\} \cup \{ (1,b) | b \in B \}$. For $R,S \in A \times B$, the
    disjoint union extends to a relation as $(i,e) R \uplus S (i',e')$ whenever $i=0=i'$ and
    $e R e'$ or $i=1=i'$ and $e S e'$. For the sake of keeping the notations readable, we will keep
    the $0$s and $1$s implicit.}.
\end{definition}

Note that we multiplied $E_2$ with the maximal configurations of $\es{E_1}$.  That is due maximal
configurations representing finished computations. In other words, the set of maximal configurations
gives all the possible ways to reach the first event of $\es{E_2}$.

\begin{lemma}\label{lem:seq-es1}
  Let $\es{E}_1$ and $\es{E}_2$ be event structures. $\seq{\es{E_1}}{\es{E_2}}$ is an event
  structure.
\end{lemma}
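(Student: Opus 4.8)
The plan is to verify directly the two defining axioms of an event structure from Definition~\ref{def:pes}, namely (i) that the downward closure $\{e' \mid e' \leq e\}$ is finite for every $e \in E$, and (ii) that $\#$ is hereditary, i.e.\ $e \# e' \leq e'' \Rightarrow e \# e''$; along the way I should also check the implicit well-formedness conditions that $\leq$ is a partial order and $\#$ is symmetric and irreflexive. The proof is a routine but somewhat tedious case analysis according to whether events lie in the $E_1$-component or in the $E_2 \times \confmax{\es{E}_1}$-component.

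First I would check that $\leq$ is a partial order. Reflexivity and transitivity of $\leq_1$ and $\leq_2$ transfer to the first two summands directly; the only subtlety is the mixed clause $\{e_1 \leq (e_2,x) \mid e_1 \in x\}$. For transitivity through a mixed pair one argues: if $e_1' \leq_1 e_1 \in x$ then $e_1' \in x$ since $x$ is a configuration hence down-closed, giving $e_1' \leq (e_2,x)$; and if $(e_2,x) \leq (e_2',x)$ with $e_1 \in x$ then $e_1 \leq (e_2',x)$ as well. Antisymmetry holds because the mixed relation only ever goes from an $E_1$-event to an $E_2$-event, never back, and the two homogeneous parts are antisymmetric and live in disjoint summands. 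Next, for axiom (i): the downward closure of $e_1 \in E_1$ is exactly its $\leq_1$-downward closure, which is finite by hypothesis on $\es{E}_1$; the downward closure of $(e_2,x)$ is $\{(e_2',x) \mid e_2' \leq_2 e_2\} \cup x$, a union of the finite $\leq_2$-downward closure of $e_2$ and the finite configuration $x$ (finite configurations are finite by definition), hence finite.

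For the conflict relation: symmetry and irreflexivity follow because $\#_1$, $\#_2$ are symmetric and irreflexive, the first clause of $\#$ is manifestly symmetric, and irreflexivity of the first clause uses that if $e_1 \leq e$ and $e_1 \#_1 e_1$ then that contradicts irreflexivity of $\#_1$ (note the first clause can relate an $E_1$-event to an $E_2$-event or two $E_2$-events with distinct $x$'s, but never an event to itself). The heart of the argument is heredity (ii). Suppose $e \# e' \leq e''$. If the pair $e \# e'$ comes from the first clause, witnessed by $e_1 \leq e$, $e_1' \leq e'$ with $e_1 \#_1 e_1'$, then since $e' \leq e''$ we get $e_1' \leq e' \leq e''$ by transitivity, so the same witnesses give $e \# e''$ by the first clause. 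If the pair comes from the second clause, so $e = (e_2,x)$, $e' = (e_2',x)$ with $e_2 \#_2 e_2'$, then any $e'' \geq (e_2',x)$ is either of the form $(e_2'',x)$ with $e_2' \leq_2 e_2''$ — in which case heredity of $\#_2$ gives $e_2 \#_2 e_2''$, hence $(e_2,x)\#(e_2'',x)$ by the second clause — and there is no other case, since nothing in the $E_1$-summand lies above an $E_2$-event. The main obstacle, if any, is bookkeeping: one must be careful that the ``copies'' indexed by the \emph{same} maximal configuration $x$ are the only ones that get related by $\leq$ and by the second clause of $\#$, and that heredity never forces a relation between copies indexed by different $x$'s except through the first clause of $\#$ (via shared $E_1$-events below them), which is already closed under the required implication by construction. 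I expect this to be the only point needing a moment's care; everything else is immediate from the hypotheses that $\es{E}_1,\es{E}_2$ are event structures and that finite configurations are finite and down-closed.
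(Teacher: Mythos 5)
The proposal is correct and follows essentially the same direct verification as the paper: finiteness of downward closures is established with the same decomposition (the $\leq_2$-downward closure of $e_2$ together with the finite configuration $x$), and heredity of conflict is checked by case analysis. Your version is in fact somewhat more thorough than the paper's --- organizing the heredity argument by which clause of $\#$ produces the conflict covers mixed cases (e.g.\ a conflict via the first clause where one of the events already lies in the $E_2 \times \confmax{\es{E}_1}$ component) that the paper's case split leaves implicit, and you additionally verify that $\leq$ is a partial order and that $\#$ is symmetric and irreflexive, which the paper omits.
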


The absence of communication in the language considered simplifies the definition of parallel
composition in event structures, when compared to~\cite{winskel88}, since we do not need a mechanism
of synchronization.  In our case, we simply place `side-by-side' the two event structures.
\begin{definition}[Parallel Composition]\label{def:pes-conc1}
  Let $\es{E_1} = \pes{E_1}{\leq_1}{\#_1}$ and $\es{E_2} = \pes{E_2}{\leq_2}{\#_2}$ be event
  structures.  Define $\conc{\es{E_1}}{\es{E_2}} = \pes{E}{\leq}{\#}$ as:
  \begin{align*}
    & E = E_1 \uplus E_2  \\
    & \leq\ = \leq_1 \uplus \leq_2 \\
    & \# = \#_1 \uplus \#_2 
  \end{align*}
\end{definition}

\begin{lemma}\label{lem:conc-es1}
  Let $\es{E}_1$ and $\es{E}_2$ be event structures. $\conc{\es{E_1}}{\es{E_2}}$ is an event
  structure.
\end{lemma}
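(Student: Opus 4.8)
The plan is to verify directly that $\conc{\es{E_1}}{\es{E_2}} = \pes{E}{\leq}{\#}$ satisfies the two axioms of Definition~\ref{def:pes}, namely finiteness of downward closures and heredity of conflict, exploiting the fact that the parallel composition is just a disjoint union with no new causal or conflict edges crossing between the two components. First I would record the elementary structural observations: $\leq$ is a partial order because it is the disjoint union of the partial orders $\leq_1$ and $\leq_2$ (reflexivity, antisymmetry, and transitivity all hold componentwise, and there are no mixed pairs), and $\#$ is binary, symmetric, and irreflexive because $\#_1$ and $\#_2$ are. The key point throughout is that for any $e \in E$ there is a well-defined component $i \in \{1,2\}$ with $e \in E_i$, and every $e'$ with $e' \leq e$ or $e' \# e$ lies in that same component $E_i$.

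For the first axiom, take $e \in E$. Without loss of generality $e \in E_1$ (the other case is symmetric). By the definition of $\leq$ as $\leq_1 \uplus \leq_2$, we have $\{e' \mid e' \leq e\} = \{e' \in E_1 \mid e' \leq_1 e\}$, which is finite because $\es{E}_1$ is an event structure. For the second axiom, suppose $e \# e' \leq e''$. From $e \# e'$ and $\# = \#_1 \uplus \#_2$, both $e$ and $e'$ lie in the same component, say $E_1$, with $e \mathrel{\#_1} e'$. From $e' \leq e''$ and $\leq = \leq_1 \uplus \leq_2$, also $e'' \in E_1$ with $e' \leq_1 e''$. Then $\es{E}_1$ being an event structure gives $e \mathrel{\#_1} e''$, hence $e \# e''$ in $\conc{\es{E_1}}{\es{E_2}}$.

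This is considerably more routine than Lemma~\ref{lem:seq-es1}: there is no genuine obstacle, since the disjoint-union construction introduces no interaction between the two event structures. The only thing to be slightly careful about — and the closest thing to a subtle point — is the bookkeeping of the implicit $0$/$1$ tags from the footnote's definition of $\uplus$, i.e.\ making the phrase ``$e$ and $e'$ lie in the same component'' precise: if $e \mathrel{(\#_1 \uplus \#_2)} e'$ then by the definition of the disjoint union of relations the tags of $e$ and $e'$ agree, and likewise for $\leq$. Once that is stated, the case analysis above goes through immediately and the proof is complete.
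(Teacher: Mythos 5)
Your proof is correct and follows essentially the same route as the paper's: a componentwise case analysis using the fact that the disjoint union introduces no causal or conflict pairs across the two components. The extra remarks on the partial-order/symmetry properties and the $\uplus$ tagging are fine but not needed beyond what the paper records.
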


In Definition~\ref{def:pes-conc1} we use the disjoint union to ensure that whenever we interpret the
same command in parallel, \ie\ $\conc{\com{C}}{\com{C}}$, we have two copies of $\com{C}$ within the
event structure, as each action of $\com{C}$ can occur twice.

At last we have the non-deterministic composition of event structures.  Let us use the rules in
Figure~\ref{fig:op-small1} to give the intuition behind the definition.  Consider the command
$\nd{\com{a}}{\com{b}}$. According to the operational semantics, if we execute $a$ then we cannot
execute $b$ and if we execute $b$ we cannot execute $a$. If we abstract ourselves and instead of
$\nd{\com{a}}{\com{b}}$ we consider $\nd{\com{C}_1}{\com{C}_2}$, we notice that if we execute an
action from $\com{C}_1$, then it is no longer possible to execute any action of $\com{C}_2$ and
vice-versa.  To capture this behavior in event structures, we need to put all the events
corresponding to $\com{C}_1$ in conflict with all the events corresponding to $\com{C}_2$. Formally:
\begin{definition}[Non-deterministic Composition]\label{def:pes-nd1}
  Let $\es{E_1} = \pes{E_1}{\leq_1}{\#_1}$ and $\es{E_2} = \pes{E_2}{\leq_2}{\#_2}$ be event
  structures.
  Define $\nd{\es{E_1}}{\es{E_2}} = \pes{E}{\leq}{\#}$ as:
  \begin{align*}
    & E = E_1 \uplus E_2  \\
    & \leq\ = \leq_1 \uplus \leq_2 \\
    & \# = \#_1 \uplus \#_2 \cup \set{ e_1 \# e_2 \mid e_1 \in E_1,\, e_2 \in E_2}
                            \cup \set{ e_2 \# e_1 \mid e_1 \in E_1,\, e_2 \in E_2}
  \end{align*}
\end{definition}

Equivalently, we can define the partial order in Definition~\ref{def:pes-nd1} as follows:
\[
  e \leq e' =
  \begin{cases}
    e \leq_1 e' & \text{if } e,e' \in E_1 \\
    e \leq_2 e' & \text{if } e,e' \in E_2
  \end{cases}
\]

\begin{lemma}\label{lem:nd-es1}
  Let $\es{E}_1$ and $\es{E}_2$ be event structures. $\nd{\es{E_1}}{\es{E_2}}$ is an event
  structure.
\end{lemma}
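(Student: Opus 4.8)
The plan is to verify the two axioms of Definition~\ref{def:pes} for $\nd{\es{E}_1}{\es{E}_2} = \pes{E}{\leq}{\#}$, exactly mirroring the structure of the proof of Lemma~\ref{lem:conc-es1}, since the carrier set and the causal order are literally the same as for parallel composition; only the conflict relation has been enlarged by the ``cross'' pairs $\set{e_1 \# e_2 \mid e_1 \in E_1,\ e_2 \in E_2}$. So I would first note that $\#$ is symmetric (it is a union of symmetric relations, the cross part being symmetric by construction) and irreflexive (each $\#_i$ is irreflexive on $E_i$, and the cross pairs relate an element of $E_1$ to an element of $E_2$, which are disjoint, so no element is related to itself).

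For the finiteness of downsets, the argument is identical to Lemma~\ref{lem:conc-es1}: for $e \in E_1$ we have $\set{e' \mid e' \leq e} = \set{e' \mid e' \leq_1 e}$, finite because $\es{E}_1$ is an event structure, and symmetrically for $e \in E_2$. This relies on the equivalent characterization of $\leq$ given just after Definition~\ref{def:pes-nd1}, which shows no cross causal pairs exist.

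For hereditarity of conflict, $e \# e' \leq e'' \Rightarrow e \# e''$, I would split on which components $e, e', e''$ lie in. Since $e' \leq e''$ and $\leq$ has no cross pairs, $e'$ and $e''$ are in the same component, say both in $E_i$. Then there are two subcases for $e$: if $e$ is also in $E_i$, then $e \# e'$ must be an $\#_i$-conflict (the cross conflicts relate the two different components), and hereditarity follows from $\es{E}_i$ being an event structure; if $e$ is in the other component $E_j$ with $j \neq i$, then $e \# e'$ and $e \# e''$ are both cross conflicts — which hold automatically for \emph{any} $e \in E_j$ and \emph{any} element of $E_i$ — so $e \# e''$ holds trivially. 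That exhausts all cases.

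There is no real obstacle here; the only point requiring a moment's care is the second subcase of hereditarity, namely recognizing that when $e$ and $e''$ sit in opposite components the conclusion $e \# e''$ is granted outright by the definition of the cross conflict, so no inheritance from $\es{E}_1$ or $\es{E}_2$ is needed. Everything else is a routine unfolding of definitions, and I would keep the write-up as terse as the proof of Lemma~\ref{lem:conc-es1}.
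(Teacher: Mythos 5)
Your proposal is correct and follows the same overall strategy as the paper's proof (reduce each axiom to the corresponding axiom of $\es{E}_1$ and $\es{E}_2$), but it is in fact \emph{more} careful than the paper's own argument on the hereditarity condition. The paper's proof of Lemma~\ref{lem:nd-es1} claims only two cases, $e, e', e'' \in E_1$ or $e, e', e'' \in E_2$, and thereby silently skips the situation where $e \# e'$ is one of the cross-component conflicts, i.e.\ $e \in E_j$ and $e', e'' \in E_i$ with $i \neq j$ (note $e'$ and $e''$ necessarily lie in the same component because $\leq\, = \leq_1 \uplus \leq_2$ has no cross pairs). That case does arise, since $\#$ contains every pair $(e_1, e_2) \in E_1 \times E_2$, and your observation that $e \# e''$ then holds outright by the definition of the cross conflict is exactly the missing step. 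Your additional check of symmetry and irreflexivity of $\#$ (using disjointness of $E_1$ and $E_2$ for the cross pairs) is also a legitimate part of verifying Definition~\ref{def:pes} that the paper leaves implicit. In short: same decomposition, but your write-up closes a small gap in the published proof rather than reproducing it.
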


We now have everything we need to interpret the language presented in Section~\ref{subsec:lan-1}
using event structures.
\begin{definition}\label{def:den-sem1}
  We interpret commands as event structures as follows:
  \begin{align*}
    & \mf{\com{skip}} = (\set{sk}, \set{sk \leq sk}, \emptyset) \\
    & \mf{\com{a}} = (\set{a}, \set{a \leq a}, \emptyset) \\
    & \mf{\seq{\com{C_1}}{\com{C_2}}} = \seq{\mf{\com{C_1}}}{\mf{\com{C_2}}} \\
    & \mf{\nd{\com{C_1}}{\com{C_2}}} = \nd{\mf{\com{C_1}}}{\mf{\com{C_2}}} \\
    & \mf{\conc{\com{C_1}}{\com{C_2}}} = \conc{\mf{\com{C_1}}}{\mf{\com{C_2}}} \\
  \end{align*}
\end{definition}

When the goal is to show the equivalence between the operational and the denotational semantics,
Definition~\ref{def:pes-fix-order-1} is not suitable, since sequential composition is not
left-monotone. This happens because the inclusion on the set of events is too restrict, \ie\ the
copies made by Definition~\ref{def:pes-seq1} are distinct.  Hence, we need to loose the inclusion on
the set of events, which originates \emph{sub-similar event structures}. We must first be explicit
about the form of the events we consider.  Events can either be pairs or not. If they are not pairs
we call them \emph{plain events}. If they are pairs we call them \emph{composite events}.  We shall
denote both plain and composite events by $a, \dots, e$.  Thus, examples of composite events are:
$(e, x_1)$, $((e, x_1), x_2)$, $(e,1)$, $((e,x),1)$.  Examples of plain events are: $e$, $e_1$,
$e^1$.  Composite events have an underlying plain event.  To extract a plain event from a composite
event, we recursively extract the left side until it is not a pair. Formally, $\pi(a,b) = \pi(a)$,
recursively. An equivalent, explicit definition of $\pi$ is as follows:
\begin{align*}
  \pi : &E \rightarrow E \\
        &e \mapsto
          \begin{cases}
            e & \text{ if } e \text{ is a plain event} \\
            \pi(e') & \text{ if } e = (e',x) \text{ for some } e'
          \end{cases}
\end{align*}
 
For example,
\[
  \pi(a) = a, \qquad
  \pi(a, x_1) = a, \qquad
  \pi((a, x_1), x_2) = a.
\]

Returning to the notion of sub-similar event structure.  Intuitively, an event structure $\es{E}_1$
is a sub-similar event structure of $\es{E}_2$ if plain events, the causal order and the conflict
relation of $\es{E}_1$ are preserved in $\es{E}_2$. Note that the idea of ignoring the ``copies''
created by sequential composition is captured by the condition on plain events.  We say that two
event structures $\es{E}_1$ and $\es{E}_2$ are \emph{similar} if each is a sub-similar event
structure of the other.

A sub-similar event structure is formally defined as follows:
\begin{definition}\label{def:pes-sub1}
  Let $\es{E_1} = \pes{E_1}{\leq_1}{\#_1}$ and $\es{E_2} = \pes{E_2}{\leq_2}{\#_2}$ be event
  structures.  Say $\es{E_1} \sqsubseteq \es{E_2}$ whenever exists an injective function
  $f: E_1 \rightarrow E_2$ such that $\forall e,e' \in E_1$:
  \begin{align*}
    & \pi(f(e)) = \pi(e) \\
    & e \leq_1 e' \Leftrightarrow f(e) \leq_2 f(e') \\
    & e \#_1 e' \Leftrightarrow f(e) \#_2 f(e')
  \end{align*}
  We say that two event structures $\es{E}_1, \es{E}_2$ are similar, denoted
  $\es{E}_1 \equiv \es{E}_2$, iff $\es{E}_1 \sqsubseteq \es{E}_2$ and
  $\es{E}_2 \sqsubseteq \es{E}_1$.
\end{definition}
Note that in Definition~\ref{def:pes-sub1}, when comparing the set of events, we ignore the
``copies'' of events. This comes as a consequence of Definition~\ref{def:pes-seq1} in which we make
``copies'' of the same event to distinguish the different ways an event can be reached. However, the
``copies'' denote the same event. Thus we want to forget the different ways they can be reached and
just focus on the event itself. Case we have not done that, sequential composition would not be
monotone, \ie\ if $\es{E_1} \sqsubseteq \es{E'_1}$ and $\es{E_2} \sqsubseteq \es{E'_2}$ then
$\seq{\es{E_1}}{\es{E_2}} \not\sqsubseteq \seq{\es{E'_1}}{\es{E'_2}}$. That is easily seen when the
number of maximal configurations of $\es{E_1}$ is greater than that of $\es{E'_1}$.

\begin{remark}\label{rem:equiv-1}
  It is clear that if $\es{E}_1 = \es{E}_2$ then $\es{E}_1 \equiv \es{E}_2$.
\end{remark}

To finish this section of definitions, we define the set of initial events and the removal of an
initial event from a event structure.
\begin{definition}[Set of initial events]\label{def:es-init-event}
  Let $\es{E} = \pes{E}{\leq}{\#}$ be a event structure.
  Define the set of initial events as follows:
  \[
    \init{\es{E}} = \set{ e' \mid \not\exists e \in E\ .\ e \leq e' \wedge e \neq e'}
  \]
\end{definition}

When removing an initial event from an event structure, not only the event itself but also all
conflicting events are eliminated. This decision aims to mimic, within event structures, what
happens in a transition using the small-step semantics.  In small-step semantics, once an action
triggers a transition, that same action cannot be executed again. Furthermore, if the transition
occurs within a non-deterministic program, only the program associated with the triggering action
continues, while the others are discarded.
\begin{definition}[Remove initial event]\label{def:rem-init1}
  Let $\es{E} = \pes{E}{\leq}{\#}$ be a event structure and $a \in \init{\es{E}}$.
  Define $\es{E} \backslash a = \pes{E'}{\leq'}{\#'}$ as
  \begin{align*}
    & E' = \set{e \in E \mid \neg (e \# a),\, e \neq a} \\
    & \leq'\, =\, \set{e \leq e' \mid e,e' \in E' } \\
    & \#'\, =\, \set{e \# e' \mid e,e' \in E' }    
  \end{align*}
\end{definition}

\begin{lemma}\label{lem:rem-init-es1}
  Let $\es{E}$ be an event structure and $a \in \init{\es{E}}$.  $\es{E} \backslash a$ is a event
  structure.
\end{lemma}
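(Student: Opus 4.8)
The plan is to check the four requirements that Definition~\ref{def:pes} imposes on $\es{E}\backslash a = \pes{E'}{\leq'}{\#'}$: that $\leq'$ is a partial order on $E'$, that $\#'$ is a binary, symmetric, irreflexive relation, that every event of $E'$ has a finite downward closure under $\leq'$, and that $\#'$ is hereditary along $\leq'$. The guiding observation, straight from Definition~\ref{def:rem-init1}, is that $\leq'$ and $\#'$ are simply the restrictions of $\leq$ and $\#$ to the subset $E' \subseteq E$, so the proof reduces to remarking that each of these properties is inherited when one passes from $\es{E}$ to the full sub-structure on a subset of events.

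First I would dispatch the order-theoretic bookkeeping. Reflexivity of $\leq'$ on $E'$ follows from reflexivity of $\leq$; for transitivity and antisymmetry, if the hypotheses hold for $\leq'$ then all events involved lie in $E'$, the corresponding statement holds for $\leq$, and the conclusion transfers back to $\leq'$ because its endpoints remain in $E'$. In the same way $\#'$ inherits symmetry and irreflexivity from $\#$, and it is binary by construction. For the finiteness clause, given $e \in E'$ we have $\set{e' \mid e' \leq' e} \subseteq \set{e' \in E \mid e' \leq e}$ by the definition of $\leq'$, and the latter set is finite because $\es{E}$ is an event structure; a subset of a finite set is finite.

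It then remains to verify heredity: assume $e, e', e'' \in E'$ with $e \#' e' \leq' e''$. Unfolding Definition~\ref{def:rem-init1}, this gives $e \# e'$ and $e' \leq e''$ in $\es{E}$, hence $e \# e''$ by heredity of $\#$; since $e, e'' \in E'$, we conclude $e \#' e''$, as required. I do not expect a genuine obstacle here: restricting an event structure to an arbitrary subset of its events already yields an event structure, and in fact the hypothesis $a \in \init{\es{E}}$ is not used at all in this lemma (it matters only for the intended reading of $\es{E}\backslash a$ and for later results). The one point worth a clarifying sentence is that $\set{e' \mid e' \leq' e}$ need not be down-closed inside $\es{E}$ — deleting $a$ may remove an event lying below some surviving $e$ — but this is harmless, since only finiteness of that set is needed, and finiteness is preserved under passing to subsets.
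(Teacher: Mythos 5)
Your proof is correct and follows essentially the same route as the paper's: both arguments observe that $\leq'$ and $\#'$ are restrictions of $\leq$ and $\#$ to $E'$, deduce finiteness of $\set{e' \mid e' \leq' e}$ from its being a subset of the finite set $\set{e' \mid e' \leq e}$, and obtain heredity by unfolding the restriction and appealing to heredity in $\es{E}$. You are somewhat more thorough than the paper (which omits the partial-order and symmetry/irreflexivity checks entirely), and your side remarks --- that initiality of $a$ is not actually needed for this particular lemma, and that the downward closure in $\es{E}\backslash a$ need not be down-closed in $\es{E}$ but only finite --- are both accurate.
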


\subsection{Results}\label{subsec:results-1}
In this section we present the results obtained. For that, we interpret $\checkmark$ as the empty
event structure, \ie\ $\mf{\checkmark} = (\emptyset, \emptyset, \emptyset) = \emptyset$.

The following lemmas show that the sequential, parallel, and non-deterministic compositions are
monotonic w.r.t Definition~\ref{def:pes-sub1}.
\begin{lemma}\label{lem:seq-mono1}
  Let $\es{E_1}, \es{E}'_1, \es{E_2}, \es{E}'_2$ be event structures.  If
  $\es{E_1} \sqsubseteq \es{E'_1}$ and $\es{E_2} \sqsubseteq \es{E'_2}$ then
  $\seq{\es{E_1}}{\es{E_2}} \sqsubseteq \seq{\es{E'_1}}{\es{E'_2}}$.
\end{lemma}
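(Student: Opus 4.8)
The plan is to verify directly the three clauses of Definition~\ref{def:pes-sub1} between $\seq{\es{E_1}}{\es{E_2}}$ and $\seq{\es{E'_1}}{\es{E'_2}}$. Write $\es{E'_1} = \pes{E'_1}{\leq'_1}{\#'_1}$, $\es{E'_2} = \pes{E'_2}{\leq'_2}{\#'_2}$, $\seq{\es{E_1}}{\es{E_2}} = \pes{A}{\leq_A}{\#_A}$ and $\seq{\es{E'_1}}{\es{E'_2}} = \pes{B}{\leq_B}{\#_B}$, so that by Definition~\ref{def:pes-seq1} we have $A = E_1 \uplus (E_2 \times \confmax{\es{E_1}})$ and $B = E'_1 \uplus (E'_2 \times \confmax{\es{E'_1}})$. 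The guiding observation is that the operation $E \mapsto \underline{E}$ discards the maximal-configuration tag carried by the events coming from the second component, so $\underline{A} = \underline{E_1} \cup \underline{E_2}$ and $\underline{B} = \underline{E'_1} \cup \underline{E'_2}$ (the degenerate case $\confmax{\es{E_1}} = \emptyset$ makes $\seq{\es{E_1}}{\es{E_2}}$ collapse onto $\es{E_1}$ and is immediate, so assume it nonempty). The event-set clause $\underline{A} \subseteq \underline{B}$ then follows at once from $\underline{E_1} \subseteq \underline{E'_1}$ and $\underline{E_2} \subseteq \underline{E'_2}$, which are part of the hypotheses $\es{E_1} \sqsubseteq \es{E'_1}$ and $\es{E_2} \sqsubseteq \es{E'_2}$.

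For the causal clause I would split along the three cases defining $\leq$ in Definition~\ref{def:pes-seq1}. The case $e \leq_A e'$ with $e, e' \in E_1$ is just $e \leq_1 e'$, equivalent to $e \leq'_1 e'$ (hence to $e \leq_B e'$) by $\es{E_1} \sqsubseteq \es{E'_1}$, and symmetrically from $B$. The case $(e_2,x) \leq_A (e'_2,x)$ from $e_2 \leq_2 e'_2$ transfers through $\es{E_2} \sqsubseteq \es{E'_2}$ to $e_2 \leq'_2 e'_2$, hence to the analogous pair (carrying any admissible tag) in $B$, and back. The interesting case is $e_1 \leq_A (e'_2,x)$ arising from $e_1 \in x$: here I use the observation that if $x \in \confmax{\es{E_1}}$ and $e_1 \in x$, then the $\leq'_1$-downward closure of $e_1$ in $\es{E'_1}$ is finite, down-closed, and conflict-free (the last point by heredity and irreflexivity of $\#'_1$), hence a configuration of $\es{E'_1}$ which extends to some $x' \in \confmax{\es{E'_1}}$ with $e_1 \in x'$; and symmetrically starting from $\es{E'_1}$. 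After passing to $\underline{(-)}$ this says exactly that $e_1 \leq_A (e'_2,x)$ for some tag iff $e_1 \leq_B (e'_2,x')$ for some tag, as needed.

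The conflict clause goes the same way, splitting along the two cases defining $\#$ in Definition~\ref{def:pes-seq1}: a conflict inherited from $\#_1$ along $\leq_A$-chains is transported using $\es{E_1} \sqsubseteq \es{E'_1}$ (noting that the relevant downward closures inside $E_1$ of events of $\underline{E_1}$ coincide whether computed with $\leq_1$ or with $\leq'_1$, since the two orders agree on $\underline{E_1}$), while a conflict $(e_2,x) \#_A (e'_2,x)$ from $\#_2$ is transported using $\es{E_2} \sqsubseteq \es{E'_2}$ together with the tag observation above; both directions are symmetric.

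I expect the real obstacle to be precisely the interaction of $\sqsubseteq$ with $\confmax{(-)}$ used in the third causal case and in the $\#_1$-case of the conflict clause: a $\sqsubseteq$-larger structure $\es{E'_1}$ need not have the same number of maximal configurations as $\es{E_1}$, so copy-tags cannot be matched bijectively, and one must instead argue existence of a suitable maximal configuration on each side containing a prescribed event — which is exactly the reason Definition~\ref{def:pes-sub1} only compares events up to the operation $E \mapsto \underline{E}$ in the first place. Everything else is a routine, and slightly tedious, case analysis.
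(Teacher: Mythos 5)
Your proposal is correct and follows essentially the same route as the paper: a direct verification of the three clauses of Definition~\ref{def:pes-sub1} by case analysis on the three generators of $\leq$ and the two generators of $\#$ in Definition~\ref{def:pes-seq1}. If anything, you are more careful than the paper's own proof about the one genuinely delicate point — that copy-tags drawn from $\confmax{\es{E}_1}$ and $\confmax{\es{E}'_1}$ cannot be matched literally and must be compared only after applying $\underline{(-)}$, with existence of a suitable maximal configuration containing a prescribed event argued on each side — whereas the paper simply asserts $(E_2 \times \confmax{\es{E}_1}) \subseteq (E'_2 \times \confmax{\es{E}'_1})$.
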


\begin{lemma}\label{lem:nd-mono1}
  Let $\es{E_1}, \es{E}'_1, \es{E_2}, \es{E}'_2$ be event structures.  If
  $\es{E_1} \sqsubseteq \es{E'_1}$ and $\es{E_2} \sqsubseteq \es{E'_2}$ then
  $\nd{\es{E_1}}{\es{E_2}} \sqsubseteq \nd{\es{E'_1}}{\es{E'_2}}$.
\end{lemma}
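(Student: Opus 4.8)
The plan is to follow the same three-clause template as the proof of Lemma~\ref{lem:seq-mono1}, but the argument will be markedly shorter because Definition~\ref{def:pes-nd1} introduces no `copies': $\nd{\es{E}_1}{\es{E}_2}$ is simply the disjoint union of the carriers carrying $\leq_1 \uplus \leq_2$ and $\#_1 \uplus \#_2$ augmented by the cross-conflict $\set{e_1 \# e_2 \mid e_1 \in E_1,\, e_2 \in E_2}$. Write $\es{E}_i = \pes{E_i}{\leq_i}{\#_i}$, $\es{E}'_i = \pes{E'_i}{\leq'_i}{\#'_i}$, $\nd{\es{E}_1}{\es{E}_2} = \pes{E}{\leq}{\#}$, $\nd{\es{E}'_1}{\es{E}'_2} = \pes{E'}{\leq'}{\#'}$, and assume $\es{E}_1 \sqsubseteq \es{E}'_1$ and $\es{E}_2 \sqsubseteq \es{E}'_2$. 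I will verify the three conditions of Definition~\ref{def:pes-sub1} in turn: $\underline{E} \subseteq \underline{E'}$, the biconditional for $\leq$, and the biconditional for $\#$.

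For the carrier inclusion: since no event of $E$ or $E'$ is a copy, $\underline{E} = E = E_1 \uplus E_2$ and $\underline{E'} = E' = E'_1 \uplus E'_2$, and the hypotheses give $E_1 \subseteq E'_1$ and $E_2 \subseteq E'_2$, so $\underline{E} \subseteq \underline{E'}$ is immediate. Throughout the remaining two clauses I will silently use the only structural fact that the non-deterministic composition contributes, namely that the disjoint union tags each event with the component it came from; hence an event lying in $E = E_1 \uplus E_2$ belongs to exactly one of $E_1, E_2$, and that component is unchanged when the event is viewed inside $E' = E'_1 \uplus E'_2$. This is exactly what allows relations to be reflected from the primed structure back to the unprimed one.

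For the order biconditional I will prove the two implications separately. Left to right: if $e_0 \leq e_1$, the case-split form of Definition~\ref{def:pes-nd1} puts $e_0, e_1$ in the same component, say $e_0, e_1 \in E_1$ with $e_0 \leq_1 e_1$; then $\es{E}_1 \sqsubseteq \es{E}'_1$ gives $e_0, e_1 \in E'_1$ and $e_0 \leq'_1 e_1$, so $e_0, e_1 \in E$ and $e_0 \leq' e_1$; the $E_2$-case is symmetric. Right to left: if $e_0, e_1 \in E$ and $e_0 \leq' e_1$, then $\leq'$ relates them only within one component of $E'$, say $e_0, e_1 \in E'_1$ with $e_0 \leq'_1 e_1$; by the tagging remark $e_0, e_1 \in E_1$, and $\es{E}_1 \sqsubseteq \es{E}'_1$ then yields $e_0 \leq_1 e_1$, hence $e_0 \leq e_1$; again the $E_2$-case is symmetric.

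The conflict biconditional is the only place where anything beyond a transcription of the $\leq$ case occurs, because of the cross-conflict clause, and I expect this to be the one mild obstacle — mild because it is purely a bookkeeping issue. For the forward direction the $\#_1$- and $\#_2$-subcases are reflected through $\es{E}_1 \sqsubseteq \es{E}'_1$ and $\es{E}_2 \sqsubseteq \es{E}'_2$ exactly as in the $\leq$ case; in the cross subcase, $e_0 \in E_1 \subseteq E'_1$ and $e_1 \in E_2 \subseteq E'_2$ by the carrier inclusion, so $e_0 \#' e_1$ by the cross-conflict clause of $\nd{\es{E}'_1}{\es{E}'_2}$. Conversely, if $e_0, e_1 \in E$ and $e_0 \#' e_1$, then the three subcases of $\#'$ force, respectively, $e_0, e_1 \in E_1$ with $e_0 \#'_1 e_1$ (hence $e_0 \#_1 e_1$), $e_0, e_1 \in E_2$ with $e_0 \#'_2 e_1$ (hence $e_0 \#_2 e_1$), or $e_0 \in E_1$ and $e_1 \in E_2$, and in each case $e_0 \# e_1$ in $\nd{\es{E}_1}{\es{E}_2}$. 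The point to keep track of — and the reason the cross case is worth stating explicitly — is that the tagging argument guarantees a conflict reflected from $\es{E}'$ lands in the same one of the three clauses it came from, so no spurious identifications occur.
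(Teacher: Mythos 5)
Your proof is correct and takes the same route as the paper, which simply states that the result follows directly from Definition~\ref{def:pes-nd1}; you have filled in the routine component-wise verification of the three clauses of Definition~\ref{def:pes-sub1}, including the only slightly non-trivial point (the cross-conflict clause), and your handling of the disjoint-union tagging to reflect relations from the primed structure back to the unprimed one is exactly the justification the paper leaves implicit.
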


\begin{lemma}\label{lem:conc-mono1}
  Let $\es{E_1}, \es{E}'_1, \es{E_2}, \es{E}'_2$ be event structures.  If
  $\es{E_1} \sqsubseteq \es{E'_1}$ and $\es{E_2} \sqsubseteq \es{E'_2}$ then
  $\conc{\es{E_1}}{\es{E_2}} \sqsubseteq \conc{\es{E'_1}}{\es{E'_2}}$.
\end{lemma}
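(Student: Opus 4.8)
The plan is to proceed exactly as for Lemma~\ref{lem:nd-mono1}: unfold Definition~\ref{def:pes-conc1} and check the three clauses of Definition~\ref{def:pes-sub1} componentwise. Since parallel composition is a plain disjoint union --- with no added causal links and, unlike the non-deterministic case, no cross-component conflict --- each clause reduces at once to the corresponding clause of the two hypotheses $\es{E_1} \sqsubseteq \es{E'_1}$ and $\es{E_2} \sqsubseteq \es{E'_2}$. Write $\es{E}_i = \pes{E_i}{\leq_i}{\#_i}$, $\es{E}'_i = \pes{E'_i}{\leq'_i}{\#'_i}$ for $i \in \{1,2\}$, and let $\conc{\es{E}_1}{\es{E}_2} = \pes{E}{\leq}{\#}$, $\conc{\es{E}'_1}{\es{E}'_2} = \pes{E'}{\leq'}{\#'}$, so that by Definition~\ref{def:pes-conc1} we have $E = E_1 \uplus E_2$, $\leq\, = \leq_1 \uplus \leq_2$, $\#\, = \#_1 \uplus \#_2$, and likewise for the primed event structure.

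For the event sets, parallel composition creates no ``copies'', so $\underline{E} = \underline{E_1} \cup \underline{E_2}$ and $\underline{E'} = \underline{E'_1} \cup \underline{E'_2}$; the inclusions $\underline{E_i} \subseteq \underline{E'_i}$ supplied by the hypotheses then give $\underline{E} \subseteq \underline{E'}$ directly (as in the $E \subseteq E'$ step of Lemma~\ref{lem:seq-mono1}). For the causal relation I would verify both directions of $e \leq e' \Leftrightarrow (e, e' \in E \wedge e \leq' e')$ by first noting that, by the shape of $\leq_1 \uplus \leq_2$ (resp. $\leq'_1 \uplus \leq'_2$), the relation $e \leq e'$ (resp. $e \leq' e'$) already forces $e$ and $e'$ into the same summand. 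In the $\es{E}_1$-summand, the $\leq$-clause of $\es{E}_1 \sqsubseteq \es{E}'_1$ converts $e \leq_1 e'$ into ``$e, e' \in E_1$ and $e \leq'_1 e'$'' and back; together with $E_1 \subseteq E$ this gives ``$\Rightarrow$'', and for ``$\Leftarrow$'' one uses the disjointness of the two summands of $E'$ to place an element of $E = E_1 \uplus E_2$ that lies in $E'_1$ actually inside $E_1$, and then applies the same clause. The $\es{E}_2$-summand case is symmetric, and the conflict relation is handled by the identical argument with $\#$, $\#_i$, $\#'_i$ replacing $\leq$, $\leq_i$, $\leq'_i$.

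I do not anticipate any genuine obstacle: the proof is routine disjoint-union bookkeeping, and the only point worth stating carefully is that the disjointness of the two summands of the primed union is what makes the ``$\Leftarrow$'' directions move membership from $E'_i$ back to $E_i$. Equivalently, one may simply observe --- as in the proof of Lemma~\ref{lem:nd-mono1} --- that the statement follows directly from Definition~\ref{def:pes-conc1}, since parallel composition differs from the non-deterministic construction of Definition~\ref{def:pes-nd1} only by dropping the cross-component conflict, so that the argument for Lemma~\ref{lem:nd-mono1} applies verbatim.
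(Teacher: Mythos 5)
Your proposal is correct and matches the paper's approach: the paper simply states that the lemma ``follows directly from Definition~\ref{def:pes-conc1}'', and your componentwise verification of the three clauses of Definition~\ref{def:pes-sub1} is exactly the routine disjoint-union bookkeeping that this remark leaves implicit.
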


Now, Lemma~\ref{lem:op-sim-1} tells us that whenever we have similar event structures, \ie\
$\es{E}_1 \equiv \es{E}_1'$ and $\es{E}_2 \equiv \es{E}_2'$, they will still be similar when
composed of sequentially, parallelly, and non-deterministically.
\begin{lemma}\label{lem:op-sim-1}
  Let $op \in \set{;\, ,\, ||,\, +}$ and $\es{E}_1$, $\es{E}_2$, $\es{E}_1'$ and $\es{E}_2'$ be
  event structures.  If $\es{E}_1 \equiv \es{E}_1'$ and $\es{E}_2 \equiv \es{E}_2'$ then
  $\es{E}_1\, op\, \es{E}_2 \equiv \es{E}_1'\, op\, \es{E}_2'$.
\end{lemma}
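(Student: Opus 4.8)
The statement is an immediate consequence of the three monotonicity lemmas already established, namely Lemma~\ref{lem:seq-mono1} for $op = {;}$, Lemma~\ref{lem:nd-mono1} for $op = {+}$, and Lemma~\ref{lem:conc-mono1} for $op = {||}$, together with the fact that $\equiv$ was \emph{defined} as mutual $\sqsubseteq$ (Definition~\ref{def:pes-sub1}). So the plan is: first unfold the hypotheses, then invoke the appropriate monotonicity lemma twice — once in each direction — and finally repackage the two resulting $\sqsubseteq$-inequalities as an $\equiv$.

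In detail, fix $op \in \set{;\, ,\, ||,\, +}$ and call $\funfont{M}_{op}$ the corresponding monotonicity lemma among Lemmas~\ref{lem:seq-mono1}, \ref{lem:conc-mono1}, \ref{lem:nd-mono1}. From $\es{E}_1 \equiv \es{E}_1'$ we get both $\es{E}_1 \sqsubseteq \es{E}_1'$ and $\es{E}_1' \sqsubseteq \es{E}_1$, and likewise from $\es{E}_2 \equiv \es{E}_2'$ we get $\es{E}_2 \sqsubseteq \es{E}_2'$ and $\es{E}_2' \sqsubseteq \es{E}_2$. Applying $\funfont{M}_{op}$ to the pair $(\es{E}_1 \sqsubseteq \es{E}_1',\ \es{E}_2 \sqsubseteq \es{E}_2')$ yields $\es{E}_1\, op\, \es{E}_2 \sqsubseteq \es{E}_1'\, op\, \es{E}_2'$, and applying it to the pair $(\es{E}_1' \sqsubseteq \es{E}_1,\ \es{E}_2' \sqsubseteq \es{E}_2)$ yields $\es{E}_1'\, op\, \es{E}_2' \sqsubseteq \es{E}_1\, op\, \es{E}_2$. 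By Definition~\ref{def:pes-sub1} these two inclusions together are exactly $\es{E}_1\, op\, \es{E}_2 \equiv \es{E}_1'\, op\, \es{E}_2'$, which is the claim. One only has to check that all four event structures are indeed event structures so that the monotonicity lemmas apply, which is guaranteed by Lemmas~\ref{lem:seq-es1}, \ref{lem:conc-es1}, \ref{lem:nd-es1}.

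There is no real obstacle here: the content of the lemma lives entirely in the monotonicity results, and the present statement is just the observation that monotonicity of $\sqsubseteq$ automatically gives a congruence property for $\equiv$. The only point worth stressing is that $\sqsubseteq$ is \emph{not} symmetric — in particular the event-set condition $\underline{E_1} \subseteq \underline{E_2}$ goes one way and the relation conditions are two-sided — so we genuinely need \emph{both} directions of each hypothesis, which is precisely why the argument is phrased as two separate applications of $\funfont{M}_{op}$ rather than one. The sequential case is the delicate one (because of the `copies' introduced in Definition~\ref{def:pes-seq1}), but that delicacy has already been absorbed into the proof of Lemma~\ref{lem:seq-mono1}.
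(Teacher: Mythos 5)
Your proposal is correct and follows essentially the same route as the paper's proof: unfold each $\equiv$ into the two $\sqsubseteq$-inequalities, apply the relevant monotonicity lemma (Lemma~\ref{lem:seq-mono1}, \ref{lem:nd-mono1}, or \ref{lem:conc-mono1}) once in each direction, and recombine via Definition~\ref{def:pes-sub1}. Your added remark on why both directions are genuinely needed (non-symmetry of $\sqsubseteq$) is a nice touch not made explicit in the paper.
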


Lastly, we show how the removal of an initial event interacts with the operations associated with
sequential composition, concurrent composition, and non-deterministic composition.
\begin{lemma}\label{lem:seq-rem-init1}  
  Let $\es{E_1}, \es{E_2}$ be event structures.  Consider $\seq{\es{E_1}}{\es{E_2}}$ such that
  $l \in \init{\seq{\es{E_1}}{\es{E_2}}}$. Then
  $(\seq{\es{E_1}}{\es{E_2}}) \backslash l \equiv \seq{(\es{E_1}\backslash l)}{\es{E_2}}$.
\end{lemma}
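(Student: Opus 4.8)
The plan is to unfold both sides through Definitions~\ref{def:pes-seq1} and~\ref{def:rem-init1}, show that they carry the same underlying events, and then verify that the causal and conflict relations agree on those events; this gives $\sqsubseteq$ in both directions and hence $\equiv$. As in Definition~\ref{def:pes-sub1}, all comparisons are taken after forgetting the configuration-tags on the $\es{E}_2$-copies, since the two sides will literally carry different tags; I also use the standing fact that the event structures at hand are finite (true for everything denoting a command, by Definition~\ref{def:den-sem1}), so that $\confmax$ is always nonempty and every finite configuration extends to a maximal one.

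First I would record that $l \in \init{\es{E}_1}$, so that the right-hand side is well defined. The events of $\seq{\es{E}_1}{\es{E}_2}$ are $E_1 \uplus (E_2 \times \confmax{\es{E}_1})$, and an event $(e_2,x)$ of the second kind has all of $x$ strictly below it; since the downward closure of the initial event $l$ is $\{l\}$, this is impossible unless $x = \emptyset$ and $\es{E}_1 = \emptyset$, a degenerate case excluded because then $\es{E}_1 \backslash l$ is undefined. Hence $l \in E_1$ and $\{e'\mid e'\leq_1 l\} = \{l\}$, i.e.\ $l \in \init{\es{E}_1}$. I would also note two facts used repeatedly, both from heredity of $\#_1$ and initiality of $l$: (a) the conflict of $\seq{\es{E}_1}{\es{E}_2}$ restricted to $E_1$ is exactly $\#_1$ (unfold the two clauses of $\#$ in Definition~\ref{def:pes-seq1} and apply heredity twice), so that $(\seq{\es{E}_1}{\es{E}_2})\backslash l$ and $\es{E}_1\backslash l$ have the same $E_1$-part $\tilde{E}_1 := \{e \in E_1 \mid \neg(e \#_1 l),\ e \neq l\}$; and (b) if $e \in \tilde{E}_1$ then no $\leq_1$-predecessor of $e$ conflicts with $l$ (else heredity would give $e \#_1 l$).

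The main work is then to match events and relations. Underlying events: the $E_1$-parts are both $\tilde{E}_1$ by (a); and each $E_2$-part is all of $E_2$, because on the left every $e_2$ survives tagged by any $x \in \confmax{\es{E}_1}$ containing $l$ (such an $x$ exists: extend the configuration $\{l\}$), and on the right $\confmax{\es{E}_1\backslash l}$ is nonempty; a convenient auxiliary fact is that the configurations of $\es{E}_1\backslash l$ are exactly the restrictions $x \cap \tilde{E}_1$ of configurations $x$ of $\es{E}_1$ (for $\supseteq$, take the $\leq_1$-downward closure and invoke (b) and initiality of $l$). Relations: by a case analysis on pairs of underlying events, pairs inside $\tilde{E}_1$ carry, on both sides, exactly the restriction of $\leq_1$ and of $\#_1$ to $\tilde{E}_1$ (immediate from Definitions~\ref{def:pes-seq1},~\ref{def:rem-init1}); for $e_1 \in \tilde{E}_1$ and $e_2 \in E_2$, $e_1$ is causally below $e_2$ on each side, since on the left $\{e'\mid e'\leq_1 e_1\}\cup\{l\}$ extends to a maximal configuration of $\es{E}_1$ containing $e_1$ and free of $l$-conflicts, and on the right $\{e'\mid e'\leq_1 e_1\}\cap\tilde{E}_1$ is a configuration of $\es{E}_1\backslash l$ containing $e_1$, which extends to a maximal one; and two copies of $e_2,e_2' \in E_2$ are in conflict on each side exactly when $e_2 \#_2 e_2'$ (pick a common tag). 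Assembling the two inclusions gives $\es{E}^{L} \equiv \es{E}^{R}$.

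I expect the bookkeeping around the configuration-tags on the $\es{E}_2$-copies to be the main obstacle. The key subtlety is that $\confmax{\es{E}_1\backslash l}$ is \emph{not} a subset of $\confmax{\es{E}_1}$ --- deleting $l$ together with everything conflicting with it can turn a previously non-maximal configuration maximal --- so the two sides are genuinely unequal as event structures, which is precisely why the statement is stated with $\equiv$ rather than $=$; the delicate point is to pair the two families of tags tightly enough that the causal edges from $\tilde{E}_1$ into $\es{E}_2$, and the conflicts among $\es{E}_2$-copies, are preserved under forgetting the tags. The remaining obligations --- finiteness of downsets and heredity of the conflict on each side --- are immediate once the definitions are unfolded, or follow from Lemmas~\ref{lem:seq-es1} and~\ref{lem:rem-init-es1}.
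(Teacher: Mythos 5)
Your proposal is correct and follows essentially the same route as the paper's proof: both directions of $\sqsubseteq$ are obtained by unfolding Definitions~\ref{def:pes-seq1} and~\ref{def:rem-init1}, using that $l \in \init{\es{E}_1}$, that the composite conflict restricted to $E_1$ is exactly $\#_1$ by heredity, and that maximal configurations of $\es{E}_1$ containing $l$ correspond to those of $\es{E}_1 \backslash l$ by deleting/adjoining $l$, with the $\es{E}_2$-copies compared after forgetting their configuration tags. Your explicit treatment of the tag-pairing subtlety and of the degenerate case $E_1 = \emptyset$ is marginally more careful than the paper's, but the argument is the same.
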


\begin{lemma}\label{lem:nd-rem-init1}
  Let $\es{E_1}, \es{E_2}$ be event structures.  Consider $\nd{\es{E_1}}{\es{E_2}}$ such that
  $l \in \init{\nd{\es{E_1}}{\es{E_2}}}$. Then
  \[
    (\nd{\es{E_1}}{\es{E_2}}) \backslash l \equiv
    \begin{cases}
      \es{E_1} \backslash l & \text{ if } l \in \init{\es{E_1}} \\
      \es{E_2} \backslash l & \text{ if } l \in \init{\es{E_2}}
    \end{cases}
  \]
\end{lemma}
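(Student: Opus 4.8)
The plan is to show that, modulo the implicit $0/1$-tagging of the disjoint union, the event structure $(\nd{\es{E_1}}{\es{E_2}}) \backslash l$ is literally the same as $\es{E_1} \backslash l$ (resp. $\es{E_2} \backslash l$), so that the claimed equivalence drops out of Remark~\ref{rem:equiv-1}. Since $E_1$ and $E_2$ are disjoint, $l$ lies in exactly one of them, so the two cases of the statement are mutually exclusive and symmetric; I would spell out only $l \in \init{\es{E_1}}$. First I would observe that if $l$ is initial in $\nd{\es{E_1}}{\es{E_2}}$ and $l \in E_1$, then $l$ is initial in $\es{E_1}$: by Definition~\ref{def:pes-nd1} the causal order of $\nd{\es{E_1}}{\es{E_2}}$ restricted to $E_1$ is exactly $\leq_1$, so the downward set of $l$ is unchanged, and the case split of the statement is therefore exhaustive.

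The heart of the argument is the computation of the event set. Writing $\nd{\es{E_1}}{\es{E_2}} = \pes{E}{\leq}{\#}$, by Definition~\ref{def:rem-init1} the events of $(\nd{\es{E_1}}{\es{E_2}}) \backslash l$ are $\set{e \in E_1 \uplus E_2 \mid \neg(e \# l),\ e \neq l}$. Every $e \in E_2$ satisfies $e \# l$, because Definition~\ref{def:pes-nd1} puts every event of $E_1$ in conflict with every event of $E_2$; hence no event of $E_2$ survives. For $e \in E_1$ one has $e \# l$ iff $e \#_1 l$, since the only conflicts between two elements of $E_1$ are those of $\#_1$. So the surviving events are exactly $\set{e \in E_1 \mid \neg(e \#_1 l),\ e \neq l}$, which by Definition~\ref{def:rem-init1} is precisely the event set of $\es{E_1} \backslash l$.

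Finally, since the surviving set sits inside $E_1$, and $\leq$ and $\#$ both restrict on $E_1$ to $\leq_1$ and $\#_1$, the causal and conflict relations that Definition~\ref{def:rem-init1} produces for $(\nd{\es{E_1}}{\es{E_2}}) \backslash l$ coincide with $\leq_1^l$ and $\#_1^l$ of $\es{E_1} \backslash l$. Hence the two event structures have the same three components, so they are equal up to tags and Remark~\ref{rem:equiv-1} gives $(\nd{\es{E_1}}{\es{E_2}}) \backslash l \equiv \es{E_1} \backslash l$; the case $l \in \init{\es{E_2}}$ is symmetric. The only point needing care — not really an obstacle — is the disjoint-union bookkeeping: strict equality holds only after forgetting the tag carried by the copy of $E_1$ inside $E_1 \uplus E_2$, which is exactly why the statement asserts $\equiv$ rather than $=$; alternatively one checks the two inclusions $\sqsubseteq$ of Definition~\ref{def:pes-sub1} directly, which amounts to the same component-wise comparison.
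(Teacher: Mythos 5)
Your proposal is correct and rests on exactly the same observations as the paper's proof: every event of $E_2$ is in conflict with $l$ and so is discarded, events of $E_1$ survive precisely when they avoid $\#_1$-conflict with $l$, and the causal and conflict relations restrict on the survivors to $\leq_1^l$ and $\#_1^l$. Packaging this as an equality of the three components (up to disjoint-union tags) followed by Remark~\ref{rem:equiv-1} is just a tidier organization of the paper's two-inclusion verification of $\sqsubseteq$, as you yourself note.
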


\begin{lemma}\label{lem:conc-rem-init1}
  Let $\es{E_1}, \es{E_2}$ be event structures.  Consider $\conc{\es{E_1}}{\es{E_2}}$ such that
  $l \in \init{\conc{\es{E_1}}{\es{E_2}}}$. Then
  $(\conc{\es{E_1}}{\es{E_2}}) \backslash l \equiv
  \conc{(\es{E_1}\backslash l)}{(\es{E_2} \backslash l)}$.
\end{lemma}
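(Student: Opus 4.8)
The plan is to mirror the proof of Lemma~\ref{lem:nd-rem-init1}, exploiting that parallel composition introduces no conflicts between the two components. First I would observe that since the event set of $\conc{\es{E}_1}{\es{E}_2}$ is the disjoint union $E_1 \uplus E_2$, the initial event $l$ lies in exactly one component; without loss of generality assume $l \in \init{\es{E}_1}$ (equivalently $l \in E_1$), the other case being symmetric. In particular $l \notin E_2$, so $\es{E}_2 \backslash l = \es{E}_2$ by the convention that removing a non-initial event has no effect, and the goal reduces to $(\conc{\es{E}_1}{\es{E}_2}) \backslash l \equiv \conc{(\es{E}_1 \backslash l)}{\es{E}_2}$.

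Next I would unfold the two sides. By Definitions~\ref{def:pes-conc1} and~\ref{def:rem-init1}, the events of $(\conc{\es{E}_1}{\es{E}_2}) \backslash l$ are $\set{e \in E_1 \uplus E_2 \mid \neg(e \# l),\ e \neq l}$ where $\# = \#_1 \uplus \#_2$. The crucial point is that no event of $E_2$ is in conflict with $l \in E_1$, since conflicts in a parallel composition never cross components; hence this set is exactly $\set{e \in E_1 \mid \neg(e \#_1 l),\ e \neq l} \uplus E_2$, which by Definitions~\ref{def:rem-init1} and~\ref{def:pes-conc1} is precisely the event set $E_1^l \uplus E_2$ of $\conc{(\es{E}_1 \backslash l)}{\es{E}_2}$. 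So the two carrier sets coincide, giving the inclusion condition of Definition~\ref{def:pes-sub1} in both directions (there are no `copies' to worry about here, unlike in the sequential case).

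For the order and conflict conditions I would proceed exactly as in Lemma~\ref{lem:nd-rem-init1}: $\leq$ on both sides restricts to $\leq_1$ on $E_1$-events and to $\leq_2$ on $E_2$-events, and since removing $l$ only shrinks the carrier (and, by conflict heredity, an event surviving the removal cannot lie above anything conflicting with $l$), the relation $\leq'$ on either side is just the restriction of the original order to the surviving events; the same argument applies to $\#'$. Checking the two implications of the sub-PES conditions then amounts to a routine case split on whether the events lie in $E_1$ or $E_2$, invoking Definitions~\ref{def:pes-conc1} and~\ref{def:rem-init1}, and it runs symmetrically for both $\sqsubseteq$ directions, so $\equiv$ follows.

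I do not expect a genuine obstacle; the only points needing a little care are the bookkeeping of the disjoint union together with the convention $\es{E}_i \backslash l = \es{E}_i$ when $l \notin \init{\es{E}_i}$, and the observation --- which is exactly what makes the two sides agree --- that, unlike the non-deterministic case where $(\nd{\es{E_1}}{\es{E_2}}) \backslash l$ collapses onto a single component, deleting an initial event of one component here leaves the other component entirely untouched, because $\conc{\es{E}_1}{\es{E}_2}$ has no cross-component conflicts.
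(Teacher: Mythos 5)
Your proposal is correct and follows essentially the same route as the paper's proof: reduce without loss of generality to $l \in \init{\es{E}_1}$, note that $\es{E}_2 \backslash l = \es{E}_2$, use the absence of cross-component conflicts to identify the surviving event sets, and finish with a routine case split on components for the order and conflict conditions. The only cosmetic difference is that you establish the equality of carrier sets once and read off both inclusions, whereas the paper checks the two $\sqsubseteq$ directions separately.
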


It is straightforward to check that:
\begin{lemma}\label{lem:conc-symmetric1}
  Let $\es{E}_1, \es{E}_2$ be event structures.  Then
  $\conc{\es{E}_1}{\es{E}_2} = \conc{\es{E}_2}{\es{E}_1}$.
\end{lemma}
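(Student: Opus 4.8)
The plan is to unfold Definition~\ref{def:pes-conc1} on both sides and observe that the only operation involved, the disjoint union $\uplus$ of sets and of relations, is symmetric. Writing $\es{E}_1 = \pes{E_1}{\leq_1}{\#_1}$ and $\es{E}_2 = \pes{E_2}{\leq_2}{\#_2}$, we have $\conc{\es{E}_1}{\es{E}_2} = \pes{E_1 \uplus E_2}{\leq_1 \uplus \leq_2}{\#_1 \uplus \#_2}$ and $\conc{\es{E}_2}{\es{E}_1} = \pes{E_2 \uplus E_1}{\leq_2 \uplus \leq_1}{\#_2 \uplus \#_1}$. The first step is to exhibit the canonical bijection $\sigma : E_1 \uplus E_2 \to E_2 \uplus E_1$ that swaps the two tags, i.e.\ $\sigma(0,e) = (1,e)$ for $e \in E_1$ and $\sigma(1,e) = (0,e)$ for $e \in E_2$ (using the explicit form of $\uplus$ recalled in the footnote of Definition~\ref{def:pes-seq1}).

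Second, I would check that $\sigma$ transports the causal order and the conflict relation. Directly from the definition of the extension of a relation along $\uplus$, a pair $(i,e)$, $(i',e')$ is related by $\leq_1 \uplus \leq_2$ iff $i = i'$ and $e,e'$ are related by $\leq_1$ (when $i=0$) or by $\leq_2$ (when $i=1$); this is exactly the condition for $\sigma(i,e)$ and $\sigma(i',e')$ to be related by $\leq_2 \uplus \leq_1$. The same argument applies verbatim to $\#_1 \uplus \#_2$ versus $\#_2 \uplus \#_1$. Hence $\sigma$ is an isomorphism of event structures, and --- keeping the $0$/$1$ tags implicit as agreed in the footnote of Definition~\ref{def:pes-seq1} --- the two structures coincide, giving $\conc{\es{E}_1}{\es{E}_2} = \conc{\es{E}_2}{\es{E}_1}$.

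There is no genuine obstacle here; the only point requiring a word of care is the reading of the symbol $=$: strictly speaking $E_1 \uplus E_2$ and $E_2 \uplus E_1$ are distinct sets of tagged pairs, so the statement is an equality modulo the canonical tag-swap $\sigma$, which is precisely the identification the paper already commits to when it declares it will keep the $0$s and $1$s implicit. If one prefers to avoid that convention, the statement should instead be read as $\conc{\es{E}_1}{\es{E}_2} \equiv \conc{\es{E}_2}{\es{E}_1}$ in the sense of Definition~\ref{def:pes-sub1}, and the same bijection $\sigma$ together with the two relation checks above establishes both $\sqsubseteq$ inclusions at once.
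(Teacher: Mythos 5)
Your proposal is correct and takes essentially the same route as the paper, which simply states that the claim follows directly from Definition~\ref{def:pes-conc1}; you have just spelled out the symmetry of $\uplus$ that the paper leaves implicit. Your closing caveat --- that the literal equality only holds modulo the canonical tag-swap, or else should be read as $\equiv$ in the sense of Definition~\ref{def:pes-sub1} --- is a fair and worthwhile precision that the paper glosses over by keeping the $0$s and $1$s implicit.
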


We now show how the operational and denotational semantics are related via a soundness and adequacy
theorem. Recall that $\mf{\checkmark} = \pes{\emptyset}{\emptyset}{\emptyset}$.

The intuition behind Lemma~\ref{res:soundI-1} is that if the action $l$ triggers a transition from
$\com{C}$ to $\com{C}'$, then removing $l$ from $\mf{\com{C}}_{\gamma}$ is similar to have
$\mf{\com{C}'}_{\gamma}$.  This can be further clarified if we recall the intuition given when
defining the removal of an initial event from an event structure (Definition~\ref{def:rem-init1}).
\begin{lemma}[Soundness I]\label{res:soundI-1}
  If $\com{C} \xrightarrow{l} \com{C'}$ then $\mf{\com{C'}} \equiv \mf{\com{C}} \backslash l$.
\end{lemma}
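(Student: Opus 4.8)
The plan is to induct on the derivation of $\com{C} \xrightarrow{l} \com{C'}$ using the rules of Figure~\ref{fig:op-small1}. Two preliminary remarks. First, $\mf{\checkmark} = \pes{\emptyset}{\emptyset}{\emptyset}$, so for a transition into $\checkmark$ the claim reads $\mf{\com{C}} \backslash l \equiv \pes{\emptyset}{\emptyset}{\emptyset}$. Second, $\mf{\com{C}} \backslash l$ must be read as removal of the initial event of $\mf{\com{C}}$ fired by the transition: a short side induction shows that each transition $\com{C}\xrightarrow{l}\com{C'}$ does fire an event of $\init{\mf{\com{C}}}$ labelled $l$, and Lemma~\ref{lem:op-sim-1} guarantees that the resulting event structure is independent, up to $\equiv$, of which such event is chosen when several exist (as in $\conc{\com{a}}{\com{a}}$), so the statement is well posed. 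In every inductive case below, the sub-derivation $\com{C}_i \xrightarrow{l} \com{C}'_i$ yields, by induction hypothesis, $\mf{\com{C}'_i} \equiv \mf{\com{C}_i} \backslash l$ together with $l \in \init{\mf{\com{C}_i}}$.

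\textbf{Base cases.} For $\com{skip} \xrightarrow{sk} \checkmark$, Definition~\ref{def:rem-init1} gives $\mf{\com{skip}} \backslash sk = \pes{\emptyset}{\emptyset}{\emptyset} = \mf{\checkmark}$, hence the claim by Remark~\ref{rem:equiv-1}; the case $\com{a} \xrightarrow{a} \checkmark$ is identical.

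\textbf{Inductive cases.} For sequential composition with $\com{C}_1 \xrightarrow{l} \com{C}'_1$ one chains Definition~\ref{def:den-sem1}, Lemma~\ref{lem:seq-rem-init1}, the induction hypothesis and the congruence Lemma~\ref{lem:op-sim-1}:
\[
  \mf{\seq{\com{C}_1}{\com{C}_2}} \backslash l
  = (\seq{\mf{\com{C}_1}}{\mf{\com{C}_2}}) \backslash l
  \equiv \seq{(\mf{\com{C}_1} \backslash l)}{\mf{\com{C}_2}}
  \equiv \seq{\mf{\com{C}'_1}}{\mf{\com{C}_2}}
  = \mf{\seq{\com{C}'_1}{\com{C}_2}}.
\]
For $\com{C}_1 \xrightarrow{l} \checkmark$ the same chain ends at $\seq{\pes{\emptyset}{\emptyset}{\emptyset}}{\mf{\com{C}_2}}$, and one concludes with the auxiliary identity $\seq{\pes{\emptyset}{\emptyset}{\emptyset}}{\es{E}} \equiv \es{E}$, which is immediate from Definition~\ref{def:pes-seq1} since $\confmax{\pes{\emptyset}{\emptyset}{\emptyset}} = \set{\emptyset}$ and the sub-PES order of Definition~\ref{def:pes-sub1} ignores the $(\cdot,\emptyset)$ tags. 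The non-deterministic rules are the easiest: since the firing side is $\com{C}_1$ with $l \in \init{\mf{\com{C}_1}}$, Lemma~\ref{lem:nd-rem-init1} gives directly $\mf{\nd{\com{C}_1}{\com{C}_2}} \backslash l \equiv \mf{\com{C}_1} \backslash l \equiv \mf{\com{C}'_1}$ (resp. $\mf{\checkmark}$), and symmetrically when the firing side is $\com{C}_2$. For parallel composition, Lemma~\ref{lem:conc-rem-init1} together with the fact observed inside its proof that the non-firing component is left untouched by $\backslash l$ gives $\mf{\conc{\com{C}_1}{\com{C}_2}} \backslash l \equiv \conc{(\mf{\com{C}_1}\backslash l)}{\mf{\com{C}_2}}$, which the induction hypothesis and Lemma~\ref{lem:op-sim-1} rewrite to $\conc{\mf{\com{C}'_1}}{\mf{\com{C}_2}} = \mf{\conc{\com{C}'_1}{\com{C}_2}}$ in the non-terminal case, and to $\conc{\pes{\emptyset}{\emptyset}{\emptyset}}{\mf{\com{C}_2}} = \mf{\com{C}_2}$ (immediate from Definition~\ref{def:pes-conc1}) in the terminal case; the two rules firing in $\com{C}_2$ are symmetric, using Lemma~\ref{lem:conc-symmetric1} if one wishes to reuse the argument verbatim.

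\textbf{Where the difficulty lies.} Essentially nowhere deep: all the genuine content has been packaged into the commutation Lemmas~\ref{lem:seq-rem-init1}, \ref{lem:nd-rem-init1}, \ref{lem:conc-rem-init1} and the congruence Lemma~\ref{lem:op-sim-1}, and given those each inductive case is a two- or three-step rewriting. The one point that needs care is the bookkeeping of treating the operational label $l$ as a (possibly non-unique) initial event of $\mf{\com{C}}$ and checking that removing any such event gives the same event structure up to $\equiv$; once this is settled by the preliminary side induction, the rest is routine.
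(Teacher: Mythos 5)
Your proof is correct and follows essentially the same route as the paper's: induction over the rules of Figure~\ref{fig:op-small1}, discharging each case via the commutation Lemmas~\ref{lem:seq-rem-init1}, \ref{lem:nd-rem-init1}, \ref{lem:conc-rem-init1} together with congruence/monotonicity (the paper invokes Lemmas~\ref{lem:seq-mono1}--\ref{lem:conc-mono1} directly where you invoke Lemma~\ref{lem:op-sim-1}, which is derived from them). Your preliminary remark on the well-posedness of $\mf{\com{C}}\backslash l$ when several initial events carry the label $l$ is a point the paper leaves implicit, and is a welcome addition rather than a divergence.
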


With a result that establishes a relation between the small-step and denotational semantics, we now
focus on the big-step semantics. Recall that the big-step semantics creates words, which are
sequences of labels. On the other hand, event structures have the notion of covering chains, which
are finite sequences representing the execution of events. We can exploit this similarity to
formulate an equivalence between the big-step and denotational semantics.
\begin{theorem}[Soundness II]\label{res:soundII-1}
  If $\com{C} \xtwoheadrightarrow{\omega} \com{C'}$ then $\exists x \in \mathcal{C}(\mf{\com{C}})$
  such that $\emptyset \stackrel{\omega}{\chain\ } x$.
\end{theorem}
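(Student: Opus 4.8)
The plan is to induct on the length of the word $\omega$, equivalently on the derivation of $\com{C} \xtwoheadrightarrow{\omega} \com{C'}$ built from the two rules of Figure~\ref{fig:op-nstep1}. The base case peels off a single small step via Soundness~I (Lemma~\ref{res:soundI-1}), and the inductive step glues a single small step to a shorter n-step transition, the real work being to transport a covering chain across the two operations $\mf{-}\backslash l$ and $\equiv$ that Soundness~I produces.

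For the base case $\omega = l$ we have $\com{C} \xrightarrow{l} \com{C'}$, so Lemma~\ref{res:soundI-1} applies and, in particular, $\mf{\com{C}}\backslash l$ is well-defined, i.e.\ $l \in \init{\mf{\com{C}}}$. Then $\set{l}$ is conflict-free ($\#$ is irreflexive) and down-closed (nothing other than $l$ lies below $l$), so $\set{l} \in \confES{\mf{\com{C}}}$ and $\emptyset \stackrel{l}{\chain\ } \set{l}$; take $x = \set{l}$. (If one wants to be fully explicit, the fact that a transition $\com{C}\xrightarrow{l}\com{C'}$ forces $l \in \init{\mf{\com{C}}}$ can be recorded as a separate easy induction on Figure~\ref{fig:op-small1}, but it is already presupposed by the statement of Lemma~\ref{res:soundI-1}.)

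For the inductive step $\omega = l : \omega'$ the second n-step rule gives $\com{C} \xrightarrow{l} \com{C}''$ and $\com{C}'' \xtwoheadrightarrow{\omega'} \com{C'}$, where necessarily $\com{C}'' \neq \checkmark$ since $\checkmark$ has no outgoing transitions. The induction hypothesis yields $x'' \in \confES{\mf{\com{C}''}}$ with $\emptyset \stackrel{\omega'}{\chain\ } x''$, and Lemma~\ref{res:soundI-1} gives $\mf{\com{C}''} \equiv \mf{\com{C}}\backslash l$ with $l \in \init{\mf{\com{C}}}$. I would then invoke two transport facts. First, a \emph{lifting} lemma for Definition~\ref{def:rem-init1}: if $l \in \init{\es{E}}$ then $y \mapsto y \cup \set{l}$ is an order-isomorphism from $\confES{\es{E}\backslash l}$ onto $\set{z \in \confES{\es{E}} \mid l \in z}$ that preserves the covering relation, so that a covering chain $\emptyset \stackrel{\omega'}{\chain\ } y$ in $\es{E}\backslash l$, prefixed with $\emptyset \stackrel{l}{\chain\ } \set{l}$, becomes $\emptyset \stackrel{l:\omega'}{\chain\ } (y \cup \set{l})$ in $\es{E}$; the one nonformal point is that conflict heredity ($e' \leq e$ and $e' \# l$ imply $e \# l$) guarantees that deleting $l$ never strips an event below a surviving one. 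Second, a lemma that $\equiv$ (Definition~\ref{def:pes-sub1}) preserves covering chains once these are read as words of labels: this holds because $\sqsubseteq$ keeps the causal and conflict relations on the common underlying events and only renames the `copies' introduced by Definition~\ref{def:pes-seq1}, which are invisible at the level of labels. Combining the two: transport $\emptyset \stackrel{\omega'}{\chain\ } x''$ along $\mf{\com{C}''} \equiv \mf{\com{C}}\backslash l$ to get $\emptyset \stackrel{\omega'}{\chain\ } x_0''$ in $\mf{\com{C}}\backslash l$, then lift to $\emptyset \stackrel{l}{\chain\ } \set{l} \stackrel{\omega'}{\chain\ } x_0'' \cup \set{l}$ in $\mf{\com{C}}$, i.e.\ $\emptyset \stackrel{\omega}{\chain\ } x$ with $x = x_0'' \cup \set{l} \in \confES{\mf{\com{C}}}$.

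I expect the main obstacle to be the second transport fact, namely making precise and checking that $\equiv$ respects covering chains when they are read modulo the copy-annotations produced by sequential composition — this is the same bookkeeping already visible in Example~\ref{ex:map-es-1}, where a total map preserves the length of a covering chain. The lifting lemma for $\es{E}\backslash l$ and the base case are routine verifications, and the rest is just unfolding the n-step rules together with Lemma~\ref{res:soundI-1}.
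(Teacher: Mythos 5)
Your proof follows the same route as the paper's: induction on $|\omega|$, with the base case taking $x=\set{l}$ and the inductive step peeling off one small step via Lemma~\ref{res:soundI-1} and lifting the covering chain $\emptyset \stackrel{\omega'}{\chain\ } y$ obtained from the induction hypothesis to $\emptyset \stackrel{l}{\chain\ } \set{l} \stackrel{\omega'}{\chain\ } \set{l}\cup y$. The two transport facts you isolate (the covering-preserving bijection $y \mapsto y\cup\set{l}$ for $\mf{\com{C}}\backslash l$, and the fact that $\equiv$ preserves covering chains read as words of labels modulo copy-annotations) are exactly what the paper compresses into the single justification ``by Definition~\ref{def:rem-init1}'', so your version is the same argument spelled out more carefully.
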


Since adequacy is the reverse of soundness, following a similar procedure to prove it makes
sense. Hence, we first need to establish a relation between denotational and small-step
semantics. To do this, we take advantage once again of the intuition behind removing an initial
event from an event structure.
\begin{lemma}[Adequacy I]\label{res:adI-1}
  Let $l \in \init{\mf{\com{C}}}$. Then $\exists \com{C'} \in (\com{C} \cup \{\checkmark\})$ s.t
  $\com{C} \xrightarrow{l} \com{C'}$ and $\mf{\com{C}} \backslash l \equiv \mf{\com{C'}}$.
\end{lemma}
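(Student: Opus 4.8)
The plan is to proceed by structural induction on $\com{C}$, running each case of the proof of Lemma~\ref{res:soundI-1} in reverse. The essential preliminary is an explicit description of $\init{\mf{\com{C}}}$ for each syntactic form. First, an easy induction using Definition~\ref{def:den-sem1} shows $\mf{\com{C}} \neq \emptyset$ for every command $\com{C}$. Granted this, Definition~\ref{def:pes-seq1} yields $\init{\seq{\es{E}_1}{\es{E}_2}} = \init{\es{E}_1}$ whenever $\es{E}_1 \neq \emptyset$: every copy $(e_2,x)$ sits above all of $x$, and the only maximal configuration with no events is $\emptyset$, which is maximal solely in the empty event structure. Likewise Definitions~\ref{def:pes-conc1} and~\ref{def:pes-nd1} give $\init{\conc{\es{E}_1}{\es{E}_2}} = \init{\nd{\es{E}_1}{\es{E}_2}} = \init{\es{E}_1} \uplus \init{\es{E}_2}$, since neither construction adds causal dependencies across the disjoint union.

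For the base cases, $\init{\mf{\com{skip}}} = \set{sk}$ forces $l = sk$, and $\com{skip} \xrightarrow{sk} \checkmark$ with $\mf{\com{skip}} \backslash sk = \emptyset = \mf{\checkmark}$; the case $\com{C} = a \in Act$ is identical. For $\com{C} = \seq{\com{C}_1}{\com{C}_2}$ the preliminary gives $l \in \init{\mf{\com{C}_1}}$, so the induction hypothesis produces $\com{C}'_1 \in \com{C}_1 \cup \{\checkmark\}$ with $\com{C}_1 \xrightarrow{l} \com{C}'_1$ and $\mf{\com{C}'_1} \equiv \mf{\com{C}_1} \backslash l$. If $\com{C}'_1 = \checkmark$ then $\seq{\com{C}_1}{\com{C}_2} \xrightarrow{l} \com{C}_2$, and Lemmas~\ref{lem:seq-rem-init1} and~\ref{lem:op-sim-1} together with $\seq{\mf{\checkmark}}{\mf{\com{C}_2}} \equiv \mf{\com{C}_2}$ give $\mf{\seq{\com{C}_1}{\com{C}_2}} \backslash l \equiv \seq{(\mf{\com{C}_1} \backslash l)}{\mf{\com{C}_2}} \equiv \seq{\mf{\checkmark}}{\mf{\com{C}_2}} \equiv \mf{\com{C}_2}$; otherwise $\com{C}'_1$ is a command, $\seq{\com{C}_1}{\com{C}_2} \xrightarrow{l} \seq{\com{C}'_1}{\com{C}_2}$, and the same two lemmas with Definition~\ref{def:den-sem1} give $\mf{\seq{\com{C}_1}{\com{C}_2}} \backslash l \equiv \seq{\mf{\com{C}'_1}}{\mf{\com{C}_2}} = \mf{\seq{\com{C}'_1}{\com{C}_2}}$. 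The cases $\com{C} = \nd{\com{C}_1}{\com{C}_2}$ and $\com{C} = \conc{\com{C}_1}{\com{C}_2}$ are analogous: by the description of $\init{}$ the event $l$ is initial in $\mf{\com{C}_1}$ or in $\mf{\com{C}_2}$, and using Lemma~\ref{lem:conc-symmetric1} (respectively the evident symmetry of Definition~\ref{def:pes-nd1}) we may assume the former; we then invoke the induction hypothesis, choose the matching operational rule, and settle the denotational equation with Lemma~\ref{lem:nd-rem-init1} (respectively Lemma~\ref{lem:conc-rem-init1}, noting that $l \notin \init{\mf{\com{C}_2}}$ makes $\mf{\com{C}_2} \backslash l = \mf{\com{C}_2}$) and Lemma~\ref{lem:op-sim-1}.

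The only delicate point is the description of $\init{\mf{\com{C}}}$ for sequential composition, where one has to rule out the degenerate copy $(e_2,\emptyset)$ being an initial event; this is precisely where non-emptiness of $\mf{\com{C}_1}$ enters. Beyond that, every inductive case is a mechanical reversal of the corresponding step of Lemma~\ref{res:soundI-1}, so the remaining work is routine bookkeeping with the equivalences furnished by Lemmas~\ref{lem:seq-rem-init1}, \ref{lem:nd-rem-init1}, \ref{lem:conc-rem-init1}, and~\ref{lem:op-sim-1}.
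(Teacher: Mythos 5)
Your proposal is correct and follows essentially the same route as the paper's proof: structural induction on $\com{C}$, reducing $l \in \init{\mf{\com{C}}}$ to initiality in a subcommand, applying the induction hypothesis, picking the matching operational rule, and closing with Lemmas~\ref{lem:seq-rem-init1}, \ref{lem:nd-rem-init1}, and~\ref{lem:conc-rem-init1}. The only difference is that you make explicit the characterization of $\init{\mf{\com{C}}}$ (via non-emptiness of $\mf{\com{C}_1}$ ruling out initial copies $(e_2,x)$), a point the paper's proof asserts without justification.
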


With Lemma~\ref{res:adI-1}, we can turn our efforts to relate the denotational semantics and the
big-step semantics. To formulate the adequacy theorem, we reverse the reasoning used for
Theorem~\ref{res:soundII-1}. However, we need to be careful when selecting the
configurations. Specifically, we avoid the empty configuration because the big-step semantics does
not allow empty words. The adequacy theorem is then formulated as follows:
\begin{theorem}[Adequacy II]\label{res:adII-1}
  If $\emptyset \neq x \in \mathcal{C}(\mf{\com{C}})$ s.t.
  $\emptyset \stackrel{\omega}{\chain\ } x$ then $\exists \com{C'}$ s.t.
  $C \xtwoheadrightarrow{\omega} \com{C'}$.
\end{theorem}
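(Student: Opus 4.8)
The plan is to mirror the proof of Theorem~\ref{res:soundII-1} and argue by induction on the length $|\omega|$, using Adequacy~I (Lemma~\ref{res:adI-1}) as the single-step engine in place of Soundness~I, together with the $n$-step rules of Figure~\ref{fig:op-nstep1}.

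For the base case $|\omega| = 1$, write $\omega = l$; then the hypothesis $\emptyset \cchain{l} x$ forces $x = \set{l}$ to be a configuration of $\mf{\com{C}}$, and down-closedness of $\set{l}$ immediately gives $l \in \init{\mf{\com{C}}}$. Lemma~\ref{res:adI-1} then produces $\com{C'} \in (\com{C} \cup \set{\checkmark})$ with $\com{C} \xrightarrow{l} \com{C'}$, and the first rule of Figure~\ref{fig:op-nstep1} upgrades this to $\com{C} \xtwoheadrightarrow{l} \com{C'}$, as required. For the inductive step, write $\omega = l : \omega'$ with $|\omega'| \geq 1$, so the chain factors as $\emptyset \cchain{l} \set{l} \cchain{\omega'} x$; again $l \in \init{\mf{\com{C}}}$, so Lemma~\ref{res:adI-1} gives $\com{C''}$ with $\com{C} \xrightarrow{l} \com{C''}$ and $\mf{\com{C}} \backslash l \equiv \mf{\com{C''}}$. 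The key intermediate claim is that deleting $l$ from the tail of the chain yields a covering chain $\emptyset \cchain{\omega'} (x \setminus \set{l})$ inside $\mf{\com{C}} \backslash l$: each prefix $x_i$ of the original chain is down-closed and conflict-free in $\mf{\com{C}}$ and contains $l$, so every tail event lies in $E \setminus \set{l}$ (being distinct from $l$ and, as a member of the conflict-free $x$, not in conflict with $l$), and since $\leq$ and $\#$ of $\mf{\com{C}} \backslash l$ are the mere restrictions of those of $\mf{\com{C}}$ (Definition~\ref{def:rem-init1}), each $x_i \setminus \set{l}$ is again a finite configuration and the steps remain covering steps. Transporting this chain along $\mf{\com{C}} \backslash l \equiv \mf{\com{C''}}$ gives $\emptyset \cchain{\omega'} y$ in $\mf{\com{C''}}$ with $y \neq \emptyset$ (as $|\omega'| \geq 1$); the induction hypothesis applied to $\com{C''}$ and $\omega'$ then yields $\com{C'}$ with $\com{C''} \xtwoheadrightarrow{\omega'} \com{C'}$, and the second rule of Figure~\ref{fig:op-nstep1} combines $\com{C} \xrightarrow{l} \com{C''}$ and $\com{C''} \xtwoheadrightarrow{\omega'} \com{C'}$ into $\com{C} \xtwoheadrightarrow{l : \omega'} \com{C'}$.

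The main obstacle is the pair of transport facts about covering chains, and in particular the step ``transport along $\es{E} \equiv \es{E}'$''. Since Definition~\ref{def:pes-sub1} identifies event structures only up to the `copies' $(e,x)$ introduced by sequential composition, one must check that the finite configurations and covering chains of $\mf{\com{C}} \backslash l$ correspond label-for-label to those of $\mf{\com{C''}}$; formally this amounts to observing that $\es{E}_1 \equiv \es{E}_2$ induces a label-preserving bijection between $\confES{\es{E}_1}$ and $\confES{\es{E}_2}$ that respects the covering relation. The $\backslash l$ side is comparatively routine -- essentially the same bookkeeping already carried out in the proof of Theorem~\ref{res:soundII-1} -- reduced to checking that restricting a down-closed, conflict-free set to $E \setminus \set{l}$ preserves all the closure properties needed of a configuration and of a covering step.
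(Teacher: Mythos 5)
Your proof is correct and follows essentially the same route as the paper's: induction on $|\omega|$, with Lemma~\ref{res:adI-1} supplying the first transition, Definition~\ref{def:rem-init1} (and the equivalence $\mf{\com{C}}\backslash l \equiv \mf{\com{C''}}$) used to push the residual covering chain $\omega'$ into the derivative, and the rules of Figure~\ref{fig:op-nstep1} recombining the steps. The only difference is that you spell out explicitly the transport of the covering chain across $\backslash l$ and across $\equiv$ (modulo copies), a step the paper's proof leaves implicit.
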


Theorem~\ref{res:soundII-1} states that every word $\omega$ derived from
the n-step semantics corresponds to a covering chain, and consequently
to a configuration. Conversely, Theorem~\ref{res:adII-1} indicates that if
we have a non-empty covering chain $\omega$, then there exists a
command $\com{C}'$ reachable from $\com{C}$ by executing $\omega$.

\subsection{Introducing cyclic behavior}\label{subsec:cyclic-beh-1}
We now introduce cyclic behavior to the language in Section~\ref{subsec:lan-1}.  In order to avoid
the introduction of the notion of state in the language, the cyclic behavior will be given by
recursion.  In that way, we do not need to associate the notion of state to a command in the
operational semantics.  We can just keep recording the actions that are being made by the program.

Another thing to have in mind is that with cyclic behavior we open the door to infinite
computations.  However, covering chains are only defined in finite sequence of words and infinite
configurations are odd, because we would need to define precisely what it means to be an infinite
configuration.  Hence, the words that we formed with the n-step will be always finite, despite the
possibility of them being infinite.  We can justify this by saying that we are only concerned on the
`interesting words', \ie\ those who are finite.

To introduce recursion we need to add some restrictions when forming programs, since we do not want
to allow commands like: $\rec{X}{\seq{X}{\com{a}}}$ and $\rec{X}{\seq{\seq{\com{a}}{X}}{\com{b}}}$.

Let $X \subseteq Var$, with $Var$ a set of variables.  The syntax is now given by:
\[
  \com{C} ::= \com{skip} \mid a \in Act \mid \seq{\com{C}}{\com{C}} \mid \nd{\com{C}}{\com{C}} \mid
  \conc{\com{C}}{\com{C}} \mid \rec{X}{\com{C}} \mid X
\]
where $\com{skip}$ is a command that does nothing; $\com{a}$ is an atomic action from a
pre-determined set of atomic actions, denoted as $Act$; $\seq{\com{C}}{\com{C}}$ is the usual
sequential composition of programs; $\conc{\com{C}}{\com{C}}$ is the parallel composition of
commands; $\nd{\com{C}}{\com{C}}$ represents the non-deterministic choice; $\rec{X}{\com{C}}$ is the
recursive command; and $X \in Var$ with $Var$ a set of variables. Furthermore, we only consider
closed commands, \ie\ commands in which every variable $X$ is bound by a recursion $\mu X$ and in
sequential composition we only allow recursion to occur at right.

We define the set of free-variables and bound-variables as follows:
\begin{table}[ht!]
  \centering
  \begin{tabular}{l|l}
    $\fvar{\com{skip}} = \emptyset$ & $\bvar{\com{skip}} = \emptyset$ \\
    $\fvar{\com{a}} = \emptyset$ & $\bvar{\com{a}} = \emptyset$ \\
    $\fvar{\seq{\com{C}_1}{\com{C}_2}} =
    \fvar{\com{C}_1} \cup \fvar{\com{C}_2}$ & $\bvar{\seq{\com{C}_1}{\com{C}_2}} =
                                              \bvar{\com{C}_1} \cup \bvar{\com{C}_2}$ \\
    $\fvar{\conc{\com{C}_1}{\com{C}_2}} =
    \fvar{\com{C}_1} \cup \fvar{\com{C}_2}$ & $\bvar{\conc{\com{C}_1}{\com{C}_2}} =
                                              \bvar{\com{C}_1} \cup \bvar{\com{C}_2}$ \\
    $\fvar{\nd{\com{C}_1}{\com{C}_2}} =
    \fvar{\com{C}_1} \cup \fvar{\com{C}_2}$ & $\bvar{\nd{\com{C}_1}{\com{C}_2}} =
                                              \bvar{\com{C}_1} \cup \bvar{\com{C}_2}$ \\
    $\fvar{X} = \set{X}$ & $\bvar{X} = \emptyset$ \\
    $\fvar{\rec{X}{\com{C}}} = \fvar{C} \backslash \set{X}$ & $\bvar{\rec{X}{\com{C}}} = \set{X} \cup \bvar{\com{C}}$
  \end{tabular}
\end{table}

We restrict the sequential composition to those whose free-variables and bound-variables on the left
are empty, \ie\ $\seq{\com{C}_1}{\com{C}_2}$ if $\fvar{\com{C}_1} = \emptyset = \bvar{\com{C}_1}$.
With this restriction we forbid program like $\rec{X}{\seq{X}{\com{a}}}$,
$\rec{X}{\seq{\seq{\com{a}}{X}}{\com{b}}}$ (with the condition $\fvar{\com{C}_1} = \emptyset$) and
$\seq{(\rec{X}{\seq{\com{a}}{X}})}{\com{b}}$ (with the condition $\bvar{\com{C}_1} = \emptyset$).
We want to forbid these kind of programs in sequential composition, because if $\com{C}_1$ never
terminates then the sequential composition never terminates. This is also a restriction that comes
from the fact that covering chains are only defined in finite sequences and that infinite
configurations are odd in event structures.  Note however that we allow programs like
$\rec{X}{\conc{X}{a}}$ and $\rec{X}{\nd{X}{a}}$, since they do not block the computation.

Before adding a rule for the recursion command to Figure~\ref{fig:op-small1}, we need to define what
it means to substitute a variable $X$ by a command $\com{C}'$.  Inspired by~\cite{hindley08}, we
define substitution as follows:
\begin{definition}\label{fig:substitution-1}
  Let $X \in Var$ and $\com{C}, \com{C}'$ be commands. Define $\com{C}[X \leftarrow \com{C}']$,
  where we substitute every free occurrence of $X$ in $\com{C}$ by $\com{C}'$ (while changing bound
  variables to avoid clashes) by induction on $\com{C}$ as follows:
  \begin{align*}
    & \com{skip}[X \leftarrow \com{C'}] = \com{skip} \\
    & \com{a}[X \leftarrow \com{C'}] = \com{a} \\
    & (\seq{\com{C}_1}{\com{C}_2})[X \leftarrow \com{C'}] =
      \seq{\com{C}_1}{(\com{C}_2 [X \leftarrow \com{C'}])} \\
    & (\conc{\com{C}_1}{\com{C}_2})[X \leftarrow \com{C'}] =
      \conc{\com{C}_1[X \leftarrow \com{C'}]}{\com{C}_2 [X \leftarrow \com{C'}]} \\
    & (\nd{\com{C}_1}{\com{C}_2})[X \leftarrow \com{C'}] =
      \nd{\com{C}_1[X \leftarrow \com{C'}]}{\com{C}_2 [X \leftarrow \com{C'}]} \\
    & X [X \leftarrow \com{C}] = \com{C} \\
    & (\rec{X}{\com{C}})[X \leftarrow \com{C'}] =
      \rec{X}{\com{C}} \\
    & (\rec{Y}{\com{C}})[X \leftarrow \com{C'}] =
      \rec{Y}{\com{C}[X \leftarrow \com{C'}]} \text{ if } X \neq Y \text{ and } Y \not\in FV(\com{C'})
  \end{align*}
\end{definition}

We then add to Figure~\ref{fig:op-small1} the following rule for the recursion command, inspired
by~\cite[Fig.~1]{lopez04}:
\[
  \infer{
    \rec{X}{\com{C}} \xlongrightarrow{l} \com{C}'[X \leftarrow \rec{X}{\com{C}}]
  }{
    \com{C} \xlongrightarrow{l} \com{C}'
  }
\]
This rule tells us that if an action $l$ triggers a transition from $\com{C}$ to $\com{C}'$, then
from $\rec{X}{\com{C}}$ we transit with the same label to $\com{C}'[X \leftarrow \rec{X}{\com{C}}]$,
where we substitute in $\com{C}'$ all the occurrences of $X$ by $\rec{X}{\com{C}}$.

\begin{example}\label{ex:loop-1}
  Figure~\ref{fig:ex-loop-1} illustrates the behavior of a non-deterministic toss coin, which
  produces a possibly empty sequence of $a$'s that finishes with $sk$. To understand this we observe
  that the initial program has two possible transitions: (1) we execute $sk$ that terminates the
  computation; (2) we execute $a$, and we transit to a command equal to the initial one in which we
  have two possible transitions again.
  \begin{figure}[ht!]
    \centering
    \begin{tikzpicture}
      \begin{pgfonlayer}{nodelayer}
        \node [style=command] (0) at (0, 0) {$\rec{X}{(\nd{\com{skip}}{\seq{\com{a}}{X}})}$};
        \node [style=command] (1) at (-1, -1.25) {$\checkmark$};
        \node [style=command] (2) at (1, -1.25) {$\rec{X}{(\nd{\com{skip}}{\seq{\com{a}}{X}})}$};
        \node [style=command] (3) at (0, -2.5) {$\checkmark$};
        \node [style=command] (4) at (2, -2.5) {$\rec{X}{(\nd{\com{skip}}{\seq{\com{a}}{X}})}$};
        \node [style=command] (5) at (1, -3.75) {$\checkmark$};
        \node [style=command] (6) at (3, -3.75) {$\rec{X}{(\nd{\com{skip}}{\seq{\com{a}}{X}})}$};
        \node [style=command] (7) at (2, -5) {$\checkmark$};
        \node [style=command] (8) at (4, -5) {$\ddots$};
        \node [style=none] (11) at (-0.75, -0.5) {$sk$};
        \node [style=none] (12) at (0.75, -0.5) {$a$};
        \node [style=none] (13) at (0.25, -1.75) {$sk$};
        \node [style=none] (14) at (1.75, -1.75) {$a$};
        \node [style=none] (15) at (1.25, -3) {$sk$};
        \node [style=none] (16) at (2.75, -3) {$a$};
        \node [style=none] (17) at (2.25, -4.25) {$sk$};
        \node [style=none] (18) at (3.75, -4.25) {$a$};
      \end{pgfonlayer}
      \begin{pgfonlayer}{edgelayer}
        \draw [style=op] (0) to (1);
        \draw [style=op] (0) to (2);
        \draw [style=op] (2) to (3);
        \draw [style=op] (2) to (4);
        \draw [style=op] (4) to (5);
        \draw [style=op] (4) to (6);
        \draw [style=op] (6) to (7);
        \draw [style=op] (6) to (8);
      \end{pgfonlayer}
    \end{tikzpicture}
    \caption{Unrolling the execution of $\rec{X}{(\nd{\com{skip}}{\seq{\com{a}}{X}})}$}
    \label{fig:ex-loop-1}
  \end{figure}
\end{example}

On the event structure side, we want to use the Knaster-Tarski Theorem to build the least-fix point.
To define it, we will  use an order that does not ignore copies,  differently from what happens with
Definition~\ref{def:pes-sub1}.
\begin{definition}\label{def:pes-fix-order-1}
  Let $\es{E_1} = \pes{E_1}{\leq_1}{\#_1}$ and $\es{E_2} = \pes{E_2}{\leq_2}{\#_2}$ be event
  structures.  Say $\es{E_1} \trianglelefteq \es{E_2}$ if:
  \begin{align*}
    & E_1 \subseteq E_2 \\
    & \forall e,e'\ .\ e \leq_1 e'
      \Leftrightarrow
      e, e' \in E_1 \wedge e \leq_2 e' \\
    & \forall e,e'\ .\ e \#_1 e'
      \Leftrightarrow
      e, e' \in E_1 \wedge e \#_2 e'
  \end{align*}
\end{definition}

We now start to check the conditions of the Knaster-Tarski theorem. We first verify that
Definition~\ref{def:pes-fix-order-1} is a partial order with a least element.
\begin{lemma}\label{lem:po-1}
  $\trianglelefteq$ is a partial order.
\end{lemma}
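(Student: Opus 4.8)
The plan is to check the three defining properties of a partial order---reflexivity, antisymmetry, transitivity---directly against Definition~\ref{def:pes-fix-order-1}. The single structural observation that makes all three checks routine is that in any event structure $\pes{E}{\leq}{\#}$ the relations $\leq$ and $\#$ live on the carrier, i.e.\ $\leq, \# \subseteq E \times E$; consequently $e \leq e'$ already forces $e, e' \in E$. This is exactly what collapses the biconditionals appearing in the definition of $\trianglelefteq$ into honest equalities of relations whenever the carriers coincide.

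First I would do reflexivity: for $\es{E} = \pes{E}{\leq}{\#}$, $E \subseteq E$ is immediate, and $e \leq e' \Leftrightarrow e, e' \in E \wedge e \leq e'$ holds because $\leq \subseteq E \times E$; the same for $\#$. So $\es{E} \trianglelefteq \es{E}$.

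Next, antisymmetry: assuming $\es{E}_1 \trianglelefteq \es{E}_2$ and $\es{E}_2 \trianglelefteq \es{E}_1$, the two carrier inclusions give $E_1 = E_2$; then the defining equivalence of $\es{E}_1 \trianglelefteq \es{E}_2$ reads $e \leq_1 e' \Leftrightarrow e, e' \in E_1 \wedge e \leq_2 e'$, and since $E_1 = E_2$ and $\leq_2 \subseteq E_2 \times E_2$ the right-hand side is just $e \leq_2 e'$, so $\leq_1 = \leq_2$; the same reasoning gives $\#_1 = \#_2$. Hence $\es{E}_1 = \es{E}_2$.

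Finally, transitivity: from $\es{E}_1 \trianglelefteq \es{E}_2 \trianglelefteq \es{E}_3$ I would get $E_1 \subseteq E_3$ by transitivity of $\subseteq$, then chase the two directions of the order condition. Forward: $e \leq_1 e'$ gives $e, e' \in E_1 \subseteq E_2$ and $e \leq_2 e'$, hence $e \leq_3 e'$ by the $\es{E}_2 \trianglelefteq \es{E}_3$ clause. Backward: $e, e' \in E_1 \subseteq E_2$ together with $e \leq_3 e'$ gives $e \leq_2 e'$, hence $e \leq_1 e'$. The conflict case is word-for-word identical. This yields $\es{E}_1 \trianglelefteq \es{E}_3$. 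There is no genuine obstacle here; the only thing to be careful about is to invoke, at each step, that $\leq$ and $\#$ are relations on the carrier---that is precisely what makes the ``iff''s behave.
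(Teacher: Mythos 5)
Your proposal is correct and follows essentially the same route as the paper: a direct verification of reflexivity, transitivity, and antisymmetry against Definition~\ref{def:pes-fix-order-1}, using the fact that $\leq$ and $\#$ are relations on the carrier to discharge the membership side of each biconditional. The only difference is presentational — you make that carrier observation explicit up front, whereas the paper uses it implicitly (``Clearly $e, e' \in E_1$'').
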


\begin{lemma}\label{lem:po-least-elem-1}
  Define $\bot = \pes{\emptyset}{\emptyset}{\emptyset}$.  $\bot$ is the least element of
  $\trianglelefteq$.
\end{lemma}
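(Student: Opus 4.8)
The plan is to unfold Definition~\ref{def:pes-fix-order-1} with $\es{E}_1 = \bot$ and check each of the three clauses against an arbitrary event structure $\es{E} = \pes{E}{\leq}{\#}$. First I would note that $\bot$ is itself a legitimate event structure: both axioms of Definition~\ref{def:pes} (finite downward closure, heredity of conflict) are universally quantified over elements of the empty set, hence vacuously true. So the statement `$\bot$ is the least element of $\trianglelefteq$' makes sense within the partial order established in Lemma~\ref{lem:po-1}.

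Next I would verify $\bot \trianglelefteq \es{E}$ for every event structure $\es{E}$. The first clause, $\emptyset \subseteq E$, holds since the empty set is a subset of any set. The second clause, $\forall e,e'\,.\, e \leq_\emptyset e' \Leftrightarrow e,e' \in \emptyset \wedge e \leq e'$, holds because both sides of the biconditional are always false: the left because $\leq_\emptyset = \emptyset$, the right because $e,e' \in \emptyset$ is impossible. The third clause for $\#$ is identical. Hence $\bot \trianglelefteq \es{E}$.

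Finally, to conclude that $\bot$ is genuinely the \emph{least} element (not merely a minimal one), I would invoke antisymmetry of $\trianglelefteq$ from Lemma~\ref{lem:po-1}: since $\bot$ lies below every event structure, it is below every other lower bound candidate, so it is the unique least element.

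There is essentially no obstacle here; the only thing to be careful about is phrasing the vacuous truth of the two relational biconditionals correctly, rather than glossing over them, since the direction $\Leftarrow$ is the one that could conceivably fail for a larger $\es{E}_1$. I would therefore spell out explicitly that $e,e' \in \emptyset$ is unsatisfiable, making both directions trivial.
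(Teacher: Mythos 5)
Your proposal is correct and follows essentially the same route as the paper's proof: check that $\bot$ is itself a (vacuously valid) event structure, note $\emptyset \subseteq E$, and observe that both relational biconditionals hold because both sides are unsatisfiable. The closing appeal to antisymmetry is harmless but unnecessary, since showing $\bot \trianglelefteq \es{E}$ for every $\es{E}$ already establishes that $\bot$ is the least element.
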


We then define what it means to be a least upper bound in terms of event structures, demonstrate
that it is an event structure, and prove that it is indeed a least upper bound of an $\omega$-chain
of event structures.
\begin{definition}\label{def:lub-1}
  Let $\es{E}_1 \trianglelefteq \dots \trianglelefteq \es{E}_n \trianglelefteq \dots$ be a
  $\omega$-chain. Let $\es{E}^{\omega} = \pes{E^\omega}{\leq^\omega}{\#^\omega}$ be its least upper
  bound where:
  \begin{itemize}
  \item $E^\omega = \cup_{n \in \omega} E_n$
  \item $\leq^\omega = \cup_{n \in \omega} \leq_n$
  \item $\#^\omega = \cup_{n \in \omega} \#_n$
  \end{itemize}
\end{definition}

\begin{lemma}\label{lem:lub-es-1}
  $\es{E}^\omega$ is an event structure.
\end{lemma}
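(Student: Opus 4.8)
The plan is to check directly that $\es{E}^\omega = \pes{E^\omega}{\leq^\omega}{\#^\omega}$ satisfies the two conditions of Definition~\ref{def:pes} (and, along the way, that $\leq^\omega$ is a partial order and $\#^\omega$ is symmetric and irreflexive). The workhorse will be a \emph{coherence} property of a $\trianglelefteq$-chain: if $e,e' \in E_n$ and $m \ge n$, then $e \leq_n e' \Leftrightarrow e \leq_m e'$ and $e \#_n e' \Leftrightarrow e \#_m e'$ --- immediate from Definition~\ref{def:pes-fix-order-1} applied to $\es{E}_n \trianglelefteq \es{E}_m$ and composed along the chain. Consequently, whenever $e \leq^\omega e'$ (resp.\ $e \#^\omega e'$) there is an index $n$ with $e,e' \in E_n$ and $e \leq_n e'$ (resp.\ $e \#_n e'$), and this witnessing index may be enlarged at will.

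Granting this, the order-theoretic facts and conflict heredity are routine. Reflexivity of $\leq^\omega$ holds since each $e \in E^\omega$ sits in some $E_n$ with $e \leq_n e$; for transitivity, from $e \leq^\omega e' \leq^\omega e''$ I would pick a single index $k$ with $e,e',e'' \in E_k$ and both relations holding in $\es{E}_k$, whence $e \leq_k e''$ because $\es{E}_k$ is an event structure, so $e \leq^\omega e''$; antisymmetry is the same argument, and symmetry/irreflexivity of $\#^\omega$ are inherited pointwise from the $\#_n$. Similarly, if $e \#^\omega e' \leq^\omega e''$, choosing $k$ large enough that $e \#_k e'$ and $e' \leq_k e''$ gives $e \#_k e''$ in the event structure $\es{E}_k$, hence $e \#^\omega e''$.

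The delicate condition is finiteness of $\{e' \mid e' \leq^\omega e\}$ for $e \in E^\omega$. Fixing $n$ with $e \in E_n$, I would prove $\{e' \in E^\omega \mid e' \leq^\omega e\} = \{e' \in E_n \mid e' \leq_n e\}$; the right-hand side is finite because $\es{E}_n$ is an event structure, which closes the proof. The inclusion $\supseteq$ is immediate from $E_n \subseteq E^\omega$ and the definition of $\leq^\omega$. For $\subseteq$, given $e' \leq^\omega e$ take $m \ge n$ with $e,e' \in E_m$ and $e' \leq_m e$; it then suffices that $e' \in E_n$, after which coherence upgrades $e' \leq_m e$ to $e' \leq_n e$. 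This last implication --- that moving from $\es{E}_n$ to a later $\es{E}_m$ never inserts a new event below an old one, i.e.\ that $E_n$ is down-closed in $\es{E}_m$ --- is the step I expect to be the main obstacle: the clauses of Definition~\ref{def:pes-fix-order-1} do not by themselves rule out adding events beneath existing ones, so it has to be read off from the specific chains at play, where every approximant arises from $\bot$ via $\seq{-}{-}$, $\nd{-}{-}$, $\conc{-}{-}$ and each fixpoint step introduces fresh events only above, or in fresh conflict with, those already present. With that in hand, all the conditions hold and $\es{E}^\omega$ is an event structure.
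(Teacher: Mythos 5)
Your treatment of conflict heredity and of the order-theoretic properties of $\leq^\omega$ is correct and is essentially the paper's argument: pick a witnessing index for each relation instance, pass to a common upper index $k$ using the coherence of the chain (which you state and use explicitly, where the paper only implicitly assumes a single $n$ witnessing $e \#_n e' \leq_n e''$), and conclude inside the event structure $\es{E}_k$.

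On finiteness you have put your finger on a real problem, and one that the paper's own proof does not resolve. The paper argues that each $e'$ with $e' \leq^\omega e$ admits an $n$ with $e' \leq_n e$, and that $\set{e'' \mid e'' \leq_n e}$ is finite; but the witnessing $n$ depends on $e'$, so this only exhibits $\set{e' \mid e' \leq^\omega e}$ as a countable union of finite sets. Your observation that one needs $E_n$ to be down-closed in every later $E_m$ is exactly right, and Definition~\ref{def:pes-fix-order-1} does not guarantee it: take $E_n = \set{e, f_1, \dots, f_n}$ with $f_i \leq_n e$ for all $i \leq n$ and no other non-trivial causality or conflict. Each $\es{E}_n$ is a (finite) event structure, the sequence is a $\trianglelefteq$-chain, yet $\set{e' \mid e' \leq^\omega e}$ is infinite, so $\es{E}^\omega$ is not an event structure. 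Hence the lemma as stated is false for arbitrary $\omega$-chains, and your proposed repair --- establishing down-closure for the particular chains $\bot \trianglelefteq \Gamma(\bot) \trianglelefteq \dots$ produced by the denotational semantics, where new events only ever appear above, or in fresh conflict with, existing ones --- is the right one. You leave that verification open, so your proof is not complete as written; but it is more careful than the paper's, in that it isolates the one step that genuinely needs an argument (or an additional down-closure axiom on $\trianglelefteq$) rather than silently eliding it.
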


\begin{lemma}\label{lem:lub-1}
  Let $\es{E}_1 \trianglelefteq \dots \trianglelefteq \es{E}_n \trianglelefteq \dots$ be a
  $\omega$-chain. Then $\es{E}^\omega$ is its least upper bound.
\end{lemma}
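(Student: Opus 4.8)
The plan is to split the proof into the two halves of ``least upper bound'': first that $\es{E}^\omega$ is an upper bound of the chain, and then that it sits below every other upper bound. That $\es{E}^\omega$ is a genuine event structure is already Lemma~\ref{lem:lub-es-1}, so I may freely treat it as an element of the partial order $(\trianglelefteq)$ established in Lemma~\ref{lem:po-1}.

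For the upper-bound half, fix $n$ and verify the three clauses of Definition~\ref{def:pes-fix-order-1} for $\es{E}_n \trianglelefteq \es{E}^\omega$. Inclusion $E_n \subseteq E^\omega$ is immediate from $E^\omega = \bigcup_m E_m$. For the causal clause, the forward implication follows because $\leq^\omega \supseteq \leq_n$, so $e \leq_n e'$ yields both $e,e' \in E_n$ and $e \leq^\omega e'$. For the converse, suppose $e,e' \in E_n$ and $e \leq^\omega e'$; then $e \leq_m e'$ for some $m$, and since the $\omega$-chain is totally ordered, transitivity of $\trianglelefteq$ (Lemma~\ref{lem:po-1}) gives either $\es{E}_m \trianglelefteq \es{E}_n$ or $\es{E}_n \trianglelefteq \es{E}_m$. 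In the first case $e \leq_m e'$ already forces $e \leq_n e'$; in the second, $e,e' \in E_n$ together with $e \leq_m e'$ forces $e \leq_n e'$ by the definition of $\trianglelefteq$. The conflict clause is word-for-word the same with $\#$ in place of $\leq$.

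For leastness, let $\es{F} = \pes{F}{\leq_F}{\#_F}$ be any event structure with $\es{E}_n \trianglelefteq \es{F}$ for every $n$, and check $\es{E}^\omega \trianglelefteq \es{F}$. Since each $E_n \subseteq F$, we get $E^\omega = \bigcup_n E_n \subseteq F$. For the causal clause, forward: $e \leq^\omega e'$ means $e \leq_n e'$ for some $n$, whence $e,e' \in E_n \subseteq E^\omega$ and $e \leq_F e'$ using $\es{E}_n \trianglelefteq \es{F}$. Converse: if $e,e' \in E^\omega$ and $e \leq_F e'$, choose indices $n_1,n_2$ with $e \in E_{n_1}$, $e' \in E_{n_2}$ and set $n = \max(n_1,n_2)$, so that $e,e' \in E_n$ by $E_{n_1},E_{n_2} \subseteq E_n$; then $\es{E}_n \trianglelefteq \es{F}$ turns $e,e'\in E_n \wedge e \leq_F e'$ into $e \leq_n e'$, hence $e \leq^\omega e'$. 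The conflict clause is identical. The only point that needs any care is the index bookkeeping in the two ``converse'' directions: one must use that a totally ordered $\omega$-chain is directed, so that any finite set of events already lives inside a single $\es{E}_n$ and any single relation instance is witnessed at a single index; once those indices are aligned, every step is a direct appeal to Definition~\ref{def:pes-fix-order-1}, and I do not expect a real obstacle beyond this.
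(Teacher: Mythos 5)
Your proof is correct and follows essentially the same route as the paper: verify $\es{E}_n \trianglelefteq \es{E}^\omega$ for each $n$ and then check $\es{E}^\omega \trianglelefteq \es{F}$ for an arbitrary upper bound $\es{F}$, clause by clause against Definition~\ref{def:pes-fix-order-1}. You are in fact somewhat more careful than the paper, which dismisses the upper-bound half with ``it follows directly''; your explicit handling of the index bookkeeping (comparing the witnessing index $m$ with $n$ in the chain, and taking $n = \max(n_1,n_2)$ via directedness in the converse directions) is exactly the detail that justifies that claim.
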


Now we define what it means for an operator in event structures to be monotone and continuous with
respect to Definition~\ref{def:pes-fix-order-1}. Intuitively, an operator is monotone if it
preserves the ordering. For example, for the parallel composition we say that it is monotone if
$\es{E}_1 \trianglelefteq \es{E}'_1$ and $\es{E}_2 \trianglelefteq \es{E}'_2$ then
$\conc{\es{E}_1}{\es{E}_2} \trianglelefteq \conc{\es{E}'_1}{\es{E}'_2}$. An operator is continuous
if the action of the operator on the least upper bounds of event structures is the same as the least
upper bound of the action of the operator on the event structures. For example, for the parallel
composition we say that it is continuous if
$\conc{\bigsqcup_n \es{E}_{n}}{\bigsqcup_m \es{E}_{m}} = \bigsqcup_{n,m}
(\conc{\es{E}_n}{\es{E}_m})$. Formally~\cite[Definition 2.8]{winskel82}:
\begin{definition}\label{def:cont-1}
  Let $op$ be an $n$-ary operation on the class of event structures $\mathbb{E}$.  Say $op$ is
  \textit{monotonic} iff when for event structures we have
 \begin{align*}
   \es{E}_1 \trianglelefteq \es{E}'_1, \dots, \es{E}_n \trianglelefteq \es{E}'_n \text{ then }
   op(\es{E}_1, \dots, \es{E}_n) \trianglelefteq op(\es{E}'_1, \dots, \es{E}'_n)
 \end{align*}
 
 Say $op$ is continuous iff for all countable chains
 \begin{align*}
   & \es{E}_{11} \trianglelefteq \es{E}_{12} \trianglelefteq \dots
     \trianglelefteq \es{E}_{1i} \trianglelefteq \dots \\
  & \vdots \\
   & \es{E}_{n1} \trianglelefteq \es{E}_{n2} \trianglelefteq \dots
     \trianglelefteq \es{E}_{ni} \trianglelefteq \dots
 \end{align*}
 we have
 \begin{align*}
   op \left( \bigsqcup_i \es{E}_{1i}, \dots, \bigsqcup_i \es{E}_{ni} \right)
   =
   \bigsqcup_i op(\es{E}_{1i}, \dots, \es{E}_{ni})
 \end{align*}
 where $\bigsqcup$ denotes the least upper bound w.r.t $\trianglelefteq$.
\end{definition}

To the previous definition, we add that an operation is continuous iff it is continuous in each
argument separately~\cite{winskel82}.

We now show that the operators of the language are monotone with respect to
Definition~\ref{def:pes-fix-order-1}.  Recall that the sequential composition is only right monotone
because of the restriction imposed in the syntax, which requires the free-variables and
bounded-variables of the left command to be empty.
\begin{lemma}\label{lem:seq-fix-mono-1}
  Let $\es{E}, \es{E}_1, \es{E}_2$ be event structures.  If $\es{E}_1 \trianglelefteq \es{E}_2$ then
  $\seq{\es{E}}{\es{E}_1} \trianglelefteq \seq{\es{E}}{\es{E}_2}$.
\end{lemma}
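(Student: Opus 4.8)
The plan is to verify the three clauses of Definition~\ref{def:pes-fix-order-1} directly for $\seq{\es{E}}{\es{E}_1}$ and $\seq{\es{E}}{\es{E}_2}$. Write $\es{E} = \pes{E_0}{\leq_0}{\#_0}$ and $\es{E}_i = \pes{E_i}{\leq_i}{\#_i}$, and unfold $\seq{\es{E}}{\es{E}_i} = \pes{F_i}{\preceq_i}{\frown_i}$ through Definition~\ref{def:pes-seq1}, so that $F_i = E_0 \uplus (E_i \times \confmax{\es{E}})$. The observation that drives the proof --- and the reason only right-monotonicity holds --- is that the left operand $\es{E}$ is shared, so the index set $\confmax{\es{E}}$ of the copies is the same on both sides; all the variation is confined to the factor $E_i$ and the relations $\leq_i, \#_i$, which are pinned down by the hypothesis $\es{E}_1 \trianglelefteq \es{E}_2$. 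I will also use repeatedly the bookkeeping fact that a copied event $(e,x)$ lies in $F_1$ exactly when $e \in E_1$.

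For the events, $E_1 \subseteq E_2$ gives $E_1 \times \confmax{\es{E}} \subseteq E_2 \times \confmax{\es{E}}$ and the $E_0$ summand is identical, so $F_1 \subseteq F_2$. For the causal order I will prove $a \preceq_1 b \iff a, b \in F_1 \wedge a \preceq_2 b$ by going through the three clauses defining $\preceq$ in Definition~\ref{def:pes-seq1}: the clause inherited from $\leq_0$ is literally identical on both sides; for a pair $(e,x) \preceq (e',x)$ coming from $\leq_i$, membership in $F_1$ forces $e, e' \in E_1$ and then $e \leq_1 e' \iff e \leq_2 e'$ is exactly what $\es{E}_1 \trianglelefteq \es{E}_2$ supplies; and for a pair $e \preceq (e',x)$ with $e \in x$, it sits inside $F_1$ iff $(e',x) \in F_1$ iff $e' \in E_1$, and since $x$ ranges over the same $\confmax{\es{E}}$ in both composites the clause fires for $\seq{\es{E}}{\es{E}_1}$ precisely when it fires for $\seq{\es{E}}{\es{E}_2}$ with both endpoints in $F_1$.

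The conflict relation is treated with an analogous two-case split on the definition of $\frown$. The clause $(e,x) \frown (e',x)$ from $\#_i$ goes exactly as the corresponding order clause, using $e, e' \in E_1$ and $\es{E}_1 \trianglelefteq \es{E}_2$. The inherited-conflict clause, $\exists (a_0 \preceq a,\ b_0 \preceq b)$ with $a_0 \#_0 b_0$, reduces to the causal-order equivalence just proved: $a_0, b_0 \in E_0 \subseteq F_1$, so from $a_0 \preceq_2 a$ and $a_0, a \in F_1$ we get $a_0 \preceq_1 a$ and symmetrically for $b$, while $\#_0$ is untouched since $\es{E}$ does not change; the converse direction just uses $\preceq_1 \subseteq \preceq_2$. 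I do not expect a real obstacle --- the argument is pure bookkeeping --- but the step requiring care is keeping the three clauses of sequential composition separated and, whenever a copy $(e,x)$ appears, immediately deducing $e \in E_1$ so that the $\trianglelefteq$-hypothesis can be invoked; this is exactly where a left-hand argument would fail, since $\trianglelefteq$ gives no control over how $\confmax{\es{E}_1}$ compares with $\confmax{\es{E}_2}$.
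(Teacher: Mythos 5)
Your proposal is correct and follows essentially the same route as the paper's proof: a direct verification of the three clauses of Definition~\ref{def:pes-fix-order-1}, with the event inclusion reduced to $E_1 \subseteq E_2$ crossed with the shared $\confmax{\es{E}}$, and the order and conflict equivalences handled by case analysis on the clauses of Definition~\ref{def:pes-seq1}, feeding the inherited-conflict case back into the already-established order equivalence. Your framing remark — that the shared left operand is exactly what makes the copy index $\confmax{\es{E}}$ stable, and hence why only right-monotonicity is available — matches the paper's own motivation for restricting to this one-sided statement.
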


\begin{lemma}\label{lem:conc-fix-mono-1}
  Let $\es{E_1}, \es{E}'_1, \es{E_2}, \es{E}'_2$ be event structures.  If
  $\es{E_1} \trianglelefteq \es{E'_1}$ and $\es{E_2} \trianglelefteq \es{E'_2}$ then
  $\conc{\es{E_1}}{\es{E_2}} \trianglelefteq \conc{\es{E'_1}}{\es{E'_2}}$.
\end{lemma}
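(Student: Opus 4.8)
The plan is to unfold both sides of the claimed inequality using Definition~\ref{def:pes-conc1} and then verify the three conditions of Definition~\ref{def:pes-fix-order-1} directly, mirroring the structure of the proof of Lemma~\ref{lem:seq-fix-mono-1} but in a simpler setting, since parallel composition involves no `copies' and no cross-conflict. Write $\es{E}_1 = \pes{E_1}{\leq_1}{\#_1}$, $\es{E}'_1 = \pes{E'_1}{\leq'_1}{\#'_1}$, $\es{E}_2 = \pes{E_2}{\leq_2}{\#_2}$, $\es{E}'_2 = \pes{E'_2}{\leq'_2}{\#'_2}$, $\conc{\es{E}_1}{\es{E}_2} = \pes{E}{\leq}{\#}$, and $\conc{\es{E}'_1}{\es{E}'_2} = \pes{E'}{\leq'}{\#'}$, with $\es{E}_1 \trianglelefteq \es{E}'_1$ and $\es{E}_2 \trianglelefteq \es{E}'_2$.

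For the inclusion of events, by Definition~\ref{def:pes-conc1} we have $E = E_1 \uplus E_2$ and $E' = E'_1 \uplus E'_2$; from the hypotheses $E_1 \subseteq E'_1$ and $E_2 \subseteq E'_2$, and since the disjoint union is monotone in both arguments, $E \subseteq E'$ follows immediately.

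For the causal condition I would argue both directions by case analysis on which summand the relation comes from. In the forward direction, $e \leq e'$ means $e \leq_1 e'$ or $e \leq_2 e'$; in the first case $e, e' \in E_1 \subseteq E'_1$, hence $e, e' \in E$ and, by $\es{E}_1 \trianglelefteq \es{E}'_1$, $e \leq'_1 e'$, so $e \leq' e'$; the second case is symmetric. In the backward direction, assume $e, e' \in E$ and $e \leq' e'$: the latter comes from $\leq'_1$ or $\leq'_2$; suppose it comes from $\leq'_1$, so that $e, e'$ lie in the left component of $E' = E'_1 \uplus E'_2$; since $e, e' \in E = E_1 \uplus E_2$ and the two disjoint unions agree on which side each element sits, $e, e' \in E_1$, and then $\es{E}_1 \trianglelefteq \es{E}'_1$ gives $e \leq_1 e'$, hence $e \leq e'$; the case from $\leq'_2$ is symmetric. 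The conflict condition is proved the same way, the only observation being that in $\conc{\es{E}_1}{\es{E}_2}$ the conflict relation is exactly $\#_1 \uplus \#_2$, with no cross-conflict between $E_1$ and $E_2$, so the identical case split on the two summands applies verbatim.

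The only point requiring any care is the disjoint-union bookkeeping in the backward directions: one must make sure that an event known to lie in the left (resp.\ right) component of $E'$ and also in $E$ is actually in $E_1$ (resp.\ $E_2$), which is exactly where the hypotheses $E_1 \subseteq E'_1$ and $E_2 \subseteq E'_2$ are used together with the fact that $\uplus$ records the side. Apart from that, everything is a routine unfolding of Definitions~\ref{def:pes-conc1} and~\ref{def:pes-fix-order-1}, and I expect no genuine obstacle.
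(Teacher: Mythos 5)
Your proof is correct and is exactly the detailed unfolding that the paper compresses into ``It follows directly from Definition~\ref{def:pes-conc1}'': you check the three conditions of Definition~\ref{def:pes-fix-order-1} componentwise, using that $\uplus$ records the side so that the backward directions land in the right summand. No discrepancy with the paper's (terse) argument.
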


\begin{lemma}\label{lem:nd-fix-mono-1}
  Let $\es{E_1}, \es{E}'_1, \es{E_2}, \es{E}'_2$ be event structures.  If
  $\es{E_1} \trianglelefteq \es{E'_1}$ and $\es{E_2} \trianglelefteq \es{E'_2}$ then
  $\nd{\es{E_1}}{\es{E_2}} \trianglelefteq \nd{\es{E'_1}}{\es{E'_2}}$.
\end{lemma}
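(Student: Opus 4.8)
The plan is to follow the template of Lemma~\ref{lem:conc-fix-mono-1}: unfold both sides through Definition~\ref{def:pes-nd1} and verify the three clauses of Definition~\ref{def:pes-fix-order-1} in turn. Writing $\es{E}_i = \pes{E_i}{\leq_i}{\#_i}$, $\es{E}'_i = \pes{E'_i}{\leq'_i}{\#'_i}$, $\nd{\es{E}_1}{\es{E}_2} = \pes{E}{\leq}{\#}$, and $\nd{\es{E}'_1}{\es{E}'_2} = \pes{E'}{\leq'}{\#'}$, we have $E = E_1 \uplus E_2$, $E' = E'_1 \uplus E'_2$, $\leq\,=\,\leq_1 \uplus \leq_2$, $\leq'\,=\,\leq'_1 \uplus \leq'_2$, and the conflict relations carry the extra cross terms. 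The inclusion $E \subseteq E'$ is immediate from $E_1 \subseteq E'_1$ and $E_2 \subseteq E'_2$ (the first clauses of $\es{E}_1 \trianglelefteq \es{E}'_1$ and $\es{E}_2 \trianglelefteq \es{E}'_2$), since the disjoint-union tags are preserved by these inclusions. The clause for the causal relation is identical to the parallel case, because $\leq$ contains no cross term: a pair $e \leq e'$ sits in $\leq_1$ or in $\leq_2$, its endpoints carry the corresponding tag, and the equivalence $e \leq_i e' \Leftrightarrow e, e' \in E_i \wedge e \leq'_i e'$ supplied by $\es{E}_i \trianglelefteq \es{E}'_i$ transports it to $\leq'$ and back.

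The only genuinely new ingredient is the conflict relation, on account of the cross-conflict set $\set{e_1 \# e_2 \mid e_1 \in E_1,\, e_2 \in E_2}$. For the $\Rightarrow$ direction I would split a pair $e \# e'$ into three cases: $e \#_1 e'$, $e \#_2 e'$, or $e$ and $e'$ lie (in some order) one in $E_1$ and one in $E_2$. The first two are dispatched exactly as for $\leq$ using $\es{E}_i \trianglelefteq \es{E}'_i$; in the third case $e \in E_1 \subseteq E'_1$ and $e' \in E_2 \subseteq E'_2$ (or symmetrically), so $e \#' e'$ by the cross-conflict of $\nd{\es{E}'_1}{\es{E}'_2}$. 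For the $\Leftarrow$ direction, assume $e, e' \in E$ and $e \#' e'$, and split on which piece of $\#'$ the pair belongs to: if $e \#'_1 e'$ then $e, e'$ carry tag $0$ in $E'$, hence also in $E$, hence $e, e' \in E_1$, and $\es{E}_1 \trianglelefteq \es{E}'_1$ gives $e \#_1 e'$, so $e \# e'$; the $\#'_2$ case is symmetric; and if the pair belongs to the cross-conflict of $E'$, then $e$ and $e'$ carry opposite tags, which together with $e, e' \in E$ forces $e \in E_1$ and $e' \in E_2$ (or symmetrically), whence $e \# e'$ by Definition~\ref{def:pes-nd1}.

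I do not expect a real obstacle: the whole argument is bookkeeping with disjoint-union tags, and the single point that deserves a line of care is the observation that an event which lies in $E = E_1 \uplus E_2$ and is simultaneously involved in $\#'_1$ (respectively $\#'_2$, respectively the cross-conflict of $E'$) must already lie in $E_1$ (respectively $E_2$, respectively have the matching side), which is precisely because the inclusions $E_i \subseteq E'_i$ preserve the disjoint-union tags. With that in hand the claim follows from Definition~\ref{def:pes-nd1} essentially as directly as Lemma~\ref{lem:conc-fix-mono-1} follows from Definition~\ref{def:pes-conc1}, the difference being only the small extra case analysis triggered by the cross-conflict.
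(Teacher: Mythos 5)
Your proof is correct and takes the same route as the paper, which simply asserts that the lemma follows directly from Definition~\ref{def:pes-nd1}; you have merely spelled out the bookkeeping, including the one case the paper leaves implicit (the cross-conflict set), and your handling of it is right.
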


The following lemma from~\cite[Lemma~2.9]{winskel82} is helpful to show that the sequential, the
concurrent, and the non-deterministic operators are continuous. Intuitively, a function is
continuous on event structures if it is monotonic and act continuously on the component sets of
events ordered by inclusion.
\begin{lemma}\label{lem:cont-1}
  Let $op$ be a unary operation on the class of event structures $\mathbb{E}$.  Then $op$ is
  continuous iff
 \begin{enumerate}
  \item $op$ is monotonic
  \item if $\es{E}_1 \trianglelefteq \dots \trianglelefteq \es{E}_n \trianglelefteq \dots$ is an
    $\omega$-chain then each event of $op(\bigsqcup_n \es{E}_n)$ is an event of
    $\bigsqcup_n op (\es{E}_n)$.
 \end{enumerate}
\end{lemma}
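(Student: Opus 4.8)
The plan is to prove the two implications of the \enquote{iff} separately; the practically useful one for later is $(2)\Rightarrow$, so I will give that direction in most detail, while the converse is short. Throughout I write $\bot$-free bookkeeping only, relying on Definition~\ref{def:pes-fix-order-1} (the order $\trianglelefteq$), Definition~\ref{def:lub-1} (the least upper bound $\bigsqcup$ of an $\omega$-chain, whose event set is the union of the component event sets), and Lemmas~\ref{lem:lub-es-1} and~\ref{lem:lub-1} (that $\bigsqcup$ of an $\omega$-chain is an event structure and is indeed the least upper bound).

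For the forward direction, assume $op$ is continuous. To get monotonicity, given $\es{E}_1 \trianglelefteq \es{E}_2$ I apply the continuity equation to the eventually constant $\omega$-chain $\es{E}_1 \trianglelefteq \es{E}_2 \trianglelefteq \es{E}_2 \trianglelefteq \dots$, whose least upper bound is $\es{E}_2$; this yields $op(\es{E}_2) = \bigsqcup_i op(\es{E}_{1i})$, and since the right-hand side is a least upper bound in the sense of Definition~\ref{def:lub-1} the sequence $op(\es{E}_1), op(\es{E}_2), op(\es{E}_2), \dots$ must itself be an $\omega$-chain, forcing $op(\es{E}_1) \trianglelefteq op(\es{E}_2)$. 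For condition (2): given an $\omega$-chain $\es{E}_1 \trianglelefteq \es{E}_2 \trianglelefteq \dots$, continuity gives the identity of event structures $op(\bigsqcup_n \es{E}_n) = \bigsqcup_n op(\es{E}_n)$, and by Definition~\ref{def:lub-1} the event set of $\bigsqcup_n op(\es{E}_n)$ is the union of the event sets of the $op(\es{E}_n)$; hence every event of $op(\bigsqcup_n \es{E}_n)$ is an event of some $op(\es{E}_n)$, so of $\bigsqcup_n op(\es{E}_n)$.

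For the backward direction, assume (1) and (2) and fix an $\omega$-chain $\es{E}_1 \trianglelefteq \es{E}_2 \trianglelefteq \dots$. By monotonicity the images form an $\omega$-chain $op(\es{E}_1) \trianglelefteq op(\es{E}_2) \trianglelefteq \dots$, so by Lemmas~\ref{lem:lub-es-1} and~\ref{lem:lub-1} the least upper bound $\bigsqcup_n op(\es{E}_n)$ exists, is an event structure, and has as events the union of the event sets of the $op(\es{E}_n)$. Since $\es{E}_n \trianglelefteq \bigsqcup_m \es{E}_m$ for all $n$ (Lemma~\ref{lem:lub-1}), monotonicity gives $op(\es{E}_n) \trianglelefteq op(\bigsqcup_m \es{E}_m)$ for all $n$, so $op(\bigsqcup_m \es{E}_m)$ is an upper bound of the chain $(op(\es{E}_n))_n$ and therefore $\bigsqcup_n op(\es{E}_n) \trianglelefteq op(\bigsqcup_m \es{E}_m)$; in particular the event set of $\bigsqcup_n op(\es{E}_n)$ is contained in that of $op(\bigsqcup_m \es{E}_m)$. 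Condition (2) is exactly the reverse inclusion, so the two event structures have the same underlying event set. Finally, I invoke the elementary fact that if $\es{F} \trianglelefteq \es{G}$ and $\es{F}, \es{G}$ have the same events then $\es{F} = \es{G}$ (the conditions on $\leq$ and $\#$ in Definition~\ref{def:pes-fix-order-1} become equalities once the event sets coincide; equivalently, use antisymmetry from Lemma~\ref{lem:po-1}). Applying this to $\bigsqcup_n op(\es{E}_n) \trianglelefteq op(\bigsqcup_m \es{E}_m)$ gives $op(\bigsqcup_m \es{E}_m) = \bigsqcup_n op(\es{E}_n)$, i.e. $op$ is continuous.

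The two places needing care are: extracting monotonicity from continuity, where I must read the notation $\bigsqcup_i op(\es{E}_i)$ as presupposing that $op$ sends the chain to a chain; and the closing step, where condition (2) does its real work — without it, $op$ could spuriously create new events in the limit while still satisfying the inequality $\bigsqcup_n op(\es{E}_n) \trianglelefteq op(\bigsqcup_m \es{E}_m)$, so it is precisely (2) that upgrades that inequality to an equality once combined with the observation that $\trianglelefteq$ collapses to $=$ on event structures with equal event sets. Everything else is routine unfolding of Definitions~\ref{def:pes-fix-order-1} and~\ref{def:lub-1}.
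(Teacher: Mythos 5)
Your proof is correct and follows essentially the same route as the paper's: one direction reads monotonicity and the event-inclusion off the continuity equation, and the other uses monotonicity to get $\bigsqcup_n op(\es{E}_n) \trianglelefteq op\left(\bigsqcup_n \es{E}_n\right)$ and then condition (2), together with the collapse of $\trianglelefteq$ to $=$ on equal event sets, to upgrade the inequality to an equality. If anything, yours is the more careful rendering: the paper's converse direction writes the key inequality with the two sides transposed, and its extraction of monotonicity from continuity is considerably murkier than your eventually-constant-chain argument.
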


We proceed to show that the operators are continuous. Note that since the sequential composition is
only right-monotone, then it will only be right-continuous.
\begin{lemma}\label{lem:seq-cont-1}
  $\bigsqcup_m(\seq{\es{E}}{\es{E}_m}) = \seq{\es{E}}{\bigsqcup_m \es{E}_m}$.
\end{lemma}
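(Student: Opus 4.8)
The plan is to invoke Lemma~\ref{lem:cont-1}, which characterises continuity of a unary operation on event structures by two conditions: monotonicity, and the inclusion of the events of $op(\bigsqcup_m \es{E}_m)$ among the events of $\bigsqcup_m op(\es{E}_m)$. Here the relevant unary operation is $op(\es{E}') = \seq{\es{E}}{\es{E}'}$ with $\es{E}$ fixed, and continuity of $op$ is, by Definition~\ref{def:cont-1}, exactly the claimed identity $\bigsqcup_m(\seq{\es{E}}{\es{E}_m}) = \seq{\es{E}}{\bigsqcup_m \es{E}_m}$. So it suffices to check the two hypotheses of Lemma~\ref{lem:cont-1}.

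The first hypothesis, monotonicity of $\seq{\es{E}}{-}$ with respect to $\trianglelefteq$, is precisely Lemma~\ref{lem:seq-fix-mono-1}. In passing, this also ensures that $\seq{\es{E}}{\es{E}_1} \trianglelefteq \seq{\es{E}}{\es{E}_2} \trianglelefteq \cdots$ is again an $\omega$-chain, so that $\bigsqcup_m(\seq{\es{E}}{\es{E}_m})$ is a well-defined event structure by Definition~\ref{def:lub-1} together with Lemma~\ref{lem:lub-es-1} and Lemma~\ref{lem:lub-1}.

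For the second hypothesis, write $\es{E} = \pes{E}{\leq}{\#}$, $\es{E}_m = \pes{E_m}{\leq_m}{\#_m}$, and $\bigsqcup_m \es{E}_m = \pes{E^\omega}{\leq^\omega}{\#^\omega}$ with $E^\omega = \bigcup_m E_m$. By Definition~\ref{def:pes-seq1}, the events of $\seq{\es{E}}{\bigsqcup_m \es{E}_m}$ are $E \uplus (E^\omega \times \confmax{\es{E}})$, whereas the events of each $\seq{\es{E}}{\es{E}_m}$ are $E \uplus (E_m \times \confmax{\es{E}})$. The key observation is that the index set of copies, $\confmax{\es{E}}$, is unchanged along the chain because $\es{E}$ is fixed. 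Hence any event of $\seq{\es{E}}{\bigsqcup_m \es{E}_m}$ is either in $E$ -- and so already an event of $\seq{\es{E}}{\es{E}_1}$ -- or of the form $(e,x)$ with $e \in E^\omega$ and $x \in \confmax{\es{E}}$; in the latter case $e \in E_n$ for some $n$, so $(e,x) \in E_n \times \confmax{\es{E}}$ is an event of $\seq{\es{E}}{\es{E}_n}$. In either case it is an event of $\bigsqcup_m(\seq{\es{E}}{\es{E}_m})$, which is the second hypothesis.

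Both hypotheses of Lemma~\ref{lem:cont-1} hold, so $\seq{\es{E}}{-}$ is continuous, i.e.\ $\bigsqcup_m(\seq{\es{E}}{\es{E}_m}) = \seq{\es{E}}{\bigsqcup_m \es{E}_m}$. I do not anticipate a genuine obstacle: the only delicate point is the bookkeeping with the disjoint union of Definition~\ref{def:pes-seq1} and the remark that enlarging the right argument never alters the copy-index set $\confmax{\es{E}}$ -- which is precisely what makes sequential composition continuous on the right (and, as already noted in the text, why it is only right-monotone rather than monotone on both arguments).
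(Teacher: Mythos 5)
Your proof is correct and follows essentially the same route as the paper's: both reduce the claim to Lemma~\ref{lem:cont-1}, discharge monotonicity via Lemma~\ref{lem:seq-fix-mono-1}, and verify the event-inclusion condition by the same two-case analysis on $E \uplus (E^\omega \times \confmax{\es{E}})$. Your explicit remark that the copy-index set $\confmax{\es{E}}$ is fixed along the chain is a helpful clarification the paper leaves implicit, but the argument is otherwise identical.
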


\begin{lemma}\label{lem:conc-cont-1}
  $\bigsqcup_{n,m}(\conc{\es{E}_n}{\es{E}_m}) = \conc{\bigsqcup_n \es{E}_n}{\bigsqcup_m \es{E}_m}$.
\end{lemma}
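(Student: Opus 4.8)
The plan is to reuse the strategy of Lemma~\ref{lem:seq-cont-1}, now for the binary operator $\conc{\cdot}{\cdot}$, which by Lemma~\ref{lem:conc-fix-mono-1} is monotone for $\trianglelefteq$ in each argument separately. I fix the two $\omega$-chains $(\es{E}_n)_n$ and $(\es{E}_m)_m$ appearing in the statement. By monotonicity the doubly indexed family $\big(\conc{\es{E}_n}{\es{E}_m}\big)_{(n,m)\in\omega\times\omega}$ is monotone for the product order on $\omega\times\omega$; since the diagonal $\{(k,k)\mid k\in\omega\}$ is cofinal in $\omega\times\omega$, the join $\bigsqcup_{n,m}(\conc{\es{E}_n}{\es{E}_m})$ coincides with the diagonal join $\bigsqcup_k(\conc{\es{E}_k}{\es{E}_k})$, which puts us back in the $\omega$-chain setting of Definition~\ref{def:lub-1} and Lemma~\ref{lem:lub-1} (alternatively one extends those two results to directed families, which is routine). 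In either reading the carrier of $\bigsqcup_{n,m}(\conc{\es{E}_n}{\es{E}_m})$ is the union over all $n,m$ of the event sets of the $\conc{\es{E}_n}{\es{E}_m}$, and by Definition~\ref{def:pes-conc1} together with Definition~\ref{def:lub-1} the carrier of $\conc{\bigsqcup_n \es{E}_n}{\bigsqcup_m \es{E}_m}$ is $\big(\bigcup_n E_n\big)\uplus\big(\bigcup_m E_m\big)$, where $E_n$, $E_m$ are the carriers of $\es{E}_n$, $\es{E}_m$.

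The two inclusions are then handled exactly as in Lemma~\ref{lem:seq-cont-1}. For one direction, monotonicity gives $\conc{\es{E}_n}{\es{E}_m}\trianglelefteq\conc{\bigsqcup_n \es{E}_n}{\bigsqcup_m \es{E}_m}$ for all $n,m$ (using $\es{E}_n\trianglelefteq\bigsqcup_n \es{E}_n$ and $\es{E}_m\trianglelefteq\bigsqcup_m \es{E}_m$ from Lemma~\ref{lem:lub-1}), so the right-hand side is an upper bound of the family and hence $\bigsqcup_{n,m}(\conc{\es{E}_n}{\es{E}_m})\trianglelefteq\conc{\bigsqcup_n \es{E}_n}{\bigsqcup_m \es{E}_m}$. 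For the converse it suffices, by the shape of $\trianglelefteq$ (Definition~\ref{def:pes-fix-order-1}: once one side is below the other, equality is forced as soon as the carriers coincide), to check that every event of $\conc{\bigsqcup_n \es{E}_n}{\bigsqcup_m \es{E}_m}$ already appears in $\bigsqcup_{n,m}(\conc{\es{E}_n}{\es{E}_m})$. An event in the left summand $\bigcup_n E_n$ lies in some $E_n$, hence---fixing an arbitrary $m$---is an event of $\conc{\es{E}_n}{\es{E}_m}$ and therefore of the join; an event in the right summand $\bigcup_m E_m$ is treated symmetrically. Combining this with the inclusion just proved yields equal carriers, and thus $\bigsqcup_{n,m}(\conc{\es{E}_n}{\es{E}_m})=\conc{\bigsqcup_n \es{E}_n}{\bigsqcup_m \es{E}_m}$.

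I do not expect a genuine obstacle here. Because parallel composition is defined \emph{pointwise}---disjoint union of the event sets, of the causal orders and of the conflict relations---it commutes with directed unions essentially by inspection, unlike sequential composition, where the $E_2\times\confmax{\es{E}_1}$ copy construction forced the case analysis of Lemma~\ref{lem:seq-cont-1}. The only step that deserves an explicit sentence is the bookkeeping around the two independent indices, namely that the $\omega\times\omega$-indexed join over the product order agrees with the componentwise union of carriers (equivalently, with the diagonal join); this is the standard cofinality remark noted above, after which the event-containment argument goes through verbatim.
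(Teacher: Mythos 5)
Your proof is correct and follows essentially the same route as the paper's: monotonicity from Lemma~\ref{lem:conc-fix-mono-1} gives one direction, and the other is the observation that every event of $\conc{\bigsqcup_n \es{E}_n}{\bigsqcup_m \es{E}_m}$ lies in some $E_n$ or some $E_m$ and hence in some $\conc{\es{E}_n}{\es{E}_m}$, which is exactly the criterion of Lemma~\ref{lem:cont-1} that the paper invokes. The only difference is presentational: you re-derive the content of Lemma~\ref{lem:cont-1} inline and spell out the diagonal/cofinality bookkeeping for the doubly indexed join, which the paper leaves implicit.
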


\begin{lemma}\label{lem:nd-cont-1}
  $\bigsqcup_{n,m}(\nd{\es{E}_n}{\es{E}_m}) = \nd{\bigsqcup_n \es{E}_n}{\bigsqcup_m \es{E}_m}$.
\end{lemma}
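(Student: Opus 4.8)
The plan is to mimic the proof of Lemma~\ref{lem:conc-cont-1} essentially verbatim, since the non-deterministic composition $\nd{\es{E}_1}{\es{E}_2}$ has exactly the same underlying set of events as the parallel composition $\conc{\es{E}_1}{\es{E}_2}$, namely the disjoint union $E_1 \uplus E_2$; the two differ only in the conflict relation, where $\nd{\cdot}{\cdot}$ additionally puts every event of the first component in conflict with every event of the second. First I would invoke Lemma~\ref{lem:cont-1}, so that it suffices to establish two things: that $\nd{\cdot}{\cdot}$ is monotone with respect to $\trianglelefteq$, and that every event of $\nd{\bigsqcup_n \es{E}_n}{\bigsqcup_m \es{E}_m}$ is already an event of $\bigsqcup_{n,m}(\nd{\es{E}_n}{\es{E}_m})$.

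For monotonicity I would simply cite Lemma~\ref{lem:nd-fix-mono-1}. For the event condition, let $\es{E}_1 \trianglelefteq \es{E}_2 \trianglelefteq \dots$ and $\es{E}'_1 \trianglelefteq \es{E}'_2 \trianglelefteq \dots$ be $\omega$-chains with least upper bounds $\bigsqcup_n \es{E}_n$ and $\bigsqcup_m \es{E}_m$ (by Lemma~\ref{lem:lub-1}), and take an event $e$ of $\nd{\bigsqcup_n \es{E}_n}{\bigsqcup_m \es{E}_m}$. By Definition~\ref{def:pes-nd1} its set of events is $(\bigcup_n E_n) \uplus (\bigcup_m E'_m)$, so $e$ is either an event of $\bigsqcup_n \es{E}_n$ or an event of $\bigsqcup_m \es{E}_m$. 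In the first case, by Definition~\ref{def:lub-1} there is some $n$ with $e$ an event of $\es{E}_n$; then for any $m$, Definition~\ref{def:pes-nd1} makes $e$ an event of $\nd{\es{E}_n}{\es{E}_m}$, hence an event of $\bigsqcup_{n,m}(\nd{\es{E}_n}{\es{E}_m})$. The second case is symmetric. Applying Lemma~\ref{lem:cont-1} then yields the claimed equality.

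I do not expect a genuine obstacle here. The only point worth checking is that the extra conflict pairs introduced by $\nd{\cdot}{\cdot}$ create no new events, which is immediate from Definition~\ref{def:pes-nd1}, and that once the two sides share the same event set and are comparable under $\trianglelefteq$ they must coincide — which is precisely how Definition~\ref{def:pes-fix-order-1} is used inside the proof of Lemma~\ref{lem:cont-1}. In particular there is no need to reason explicitly about the conflict relation of the colimit, since $\trianglelefteq$ forces the causal and conflict relations on a common event set to agree; the cross-component conflicts of $\nd{\bigsqcup_n \es{E}_n}{\bigsqcup_m \es{E}_m}$ are exactly the union of those of the $\nd{\es{E}_n}{\es{E}_m}$ and are recovered automatically.
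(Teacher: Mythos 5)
Your proposal is correct and follows essentially the same route as the paper's own proof: cite Lemma~\ref{lem:nd-fix-mono-1} for monotonicity, then verify the event condition of Lemma~\ref{lem:cont-1} by a case split on whether $e$ lies in $\bigsqcup_n \es{E}_n$ or $\bigsqcup_m \es{E}_m$, locating a witnessing index via Definition~\ref{def:lub-1} and concluding by Definition~\ref{def:pes-nd1}. The paper likewise treats this as a near-verbatim copy of Lemma~\ref{lem:conc-cont-1} and does not reason separately about the conflict relation of the least upper bound.
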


Lemma~\ref{lem:fix-prop-1} is a version of the Kleene fixed-point theorem for the case of event
structures. Intuitively, it tells us how to build a least fixed-point for a continuous operator.
\begin{lemma}\label{lem:fix-prop-1}
  Let $\Gamma$ be a continuous operation on the class of event structures $\mathbb{E}$.  Let
  $\bot = (\emptyset, \emptyset, \emptyset) \in \mathbb{E}$.  Define $fix(\Gamma)$ to be the least
  upper bound of the chain
  $\bot \trianglelefteq \Gamma(\bot) \trianglelefteq \dots \trianglelefteq \Gamma^{n}(\bot)
  \trianglelefteq \dots$.  Then $\Gamma(fix(\Gamma)) = fix(\Gamma)$.
\end{lemma}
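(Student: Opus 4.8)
The plan is to run the standard Kleene fixed-point argument, using the continuity of $\Gamma$ together with the fact that $\bot$ is the least element of $\trianglelefteq$ (Lemma~\ref{lem:po-least-elem-1}) and that ascending $\omega$-chains have least upper bounds (Lemma~\ref{lem:lub-1}).

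First I would check that $\bot \trianglelefteq \Gamma(\bot) \trianglelefteq \dots \trianglelefteq \Gamma^n(\bot) \trianglelefteq \dots$ is genuinely an $\omega$-chain, so that $fix(\Gamma)$ is well defined. Since $\bot$ is the least element we have $\bot \trianglelefteq \Gamma(\bot)$; and since $\Gamma$ is continuous it is in particular monotonic (Lemma~\ref{lem:cont-1}), so by an immediate induction $\Gamma^n(\bot) \trianglelefteq \Gamma^{n+1}(\bot)$ for every $n$. Hence the chain is ascending and, by Lemma~\ref{lem:lub-1}, its least upper bound $fix(\Gamma) = \bigsqcup_n \Gamma^n(\bot)$ exists (and is an event structure by Lemma~\ref{lem:lub-es-1}).

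Next I would compute $\Gamma(fix(\Gamma))$ directly from continuity, applying $\Gamma$ to the chain $(\Gamma^n(\bot))_n$ and using Definition~\ref{def:cont-1} in the unary case:
\[
  \Gamma(fix(\Gamma)) = \Gamma\!\left( \bigsqcup_n \Gamma^n(\bot) \right) = \bigsqcup_n \Gamma\!\left( \Gamma^n(\bot) \right) = \bigsqcup_n \Gamma^{n+1}(\bot).
\]
It then remains to identify $\bigsqcup_n \Gamma^{n+1}(\bot)$ with $fix(\Gamma) = \bigsqcup_n \Gamma^n(\bot)$. This is the shifted-chain step: the chain $\Gamma(\bot) \trianglelefteq \Gamma^2(\bot) \trianglelefteq \dots$ differs from $\bot \trianglelefteq \Gamma(\bot) \trianglelefteq \dots$ only by the extra initial term $\bot$, and since $\bot \trianglelefteq \Gamma(\bot)$ — which already occurs in the shifted chain — any event structure that bounds one chain bounds the other. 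Thus the two chains have the same set of upper bounds, hence (by antisymmetry of $\trianglelefteq$, Lemma~\ref{lem:po-1}) the same least upper bound, so $\bigsqcup_n \Gamma^{n+1}(\bot) = \bigsqcup_n \Gamma^n(\bot) = fix(\Gamma)$, which gives $\Gamma(fix(\Gamma)) = fix(\Gamma)$.

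I do not expect any genuine obstacle: every ingredient — existence of the chain, existence of least upper bounds of $\omega$-chains, and the equation $\Gamma(\bigsqcup) = \bigsqcup \Gamma$ — is already supplied by the preceding lemmas. The only point needing a line of care is the reindexing, i.e.\ verifying that prepending (or dropping) the bottom element of an ascending chain leaves its least upper bound unchanged; this is purely order-theoretic and follows at once from the observation that the two chains share their upper bounds.
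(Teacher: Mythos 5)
Your proposal is correct and follows essentially the same route as the paper: apply continuity to get $\Gamma(\bigsqcup_n\Gamma^n(\bot))=\bigsqcup_n\Gamma^{n+1}(\bot)$ and then identify the shifted chain's least upper bound with the original one's, your ``same set of upper bounds'' observation being just another phrasing of the paper's remark that $\bot$ is the identity for the join. The paper's proof additionally verifies that $fix(\Gamma)$ is the \emph{least} fixpoint, but that is not required by the statement as given, so your argument is complete for what is claimed.
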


We now update Definition~\ref{def:den-sem1} to accommodate recursion.
\begin{definition}\label{def:fix-den-sem-1}
  Define an environment to be a function $\gamma : Var \rightarrow \mathbb{E}$ from variables to
  event structures. For a command $\com{C}$ and an environment $\gamma$ define
  $\mf{\com{C}}_\gamma$~\footnote{whenever a command is not recursive, for simplicity we may drop
    the environment $\gamma$ from $\mf{-}_{\gamma}$} as follows:
  \begin{align*}
    & \mf{\com{skip}}_\gamma = (\set{sk}, \set{sk \leq sk}, \emptyset) \\
    & \mf{\com{a}}_\gamma = (\set{a}, \set{a \leq a}, \emptyset) \\
    & \mf{\seq{\com{C_1}}{\com{C_2}}}_\gamma = \seq{\mf{\com{C_1}}_\gamma}{\mf{\com{C_2}}_\gamma} \\
    & \mf{\conc{\com{C_1}}{\com{C_2}}}_\gamma = \conc{\mf{\com{C_1}}_\gamma}{\mf{\com{C_2}}_\gamma} \\
    & \mf{\nd{\com{C_1}}{\com{C_2}}}_\gamma = \nd{\mf{\com{C_1}}_\gamma}{\mf{\com{C_2}}_\gamma} \\
    & \mf{X}_\gamma = \gamma(X) \\
    & \mf{\rec{X}{\com{C}}}_\gamma = fix(\Gamma^{\com{C}, \gamma})
  \end{align*}
  where $\Gamma^{\com{C}, \gamma} : \mathbb{E} \rightarrow \mathbb{E}$ is given by
  $\Gamma^{\com{C}, \gamma}(\es{E}) = \mf{\com{C}}_{\gamma(X \leftarrow \es{E})}$.
\end{definition}

Since $\Gamma^{\com{C}, \gamma}$ is a new operator, we need to show that it is continuous. To do
this, it is useful to know that $fix$ is continuous~\cite{abramsky94}.
\begin{lemma}\label{lem:gamma-cont-1}
  $\Gamma^{\com{C}, \gamma}$ is continuous.
\end{lemma}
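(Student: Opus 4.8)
The plan is to prove the statement by structural induction on $\com{C}$, reading the Lemma as: for every command $\com{C}$, every environment $\gamma$ and every choice of the distinguished recursion variable $X$, the map $\Gamma^{\com{C},\gamma}(\es{E}) = \mf{\com{C}}_{\gamma(X\leftarrow\es{E})}$ is continuous. The toolkit I will use is: constant maps and the identity are continuous; composites of continuous maps are continuous; currying and $fix$ are continuous (\cite{abramsky94}); the binary operators $\seq{}{}$, $\conc{}{}$, $\nd{}{}$ are monotone (Lemmas~\ref{lem:seq-fix-mono-1},~\ref{lem:conc-fix-mono-1},~\ref{lem:nd-fix-mono-1}) and the one-sided continuity statements of Lemmas~\ref{lem:seq-cont-1},~\ref{lem:conc-cont-1},~\ref{lem:nd-cont-1}. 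In every inductive step I will verify continuity through Lemma~\ref{lem:cont-1}, i.e.\ by checking monotonicity together with the condition that each event of $op(\bigsqcup_n \es{E}_n)$ already occurs in $\bigsqcup_n op(\es{E}_n)$.

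For the base cases $\com{skip}$ and $\com{a}$, the map $\Gamma^{\com{C},\gamma}$ is constant, hence continuous. For a variable $Y$: if $Y=X$ then $\Gamma^{\com{C},\gamma}$ is the identity, otherwise it is the constant $\gamma(Y)$; both are continuous. For $\nd{\com{C}_1}{\com{C}_2}$ and $\conc{\com{C}_1}{\com{C}_2}$ I have, by the induction hypothesis, that $\Gamma^{\com{C}_1,\gamma}$ and $\Gamma^{\com{C}_2,\gamma}$ are continuous, and $\Gamma^{\com{C},\gamma}(\es{E}) = \Gamma^{\com{C}_1,\gamma}(\es{E})\ op\ \Gamma^{\com{C}_2,\gamma}(\es{E})$ with $op\in\set{\square,||}$. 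Given an $\omega$-chain $(\es{E}_i)_i$ I would then compute
\[
\Gamma^{\com{C}_1,\gamma}(\textstyle\bigsqcup_i\es{E}_i)\ op\ \Gamma^{\com{C}_2,\gamma}(\textstyle\bigsqcup_i\es{E}_i)
= \textstyle\bigsqcup_{i,j}(\Gamma^{\com{C}_1,\gamma}(\es{E}_i)\ op\ \Gamma^{\com{C}_2,\gamma}(\es{E}_j))
= \textstyle\bigsqcup_i(\Gamma^{\com{C}_1,\gamma}(\es{E}_i)\ op\ \Gamma^{\com{C}_2,\gamma}(\es{E}_i)),
\]
using continuity of the two inductive maps and then Lemma~\ref{lem:conc-cont-1} (resp.\ Lemma~\ref{lem:nd-cont-1}), the last equality holding because monotonicity of $op$ makes the diagonal cofinal in the double chain. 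The case $\seq{\com{C}_1}{\com{C}_2}$ is easier: the syntactic restriction gives $\fvar{\com{C}_1}=\emptyset$, so by a coincidence argument $\Gamma^{\com{C}_1,\gamma}$ is a constant event structure $\es{E}_1$, and $\Gamma^{\com{C},\gamma}(\es{E}) = \es{E}_1\, ;\, \Gamma^{\com{C}_2,\gamma}(\es{E})$ is the composite of the continuous map $\es{E}_1\, ;\, (-)$ (Lemma~\ref{lem:seq-cont-1}) with the continuous $\Gamma^{\com{C}_2,\gamma}$ (induction hypothesis).

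The remaining case is $\com{C}=\rec{Y}{\com{C}'}$, which I expect to be the main obstacle. After $\alpha$-renaming so that $Y$ differs from the distinguished variable $X$, Definition~\ref{def:fix-den-sem-1} gives $\Gamma^{\com{C},\gamma}(\es{E}) = fix\bigl(\es{E}'\mapsto H(\es{E},\es{E}')\bigr)$, where $H(\es{E},\es{E}') = \mf{\com{C}'}_{\gamma(X\leftarrow\es{E})(Y\leftarrow\es{E}')}$; since $X\neq Y$ the two updates commute. The key point is that $H$ is continuous in $\es{E}'$ for fixed $\es{E}$ — this is the induction hypothesis for the subterm $\com{C}'$ with distinguished variable $Y$ and environment $\gamma(X\leftarrow\es{E})$ — and continuous in $\es{E}$ for fixed $\es{E}'$ — the induction hypothesis for $\com{C}'$ with distinguished variable $X$ and environment $\gamma(Y\leftarrow\es{E}')$ (or constant if $X\notin\fvar{\com{C}'}$). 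Separate continuity together with monotonicity yields joint continuity of $H$ on $\omega$-chains; hence $\es{E}\mapsto\bigl(\es{E}'\mapsto H(\es{E},\es{E}')\bigr)$ is continuous into the function space by continuity of currying, and post-composing with the continuous operator $fix$ shows $\Gamma^{\com{C},\gamma}$ is continuous. (Note the induction hypothesis for $\com{C}'$ is also what makes $fix(\dots)$ well defined.) Thus the whole delicacy is concentrated in recognising that the parametrised fixpoint is separately continuous via two different instantiations of the induction hypothesis; once joint continuity of $H$ is in hand, the continuity of $fix$ and curry closes the case.
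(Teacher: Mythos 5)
Your proposal is correct and follows essentially the same route as the paper: structural induction on $\com{C}$, discharging the binary operators via Lemmas~\ref{lem:seq-cont-1}, \ref{lem:conc-cont-1}, \ref{lem:nd-cont-1} (with the left argument of $\seq{}{}$ constant by the syntactic restriction) and the recursion case via the induction hypothesis together with continuity of curry and $fix$. You are merely more explicit than the paper about the base cases, the diagonalisation of the double chain for $\conc{}{}$ and $\nd{}{}$, and the separate-to-joint continuity step hidden in the paper's ``curry continuous'' line.
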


With the introduction of the recursive command, we need the two following lemmas for
Lemma~\ref{res:soundI-1}.
\begin{lemma}[Substitution lemma]\label{lem:1-1}
  $\mf{\com{C}'[X \leftarrow \mf{\rec{X}{\com{C}}}_\gamma]}_\gamma =
  \mf{\com{C}'}_{\gamma{(X \leftarrow \mf{\rec{X}{\com{C}}}_\gamma})}$
\end{lemma}

\begin{lemma}\label{lem:2-1}
  $\mf{\rec{X}{\com{C}}}_\gamma = \mf{\com{C}}_{\gamma(X \leftarrow \mf{\rec{X}{\com{C}}}_\gamma)}$
\end{lemma}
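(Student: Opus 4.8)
The plan is to read this off directly from the fixpoint machinery already developed. By Definition~\ref{def:fix-den-sem-1} we have $\mf{\rec{X}{\com{C}}}_\gamma = fix(\Gamma^{\com{C}, \gamma})$, where $\Gamma^{\com{C}, \gamma} : \es{E} \mapsto \mf{\com{C}}_{\gamma(X \leftarrow \es{E})}$. So the statement to prove, after unfolding the left-hand side via Definition~\ref{def:fix-den-sem-1}, is exactly $fix(\Gamma^{\com{C}, \gamma}) = \Gamma^{\com{C}, \gamma}(fix(\Gamma^{\com{C}, \gamma}))$, \ie\ that $fix(\Gamma^{\com{C}, \gamma})$ is a fixpoint of $\Gamma^{\com{C}, \gamma}$.

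First I would invoke Lemma~\ref{lem:gamma-cont-1}, which tells us that $\Gamma^{\com{C}, \gamma}$ is a continuous operation on event structures; in particular the hypothesis of Lemma~\ref{lem:fix-prop-1} is met, and $fix(\Gamma^{\com{C}, \gamma})$ is well-defined as the least upper bound of the $\omega$-chain $\bot \trianglelefteq \Gamma^{\com{C}, \gamma}(\bot) \trianglelefteq \dots$ (this chain is indeed an $\omega$-chain by monotonicity, which is part of continuity). Then Lemma~\ref{lem:fix-prop-1} directly yields $\Gamma^{\com{C}, \gamma}(fix(\Gamma^{\com{C}, \gamma})) = fix(\Gamma^{\com{C}, \gamma})$.

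Finally I would translate back: substituting $\es{E} = fix(\Gamma^{\com{C}, \gamma})$ in the defining equation $\Gamma^{\com{C}, \gamma}(\es{E}) = \mf{\com{C}}_{\gamma(X \leftarrow \es{E})}$ gives $\Gamma^{\com{C}, \gamma}(fix(\Gamma^{\com{C}, \gamma})) = \mf{\com{C}}_{\gamma(X \leftarrow fix(\Gamma^{\com{C}, \gamma}))} = \mf{\com{C}}_{\gamma(X \leftarrow \mf{\rec{X}{\com{C}}}_\gamma)}$, using once more that $\mf{\rec{X}{\com{C}}}_\gamma = fix(\Gamma^{\com{C}, \gamma})$. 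Combining with the previous display closes the proof. The argument is a short chain of equalities; there is no real obstacle, the only point requiring care is making sure the continuity hypothesis of Lemma~\ref{lem:fix-prop-1} is discharged by Lemma~\ref{lem:gamma-cont-1} rather than assumed, and that the abstraction/instantiation of the functional $\Gamma^{\com{C}, \gamma}$ is applied to the right event structure.
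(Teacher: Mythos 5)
Your proof is correct and follows essentially the same route as the paper: unfold $\mf{\rec{X}{\com{C}}}_\gamma$ to $fix(\Gamma^{\com{C},\gamma})$ via Definition~\ref{def:fix-den-sem-1}, apply Lemma~\ref{lem:fix-prop-1} to get the fixpoint equation, and fold back using the definition of $\Gamma^{\com{C},\gamma}$. The only difference is that you explicitly discharge the continuity hypothesis of Lemma~\ref{lem:fix-prop-1} by citing Lemma~\ref{lem:gamma-cont-1}, which the paper leaves implicit; that is a small but welcome improvement in rigor.
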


We present an example where we illustrate a fragment of the event structure that is obtained by
interpreting a recursive command.
\begin{example}\label{ex:es-recursive}
  Let us reuse the command in Example~\ref{ex:loop-1}.  A fragment of the event structure that is
  generated by interpreting $\mf{\rec{X}{(\nd{\com{skip}}{\seq{\com{a}}{X}})}}_{\gamma}$ is given in
  Figure~\ref{fig:es-recursive}.
  \begin{figure}[ht!]
    \centering
    \begin{tikzpicture}[tikzfig]
      \begin{pgfonlayer}{nodelayer}
        \node [style=event] (0) at (-3, 0) {$\com{sk}_0$};
        \node [style=event] (1) at (-1, 0) {$\com{a}_0$};
        \node [style=event] (2) at (-1, -2) {$\com{sk}_1$};
        \node [style=event] (3) at (1, -2) {$\com{a}_1$};
        \node [style=event] (4) at (1, -4) {$\com{sk}_2$};
        \node [style=event] (5) at (3, -4) {$\com{a}_2$};
        \node [style=event] (6) at (3, -6) {$\com{sk}_3$};
        \node [style=event] (7) at (5, -6) {$\com{a}_3$};
        \node [style=event] (8) at (5, -8) {\vdots};
        \node [style=none] (9) at (5, -7.5) {};
      \end{pgfonlayer}
      \begin{pgfonlayer}{edgelayer}
        \draw [style=wiggle] (0) to (1);
        \draw [style=wiggle] (2) to (3);
        \draw [style=wiggle] (4) to (5);
        \draw [style=wiggle] (6) to (7);
        \draw [style=arrow] (1) to (2);
        \draw [style=arrow] (1) to (3);
        \draw [style=arrow] (3) to (4);
        \draw [style=arrow] (3) to (5);
        \draw [style=arrow] (5) to (6);
        \draw [style=arrow] (5) to (7);
        \draw [style=arrow] (7) to (9.center);
      \end{pgfonlayer}
    \end{tikzpicture}
    \caption{Unrolling the event structure of $\mf{\rec{X}{(\nd{\com{skip}}{\seq{\com{a}}{X}})}}_{\gamma}$}
    \label{fig:es-recursive}
  \end{figure}

  Note that the events have in subscript a natural number indicating how many times the recursive
  command has been unfolded.
\end{example}

To show the equivalence between the operational and the denotational semantics, we reuse what was
done in Section~\ref{subsec:results-1}. The only lemmas in which we need to add the proof for the
recursion case are the following:
\begin{lemma}[Soundness I]\label{res:soundI-fix-1}
  If $\com{C} \xrightarrow{l} \com{C'}$ then
  $\forall \gamma,\ \mf{\com{C'}}_\gamma \equiv \mf{\com{C}}_\gamma \backslash l$.
\end{lemma}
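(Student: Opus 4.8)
The plan is to prove the statement by induction on the derivation of $\com{C} \xrightarrow{l} \com{C}'$, using the rules of Figure~\ref{fig:op-small1} together with the new rule for $\rec{X}{\com{C}}$. For every rule that does not mention recursion, the argument is \emph{verbatim} that of Lemma~\ref{res:soundI-1}: one simply decorates each interpretation bracket with the environment subscript $\gamma$ and invokes Definition~\ref{def:fix-den-sem-1} in place of Definition~\ref{def:den-sem1}. This is legitimate because $\mf{\seq{\com{C}_1}{\com{C}_2}}_\gamma = \seq{\mf{\com{C}_1}_\gamma}{\mf{\com{C}_2}_\gamma}$, $\mf{\nd{\com{C}_1}{\com{C}_2}}_\gamma = \nd{\mf{\com{C}_1}_\gamma}{\mf{\com{C}_2}_\gamma}$ and $\mf{\conc{\com{C}_1}{\com{C}_2}}_\gamma = \conc{\mf{\com{C}_1}_\gamma}{\mf{\com{C}_2}_\gamma}$ hold on the nose, because $\mf{\checkmark}_\gamma = \emptyset$, and because the monotonicity lemmas (Lemmas~\ref{lem:seq-mono1}, \ref{lem:nd-mono1}, \ref{lem:conc-mono1}) and the remove-initial-event lemmas (Lemmas~\ref{lem:seq-rem-init1}, \ref{lem:nd-rem-init1}, \ref{lem:conc-rem-init1}) are statements about event structures and so carry over unchanged. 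Hence only the recursion rule requires a genuinely new argument.

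For the recursion rule we have $\rec{X}{\com{C}} \xrightarrow{l} \com{C}'[X \leftarrow \rec{X}{\com{C}}]$, obtained from the strictly shorter derivation of $\com{C} \xrightarrow{l} \com{C}'$. Fix an arbitrary environment $\gamma$ and set $\gamma' = \gamma(X \leftarrow \mf{\rec{X}{\com{C}}}_\gamma)$. Applying the induction hypothesis to $\com{C} \xrightarrow{l} \com{C}'$ at the environment $\gamma'$ --- which is allowed because the induction hypothesis is universally quantified over environments --- gives $\mf{\com{C}'}_{\gamma'} \equiv \mf{\com{C}}_{\gamma'} \backslash l$. By Lemma~\ref{lem:1-1}, the left-hand side equals $\mf{\com{C}'[X \leftarrow \rec{X}{\com{C}}]}_\gamma$, and by Lemma~\ref{lem:2-1}, the event structure $\mf{\com{C}}_{\gamma'}$ is precisely $\mf{\rec{X}{\com{C}}}_\gamma$. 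Combining the two identities gives $\mf{\com{C}'[X \leftarrow \rec{X}{\com{C}}]}_\gamma \equiv \mf{\rec{X}{\com{C}}}_\gamma \backslash l$, which is exactly the claim for this rule, and the induction is complete.

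I expect the main obstacle to be bookkeeping rather than mathematics: the operational rule performs the \emph{syntactic} substitution of Definition~\ref{fig:substitution-1}, whereas Lemmas~\ref{lem:1-1} and~\ref{lem:2-1} are phrased in terms of \emph{updating the environment}. One therefore has to be careful that Lemma~\ref{lem:1-1}, applied with $\mf{\rec{X}{\com{C}}}_\gamma$ as the inserted event structure, really does identify $\mf{\com{C}'[X \leftarrow \rec{X}{\com{C}}]}_\gamma$ with $\mf{\com{C}'}_{\gamma(X \leftarrow \mf{\rec{X}{\com{C}}}_\gamma)}$; if one prefers to keep Lemma~\ref{lem:1-1} purely about environment updates, then a one-line compositionality remark of the form $\mf{\com{C}'[X \leftarrow \com{D}]}_\gamma = \mf{\com{C}'}_{\gamma(X \leftarrow \mf{\com{D}}_\gamma)}$ must be inserted first. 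A secondary, minor point is that $\backslash l$ is only meaningful when $l \in \init{\mf{\rec{X}{\com{C}}}_\gamma}$; this holds because a transition $\com{C}_0 \xrightarrow{l} \com{C}_0'$ always forces $l$ to be an initial event of $\mf{\com{C}_0}_\gamma$ (an easy auxiliary observation already underlying Lemma~\ref{res:soundI-1}), and by the fixpoint identity of Lemma~\ref{lem:2-1} the initial events of $\mf{\rec{X}{\com{C}}}_\gamma$ are exactly those of $\mf{\com{C}}_{\gamma'}$. Everything else is a direct transcription of the proof of Lemma~\ref{res:soundI-1}.
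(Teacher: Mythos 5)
Your proposal is correct and follows essentially the same route as the paper: the non-recursion cases are transcribed from Lemma~\ref{res:soundI-1} with the environment subscript added, and the recursion case instantiates the induction hypothesis at $\gamma(X \leftarrow \mf{\rec{X}{\com{C}}}_\gamma)$ and then applies Lemma~\ref{lem:1-1} and Lemma~\ref{lem:2-1}, which is exactly the paper's argument. Your side remarks on the syntactic-substitution versus environment-update mismatch and on $l$ being initial are reasonable observations about notation the paper itself glosses over, but they do not change the approach.
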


\begin{lemma}[Adequacy I]\label{res:adI-fix-1}
  Let $l \in \init{\mf{\com{C}}}$. Then $\exists \com{C'} \in (\com{C} \cup \{\checkmark\})$ s.t
  $\com{C} \xrightarrow{l} \com{C'}$ and $\mf{\com{C}} \backslash l \equiv \mf{\com{C'}}$.
\end{lemma}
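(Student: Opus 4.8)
The plan is to run the same structural induction on $\com{C}$ as in the proof of Lemma~\ref{res:adI-1}, since for $\com{skip}$, $\com{a}$, $\seq{-}{-}$, $\nd{-}{-}$ and $\conc{-}{-}$ nothing changes once we read $\mf{-}$ as $\mf{-}_\gamma$ throughout. As with Lemma~\ref{res:soundI-fix-1}, I would actually prove the environment-parametrised statement ``for all $\gamma$, if $l \in \init{\mf{\com{C}}_\gamma}$ then there is $\com{C}'$ with $\com{C} \xrightarrow{l} \com{C}'$ and $\mf{\com{C}}_\gamma \backslash l \equiv \mf{\com{C}'}_\gamma$'', the point being that this is exactly what lets one invoke the induction hypothesis on the (possibly open) body of a recursion. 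So the only genuinely new work is the case $\com{C} = \rec{X}{\com{C}_0}$.

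For that case, write $\gamma' = \gamma(X \leftarrow \mf{\rec{X}{\com{C}_0}}_\gamma)$. First I would rewrite the hypothesis using Lemma~\ref{lem:2-1}: since $\mf{\rec{X}{\com{C}_0}}_\gamma = \mf{\com{C}_0}_{\gamma'}$, we have $l \in \init{\mf{\com{C}_0}_{\gamma'}}$. Applying the induction hypothesis to the proper subterm $\com{C}_0$ with the environment $\gamma'$ yields a command $\com{C}'$ with $\com{C}_0 \xrightarrow{l} \com{C}'$ and $\mf{\com{C}_0}_{\gamma'} \backslash l \equiv \mf{\com{C}'}_{\gamma'}$. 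The operational rule for recursion added to Figure~\ref{fig:op-small1} then gives $\rec{X}{\com{C}_0} \xrightarrow{l} \com{C}'[X \leftarrow \rec{X}{\com{C}_0}]$, so $\com{C}'[X \leftarrow \rec{X}{\com{C}_0}]$ is the witnessing command.

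It remains to verify the denotational side, namely $\mf{\rec{X}{\com{C}_0}}_\gamma \backslash l \equiv \mf{\com{C}'[X \leftarrow \rec{X}{\com{C}_0}]}_\gamma$. I would chain three facts: $\mf{\rec{X}{\com{C}_0}}_\gamma \backslash l = \mf{\com{C}_0}_{\gamma'} \backslash l$ by Lemma~\ref{lem:2-1}; this is $\equiv \mf{\com{C}'}_{\gamma'}$ by the induction hypothesis; and $\mf{\com{C}'}_{\gamma'} = \mf{\com{C}'[X \leftarrow \rec{X}{\com{C}_0}]}_\gamma$ by Lemma~\ref{lem:1-1}. Composing these closes the case, and hence the induction.

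The step I expect to be the crux is exactly this recursion case, and within it the bookkeeping that makes the above chain type-check: everything rests on Lemmas~\ref{lem:1-1} and~\ref{lem:2-1}, which are what translate between the syntactic unfolding $\com{C}'[X \leftarrow \rec{X}{\com{C}_0}]$ and the semantic one $\gamma'$, and on setting the induction up with the environment parameter so that the induction hypothesis is genuinely available for the open body $\com{C}_0$. It is also here that one must lean on the syntactic restriction on sequential composition --- that the free and bound variables of a left operand are empty, so recursion and variables only occur to the right --- to be sure that the body $\com{C}_0$ really can fire, operationally, whatever action the initial event $l$ records.
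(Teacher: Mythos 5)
Your proposal is correct and follows essentially the same route as the paper: the paper likewise only adds the recursion case, reduces $l \in \init{\mf{\rec{X}{\com{C}}}_\gamma}$ to $l \in \init{\mf{\com{C}}_\gamma}$, applies the environment-parametrised induction hypothesis to the body, uses the recursion rule to obtain $\rec{X}{\com{C}} \xrightarrow{l} \com{C}'[X \leftarrow \rec{X}{\com{C}}]$, and closes via the instantiation $\gamma \mapsto \gamma(X \leftarrow \mf{\rec{X}{\com{C}}}_\gamma)$ together with Lemmas~\ref{lem:1-1} and~\ref{lem:2-1}. Your version is if anything slightly more explicit about where Lemmas~\ref{lem:1-1} and~\ref{lem:2-1} enter the final chain of equivalences.
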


For Theorem~\ref{res:soundII-1} and Theorem~\ref{res:adII-1} we only need to adapt $\mf{-}$ to
$\mf{-}_\gamma$.

We now delve into examples that show soundness and adequacy in practice.
\begin{example}\label{ex:eq-sem-1}
  The event structure in Example~\ref{ex:es-1} corresponds to the command in
  Example~\ref{ex:small-1}.

  To see how the semantics relate, recall the configurations in Example~\ref{ex:es-1} and the words
  in Example~\ref{ex:small-1}.

  Let us select the words $c d$ and $d c$.  It is straightforward to see that each word corresponds
  to a covering chain, $\emptyset \cchain{d} \set{d} \cchain{c} \set{d, c}$ and
  $\emptyset \cchain{c} \set{c} \cchain{d} \set{d, c}$, respectively. Both covering chains
  correspond to the configuration $\set{d, c}$.

  Conversely, the configuration $\set{d, c}$ is obtained by two covering chains:
  $\emptyset \cchain{d} \set{d} \cchain{c} \set{d, c}$ and
  $\emptyset \cchain{c} \set{c} \cchain{d} \set{d, c}$. It is straightforward to see that each
  covering chain corresponds to the words $d c$ and $c d$, respectively.
\end{example}

\begin{example}
  Figure~\ref{fig:ex3-1} shows the event structure corresponding to the interpretation of
  $\mf{\conc{(\seq{\com{a}}{\com{b}})}{\com{c}}}$.  The set of configurations is
  $\set{\emptyset, \set{a}, \set{c}, \set{a, b}, \set{a, c}, \set{a, b, c}}$, where we note that in
  the presence of concurrent events, a configuration has more than one possible covering chain.
  
  To see the equivalence between both semantics through an example, recall the words that can be
  formed by the n-step in Example~\ref{ex:small-1}: $a$, $c$, $ab$, $ac$, $ca$, $abc$, $acb$, and
  $cab$.

  Each word corresponds to a covering chain, which represents a configuration. For example the words
  $ac$ and $ca$ correspond to the covering chains
  $\emptyset \cchain{a} \set{a} \cchain{c} \set{a, c}$ and
  $\emptyset \cchain{c} \set{c} \cchain{a} \set{a, c}$, respectively. These covering chains
  correspond to the configuration $\set{a, c}$. Conversely, for each covering chain, there exists a
  corresponding word.
  \begin{figure}[ht!]
    \centering
    \begin{tikzpicture}
      \begin{pgfonlayer}{nodelayer}
        \node [style=event] (0) at (0, 6) {$a$};
        \node [style=event] (1) at (0, 4.5) {$b$};
        \node [style=event] (2) at (1.5, 6) {$c$};
      \end{pgfonlayer}
      \begin{pgfonlayer}{edgelayer}
        \draw [style=arrow] (0) to (1);
      \end{pgfonlayer}
    \end{tikzpicture}
    \caption{Event structure of $\mf{\conc{(\seq{\com{a}}{\com{b}})}{\com{c}}}$}
    \label{fig:ex3-1}
  \end{figure}
\end{example}

\begin{example}\label{ex:es-com-rec}
  Recall the command used in Example~\ref{ex:loop-1} and Example~\ref{ex:es-recursive},
  $\rec{X}{(\nd{\com{skip}}{\seq{\com{a}}{X}})}$.  From the former, we know that the words formed by
  the big-step semantics are a possibly empty sequence of $\com{a}$'s that finishes with
  $\com{sk}$. Consider the word $\omega = a\ a\ sk$, which gives
  $\rec{X}{(\nd{\com{skip}}{\seq{\com{a}}{X}})} \xtwoheadrightarrow{\omega} \checkmark$.  From
  Example~\ref{ex:es-recursive}, we can deduce that we have a configuration
  \[
    x = \{a_0, a_1, sk_2\} \in
    \confES{\mf{\rec{X}{(\nd{\com{skip}}{\seq{\com{a}}{X}})}}_{\gamma}}
  \]
  such that $\emptyset \cchain{\omega}\ x$.  On the other way around, from
  \[
    \emptyset \cchain{a_0}\ \{a_0\} \cchain{a_1}\ \{a_0, a_1\} \cchain{sk_2}\ \{a_0, a_1, sk_2\}
  \]
  \ie\ $\emptyset \cchain{\omega}\ \{a_0, a_1, sk_2\}$, we have that
  $\rec{X}{(\nd{\com{skip}}{\seq{\com{a}}{X}})} \xtwoheadrightarrow{\omega} \checkmark$.

  Furthermore, we can extend this argument as follows. Let $\omega = a^n\ sk$, with $n$ being a
  finite number and $a^n$ denoting a sequence of $n$ computations triggered by the action $a$. We
  have that 
  \[
    \rec{X}{(\nd{\com{skip}}{\seq{\com{a}}{X}})} \xtwoheadrightarrow{\omega} \checkmark
  \]
  On the event structure side, we have a configuration
  \[
    x = \cup_{i=0}^{n} \{a_i\} \cup \{sk_{n+1}\} \in
    \confES{\mf{\rec{X}{(\nd{\com{skip}}{\seq{\com{a}}{X}})}}_{\gamma}}
  \]
  with $\emptyset \cchain{a_0}\ \dots \cchain{sk_{n+1}}\ \cup_{i=0}^{n} \{a_i\} \cup \{sk_{n+1}\}$.
\end{example}

\newpage
\section{Probabilistic Event Structures}\label{sec:pes}
This section describes how to endow event structures with a probabilistic behavior. Probabilistic
event structures~\cite{winskel14} are event structures together with a function on configurations,
$v : \confES{\es{E}} \rightarrow [0,1]$, which assigns $1$ for the empty set, jointly with a
\emph{monotonicity} condition. The intuition behind $v(x)$ is the probability of reaching at least
configuration $x$.

The monotonicity condition in~\cite[Definition~1]{winskel14} is described in terms of \emph{drop
  functions}, which will be necessary when showing that unitary event structures with an initial
state corresponds to probabilistic event structures.  Define drop functions
$\dropc{n}{v}{y}{x_1, \dots x_n} \in \mathbb{R}$ for $y, x_1, \dots x_n \in \confES{\es{E}}$ with
$y \subseteq x_1, \dots, x_n$, by induction, taking
\begin{align*}
  & \dropc{0}{v}{y}{} = v(y) \\
  & \dropc{n}{v}{y}{x_1, \dots, x_n} = \dropc{n-1}{v}{y}{x_1, \dots, x_{n-1}}
    - \dropc{n-1}{v}{x_n}{x_1 \cup x_n, \dots, x_{n-1} \cup x_n}
\end{align*}
for $n>0$.  Intuitively, a drop function $\dropc{n}{v}{y}{x_1, \dots x_n}$ indicates the probability
of reaching at least configuration $y$ without reaching any of the $x_1, \dots, x_n$, with
$y \subseteq x_1, \dots, x_n$.

Accordingly to~\cite[Proposition~1]{winskel14}, the drop function can be described based on the
inclusion-exclusion principle for sets~\footnote{The inclusion-exclusion principle for sets is a
  method to count the number of elements in the union of possibly overlapping finite sets}
(Equation~\ref{eq:prob-es-condition}).  We make use of such result to present the definition of
probabilistic event structures.
\begin{definition}[Probabilistic event structure]\label{def:prob-es}
  Let $\es{E} = \pes{E}{\leq}{\#}$ be an event structure. A configuration-valuation on $\es{E}$ is a
  function $v : \mathcal{C}(\es{E}) \rightarrow [0,1]$ such that $v(\emptyset) = 1$ and
  $\forall y, x_1, \dots x_n \in \confES{\es{E}}$ such that $y \subseteq x_1, \dots, x_n$
  \begin{align}
    v(y) - \sum_{\emptyset \neq I \subseteq \set{1, \dots, n}} (-1)^{|I|+1}
    v
    \left(
    \bigcup_{i \in I} x_i
    \right)
    \geq 0\label{eq:prob-es-condition}
  \end{align}
  where $v(x) = 0$ whenever $x \not\in \confES{\es{E}}$.
  
  A probabilistic event structure, $\es{P} = \pespq{\es{E}}{v}$, comprises an event structure
  $\es{E} = \pes{E}{\leq}{\#}$ together with a configuration-valuation
  $v : \mathcal{C}(\es{E}) \rightarrow [0,1]$.
\end{definition}

From Equation~\ref{eq:prob-es-condition}, there are some conclusions that we can
make~\cite{visme19}:
\begin{itemize}
\item The configuration-valuation is decreasing:
  \[
    x \subseteq y \implies v(x) \geq v (y)
  \]
\item Events in conflict have a conditional probability whose sum is less than or equal to one:
  \[
    \forall 1 \leq i \leq n,\ x \cchain{e_i} x_i \text{ and }
    \forall 1 \leq i < j \leq n,\ 
    e_i \# e_j \implies \sum_{i = 1}^n \dfrac{v(x \cup \set {e_i})}{v(x)} \leq 1
  \]
\end{itemize}

The first conclusion taken from Equation~\ref{eq:prob-es-condition} is trivial. Intuitively it says
that as further we advance in the computation, the probabilities either stay the same or decrease.
The second conclusion can be illustrated by considering two conflicting events $e$ and $e'$. Now
consider $x \cchain{e}\ x \cup \{e\}$ and $x \cchain{e'}\ x \cup \{e'\}$. Clearly we have
$x \subseteq x \cup \{e\}, x \cup \{e'\}$. From Equation~\ref{eq:prob-es-condition} we then derive
the following:
\begin{align*}
  v(x) - v(x \cup \{e\}) - v(x \cup \{e'\}) \geq 0
  & \Leftrightarrow
  v(x) \geq v(x \cup \{e\}) + v(x \cup \{e'\}) \\
  & \Leftrightarrow
  1 \geq \dfrac{v(x \cup \{e\}) + v(x \cup \{e'\})}{v(x)} \\
  & \Leftrightarrow
  \dfrac{v(x \cup \{e\}) + v(x \cup \{e'\})}{v(x)} \leq 1
\end{align*}
This can be interpreted as a conditional probability: given that configuration $x$ has occurred, the
probability of subsequently executing either $e$ or $e'$ cannot exceed $1$.

Example~\ref{ex:prob-es-2} illustrates probabilistic event structures.
\begin{example}\label{ex:prob-es-2}
  Figure~\ref{fig:ex1-2} shows a probabilistic event structure very similar to the event structure
  in Figure~\ref{fig:ex1-1}, the only difference being the addition of a new event $\tau$, for which
  the events $a$, $c$, and $d$ are causally dependent. The event $\tau$ is used to indicate that the
  events that are causally immediate to it, \ie\ $\tau \rightarrowtriangle a$,
  $\tau \rightarrowtriangle c$, and $\tau \rightarrowtriangle d$ arose from a probabilistic
  choice. Consequently, their probabilities are not $1$, as can be seen by the
  configuration-valuation. Furthermore, in the drawing of the event structure, we subscript the
  probability of each event to make it easier to understand where probabilities came from.

  The set of configurations is composed of
  $\set{\emptyset, \set{\tau}, \set {\tau, a}, \set{\tau, c} \set{\tau, d}, \set{\tau, a,b},
    \set{\tau, c,d}}$, where $\set{\tau, a, b}$ and $\set{\tau, c, d}$ are maximal configurations
  with probability $p$ and $1-p$, respectively.
  
  \begin{figure}[ht!]
    \centering
    \begin{minipage}[ht!]{0.3\textwidth}
      \begin{tikzpicture}
	\begin{pgfonlayer}{nodelayer}
          \node [style=event] (0) at (-2.25, 3) {$a_{\textcolor{blue}{p}}$};
          \node [style=event] (1) at (-2.25, 2) {$b_{\textcolor{blue}{1}}$};
          \node [style=event] (2) at (-0.5, 3) {$c_{\textcolor{blue}{(1-p)}}$};
          \node [style=event] (3) at (0.75, 3) {$d_{\textcolor{blue}{(1-p)}}$};
          \node [style=event] (4) at (-0.5, 4) {$\tau_{\textcolor{blue}{1}}$};
	\end{pgfonlayer}
	\begin{pgfonlayer}{edgelayer}
          \draw [style=arrow] (0) to (1);
          \draw [style=wiggle] (0) to (2);
          \draw [style=wiggle, bend right] (0) to (3);
          \draw [style=arrow] (4) to (0);
          \draw [style=arrow] (4) to (2);
          \draw [style=arrow] (4) to (3);
	\end{pgfonlayer}
      \end{tikzpicture}
    \end{minipage}    
    \begin{minipage}{0.2\textwidth}
      \[
        v(x) =
        \begin{cases}
          p & \text{ if } a \in x \\
          1-p & \text{ if } c \in x \text{ or } d \in x \\
          1 & \text{ otherwise}
        \end{cases}
      \]
    \end{minipage}
    \caption{Example of a probabilistic event structure}
    \label{fig:ex1-2}
  \end{figure}
\end{example}

\subsection{Language}\label{subsec:lan-2}
The set of commands allowed by the language are given by the following grammar (where $p \in ]0,1[$):
\[
  \com{C} ::= \com{skip} \mid a \in Act \mid \seq{\com{C}}{\com{C}} \mid \probC{\com{C}}{p}{\com{C}} \mid
  \conc{\com{C}}{\com{C}}
\]

In the design of this language we made two choices: the first was to substitute the
non-deterministic operator by the probabilistic operator and the second concerns the intervals for
which $p$ ranges.  The justification for the former is related with the chosen probabilistic event
structure.  In sum, Winskel probabilistic event structures are not suitable to model a language that
posses both non-deterministic and probabilistic operators, as explained in~\cite{visme19}.
Regarding the latter, the intervals chosen are influenced by Definition~\ref{def:rem-init2}, since
it is no reasonable to remove an initial event when its probability is zero.

We extend the set of labels with a new label $\tau$, \ie\ $L' = L \cup \set{\tau}$ and let it be
ranged by $l'$.  Similarly to process algebra, $\tau$ will be used to denote an invisible
transition.

We fix
$\Dist(X) = \set{\psi : X \rightarrow [0,1] \mid sup(X) \text{ finite}, \sum_{x \in X} \psi(x) = 1}$
as being the probabilistic finite support functor and we We define the small-step transition step
(labeled Segala automaton),
$\rightarrow \subseteq \com{C} \times \Dist(L' \times (\com{C} \cup \{ \checkmark \}))$, as the
smallest relation obeying the following rules:
\begin{figure}[ht!]
  \centering
  \begin{align*}
    \com{skip} \rightarrow 1 \cdot (sk, \checkmark)
    \hspace*{1cm}
    \com{a} \rightarrow 1 \cdot (a, \checkmark)
    \hspace*{1cm}
    \probC{\com{C}_1}{p}{\com{C}_2} \rightarrow p \cdot (\tau, \com{C}_1) + (1-p) \cdot (\tau, \com{C}_2)
  \end{align*}
  \begin{align*}
    \infer{\seq{\com{C}_1}{\com{C}_2} \rightarrow 1 \cdot (l, \com{C}_2)}
    {
    \com{C}_1 \rightarrow 1 \cdot (l, \checkmark)
    }
    \hspace*{1cm}
    \infer{
    \seq{\com{C}_1}{\com{C}_2} \rightarrow 1 \cdot (l, \seq{ \com{C}'_1}{ \com{C}_2})
    }
    {
    \com{C}_1 \rightarrow 1 \cdot (l, \com{C}'_1)
    }
    \hspace*{1cm}
    \infer{
    \seq{\com{C}_1}{\com{C}_2} \rightarrow \sum_i p_i \cdot (\tau, \seq{\com{C}_i}{\com{C}_2})
    }{
    \com{C}_1 \rightarrow \sum_i p_i \cdot (\tau, \com{C}_i)
    }
  \end{align*}
  \begin{align*}
    \infer{\conc{\com{C}_1}{\com{C}_2} \rightarrow 1 \cdot (l,\com{C}_2)}
    {
    \com{C}_1 \rightarrow 1 \cdot (l, \checkmark)
    }
    \hspace*{1cm}
    \infer{
    \conc{\com{C}_1}{\com{C}_2} \rightarrow 1 \cdot (l, \conc{ \com{C}'_1  }{ \com{C}_2})
    }
    {
    \com{C}_1 \rightarrow 1 \cdot (l, \com{C}'_1)
    }
    \hspace*{1cm}
    \infer{
    \conc{\com{C}_1}{\com{C}_2} \rightarrow \sum_i p_i \cdot (\tau, \conc{\com{C}_i}{\com{C}_2})
    }{
    \com{C}_1 \rightarrow \sum_i p_i \cdot (\tau, \com{C}_i)
    }
  \end{align*}
  \begin{align*}
    \infer{\conc{\com{C}_1}{\com{C}_2} \rightarrow 1 \cdot (l,\com{C}_1)}
    {
    \com{C}_2 \rightarrow 1 \cdot (l, \checkmark)
    }
    \hspace*{1cm}
    \infer{
    \conc{\com{C}_1}{\com{C}_2} \rightarrow 1 \cdot (l, \conc{ \com{C}_1  }{ \com{C}'_2})
    }
    {
    \com{C}_2 \rightarrow 1 \cdot (l, \com{C}'_2)
    }
    \hspace*{1cm}
    \infer{
    \conc{\com{C}_1}{\com{C}_2} \rightarrow \sum_j p_j \cdot (\tau, \conc{\com{C}_1}{\com{C}_j})
    }{
    \com{C}_2 \rightarrow \sum_j p_j \cdot (\tau, \com{C}_j)
    }
  \end{align*}  
  \caption{Rules of the probabilistic small-step operational semantics}
  \label{fig:op-small2}
\end{figure}

Define a word to be a sequence of labels:
\[
  \omega ::= l' \mid l' : \omega 
\]
where $l' : \omega$ appends $l'$ to the beginning of $\omega$.  A word can also be seen as an
element of $(L')^+$, \ie\ a possibly infinite sequence of labels without the empty sequence. Despite
$(L')^+$ allows the possibility of having infinite words, by now we focus only on the finite words.

Define the $n$-step transition,
$\xrightarrow{\omega_p} \subseteq \com{C} \times \Dist((L')^{+} \times (\com{C} \cup \{\checkmark\}))$,
where $n$ is the length of the words, as follows:
\begin{figure}[ht!]
  \centering
  \begin{align*}
    \infer{
    \com{C} \twoheadrightarrow \sum_i p_i (l', \com{C}_i)
    }{
    \com{C} \rightarrow \sum_i p_i (l', \com{C}_i)
    }
    \hspace*{2cm}
    \infer{
    \com{C} \twoheadrightarrow \sum_i p_i \left( \sum_j p_j \cdot (l' : \omega_{ij}, \com{C}_{ij}) \right)
    }{
    \com{C} \rightarrow \sum_i p_i (l', \com{C}_i)
    \qquad
    \forall i\,
    \com{C}_i \twoheadrightarrow \sum_j p_j \cdot (\omega_{ij}, \com{C}_{ij})
    }
  \end{align*}
  \caption{Rules of the n-step operational semantics}
  \label{fig:op-nstep2}
\end{figure}

The left rule represents the execution of a single step in a computation, while the right rule
represents multiple steps of the computation. The latter rule can be understood as follows: if
$\com{C}$ transits to $\sum_i p_i \cdot (l', \com{C}_i)$ and for each $\com{C}_i$ we transit to
$\sum_j p_j \cdot (\omega_{ij}, \com{C}_{ij})$, then by appending $l'$ to each $\omega_{ij}$, we can
transit from $\com{C}$ to
$\sum_i p_i \left( \sum_j p_j \cdot (l' : \omega_{ij}, \com{C}_{ij}) \right)$. In this transition,
for each $i$, we multiply the probabilities obtained from the small-step transition with the
probabilities obtained from the n-step transition.

\begin{example}\label{ex:small-2}
  In Figure~\ref{fig:ex2-2} we use straight arrows to denote a transition from a command to a
  distribution, which we denote by $\bullet$, labeled by the triggering action and wiggly arrows to
  represent a transition from a distribution to a command labeled by the associated probability.

  From $\probC{(\seq{\com{a}}{\com{b}})}{p}{(\conc{\com{c}}{\com{d}})}$ we transit with $\tau$ to
  the distribution $p \cdot \seq{\com{a}}{\com{b}} + (1-p) \cdot \conc{\com{c}}{\com{d}}$, which
  transits with probability $p$ to $\seq{\com{a}}{\com{b}}$ and with probability $1-p$ to
  $\conc{\com{c}}{\com{d}}$. For the former, by executing first $a$ and then $b$ we reach the end of
  the computation.  For the latter, since it is a concurrent program, to finish the computation we
  can either execute first $c$ and then $d$ or we can execute first $d$ and then $c$.

  Based on Figure~\ref{fig:ex2-2} and following the rules in Figure~\ref{fig:op-nstep2}, we can
  deduce that with probability $p$ the word $\tau a b$ leads to a final computation and the same
  behavior is captured with probability $1-p$ with the words $\tau c d$ and $\tau d c$.
  \begin{figure}[ht!]
    \centering
    \begin{tikzpicture}
      \begin{pgfonlayer}{nodelayer}
        \node [style=command] (0) at (0, 0) {$\probC{(\seq{\com{a}}{\com{b}})}{p}{(\conc{\com{c}}{\com{d}})}$};
        \node [style=command] (1) at (0, -1) {$\bullet$};
        \node [style=command] (2) at (1.25, -2) {$\conc{\com{c}}{\com{d}}$};
        \node [style=command] (3) at (-1.25, -2) {$\seq{\com{a}}{\com{b}}$};
        \node [style=command] (4) at (-1.25, -3) {$\bullet$};
        \node [style=command] (7) at (0.5, -4) {$\com{d}$};
        \node [style=command] (8) at (1.75, -4) {$\com{c}$};
        \node [style=command] (9) at (0.5, -6) {$\checkmark$};
        \node [style=command] (10) at (1.75, -6) {$\checkmark$};
        \node [style=command] (11) at (1.75, -5) {$\bullet$};
        \node [style=command] (12) at (0.5, -5) {$\bullet$};
        \node [style=command] (18) at (0.5, -3) {$\bullet$};
        \node [style=command] (19) at (1.75, -3) {$\bullet$};
        \node [style=command] (20) at (-1.25, -4) {$\com{b}$};
        \node [style=command] (21) at (-1.25, -5) {$\bullet$};
        \node [style=command] (22) at (-1.25, -6) {$\checkmark$};
        \node [style=none] (23) at (0.25, -0.5) {$\tau$};
        \node [style=none] (24) at (-0.75, -1.25) {$p$};
        \node [style=none] (25) at (0.75, -1.25) {$1-p$};
        \node [style=none] (26) at (-1.5, -2.5) {$a$};
        \node [style=none] (27) at (-1.5, -3.5) {$1$};
        \node [style=none] (28) at (-1.5, -4.5) {$b$};
        \node [style=none] (29) at (-1.5, -5.5) {$1$};
        \node [style=none] (30) at (0.5, -2.5) {$c$};
        \node [style=none] (31) at (0.25, -3.5) {$1$};
        \node [style=none] (32) at (0.25, -4.5) {$d$};
        \node [style=none] (33) at (0.25, -5.5) {$1$};
        \node [style=none] (34) at (1.75, -2.5) {$d$};
        \node [style=none] (35) at (2, -3.5) {$1$};
        \node [style=none] (36) at (2, -4.5) {$c$};
        \node [style=none] (37) at (2, -5.5) {$1$};
      \end{pgfonlayer}
      \begin{pgfonlayer}{edgelayer}
        \draw [style=op] (0) to (1);
        \draw [style=op] (3) to (4);
        \draw [style=op] (8) to (11);
        \draw [style=op] (7) to (12);
        \draw [style=prob] (1) to (3);
        \draw [style=prob] (1) to (2);
        \draw [style=prob] (11) to (10);
        \draw [style=prob] (12) to (9);
        \draw [style=op] (2) to (19);
        \draw [style=op] (2) to (18);
        \draw [style=prob] (19) to (8);
        \draw [style=prob] (18) to (7);
        \draw [style=prob] (21) to (22);
        \draw [style=prob] (4) to (20);
        \draw [style=op] (20) to (21);
      \end{pgfonlayer}
    \end{tikzpicture}
    \caption{Segala Automaton of $\probC{(\conc{\com{a}}{\com{b}})}{p}{\com{c}}$}
    \label{fig:ex2-2}
  \end{figure}
\end{example}

\begin{example}
  In Figure~\ref{fig:ex2-2} we use straight arrows to denote a transition from a configuration to a
  distribution, which we denote by $\bullet$, labeled by the triggering action and wiggly arrows to
  represent a transition from a distribution to a configuration labeled by the associated
  probability.

  From $\probC{(\conc{\com{a}}{\com{b}})}{p}{\com{c}}$ we transit with $\tau$ to the distribution
  $p \cdot \conc{\com{a}}{\com{b}} + (1-p) \cdot \com{c}$, which transits with probability $p$ to
  $\conc{\com{a}}{\com{b}}$ and with probability $1-p$ to $\com{c}$. By executing $\com{c}$ the
  computation finishes. On the other side we have two possible transitions: either we transit with
  $a$, leading to the distribution $1 \cdot \com{b}$, which terminates after executing $\com{b}$, or
  we transit with $b$ which goes to the distribution $1 \cdot \com{a}$ that after executing
  $\com{a}$ terminates the computation.  The words that lead
  $\probC{(\conc{\com{a}}{\com{b}})}{p}{\com{c}}$ to $\checkmark$ are $\tau a b$, $\tau b a$ with
  probability $p$ and $\tau c$ with probability $1-p$.
  \begin{figure}[ht!]
    \centering
    \begin{tikzpicture}
      \begin{pgfonlayer}{nodelayer}
        \node [style=command] (0) at (0, 0) {$\probC{(\conc{\com{a}}{\com{b}})}{p}{\com{c}}$};
        \node [style=command] (1) at (0, -1) {$\bullet$};
        \node [style=command] (2) at (-1, -2) {$\conc{\com{a}}{\com{b}}$};
        \node [style=command] (3) at (1, -2) {$\com{c}$};
        \node [style=command] (4) at (1, -3) {$\bullet$};
        \node [style=command] (6) at (1, -4) {$\checkmark$};
        \node [style=command] (7) at (-1.75, -4) {$\com{c}$};
        \node [style=command] (8) at (-0.5, -4) {$\com{c}$};
        \node [style=command] (9) at (-1.75, -6) {$\checkmark$};
        \node [style=command] (10) at (-0.5, -6) {$\checkmark$};
        \node [style=command] (11) at (-0.5, -5) {$\bullet$};
        \node [style=command] (12) at (-1.75, -5) {$\bullet$};
        \node [style=command] (13) at (-0.75, -1.5) {$p$};
        \node [style=command] (14) at (1, -1.5) {$1-p$};
        \node [style=command] (15) at (1.25, -3.5) {$1$};
        \node [style=command] (16) at (-0.25, -5.5) {$1$};
        \node [style=command] (17) at (-2, -5.5) {$1$};
        \node [style=command] (18) at (-1.75, -3) {$\bullet$};
        \node [style=command] (19) at (-0.5, -3) {$\bullet$};
        \node [style=command] (20) at (-0.25, -3.5) {$1$};
        \node [style=command] (21) at (-2, -3.5) {$1$};
        \node [style=command] (22) at (0.25, -0.5) {$\tau$};
        \node [style=command] (23) at (-1.75, -2.5) {$a$};
        \node [style=command] (24) at (-0.5, -2.5) {$b$};
        \node [style=command] (25) at (1.25, -2.5) {$c$};
        \node [style=command] (26) at (-2, -4.5) {$b$};
        \node [style=command] (27) at (-0.25, -4.5) {$a$};
      \end{pgfonlayer}
      \begin{pgfonlayer}{edgelayer}
        \draw [style=op] (0) to (1);
        \draw [style=op] (3) to (4);
        \draw [style=op] (8) to (11);
        \draw [style=op] (7) to (12);
        \draw [style=prob] (1) to (3);
        \draw [style=prob] (1) to (2);
        \draw [style=prob] (4) to (6);
        \draw [style=prob] (11) to (10);
        \draw [style=prob] (12) to (9);
        \draw [style=op] (2) to (19);
        \draw [style=op] (2) to (18);
        \draw [style=prob] (19) to (8);
        \draw [style=prob] (18) to (7);
      \end{pgfonlayer}
    \end{tikzpicture}
    \caption{Segala Automaton of $\probC{(\conc{\com{a}}{\com{b}})}{p}{\com{c}}$}
    \label{fig:ex2-2}
  \end{figure}
\end{example}

\subsection{Constructions on Probabilistic Event Structures}
The constructions on probabilistic event structures are an extension of the ones defined previously.
Hence, the explanation of the sequential and parallel composition will be focused on the valuation
and we detail more the probabilistic choice.

Let $\es{P}_1$ and $\es{P}_2$ be two probabilistic event structures.  For the valuation of the
sequential composition we note the following: either the configuration belongs to
$\confES{\es{P}_1}$ and in that case the valuation of the sequential composition equals the
valuation of $\es{P}_1$, or the configuration has elements of both probabilistic event
structures. In that case, we multiply valuation of a maximal configuration in $\es{P}_1$ with the
valuation of a configuration in $\es{P}_2$ whose events are reached by the maximal configuration of
$\es{P}_1$.
\begin{definition}[Sequential Composition]\label{def:pes-seq2}
  Let $\es{P_1} = \ppes{E_1}{\leq_1}{\#_1}{v_1}$ and $\es{P_2} = \ppes{E_2}{\leq_2}{\#_2}{v_2}$ be
  probabilistic event structures.  Define $\seq{\es{P_1}}{\es{P_2}} = \ppes{E}{\leq}{\#}{v}$ as:
  \begin{align*}
    & E = E_1 \uplus (E_2 \times \confmax{\es{P}_1}) \\
    & \leq\ = \set{e_1 \leq e'_1 \mid e \leq_1 e'} \cup
      \set{(e_2,x) \leq (e'_2,x) \mid \ e_2 \leq_2 e'_2} \cup
      \set{ e_1 \leq (e_2,x) \mid e_1 \in x  } \\
    & \#\ = \set{ e \# e' \mid \exists (e_1 \leq e, e'_1 \leq e')\ .\ e_1 \#_1 e'_1 }
      \cup \set{ (e_2,x) \# (e'_2,x) \mid e_2 \#_2 e'_2  } \\
    & \forall x \in \mathcal{C}(\seq{\es{P}_1}{\es{P}_2})\, .\, v(x) =
      \begin{cases}
        v_1(x) & \text{ if } x \in \mathcal{C}(\es{P}_1) \\
        v_1(x_1) \cdot v_{2}(x_2) & \text{ if } x = x_1 \cup (x_2 \times \{x_1\}) \\
        & \hspace*{1cm} 
                                    \text{ with } x_1 \in \confmax{\es{P}_1},\,
                                    x_2 \in \mathcal{C}(\es{P}_{2})
      \end{cases}
  \end{align*}
  where $E_2 \times \confmax{\es{P}_1} = \set{ (e,x) \mid e \in E_2,\ x \in \confmax{\es{P}_1}}$ and
  $x_2 \times \{x_1\} = \{(e_2, x_1) \mid e_2 \in x_2\}$.
\end{definition}

\begin{lemma}\label{lem:seq-es2}
  Let $\es{P_1}$ and $\es{P_2}$ be probabilistic event structures.  $\seq{\es{P}_1}{\es{P}_2}$ is a
  probabilistic event structure.
\end{lemma}

We now turn our attention to the probabilistic choice. As it can be deduced, the probabilistic
choice of two probabilistic event structures is going to be very similar to the non-deterministic
choice of event structures in Definition~\ref{def:pes-nd1}. Indeed when composing two probabilistic
events structures $\es{P}_1$ and $\es{P}_2$ under the probabilistic operator, the events of
$\es{P}_1$ will all be in conflict with the events of $\es{P}_2$ (recall that the meaning of
$\probC{\com{C}_1}{p}{\com{C}_2}$ is to execute $\com{C}_1$ with probability $p$ or execute
$\com{C}_2$ with probability $1-p$). Furthermore, each event of $\es{P}_1$ and $\es{P}_2$ will
causally depend on an event representing the invisible action $\tau$. In other words, the invisible
action $\tau$ should be the initial event. Regarding the valuations, if the configuration obtained
by removing $\tau$ belongs to $\confES{\es{P}_1}$, then we multiply by $p$ the valuation in
$\es{P}_1$. If the configuration obtained by removing $\tau$ belongs to $\confES{\es{P}_2}$, then we
multiply by $(1-p)$. Otherwise, the value of the valuation is $1$. Formally:
\begin{definition}[Probabilistic choice]\label{def:pes-prob2}
  Let $\es{P}_1 = \ppes{E_1}{\leq_1}{\#_1}{v_1}$ and $\es{P}_2 = \ppes{E_2}{\leq_2}{\#_2}{v_2}$ be
  probabilistic event structures.  Define $\probC{\es{P}_1}{p}{\es{P}_2} = \ppes{E}{\leq}{\#}{v}$
  as:
  \begin{align*}
    & E = \set{\tau} \uplus (E_1 \uplus E_2)  \\
    & \leq\ = \set{\tau \leq e \mid e \in E} \cup \leq_1 \uplus \leq_2 \\
    & \# = \#_1 \uplus \#_2 \cup \set{ e_1 \# e_2 \mid e_1 \in E_1,\, e_2 \in E_2}
                            \cup \set{ e_2 \# e_1 \mid e_1 \in E_1,\, e_2 \in E_2}\\
    & \forall x \in \mathcal{C}(\probC{\es{P}_1}{p}{\es{P}_2})\, .\, v(x) =
      \begin{cases}
        p \cdot v_1(x \backslash \tau) & \text{ if } x \backslash \tau \in \mathcal{C}(\es{P}_1) \\
        (1-p) \cdot v_2(x \backslash \tau) & \text{ if } x \backslash \tau \in \mathcal{C}(\es{P}_2) \\
        1 & \text{ if } x = \set{\tau} \vee x = \emptyset
      \end{cases}
  \end{align*}
\end{definition}

\begin{lemma}\label{lem:pes-prob2}
  Let $\es{P_1}$ and $\es{P_2}$ be probabilistic event structures. $\probC{\es{P}_1}{p}{\es{P}_2}$
  is a probabilistic event structure.
\end{lemma}

Note that, in the valuation of Definition~\ref{def:pes-prob2}), the configurations considered for
$\es{P}_1$ and $\es{P}_2$ are disjoint. This follows from the fact that, in
$\probC{\es{P}_1}{p}{\es{P}_2}$, every event in $\es{P}_1$ is in conflict with every event in
$\es{P}_2$.  Since configurations are conflict-free by definition, no configuration
$x \in \confES{\probC{\es{P}_1}{p}{\es{P}_2}}$ can contain events from both $\es{P}_1$ and
$\es{P}_2$.  Hence, the configurations $x \backslash \tau$ used in the valuation for $\es{P}_1$ and
$\es{P}_2$ are disjoint.

Furthermore, from Definition~\ref{def:pes-prob2}, it is also possible to notice that
$\probC{\es{P}}{p}{\es{P}} \neq \es{P}$. This is a consequence of introducing the $\tau$ event in
$\probC{\es{P}}{p}{\es{P}}$.

\begin{remark}
  Another way of representing $\probC{\es{P}_1}{p}{\es{P}_2}$ is by putting the probabilities
  explicit on both sides, \ie\ $p \cdot \es{P}_1 + (1-p) \cdot \es{P}_2$.  That leaves us with
  $\probC{\es{P}_1}{p}{\es{P}_2} = p \cdot \es{P}_1 + (1-p) \cdot \es{P}_2$
\end{remark}

\begin{remark}\label{rem:probc-n}
  Definition~\ref{def:pes-prob2} can be generalized to allow the probabilistic composition of $n$
  probabilistic event structures. Consider we have a finite number $n$ of probabilistic event
  structures. Let $1 \leq i \leq n$ and define $\sum_i p_i \cdot \es{P}_i$, with $\sum_i p_i = 1$,
  as follows:
  \begin{align*}
    & E = \set{\tau} \uplus \biguplus_i E_i  \\
    & \leq\ = \set{\tau \leq e \mid e \in E} \cup \biguplus_i \leq_i \\
    & \# = \biguplus_i \#_i \cup \set{ e_i \# e_j \mid e_i \in E_i,\, e_j \in E_j, 1 \leq i \neq j \leq n} \\
    & \hspace*{1.65cm} \cup \set{ e_j \# e_i \mid e_i \in E_i,\, e_j \in E_j, 1 \leq i \neq j \leq n} \\
    & \forall x \in \mathcal{C}\left(\sum_i p_i \cdot \es{P}_i\right)\, .\, v(x) =
      \begin{cases}
        p_i \cdot v_i(x \backslash \tau) & \text{ if } x \backslash \tau \in \mathcal{C}(\es{P}_i) \\
        1 & \text{ if } x = \set{\tau} \vee x = \emptyset
      \end{cases}
  \end{align*}

  This will be useful when showing the equivalence between the operational and the denotational
  semantics.
\end{remark}

For the parallel composition, and by considering the intuition that the parallel composition places
``side-by-side'' the two event structures, the valuation is determined by multiplying the valuations
obtained from projecting the configuration in $\conc{\es{P}_1}{\es{P}_2}$ onto the corresponding
configurations of $\es{P}_1$ and $\es{P}_2$.
\begin{definition}[Parallel Composition]\label{def:pes-conc2}
  Let $\es{P}_1 = \ppes{E_1}{\leq_1}{\#_1}{v_1}$ and $\es{P}_2 = \ppes{E_2}{\leq_2}{\#_2}{v_2}$ be
  probabilistic event structures.  Define $\conc{\es{P}_1}{\es{P}_2} = \ppes{E}{\leq}{\#}{v}$ as:
  \begin{align*}
    & E = E_1 \uplus E_2  \\
    & \leq\ = \leq_1 \uplus \leq_2 \\
    & \# = \#_1 \uplus \#_2 \\
    & \forall x \in \mathcal{C}(\conc{\es{P}_1}{\es{P}_2})\, .\,
      v(x) = v_1(x \cap E_1) \cdot v_2(x \cap E_2)
  \end{align*}
\end{definition}

\begin{lemma}\label{lem:conc-es2}
  Let $\es{P_1}$ and $\es{P_2}$ be probabilistic event structures.  $\conc{\es{P}_1}{\es{P}_2}$ is a
  probabilistic event structure.
\end{lemma}

We can now interpret the language shown in Section~\ref{subsec:lan-2} as probabilistic event
structures.
\begin{definition}\label{def:den-sem2}
  We interpret commands as probabilistic event structures as follows:
  \begin{align*}
    & \mf{\com{skip}} = (\set{sk}, \set{sk \leq sk}, \emptyset, v(\set{sk}) = 1) \\
    & \mf{\com{a}} = (\set{a}, \set{a \leq a}, \emptyset, v(\set{a}) = 1) \\
    & \mf{\probC{\com{C_1}}{p}{\com{C_2}}} = \probC{\mf{\com{C}_1}}{p}{\mf{\com{C}_2}} \\
    & \mf{\seq{\com{C_1}}{\com{C_2}}} = \seq{\mf{\com{C_1}}}{\mf{\com{C_2}}} \\
    & \mf{\conc{\com{C_1}}{\com{C_2}}} = \conc{\mf{\com{C_1}}}{\mf{\com{C_2}}} \\
  \end{align*}
\end{definition}

From Section~\ref{sec:es}, we can deduce that an extension of Definition~\ref{def:pes-fix-order-1}
to the probabilistic setting is unsuitable for relating the operational and the denotational
semantics. Hence, we extend Definition~\ref{def:pes-sub1} to the probabilistic realm.  Recall what
was said regarding plain and composite events in Section\ref{subsec:constructions-es-1} above
Definition~\ref{def:pes-sub1}.
\begin{definition}\label{def:pes-sub2}
  Let $\es{P}_1 = \ppes{E_1}{\leq_1}{\#_1}{v_1}$ and $\es{P}_2 = \ppes{E_2}{\leq_2}{\#_2}{v_2}$ be
  probabilistic event structures.  Say $\es{P}_1 \sqsubseteq \es{P}_2$ whenever exists an injective
  function $f: E_1 \rightarrow E_2$ such that $\forall e, e' \in E_1$:
  \begin{align*}
    & \pi(f(e)) = \pi(e) \\    
    & e \leq_1 e' \implies f(e) \leq_2 f(e') \\
    & e \#_1 e' \implies f(e) \#_2 f(e') \\
    & \forall x \in \mathcal{C}(\es{P}_1), y \in \mathcal{C}(\es{P}_2)\, .\,
      f[x] \subseteq y \Rightarrow v_1(x) \geq v_2(y) 
  \end{align*}
  We say that two probabilistic event structures $\es{P}_1, \es{P}_2$ are similar, denoted
  $\es{P}_1 \equiv \es{P}_2$, iff $\es{P}_1 \sqsubseteq \es{P}_2$ and
  $\es{P}_2 \sqsubseteq \es{P}_1$.
\end{definition}

To define $\sqsubseteq$ in the probabilistic setting, we based ourselves on the fact that, given two
configurations $x$ and $y$ such that $x \subseteq y$ the probability of $x$ must be greater or equal
to the probability of $y$, \ie\ $v(x) \geq v(y)$.  This comes from the intuition that the
probability tends to decrease as long as the computation continues. 

The valuation condition in Definition~\ref{def:pes-sub2} uses $v_1(x) \geq v_2(y)$ instead of
$v_1(x) \geq v_2(f[x])$, because $f[x]$ may not be a configuration in $\es{P}_2$.  For instance, let
$\es{P}_1 = a$ and $\es{P}_2 = b \rightarrowtriangle a$. It is straightforward to see that
$a \sqsubseteq b \rightarrowtriangle a$, via the identity function. However,
$id[\{a\}] \not\in \confES{\es{P}_2}$, since in $\es{P}_2$ the event $a$ causally depends on $b$,
and therefore cannot occur on its own.

Furthermore, note that in Definition~\ref{def:pes-sub2}, the conditions on causal and conflict
relations use an implication rather than an equivalence (recall that in the classical case, the
definition of sub-similar event structures used an equivalence -- Definition~\ref{def:pes-sub1}).
This adjustment is necessary because the interaction between probabilistic choice and parallel
composition can duplicate events, introducing new causal and conflict dependencies.  For example, in
Figure~\ref{fig:probES-lem-1}, the event $c$ is concurrent with all others, while in
Figure~\ref{fig:probES-lem-2} its copies $c'$ and $c''$ now causally depend on $\tau$ and are each
in conflict with events from the opposite branch (\ie\ $c'$ is in conflict with $b$ and $c''$ is in
conflict with $a$). Hence, if in Definition~\ref{def:pes-sub2} we kept the ``equivalence'', then
Lemma~\ref{lem:sound-aux-conc-2} would not hold and, consequently we would not be able to obtain
soundness neither adequacy.
\begin{figure}[ht!]
  \centering
  \begin{minipage}{0.49\textwidth}
    \begin{tikzpicture}[tikzfig]
      \begin{pgfonlayer}{nodelayer}
        \node [style=event] (0) at (8, 0) {$\com{c}_{\textcolor{blue}{1}}$};
        \node [style=event] (1) at (1, -2) {$\com{a}_{\textcolor{blue}{p}}$};
        \node [style=event] (2) at (5.5, -2) {$\com{b}_{\textcolor{blue}{(1-p)}}$};
        \node [style=event] (5) at (3.25, 0) {$\tau_{\textcolor{blue}{1}}$};
      \end{pgfonlayer}
      \begin{pgfonlayer}{edgelayer}
        \draw [style=wiggle] (1) to (2);
        \draw [style=arrow] (5) to (1);
        \draw [style=arrow] (5) to (2);
      \end{pgfonlayer}
    \end{tikzpicture}
  \end{minipage}
  \begin{minipage}{0.49\textwidth}
    \begin{align*}
      v(x) =
      \begin{cases}
        p & \text{ if } \com{a} \in x \\
        1-p & \text{ if } \com{b} \in x \\
        1 & \text{ otherwise}
      \end{cases}
    \end{align*}
  \end{minipage}
  \caption{Probabilistic event structure of
    $\conc{(\probC{\com{a}}{p}{\com{b}})}{\com{c}}$}
  \label{fig:probES-lem-1}
\end{figure}

\begin{figure}[ht!]
  \centering
  \begin{minipage}{0.49\textwidth}
    \begin{tikzpicture}[tikzfig, node distance=12mm and 18mm]
      \begin{pgfonlayer}{nodelayer}
        \node [style=event] (tau) {$\tau_{\textcolor{blue}{1}}$};
        \node [style=event, below left=of tau]  (a)  {$\com{a}_{\textcolor{blue}{p}}$};
        \node [style=event, below right=of tau] (b)  {$\com{b}_{\textcolor{blue}{(1-p)}}$};
        \node [style=event, below=of a]         (c') {$\com{c'}_{\textcolor{blue}{p}}$};
        \node [style=event, below=of b]         (c''){$\com{c''}_{\textcolor{blue}{(1-p)}}$};
      \end{pgfonlayer}
      \begin{pgfonlayer}{edgelayer}
        \draw [style=arrow] (tau) -- (a);
        \draw [style=arrow] (tau) -- (b);
        \draw [style=arrow] (tau) -- (c');
        \draw [style=arrow] (tau) -- (c'');
        \draw [style=wiggle] (a) -- (b);
        \draw [style=wiggle, bend right=12] (a) -- (c'');
        \draw [style=wiggle, bend left=12]  (c') -- (b);
        \draw [style=wiggle] (c') -- (c'');
      \end{pgfonlayer}
    \end{tikzpicture}
  \end{minipage}\hfill
  \begin{minipage}{0.49\textwidth}
    \begin{align*}
      v(x) =
      \begin{cases}
        p     & \text{if } \com{a} \text{ or } \com{c'} \in x,\\
        1-p   & \text{if } \com{b} \text{ or } \com{c''} \in x,\\
        1     & \text{otherwise.}
      \end{cases}
    \end{align*}
  \end{minipage}
  \caption{Probabilistic event structure of $\probC{(\conc{a}{c})}{p}{(\conc{b}{c})}$}
  \label{fig:probES-lem-2}
\end{figure}

To remove the initial event of a probabilistic event structure, we need to guarantee that the
probability of said event is not zero. Because if the event had probability zero, removing it would
lead to a division by zero.  Furthermore, this is the reason why $p \in ]0,1[$ in the probabilistic
operator.
\begin{definition}[Remove initial event]\label{def:rem-init2}
  Let $\es{P} = \ppes{E}{\leq}{\#}{v}$ be a probabilistic event structure and $a \in \init{\es{P}}$,
  s.t $v(\set{a}) \neq 0$.  Define $\es{P} \backslash a = (E', \leq', \#', v')$ as
  \begin{align*}
    & E' = \set{e \in E \mid \neg (e \# a),\, e \neq a} \\
    & \leq'\, =\, \set{e \leq e' \mid e,e' \in E' } \\
    & \#'\, =\, \set{e \# e' \mid e,e' \in E' } \\
    & \forall x \in \mathcal{C}(\es{P} \backslash a)\, .\, v'(x) = \dfrac{v(x \cup \set{a})}{v(\set{a})}
  \end{align*}
\end{definition}

\begin{lemma}\label{lem:rem-init-es2}
  Let $\es{P}$ be a probabilistic event structure.  $\es{P} \backslash a$ is a probabilistic event
  structure.
\end{lemma}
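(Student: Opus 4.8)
The plan is to reduce the statement to a check on the valuation, reusing the non-probabilistic theory. By Lemma~\ref{lem:rem-init-es1} the triple $\pes{E'}{\leq'}{\#'}$ underlying $\es{P} \backslash a$ is already an event structure, so all that remains is to verify that the function $v'$ of Definition~\ref{def:rem-init2} is a configuration-valuation in the sense of Definition~\ref{def:prob-es}: that $v'(\emptyset) = 1$ and that $\dropc{n}{v'}{y}{x_1, \dots, x_n} \geq 0$ for all $n \geq 1$ and all $y \subseteq x_1, \dots, x_n$ in $\confES{\es{P} \backslash a}$. The first point is immediate: since $a \in \init{\es{E}}$ we have $\set{a} \in \confES{\es{E}}$, and by Definition~\ref{def:rem-init2}, $v'(\emptyset) = v(\emptyset \cup \set{a})/v(\set{a}) = v(\set{a})/v(\set{a}) = 1$, the division being legitimate because $v(\set{a}) \neq 0$ by hypothesis (and in fact $v(\set{a}) \in\, ]0,1]$ since $v(\set{a}) \leq v(\emptyset) = 1$).

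The core of the argument is a configuration-lifting observation: for any $x \in \confES{\es{P} \backslash a}$, the finite set $x \cup \set{a}$ is again a configuration of $\es{E}$, and conversely a subset $z \subseteq E'$ is a configuration of $\es{E} \backslash a$ iff $z \cup \set{a}$ is a configuration of $\es{E}$. Down-closure is preserved because $a$ is initial and because, by conflict heredity, any predecessor of an event $e \in x$ is either $a$ or is not in conflict with $a$ and hence lies in $E'$ and therefore in $x$; conflict-freeness is preserved because $E' = \set{e \in E \mid \neg(e \# a),\, e \neq a}$ puts every event of $x$ out of conflict with $a$. Moreover, rearranging Definition~\ref{def:rem-init2} gives $v(x \cup \set{a}) = v(\set{a}) \cdot v'(x)$ whenever $x \in \confES{\es{P}\backslash a}$, and both sides are $0$ when $x$ (equivalently $x \cup \set{a}$) fails to be a configuration, by the convention $v(\cdot) = 0$ off $\confES{\cdot}$.

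With this in hand, given $y \subseteq x_1, \dots, x_n$ in $\confES{\es{P}\backslash a}$ I would set $y^\star = y \cup \set{a}$ and $x_i^\star = x_i \cup \set{a}$; then $y^\star \subseteq x_1^\star, \dots, x_n^\star$ in $\confES{\es{E}}$, and for every $I \subseteq \set{1, \dots, n}$ we have $y^\star \cup \bigcup_{i \in I} x_i^\star = \bigl(y \cup \bigcup_{i \in I} x_i\bigr) \cup \set{a}$. Substituting into the inclusion–exclusion sum and using the identity $v(z \cup \set{a}) = v(\set{a}) \cdot v'(z)$ termwise yields
\[
  \dropc{n}{v}{y^\star}{x_1^\star, \dots, x_n^\star} = v(\set{a}) \cdot \dropc{n}{v'}{y}{x_1, \dots, x_n}.
\]
Since $\es{P}$ is a probabilistic event structure the left-hand side is $\geq 0$, and $v(\set{a}) > 0$, so $\dropc{n}{v'}{y}{x_1, \dots, x_n} \geq 0$, which completes the proof. (If one wishes to stay in the paper's idiom, \cite[Proposition 5]{winskel14} permits restricting attention to the covering case $y \chain x_1, \dots, x_n$ first, but the displayed identity is valid for arbitrary configurations and makes that reduction unnecessary.)

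The inclusion–exclusion bookkeeping is routine; the only step that genuinely uses the hypotheses is the configuration-lifting claim $x \cup \set{a} \in \confES{\es{E}}$, which is essentially a configuration-level restatement of Lemma~\ref{lem:rem-init-es1} and is where both $a \in \init{\es{E}}$ and the precise definition of $E'$ are needed. I do not anticipate any obstacle beyond carefully writing out that equivalence.
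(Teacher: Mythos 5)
Your proof is correct and follows essentially the same route as the paper: delegate the event-structure axioms to Lemma~\ref{lem:rem-init-es1}, check $v'(\emptyset)=1$ directly, and reduce the drop condition for $v'$ at $y, x_1,\dots,x_n$ to that of $v$ at $y\cup\set{a}, x_1\cup\set{a},\dots,x_n\cup\set{a}$ by factoring $v(\set{a})$ out of the inclusion--exclusion sum. If anything you are slightly more careful than the paper, since you make explicit the configuration-lifting equivalence and the fact that the identity $v(z\cup\set{a})=v(\set{a})\cdot v'(z)$ also holds (as $0=0$) off configurations, which is what justifies the termwise substitution.
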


\subsection{Results}
In this section we present the results obtained in the probabilistic realm. For this, we interpret
the special command $\checkmark$ as the empty probabilistic event structure whose valuation of the
empty configuration is $1$, \ie\
$\mf{\checkmark} = (\emptyset, \emptyset, \emptyset, v(\emptyset) = 1)$.

We begin by showing that the sequential, parallel, and probabilistic operators are monotonic with
respect to Definition~\ref{def:pes-sub2}.
\begin{lemma}\label{lem:seq-mono2}
  Let $\es{P}_1, \es{P}'_1, \es{P}_2, \es{P}'_2$ be probabilistic event structures.  If
  $\es{P}_1 \sqsubseteq \es{P}'_1$ and $\es{P}_2 \sqsubseteq \es{P}'_2$ then
  $\seq{\es{P}_1}{\es{P}_2} \sqsubseteq \seq{\es{P}'_1}{\es{P}'_2}$.
\end{lemma}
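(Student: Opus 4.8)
The plan is to split the goal along the four defining clauses of $\sqsubseteq$ in Definition~\ref{def:pes-sub2}. The first three clauses — $\underline{E}\subseteq\underline{E'}$ and the clauses pinning down $\leq$ and $\#$ — involve no probabilities, and the underlying event structures of $\seq{\es{P}_1}{\es{P}_2}$ and $\seq{\es{P}'_1}{\es{P}'_2}$ are literally $\seq{\es{E}_1}{\es{E}_2}$ and $\seq{\es{E}'_1}{\es{E}'_2}$ (Definition~\ref{def:pes-seq2} reuses the event-structure part of Definition~\ref{def:pes-seq1} verbatim), while $\es{P}_i\sqsubseteq\es{P}'_i$ restricts on events/causality/conflict to exactly $\es{E}_i\sqsubseteq\es{E}'_i$. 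So for those three clauses I would just invoke Lemma~\ref{lem:seq-mono1}. The remaining work is the valuation clause: for all $x\in\confES{\seq{\es{P}_1}{\es{P}_2}}$ and $y\in\confES{\seq{\es{P}'_1}{\es{P}'_2}}$ with $\underline{x}\subseteq\underline{y}$, show $v(x)\geq v'(y)$, where $v,v'$ are the valuations built by Definition~\ref{def:pes-seq2}.

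I would handle this by the case split coming from Definition~\ref{def:pes-seq2}: each of $x$ and $y$ is either (i) already a configuration of the left factor, or (ii) of the mixed form $x_1\cup(x_2\times\{x_1\})$ with $x_1$ a maximal configuration of the left factor and $x_2$ a configuration of the right factor. The structural fact to establish first is that this decomposition is canonical and that $\underline{x}\subseteq\underline{y}$ restricts to each factor. Concretely: if a configuration of a sequential composition contains a right-factor event, then its intersection with the left event set is a maximal configuration of the left factor (by down-closure together with maximality), and every right-factor event it contains carries that same maximal configuration as its tag (again by down-closure and conflict-freeness). Combined with disjointness of the event namespaces of the two factors — which, using $\underline{E_1}\subseteq\underline{E'_1}$ and $\underline{E_2}\subseteq\underline{E'_2}$, also makes $E_1$ disjoint from $\underline{E'_2}$ and $E_2$ disjoint from $\underline{E'_1}$ — the inclusion $\underline{x}\subseteq\underline{y}$ splits into the corresponding inclusions of left-parts and of right-parts.

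With that in place the four subcases are each a one-liner. If both $x$ and $y$ are left configurations, then $v(x)=v_1(x)\geq v'_1(y)=v'(y)$ straight from $\es{P}_1\sqsubseteq\es{P}'_1$. If $x$ is left and $y=y_1\cup(y_2\times\{y_1\})$, then $\underline{x}\subseteq y_1$ (the right-part of $y$ lives in a namespace disjoint from $E_1$), so $v(x)=v_1(x)\geq v'_1(y_1)\geq v'_1(y_1)\cdot v'_2(y_2)=v'(y)$, the last step because $v'_2(y_2)\in[0,1]$ (indeed $v'_2(y_2)\leq v'_2(\emptyset)=1$ by the decreasing property noted after Definition~\ref{def:prob-es}). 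The case where $x$ is mixed and $y$ is a left configuration is vacuous: the nonempty right-part of $x$ cannot embed into $\underline{E'_1}$, which is disjoint from $\underline{E'_2}\supseteq\underline{E_2}$. Finally, if $x=x_1\cup(x_2\times\{x_1\})$ and $y=y_1\cup(y_2\times\{y_1\})$, then $\underline{x_1}\subseteq\underline{y_1}$ and $\underline{x_2}\subseteq\underline{y_2}$ yield $v_1(x_1)\geq v'_1(y_1)$ and $v_2(x_2)\geq v'_2(y_2)$ from the two hypotheses, and since all four numbers are non-negative we may multiply the inequalities to get $v(x)=v_1(x_1)v_2(x_2)\geq v'_1(y_1)v'_2(y_2)=v'(y)$.

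I expect the main obstacle to be bookkeeping rather than anything conceptual: making the canonical-decomposition claim and the namespace-disjointness claim precise enough that $\underline{x}\subseteq\underline{y}$ genuinely restricts to each factor, and in particular being careful that the left-parts $x_1,y_1$ and the right-parts $x_2,y_2$ are compared through $\underline{\cdot}$ consistently on both sides, so that "$x_1$ maximal in $\es{P}_1$ and $\underline{x_1}\subseteq\underline{y_1}$" is exactly the shape in which the valuation clause of $\es{P}_1\sqsubseteq\es{P}'_1$ fires. Once the splitting is justified, every inequality reduces to a component hypothesis plus $[0,1]$-boundedness of the valuations.
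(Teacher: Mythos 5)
Your proposal is correct and follows essentially the same route as the paper's proof: both defer the event/causality/conflict clauses to Lemma~\ref{lem:seq-mono1} and then case-split the valuation clause according to whether each configuration lies in the left factor or has the mixed form $x_1\cup(x_2\times\{x_1\})$. Your handling of the final case is in fact slightly cleaner — you multiply the two non-negative inequalities directly, whereas the paper divides by $v_1(x_1)$ and cases on the ratio, which silently assumes $v_1(x_1)\neq 0$ — and you make explicit the vacuous case ($x$ mixed, $y$ left) that the paper omits.
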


\begin{lemma}\label{lem:probc-mono2}
  Let $\es{P}_1, \es{P}'_1, \es{P}_2, \es{P}'_2$ be probabilistic event structures.  If
  $\es{P}_1 \sqsubseteq \es{P}'_1$ and $\es{P}_2 \sqsubseteq \es{P}'_2$ then
  $\probC{\es{P}_1}{p}{\es{P}_2} \sqsubseteq \probC{\es{P}'_1}{p}{\es{P}'_2}$.
\end{lemma}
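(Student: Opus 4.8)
The plan is to unfold Definition~\ref{def:pes-sub2}: we must verify the three conditions on the underlying event structure together with the valuation inequality. Write $\es{P}_i = \pespq{\es{E}_i}{v_i}$, $\es{P}'_i = \pespq{\es{E}'_i}{v'_i}$, $\probC{\es{P}_1}{p}{\es{P}_2} = \pespq{\es{E}}{v}$, and $\probC{\es{P}'_1}{p}{\es{P}'_2} = \pespq{\es{E}'}{v'}$. First I would dispatch the event-structure part. The construction in Definition~\ref{def:pes-prob2} differs from the non-deterministic composition of Definition~\ref{def:pes-nd1} only by the extra bottom event $\tau$, which is common to both sides and lies below everything; hence $\underline{E} = \set{\tau} \cup \underline{E_1} \cup \underline{E_2} \subseteq \set{\tau} \cup \underline{E'_1} \cup \underline{E'_2} = \underline{E'}$ by the hypotheses, and the biconditionals for $\leq$ and $\#$ are checked exactly as in Lemma~\ref{lem:nd-mono1}, splitting on whether each event is $\tau$, lies in $E_1$, or lies in $E_2$ (the cross cases $E_1$ versus $E_2$ being vacuous since those events are mutually in conflict, and $\tau$ being below everything and in no conflict).

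The real content is the valuation inequality: for $x \in \mathcal{C}(\probC{\es{P}_1}{p}{\es{P}_2})$ and $y \in \mathcal{C}(\probC{\es{P}'_1}{p}{\es{P}'_2})$ with $\underline{x} \subseteq \underline{y}$ we need $v(x) \geq v'(y)$. Reading the three clauses of Definition~\ref{def:pes-prob2} with the clause $x = \set{\tau} \vee x = \emptyset$ taking priority, I would argue as follows. If $x = \emptyset$ or $x = \set{\tau}$ then $v(x) = 1$, and $v'(y) \leq v'(\emptyset) = 1$ because configuration-valuations are decreasing ($x \subseteq y$ implies $v(x) \geq v(y)$, a direct consequence of the inequality in Definition~\ref{def:prob-es}), so we are done. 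Otherwise $x$ is conflict-free and contains an event $e$ of $E_1$ or of $E_2$ but not both; say $e \in E_1$ (the other case is symmetric). Then $e \in \underline{x} \subseteq \underline{y}$, so $e \in y$ (events of $E_1$ carry no copies), and since every event of $E'_2$ is in conflict with $e$ while $y$ is conflict-free, $y$ contains no event of $E'_2$; moreover down-closure forces $\tau \in x$ and $\tau \in y$. Consequently $x \backslash \tau \in \mathcal{C}(\es{P}_1)$ and $y \backslash \tau \in \mathcal{C}(\es{P}'_1)$ are both non-empty, so $v(x) = p \cdot v_1(x \backslash \tau)$ and $v'(y) = p \cdot v'_1(y \backslash \tau)$. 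Removing $\tau$ preserves the inclusion, $\underline{x \backslash \tau} \subseteq \underline{y \backslash \tau}$, so by $\es{P}_1 \sqsubseteq \es{P}'_1$ we get $v_1(x \backslash \tau) \geq v'_1(y \backslash \tau)$, and multiplying by $p > 0$ yields $v(x) \geq v'(y)$.

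I expect the main obstacle to be the bookkeeping that pins down which branch of the probabilistic choice $y$ lives in from the corresponding fact about $x$: the key observation is that $\underline{x} \subseteq \underline{y}$, conflict-freeness of $y$, and the conflict between the two sides together force $y$ onto the same side as $x$, so the two valuations are rescaled by the same factor. A small point worth flagging is that $p \in\ ]0,1[$ is genuinely used: $p > 0$ (and $1-p > 0$ in the symmetric case) is what lets the inequality survive multiplication, and $p \neq 1$ is implicitly what makes the priority reading of the clauses of Definition~\ref{def:pes-prob2} the consistent one.
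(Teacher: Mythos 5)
Your proof is correct and follows essentially the same route as the paper: the event-structure conditions are dispatched as in the non-deterministic case, and the valuation inequality is reduced to the corresponding inequality in $\es{P}_1 \sqsubseteq \es{P}'_1$ (resp.\ $\es{P}_2 \sqsubseteq \es{P}'_2$) after stripping $\tau$ and multiplying by $p$ (resp.\ $1-p$). In fact you are somewhat more careful than the paper, which simply assumes in its case split that $x$ and $y$ lie on the same branch and omits the degenerate case $x \in \set{\emptyset, \set{\tau}}$, whereas you justify both points explicitly.
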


\begin{lemma}\label{lem:conc-mono2}
  Let $\es{P}_1, \es{P}'_1, \es{P}_2, \es{P}'_2$ be probabilistic event structures.  If
  $\es{P}_1 \sqsubseteq \es{P}'_1$ and $\es{P}_2 \sqsubseteq \es{P}'_2$ then
  $\conc{\es{P}_1}{\es{P}_2} \sqsubseteq \conc{\es{P}'_1}{\es{P}'_2}$.
\end{lemma}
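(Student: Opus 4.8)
The plan is to follow exactly the template of Lemma~\ref{lem:probc-mono2} and Lemma~\ref{lem:seq-mono2}: split $\sqsubseteq$ into its event-structure part and its valuation part, dispatch the former by a reference to the already-established Lemma~\ref{lem:conc-mono1}, and then do the valuation bookkeeping by hand. Concretely, I would set $\es{P}_1 = \pespq{\es{E}_1}{v_1}$, $\es{P}'_1 = \pespq{\es{E}'_1}{v'_1}$, $\es{P}_2 = \pespq{\es{E}_2}{v_2}$, $\es{P}'_2 = \pespq{\es{E}'_2}{v'_2}$, $\conc{\es{P}_1}{\es{P}_2} = \pespq{\es{E}}{v}$, and $\conc{\es{P}'_1}{\es{P}'_2} = \pespq{\es{E}'}{v'}$; since the underlying event structures of the parallel compositions are built exactly as in Definition~\ref{def:pes-conc1}, Lemma~\ref{lem:conc-mono1} immediately supplies $\underline{E} \subseteq \underline{E'}$ together with the two equivalences for $\leq$ and $\#$. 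So the only remaining obligation is the valuation clause: for every $x \in \confES{\conc{\es{P}_1}{\es{P}_2}}$ and $y \in \confES{\conc{\es{P}'_1}{\es{P}'_2}}$ with $\underline{x} \subseteq \underline{y}$, show $v(x) \geq v'(y)$.

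For that, I would expand both sides by Definition~\ref{def:pes-conc2}, getting $v(x) = v_1(x \cap E_1)\cdot v_2(x \cap E_2)$ and $v'(y) = v'_1(y \cap E'_1)\cdot v'_2(y \cap E'_2)$, where $x \cap E_i \in \confES{\es{P}_i}$ and $y \cap E'_i \in \confES{\es{P}'_i}$ are the projections onto the two summands of the disjoint union. The key small step is to note that $\underline{x}\subseteq\underline{y}$ restricts componentwise, \ie\ $\underline{x \cap E_1} \subseteq \underline{y \cap E'_1}$ and $\underline{x\cap E_2}\subseteq \underline{y\cap E'_2}$ --- this uses that the $E_1$- and $E_2$-labelled events stay separate in a disjoint union, plus $\underline{E_i}\subseteq\underline{E'_i}$ from the hypotheses. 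Feeding these two inclusions into the valuation clauses of $\es{P}_1\sqsubseteq\es{P}'_1$ and $\es{P}_2\sqsubseteq\es{P}'_2$ gives $v_1(x\cap E_1)\geq v'_1(y\cap E'_1)$ and $v_2(x\cap E_2)\geq v'_2(y\cap E'_2)$.

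To finish I would combine these two inequalities: configuration-valuations are nonnegative (being $[0,1]$-valued; this is also the $n=0$ instance of Equation~\ref{eq:prob-es-condition}, which reads $v(y)\geq 0$), so by monotonicity of multiplication on $\Rz$ one gets $v_1(x\cap E_1)\cdot v_2(x\cap E_2) \geq v'_1(y\cap E'_1)\cdot v_2(x\cap E_2) \geq v'_1(y\cap E'_1)\cdot v'_2(y\cap E'_2)$, that is $v(x)\geq v'(y)$, which closes the proof.

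I do not expect any serious obstacle here: unlike Lemma~\ref{lem:seq-mono2}, the parallel construction introduces neither copies of events nor any division, so there is no case split on the shape of configurations and no ratio manipulation. The single point that deserves an explicit sentence is the componentwise restriction of $\underline{x}\subseteq\underline{y}$, which is the only place the disjoint-union structure of Definition~\ref{def:pes-conc2} is genuinely used.
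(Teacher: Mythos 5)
Your proposal is correct and follows essentially the same route as the paper's proof: reduce the event-structure clauses to Lemma~\ref{lem:conc-mono1}, expand $v$ and $v'$ as products of the component valuations on the projections $x \cap E_i$ and $y \cap E'_i$, and multiply the two componentwise inequalities. You are in fact slightly more careful than the paper, which silently assumes both the componentwise restriction of $\underline{x} \subseteq \underline{y}$ and the nonnegativity needed to multiply the inequalities.
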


We now show how probabilistic event structures composed by the sequential and concurrent operators
interact with the removal of initial events.
\begin{lemma}\label{lem:seq-rem-init2}  
  Let $\es{P}_1$ and $\es{P}_2$ be probabilistic event structures.  Consider
  $\seq{\es{P}_1}{\es{P}_2}$ such that $l \in \init{\seq{\es{P}_1}{\es{P}_2}}$. Then
  $(\seq{\es{P}_1}{\es{P}_2}) \backslash l \equiv \seq{(\es{P}_1\backslash l)}{\es{P}_2}$.
\end{lemma}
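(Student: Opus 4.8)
The plan is to reduce the statement to its purely event-structural counterpart, Lemma~\ref{lem:seq-rem-init1}, and then treat the valuations by a direct computation. Recall that $\equiv$ for probabilistic event structures (Definition~\ref{def:pes-sub2}) asks for the three relational conditions on events together with the valuation clause, in both directions. Writing $\es{P}_i = \pespq{\es{E}_i}{v_i}$, Lemma~\ref{lem:seq-rem-init1} already supplies the three relational conditions relating the underlying event structures of $(\seq{\es{P}_1}{\es{P}_2}) \backslash l$ and $\seq{(\es{P}_1\backslash l)}{\es{P}_2}$ in both directions. As in that proof, the hypothesis $l \in \init{\seq{\es{P}_1}{\es{P}_2}}$ forces $l \in \init{\es{P}_1}$ and $l \in E_1$, so $\set{l} \in \confES{\es{P}_1}$ and, writing $v_{12}$ for the valuation of $\seq{\es{P}_1}{\es{P}_2}$, we get $v_{12}(\set{l}) = v_1(\set{l})$; moreover this value is nonzero (otherwise $(\seq{\es{P}_1}{\es{P}_2})\backslash l$ would not be defined, by Definition~\ref{def:rem-init2}), so every quotient written below makes sense. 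It therefore remains only to check the valuation clause.

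The heart of the argument is the identity that, writing $v$ for the valuation of $(\seq{\es{P}_1}{\es{P}_2}) \backslash l$ and $v'$ for that of $\seq{(\es{P}_1\backslash l)}{\es{P}_2}$, one has $v(w) = v'(w')$ whenever $w \in \confES{(\seq{\es{P}_1}{\es{P}_2}) \backslash l}$ and $w' \in \confES{\seq{(\es{P}_1\backslash l)}{\es{P}_2}}$ satisfy $\underline{w} = \underline{w'}$. I would prove this by the case split of Definition~\ref{def:pes-seq2} applied to $w \cup \set{l}$. If $w \cup \set{l} \in \confES{\es{P}_1}$, then $v(w) = v_{12}(w\cup\set{l})/v_1(\set{l}) = v_1(w\cup\set{l})/v_1(\set{l})$, while $w' = w \in \confES{\es{P}_1 \backslash l}$ and $v'(w') = v_1(w\cup\set{l})/v_1(\set{l})$ directly from Definition~\ref{def:rem-init2}. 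Otherwise $w \cup \set{l} = x_1 \cup (x_2 \times \set{x_1})$ with $x_1 \in \confmax{\es{P}_1}$ (hence $l \in x_1$, as $l$ is initial) and $x_2 \in \confES{\es{P}_2}$; then $v(w) = v_1(x_1)\cdot v_2(x_2)/v_1(\set{l})$, and, using that $x_1 \backslash l \in \confmax{\es{P}_1 \backslash l}$ (removing the initial element of a maximal configuration, exactly as observed in the proof of Lemma~\ref{lem:seq-rem-init1}), the corresponding $w'$ is $(x_1 \backslash l) \cup (x_2 \times \set{x_1 \backslash l})$ with $v'(w') = \big(v_1(x_1)/v_1(\set{l})\big)\cdot v_2(x_2)$. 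The two sides coincide in each case.

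Granting this identity, both inclusions of $\equiv$ are obtained uniformly. Suppose $\underline{x} \subseteq \underline{y}$ with $x$ a configuration of one of the two structures and $y$ a configuration of the other; let $\hat{y}$ be the configuration on the same side as $x$ with $\underline{\hat{y}} = \underline{y}$, so that the valuation of $y$ equals that of $\hat{y}$ by the identity above. It then suffices to show that within a single one of these structures $\underline{x} \subseteq \underline{\hat{y}}$ implies $v(x) \geq v(\hat{y})$; this follows by a three-way case analysis (both configurations in $\confES{\es{P}_1}$, one of each kind, or both of the form $z_1 \cup (z_2 \times \set{z_1})$), where the only nontrivial point is that two $\subseteq$-comparable maximal configurations of $\es{P}_1$ must be equal, after which decreasingness of $v_1$ and $v_2$ together with $0 \leq v_2 \leq 1$ close the argument. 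I expect the main obstacle to be purely the bookkeeping — tracking the copies $(e_2, x_1)$ versus $(e_2, x_1 \backslash l)$, matching configurations across the two structures, and keeping the hypothesis $v_1(\set{l}) \neq 0$ in force wherever a quotient appears — rather than any substantial probabilistic estimate; once the combinatorial facts about maximal configurations are imported from Lemma~\ref{lem:seq-rem-init1}, the valuation computation is a short unfolding of Definition~\ref{def:pes-seq2} and Definition~\ref{def:rem-init2}.
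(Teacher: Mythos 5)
Your proposal is correct and follows essentially the same route as the paper: delegate the three relational clauses to Lemma~\ref{lem:seq-rem-init1}, show by the case split of Definition~\ref{def:pes-seq2} that the two valuations agree on matching configurations (via the quotient by $v_1(\set{l})$), and then conclude $v(x)\geq v'(y)$ from the decreasingness of the valuation within a single probabilistic event structure. If anything, you are slightly more careful than the paper about the non-vanishing of $v_1(\set{l})$ and about reducing $\underline{x}\subseteq\underline{y}$ to an honest set inclusion before invoking decreasingness, but the substance is the same.
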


\begin{lemma}\label{lem:conc-rem-init2}  
  Let $\es{P}_1$ and $\es{P}_2$ be probabilistic event structures. Consider
  $\conc{\es{P}_1}{\es{P}_2}$ such that $l \in \init{\conc{\es{P}_1}{\es{P}_2}}$. Then
  \[
    (\conc{\es{P}_1}{\es{P}_2}) \backslash l \equiv
    \begin{cases}
      \conc{(\es{P}_1 \backslash l)}{\es{P}_2} & \text{ if } l \in \init{\es{P}_1} \\
      \conc{\es{P}_1}{(\es{P}_2 \backslash l)} & \text{ if } l \in \init{\es{P}_2}
    \end{cases}
  \]
\end{lemma}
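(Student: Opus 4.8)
The plan is to follow the same pattern used for the sequential case (Lemma~\ref{lem:seq-rem-init2}): invoke the already‑proved event‑structure–level result, Lemma~\ref{lem:conc-rem-init1}, to dispatch the first three clauses of Definition~\ref{def:pes-sub2}, and then reduce everything to a single computation on valuations. First I would note that if $l \in \init{\conc{\es{P}_1}{\es{P}_2}}$ then, by Definition~\ref{def:pes-conc2}, $l$ belongs to exactly one of the two disjoint summands; by symmetry it suffices to treat $l \in \init{\es{P}_1}$, in which case $\es{P}_2 \backslash l = \es{P}_2$ and we must prove $(\conc{\es{P}_1}{\es{P}_2}) \backslash l \equiv \conc{(\es{P}_1\backslash l)}{\es{P}_2}$. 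Writing $\es{P}_i = \pespq{\es{E}_i}{v_i}$, $\conc{\es{P}_1}{\es{P}_2} = \pespq{\es{E}}{v_{12}}$, $(\conc{\es{P}_1}{\es{P}_2})\backslash l = \pespq{\es{E}'}{v}$, $\es{P}_1 \backslash l = \pespq{\es{E}_1^l}{v_1^l}$, and $\conc{(\es{P}_1\backslash l)}{\es{P}_2} = \pespq{\es{E}^l}{v^l}$, Lemma~\ref{lem:conc-rem-init1} already yields $\es{E}' \equiv \es{E}^l$; since parallel composition introduces no copies, this settles the event/order/conflict clauses of Definition~\ref{def:pes-sub2} in both directions, so only the valuation clause remains. (Well‑definedness is consistent here: $v_{12}(\set{l}) = v_1(\set{l})\cdot v_2(\emptyset) = v_1(\set{l})$, so the left‑hand removal is defined iff the right‑hand one is.)

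The heart of the argument is the identity $v = v^l$ as functions on $\confES{\es{E}'} = \confES{\es{E}^l}$. Given such a configuration $x$, since $l \in E_1$ we have that $x \cup \set{l}$ is a configuration of $\conc{\es{P}_1}{\es{P}_2}$ with $(x\cup\set{l})\cap E_1 = (x\cap E_1)\cup\set{l}$ and $(x\cup\set{l})\cap E_2 = x\cap E_2$. Then Definition~\ref{def:rem-init2} together with Definition~\ref{def:pes-conc2} give
\begin{align*}
  v(x)
  &= \dfrac{v_{12}(x \cup \set{l})}{v_{12}(\set{l})}
   = \dfrac{v_1\bigl((x\cap E_1)\cup\set{l}\bigr) \cdot v_2(x \cap E_2)}{v_1(\set{l}) \cdot v_2(\emptyset)} \\
  &= v_1^l(x \cap E_1) \cdot v_2(x \cap E_2)
   = v^l(x),
\end{align*}
where the last step is Definition~\ref{def:pes-conc2} applied to $\conc{(\es{P}_1\backslash l)}{\es{P}_2}$, using that the $E_1$‑events of any configuration of $\es{E}'$ lie in $E_1^l$ so that $x \cap E_1 = x \cap E_1^l$.

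Finally, to obtain both sub‑PES inclusions I would invoke the monotonicity of configuration‑valuations noted right after Equation~\ref{eq:prob-es-condition} (equivalently, $\dropc{1}{v}{x}{y} \geq 0$ whenever $x \subseteq y$): given configurations $x,y$ with $\underline{x} \subseteq \underline{y}$ — and, in the absence of copies, $\underline{x}=x$ and $\underline{y}=y$, hence $x \subseteq y$ — we get $v(x) \geq v(y) = v^l(y)$, and symmetrically $v^l(x) \geq v(y)$. This verifies the valuation clause for $(\conc{\es{P}_1}{\es{P}_2})\backslash l \sqsubseteq \conc{(\es{P}_1\backslash l)}{\es{P}_2}$ and for the reverse inclusion, so the two probabilistic event structures are equivalent. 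The case $l \in \init{\es{P}_2}$ is identical after swapping the roles of the two components and using $\es{P}_1 \backslash l = \es{P}_1$. I expect the only genuinely delicate point to be the bookkeeping in the displayed computation — routing $l$ into the $E_1$‑component and keeping track of $v_2(\emptyset)=1$ — everything else being inherited from Lemma~\ref{lem:conc-rem-init1} and from the decreasing behaviour of valuations.
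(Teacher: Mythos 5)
Your proposal is correct and follows essentially the same route as the paper: dispatch the event/order/conflict clauses via Lemma~\ref{lem:conc-rem-init1}, reduce to the case $l \in \init{\es{P}_1}$ with $\es{P}_2 \backslash l = \es{P}_2$, establish the pointwise identity $v(x) = v_1^l(x \cap E_1)\cdot v_2(x \cap E_2) = v^l(x)$ by exactly the same chain of equalities, and then use the decreasing behaviour of configuration-valuations to obtain both inequalities required by Definition~\ref{def:pes-sub2}. Your explicit remarks on well-definedness ($v_{12}(\set{l}) = v_1(\set{l})$) and on $\underline{x} = x$ in the absence of copies are small additions the paper leaves implicit, not a different argument.
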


It is straightforward to see that:
\begin{lemma}\label{lem:conc-symmetric2}
  Let $\es{P}_1, \es{P}_2$ be probabilistic event structures.  Then
  $\conc{\es{P}_1}{\es{P}_2} = \conc{\es{P}_2}{\es{P}_1}$.
\end{lemma}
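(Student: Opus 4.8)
The plan is to reduce the statement, via Definition~\ref{def:pes-conc2}, to two elementary symmetry facts: the commutativity of the disjoint union $\uplus$ (with the $0$/$1$ bookkeeping tags kept implicit, exactly as in Lemma~\ref{lem:conc-symmetric1}) and the commutativity of multiplication on $[0,1]$. So the argument will be just as short as the one for Lemma~\ref{lem:conc-symmetric1}, with one extra line for the valuation.

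First I would write $\es{P}_i = \pespq{\es{E}_i}{v_i}$ for $i=1,2$ and unfold both $\conc{\es{P}_1}{\es{P}_2}$ and $\conc{\es{P}_2}{\es{P}_1}$ according to Definition~\ref{def:pes-conc2}. For the event-structure components this is precisely Lemma~\ref{lem:conc-symmetric1}: $E_1 \uplus E_2 = E_2 \uplus E_1$, $\leq_1 \uplus \leq_2 = \leq_2 \uplus \leq_1$ and $\#_1 \uplus \#_2 = \#_2 \uplus \#_1$. Hence $\conc{\es{E}_1}{\es{E}_2} = \conc{\es{E}_2}{\es{E}_1}$, and in particular the two parallel compositions have the same set of finite configurations, $\confES{\conc{\es{P}_1}{\es{P}_2}} = \confES{\conc{\es{P}_2}{\es{P}_1}}$.

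It then remains to check that the two valuations coincide on this common set of configurations. Fixing $x \in \confES{\conc{\es{P}_1}{\es{P}_2}}$, Definition~\ref{def:pes-conc2} assigns $x$ the value $v_1(x \cap E_1)\cdot v_2(x \cap E_2)$ in $\conc{\es{P}_1}{\es{P}_2}$ and the value $v_2(x \cap E_2)\cdot v_1(x \cap E_1)$ in $\conc{\es{P}_2}{\es{P}_1}$; these are equal because multiplication of reals is commutative. Thus the two probabilistic event structures agree on events, causal dependency, conflict, and valuation, so they are equal.

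I do not expect any real obstacle here: the statement is essentially the combination of Lemma~\ref{lem:conc-symmetric1} with commutativity of a product. The only point that needs to be flagged — and it is the same caveat already used for Lemma~\ref{lem:conc-symmetric1} — is that the equality is meant up to the canonical identification of the $0,1$ tags introduced by the formal definition of $\uplus$, which the paper has agreed to leave implicit throughout.
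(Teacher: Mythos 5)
Your proposal is correct and follows essentially the same route as the paper, which simply states that the result follows directly from Definition~\ref{def:pes-conc2}; you have merely made explicit the two ingredients (commutativity of $\uplus$ up to the implicit tags, and commutativity of multiplication for the valuation) that the paper leaves unsaid. No gaps.
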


From Definition~\ref{def:es-init-event}, we can infer that, in a probabilistic event structure
interpreting a command, any configuration consisting solely of an initial event has valuation
$1$. The following lemma confirms this intuition.
\begin{lemma}\label{lem:prob-init}
  Let $\com{C}$ be a command and $l \in \init{\mf{\com{C}}}$.  Then $v(\set{l}) = 1$.
\end{lemma}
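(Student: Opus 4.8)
The plan is to prove this by structural induction on the command $\com{C}$, following the clauses of Definition~\ref{def:den-sem2}; throughout write $\mf{\com{C}} = \pespq{\es{E}}{v}$ with $\es{E} = \pes{E}{\leq}{\#}$, and recall that any configuration-valuation sends $\emptyset$ to $1$. In the base cases $\com{C} = \com{skip}$ and $\com{C} = \com{a}$ the interpretation has a single event ($sk$, resp. $a$), which is therefore its unique initial event, and Definition~\ref{def:den-sem2} stipulates $v(\set{sk}) = 1$, resp. $v(\set{a}) = 1$, so there is nothing more to check. A trivial side induction also records that $E \neq \emptyset$ for every command, a fact I will use in the sequential case.

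For $\com{C} = \probC{\com{C}_1}{p}{\com{C}_2}$, Definition~\ref{def:pes-prob2} places $\tau$ below every event, so $\init{\mf{\com{C}}} = \set{\tau}$, and the valuation clause for $x = \set{\tau}$ gives $v(\set{\tau}) = 1$ directly. For $\com{C} = \conc{\com{C}_1}{\com{C}_2}$ with $\mf{\com{C}_i} = \pespq{\es{E}_i}{v_i}$, Definition~\ref{def:pes-conc2} makes $\leq$ the disjoint union of $\leq_1$ and $\leq_2$, so $\init{\mf{\com{C}}} = \init{\mf{\com{C}_1}} \uplus \init{\mf{\com{C}_2}}$; assume without loss of generality (using Lemma~\ref{lem:conc-symmetric2} for the other side) that $l \in \init{\mf{\com{C}_1}}$. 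Then $\set{l} \cap E_1 = \set{l}$ and $\set{l} \cap E_2 = \emptyset$, hence $v(\set{l}) = v_1(\set{l}) \cdot v_2(\emptyset) = v_1(\set{l})$, which equals $1$ by the induction hypothesis.

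The sequential case $\com{C} = \seq{\com{C}_1}{\com{C}_2}$ needs one extra step: identifying $\init{\mf{\com{C}}}$. Since $E_1 \neq \emptyset$, no configuration in $\confmax{\es{E}_1}$ is empty, so by Definition~\ref{def:pes-seq2} every element $(e_2,x) \in E_2 \times \confmax{\es{E}_1}$ has some $e_1 \in x$ with $e_1 \leq (e_2,x)$ and is therefore not initial; and $\leq$ restricted to $E_1$ is $\leq_1$, so the initial events of $\mf{\com{C}}$ are exactly those of $\mf{\com{C}_1}$. Thus $l \in \init{\mf{\com{C}_1}}$ and $\set{l} \in \confES{\mf{\com{C}_1}}$, so $\set{l}$ falls into the first branch of the valuation in Definition~\ref{def:pes-seq2}, yielding $v(\set{l}) = v_1(\set{l}) = 1$ by the induction hypothesis.

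The only genuinely non-mechanical point, and hence the main obstacle such as it is, is the computation of $\init{\seq{\es{E}_1}{\es{E}_2}}$ together with the observation that the singleton $\set{l}$ is already a configuration of $\mf{\com{C}_1}$, so that the correct branch of the case-split defining the sequential valuation applies; every other step is immediate from the relevant definition and from $v(\emptyset) = 1$.
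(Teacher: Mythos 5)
Your proof is correct and follows the same route as the paper: structural induction over the interpretation of commands, reducing each case to the induction hypothesis via the identification of initial events and the relevant branch of the composed valuation. You actually supply more detail than the paper's own proof, which leaves implicit the computation $v(\set{l}) = v_1(\set{l})$ in the sequential and parallel cases and the identification of $\init{\seq{\es{E}_1}{\es{E}_2}}$ with $\init{\es{E}_1}$; your filling in of those steps is sound.
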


We start by presenting two useful lemmas to prove small-soundness (Lemma~\ref{res:soundI-2}).  The
first lemma states that sequentially composing a sum of probabilistic event structures with another
probabilistic event structure is equal to first sequentially composing the probabilistic event
structures and then using the probabilistic operator.
\begin{lemma}\label{lem:sound-aux-seq-2}
  Let $\es{P}, \sum_i p_i \cdot \es{P}_i$ be probabilistic event structures.  Then
  $\seq{\left( \sum_i p_i \cdot \es{P}_i \right)}{\es{P}} =
  \sum_i p_i \cdot (\seq{\es{P}_i}{\es{P}})$
\end{lemma}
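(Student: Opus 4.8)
The plan is to prove this as a strict equality of probabilistic event structures, i.e.\ to show that the underlying sets of events, the causal orders, the conflict relations and the valuations of $\seq{\left( \sum_i p_i \cdot \es{P}_i \right)}{\es{P}}$ and $\sum_i p_i \cdot (\seq{\es{P}_i}{\es{P}})$ coincide, up to the implicit bookkeeping of the $0/1/i$ tags in the iterated $\uplus$ (as the paper does systematically, e.g.\ in Lemma~\ref{lem:conc-symmetric1}). So I would unfold Remark~\ref{rem:probc-n} and Definition~\ref{def:pes-seq2} on both sides and compare the four data. Write $\es{E}_i$ for the underlying event structure of $\es{P}_i$, $v_i$ for its valuation, $v$ for the valuation of $\es{P}$, and $\es{E}$ for its underlying event structure.

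The key preliminary step is to characterise the maximal configurations of $\sum_i p_i \cdot \es{P}_i$: every non-empty configuration must contain $\tau$ (since $\tau$ is below everything), and conflict-freeness forces all its remaining events into a single summand $E_i$, where they form a configuration of $\es{E}_i$; hence $\confmax{\sum_i p_i \cdot \es{P}_i} = \{\, \set{\tau}\cup z_i \mid 1 \le i \le n,\ z_i \in \confmax{\es{E}_i} \,\}$ (when some $\es{E}_i$ is empty, $\set{\tau}$ itself becomes maximal; this edge case is handled separately but is harmless). Consequently the copy events $E \times \confmax{\sum_i p_i \cdot \es{P}_i}$ produced on the left are exactly the pairs $(e, \set{\tau}\cup z_i)$ with $e \in E$ and $z_i \in \confmax{\es{E}_i}$, which under the identification $\set{\tau}\cup z_i \leftrightarrow z_i$ are precisely the copies $(e, z_i)$ living inside the $i$-th summand $\seq{\es{P}_i}{\es{P}}$ on the right. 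Together with the obvious match of the $\set{\tau}$ and $E_i$ parts, this gives the bijection on events.

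For the causal order the comparison is routine: $\tau$ is below everything on both sides — on the left because $\tau$ lies in every maximal configuration of $\sum_i p_i \cdot \es{P}_i$, so $\tau \leq (e, \set{\tau}\cup z_i)$ by the clause $e_1 \leq (e_2,x) \iff e_1 \in x$ — and inside each summand the order is verbatim that of $\seq{\es{P}_i}{\es{P}}$. The conflict relation is the delicate point, and I expect it to be the main obstacle. On the left, $\#$ is obtained by closing $\#_\Sigma$ upwards along $\leq$ and adding the same-tag copy conflicts coming from $\es{E}$, where $\#_\Sigma$ already contains all cross-summand pairs. I would split into three parts: (a) the part of $\#_\Sigma$ coming from some $\#_i$, whose upward closure within summand $i$ reconstitutes exactly the first component of the conflict of $\seq{\es{P}_i}{\es{P}}$ (using conflict heredity); (b) the cross-summand part of $\#_\Sigma$, whose upward closure makes \emph{every} event of summand $i$ conflict with \emph{every} event of summand $j$ for $i \neq j$, since every event of $\seq{\es{P}_i}{\es{P}}$ has an $E_i$-event in its down-closure — this matches the cross-summand clause on the right, and the care here is to verify this closure is neither too small nor too large; (c) the same-tag copy conflicts $(e_2,x)\#(e_2',x)$ for $e_2 \#_E e_2'$, which match directly. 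Checking (b), plus the empty-$\es{E}_i$ edge cases, is where the real work lies.

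Finally, for the valuation I would enumerate configurations: on both sides a configuration is $\emptyset$, $\set{\tau}$, $\set{\tau}\cup y_i$ with $y_i \in \confES{\es{E}_i}$, or $\set{\tau}\cup z_i \cup (x_2 \times \set{z_i})$ with $z_i \in \confmax{\es{E}_i}$ and $x_2 \in \confES{\es{E}}$. Using the valuation clauses of Remark~\ref{rem:probc-n} and Definition~\ref{def:pes-seq2}, the left side assigns these $1$, $1$, $p_i\, v_i(y_i)$, and $p_i\, v_i(z_i)\cdot v(x_2)$ respectively; the right side, computed via $v_\Sigma(\set{\tau}\cup w) = p_i \cdot v_{\seq{\es{P}_i}{\es{P}}}(w)$ followed by the sequential-valuation clause on $w$, yields exactly the same values (the overlap of the valuation clauses at $x = \set{\tau}$ is resolved the same way on both sides). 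Since events, order, conflict and valuation all agree, the two probabilistic event structures are equal, which is the claim.
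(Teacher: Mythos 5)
Your proposal is correct and takes the only reasonable approach -- unfolding Remark~\ref{rem:probc-n} and Definition~\ref{def:pes-seq2} on both sides and matching events, order, conflict and valuation componentwise. The paper's own proof is the single line ``follows directly from the respective definitions,'' so you are in effect supplying the argument the paper omits; your identification of $\set{\tau}\cup z_i$ with the tagged $z_i$, your analysis of the upward closure of the cross-summand conflicts (using that every copy event has an $E_i$-event in its down-closure), and your valuation table all check out.

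One caveat: the empty-summand case you set aside as ``harmless'' is not merely an edge case to be handled separately -- it is a genuine counterexample to the strict equality. If some $\es{P}_j$ is empty while another $\es{P}_i$ is not, then $\set{\tau}$ is \emph{not} maximal in $\sum_i p_i\cdot\es{P}_i$ (it extends by initial events of $E_i$), so the left-hand side produces no copies of $\es{P}$ for summand $j$; the right-hand side, however, contains $\seq{\es{P}_j}{\es{P}} \cong \es{P}$ as a full summand via $E\times\confmax{\emptyset}=E\times\set{\emptyset}$. The two sides then have different numbers of events. Your proof is therefore valid exactly under the hypothesis that all $\es{P}_i$ are nonempty (or all empty), which holds in every application the paper makes of this lemma (the $\es{P}_i$ are denotations of commands, never of $\checkmark$), but the lemma as stated -- and hence any proof of it -- needs that hypothesis made explicit.
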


The second lemma expresses a certain ``lax'' linearity condition of concurrent composition.
\begin{lemma}\label{lem:sound-aux-conc-2}
  Let $\es{P}, \sum_i p_i \cdot \es{P}_i$ be probabilistic event structures.
  \begin{enumerate}
  \item
    $\conc{\left( \sum_i p_i \cdot \es{P}_i \right)}{\es{P}} \sqsubseteq
    \sum_i p_i \cdot (\conc{\es{P}_i}{\es{P}})$
  \item $x \in \confmax{\conc{\left( \sum_i p_i \cdot \es{P}_i \right)}{\es{P}}}$ iff
    $x \in \confmax{\sum_i p_i \cdot (\conc{\es{P}_i}{\es{P}})}$ 
  \end{enumerate}
\end{lemma}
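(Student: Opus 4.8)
The statement has two parts: a containment, part (1), and a maximal-configuration correspondence, part (2). Throughout I would fix the abbreviations $\es{Q} = \sum_i p_i \cdot \es{P}_i$ (Remark~\ref{rem:probc-n}) and $\es{R}_i = \conc{\es{P}_i}{\es{P}}$ (Definition~\ref{def:pes-conc2}), writing $\es{P}_i = \pespq{\es{E}_i}{v_i}$ with event set $E_i$ and $\es{P} = \pespq{\es{E}_P}{v_P}$ with event set $E_P$, and begin by unfolding both sides. The structure $\conc{\es{Q}}{\es{P}}$ has events $\set{\tau} \uplus (\biguplus_i E_i) \uplus E_P$, with $\tau$ causally below every $E_i$-event, the blocks $E_i$ pairwise in conflict, $\tau$ concurrent with $E_P$, and configurations exactly the sets $A \cup B$ for $A \in \confES{\es{Q}}$ and $B \in \confES{\es{P}}$, carrying $v(x) = v_{\es{Q}}(x \cap E_Q)\cdot v_P(x \cap E_P)$. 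The structure $\sum_i p_i \cdot \es{R}_i$ has events $\set{\tau} \uplus \biguplus_i(E_i \uplus E_P)$ — one copy of $E_P$ per summand — with $\tau$ below everything, the $i$-th and $j$-th blocks ($i \neq j$) pairwise in conflict, and configurations exactly $\emptyset$, $\set{\tau}$, and the sets $\set{\tau} \cup (C \cup D)$ for some $i$ with $C \in \confES{\es{P}_i}$ and $D$ a copy of a configuration of $\es{P}$; on such a configuration the valuation is $p_i\cdot v_i(C)\cdot v_P(D)$. Forgetting the disjoint-union tags that distinguish the copies of $E_P$ — which is what $\underline{(\,\cdot\,)}$ does — the two event sets are identified, and I would use this identification throughout, exactly as Lemma~\ref{lem:seq-mono2} and Lemma~\ref{lem:conc-mono2} do.

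For part (1) I would verify the four clauses of Definition~\ref{def:pes-sub2}. The event-inclusion clause $\underline{E} \subseteq \underline{E'}$ is immediate (in fact an equality) from the setup, and the causal and conflict clauses follow directly from Definition~\ref{def:pes-conc2} and Remark~\ref{rem:probc-n}: every relation present in $\conc{\es{Q}}{\es{P}}$ — the $\leq_i$ and $\#_i$ inside the $E_i$; the $\leq_P$ and $\#_P$ inside $E_P$; the cross-conflicts $E_i\#E_j$; and $\tau\leq e$ for every $E_i$-event $e$ — is again present in $\sum_i p_i\cdot\es{R}_i$. (The right-hand side additionally makes $\tau$ precede the $E_P$-copies, which is exactly why the statement is a containment and not an equivalence.) The content of part (1) is therefore the valuation clause: for $x \in \confES{\conc{\es{Q}}{\es{P}}}$ and $y \in \confES{\sum_i p_i\cdot\es{R}_i}$ with $\underline{x} \subseteq \underline{y}$, I must show $v(x) \geq v'(y)$.

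I would do this by cases on what $x$ records of the probabilistic choice, using that configuration-valuations are decreasing (as noted after Definition~\ref{def:prob-es}). If $x \in \set{\emptyset, \set{\tau}}$ then $v(x) = 1 \geq v'(y)$. If $x \cap E_Q = \set{\tau}$ but $x \cap E_P \neq \emptyset$, then $v(x) = v_P(x\cap E_P)$; here $\underline{x}\subseteq\underline{y}$ forces $y = \set{\tau}\cup(C \cup D)$ with $\underline{x\cap E_P}\subseteq\underline{D}$, and since $D \neq \emptyset$ we have $v'(y) = p_j\cdot v_j(C)\cdot v_P(D)$, whence
\[
  v(x) = v_P(x\cap E_P) \;\geq\; v_P(D) \;\geq\; p_j\cdot v_j(C)\cdot v_P(D) = v'(y),
\]
the second inequality using $p_j\cdot v_j(C) \leq 1$. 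Finally, if $x\cap E_Q = \set{\tau}\cup z$ with $z \in \confES{\es{P}_i}$ nonempty, then $v(x) = p_i\cdot v_i(z)\cdot v_P(x\cap E_P)$; now $\underline{x}\subseteq\underline{y}$ forces $y$ into the same summand $i$, say $y = \set{\tau}\cup(z'\cup D)$ with $\underline{z}\subseteq\underline{z'}$, $\underline{x\cap E_P}\subseteq\underline{D}$ and $z' \neq \emptyset$, so $v'(y) = p_i\cdot v_i(z')\cdot v_P(D)$ and decreasingness of $v_i$ and of $v_P$ gives $v(x) \geq v'(y)$. The delicate case is the middle one — $x$ has not yet chosen a branch while $y$ has — and the slack $p_j \leq 1$ is precisely what makes it go through.

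For part (2) I would use two facts immediate from the definitions: the maximal configurations of a parallel composition $\conc{\es{A}}{\es{B}}$ are exactly the unions $m_A \cup m_B$ with $m_A \in \confmax{\es{A}}$ and $m_B \in \confmax{\es{B}}$ (there being no relations between $E_A$ and $E_B$); and the maximal configurations of $\sum_i p_i\cdot\es{A}_i$ are exactly the sets $\set{\tau}\cup m$ with $m \in \confmax{\es{A}_j}$ for some $j$ (after $\tau$ one is committed to a single conflicting summand). Applying the first then the second to $\conc{\es{Q}}{\es{P}}$, and the second then the first to $\sum_i p_i\cdot\es{R}_i$, both yield — after the copy-identification — the set $\set{\,\set{\tau}\cup m_i\cup b \mid i,\ m_i \in \confmax{\es{P}_i},\ b \in \confmax{\es{P}}\,}$, so the two agree. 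The only caveat is the degenerate situation in which some $\es{P}_i$ is the empty event structure, where the $\set{\tau}$-alone configurations need a separate, routine check. The main obstacle overall is the valuation clause of part (1) and, within it, the middle case; everything else is bookkeeping around the $E_P$-copies, handled just as the analogous non-probabilistic results of Section~\ref{subsec:results-1} are.
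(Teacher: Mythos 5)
Your proof is correct and follows essentially the same route as the paper's: both verify the clauses of Definition~\ref{def:pes-sub2} after identifying the per-summand copies of the events of $\es{P}$ (the paper does this by an explicit renaming of each such event $e$ to $(e,\set{\tau,l_i})$, you by appealing to the copy-forgetting operation $\underline{(\cdot)}$), and both obtain part (2) from the decomposition of maximal configurations of a parallel composition as unions $m_A\cup m_B$ and of a sum as $\set{\tau}\cup m$ for $m$ maximal in a single summand. Where you go beyond the paper is the valuation clause of part (1): the paper checks the inequality only for one representative pair of configurations ($\set{l}$ against $\set{\tau,l_x}$), framed as a ``proof by contradiction'', whereas you carry out the full case analysis over all pairs with $\underline{x}\subseteq\underline{y}$, using decreasingness of configuration-valuations and the slack $p_j\cdot v_j(C)\le 1$ in the branch where $x$ has not yet committed to a summand but $y$ has; this is the more complete argument and correctly isolates the only non-trivial case. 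The single nit is that your case split on $x\cap E_Q$ omits the subcase $x\cap E_Q=\emptyset$ with $x\cap E_P\neq\emptyset$ (such configurations exist on the left, since $\tau$ is not below the $\es{P}$-events there); it is handled verbatim by your second case, as $v(x)=v_P(x\cap E_P)$ there as well.
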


The following lemma helps proving soundness (Theorem~\ref{res:soundII-2}) and adequacy
(Theorem~\ref{res:adII-2}).  Intuitively it states that if the invisible action $\tau$ triggers a
transition from $\com{C}$, which is sequentially or concurrently composed, to
$\sum_i p_i \com{C}_i$, then any maximal configuration of $\mf{\com{C}}$ is a maximal configuration
in $\sum_i p_i \mf{\com{C}_i}$.  Furthermore, there exists a command $\mf{\com{C}_i}$ in
$\sum_i p_i \mf{\com{C}_i}$ where the valuation of the maximal configuration matches the valuation
of the maximal configuration in $\mf{\com{C}}$.
\begin{lemma}\label{lem:sound-aux-seq-conc-2}
  Let $\com{C} = \seq{\com{C}_1}{\com{C}_2}$ or $\com{C} = \conc{\com{C}_1}{\com{C}_2}$.
  If $\com{C} \rightarrow \sum_i p_i \cdot (\tau, \com{C}_i)$ then $x \in \confmax{\mf{\com{C}}}$ and
  $x \in \confmax{\sum_i p_i \cdot \mf{\com{C}_i}}$ such that $\exists \mf{\com{C}_i}\, .\, v(x) = v_i(x)$.
\end{lemma}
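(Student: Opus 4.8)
The plan is to argue by induction on the derivation of the transition $\com{C} \rightarrow \sum_i p_i \cdot (\tau, \com{C}_i)$ in Figure~\ref{fig:op-small2}, splitting on the last rule used. Since the label is $\tau$ and $\com{C}$ is of the form $\seq{\com{C}_1}{\com{C}_2}$ or $\conc{\com{C}_1}{\com{C}_2}$, that rule must be one of the propagation rules that push a $\tau$-transition of a subterm through $;$ or $||$, and the subterm that actually fires is again either a probabilistic choice (the base case) or a sequential/parallel composition (the inductive case). In every case $n = 2$, $p_1 = p$ and $p_2 = 1-p$, where $\probC{\com{D}_1}{p}{\com{D}_2}$ is the innermost redex that emits the $\tau$.

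First I would treat $\com{C} = \seq{\com{C}_1}{\com{C}_2}$. The rule forces $\com{C}_1 \rightarrow \sum_i p_i \cdot (\tau, \com{C}_i')$ with $\com{C}_i = \seq{\com{C}_i'}{\com{C}_2}$. If $\com{C}_1 = \probC{\com{D}_1}{p}{\com{D}_2}$, then unfolding Definition~\ref{def:den-sem2}, the ``explicit probabilities'' presentation of Definition~\ref{def:pes-prob2} (Remark~\ref{rem:probc-n}), and Lemma~\ref{lem:sound-aux-seq-2} yields
\[
  \mf{\com{C}} = \seq{(p\cdot\mf{\com{D}_1} + (1-p)\cdot\mf{\com{D}_2})}{\mf{\com{C}_2}}
  = p\cdot(\seq{\mf{\com{D}_1}}{\mf{\com{C}_2}}) + (1-p)\cdot(\seq{\mf{\com{D}_2}}{\mf{\com{C}_2}})
  = \textstyle\sum_i p_i\cdot\mf{\com{C}_i},
\]
so the two probabilistic event structures are literally equal and the claim is immediate. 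If instead $\com{C}_1$ is itself a $;$ or $||$, then the induction hypothesis relates $\mf{\com{C}_1}$ with $\sum_i p_i\cdot\mf{\com{C}_i'}$, and I would transport this through $\seq{-}{\mf{\com{C}_2}}$: Lemma~\ref{lem:sound-aux-seq-2} pulls the probabilistic sum outside the sequential composition, Lemma~\ref{lem:seq-mono2} preserves the ordering $\sqsubseteq$, and the description of maximal configurations of $\seq{\es{P}_1}{\es{P}_2}$ in Definition~\ref{def:pes-seq2} (a maximal configuration on the left glued with a maximal configuration on the right) lets agreement of maximal configurations and of valuations be lifted.

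The case $\com{C} = \conc{\com{C}_1}{\com{C}_2}$ is symmetric, except that the $\tau$ may fire on either side and that in the base case I would use Lemma~\ref{lem:sound-aux-conc-2} instead of Lemma~\ref{lem:sound-aux-seq-2}, and Lemma~\ref{lem:conc-mono2} in the inductive step. For the valuation part, given a maximal configuration $x$ I would note that $x \backslash \tau$ lies inside exactly one summand $\mf{\com{C}_i}$ (events of distinct summands being pairwise in conflict by Remark~\ref{rem:probc-n}), and then compute $v(x)$ on both sides by unfolding the valuation clauses of Definitions~\ref{def:pes-seq2}, \ref{def:pes-conc2}, and \ref{def:pes-prob2}, checking that the factor $p_i$ and the ``copy'' bookkeeping $(e_2, x_1)$ cancel out consistently. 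I expect the parallel case to be the main obstacle: Lemma~\ref{lem:sound-aux-conc-2} only provides $\sqsubseteq$ in one direction together with agreement of maximal configurations, not an equality of event structures, so the induction hypothesis has to be phrased precisely as ``same maximal configurations, same valuation on them'' rather than ``equal'', and one must verify that this weaker invariant stays stable under the wrappings $\conc{-}{\mf{\com{C}_2}}$ and $\seq{-}{\mf{\com{C}_2}}$, which is where Lemmas~\ref{lem:seq-mono2} and \ref{lem:conc-mono2} (via Definition~\ref{def:pes-sub2}) do the work.
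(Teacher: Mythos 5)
Your proposal follows essentially the same route as the paper's proof: induction on the transition derivation, reducing to the $\tau$-transition of the subterm that fires, and then transporting the maximal configurations and valuations through the surrounding composition using Lemma~\ref{lem:sound-aux-seq-2} (resp.\ Lemma~\ref{lem:sound-aux-conc-2}) together with the gluing of maximal configurations given by Definition~\ref{def:pes-seq2} (resp.\ Definition~\ref{def:pes-conc2}). If anything you are more careful than the paper, which silently applies the ``induction hypothesis'' to $\com{C}_1$ even when $\com{C}_1$ is a probabilistic choice (a form not covered by the lemma's statement), whereas you isolate that situation as an explicit base case and also flag that Lemma~\ref{lem:sound-aux-conc-2} only yields $\sqsubseteq$ plus agreement of maximal configurations rather than equality.
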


The following lemma tells us that the big-step semantics always progresses, \ie\ the system always
has at least one valid transition to make.
\begin{lemma}~\label{lem:prog-2}
  For any $\com{C}$, exists $\sum_i p_i (\omega_i, \com{C}_i)$ such that
  $\com{C} \twoheadrightarrow \sum_i p_i (\omega_i, \com{C}_i)$.
\end{lemma}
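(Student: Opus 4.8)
The plan is to derive the statement from a progress property of the small-step semantics and then invoke the first rule of Figure~\ref{fig:op-nstep2}. So the first step is an auxiliary claim: for every command $\com{C}$ there is a distribution $\mu$ with $\com{C} \rightarrow \mu$, and moreover $\mu$ is either of the form $1 \cdot (l', \com{D})$ with $l' \in L'$ and $\com{D} \in \com{C} \cup \set{\checkmark}$, or of the form $\sum_i p_i \cdot (\tau, \com{C}_i)$ with each $\com{C}_i$ a command. Carrying the shape information, rather than mere existence, is essential, because the inductive steps for $\seq{\cdot}{\cdot}$ and $\conc{\cdot}{\cdot}$ in Figure~\ref{fig:op-small2} dispatch on exactly these two shapes of the premise.

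I would prove the auxiliary claim by structural induction on $\com{C}$, using the grammar of Section~\ref{subsec:lan-2}, so that every subcommand is a genuine command (never $\checkmark$):
\begin{itemize}
\item $\com{skip}$ and $\com{a}$: the axioms $\com{skip} \rightarrow 1 \cdot (sk, \checkmark)$ and $\com{a} \rightarrow 1 \cdot (a, \checkmark)$ give a transition of the first shape.
\item $\probC{\com{C}_1}{p}{\com{C}_2}$: the axiom gives $\probC{\com{C}_1}{p}{\com{C}_2} \rightarrow p \cdot (\tau, \com{C}_1) + (1-p) \cdot (\tau, \com{C}_2)$, a transition of the second shape.
\item $\seq{\com{C}_1}{\com{C}_2}$: the induction hypothesis yields $\com{C}_1 \rightarrow \mu_1$ with $\mu_1$ of one of the two shapes. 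If $\mu_1 = 1 \cdot (l', \checkmark)$, use the rule producing $\seq{\com{C}_1}{\com{C}_2} \rightarrow 1 \cdot (l', \com{C}_2)$; if $\mu_1 = 1 \cdot (l', \com{C}'_1)$ for a command $\com{C}'_1$, use the rule producing $1 \cdot (l', \seq{\com{C}'_1}{\com{C}_2})$; if $\mu_1 = \sum_i p_i \cdot (\tau, \com{C}_i)$, use the rule producing $\sum_i p_i \cdot (\tau, \seq{\com{C}_i}{\com{C}_2})$. In each subcase the resulting distribution again has one of the two shapes.
\item $\conc{\com{C}_1}{\com{C}_2}$: symmetrically, apply the induction hypothesis to $\com{C}_1$ and use the three parallel-composition rules for the left component (the right-component rules would work just as well); again each output has one of the two shapes.
\end{itemize}

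Finally, given $\com{C} \rightarrow \mu$ from the auxiliary claim, the left rule of Figure~\ref{fig:op-nstep2} yields $\com{C} \twoheadrightarrow \mu$; writing $\mu$ as $\sum_i p_i \cdot (l', \com{C}_i)$ exhibits it in the required form $\sum_i p_i (\omega_i, \com{C}_i)$, with each $\omega_i = l'$ a one-letter word. The one step deserving care is the shape lemma: one must check, by inspecting the right-hand sides of all rules in Figure~\ref{fig:op-small2}, that every small-step transition is of one of the two listed forms, so that it can serve as the premise of the sequential and parallel rules. This is routine, but it is precisely what forces the strengthened induction hypothesis; everything else is a direct application of the operational rules.
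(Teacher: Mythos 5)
Your proof is correct for the grammar of Section~\ref{subsec:lan-2}, but it takes a genuinely different route from the paper's. You prove a one-step progress property (every command has a small-step transition, of one of two identifiable shapes) by a clean structural induction, and then lift it with the \emph{left} rule of Figure~\ref{fig:op-nstep2}, so your witness is always a distribution over one-letter words. The paper instead argues by ``induction over $\com{C}$'' where each case takes one small step, invokes the induction hypothesis on the \emph{residual} (e.g.\ on $\seq{\com{C}'_1}{\com{C}_2}$, or on $\com{D}'[X \leftarrow \rec{X}{\com{D}}]$), and composes with the \emph{right} rule of Figure~\ref{fig:op-nstep2}, producing longer traces. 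Since the lemma only asserts existence of \emph{some} $n$-step transition, and its sole use (in Theorem~\ref{res:adII-2}, to supply an arbitrary $n$-step transition for the untaken branch of a probabilistic choice) needs nothing more, your shorter witness is fully adequate. Your version also has the advantage that the induction is manifestly well-founded: the paper's recursive calls on $\seq{\com{C}'_1}{\com{C}_2}$ and on unfoldings of $\rec{X}{\com{D}}$ are not on structural subterms, so its induction would need a separate justification. Two small remarks: your ``shape lemma'' check is indeed necessary and you are right to flag it --- all right-hand sides in Figure~\ref{fig:op-small2} are either Dirac on a non-$\tau$ label or a $\tau$-labelled distribution, so the premises of the sequential and parallel rules are always satisfiable. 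And note that the paper's proof also includes a case for $\rec{X}{\com{D}}$ (anticipating Section~\ref{subsec:cyclic-beh-2}); if the lemma is meant to cover that extension you would need to add the corresponding case, where the only delicate point is that the body of a closed recursion may contain the bound variable free, so the progress claim must be phrased for such bodies as well.
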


We can now establish a relation between the small-step and denotational semantics. Recall that in
Figure~\ref{fig:op-small2}, transitions are labeled either by an atomic or invisible action. We can
use that to establish a relation between the small-step and denotational semantics. Intuitively, if
a transition from $\com{C}$ is triggered by an atomic action to $\com{C}'$, then we should observe
the same behavior as in Lemma~\ref{res:soundI-1}. Otherwise, if the invisible action triggers the
transition, we transition from $\com{C}$ to $\sum_i p_i \com{C}_i$. Having in mind
Lemma~\ref{lem:sound-aux-conc-2}, we expect $\mf{\com{C}}$ to be a sub-probabilistic event structure
of $\sum_i p_i \mf{\com{C}_i}$.
\begin{lemma}[Soundness I]\label{res:soundI-2}
  \begin{itemize}
  \item If $\com{C} \rightarrow 1 \cdot (l, \com{C}')$ then
    $\mf{\com{C'}} \equiv \mf{\com{C}} \backslash l$
  \item If $\com{C} \rightarrow \sum_i p_i (\tau, \com{C}_i)$ then
    $\mf{\com{C}} \sqsubseteq \sum_i p_i \mf{\com{C}_i}$
  \end{itemize}
\end{lemma}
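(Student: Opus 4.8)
The plan is to prove both bullets simultaneously by induction on the derivation of $\com{C} \rightarrow \mu$, case-splitting on the shape of the resulting distribution $\mu$: the first bullet covers the rules that produce a Dirac distribution $1 \cdot (l,\com{C}')$ with $l \neq \tau$, the second those producing $\sum_i p_i \cdot (\tau, \com{C}_i)$. Observe that the only rules of the first shape are those for $\com{skip}$, $\com{a}$, the first two rules for $\seq{\cdot}{\cdot}$ and the four $1\cdot(l,\cdot)$ rules for $\conc{\cdot}{\cdot}$ — there is no such rule for $\probC{\cdot}{p}{\cdot}$ — while the second shape arises only from the $\probC{\cdot}{p}{\cdot}$ rule and the three $\tau$-rules for $\seq{\cdot}{\cdot}$ and $\conc{\cdot}{\cdot}$. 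Hence each bullet's induction only invokes the same bullet's hypothesis, and the first bullet is essentially a transcription of the proof of Lemma~\ref{res:soundI-1}.

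For the first bullet: the cases $\com{skip}$ and $\com{a}$ give $\mf{\checkmark} \equiv \mf{\com{C}} \backslash l \equiv \emptyset$ directly. For $\seq{\com{C}_1}{\com{C}_2} \rightarrow 1\cdot(l,\com{C}_2)$ (from $\com{C}_1 \rightarrow 1\cdot(l,\checkmark)$) and $\seq{\com{C}_1}{\com{C}_2} \rightarrow 1\cdot(l,\seq{\com{C}'_1}{\com{C}_2})$ (from $\com{C}_1 \rightarrow 1\cdot(l,\com{C}'_1)$), I would combine the induction hypothesis with monotonicity of sequential composition w.r.t.\ $\sqsubseteq$ (Lemma~\ref{lem:seq-mono2}, used in both directions to transport $\equiv$, exactly as Lemma~\ref{lem:op-sim-1} does in the non-probabilistic setting), the identity $\seq{\mf{\checkmark}}{\mf{\com{C}_2}} = \mf{\com{C}_2}$, Lemma~\ref{lem:seq-rem-init2}, and Definition~\ref{def:den-sem2}. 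The parallel cases are identical but use Lemma~\ref{lem:conc-mono2}, Lemma~\ref{lem:conc-rem-init2}, the identity $\conc{\mf{\checkmark}}{\mf{\com{C}_2}} = \mf{\com{C}_2}$ (with Lemma~\ref{lem:conc-symmetric2} for the mirrored rules), and the remark that $l$ lies in only one of the two disjoint copies, so $\mf{\com{C}_j} = \mf{\com{C}_j} \backslash l$ for the other component.

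For the second bullet: the base case $\probC{\com{C}_1}{p}{\com{C}_2} \rightarrow p\cdot(\tau,\com{C}_1) + (1-p)\cdot(\tau,\com{C}_2)$ is immediate, since by the remark following Definition~\ref{def:pes-prob2} we have $\mf{\probC{\com{C}_1}{p}{\com{C}_2}} = \probC{\mf{\com{C}_1}}{p}{\mf{\com{C}_2}} = p\cdot\mf{\com{C}_1} + (1-p)\cdot\mf{\com{C}_2}$, which is an instance of $\sum_i p_i \cdot \mf{\com{C}_i}$, so the required $\sqsubseteq$ is in fact an equality. For $\seq{\com{C}_1}{\com{C}_2} \rightarrow \sum_i p_i\cdot(\tau,\seq{\com{C}_i}{\com{C}_2})$ from $\com{C}_1 \rightarrow \sum_i p_i\cdot(\tau,\com{C}_i)$: the i.h.\ gives $\mf{\com{C}_1} \sqsubseteq \sum_i p_i\cdot\mf{\com{C}_i}$; apply Lemma~\ref{lem:seq-mono2} with second argument $\mf{\com{C}_2}$ fixed to obtain $\seq{\mf{\com{C}_1}}{\mf{\com{C}_2}} \sqsubseteq \seq{(\sum_i p_i\cdot\mf{\com{C}_i})}{\mf{\com{C}_2}}$, rewrite the latter as $\sum_i p_i\cdot(\seq{\mf{\com{C}_i}}{\mf{\com{C}_2}})$ by Lemma~\ref{lem:sound-aux-seq-2}, and finish with Definition~\ref{def:den-sem2}. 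The parallel rule $\conc{\com{C}_1}{\com{C}_2} \rightarrow \sum_i p_i\cdot(\tau,\conc{\com{C}_i}{\com{C}_2})$ (and its mirror, reduced to this one by Lemma~\ref{lem:conc-symmetric2}) runs the same way, except that Lemma~\ref{lem:sound-aux-conc-2}(1) only provides $\conc{(\sum_i p_i\cdot\mf{\com{C}_i})}{\mf{\com{C}_2}} \sqsubseteq \sum_i p_i\cdot(\conc{\mf{\com{C}_i}}{\mf{\com{C}_2}})$, so I would chain the two $\sqsubseteq$ steps using transitivity of $\sqsubseteq$.

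The main obstacle is exactly the asymmetry between the two bullets: the $l$-transition case yields the equivalence $\equiv$, whereas for $\tau$-transitions one only gets $\sqsubseteq$, and the culprit is the parallel-composition case, where distributing a shared $\tau$ over a parallel product duplicates the right-hand component and rescales its valuation by the weights $p_i$, producing $v \ge v'$ rather than equality. This is precisely the content (and the reason for the contradiction-based proof) of Lemma~\ref{lem:sound-aux-conc-2}(1); getting the bookkeeping of copies right when invoking it, and checking that $\sqsubseteq$ composes transitively through these rewrites, is the only non-routine point — everything else is a faithful transcription of the proof of Lemma~\ref{res:soundI-1} into the probabilistic vocabulary.
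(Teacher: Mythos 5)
Your proposal matches the paper's proof essentially step for step: the same induction over the rules of Figure~\ref{fig:op-small2}, with the same lemmas (Lemma~\ref{lem:seq-mono2}, Lemma~\ref{lem:seq-rem-init2}, Lemma~\ref{lem:conc-mono2}, Lemma~\ref{lem:conc-rem-init2}, Lemma~\ref{lem:sound-aux-seq-2}, Lemma~\ref{lem:sound-aux-conc-2}) invoked in the corresponding cases, and the probabilistic-choice case handled as a direct equality. The only differences are presentational: you make explicit that the two bullets' inductions never cross-reference, and that transitivity of $\sqsubseteq$ is needed to chain the monotonicity step with Lemma~\ref{lem:sound-aux-conc-2}(1) in the parallel $\tau$-case, both of which the paper leaves implicit.
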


At this point, we can prove soundness, establishing a relation between the big-step and the
denotational semantics. Recall that the big-step semantics of a command $\com{C}$ returns a
probability distribution of possible outcomes.  Each outcome is a pair, formed by a word $\omega$
(describing the sequence of events that occurred), and either a command $\com{C}'$, with
instructions yet to execute, or the special symbol $\checkmark$, to indicate termination, \ie\
\[
  \com{C} \twoheadrightarrow \sum_i p_i (\omega_i, \checkmark) + \sum_j p_j (\omega_j, \com{C}_j)
\]
where, unlike Figure~\ref{fig:op-nstep2}, we separate computations that terminate from those that do
not.  Having in mind our experience with soundness in the non-deterministic case
(Theorem~\ref{res:soundII-1}), if a word leads to a terminal computation then there exists a maximal
configuration that matches the word. Consequently, the probability associated with that terminal
command equals the valuation of the maximal configuration.
\begin{theorem}[Soundness II]\label{res:soundII-2}
  If $\com{C} \twoheadrightarrow p_0 (\omega_0, \checkmark) + \sum_k p_k (\omega_{k}, \com{C}_k)$
  then exists $x_0 \in \confmax{\mf{\com{C}}}$ such that
  $\emptyset \stackrel{\omega_{0}}{\chain\ } x_{0}$ and $p_{0} = v(x_{0})$.
\end{theorem}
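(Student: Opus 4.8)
The plan is to follow the proof of Theorem~\ref{res:soundII-1} — induction on the length $|\omega_0|$ of the word reaching $\checkmark$ — while additionally tracking the valuation along the covering chain. For the base case $|\omega_0|=1$: the only small-step rules of Figure~\ref{fig:op-small2} whose target is $\checkmark$ have the form $\com{C} \rightarrow 1\cdot(l,\checkmark)$, so $p_0 = 1$, $\omega_0 = l$ and $l \in \init{\mf{\com{C}}}$. Lemma~\ref{res:soundI-2} gives $\mf{\com{C}}\backslash l \equiv \mf{\checkmark} = \emptyset$, which forces $\set{l}$ to be a maximal configuration reached by $\emptyset \cchain{l} \set{l}$, and Lemma~\ref{lem:prob-init} gives $v(\set{l}) = 1 = p_0$.

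For the inductive step $|\omega_0|>1$ write $\omega_0 = l':\omega'$. The derivation must use the right rule of Figure~\ref{fig:op-nstep2}, so $\com{C}\rightarrow\sum_i p_i\cdot(l',\com{C}_i)$ and each $\com{C}_i\twoheadrightarrow\sum_j q_{ij}\cdot(\omega_{ij},\com{C}_{ij})$, with the $\checkmark$-branch arising from some $i_0,j_0$ with $\com{C}_{i_0j_0}=\checkmark$, $\omega'=\omega_{i_0j_0}$ and $p_0 = p_{i_0}\cdot q_{i_0j_0}$. Since $|\omega'|<|\omega_0|$, the induction hypothesis applied to $\com{C}_{i_0}$ gives $z\in\confmax{\mf{\com{C}_{i_0}}}$ with $\emptyset\cchain{\omega'}z$ and $v_{i_0}(z)=q_{i_0j_0}$, writing $v_{i_0}$ for the valuation of $\mf{\com{C}_{i_0}}$.

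I would then split on $l'$. If $l'$ is a visible label $l$ (so the sum over $i$ is a singleton and $p_{i_0}=1$), Lemma~\ref{res:soundI-2} gives $\mf{\com{C}_{i_0}}\equiv\mf{\com{C}}\backslash l$; transporting $z$ along this equivalence — which preserves covering chains and valuations — to $z'\in\confmax{\mf{\com{C}}\backslash l}$, and setting $x_0=\set{l}\cup z'$, Definition~\ref{def:rem-init2} makes $x_0$ maximal in $\mf{\com{C}}$ with $\emptyset\cchain{l}\set{l}\cchain{\omega'}x_0$ and $v(x_0)=v(\set{l})\cdot v_{\mf{\com{C}}\backslash l}(z')=p_0$ by Lemma~\ref{lem:prob-init}. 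If $l'=\tau$, then $\com{C}\rightarrow\sum_i p_i\cdot(\tau,\com{C}_i)$ and Lemma~\ref{res:soundI-2} only gives $\mf{\com{C}}\sqsubseteq\sum_i p_i\cdot\mf{\com{C}_i}$; by the shape of this sum (Remark~\ref{rem:probc-n}), $\set{\tau}\cup z$ is maximal in $\sum_i p_i\cdot\mf{\com{C}_i}$, is reached by $\emptyset\cchain{\tau}\set{\tau}\cchain{\omega'}\set{\tau}\cup z$, and has valuation $p_{i_0}\cdot v_{i_0}(z)=p_0$; finally Lemma~\ref{lem:sound-aux-seq-conc-2}, together with Lemma~\ref{lem:sound-aux-seq-2}, Lemma~\ref{lem:sound-aux-conc-2} and the identity $\mf{\probC{\com{C}_1}{p}{\com{C}_2}}=p\cdot\mf{\com{C}_1}+(1-p)\cdot\mf{\com{C}_2}$, shows that $\set{\tau}\cup z$ is also a maximal configuration of $\mf{\com{C}}$ carrying the same valuation, so $x_0=\set{\tau}\cup z$ works.

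The main obstacle is precisely this $\tau$-subcase. Lemma~\ref{res:soundI-2} provides only $\mf{\com{C}}\sqsubseteq\sum_i p_i\cdot\mf{\com{C}_i}$, which is too weak on its own to push a maximal configuration (and its valuation) from the right-hand side back into $\mf{\com{C}}$, since $\sqsubseteq$ allows $\sum_i p_i\cdot\mf{\com{C}_i}$ to have strictly more events. The argument therefore leans entirely on Lemma~\ref{lem:sound-aux-seq-conc-2}, whose content is exactly that for a $\tau$-step the maximal configurations of the two sides coincide with matching valuations; I would be careful with its base case ($\com{C}$ a probabilistic sum, where the two event structures are literally equal) and its structural cases for $\seq{}{}$ and $\conc{}{}$, invoking that sequential composition distributes over $\sum_i p_i\cdot(-)$ on the left (Lemma~\ref{lem:sound-aux-seq-2}) and that parallel composition does so up to coincidence of maximal configurations (Lemma~\ref{lem:sound-aux-conc-2}). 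The remaining steps — that prepending an initial event to a covering chain and a maximal configuration behaves well and rescales the valuation by $v(\set{l})=1$ or by $p_{i_0}$ — are routine.
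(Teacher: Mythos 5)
Your proposal is correct and follows essentially the same route as the paper's proof: induction on $|\omega_0|$, a case split on whether the first label is visible or $\tau$, use of Lemma~\ref{res:soundI-2} together with Definition~\ref{def:rem-init2} in the visible case, and the sub-split between a probabilistic sum and $\seq{}{}$/$\conc{}{}$ resolved via Remark~\ref{rem:probc-n} and Lemma~\ref{lem:sound-aux-seq-conc-2} in the $\tau$ case. You correctly identify the same crux the paper relies on (that $\sqsubseteq$ alone is too weak for the $\tau$-step) and resolve it with the same auxiliary lemma.
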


We now center our attention to show adequacy. Similar to what was done previously, we check if a
relation exists between the denotational and small-step semantics. To do this, we will take
advantage of the initial events. If an initial event represents an atomic action, we should be in a
similar situation as in the small-adequacy lemma in the non-deterministic case
(Lemma~\ref{res:adI-1}). Otherwise, if the initial event represents an invisible action, we deal
with a transition triggered by a probabilistic command. In this case, we need to adapt the previous
situation to the probabilistic case, considering Lemma~\ref{lem:sound-aux-conc-2} and the similar
case in small-soundness (Lemma~\ref{res:soundI-2}).
\begin{lemma}[Adequacy I]\label{res:adI-2}
  Let $l' \in \init{\mf{\com{C}}}$.
  \begin{enumerate}
  \item If $l' \neq \tau$ then
    $\exists \com{C'} \in (\com{C} \cup \set{\checkmark})\, .\,
    \com{C} \rightarrow 1 \cdot (l', \com{C'})$ and $\mf{\com{C}} \backslash l' \equiv \mf{\com{C'}}$.
  \item If $l' = \tau$ then
    $\exists \com{C'}, \com{C''}\, .\, \exists e' \in \init{\mf{\com{C'}}}\, .\,
    \com{C} \rightarrow p \cdot \com{C'} + (1-p) \cdot \com{C''}$ and
    $\mf{\com{C}} \sqsubseteq p \cdot \mf{\com{C'}} + (1-p) \cdot \mf{\com{C''}}$, with
    $p = v(\set{\tau, e'})$.
  \end{enumerate}
\end{lemma}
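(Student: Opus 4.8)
The plan is to prove both clauses simultaneously by structural induction on $\com{C}$, following the pattern of Lemma~\ref{res:adI-1} but additionally tracking the invisible label $\tau$ and the valuation. In the base cases $\com{C} = \com{skip}$ and $\com{C} = \com{a}$ the unique initial event is $sk$ (resp.\ $a$), which is distinct from $\tau$, so clause~1 applies with $\com{C}' = \checkmark$: the rules in Figure~\ref{fig:op-small2} give $\com{skip} \rightarrow 1 \cdot (sk, \checkmark)$ (resp.\ $\com{a} \rightarrow 1 \cdot (a, \checkmark)$), and Definition~\ref{def:rem-init2} with $\mf{\checkmark} = \emptyset$ yields $\mf{\com{skip}} \backslash sk \equiv \mf{\checkmark}$. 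For $\com{C} = \probC{\com{C}_1}{p}{\com{C}_2}$ the only initial event is $\tau$, so clause~2 is the relevant one: I take $\com{C}' = \com{C}_1$, $\com{C}'' = \com{C}_2$, and any minimal $e' \in \init{\mf{\com{C}_1}}$ (which exists, since $\mf{\com{C}_1}$ is non-empty); the operational rule is $\probC{\com{C}_1}{p}{\com{C}_2} \rightarrow p \cdot (\tau, \com{C}_1) + (1-p) \cdot (\tau, \com{C}_2)$, Definition~\ref{def:den-sem2} gives $\mf{\probC{\com{C}_1}{p}{\com{C}_2}} = p \cdot \mf{\com{C}_1} + (1-p) \cdot \mf{\com{C}_2}$ so that $\sqsubseteq$ holds with equality, and the side condition $p = v(\set{\tau, e'})$ follows from Definition~\ref{def:pes-prob2} together with $v_1(\set{e'}) = 1$ (Lemma~\ref{lem:prob-init}).

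For $\com{C} = \seq{\com{C}_1}{\com{C}_2}$, Definition~\ref{def:pes-seq2} forces $l' \in \init{\mf{\com{C}_1}}$, so the induction hypothesis applies to $\com{C}_1$. If $l' \neq \tau$ it gives $\com{C}_1 \rightarrow 1 \cdot (l', \com{C}')$ with $\mf{\com{C}_1} \backslash l' \equiv \mf{\com{C}'}$; splitting on $\com{C}' = \checkmark$ versus $\com{C}' = \com{C}'_1$, the rules of Figure~\ref{fig:op-small2} give $\seq{\com{C}_1}{\com{C}_2} \rightarrow 1 \cdot (l', \com{C}_2)$ (resp.\ $\rightarrow 1 \cdot (l', \seq{\com{C}'_1}{\com{C}_2})$), and the denotational side is obtained by using Lemma~\ref{lem:seq-mono2} to propagate $\equiv$ through $\seq{\blank}{\mf{\com{C}_2}}$ and Lemma~\ref{lem:seq-rem-init2} to commute $\backslash l'$ with the composition, exactly as in Lemma~\ref{res:adI-1}. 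If $l' = \tau$, the induction hypothesis (clause~2) gives $\com{C}_1 \rightarrow p \cdot (\tau, \com{C}') + (1-p) \cdot (\tau, \com{C}'')$ with $e' \in \init{\mf{\com{C}'}}$ and $\mf{\com{C}_1} \sqsubseteq p \cdot \mf{\com{C}'} + (1-p) \cdot \mf{\com{C}''}$; the $\tau$-rule for sequential composition gives $\seq{\com{C}_1}{\com{C}_2} \rightarrow p \cdot (\tau, \seq{\com{C}'}{\com{C}_2}) + (1-p) \cdot (\tau, \seq{\com{C}''}{\com{C}_2})$, and semantically Lemma~\ref{lem:seq-mono2} followed by Lemma~\ref{lem:sound-aux-seq-2} yields $\mf{\seq{\com{C}_1}{\com{C}_2}} \sqsubseteq p \cdot \mf{\seq{\com{C}'}{\com{C}_2}} + (1-p) \cdot \mf{\seq{\com{C}''}{\com{C}_2}}$. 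Finally $e'$ is still initial in $\mf{\seq{\com{C}'}{\com{C}_2}}$ because Definition~\ref{def:pes-seq2} introduces no new causal predecessor below an event of $\mf{\com{C}'}$, whence $p = v(\set{\tau, e'})$ again by Definition~\ref{def:pes-prob2} and Lemma~\ref{lem:prob-init}.

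The case $\com{C} = \conc{\com{C}_1}{\com{C}_2}$ is entirely analogous: Definition~\ref{def:pes-conc2} gives $l' \in \init{\mf{\com{C}_1}}$ or $l' \in \init{\mf{\com{C}_2}}$, and by Lemma~\ref{lem:conc-symmetric2} it suffices to treat the first alternative. Applying the induction hypothesis to $\com{C}_1$ and splitting on $l' \neq \tau$ (with the sub-cases $\com{C}' = \checkmark$ and $\com{C}' = \com{C}'_1$) versus $l' = \tau$, the operational transitions come from the parallel-composition rules of Figure~\ref{fig:op-small2}; on the denotational side I use Lemma~\ref{lem:conc-mono2} and Lemma~\ref{lem:conc-rem-init2} when $l' \neq \tau$ (together with the observation $\mf{\com{C}_2} = \mf{\com{C}_2} \backslash l'$, valid because $l' \notin \init{\mf{\com{C}_2}}$ by disjointness), and Lemma~\ref{lem:conc-mono2} together with Lemma~\ref{lem:sound-aux-conc-2}(1) when $l' = \tau$. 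I expect the bulk of the work to be bookkeeping rather than a genuine difficulty. The two points to be careful about are: (i) the $\tau$-rules in Figure~\ref{fig:op-small2} are stated for general distributions $\sum_i p_i \cdot (\tau, -)$, but this causes no trouble since clause~2 of the induction hypothesis delivers exactly a two-summand distribution $p \cdot (\tau, \com{C}') + (1-p) \cdot (\tau, \com{C}'')$; and (ii) in each $\tau$-case one must verify that the witnessing event $e'$ remains an initial event of the composed denotation and that the identity $p = v(\set{\tau, e'})$ is preserved, both of which reduce to Lemma~\ref{lem:prob-init}. Note that in the $\tau$-cases the conclusion is only $\sqsubseteq$ and not $\equiv$, matching the statement, precisely because Lemma~\ref{lem:sound-aux-conc-2}(1) only provides $\sqsubseteq$.
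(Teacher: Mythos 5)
Your proposal is correct and follows essentially the same route as the paper's proof: structural induction on $\com{C}$, splitting each composite case on $l' \neq \tau$ versus $l' = \tau$, with the probabilistic-choice case handled directly via Definition~\ref{def:pes-prob2} and Lemma~\ref{lem:prob-init}, and the $\tau$-cases for sequencing and parallel discharged by Lemma~\ref{lem:seq-mono2} plus Lemma~\ref{lem:sound-aux-seq-2} (resp.\ Lemma~\ref{lem:conc-mono2} plus Lemma~\ref{lem:sound-aux-conc-2}). Your explicit remarks about the two-summand distribution sufficing for the general $\tau$-rules and about $e'$ remaining initial in the composed denotation are points the paper leaves implicit, but they do not change the argument.
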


Note that in Lemma~\ref{res:adI-2}, for the case where $l' = \tau$, we only consider the existence
of an initial event of a command $\com{C}'$. This is because to know the probability of $\com{C}''$
to occur, we only need to know the probability $p$ and subtract it from $1$.  Furthermore, the
result states $\mf{\com{C}} \sqsubseteq p \cdot \mf{\com{C'}} + (1-p) \cdot \mf{\com{C''}}$ instead
of $\mf{\com{C}} \equiv p \cdot \mf{\com{C'}} + (1-p) \cdot \mf{\com{C''}}$.  The use of
sub-similarity instead of similarity arises with a similar argument as the one given by changing
from ``equivalence'' to ``implication'' in the definition of sub-similar probabilistic event
structures Definition~\ref{def:pes-sub2}. For instance, consider the following small-step transition
$\conc{(\probC{a}{p}{b})}{c} \longrightarrow p \cdot (\tau, \conc{a}{c}) + (1-p) \cdot (\tau,
\conc{b}{c})$. Here, the action $c$ is duplicated on both branches of the probabilistic choice.  Now
recall Figure~\ref{fig:probES-lem-1}, corresponding to $\mf{\conc{(\probC{a}{p}{b})}{c}}$, and
Figure~\ref{fig:probES-lem-1}, corresponding to
$p \cdot \mf{\conc{a}{c}} + (1-p) \cdot \mf{\conc{b}{c}}$.  In the former figure, $c$ is concurrent
with every event, while in the latter figure, the copies of $c$, $c'$ and $c''$ causally depend on
$\tau$, $c'$ is in conflict with $b$, and $c''$ is in conflict with $a$.  These new relations are
not present in $\mf{\conc{(\probC{a}{p}{b})}{c}}$.  Because the duplication of events in the
probabilistic branches introduces new causal and conflict relations, the probabilistic event
structure obtained after the transition cannot be similar to the original one, it can only be
sub-similar. Thus, the usage of $\sqsubseteq$ in the second case of Lemma~\ref{res:adI-2}.

We are ready to demonstrate the equivalence between the denotational and big-step semantics. The
reasoning is similar to the one presented in Theorem~\ref{res:soundII-2}.
\begin{theorem}[Adequacy II]\label{res:adII-2}
  For all $x_0 \in \confmax{\mf{\com{C}}}$, if
  $\emptyset \stackrel{\omega_{x_0}}{\chain\ } x_0$ then we have
  $\com{C} \twoheadrightarrow v(x_0) (\omega_0, \checkmark) + \sum_k p_k (\omega_{k}, \com{C}_k)$,
  for some $\omega_k, p_k, \com{C}_k$.
\end{theorem}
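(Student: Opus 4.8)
The plan is to prove the statement by induction on the length $|\omega_{x_0}|$, mirroring the structure of Theorem~\ref{res:adII-1} but carrying probabilities along. The tools will be: Lemma~\ref{res:adI-2} (Adequacy~I), to turn the first covering step into an operational transition; Definition~\ref{def:rem-init2}, to relate a covering chain of $x_0$ in $\mf{\com{C}}$ with a covering chain of $x_0\setminus\set{l'}$ in $\mf{\com{C}}\backslash l'$ and the corresponding valuations; Lemma~\ref{lem:prob-init}, so that $v(\set{l'}) = 1$ whenever $l'$ is a visible initial event; and the $n$-step rules of Figure~\ref{fig:op-nstep2}, to prepend a label to an already-derived transition. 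Throughout, $\omega_0$ is read as $\omega_{x_0}$.

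For the base case $|\omega_{x_0}| = 1$ we have $x_0 = \set{l}$ with $l \in \init{\mf{\com{C}}}$. First note $l \neq \tau$: if $\tau$ were initial then $\mf{\com{C}}$ would be a probabilistic choice (possibly underneath sequential and parallel composition), whose two branches are interpretations of commands and hence carry events above $\tau$, so $\set{\tau}$ could not be maximal. Thus Lemma~\ref{res:adI-2}(1) applies, giving $\com{C} \rightarrow 1 \cdot (l, \com{C}')$ with $\mf{\com{C}} \backslash l \equiv \mf{\com{C}'}$. Since $\set{l}$ is maximal, $\mf{\com{C}} \backslash l$ has no events (any $\leq$-minimal event of $\mf{\com{C}}\backslash l$ together with $l$ would form a strictly larger configuration), hence $\mf{\com{C}'} \equiv \emptyset$ and $\com{C}' = \checkmark$. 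By the left rule of Figure~\ref{fig:op-nstep2}, $\com{C} \twoheadrightarrow 1 \cdot (l, \checkmark)$, and by Lemma~\ref{lem:prob-init}, $v(x_0) = 1$, so the conclusion holds with $\omega_0 = \omega_{x_0} = l$ and no further summands.

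For the inductive step write $\omega_{x_0} = l' : \omega'$, so $\emptyset \cchain{l'} \set{l'} \cchain{\omega'} x_0$ and $l' \in \init{\mf{\com{C}}}$. If $l' \neq \tau$, Lemma~\ref{res:adI-2}(1) gives $\com{C} \rightarrow 1 \cdot (l', \com{C}')$ with $\mf{\com{C}'} \equiv \mf{\com{C}} \backslash l'$. One checks that $x_0 \setminus \set{l'}$ is a maximal configuration of $\mf{\com{C}} \backslash l'$ reachable by $\omega'$: conflict-freeness of $x_0$ ensures no event of $x_0$ is deleted, down-closure is inherited, and maximality transfers back by re-adjoining $l'$; so under $\equiv$ it corresponds to some $z_0 \in \confmax{\mf{\com{C}'}}$ with $\emptyset \cchain{\omega'} z_0$. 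Moreover Definition~\ref{def:rem-init2} with $v(\set{l'}) = 1$ gives $v_{\mf{\com{C}}\backslash l'}(x_0\setminus\set{l'}) = v(x_0)$, and the valuation clause of $\sqsubseteq$ in both directions forces $v'(z_0) = v(x_0)$. Applying the induction hypothesis to $\com{C}'$ yields $\com{C}' \twoheadrightarrow v(x_0)(\omega',\checkmark) + \sum_k p_k(\omega_k,\com{C}_k)$, and prepending $l'$ via the right rule of Figure~\ref{fig:op-nstep2} (with the single summand $1\cdot(l',\com{C}')$) gives $\com{C} \twoheadrightarrow v(x_0)(l':\omega',\checkmark) + \sum_k p_k(l':\omega_k,\com{C}_k)$, as required.

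The remaining sub-case $l' = \tau$ is where the argument is most delicate, and I expect it to be the main obstacle. Lemma~\ref{res:adI-2}(2) gives $\com{C} \rightarrow p \cdot (\tau, \com{C}') + (1-p) \cdot (\tau, \com{C}'')$ with $\mf{\com{C}} \sqsubseteq p \cdot \mf{\com{C}'} + (1-p) \cdot \mf{\com{C}''}$ and $p = v(\set{\tau, e'})$ for some $e' \in \init{\mf{\com{C}'}}$. I would first argue, by inspecting how $\tau$ enters $\mf{\com{C}}$ (Definition~\ref{def:pes-prob2}, possibly under Definition~\ref{def:pes-seq2} and Definition~\ref{def:pes-conc2}), that the events strictly above $\tau$ split into two mutually conflicting blocks matching the two branches; since $x_0$ is conflict-free and maximal, $x_0 \setminus \set{\tau}$ lies entirely in one block and is maximal there, so (up to the symmetric statement with the roles of $\com{C}'$ and $\com{C}''$ swapped) it corresponds to some $z_0 \in \confmax{\mf{\com{C}'}}$ with $\emptyset \cchain{\omega'} z_0$. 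The crucial numerical point is $v(x_0) = p \cdot v'(z_0)$: this is immediate from the generalized probabilistic-choice valuation of Remark~\ref{rem:probc-n} when $\mf{\com{C}}$ in fact equals $p \cdot \mf{\com{C}'} + (1-p) \cdot \mf{\com{C}''}$ (probabilistic choice, and the sequential case via the identity used in Lemma~\ref{res:adI-2}(2)); in the parallel case, where only $\sqsubseteq$ is available, one upgrades it using the maximal-configuration clause of Lemma~\ref{lem:sound-aux-conc-2} together with the observation that the two valuations agree on maximal configurations. Applying the induction hypothesis to $\com{C}'$ gives $\com{C}' \twoheadrightarrow v'(z_0)(\omega',\checkmark) + \sum_k q_k(\omega_k,\com{C}_k)$; combining this with the $\tau$-transition through the right rule of Figure~\ref{fig:op-nstep2} (and using Lemma~\ref{lem:prog-2} to supply some $n$-step derivation of $\com{C}''$ for the other summand) produces $\com{C} \twoheadrightarrow (p\cdot v'(z_0))(\tau:\omega',\checkmark) + \dots = v(x_0)(\omega_{x_0},\checkmark) + \dots$, closing the induction. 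The hard part is thus precisely locating $x_0\setminus\set{\tau}$ in one branch and turning the $\sqsubseteq$ of Lemma~\ref{res:adI-2}(2) into an exact equality of valuations at the maximal configuration $x_0$, so that the coefficient $v(x_0)$ comes out on the nose.
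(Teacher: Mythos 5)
Your proposal is correct and follows essentially the same route as the paper's proof: induction on $|\omega_{x_0}|$, with the first covering step handled by Lemma~\ref{res:adI-2}, the visible case closed via Definition~\ref{def:rem-init2} and the $n$-step rules, and the $\tau$ case split into the probabilistic-choice sub-case (where the denotation is exactly $p\cdot\mf{\com{C}'}+(1-p)\cdot\mf{\com{C}''}$) and the sequential/parallel sub-case handled via Remark~\ref{rem:probc-n} and the maximal-configuration lemmas, with Lemma~\ref{lem:prog-2} supplying the derivation for the untaken branch. If anything, you are more explicit than the paper about why $l\neq\tau$ in the base case and why the coefficient lands exactly on $v(x_0)$.
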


In Lemma~\ref{res:soundI-2} and Lemma~\ref{res:adI-2} we see the usefulness of introducing the label
$\tau$. It helps us identifying the situations where a transition occurred due to the probabilistic
command and when it did not.

Theorem~\ref{res:soundII-2} assures us that whenever any execution of the program leads to a
terminal command, we have a maximal configuration who matches the word and the respective
probability. Theorem~\ref{res:adII-2} tells us that for every maximal configuration of a command
$\com{C}$ and for every covering chain of that configuration, there is an execution of the program
leading to a terminal command who matches the covering chain and the its respective probability.

\subsection{Introducing cyclic behavior}\label{subsec:cyclic-beh-2}
We now introduce cyclic behavior to the language in Section~\ref{subsec:lan-2}.  In order to avoid
the introduction of the notion of state in the language, the cyclic behavior will be given by
recursion.  In that way, we do not need to associate the notion of state to a command in the
operational semantics.  We can just keep recording the actions that are being made by the program.

Another thing to have in mind is that with cyclic behavior we open the door to infinite
computations.  However, covering chains are only defined in finite sequence of words and infinite
configurations are odd, because we would need to define precisely what it means to be an infinite
configuration.  Hence, the words that we formed with the n-step will be always finite, despite the
possibility of them being infinite.  We can justify this by saying that we are only concerned on the
``interesting words'', \ie\ those who are finite.

To introduce recursion we need to add some restrictions when forming programs, since we do not want
to allow commands like: $\rec{X}{\seq{X}{\com{a}}}$ and $\rec{X}{\seq{\seq{\com{a}}{X}}{\com{b}}}$.

Let $X \subseteq Var$, with $Var$ a set of variables.  The syntax is now given by:
\[
  \com{C} ::= \com{skip} \mid a \in Act \mid \seq{\com{C}}{\com{C}} \mid \probC{\com{C}}{p}{\com{C}} \mid
  \conc{\com{C}}{\com{C}} \mid \rec{X}{\com{C}} \mid X
\]

We define the set of free-variables and bound-variables as follows:
\begin{table}[h!]
  \centering
  \begin{tabular}{l|l}
    $\fvar{\com{skip}} = \emptyset$ & $\bvar{\com{skip}} = \emptyset$ \\
    $\fvar{\com{a}} = \emptyset$ & $\bvar{\com{a}} = \emptyset$ \\
    $\fvar{\seq{\com{C}_1}{\com{C}_2}} =
    \fvar{\com{C}_1} \cup \fvar{\com{C}_2}$ & $\bvar{\seq{\com{C}_1}{\com{C}_2}} =
                                              \bvar{\com{C}_1} \cup \bvar{\com{C}_2}$ \\
    $\fvar{\conc{\com{C}_1}{\com{C}_2}} =
    \fvar{\com{C}_1} \cup \fvar{\com{C}_2}$ & $\bvar{\conc{\com{C}_1}{\com{C}_2}} =
                                              \bvar{\com{C}_1} \cup \bvar{\com{C}_2}$ \\
    $\fvar{\probC{\com{C}_1}{p}{\com{C}_2}} =
    \fvar{\com{C}_1} \cup \fvar{\com{C}_2}$ & $\bvar{\probC{\com{C}_1}{p}{\com{C}_2}} =
                                              \bvar{\com{C}_1} \cup \bvar{\com{C}_2}$ \\
    $\fvar{X} = \set{X}$ & $\bvar{X} = \emptyset$ \\
    $\fvar{\rec{X}{\com{C}}} = \fvar{C} \backslash \set{X}$ & $\bvar{\rec{X}{\com{C}}} = \set{X} \cup \bvar{\com{C}}$
  \end{tabular}
\end{table}

We restrict the sequential composition to those whose free-variables and bound-variables on the left
are empty, \ie\ $\seq{\com{C}_1}{\com{C}_2}$ if $\fvar{\com{C}_1} = \emptyset = \bvar{\com{C}_1}$.
With this restriction we forbid program like $\rec{X}{\seq{X}{\com{a}}}$,
$\rec{X}{\seq{\seq{\com{a}}{X}}{\com{b}}}$ (with the condition $\fvar{\com{C}_1} = \emptyset$) and
$\seq{(\rec{X}{\seq{\com{a}}{X}})}{\com{b}}$ (with the condition $\bvar{\com{C}_1} = \emptyset$).
We want to forbid these kind of programs in sequential composition, because if $\com{C}_1$ never
terminates then the sequential composition never terminates. This is also a restriction that comes
from the fact that covering chains are only defined in finite sequences and that infinite
configurations are odd in event structures.  Note however that we allow programs like
$\rec{X}{\conc{X}{a}}$ and $\rec{X}{\nd{X}{a}}$, since they do not block the computation.

Inspired by~\cite{hindley08}, we define substitution as follows:
\begin{definition}\label{fig:substitution-2}
  Let $X \in Var$ and $\com{C}, \com{C}'$ be commands. Define $\com{C}[X \leftarrow \com{C}']$,
  where we substitute every free occurrence of $X$ in $\com{C}$ by $\com{C}'$ (while changing bound
  variables to avoid clashes) by induction on $\com{C}$ as follows:
    \begin{align*}
    & \com{skip}[X \leftarrow \com{C'}] = \com{skip} \\
    & \com{a}[X \leftarrow \com{C'}] = \com{a} \\
    & (\seq{\com{C}_1}{\com{C}_2})[X \leftarrow \com{C'}] =
      \seq{\com{C}_1}{(\com{C}_2 [X \leftarrow \com{C'}])} \\
    & (\conc{\com{C}_1}{\com{C}_2})[X \leftarrow \com{C'}] =
      \conc{\com{C}_1[X \leftarrow \com{C'}]}{\com{C}_2 [X \leftarrow \com{C'}]} \\
    & (\probC{\com{C}_1}{p}{\com{C}_2})[X \leftarrow \com{C'}] =
      \probC{\com{C}_1[X \leftarrow \com{C'}]}{p}{\com{C}_2 [X \leftarrow \com{C'}]} \\
    & X [X \leftarrow \com{C}] = \com{C} \\
    & (\rec{X}{\com{C}})[X \leftarrow \com{C'}] =
      \rec{X}{\com{C}} \\
    & (\rec{Y}{\com{C}})[X \leftarrow \com{C'}] =
      \rec{Y}{\com{C}[X \leftarrow \com{C'}]} \text{ if } X \neq Y \text{ and } Y \not\in FV(\com{C'})
  \end{align*}
\end{definition}

We add to Figure~\ref{fig:op-small2} the following rules for the recursion command:
\begin{align*}
  \infer{\rec{X}{\com{C}} \rightarrow 1 \cdot (l, \com{C}'[X \leftarrow \rec{X}{
  \com{C}}])}{\com{C} \rightarrow 1 \cdot (l, \com{C}')
  }
  \hspace*{1cm}
  \infer{\rec{X}{\com{C}} \rightarrow \sum_i p_i \cdot (\tau, \com{C}_i[X \leftarrow \rec{X}{\com{C}}])}{
  \com{C} \rightarrow \sum_i p_i \cdot (\tau, \com{C}_i)
  }
\end{align*}
Similarly to what was done in Figure~\ref{fig:op-small2}, here we also distinguish if the action
that triggered the transition was an atomic or invisible action. Consequently, we define a rule for
each case. Both rules follow the same reasoning as the one explained in the non-deterministic
case. However, in the case where the transition is triggered by $\tau$, we need to substitute the
occurrence of $X$ by $\rec{X}{\com{C}}$ in each $\com{C}_i$ that is in
$\sum_i p_i \cdot (\tau, \com{C}_i)$.

Now we develop an illustrative example of recursion in the probabilistic setting.

\begin{example}\label{ex:loop-2}  
  Figure~\ref{fig:ex-loop-2} illustrates a probabilistic coin toss scenario where each time we toss
  the coin, it executes with probability $p$ the command $\com{skip}$ or continues the tossing with
  probability $1-p$.  To understand this behavior, focus on the initial command.  From there, we
  transit to a distribution formed by the commands $\com{skip}$ and
  $\rec{X}{(\probC{X}{p}{\com{skip}})}$, which is the same as the initial command.  From this
  distribution we transit to $\com{skip}$ with probability $p$ or to
  $\rec{X}{(\probC{X}{p}{\com{skip}})}$ with probability $1-p$, enabling us to repeat the process.
  \begin{figure}[ht!]
    \centering
    \begin{tikzpicture}
      \begin{pgfonlayer}{nodelayer}
        \node [style=command] (0) at (0, 0) {$\rec{X}{(\probC{\com{skip}}{p}{X})}$};
        \node [style=command] (1) at (-1, -2) {$\com{skip}$};
        \node [style=command] (3) at (0, -1) {$\bullet$};
        \node [style=none] (4) at (0.25, -0.5) {$\tau$};
        \node [style=none] (5) at (-0.75, -1.5) {$p$};
        \node [style=none] (6) at (1, -1.5) {$1-p$};
        \node [style=command] (7) at (1, -2) {$\rec{X}{(\probC{\com{skip}}{p}{X})}$};
        \node [style=command] (8) at (0, -4) {$\com{skip}$};
        \node [style=command] (9) at (1, -3) {$\bullet$};
        \node [style=none] (10) at (1.25, -2.5) {$\tau$};
        \node [style=none] (11) at (0.25, -3.5) {$p$};
        \node [style=none] (12) at (2, -3.5) {$1-p$};
        \node [style=command] (13) at (2, -4) {$\rec{X}{(\probC{\com{skip}}{p}{X})}$};
        \node [style=command] (14) at (1, -6) {$\com{skip}$};
        \node [style=command] (15) at (2, -5) {$\bullet$};
        \node [style=none] (16) at (2.25, -4.5) {$\tau$};
        \node [style=none] (17) at (1.25, -5.5) {$p$};
        \node [style=none] (18) at (3, -5.5) {$1-p$};
        \node [style=command] (19) at (3, -6) {$\ddots$};
      \end{pgfonlayer}
      \begin{pgfonlayer}{edgelayer}
        \draw [style=op] (0) to (3);
        \draw [style=prob] (3) to (1);
        \draw [style=op] (7) to (9);
        \draw [style=prob] (9) to (8);
        \draw [style=prob] (3) to (7);
        \draw [style=op] (13) to (15);
        \draw [style=prob] (15) to (14);
        \draw [style=prob] (9) to (13);
        \draw [style=prob] (15) to (19);
      \end{pgfonlayer}
    \end{tikzpicture}
    \caption{Fragment of the execution of $\rec{X}{(\probC{\com{skip}}{p}{X})}$}
    \label{fig:ex-loop-2}
  \end{figure}
\end{example}

On the event structure side, we want to use the Knaster-Tarski Theorem to build the least-fix point.
To define it, we will  use an order that does not ignore copies,  differently from what happens with
Definition~\ref{def:pes-sub2}.
\begin{definition}\label{def:pes-fix-order-2}
  Let $\es{P}_1 = \ppes{E_1}{\leq_1}{\#_1}{v_1}$ and $\es{P}_2 = \ppes{E_2}{\leq_2}{\#_2}{v_2}$ be
  probabilistic event structures.  Say $\es{P}_1 \trianglelefteq \es{P}_2$ if:
  \begin{align*}
    & E_1 \subseteq E_2 \\
    & \forall e,e'\ .\ e \leq_1 e'
      \Leftrightarrow
      e, e' \in E_1 \wedge e \leq_2 e' \\
    & \forall e,e'\ .\ e \#_1 e'
      \Leftrightarrow
      e, e' \in E_1 \wedge e \#_2 e' \\
    & \forall x \in \confES{P}_1\, .\, v_1(x) = v_2(x)
  \end{align*}
\end{definition}

The following lemmas confirm that Definition~\ref{def:pes-fix-order-2} is a partial order and that
it possesses a least element.
\begin{lemma}\label{lem:po-2}
  $\trianglelefteq$ is a partial order.
\end{lemma}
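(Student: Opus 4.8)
The plan is to recycle the proof of Lemma~\ref{lem:po-1}, since the first three clauses of Definition~\ref{def:pes-fix-order-2} are verbatim the three clauses of Definition~\ref{def:pes-fix-order-1}, and then to discharge the extra valuation clause $\forall x \in \confES{\es{P}_1}.\ v_1(x) = v_2(x)$ on top of each case. Throughout write $\es{P}_i = \ppes{E_i}{\leq_i}{\#_i}{v_i}$. Reflexivity is immediate: the structural conditions hold as in Lemma~\ref{lem:po-1}, and $v_1(x) = v_1(x)$ trivially, so $\es{P}_1 \trianglelefteq \es{P}_1$.

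For transitivity, assume $\es{P}_1 \trianglelefteq \es{P}_2$ and $\es{P}_2 \trianglelefteq \es{P}_3$. The inclusion $E_1 \subseteq E_3$ and the two biconditionals for $\leq$ and $\#$ are obtained exactly as in Lemma~\ref{lem:po-1}. For the valuation I would fix $x \in \confES{\es{P}_1}$ and argue $v_1(x) = v_3(x)$ by cases on whether $x \in \confES{\es{P}_2}$. If $x \in \confES{\es{P}_2}$, then $v_1(x) = v_2(x)$ by the first hypothesis and $v_2(x) = v_3(x)$ by the second, so $v_1(x) = v_3(x)$. If $x \notin \confES{\es{P}_2}$, then $v_2(x) = 0$ by the convention of Definition~\ref{def:prob-es}, hence $v_1(x) = 0$; moreover, since $x \subseteq E_1 \subseteq E_2$ and (by the conflict biconditional) $x$ stays conflict-free in $\es{E}_2$, the only way $x$ can fail to be a configuration of $\es{E}_2$ is a failure of down-closure, i.e. there are $e' \leq_2 e$ with $e \in x$, $e' \notin x$; then $e',e \in E_2$, so the structural clauses of $\es{P}_2 \trianglelefteq \es{P}_3$ give $e' \leq_3 e$, whence $x$ is not down-closed in $\es{E}_3$ either and $v_3(x) = 0 = v_1(x)$.

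For antisymmetry, assume $\es{P}_1 \trianglelefteq \es{P}_2$ and $\es{P}_2 \trianglelefteq \es{P}_1$. The structural half of Lemma~\ref{lem:po-1} already yields $E_1 = E_2$, $\leq_1\, =\, \leq_2$ and $\#_1 = \#_2$, i.e. $\es{E}_1 = \es{E}_2$; in particular $\confES{\es{P}_1} = \confES{\es{P}_2}$. Then for every $x$ in this common set the hypothesis $\es{P}_1 \trianglelefteq \es{P}_2$ gives $v_1(x) = v_2(x)$, so $v_1 = v_2$ as functions, and therefore $\es{P}_1 = \es{P}_2$.

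The only point that is genuinely new relative to Lemma~\ref{lem:po-1} is the bookkeeping about the domain on which the two valuations are compared in the transitivity step, namely ensuring the equation $v_1(x) = v_3(x)$ still holds when $x$ is a configuration of $\es{P}_1$ but not of $\es{P}_2$; I expect this to be the only place requiring a moment's care, since everything else is copied directly from the earlier lemma.
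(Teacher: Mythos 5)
Your proposal is correct and follows essentially the same route as the paper: reduce the three structural clauses to Lemma~\ref{lem:po-1} and then verify the valuation clause for reflexivity, transitivity, and antisymmetry. The only divergence is in transitivity, where the paper simply chains $v_1(x)=v_2(x)$ and $v_2(x)=v_3(x)$ without addressing whether $x\in\confES{\es{P}_1}$ is guaranteed to lie in $\confES{\es{P}_2}$; your case analysis for $x\notin\confES{\es{P}_2}$ (using the zero-extension convention of Definition~\ref{def:prob-es} and the transfer of the down-closure failure from $\leq_2$ to $\leq_3$) closes a small gap that the paper's own proof glosses over, and is the right way to make the argument airtight given that Definition~\ref{def:pes-fix-order-2} does not require $E_1$ to be down-closed in $E_2$.
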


\begin{lemma}\label{lem:po-least-elem-2}
  Define $\bot = \ppes{\emptyset}{\emptyset}{\emptyset}{v_{\bot}(\emptyset) = 1}$.  $\bot$ is the least
  element of $\trianglelefteq$.
\end{lemma}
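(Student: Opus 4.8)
The plan is to follow the proof of Lemma~\ref{lem:po-least-elem-1} verbatim for the event-structure part and add only the verification of the valuation clause of Definition~\ref{def:pes-fix-order-2}. First I would check that $\bot$ is a genuine probabilistic event structure: its underlying event structure $\pes{\emptyset}{\emptyset}{\emptyset}$ satisfies both conditions of Definition~\ref{def:pes} vacuously since there are no events; $v_{\bot}(\emptyset) = 1$ holds by definition; and the drop condition of Definition~\ref{def:prob-es} is vacuous, because $\confES{\bot} = \set{\emptyset}$ and hence there are no data $y, x_1, \dots, x_n$ with $y$ properly contained in each $x_i$ to test.

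Next, for an arbitrary probabilistic event structure $\es{P} = \ppes{E}{\leq}{\#}{v}$, I would verify the four clauses of $\bot \trianglelefteq \es{P}$ from Definition~\ref{def:pes-fix-order-2}. The inclusion $\emptyset \subseteq E$ is immediate. The two biconditionals on $\leq_{\bot}$ and $\#_{\bot}$ hold vacuously: their left-hand sides are always false since $\bot$ has no events, and the right-hand sides are false too since they demand $e, e' \in \emptyset$. For the valuation clause, the key observation is that $\confES{\bot} = \set{\emptyset}$, so the only instance to be checked is $v_{\bot}(\emptyset) = v(\emptyset)$; both sides equal $1$, the left by the definition of $\bot$ and the right because $v$ is a configuration-valuation.

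There is no real obstacle: the only point that needs a moment's thought is recognizing that the valuation condition of Definition~\ref{def:pes-fix-order-2} quantifies over $\confES{\bot}$, which is the singleton $\set{\emptyset}$, so it collapses to the equality $v_{\bot}(\emptyset) = v(\emptyset) = 1$ already guaranteed by the axioms of probabilistic event structures. Assembling these checks shows $\bot \trianglelefteq \es{P}$ for every probabilistic event structure $\es{P}$, i.e.\ $\bot$ is the least element of $\trianglelefteq$.
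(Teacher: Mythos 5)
Your proposal is correct and follows essentially the same route as the paper: delegate (or vacuously re-verify) the event-structure clauses from the non-probabilistic case, and observe that since $\confES{\bot} = \set{\emptyset}$ the valuation clause of Definition~\ref{def:pes-fix-order-2} collapses to $v_{\bot}(\emptyset) = v(\emptyset) = 1$. Nothing is missing.
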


We extend Definition~\ref{def:lub-1} to the probabilistic realm, \ie\ we define what it means for a
probabilistic event structure to be a least upper bound in a chain of probabilistic event
structures.
\begin{definition}\label{def:lub-2}
  Let $\es{P}_1 \trianglelefteq \dots \trianglelefteq \es{P}_n \trianglelefteq \dots$ be a
  $\omega$-chain. Let $\es{P}^{\omega} = \ppes{E^\omega}{\leq^\omega}{\#^\omega}{v^\omega}$ be its
  least upper bound where:
  \begin{itemize}
  \item $E^\omega = \cup_{n \in \omega} E_n$
  \item $\leq^\omega = \cup_{n \in \omega} \leq_n$
  \item $\#^\omega = \cup_{n \in \omega} \#_n$
  \item $\forall x \in \confES{\es{P}^\omega}\, ,\,
    \exists n \in \omega\, .\, x
    \in \confES{\es{P}_n}\, .\, v^\omega(x) = v_n(x)$
  \end{itemize}
\end{definition}

We then show that the structure in Definition~\ref{def:lub-2} is in fact a probabilistic event
structure and a least upper bound in a chain of probabilistic event structures.
\begin{lemma}\label{lem:lub-es-2}
  $\es{P}^\omega$ is a probabilistic event structure.
\end{lemma}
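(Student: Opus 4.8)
The plan is to piggyback on Lemma~\ref{lem:lub-es-1}. The event, order and conflict components of $\es{P}^\omega$ are literally those of the least upper bound treated there, so that part is an event structure by Lemma~\ref{lem:lub-es-1}, and all that remains is to check that $v^\omega$ satisfies the two conditions of Definition~\ref{def:prob-es}: $v^\omega(\emptyset)=1$ and the inequality (Equation~\ref{eq:prob-es-condition}). Before doing so I would record why $v^\omega$ is well defined: any $x\in\confES{\es{P}^\omega}$ is finite, so its events together with their $\leq^\omega$-predecessors (which stay inside $x$ by down-closure) all lie in some $E_N$; since $\leq_N\subseteq\leq^\omega$ and $\#_N\subseteq\#^\omega$, the set $x$ is down-closed and conflict-free in $\es{P}_N$ as well, hence $x\in\confES{\es{P}_N}$. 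Moreover, if $x\in\confES{\es{P}_n}\cap\confES{\es{P}_m}$ with $n\le m$, then $\es{P}_n\trianglelefteq\es{P}_m$ by transitivity (Lemma~\ref{lem:po-2}), so $v_n(x)=v_m(x)$; thus the value $v^\omega(x)$ of Definition~\ref{def:lub-2} does not depend on the chosen index. In particular $\emptyset\in\confES{\es{P}_1}$, whence $v^\omega(\emptyset)=v_1(\emptyset)=1$.

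For the inequality I would, as elsewhere in the paper, invoke~\cite[Proposition 5]{winskel14} to reduce to the covering case $y\chain\ x_1,\dots,x_n$, i.e. $x_i=y\cup\{e_i\}$ with the $e_i$ pairwise distinct and outside $y$. Then $y$ and all of the finitely many unions $\bigcup_{i\in I}x_i=y\cup\{e_i\mid i\in I\}$ use only events of the finite set $y\cup\{e_1,\dots,e_n\}$, hence only events of a single $E_N$. Each such union is a union of $\leq^\omega$-down-closed sets, so it is down-closed in $\es{P}^\omega$, and therefore it lies in $\confES{\es{P}^\omega}$ exactly when it is conflict-free; since $\es{E}_N\trianglelefteq\es{E}^\omega$ (Lemma~\ref{lem:lub-1}), the conflict relations of $\es{P}_N$ and $\es{P}^\omega$ agree on subsets of $E_N$, and down-closure in $\es{P}^\omega$ forces down-closure in $\es{P}_N$. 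Hence membership in $\confES{\es{P}^\omega}$ and in $\confES{\es{P}_N}$ coincides for every set occurring in Equation~\ref{eq:prob-es-condition}, and on those sets $v^\omega=v_N$ (both being $0$ outside, by the convention of Definition~\ref{def:prob-es}). Consequently the left-hand side of Equation~\ref{eq:prob-es-condition} evaluated with $v^\omega$ equals the same expression evaluated with $v_N$, which is $\ge 0$ because $\es{P}_N$ is a probabilistic event structure.

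The routine parts are the event-structure axioms (delegated to Lemma~\ref{lem:lub-es-1}) and the agreement of the valuations on shared configurations. The step I expect to be the main obstacle is making the second paragraph rigorous: one has to be sure that \emph{every} configuration and \emph{every} union that appears in the drop condition can be located inside one common approximant $\es{P}_N$, and that both configuration-hood and the valuation transfer faithfully between $\es{P}_N$ and $\es{P}^\omega$. The observation that makes this go through is that unions of configurations stay down-closed, so the only property that ever needs to be compared across the two structures is conflict-freeness, which the $\trianglelefteq$-conditions preserve.
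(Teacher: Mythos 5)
Your proposal is correct and follows essentially the same route as the paper: delegate the event-structure axioms to Lemma~\ref{lem:lub-es-1}, use~\cite[Proposition 5]{winskel14} to reduce to covering families, locate all the relevant configurations inside a single approximant $\es{P}_N$, and transfer the drop condition from $v_N$ to $v^\omega$. Your version is in fact somewhat more careful than the paper's (which splits informally into ``events in $E_n$ / $E_{n+1}$ / both''), since you explicitly justify well-definedness of $v^\omega$ and the agreement of configuration-hood between $\es{P}_N$ and $\es{P}^\omega$.
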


\begin{lemma}\label{lem:lub-2}
  Let $\es{P}_1 \trianglelefteq \dots \trianglelefteq \es{P}_n \trianglelefteq \dots$ be a
  $\omega$-chain. Then $\es{P}^\omega$ is its least upper bound.
\end{lemma}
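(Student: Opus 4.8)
The plan is to mirror the proof of Lemma~\ref{lem:lub-1}, adding only the bookkeeping for the valuation component of $\trianglelefteq$ coming from Definition~\ref{def:pes-fix-order-2}. Lemma~\ref{lem:lub-1} already establishes that the underlying triple $\pes{E^\omega}{\leq^\omega}{\#^\omega}$ is the least upper bound of the underlying $\omega$-chain of event structures with respect to the order of Definition~\ref{def:pes-fix-order-1} (whose event/order/conflict clauses are exactly the first three clauses of Definition~\ref{def:pes-fix-order-2}), and Lemma~\ref{lem:lub-es-2} guarantees that $\es{P}^\omega$ is a probabilistic event structure, so $v^\omega$ is a genuine configuration-valuation. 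Hence in each half of the argument I only have to verify the clause $v_1(x) = v_2(x)$ on the relevant common configurations.

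First I would record a preliminary fact used throughout: every $x \in \confES{\es{P}^\omega}$ lies in $\confES{\es{P}_n}$ for some $n$, and for any such $n$ one has $v^\omega(x) = v_n(x)$. Indeed $x$ is finite, each of its events occurs in some $E_{n_i}$, so taking $N = \max_i n_i$ all of them occur in $E_N$; down-closure and conflict-freeness of $x$ inside $\es{E}_N$ follow from $\es{E}_N \trianglelefteq \es{E}^\omega$ (if $e' \leq_N e \in x$ then $e' \leq^\omega e$, hence $e' \in x$, and similarly for $\#$), so $x \in \confES{\es{P}_N}$. For well-definedness of the value, if $x \in \confES{\es{P}_n} \cap \confES{\es{P}_m}$ with, say, $n \le m$, then $\es{P}_n \trianglelefteq \es{P}_m$ by transitivity (Lemma~\ref{lem:po-2}), so $v_n(x) = v_m(x)$; thus Definition~\ref{def:lub-2} is unambiguous and agrees with every admissible choice of index.

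Next I would do the two halves. Upper bound: fix $n$ and show $\es{P}_n \trianglelefteq \es{P}^\omega$; the event, causality and conflict clauses are precisely those supplied by Lemma~\ref{lem:lub-1}, and for the valuation, given $x \in \confES{\es{P}_n}$ we have $x \in \confES{\es{P}^\omega}$, so $v^\omega(x) = v_n(x)$ by the preliminary fact. Least upper bound: let $\es{P} = \ppes{E}{\leq}{\#}{v}$ be any upper bound, i.e. $\es{P}_n \trianglelefteq \es{P}$ for all $n$; the event/order/conflict clauses of $\es{P}^\omega \trianglelefteq \es{P}$ are again handed over by Lemma~\ref{lem:lub-1}, and for the valuation, given $x \in \confES{\es{P}^\omega}$ the preliminary fact gives some $n$ with $x \in \confES{\es{P}_n}$ and $v^\omega(x) = v_n(x)$, while $\es{P}_n \trianglelefteq \es{P}$ gives $v_n(x) = v(x)$, so $v^\omega(x) = v(x)$ and $\es{P}^\omega \trianglelefteq \es{P}$.

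The only point requiring a little care is the preliminary fact, namely that a finite configuration of the colimit already occurs at some finite stage and that $v^\omega$ is independent of the chosen stage; once that is in hand, the valuation clauses follow immediately from the defining agreement $v_n(x) = v^\omega(x)$, and everything else reduces to the same reasoning as in the non-probabilistic Lemma~\ref{lem:lub-1}. I do not expect any genuinely new obstacle.
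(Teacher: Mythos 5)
Your proposal is correct and follows essentially the same route as the paper: delegate the event, causality, and conflict clauses to Lemma~\ref{lem:lub-1} and check only the valuation clause in each half, using the agreement $v^\omega(x) = v_n(x)$ from Definition~\ref{def:lub-2}. Your preliminary fact (that every finite configuration of $\es{P}^\omega$ already occurs at some finite stage and that $v^\omega$ is independent of the chosen stage) is a welcome explicit justification of something the paper's Definition~\ref{def:lub-2} simply asserts, but it does not change the structure of the argument.
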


We have that sequential and concurrent composition, as well as probabilistic choice are monotone and
continuous with respect to Definition~\ref{def:pes-fix-order-2}. Recall that monotonicity and
continuity definitions are given by Definition~\ref{def:cont-1}.
\begin{lemma}\label{lem:seq-fix-mono-2}
  Let $\es{P}, \es{P}_1, \es{P}_2$ be probabilistic event structures.  If
  $\es{P}_1 \trianglelefteq \es{P}_2$ then
  $\seq{\es{P}}{\es{P}_1} \trianglelefteq \seq{\es{P}}{\es{P}_2}$.
\end{lemma}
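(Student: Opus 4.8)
The plan is to reduce the statement to a claim about valuations, since the underlying event structure part is already handled. Writing $\es{P} = \pespq{\es{E}}{v}$, $\es{P}_1 = \pespq{\es{E}_1}{v_1}$, $\es{P}_2 = \pespq{\es{E}_2}{v_2}$, $\seq{\es{P}}{\es{P}_1} = \pespq{\es{E}^1}{v^1}$ and $\seq{\es{P}}{\es{P}_2} = \pespq{\es{E}^2}{v^2}$, the hypothesis $\es{P}_1 \trianglelefteq \es{P}_2$ gives $\es{E}_1 \trianglelefteq \es{E}_2$, so by Lemma~\ref{lem:seq-fix-mono-1} we get $\es{E}^1 \trianglelefteq \es{E}^2$, which supplies the first three clauses of Definition~\ref{def:pes-fix-order-2}. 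It also gives in particular that every configuration of $\seq{\es{P}}{\es{P}_1}$ is a configuration of $\seq{\es{P}}{\es{P}_2}$ (so $v^2$ is defined wherever we need it) and that $\confmax{\es{E}}$ is literally the same set of maximal configurations of $\es{E}$ inside both composites. Hence it only remains to prove $v^1(x) = v^2(x)$ for every $x \in \confES{\seq{\es{P}}{\es{P}_1}}$.

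For this I would unfold Definition~\ref{def:pes-seq2} and distinguish the two cases of the valuation. If $x \in \confES{\es{P}}$, i.e. $x$ lies entirely in the $\es{E}$-part, then by Definition~\ref{def:pes-seq2} both $v^1(x)$ and $v^2(x)$ equal $v(x)$; this case does not mention $\es{P}_1$ or $\es{P}_2$ at all, so they agree. Otherwise $x = x_1 \cup (x_2 \times \{x_1\})$ for some $x_1 \in \confmax{\es{P}}$ and $x_2 \in \confES{\es{P}_1}$; since $\es{P}_1 \trianglelefteq \es{P}_2$ we have $x_2 \in \confES{\es{P}_2}$ and $v_1(x_2) = v_2(x_2)$, so $v^1(x) = v(x_1)\cdot v_1(x_2) = v(x_1)\cdot v_2(x_2) = v^2(x)$. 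This exhausts the cases, and with the three structural clauses above it yields $\seq{\es{P}}{\es{P}_1} \trianglelefteq \seq{\es{P}}{\es{P}_2}$.

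The argument is essentially bookkeeping, so I do not expect a genuine obstacle; the one point deserving a line of care is that the decomposition $x = x_1 \cup (x_2 \times \{x_1\})$ used to evaluate $v^1(x)$ is exactly the one used to evaluate $v^2(x)$ — that is, passing from $\es{P}_1$ to $\es{P}_2$ changes neither $x_1$ (a maximal configuration of $\es{P}$, untouched by the hypothesis) nor the copy tag $\{x_1\}$ (again a configuration of $\es{P}$), so the only component that moves is $x_2$, which stays a configuration with the same valuation by $\es{P}_1 \trianglelefteq \es{P}_2$. Once Lemma~\ref{lem:seq-fix-mono-1} and Definition~\ref{def:pes-seq2} are invoked the proof is short.
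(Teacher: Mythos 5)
Your proposal is correct and follows essentially the same route as the paper's proof: invoke Lemma~\ref{lem:seq-fix-mono-1} for the three structural clauses, then check $v^1(x) = v^2(x)$ by the same two-case analysis on Definition~\ref{def:pes-seq2}, using $v_1 = v_2$ on configurations of $\es{P}_1$ in the mixed case. Your extra remark that the decomposition $x = x_1 \cup (x_2 \times \{x_1\})$ is unchanged when passing from $\es{P}_1$ to $\es{P}_2$ is a point the paper leaves implicit, but the arguments are otherwise the same.
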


\begin{lemma}\label{lem:conc-fix-mono-2}
  Let $\es{P}_1, \es{P}'_1, \es{P}_2, \es{P}'_2$ be probabilistic event structures.  If
  $\es{P}_1 \trianglelefteq \es{P}'_1$ and $\es{P}_2 \trianglelefteq \es{P}'_2$ then
  $\conc{\es{P}_1}{\es{P}_2} \trianglelefteq \conc{\es{P}'_1}{\es{P}'_2}$.
\end{lemma}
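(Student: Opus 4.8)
The plan is to mimic the structure of Lemma~\ref{lem:seq-fix-mono-2}: the underlying event-structure part of $\trianglelefteq$ is already handled, so only the valuation clause of Definition~\ref{def:pes-fix-order-2} requires new work. First I would write $\es{P}_1 = \pespq{\es{E}_1}{v_1}$, $\es{P}'_1 = \pespq{\es{E}'_1}{v'_1}$, $\es{P}_2 = \pespq{\es{E}_2}{v_2}$, $\es{P}'_2 = \pespq{\es{E}'_2}{v'_2}$, $\conc{\es{P}_1}{\es{P}_2} = \pespq{\es{E}}{v}$, and $\conc{\es{P}'_1}{\es{P}'_2} = \pespq{\es{E}'}{v'}$, and invoke Lemma~\ref{lem:conc-fix-mono-1} (applied to the underlying event structures, using that $\es{P}_i \trianglelefteq \es{P}'_i$ entails $\es{E}_i \trianglelefteq \es{E}'_i$) to obtain the inclusion $E \subseteq E'$ together with the two biconditionals for $\leq$ and $\#$.

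It then remains to show $\forall x \in \confES{\conc{\es{P}_1}{\es{P}_2}}\, .\, v(x) = v'(x)$. Fix such an $x$. Since $x \subseteq E = E_1 \uplus E_2$, we have $x \cap E'_1 = x \cap E_1$ and $x \cap E'_2 = x \cap E_2$; moreover $x \cap E_1 \in \confES{\es{P}_1}$ and $x \cap E_2 \in \confES{\es{P}_2}$ (down-closure and conflict-freeness are inherited by the projections, exactly as in the proof that $\conc{\es{P}_1}{\es{P}_2}$ is a probabilistic event structure). By Definition~\ref{def:pes-conc2}, $v(x) = v_1(x \cap E_1) \cdot v_2(x \cap E_2)$ and $v'(x) = v'_1(x \cap E_1) \cdot v'_2(x \cap E_2)$. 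Now the valuation clause of $\es{P}_1 \trianglelefteq \es{P}'_1$ gives $v_1(x \cap E_1) = v'_1(x \cap E_1)$, and that of $\es{P}_2 \trianglelefteq \es{P}'_2$ gives $v_2(x \cap E_2) = v'_2(x \cap E_2)$; multiplying yields $v(x) = v'(x)$, which is what we wanted.

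I do not expect any serious obstacle here: the argument is essentially a two-line computation once one observes that restricting a configuration of $\conc{\es{P}_1}{\es{P}_2}$ to the $i$-th component is stable under enlarging the component (because configurations contain no events outside $E_1 \uplus E_2$). The only point that needs a word of care is that $x \cap E_i$ really is a configuration of $\es{P}_i$, so that the valuation hypotheses apply; this is the same fact already used silently in the well-definedness lemma for parallel composition, so it can be stated in one sentence rather than re-proved.
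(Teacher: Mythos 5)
Your proposal is correct and follows essentially the same route as the paper's proof: reduce to Lemma~\ref{lem:conc-fix-mono-1} for the underlying event structures, then compute $v(x) = v_1(x \cap E_1)\cdot v_2(x \cap E_2) = v'_1(x \cap E_1)\cdot v'_2(x \cap E_2) = v'(x)$ using the valuation clause of $\trianglelefteq$. Your extra remarks (that $x \cap E'_i = x \cap E_i$ and that the projections are genuinely configurations of $\es{P}_i$) are points the paper leaves implicit, and making them explicit only improves the argument.
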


\begin{lemma}\label{lem:probc-fix-mono-2}
  Let $\es{P}_1, \es{P}'_1, \es{P}_2, \es{P}'_2$ be probabilistic event structures.  If
  $\es{P}_1 \trianglelefteq \es{P}'_1$ and $\es{P}_2 \trianglelefteq \es{P}'_2$ then
  $\probC{\es{P}_1}{p}{\es{P}_2} \trianglelefteq \probC{\es{P}'_1}{p}{\es{P}'_2}$.
\end{lemma}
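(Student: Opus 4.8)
The plan is to follow the two-stage template already used for Lemma~\ref{lem:seq-fix-mono-2} and Lemma~\ref{lem:conc-fix-mono-2}. Write $\es{P}_i = \pespq{\es{E}_i}{v_i}$, $\es{P}'_i = \pespq{\es{E}'_i}{v'_i}$, $\probC{\es{P}_1}{p}{\es{P}_2} = \pespq{\es{E}}{v}$, and $\probC{\es{P}'_1}{p}{\es{P}'_2} = \pespq{\es{E}'}{v'}$. Checking $\es{E} \trianglelefteq \es{E}'$ at the level of events, causal dependency and conflict is entirely parallel to Lemma~\ref{lem:nd-fix-mono-1}: by Definition~\ref{def:pes-prob2} the carrier is $\set{\tau} \uplus (E_1 \uplus E_2)$, the extra causal edges $\tau \leq e$ and the cross-conflict between the two sides are uniform (they depend only on which component an event lies in), and the hypotheses $\es{P}_i \trianglelefteq \es{P}'_i$ already provide the restriction conditions for $\leq_i$ and $\#_i$. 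So that part I would simply discharge by citing Definition~\ref{def:pes-prob2}, exactly as the non-deterministic analogue does.

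The real content is the valuation condition of Definition~\ref{def:pes-fix-order-2}, namely $\forall x \in \confES{\probC{\es{P}_1}{p}{\es{P}_2}}\, .\, v(x) = v'(x)$. First I would record the shape of an arbitrary configuration $x$ of $\probC{\es{P}_1}{p}{\es{P}_2}$: since $\tau \leq e$ for every event and configurations are down-closed, either $x = \emptyset$ or $\tau \in x$; and since every event of $E_1$ is in conflict with every event of $E_2$, a nonempty $x \setminus \tau$ lies entirely inside $E_1$ or entirely inside $E_2$, so that $x \setminus \tau \in \confES{\es{P}_1}$ or $x \setminus \tau \in \confES{\es{P}_2}$. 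Then I would case-split exactly as in the three branches of the valuation clause of Definition~\ref{def:pes-prob2}. If $x = \emptyset$ or $x = \set{\tau}$, both $v(x)$ and $v'(x)$ equal $1$. If $x \setminus \tau \in \confES{\es{P}_1}$, then $v(x) = p \cdot v_1(x \setminus \tau)$ and $v'(x) = p \cdot v'_1(x \setminus \tau)$; the valuation clause of $\es{P}_1 \trianglelefteq \es{P}'_1$ gives $v_1(x \setminus \tau) = v'_1(x \setminus \tau)$, hence $v(x) = v'(x)$. The branch $x \setminus \tau \in \confES{\es{P}_2}$ is symmetric, using $\es{P}_2 \trianglelefteq \es{P}'_2$ and the factor $1-p$.

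I do not expect a genuine obstacle — the argument is bookkeeping, just as for the sequential and parallel cases. The single point that needs to be stated with care is the configuration decomposition above: that $\tau$ is forced into every nonempty configuration and that the all-against-all conflict between the two sides makes the case split exhaustive and (outside $\emptyset$ and $\set{\tau}$) mutually exclusive, so that each $x$ falls under exactly one branch of Definition~\ref{def:pes-prob2} and the corresponding valuation identity can be transferred from the hypothesis. Once that is pinned down, together with Lemma~\ref{lem:po-2} (so that $\trianglelefteq$ is a genuine partial order and the claim is well posed), the verification is immediate; the same observation extends the argument to the $n$-ary form $\sum_i p_i \cdot \es{P}_i$ of Remark~\ref{rem:probc-n} should that be required later.
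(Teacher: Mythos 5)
Your proposal is correct and follows essentially the same route as the paper: the event/causality/conflict conditions are discharged directly from Definition~\ref{def:pes-prob2}, and the valuation clause is verified by the case split on $x \setminus \tau \in \confES{\es{P}_1}$ versus $x \setminus \tau \in \confES{\es{P}_2}$, transferring $v_i = v'_i$ from the hypotheses with the factors $p$ and $1-p$. Your explicit treatment of the $x = \emptyset$ and $x = \set{\tau}$ branch and the justification that the case split is exhaustive are small additions of care that the paper's proof leaves implicit, but they do not change the argument.
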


\begin{lemma}\label{lem:seq-cont-2}
  $\bigsqcup_m(\seq{\es{P}}{\es{P}_m}) = \seq{\es{P}}{\bigsqcup_m \es{P}_m}$.
\end{lemma}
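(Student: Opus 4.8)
The plan is to reduce the claim to the already-settled event-structure case, exactly in the style of Lemma~\ref{lem:seq-fix-mono-2} and Lemma~\ref{lem:seq-cont-1}. Recall that, as remarked just before this lemma, Definition~\ref{def:cont-1} and Lemma~\ref{lem:cont-1} carry over verbatim to the probabilistic setting with $\trianglelefteq$ read as Definition~\ref{def:pes-fix-order-2}. So, by Lemma~\ref{lem:cont-1}, it suffices to check two things: that $\seq{\es{P}}{-}$ is monotone with respect to $\trianglelefteq$, and that every event of $\seq{\es{P}}{\bigsqcup_m \es{P}_m}$ is already an event of $\bigsqcup_m(\seq{\es{P}}{\es{P}_m})$ (that both sides are probabilistic event structures follows from Lemma~\ref{lem:lub-es-2} and the well-definedness of Definition~\ref{def:pes-seq2}).

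Monotonicity is precisely Lemma~\ref{lem:seq-fix-mono-2}. For the event condition, fix an $\omega$-chain $\es{P}_1 \trianglelefteq \es{P}_2 \trianglelefteq \dots$ with least upper bound $\bigsqcup_m \es{P}_m$, write $\es{E}$ for the underlying event structure of $\es{P}$, and let $e$ be an event of $\seq{\es{P}}{\bigsqcup_m \es{P}_m}$. By Definition~\ref{def:pes-seq2} its event set is $E \uplus \bigl( (\bigcup_m E_m) \times \confmax{\es{E}} \bigr)$. If $e \in E$, then $e$ is an event of $\seq{\es{P}}{\es{P}_m}$ for every $m$, hence of $\bigsqcup_m(\seq{\es{P}}{\es{P}_m})$. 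Otherwise $e = (e_m, x)$ with $e_m \in \bigcup_m E_m$ and $x \in \confmax{\es{E}}$; by Definition~\ref{def:lub-2} pick $m$ with $e_m \in E_m$. Since $\es{P}$ is fixed, $\confmax{\es{E}}$ does not vary with $m$, so $(e_m, x)$ is an event of $\seq{\es{P}}{\es{P}_m}$ by Definition~\ref{def:pes-seq2}, and therefore an event of $\bigsqcup_m(\seq{\es{P}}{\es{P}_m})$. This is exactly the case analysis already performed for event structures in the proof of Lemma~\ref{lem:seq-cont-1}.

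Lemma~\ref{lem:cont-1} then gives $\bigsqcup_m(\seq{\es{P}}{\es{P}_m}) = \seq{\es{P}}{\bigsqcup_m \es{P}_m}$ as probabilistic event structures: once the two underlying event structures coincide, monotonicity and the least-upper-bound property make the two sides comparable under $\trianglelefteq$, and the valuation clause of Definition~\ref{def:pes-fix-order-2} forces the valuations to agree on every configuration whenever the event sets, hence the sets of configurations, agree. There is no genuine obstacle here; the only point requiring a moment's care is that the ``copies'' in $\seq{\es{P}}{-}$ are indexed by $\confmax{\es{E}}$ of the \emph{fixed} left component, so this index set is constant along the chain, and the valuation clause of Definition~\ref{def:pes-seq2} computes $v(x)$ as the valuation of $\es{P}$ on a maximal configuration times the valuation of the right component — only the latter factor depends on $m$. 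Passing to the least upper bound on the right therefore produces no new events, no new maximal configurations, and no new probabilities, which is precisely what Lemma~\ref{lem:cont-1} needs.
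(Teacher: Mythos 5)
Your proposal is correct and follows essentially the same route as the paper: monotonicity from Lemma~\ref{lem:seq-fix-mono-2}, the event-membership argument recycled from Lemma~\ref{lem:seq-cont-1}, and the conclusion via Lemma~\ref{lem:cont-1}. Your closing paragraph on why the valuations must then agree is a welcome extra step of care that the paper's own proof leaves implicit.
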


\begin{lemma}\label{lem:conc-cont-2}
  $\bigsqcup_{n,m}(\conc{\es{P}_n}{\es{P}_m}) = \conc{\bigsqcup_n \es{P}_n}{\bigsqcup_m \es{P}_m}$.
\end{lemma}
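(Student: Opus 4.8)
The plan is to reduce this to the event-structure statement already proved plus a check that the two valuations agree, exactly in the spirit of Lemma~\ref{lem:seq-cont-2}. First I would invoke Lemma~\ref{lem:conc-fix-mono-2} to record that parallel composition is monotone with respect to $\trianglelefteq$, so that $\bigsqcup_{n,m}(\conc{\es{P}_n}{\es{P}_m}) \trianglelefteq \conc{\bigsqcup_n \es{P}_n}{\bigsqcup_m \es{P}_m}$. For the converse direction, by Lemma~\ref{lem:cont-1} it suffices to show that every event of $\conc{\bigsqcup_n \es{P}_n}{\bigsqcup_m \es{P}_m}$ already appears in $\bigsqcup_{n,m}(\conc{\es{P}_n}{\es{P}_m})$; but this is precisely the event-level argument carried out in Lemma~\ref{lem:conc-cont-1} for the underlying event structures, so I would simply cite it.

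The only genuinely new content is the agreement of valuations. I would argue as follows. Let $x$ be a finite configuration of $\conc{\bigsqcup_n \es{P}_n}{\bigsqcup_m \es{P}_m}$. By Definition~\ref{def:pes-conc2} its valuation is $v_1^\omega(x \cap E_1^\omega)\cdot v_2^\omega(x \cap E_2^\omega)$, where $v_1^\omega, v_2^\omega$ are the least-upper-bound valuations of the two chains and $E_1^\omega, E_2^\omega$ their event sets. Since $x$ is finite, $x\cap E_1^\omega \in \confES{\es{P}_n}$ and $x\cap E_2^\omega \in \confES{\es{P}_m}$ for some indices $n,m$; by Definition~\ref{def:lub-2} together with the fact that valuations are constant along a $\trianglelefteq$-chain once a configuration has appeared (Definition~\ref{def:pes-fix-order-2}), we get $v_1^\omega(x\cap E_1^\omega) = v_n(x\cap E_1^\omega)$ and $v_2^\omega(x\cap E_2^\omega) = v_m(x\cap E_2^\omega)$. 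On the other hand, for any $n',m' \geq \max(n,m)$ the same $x$ is a configuration of $\conc{\es{P}_{n'}}{\es{P}_{m'}}$, whose valuation at $x$ is $v_{n'}(x\cap E_1)\cdot v_{m'}(x\cap E_2) = v_n(x\cap E_1)\cdot v_m(x\cap E_2)$ by the same stabilization, and by Definition~\ref{def:lub-2} this is exactly $v(x)$ for $v$ the valuation of $\bigsqcup_{n,m}(\conc{\es{P}_n}{\es{P}_m})$. Hence the valuations coincide on every finite configuration, and combined with the equality of the underlying event structures and antisymmetry of $\trianglelefteq$ (Lemma~\ref{lem:po-2}) the two probabilistic event structures are equal.

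The main obstacle I anticipate is purely bookkeeping with the indices: the two chains are indexed independently, so I must take some care to choose a single cofinal pair $(n,m)$ witnessing both configuration membership and the stabilization of \emph{both} valuations simultaneously, and to be explicit that $\bigsqcup_{n,m}$ in the statement denotes the least upper bound of the doubly-indexed family (equivalently, of the diagonal chain $\conc{\es{P}_n}{\es{P}_n}$), so that the invocation of the unary Lemma~\ref{lem:cont-1} is legitimate — this is the same mild abuse already present in Lemma~\ref{lem:conc-cont-1}. Once that is pinned down, everything reduces to unfolding Definition~\ref{def:pes-conc2}, Definition~\ref{def:lub-2}, and the cited monotonicity and continuity lemmas.
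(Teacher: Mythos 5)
Your proposal is correct and follows essentially the same route as the paper's proof, which likewise consists of citing Lemma~\ref{lem:conc-fix-mono-2} for monotonicity, Lemma~\ref{lem:conc-cont-1} for the event-level containment, and then Lemma~\ref{lem:cont-1} to conclude. Your explicit verification that the valuations agree is sound but goes beyond what the paper writes: once one direction of $\trianglelefteq$ is obtained from monotonicity and the event sets are shown equal, the valuation clause of Definition~\ref{def:pes-fix-order-2} already forces the two valuations to coincide on all configurations, so the paper treats this as automatic.
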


\begin{lemma}\label{lem:probc-cont-2}
  $\bigsqcup_{n,m}(\probC{\es{P}_n}{p}{\es{P}_m}) =
  \probC{\bigsqcup_n \es{P}_n}{p}{\bigsqcup_m \es{P}_m}$.
\end{lemma}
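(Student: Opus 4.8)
The statement is the continuity (in the sense of Definition~\ref{def:cont-1}) of probabilistic choice with respect to $\trianglelefteq$, and the plan is to invoke Lemma~\ref{lem:cont-1} — or rather its obvious two-sided analogue for binary operations — exactly as was done for $\conc{\cdot}{\cdot}$ in Lemma~\ref{lem:conc-cont-2} and for $\nd{\cdot}{\cdot}$ in Lemma~\ref{lem:nd-cont-1}. So there are two things to check: first, that $\probC{\cdot}{p}{\cdot}$ is monotone with respect to $\trianglelefteq$, which is precisely Lemma~\ref{lem:probc-fix-mono-2}; and second, that every event of $\probC{\bigsqcup_n \es{P}_n}{p}{\bigsqcup_m \es{P}_m}$ is already an event of $\bigsqcup_{n,m}(\probC{\es{P}_n}{p}{\es{P}_m})$. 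Given those two facts, Lemma~\ref{lem:cont-1} upgrades monotonicity plus the event-containment to the equality $\bigsqcup_{n,m}(\probC{\es{P}_n}{p}{\es{P}_m}) = \probC{\bigsqcup_n \es{P}_n}{p}{\bigsqcup_m \es{P}_m}$, because $\trianglelefteq$ is antisymmetric (Lemma~\ref{lem:po-2}) and two probabilistic event structures related by $\trianglelefteq$ with the same carrier of events are equal (the orders, conflicts, and valuations are then forced to coincide).

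\textbf{Key steps, in order.} First I would fix $\omega$-chains $\es{P}_1 \trianglelefteq \dots \trianglelefteq \es{P}_n \trianglelefteq \dots$ and $\es{P}'_1 \trianglelefteq \dots \trianglelefteq \es{P}'_m \trianglelefteq \dots$ with least upper bounds $\bigsqcup_n \es{P}_n$ and $\bigsqcup_m \es{P}_m$ respectively (these exist and are probabilistic event structures by Lemma~\ref{lem:lub-es-2} and Lemma~\ref{lem:lub-2}), and note that by Lemma~\ref{lem:probc-fix-mono-2} the operation $\probC{\cdot}{p}{\cdot}$ is monotone, so $\bigsqcup_{n,m}(\probC{\es{P}_n}{p}{\es{P}_m}) \trianglelefteq \probC{\bigsqcup_n \es{P}_n}{p}{\bigsqcup_m \es{P}_m}$. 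Second, I would take an arbitrary event $e$ of $\probC{\bigsqcup_n \es{P}_n}{p}{\bigsqcup_m \es{P}_m}$ and, appealing to Definition~\ref{def:pes-prob2}, split into three cases: $e = \tau$, $e \in \bigcup_n E_n$, or $e \in \bigcup_m E'_m$. In the first case $\tau$ is an event of $\probC{\es{P}_1}{p}{\es{P}'_1}$ hence of the supremum. In the second case $e \in E_n$ for some $n$, so $e$ is an event of $\probC{\es{P}_n}{p}{\es{P}'_1}$, hence of $\bigsqcup_{n,m}(\probC{\es{P}_n}{p}{\es{P}_m})$; the third case is symmetric. Third, I would conclude by Lemma~\ref{lem:cont-1} (in its binary form, exactly as used in Lemma~\ref{lem:conc-cont-2} and Lemma~\ref{lem:nd-cont-1}).

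\textbf{Main obstacle.} The one point that genuinely needs care — and the only place this proof differs in substance from the purely relational cases $\conc{\cdot}{\cdot}$ and $\nd{\cdot}{\cdot}$ — is making sure the valuations behave correctly under the suprema, i.e.\ that the condition ``$\forall x \in \confES{\es{P}_1}\,.\,v_1(x) = v_2(x)$'' in Definition~\ref{def:pes-fix-order-2} is respected. Concretely: for a configuration $x$ of $\probC{\es{P}_n}{p}{\es{P}'_m}$ with $x \backslash \tau \in \confES{\es{P}_n}$, the valuation is $p \cdot v_n(x \backslash \tau)$, and along the chain this is stable because $v_n(x \backslash \tau) = v^\omega(x \backslash \tau)$ by $\es{P}_n \trianglelefteq \bigsqcup_n \es{P}_n$ and Definition~\ref{def:lub-2}; so the valuation of $x$ in $\probC{\bigsqcup_n \es{P}_n}{p}{\bigsqcup_m \es{P}_m}$ agrees with its valuation in $\probC{\es{P}_n}{p}{\es{P}'_m}$ for suitably large $n,m$. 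This is exactly what is packaged into Lemma~\ref{lem:probc-fix-mono-2} and into the equality clause of $\trianglelefteq$, so once monotonicity and event-containment are in hand, no further valuation bookkeeping is needed — which is why, like its siblings, the proof ends up being a one-line appeal to Lemma~\ref{lem:probc-fix-mono-2} and Lemma~\ref{lem:cont-1}.
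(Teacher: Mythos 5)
Your proposal matches the paper's proof essentially step for step: monotonicity via Lemma~\ref{lem:probc-fix-mono-2}, the same three-way case split on an event $e$ of $\probC{\bigsqcup_n \es{P}_n}{p}{\bigsqcup_m \es{P}_m}$ ($e=\tau$, $e$ in the first supremum, $e$ in the second), and the concluding appeal to Lemma~\ref{lem:cont-1}. Your extra remark on why the valuations are forced to agree is a useful clarification the paper leaves implicit, but it does not change the argument.
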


When adapting Lemma~\ref{lem:cont-1} (helps to show that operators defined in event structures are
continuous) and Lemma~\ref{lem:fix-prop-1} (a version of the Kleene fixed-point theorem for the case
of event structures) to the probabilistic realm, we notice that the proofs we need to do are
analogous to the ones done in the non-deterministic case. Hence, we omit their formulation in this
section.

We now extend Definition~\ref{def:den-sem2} with the recursion and let $\mathbb{P}$ denote the class
of probabilistic event structures.
\begin{definition}\label{def:fix-den-sem-2}
  Define an environment to be a function $\gamma : Var \rightarrow \mathbb{P}$ from variables to
  probabilistic event structures. For a command $\com{C}$ and an environment $\gamma$ define
  $\mf{\com{C}}_\gamma$ as follows:
  \begin{align*}
    & \mf{\com{skip}}_\gamma = (\set{sk}, \set{sk \leq sk}, \emptyset, v(\set{sk}=1)) \\
    & \mf{\com{a}}_\gamma = (\set{a}, \set{a \leq a}, \emptyset, v(\set{a}=1)) \\
    & \mf{\seq{\com{C_1}}{\com{C_2}}}_\gamma = \seq{\mf{\com{C_1}}_\gamma}{\mf{\com{C_2}}_\gamma} \\
    & \mf{\probC{\com{C_1}}{p}{\com{C_2}}}_\gamma = \probC{\mf{\com{C_1}}_\gamma}{p}{\mf{\com{C_2}}_\gamma} \\
    & \mf{\conc{\com{C_1}}{\com{C_2}}}_\gamma = \conc{\mf{\com{C_1}}_\gamma}{\mf{\com{C_2}}_\gamma} \\
    & \mf{X}_\gamma = \gamma(X) \\
    & \mf{\rec{X}{\com{C}}}_\gamma = fix(\Gamma^{\com{C}, \gamma})
  \end{align*}
  where $\Gamma^{\com{C}, \gamma} : \es{P} \rightarrow \es{P}$ is given by
  $\Gamma^{\com{C}, \gamma}(\es{P}) = \mf{\com{C}}_{\gamma(X \leftarrow \es{P})}$.
\end{definition}

We now show that $\Gamma^{\com{C}, \gamma}$ is continuous. 
\begin{lemma}\label{lem:gamma-cont-2}
  $\Gamma^{\com{C}, \gamma}$ is continuous.
\end{lemma}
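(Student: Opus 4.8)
The plan is to prove this exactly as the analogous non-probabilistic statement Lemma~\ref{lem:gamma-cont-1}, namely by structural induction on the command $\com{C}$, since Definition~\ref{def:fix-den-sem-2} is defined by recursion on $\com{C}$. The base cases $\com{C} \equiv \com{skip}$ and $\com{C} \equiv \com{a}$ give constant functions $\es{P} \mapsto \mf{\com{skip}}_\gamma$ and $\es{P} \mapsto \mf{\com{a}}_\gamma$, which are trivially continuous; the variable case $\com{C} \equiv Y$ is either the identity function (when $Y = X$) or a constant function $\es{P} \mapsto \gamma(Y)$, again continuous. So the real content is in the compound cases.

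For $\com{C} \equiv \seq{\com{C}_1}{\com{C}_2}$, $\com{C} \equiv \conc{\com{C}_1}{\com{C}_2}$ and $\com{C} \equiv \probC{\com{C}_1}{p}{\com{C}_2}$, I would unfold $\Gamma^{\com{C},\gamma}(\bigsqcup_n \es{P}_n)$ using Definition~\ref{def:fix-den-sem-2}, apply the induction hypothesis to $\Gamma^{\com{C}_1,\gamma}$ and $\Gamma^{\com{C}_2,\gamma}$ to pull the least upper bound outside each argument, then invoke the operator-continuity lemmas Lemma~\ref{lem:seq-cont-2}, Lemma~\ref{lem:conc-cont-2}, Lemma~\ref{lem:probc-cont-2} respectively to commute the operator with $\bigsqcup_n$, and finally fold back up with Definition~\ref{def:fix-den-sem-2}. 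Each of these is a chain of equalities identical in shape to the corresponding block in the proof of Lemma~\ref{lem:gamma-cont-1}, and since the operator-continuity lemmas already include the valuation component there is no extra work on valuations here. (For the sequential case one must also note the syntactic restriction that $\fvar{\com{C}_1} = \emptyset = \bvar{\com{C}_1}$, so $\mf{\com{C}_1}$ does not depend on the environment, which is exactly what makes right-monotonicity/right-continuity of $;$ suffice.)

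The recursion case $\com{C} \equiv \rec{Y}{\com{C}'}$ is the main obstacle, and I would treat it as in Lemma~\ref{lem:gamma-cont-1}: unfold $\Gamma^{\rec{Y}{\com{C}'},\gamma}(\bigsqcup_n \es{P}_n) = \mf{\rec{Y}{\com{C}'}}_{\gamma(X \leftarrow \bigsqcup_n \es{P}_n)} = fix(\es{P} \mapsto \mf{\com{C}'}_{\gamma(X \leftarrow \bigsqcup_n \es{P}_n, Y \leftarrow \es{P})})$, apply the induction hypothesis to $\com{C}'$ to rewrite the body as $\es{P} \mapsto \bigsqcup_n \mf{\com{C}'}_{\gamma(X \leftarrow \es{P}_n, Y \leftarrow \es{P})}$, then use continuity of $\mathrm{curry}$ to move $\bigsqcup_n$ out of the abstraction, continuity of $fix$ (both cited from~\cite{abramsky94}) to move $\bigsqcup_n$ past $fix$, and re-fold via Definition~\ref{def:fix-den-sem-2} to obtain $\bigsqcup_n \Gamma^{\rec{Y}{\com{C}'},\gamma}(\es{P}_n)$. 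The delicate points are keeping the nested environment updates $\gamma(X \leftarrow \es{P}_n, Y \leftarrow \es{P})$ straight and ensuring the induction hypothesis is applied at the environment carrying the $Y$-binding, but this is bookkeeping rather than a genuine difficulty, and the probabilistic case differs from the earlier one only in that $\es{E}$ ranges over probabilistic event structures and $\trianglelefteq$, $\bigsqcup$ are as in Definition~\ref{def:pes-fix-order-2} and Definition~\ref{def:lub-2} — all the continuity infrastructure needed has already been established.
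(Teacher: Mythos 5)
Your proposal matches the paper's proof: the paper proves only the probabilistic-choice case (unfolding via Definition~\ref{def:fix-den-sem-2}, applying the induction hypothesis, invoking Lemma~\ref{lem:probc-cont-2}, and folding back) and explicitly defers the remaining cases to the identical argument in Lemma~\ref{lem:gamma-cont-1}, which is exactly the structure you describe, including the use of continuity of curry and $fix$ for the recursion case. Your added remark about the restriction $\fvar{\com{C}_1} = \emptyset = \bvar{\com{C}_1}$ making right-continuity of sequential composition suffice is consistent with the paper's treatment.
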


To prove a lemma that is helpful for showing soundness (Lemma~\ref{res:soundI-fix-2}), we need the
probabilistic versions of the substitution lemma (Lemma~\ref{lem:1-1}) and
Lemma~\ref{lem:2-1}. Fortunately, the formulation of these probabilistic versions is the same as the
original ones. However, the proof of the substitution lemma needs to adjusted to accommodate the
probabilistic command. Hence, we show here its formulation (Lemma~\ref{lem:1-2}).
\begin{lemma}\label{lem:1-2}
  $\mf{\com{C}'[X \leftarrow \mf{\rec{X}{\com{C}}}_\gamma]}_\gamma =
  \mf{\com{C}'}_{\gamma{(X \leftarrow \mf{\rec{X}{\com{C}}}_\gamma})}$
\end{lemma}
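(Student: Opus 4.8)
The plan is to prove Lemma~\ref{lem:1-2} by structural induction on the command $\com{C}'$, following exactly the same skeleton as the proof of Lemma~\ref{lem:1-1} in the non-deterministic setting, since the only new case to handle is the probabilistic choice $\probC{\com{C}_1}{p}{\com{C}_2}$ in place of the non-deterministic one. For the base cases $\com{skip}$ and $\com{a}$, substitution does nothing and the interpretation does not depend on the environment for these constants, so both sides reduce immediately to $\mf{\com{skip}}_\gamma$ (resp.\ $\mf{\com{a}}_\gamma$) and equality is trivial.

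For the inductive cases I would proceed by unfolding Definition~\ref{fig:substitution-2} to push the substitution inside the operator, then apply Definition~\ref{def:fix-den-sem-2} to distribute $\mf{-}_\gamma$ over the operator, then invoke the induction hypothesis on each immediate subcommand, and finally fold back using Definition~\ref{def:fix-den-sem-2}. Concretely, for sequential composition: $\mf{(\seq{\com{C}_1}{\com{C}_2})[X \leftarrow \mf{\rec{X}{\com{C}}}_\gamma]}_\gamma = \mf{\seq{\com{C}_1}{(\com{C}_2[X \leftarrow \cdots])}}_\gamma = \seq{\mf{\com{C}_1}_\gamma}{\mf{\com{C}_2[X \leftarrow \cdots]}_\gamma}$, then i.h.\ gives $\seq{\mf{\com{C}_1}_\gamma}{\mf{\com{C}_2}_{\gamma(X \leftarrow \cdots)}}$; here one uses that $\com{C}_1$ has no free variables (the syntactic restriction on sequential composition), so $\mf{\com{C}_1}_\gamma = \mf{\com{C}_1}_{\gamma(X \leftarrow \cdots)}$, and fold back to $\mf{\seq{\com{C}_1}{\com{C}_2}}_{\gamma(X \leftarrow \cdots)}$. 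The cases for $\conc{\com{C}_1}{\com{C}_2}$ and $\probC{\com{C}_1}{p}{\com{C}_2}$ are identical in shape, substituting in both subcommands and using i.h.\ twice; the probabilistic case is genuinely no harder than the concurrent one because Definition~\ref{def:fix-den-sem-2} treats $\probC{-}{p}{-}$ exactly as a binary operator commuting with $\mf{-}_\gamma$, and substitution commutes with it componentwise.

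The only delicate case is recursion $\rec{Y}{\com{C}'}$, handled exactly as in Lemma~\ref{lem:1-1}: unfold the substitution to $\rec{Y}{(\com{C}'[X \leftarrow \mf{\rec{X}{\com{C}}}_\gamma])}$, apply Definition~\ref{def:fix-den-sem-2} to get $fix(\Gamma^{\com{C}'[X \leftarrow \cdots], \gamma}) = fix(\es{P} \mapsto \mf{\com{C}'[X \leftarrow \cdots]}_{\gamma(Y \leftarrow \es{P})})$, apply the induction hypothesis inside the $fix$ to rewrite $\mf{\com{C}'[X \leftarrow \cdots]}_{\gamma(Y \leftarrow \es{P})} = \mf{\com{C}'}_{\gamma(Y \leftarrow \es{P}, X \leftarrow \mf{\rec{X}{\com{C}}}_\gamma)}$, and fold back to $fix(\Gamma^{\com{C}', \gamma(X \leftarrow \mf{\rec{X}{\com{C}}}_\gamma)}) = \mf{\rec{Y}{\com{C}'}}_{\gamma(X \leftarrow \mf{\rec{X}{\com{C}}}_\gamma)}$. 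I expect the main obstacle — such as it is — to be bookkeeping around the environments and the implicit $\alpha$-conversion in the $\rec{Y}{-}$ case (ensuring $Y \neq X$ and $Y$ not free in $\mf{\rec{X}{\com{C}}}_\gamma$, which is guaranteed by the "changing bound variables to avoid clashes" clause in Definition~\ref{fig:substitution-2}), since everywhere else the equalities are definitional chains. Importantly, unlike the corresponding soundness/adequacy lemmas, no valuation reasoning is needed here: Definition~\ref{def:fix-den-sem-2} builds valuations into the operators, so the equalities of probabilistic event structures follow from the equalities of their underlying components, which the induction already delivers.
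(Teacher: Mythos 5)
Your proof is correct and follows essentially the same route as the paper: structural induction on $\com{C}'$, reusing the cases of Lemma~\ref{lem:1-1} verbatim and adding the probabilistic choice as a binary operator that commutes definitionally with both substitution and $\mf{-}_\gamma$. Your explicit remarks on the closedness of $\com{C}_1$ in the sequential case and on $\alpha$-conversion in the $\rec{Y}{-}$ case are points the paper leaves implicit, but they do not change the argument.
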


To prove Theorem~\ref{res:soundII-fix-2}, we found it helpful to have the following lemma, whose
reasoning is similar to that of Lemma~\ref{lem:sound-aux-seq-conc-2}.
\begin{lemma}\label{lem:fix-conf-max-2}
  If
  $\rec{X}{\com{C}} \rightarrow \sum_i p_i \cdot (\tau, \com{C}_i[X \leftarrow \rec{X}{\com{C}}])$
  then $x \in \confmax{\mf{\rec{X}{\com{C}}}_\gamma}$ and
  $x \in \confmax{\sum_i p_i \cdot \mf{\com{C}_i[X \leftarrow \rec{X}{\com{C}}]}_\gamma}$ such that
  $\exists \mf{\com{C}_i}_\gamma\, .\, v(x) = v_i(x)$.
\end{lemma}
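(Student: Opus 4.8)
The plan is to peel off the recursion using the fixed-point identity and then run the same case analysis as in Lemma~\ref{lem:sound-aux-seq-conc-2}, now with the probabilistic-choice and recursion cases added. First I would observe that $\rec{X}{\com{C}} \rightarrow \sum_i p_i \cdot (\tau, \com{C}_i[X \leftarrow \rec{X}{\com{C}}])$ can only be derived, via the recursion rule added to Figure~\ref{fig:op-small2}, from the premise $\com{C} \rightarrow \sum_i p_i \cdot (\tau, \com{C}_i)$. Writing $\gamma' = \gamma(X \leftarrow \mf{\rec{X}{\com{C}}}_\gamma)$, Lemma~\ref{lem:2-1} gives $\mf{\rec{X}{\com{C}}}_\gamma = \mf{\com{C}}_{\gamma'}$ and Lemma~\ref{lem:1-2} gives $\mf{\com{C}_i[X \leftarrow \rec{X}{\com{C}}]}_\gamma = \mf{\com{C}_i}_{\gamma'}$. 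Hence it suffices to establish, for every command $\com{D}$, every environment $\delta$, and every transition $\com{D} \rightarrow \sum_i p_i \cdot (\tau, \com{D}_i)$, that $x \in \confmax{\mf{\com{D}}_\delta}$ iff $x \in \confmax{\sum_i p_i \cdot \mf{\com{D}_i}_\delta}$, and that for any such $x$ there is an index $i$ with $v(x) = v_i(x)$; the lemma then follows by taking $\com{D} := \com{C}$ and $\delta := \gamma'$.

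I would prove this auxiliary statement by induction on the derivation of $\com{D} \rightarrow \sum_i p_i \cdot (\tau, \com{D}_i)$. If $\com{D} = \probC{\com{D}_1}{p}{\com{D}_2}$ then $\sum_i p_i \cdot (\tau, \com{D}_i) = p\cdot(\tau,\com{D}_1) + (1-p)\cdot(\tau,\com{D}_2)$, and by Definition~\ref{def:fix-den-sem-2} together with the representation in Remark~\ref{rem:probc-n} we have $\mf{\com{D}}_\delta = \probC{\mf{\com{D}_1}_\delta}{p}{\mf{\com{D}_2}_\delta} = p\cdot\mf{\com{D}_1}_\delta + (1-p)\cdot\mf{\com{D}_2}_\delta = \sum_i p_i \cdot \mf{\com{D}_i}_\delta$, so both probabilistic event structures coincide and all the claims are immediate (for a maximal $x$, $x \backslash \tau$ is maximal in exactly one $\mf{\com{D}_k}_\delta$, which fixes the matching index). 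If $\com{D} = \seq{\com{D}_1}{\com{D}_2}$ or $\com{D} = \conc{\com{D}_1}{\com{D}_2}$, the $\tau$-transition is inherited from a $\tau$-transition of $\com{D}_1$ (or, in the parallel case, of $\com{D}_2$); I would apply the induction hypothesis to that sub-transition and then pull the probabilistic sum out past the composition using Lemma~\ref{lem:sound-aux-seq-2} (resp. Lemma~\ref{lem:sound-aux-conc-2}), mirroring the argument of Lemma~\ref{lem:sound-aux-seq-conc-2} and using Definition~\ref{def:pes-seq2} / Definition~\ref{def:pes-conc2} to track maximal configurations and the valuation. If $\com{D} = \rec{Y}{\com{D}'}$, the transition comes from $\com{D}' \rightarrow \sum_i p_i \cdot (\tau, \com{D}'_i)$ with $\com{D}_i = \com{D}'_i[Y \leftarrow \rec{Y}{\com{D}'}]$; applying Lemma~\ref{lem:2-1} and Lemma~\ref{lem:1-2} once more (for $Y$ and the environment $\delta$) reduces the goal to the induction hypothesis on the strictly smaller body $\com{D}'$. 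The cases $\com{D} \in \{\com{skip}, \com{a}, X\}$ admit no $\tau$-transition and are vacuous.

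The main obstacle I expect is the recursion case: one must check carefully that the operational substitution $[Y \leftarrow \rec{Y}{\com{D}'}]$ is matched on the semantic side exactly by the environment update produced by Lemma~\ref{lem:2-1}, and that passing to $\confmax$ behaves well with respect to the least-upper-bound construction of Definition~\ref{def:lub-2} underlying $fix$ — i.e. that, via Lemma~\ref{lem:fix-prop-1}, a maximal configuration of $fix(\Gamma^{\com{D}',\delta})$ is already a maximal configuration of $\mf{\com{D}'}_{\delta(Y \leftarrow \mf{\rec{Y}{\com{D}'}}_\delta)}$ and conversely, with the valuation preserved through all of these rewrites. Once the auxiliary sequential and parallel lemmas are invoked, the remaining cases are routine bookkeeping.
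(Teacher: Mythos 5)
Your proposal is correct and follows essentially the same route as the paper: the paper's proof likewise inverts the recursion rule to obtain $\com{C} \rightarrow \sum_i p_i (\tau, \com{C}_i)$, invokes the induction hypothesis (whose non-recursive cases are exactly Lemma~\ref{lem:sound-aux-seq-conc-2}), instantiates the environment as $\gamma(X \leftarrow \mf{\rec{X}{\com{C}}}_\gamma)$, and transfers back via Lemma~\ref{lem:2-1} and Lemma~\ref{lem:1-2}. You merely make the underlying induction on the derivation explicit (and swap the order of the environment rewrite and the appeal to the induction hypothesis), which is a presentational rather than a substantive difference.
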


With the introduction of recursion, the formulation of the equivalence between operational and
denotational semantics is as in the case without recursion. We proceed to show the formulation of
the respective lemmas and theorems, since the proof needs to be completed to contemplate the
recursion case.
\begin{lemma}[Soundness I]\label{res:soundI-fix-2}
  \begin{itemize}
  \item If $\com{C} \rightarrow 1 \cdot (l, \com{C}')$ then
    $\mf{\com{C'}}_\gamma \equiv \mf{\com{C}}_\gamma \backslash l$
  \item If $\com{C} \rightarrow \sum_i p_i (\tau, \com{C}_i)$ then
    $\mf{\com{C}}_\gamma \sqsubseteq \sum_i p_i \mf{\com{C}_i}_\gamma$
  \end{itemize}
\end{lemma}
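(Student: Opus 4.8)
The plan is to prove Lemma~\ref{res:soundI-fix-2} by induction over the rules of the operational semantics in Figure~\ref{fig:op-small2}, exactly mirroring the structure of the proof of Lemma~\ref{res:soundI-2}. For the base cases ($\com{skip}$, $\com{a}$, $\probC{\com{C}_1}{p}{\com{C}_2}$) and for the sequential and parallel composition cases, the reasoning is literally the one given for Lemma~\ref{res:soundI-2}, with each use of $\mf{-}$ replaced by $\mf{-}_\gamma$ and each invocation of Definition~\ref{def:den-sem2} replaced by Definition~\ref{def:fix-den-sem-2}; since the constructions on probabilistic event structures and the monotonicity/removal lemmas (Lemmas~\ref{lem:seq-mono2}, \ref{lem:conc-mono2}, \ref{lem:probc-mono2}, \ref{lem:seq-rem-init2}, \ref{lem:conc-rem-init2}, \ref{lem:sound-aux-seq-2}, \ref{lem:sound-aux-conc-2}) are all stated for arbitrary probabilistic event structures, they apply verbatim to the $\gamma$-indexed interpretations. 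So the only genuinely new case is the recursion rule.

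For the recursion case, I would follow the template already used for Lemma~\ref{res:soundI-fix-1} in the non-deterministic setting. Suppose $\rec{X}{\com{C}} \xrightarrow{l} \com{C}'[X \leftarrow \rec{X}{\com{C}}]$; by the rule for recursion this comes from $\com{C} \xrightarrow{l} \com{C}'$. By the induction hypothesis, for \emph{every} environment $\gamma$ we have $\mf{\com{C}'}_\gamma \equiv \mf{\com{C}}_\gamma \backslash l$ (in the single-step case) or $\mf{\com{C}}_\gamma \sqsubseteq \sum_i p_i \mf{\com{C}_i}_\gamma$ (in the $\tau$-case). Now I would instantiate $\gamma$ at the environment $\gamma(X \leftarrow \mf{\rec{X}{\com{C}}}_\gamma)$, obtaining $\mf{\com{C}'}_{\gamma(X \leftarrow \mf{\rec{X}{\com{C}}}_\gamma)} \equiv \mf{\com{C}}_{\gamma(X \leftarrow \mf{\rec{X}{\com{C}}}_\gamma)} \backslash l$. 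By Lemma~\ref{lem:2-1} the right-hand side equals $\mf{\rec{X}{\com{C}}}_\gamma \backslash l$, and by Lemma~\ref{lem:1-2} the left-hand side equals $\mf{\com{C}'[X \leftarrow \mf{\rec{X}{\com{C}}}_\gamma]}_\gamma$; a further appeal to Lemma~\ref{lem:1-2} (applied to the substitution of the \emph{syntactic} recursion term, whose denotation under $\gamma$ is $\mf{\rec{X}{\com{C}}}_\gamma$) identifies this with $\mf{\com{C}'[X \leftarrow \rec{X}{\com{C}}]}_\gamma$. For the $\tau$-case the same substitution-lemma manipulations turn $\mf{\com{C}}_{\gamma(X \leftarrow \mf{\rec{X}{\com{C}}}_\gamma)} \sqsubseteq \sum_i p_i \mf{\com{C}_i}_{\gamma(X \leftarrow \mf{\rec{X}{\com{C}}}_\gamma)}$ into $\mf{\rec{X}{\com{C}}}_\gamma \sqsubseteq \sum_i p_i \mf{\com{C}_i[X \leftarrow \rec{X}{\com{C}}]}_\gamma$, using additionally that $\sum_i p_i (-)$ commutes with the substitution since it is built from $\uplus$ and the piecewise valuation (Remark~\ref{rem:probc-n} and the fact that substitution distributes over $\probC{}{p}{}$, $\seq{}{}$, $\conc{}{}$ by Definition~\ref{fig:substitution-2}).

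The main obstacle I anticipate is a bookkeeping subtlety rather than a conceptual one: one must be careful that in the $\tau$-case the probability-weighted sum $\sum_i p_i \cdot \es{P}_i$ is itself a well-defined probabilistic event structure (this is guaranteed by Remark~\ref{rem:probc-n}) and that the equivalence $\equiv$ and the order $\sqsubseteq$ interact correctly with $\sum_i p_i (-)$ — specifically that $\es{P}_i \equiv \es{P}'_i$ for all $i$ implies $\sum_i p_i \es{P}_i \equiv \sum_i p_i \es{P}'_i$, which should follow from Definition~\ref{def:pes-sub2} and the piecewise definition of the valuation in Remark~\ref{rem:probc-n} but needs to be invoked explicitly. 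I would also double-check that the induction hypothesis is being applied at the right command: when $\com{C}' = \seq{\com{C}'_1}{\com{C}_2}$ or contains further recursions, the substitution $[X \leftarrow \rec{X}{\com{C}}]$ must be pushed all the way inside, and it is precisely Lemma~\ref{lem:1-2} (the semantic substitution lemma) that licenses this. Apart from that, the proof is a routine transport of the already-established argument along the extra layer of environments.
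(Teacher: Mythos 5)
Your proposal is correct and follows essentially the same route as the paper: the non-recursion cases are transported verbatim from Lemma~\ref{res:soundI-2} with $\mf{-}$ replaced by $\mf{-}_\gamma$, and the recursion case is handled exactly as in the paper by instantiating the induction hypothesis at the environment $\gamma(X \leftarrow \mf{\rec{X}{\com{C}}}_\gamma)$ and then applying Lemma~\ref{lem:2-1} and Lemma~\ref{lem:1-2}. Your additional remarks on the well-definedness of $\sum_i p_i \cdot \es{P}_i$ and the compatibility of $\equiv$ and $\sqsubseteq$ with that sum are bookkeeping points the paper leaves implicit, but they do not change the argument.
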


\begin{theorem}[Soundness II]\label{res:soundII-fix-2}
  If $\com{C} \twoheadrightarrow p_0 (\omega_0, \checkmark) + \sum_k p_k (\omega_{k}, \com{C}_k)$
  then exists $x_0 \in \confmax{\mf{\com{C}}_\gamma}$ such that
  $\emptyset \stackrel{\omega_{0}}{\chain\ } x_{0}$ and $p_{0} = v(x_{0})$.  
\end{theorem}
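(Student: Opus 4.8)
The plan is to reproduce the proof of Theorem~\ref{res:soundII-2} essentially verbatim, making only three changes: replace the interpretation $\mf{-}$ by the environment-indexed interpretation $\mf{-}_\gamma$ of Definition~\ref{def:fix-den-sem-2}, replace every appeal to the probabilistic Soundness~I lemma (Lemma~\ref{res:soundI-2}) by its cyclic counterpart Lemma~\ref{res:soundI-fix-2}, and insert one additional sub-case for the recursion command. As before, the argument is an induction on the length $|\omega_0|$. For the base case $|\omega_0| = 1$ the transition must have the form $\com{C} \twoheadrightarrow 1 \cdot (l, \checkmark)$ (since the only small-step rules producing $\checkmark$ assign it weight $1$), so $\set{l} \in \confmax{\mf{\com{C}}_\gamma}$, $\emptyset \cchain{l} \set{l}$, and $v(\set{l}) = 1$ by Lemma~\ref{lem:prob-init}.

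For $|\omega_0| > 1$ I would apply the second rule of Figure~\ref{fig:op-nstep2} to split the $n$-step transition into a first small step $\com{C} \rightarrow \sum_i p_i (l', \com{C}_i)$ together with $n$-step transitions $\com{C}_i \twoheadrightarrow \sum_j p_j (\omega'_{ij}, \com{C}_{ij})$, where the $\checkmark$-branch is produced by some index $i_0$ with $\omega_0 = l' : \omega'_{i_0 0}$ and $p_0 = p_{i_0}\cdot p'_0$, writing $p'_0$ for the weight of $\checkmark$ in $\com{C}_{i_0}\twoheadrightarrow\dots$. If $l' \neq \tau$ the small step is deterministic, $\com{C} \rightarrow 1\cdot(l',\com{C}')$, so Lemma~\ref{res:soundI-fix-2} gives $\mf{\com{C}}_\gamma \backslash l' \equiv \mf{\com{C}'}_\gamma$; the induction hypothesis applied to $\com{C}'$ and the shorter word $\omega'_{i_0 0}$ yields $x'_0 \in \confmax{\mf{\com{C}'}_\gamma}$ with $\emptyset \cchain{\omega'_{i_0 0}} x'_0$ and $p_0 = v'(x'_0)$, and Definition~\ref{def:rem-init2} lifts this to $\set{l'}\cup x'_0 \in \confmax{\mf{\com{C}}_\gamma}$ with covering chain $\emptyset \cchain{l'} \set{l'} \cchain{\omega'_{i_0 0}} \set{l'}\cup x'_0$ and $v(\set{l'}\cup x'_0) = v(\set{l'})\cdot v'(x'_0) = 1\cdot p_0 = p_0$, again using Lemma~\ref{lem:prob-init}.

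If $l' = \tau$, Lemma~\ref{res:soundI-fix-2} gives $\mf{\com{C}}_\gamma \sqsubseteq \sum_i p_i \mf{\com{C}_i}_\gamma$, and the induction hypothesis gives some $i$ and $x_{i0}\in\confmax{\mf{\com{C}_i}_\gamma}$ with $\emptyset \cchain{\omega'_{i0}} x_{i0}$ and $p'_0 = v_i(x_{i0})$. Now branch on the shape of $\com{C}$ exactly as in Theorem~\ref{res:soundII-2}: when $\com{C} = \probC{\com{C}_1}{p}{\com{C}_2}$ use Definition~\ref{def:den-sem2} and Definition~\ref{def:pes-prob2} directly; when $\com{C} = \seq{\com{C}_1}{\com{C}_2}$ or $\com{C} = \conc{\com{C}_1}{\com{C}_2}$ use Remark~\ref{rem:probc-n} and Lemma~\ref{lem:sound-aux-seq-conc-2}; and in the new case $\com{C} = \rec{X}{\com{D}}$, where the transition is $\rec{X}{\com{D}} \rightarrow \sum_i p_i\cdot(\tau,\com{D}_i[X\leftarrow\rec{X}{\com{D}}])$, use Lemma~\ref{lem:fix-conf-max-2}, which plays for recursion precisely the role that Lemma~\ref{lem:sound-aux-seq-conc-2} plays for sequential and parallel composition: it identifies $\confmax{\mf{\rec{X}{\com{D}}}_\gamma}$ with $\confmax{\sum_i p_i\cdot\mf{\com{D}_i[X\leftarrow\rec{X}{\com{D}}]}_\gamma}$ and matches the valuations. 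In all three sub-cases one concludes that $\set{\tau}\cup x_{i0}\in\confmax{\mf{\com{C}}_\gamma}$ with $\emptyset \cchain{\tau} \set{\tau} \cchain{\omega'_{i0}} \set{\tau}\cup x_{i0}$ and $v(\set{\tau}\cup x_{i0}) = p_i\cdot p'_0 = p_0$.

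I expect no genuine obstacle: the real work is in the supporting lemmas (\ref{res:soundI-fix-2}, \ref{lem:fix-conf-max-2}, together with \ref{lem:sound-aux-seq-conc-2} and Remark~\ref{rem:probc-n}), all already established, so what remains is structural bookkeeping. The one point requiring care is index tracking, i.e.\ ensuring that the branch $i_0$ selected by unfolding the $n$-step transition is the branch to which the induction hypothesis is applied, and that the probability identity $p_0 = p_{i_0}\cdot p'_0$ produced by Figure~\ref{fig:op-nstep2} coincides with the valuation $v(\set{\tau}\cup x_{i_0 0})$ prescribed by Definition~\ref{def:pes-prob2} (resp.\ Remark~\ref{rem:probc-n}). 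Since the case analysis on $\com{C}$ intervenes only inside the $l'=\tau$ branch and each relevant lemma hands us both maximality and the valuation equality, the recursion case introduces nothing beyond what Theorem~\ref{res:soundII-2} already dealt with.
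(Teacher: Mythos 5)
Your proposal is correct and follows essentially the same route as the paper: the paper likewise reuses the proof of Theorem~\ref{res:soundII-2} wholesale and only adds, inside the $|\omega_0|>1$, $l'=\tau$ branch, the sub-case $\com{C}=\rec{X}{\com{D}}$, discharged via Remark~\ref{rem:probc-n} and Lemma~\ref{lem:fix-conf-max-2} exactly as you describe. Your write-up is in fact more explicit than the paper's (which states only the new sub-case), but the decomposition, the supporting lemmas, and the probability bookkeeping $p_0 = p_i\cdot p_0' = v(\set{\tau}\cup x_{i0})$ all coincide.
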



\begin{lemma}[Adequacy I]\label{res:adI-fix-2}
  Let $l' \in \init{\mf{\com{C}}_\gamma}$.
  \begin{enumerate}
  \item If $l' \neq \tau$ then
    $\exists \com{C'} \in (\com{C} \cup \set{\checkmark})\, .\,
    C \rightarrow 1 \cdot (l', \com{C'})$ and
    $\mf{\com{C}}_\gamma \backslash l' \equiv \mf{\com{C'}}_\gamma$.
  \item If $l' = \tau$ then
    $\exists \com{C'}, \com{C''}\, .\, \exists e' \in \init{\mf{\com{C'}}_\gamma}\, .\,
    \com{C} \rightarrow p \cdot \com{C'} + (1-p) \cdot \com{C''}$ and
    $\mf{\com{C}}_\gamma \sqsubseteq p \cdot \mf{\com{C'}}_\gamma + (1-p) \cdot \mf{\com{C''}}_\gamma$, with
    $p = v(\set{\tau, e'})$.
  \end{enumerate}
\end{lemma}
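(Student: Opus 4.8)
The plan is to prove Lemma~\ref{res:adI-fix-2} the same way Lemma~\ref{res:adI-2} was proved, namely by induction over the interpretation of commands, where every case except the recursive one is literally the argument already given for Lemma~\ref{res:adI-2} with $\mf{-}$ replaced by $\mf{-}_\gamma$, and with the auxiliary lemmas of Section~\ref{sec:pes} (Lemma~\ref{lem:seq-mono2}, Lemma~\ref{lem:conc-mono2}, Lemma~\ref{lem:probc-mono2}, Lemma~\ref{lem:seq-rem-init2}, Lemma~\ref{lem:conc-rem-init2}, Lemma~\ref{lem:sound-aux-seq-2}, Lemma~\ref{lem:sound-aux-conc-2}, Lemma~\ref{lem:prob-init}) replacing their non-environment counterparts. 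So the only new work is the $\rec{X}{\com{C}}$ case, which is the expected main obstacle, although it is mostly bookkeeping with substitution.

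For the recursion case I would argue as in Lemma~\ref{res:adI-fix-1}: assume $l' \in \init{\mf{\rec{X}{\com{C}}}_\gamma}$. By Definition~\ref{def:fix-den-sem-2} we have $\mf{\rec{X}{\com{C}}}_\gamma = fix(\Gamma^{\com{C},\gamma})$, and by Definition~\ref{def:lub-2} together with Lemma~\ref{lem:2-1} this equals $\mf{\com{C}}_{\gamma(X \leftarrow \mf{\rec{X}{\com{C}}}_\gamma)}$, so an initial event of the fixpoint is an initial event of $\mf{\com{C}}_{\gamma(X \leftarrow \mf{\rec{X}{\com{C}}}_\gamma)}$. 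Then I would split on whether $l' = \tau$ or not and invoke the induction hypothesis on $\com{C}$ with the environment $\gamma(X \leftarrow \mf{\rec{X}{\com{C}}}_\gamma)$.

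In the case $l' \neq \tau$: the i.h.\ gives $\com{C}'$ with $\com{C} \xrightarrow{l'} \com{C}'$ and $\mf{\com{C}}_{\gamma(X \leftarrow \mf{\rec{X}{\com{C}}}_\gamma)} \backslash l' \equiv \mf{\com{C}'}_{\gamma(X \leftarrow \mf{\rec{X}{\com{C}}}_\gamma)}$. By the recursion rule added to Figure~\ref{fig:op-small2} we get $\rec{X}{\com{C}} \rightarrow 1 \cdot (l', \com{C}'[X \leftarrow \rec{X}{\com{C}}])$, and by Lemma~\ref{lem:1-2} and Lemma~\ref{lem:2-1} the event structure $\mf{\com{C}'}_{\gamma(X \leftarrow \mf{\rec{X}{\com{C}}}_\gamma)}$ coincides with $\mf{\com{C}'[X \leftarrow \mf{\rec{X}{\com{C}}}_\gamma]}_\gamma$, so $\mf{\rec{X}{\com{C}}}_\gamma \backslash l' \equiv \mf{\com{C}'[X \leftarrow \rec{X}{\com{C}}]}_\gamma$. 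In the case $l' = \tau$: the i.h.\ yields $\com{C}', \com{C}''$ and $e' \in \init{\mf{\com{C}'}_{\gamma(\cdots)}}$ with $\com{C} \rightarrow p \cdot (\tau, \com{C}') + (1-p)\cdot(\tau,\com{C}'')$, $p = v(\set{\tau,e'})$, and $\mf{\com{C}}_{\gamma(\cdots)} \sqsubseteq p \cdot \mf{\com{C}'}_{\gamma(\cdots)} + (1-p)\cdot\mf{\com{C}''}_{\gamma(\cdots)}$; the recursion rule propagates the $\tau$-transition to $\rec{X}{\com{C}}$ with both branches substituted, and again Lemma~\ref{lem:1-2} and Lemma~\ref{lem:2-1} let me rewrite the right-hand side in terms of $\mf{\com{C}'[X \leftarrow \rec{X}{\com{C}}]}_\gamma$ and $\mf{\com{C}''[X \leftarrow \rec{X}{\com{C}}]}_\gamma$, finishing the case. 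The main subtlety to be careful about is that invoking the i.h.\ under the modified environment $\gamma(X \leftarrow \mf{\rec{X}{\com{C}}}_\gamma)$ is legitimate because $\com{C}$ is a structural subterm of $\rec{X}{\com{C}}$, so the induction is over the command, not the environment; everything else is routine unfolding of Definition~\ref{def:fix-den-sem-2} and the substitution lemmas.
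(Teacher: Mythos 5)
Your proposal is correct and follows essentially the same route as the paper: all non-recursive cases are the Lemma~\ref{res:adI-2} argument transported to $\mf{-}_\gamma$, and the recursion case combines the recursion rule of Figure~\ref{fig:op-small2} with the unfolding and substitution lemmas (Lemma~\ref{lem:2-1}, Lemma~\ref{lem:1-2}). The only cosmetic difference is that you instantiate the induction hypothesis directly at the environment $\gamma(X \leftarrow \mf{\rec{X}{\com{C}}}_\gamma)$, whereas the paper first applies it at $\gamma$ and then sets $\gamma := \gamma(X \leftarrow \mf{\rec{X}{\com{C}}}_\gamma)$; both are justified by the universal quantification over environments and the fact that the induction is structural on the command.
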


\begin{theorem}[Adequacy II]\label{res:adII-fix-2}
  For all $x_0 \in \confmax{\mf{\com{C}}_\gamma}$, if
  $\emptyset \stackrel{\omega_{x_0}}{\chain\ } x_0$ then we have
  $\com{C} \twoheadrightarrow v(x_0) (\omega_0, \checkmark) + \sum_k p_k (\omega_{k}, \com{C}_k)$,
  for some $\omega_k, p_k, \com{C}_k$.
\end{theorem}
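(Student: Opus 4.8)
\textbf{Proof plan for Theorem~\ref{res:adII-fix-2} (Adequacy II, cyclic probabilistic case).}
The plan is to extend the proof of Theorem~\ref{res:adII-2} in exactly the way that Theorem~\ref{res:soundII-fix-2} extended Theorem~\ref{res:soundII-1}: the whole argument is an induction on $|\omega_{x_0}|$, and the base case together with the non-$\tau$ step and the $\probC{}{}{}$, $\seq{}{}$, $\conc{}{}$ sub-cases of the $\tau$ step are literally the same as in Theorem~\ref{res:adII-2} once $\mf{-}$ is replaced by $\mf{-}_\gamma$. So the only genuinely new work is one additional sub-case in the inductive step: when $|\omega_{x_0}| > 1$, the first label $l_0 = \tau$, and $\com{C} = \rec{X}{\com{D}}$.

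For that sub-case I would proceed as follows. From $\emptyset \cchain{\tau} \set{\tau} \cchain{\omega_{x_0'}} x_0$ we have $\tau \in \init{\mf{\rec{X}{\com{D}}}_\gamma}$, so by Lemma~\ref{res:adI-fix-2} (the $l' = \tau$ case) there exist $\com{C}', \com{C}''$ and $e' \in \init{\mf{\com{C}'}_\gamma}$ with $\rec{X}{\com{D}} \rightarrow p \cdot (\tau, \com{C}') + (1-p)\cdot(\tau, \com{C}'')$, $p = v(\set{\tau, e'})$, and $\mf{\rec{X}{\com{D}}}_\gamma \sqsubseteq p\cdot\mf{\com{C}'}_\gamma + (1-p)\cdot\mf{\com{C}''}_\gamma$. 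Next I would invoke Lemma~\ref{lem:fix-conf-max-2} (which is the recursion analogue of Lemma~\ref{lem:sound-aux-seq-conc-2}) together with Remark~\ref{rem:probc-n}: since $x_0 \in \confmax{\mf{\rec{X}{\com{D}}}_\gamma}$, we get $x_0 \in \confmax{\sum_i p_i \cdot \mf{\com{C}_i[X \leftarrow \rec{X}{\com{D}}]}_\gamma}$ with the valuations matching, and hence $x_0 \backslash \tau \in \confmax{\mf{\com{C}'}_\gamma}$ (or $\in \confmax{\mf{\com{C}''}_\gamma}$, handled symmetrically), with $v(x_0) = p \cdot v'(x_0\backslash\tau)$. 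Applying the induction hypothesis to $\com{C}'$ on the strictly shorter covering chain $\emptyset \cchain{\omega_{x_0'}} x_0\backslash\tau$ gives $\com{C}' \twoheadrightarrow v'(x_0\backslash\tau)(\omega_{x_0'},\checkmark) + \sum_n p_n(\omega_n, \com{C}_n)$; Lemma~\ref{lem:prog-2} supplies a transition $\com{C}'' \twoheadrightarrow \sum_m p_m(\omega_m, \com{C}_m)$ for the other branch. Finally, by the right rule of Figure~\ref{fig:op-nstep2} (prepending $\tau$ and multiplying probabilities by $p$ and $1-p$ respectively) I obtain
\[
  \rec{X}{\com{D}} \twoheadrightarrow
  p\cdot\!\left(v'(x_0\backslash\tau)(\tau:\omega_{x_0'},\checkmark) + \sum_n p_n(\tau:\omega_n,\com{C}_n)\right)
  + (1-p)\cdot\sum_m p_m(\tau:\omega_m,\com{C}_m),
\]
and since the coefficient of $(\tau:\omega_{x_0'},\checkmark)$ is $p\cdot v'(x_0\backslash\tau) = v(x_0)$, taking $\omega_0 = \tau:\omega_{x_0'}$ closes the case.

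The step I expect to be the only real obstacle is making the bookkeeping of Lemma~\ref{lem:fix-conf-max-2} line up cleanly: one must be careful that the command $\com{C}'$ produced by Lemma~\ref{res:adI-fix-2} is precisely (up to the semantic equivalence $\equiv$) one of the $\com{C}_i[X \leftarrow \rec{X}{\com{D}}]$ appearing in the distribution, so that the maximal configuration $x_0\backslash\tau$ genuinely lives in $\confmax{\mf{\com{C}'}_\gamma}$ and carries the right valuation; this is exactly the reason Lemma~\ref{lem:fix-conf-max-2} was stated in the "for some $\mf{\com{C}_i}_\gamma$" form, and I would lean on it rather than re-deriving the configuration correspondence by hand. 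Everything else — the base case $|\omega_{x_0}| = 1$ and all non-recursion cases — is inherited verbatim from Theorem~\ref{res:adII-2}, so I would simply write "We only need to add the following sub-case when the size of the word is bigger than one and the transition is made by $\tau$," mirroring the phrasing of Theorem~\ref{res:soundII-fix-2}.
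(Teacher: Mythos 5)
Your proposal is correct and follows essentially the same route as the paper: the paper's proof likewise inherits everything from Theorem~\ref{res:adII-2} and adds only the $\com{C} = \rec{X}{\com{D}}$ sub-case of the $\tau$-step, resolved via Lemma~\ref{lem:fix-conf-max-2}, the induction hypothesis on the shorter covering chain, Lemma~\ref{lem:prog-2} for the untaken branch, and the n-step rule of Figure~\ref{fig:op-nstep2}. Your additional care about identifying $\com{C}'$ with one of the $\com{C}_i[X \leftarrow \rec{X}{\com{D}}]$ is a reasonable elaboration of a step the paper leaves implicit.
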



The following example shows soundness and adequacy in practice, for commands without recursive
behavior.
\begin{example}\label{ex:eq-sem-2}
  The probabilistic event structure in Example~\ref{ex:prob-es-2} corresponds to the command in
  Example~\ref{ex:small-2}.

  To see how both semantics relate with each other, recall the maximal configurations in
  Example~\ref{ex:prob-es-2} and the words that lead to the end of a computation in
  Example~\ref{ex:small-2}.

  Similarly to what was shown in Example~\ref{ex:eq-sem-1}, it is straightforward to see that each
  word corresponds to a covering chain and vice-versa.  What is left to verify is the probability.
  From Example~\ref{ex:small-2} we know that the word $\tau a b$ has probability $p$, which is the
  same probability of the corresponding covering chain.  Similarly, the words $\tau c d$ and
  $\tau d c$ have probability $1-p$, which equals the probability of the respective covering chains.

  Conversely, if we pick a covering chain of a maximal configuration, we quickly notice that its
  probability and the probability of the respective word is the same.
\end{example}

\newpage
\section{Unitary Event Structures}\label{sec:qes}
A quantum event
structure~\cite[Definition~2]{winskel14} is an event structure together with a function that maps
events to unitary operators or projections on a finite-dimensional Hilbert space
$\mathcal{H}$. Additionally, operators of concurrent events must commute in quantum event structures
and the latter support the notion of an initial state considered to be a density operator.

The definition of quantum event structures~\cite[Definition~2]{winskel14} does not mention
probabilities, even though probabilities arise as a result of measuring quantum states. So, how do
probabilities arise in quantum event structures? The key lies in the combination of the initial
state together with quantum operators associated with the events in a configuration (recall that a
configuration captures the notion of a computation in event structures). By applying the sequence of
operators associated with the events in a configuration to the initial state, we evolve the state
along that computational path. This process allows us to extract the probabilities related to
measurement outcomes. Winskel formalizes this idea~\cite[Theorem~3]{winskel14}. He shows that a
quantum event structure without conflicting events -- also called an elementary quantum event
structure -- can be transformed into a corresponding probabilistic event structure.

To drop the elementary condition imposed by Winskel in~\cite[Theorem~3]{winskel14}, we add two new
conditions to Winskel's definition of quantum event structures~\cite [Definition~2]{winskel14} (in
Section~\ref{subsec:quantum-to-prob} we show how these additional constraints allow us to remove the
elementary condition). First, we impose the minimal conflict to be transitive.  Secondly, the sum of
the operators of events in minimal conflict, or an event itself, should be a unitary operator. The
intuition behind the restrictions is to consider events in minimal conflicts as measurements.  We
call the resultant structure \emph{unitary event structures}.  Why do we add these restrictions?
The reason is to ensure that Equation~\ref{eq:prob-es-condition} holds when considering unitary
event structures, jointly with an initial state, as probabilistic event structures. Recall from
Equation~\ref{eq:prob-es-condition} that conflicting events have a conditional probability no
greater than one. Now, let us consider two events in conflict, $e$ and $e'$, such that
$Q(e) = H = Q(e')$.  Taking $\ket{0}$ as the initial state, the configuration-valuations of $\{e\}$
and $\{e'\}$ are $v(\{e\}) = 1 = v(\{e'\})$. It is straightforward to check that
Equation~\ref{eq:prob-es-condition} does not hold for $\emptyset \subseteq \{e\}, \{e'\}$:
\begin{align*}
  v(\emptyset) - v(\{e\}) - v(\{e'\}) \geq 0
  \Leftrightarrow
  1 - 1 - 1 \geq 0
  \Leftrightarrow
  -1 \geq 0
\end{align*}
Now, consider $Q(e) = P_0$ and $Q(e') = P_1$, \ie\ projections onto $\ket{0}$ and $\ket{1}$,
respectively. For any initial state $\rho$, their configuration-valuations are given by
$v(\{e\}) = p_0^{\rho}$ and $v(\{e'\}) = p_1^{\rho}$, \ie\ the probabilities of obtaining $\ket{0}$
or $\ket{1}$ when measuring $\rho$. Note that $p_0^{\rho} + p_1^{\rho} = 1$.  Now
Equation~\ref{eq:prob-es-condition} holds for $\emptyset \subseteq \{e\}, \{e'\}$:
\begin{align*}
  v(\emptyset) - v(\{e\}) - v(\{e'\}) \geq 0
  \Leftrightarrow
  1 - p_0^{\rho} - p_1^{\rho} \geq 0
  \Leftrightarrow
  1 - 1 \geq 0
  \Leftrightarrow
  0 \geq 0
\end{align*}
However, by being complacent with the satisfaction of Equation~\ref{eq:prob-es-condition} we lose
the ability to interpret non-deterministic choice in unitary event structures. For instance, the
program $\nd{\nd{H}{X}}{Z}$ would be represented as $e_H \minconflict e_X \minconflict e_Z$. Since
minimal conflict is not transitive, this fails to form a unitary event structure.

To define unitary event structures, we use the equivalence class of an event $e$, which is composed
of itself and by the events in which $e$ is in minimal conflict, \ie\
$ [e] = \set{ e' \mid e = e',\ e \minconflict e'}$.
\begin{definition}[Unitary Event Structure]\label{def:qes}
  A unitary event structure over a finite-dimensional Hilbert space $\mathcal{H}$, is a pair
  $(\es{E},\ Q : E \rightarrow Op(\mathcal{H}))$ comprised of an event structure
  $\es{E} = \pes{E}{\leq}{\#}$, where $Q$ maps events $e \in E$ to either projections or unitary
  operators on $\mathcal{H}$ such that:
  \begin{itemize}
  \item $\forall e_1, e_2 \in E,\, \econc{e_1}{e_2} \Rightarrow Q(e_1)Q(e_2) = Q(e_2)Q(e_1)$
  \item $\minconflict$ is transitive
  \item $\forall e \in E,\ \sum_{e' \in [e]} Q(e')$ is unitary
  \end{itemize}
  Given a finite configuration, $x \in \confES{\es{E}}$, define the operator $A_x$ as the
  composition $Q_{e_n}Q_{e_{n-1}} \dots Q_{e_2}Q_{e_1}$ for some covering chain
  $ \emptyset \stackrel{e_1}{\chain}\ x_1 \stackrel{e_2}{\chain}\ x_2 \dots \stackrel{e_n}{\chain}\
  x_n$ in $\confES{\es{E}}$, with $x_n = x$.  We additionally set
  $A_{\emptyset} = Id : \mathcal{H} \to \mathcal{H}$ for the initial configuration.  An
  \textit{initial state} is given by a density operator $\rho$ on $\mathcal{H}$.
\end{definition}

We now show that the operator $A_x$ in Definition~\ref{def:qes} is well-defined. For that, we need
an auxiliary definition.  Let $\alpha = \dots, e, e', \dots$ be a finite event sequence. Denote
$Q(\alpha)$ be the ordered product of operators along $\alpha$, accordingly to
Definition~\ref{def:qes}, \ie\ $Q(\alpha) = \dots Q(e') Q(e) \dots$ Furthermore, we also need a
lemma stating that the operators for any two sequence of events that differ from swapping adjacent
concurrent events are the same.
\begin{lemma}\label{lem:Ax-well-defined}
  Let $\alpha = \dots, e, e', \dots$ and $\beta = \dots, e', e, \dots$ be two finite event
  sequences that differ only by swapping adjacent concurrent events $\econc{e}{e'}$. Then
  $\dots Q(e') Q(e) \dots = \dots Q(e) Q(e') \dots$.
\end{lemma}
\begin{proof}
  Let $\alpha = \dots, e, e', \dots$ and $\beta = \dots, e', e, \dots$ be two finite event sequences
  such that $\econc{e}{e'}$.  Let $Q(\alpha) = \dots Q(e') Q(e) \dots$ and
  $Q(\beta) = \dots Q(e) Q(e') \dots$. Since $\econc{e}{e'}$ then $Q(e') Q(e) = Q(e) Q(e')$. Thus
  \[
    Q(\alpha) = \dots Q(e') Q(e) \dots= \dots Q(e) Q(e') \dots = Q(\beta)
  \]
\end{proof}

As stated by Winskel~\cite{winskel14}, any two covering chains of a configuration $x$ are
Mazurkiewicz trace equivalent, \ie\ obtainable, one from the other, by successively interchanging
concurrent events.  This fact together with Lemma~\ref{lem:Ax-well-defined} are crucial for showing
that $A_x$ is well-defined.
\begin{proposition}\label{prop:Ax-well-defined}
  Let $x \in \confES{\es{E}}$. Then $A_x$ is well-defined, \ie\ $A_x$ is independent from the chosen
  covering chain of $x$.
\end{proposition}
\begin{proof}
  Let $x \in \confES{\es{E}}$ and $\alpha, \beta$ be two covering chains of $x$. From
  Winskel~\cite{winskel14}, we know that any two covering chains of a configuration $x$ are
  obtainable, one from another, by successively swapping concurrent events (since covering chains
  are by definition finite, then the interchanges performed are also finite). By applying
  Lemma~\ref{lem:Ax-well-defined} at each swap, the associated operator is preserved at each
  step. Hence $Q(\alpha) = Q(\beta)$, and therefore $A_x$ is independent of the chosen covering
  chain.  The base case, $A_{\emptyset} = Id$ holds by definition.
\end{proof}

Example~\ref{ex:qes-1} is designed for the reader to get used to unitary event structures.
\begin{example}\label{ex:qes-1}
  Figure~\ref{fig:ex1-3}, depicts a unitary event structure composed of the events $a$,
  $\tau_0$, $\tau_1$, $d$, and $b$, where the subscript is the respective associated quantum
  operator. $a$ is the initial event, followed by $\tau_0$, which leads to $b$, and $\tau_1$, which
  leads to $d$.  Note that $\tau_0$ and $\tau_1$ are in conflict (specifically, minimal conflict)
  and that their associated quantum operations form a measurement. Furthermore, since the conflict
  relation is hereditary, $d$ and $b$ are in conflict.

  The set of configurations is
  $\set{\emptyset, \set{a}, \set{a, \tau_0}, \set{a, \tau_1}, \set{a, \tau_0, b}, \set{a, \tau_1,
      d}}$.  As said in Definition~\ref{def:qes}, from a configuration $x$ we have as operator
  $A_x$. For example, if we consider the maximal configurations $\set{a, \tau_0, b}$ and
  $\set{a, \tau_1, d}$, the respective operators are $Z P_0 H$ and $X P_1 H$. The former applies the
  Hadamard gate to qubit $1$, projects it to $\ket{0}$, and then applies the $Z$ gate. After
  applying the Hadamard gate, the latter projects the qubit to $\ket{1}$ and then applies the $X$
  gate.
  \begin{figure}[ht!]
    \centering
    \begin{minipage}{0.3\textwidth}
      \begin{tikzpicture}
        \begin{pgfonlayer}{nodelayer}
          \node [style=event] (0) at (0, 0) {$a_{\textcolor{blue}{H(1)}}$};
          \node [style=event] (1) at (-0.9, -1) {${\tau_{0_{\textcolor{blue}{P_0(1)}}}}$};
          \node [style=event] (2) at (0.9, -1) {${\tau_{1_{\textcolor{blue}{P_0(1)}}}}$};
          \node [style=event] (3) at (-0.9, -2.2) {$b_{\textcolor{blue}{Z(1)}}$};
          \node [style=event] (5) at (0.9, -2.2) {$d_{\textcolor{blue}{X(1)}}$};
        \end{pgfonlayer}
        \begin{pgfonlayer}{edgelayer}
          \draw [style=arrow] (0) to (1);
          \draw [style=arrow] (1) to (3);
          \draw [style=arrow] (0) to (2);
          \draw [style=wiggle] (1) to (2);
          \draw [style=arrow] (2) to (5);
        \end{pgfonlayer}
      \end{tikzpicture}
    \end{minipage}
    \begin{minipage}{0.3\textwidth}
      $
      \begin{array}{l}
        Q(a) = H,\\ 
        Q(\tau_0) = P_0,\ 
        Q(\tau_1) = P_1,\\
        Q(b) = Z(1),\\
        Q(d) = X
      \end{array}
      $
    \end{minipage}
    \caption{Example of an unitary event structure}
    \label{fig:ex1-3}
  \end{figure}
\end{example}

\subsection{Language}\label{subsec:lan-3}
We adapt the language shown in Section~\ref{subsec:lan-1} to the quantum setting. For that we need
some preliminaries. We consider at our disposal a finite number of qubits $N$, whose associated
space is $\mathbb{C}^{2^N}$. Each qubit is denoted by a natural number $n$ and we let $\tilde{n}$ be
a sequence of numbers, denoting multiple qubits. We will need the notion of a partial density
operator, which is a density operator whose trace is less or equal to one.  We denote by
$\mathcal{H}$ its associated space and we denote by $\mathcal{D}_{\leq 1}(\mathcal{H})$ the set of
partial density operators. We shall use $\rho$ to represent a partial density operator.  The set of
actions is now composed by a set of unitary gates $\mathcal{U}$ together with a set of projections
$\set{P_0^n, P_1^n}$ in which $P_0^n$ and $P_1^n$ represent the projection of qubit $n$ into
$\ket{0}$ and $\ket{1}$, respectively.  The set of labels is then
$L' = L \cup \set{P_0^n,\, P_1^n}$, with $L = Act \cup \{sk\}$.

The set of commands allowed by the language are given by the following grammar:
\[
  \com{C} ::= \com{skip} \mid U(\vec{n}) \mid \seq{\com{C}}{\com{C}} \mid \meas{n}{\com{C}_1}{\com{C}_2}
  \mid \conc{\com{C}}{\com{C}}
\]
where $U(\vec{n})$ applies the unitary gate $U$ to the qubits presented in $\vec{n}$ and
$\meas{n}{\com{C}_1}{\com{C}_2}$ represents the measurement of a qubit $n$ and if the measurement
was made by $P_0^n$ then we execute $\com{C}_1$, else if the measurement was made by $P_1^n$ then we
execute $\com{C}_2$. Note that the behavior of $\meas{n}{\com{C}_1}{\com{C}_2}$ is similar to that of a
classical if clause.

The set of qubits being used in a command $\com{C}$ is defined as follows:
\begin{align*}
  & \qvar{\com{skip}} = \emptyset \\
  & \qvar{\com{U}(\vec{n})} = \vec{n} \\
  & \qvar{\meas{n}{\com{C}_1}{\com{C}_2}} = \set{n} \cup \qvar{\com{C}_1} \cup \qvar{\com{C}_2} \\
  & \qvar{\seq{\com{C}_1}{\com{C}_2}} = \qvar{\com{C}_1} \cup \qvar{\com{C}_2} \\ 
  &\qvar{\conc{\com{C}_1}{\com{C}_2}} = \qvar{\com{C}_1} \cup \qvar{\com{C}_2} 
\end{align*}

We restrict the parallel operator to only compose commands with disjoint variables, \ie\
$\conc{\com{C}_1}{\com{C}_2}$ iff $\qvar{\com{C}_1} \cap \qvar{\com{C}_2} = \emptyset$.

To define the operational semantics, we add a new symbol, denoted by $\checkmark$, that indicates
the end of a computation.  We define the small-step transition step
$\xrightarrow{a} \subseteq \com{C} \times L' \times (\com{C} \cup \{ \checkmark \})$, as the
smallest relation obeying the following rules:
\begin{figure}[ht!]
  \centering
  \begin{align*}
    \com{skip} \xrightarrow{sk} \checkmark
    \hspace*{1cm}
    \com{U}(\vec{n}) \xrightarrow{U(\vec{n})} \checkmark
    \hspace*{1cm}
    \meas{n}{\com{C}_1}{\com{C}_2} \xrightarrow{P_0^n} \com{C}_1
    \hspace*{1cm}
    \meas{n}{\com{C}_1}{\com{C}_2} \xrightarrow{P_1^n} \com{C}_2
  \end{align*}
  \begin{align*}
    \infer{\seq{\com{C}_1}{\com{C}_2} \xrightarrow{l} \com{C}_2}{ \com{C}_1 \xrightarrow{l} \checkmark }
    \hspace*{1cm}
    \infer{
    \seq{\com{C}_1}{\com{C}_2} \xrightarrow{l'} \seq{ \com{C}'_1  }{ \com{C}_2  }
    }
    {
    \com{C}_1 \xrightarrow{l'} \com{C}'_1
    }
  \end{align*}
  \begin{align*}
    \infer{\conc{\com{C}_1}{\com{C}_2} \xrightarrow{l} \com{C}_2}{ \com{C}_1 \xrightarrow{l} \checkmark }
    \hspace*{1cm}
    \infer{
    \conc{\com{C}_1}{\com{C}_2} \xrightarrow{l'} \conc{ \com{C}'_1  }{ \com{C}_2  }
    }
    {
    \com{C}_1 \xrightarrow{l'} \com{C}'_1
    }
    \hspace*{1cm}
    \infer{\conc{\com{C}_1}{\com{C}_2} \xrightarrow{l} \com{C}_1}{ \com{C}_2 \xrightarrow{l} \checkmark }
    \hspace*{1cm}
    \infer{
    \conc{\com{C}_1}{\com{C}_2} \xrightarrow{l'} \conc{ \com{C}_1  }{ \com{C}'_2  }
    }
    {
    \com{C}_2 \xrightarrow{l'} \com{C}'_2
    }
  \end{align*}
  \caption{Rules of the small-step operational semantics}
  \label{fig:op-small3}
\end{figure}

Define a word to be a sequence of labels:
\[
  \omega ::= l' \mid l' : \omega 
\]
where $l' : \omega$ appends $l'$ to the beginning of $\omega$.  A word can also be seen as an
element of $(L')^+$, \ie\ a possibly infinite sequence of labels without the empty sequence. Despite
$(L')^+$ allows the possibility of having infinite words, by now we focus only on the finite words.

Define the $n$-step transition,
$\xrightarrow{\omega} \subseteq \com{C} \times (L')^+ \times (\com{C} \cup \{\checkmark\})$, where
$n$ is the length of the words, as follows:
\begin{figure}[ht!]
  \centering
  \begin{align*}
    \infer{ \com{C} \xtwoheadrightarrow{l} \com{C'}  }{ \com{C} \xrightarrow{l} \com{C'}  }
    \hspace*{2cm}
    \infer{
    \com{C} \xtwoheadrightarrow{l' : \omega'} \com{C'}
    }{
    \com{C} \xrightarrow{l} \com{C''}
    \qquad
    \com{C''} \xtwoheadrightarrow{\omega'} \com{C'}
    }
  \end{align*}
  \caption{Rules of the n-step operational semantics}
  \label{fig:op-nstep3}
\end{figure}

\subsection{Constructions on Unitary Event Structures}\label{subsec:constructions-es-3}
To define the constructions on unitary event structures, we extend the definitions of sequential and
parallel composition from Section~\ref{subsec:constructions-es-1} to include the corresponding mapping of
events to unitary or projection operators. Additionally, we define the measurement composition by
making slight adjustments to the definition of non-deterministic composition provided in
Section~\ref{subsec:constructions-es-1}.

The sequential composition of two unitary event structures is defined as follows:
\begin{definition}[Sequential composition]\label{def:pes-seq3}
  Let $\es{U}_1 = \qpes{E_1}{\leq_1}{\#_1}{Q_1}$ and $\es{U}_2 = \qpes{E_2}{\leq_2}{\#_2}{Q_2}$ be unitary
  event structures.  Define $\seq{\es{U}_1}{\es{U}_2} = \qpes{E}{\leq}{\#}{Q}$ as:
  \begin{align*}
    & E = E_1 \uplus (E_2 \times \confmax{\es{U}_1}) \\
    & \leq\ = \set{e_1 \leq e'_1 \mid e \leq_1 e'} \cup
      \set{(e_2,x) \leq (e'_2,x) \mid \ e_2 \leq_2 e'_2} \cup
      \set{ e_1 \leq (e_2,x) \mid e_1 \in x  } \\
    & \#\ = \set{ e \# e' \mid \exists (e_1 \leq e, e'_1 \leq e')\ .\ e_1 \#_1 e'_1 }
      \cup \set{ (e_2,x) \# (e'_2,x) \mid e_2 \#_2 e'_2 } \\
    & Q(e) =
      \begin{cases}
        Q_1(e) & \text{ if } e \in E_1 \\
        Q_{2}(e_2) & \text{ if } e = (e_2,x) \in E_2 \times \confmax{\es{U}_1}
      \end{cases}
  \end{align*}
  where $E_2 \times \confmax{\es{U}_1} = \set{ (e,x) \mid e \in E_2,\ x \in \confmax{\es{U}_1}}$.
\end{definition}

\begin{lemma}\label{lem:seq-es3}
  Let $\es{U}_1$ and $\es{U}_2$ be unitary event structures. $\seq{\es{U}_1}{\es{U}_2}$ is a unitary
  event structure.
\end{lemma}

When defining the parallel composition, we must consider the restriction in Definition~\ref{def:qes}
requiring the operators associated with concurrent events to commute.  Since in
Definition~\ref{def:pes-conc1} every event in $\es{E}_1$ is concurrent with every event of
$\es{E}_2$, then when extending this definition to the quantum case, the operators associated to
$\es{U}_1 = (\es{E}_1, Q_1)$ and $\es{U}_2 = (\es{E}_2, Q_2)$ must commute for every event, \ie\
$\forall e_1 \in E_1, e_2 \in E_2.\ [Q_1(e_1), Q_2(e_2)] = 0$.
\begin{definition}[Parallel composition]\label{def:pes-conc3}
  Let $\es{U}_1 = \qpes{E_1}{\leq_1}{\#_1}{Q_1}$ and $\es{U}_2 = \qpes{E_2}{\leq_2}{\#_2}{Q_2}$ be
  unitary event structures, such that
  $\forall e_1 \in E_1, e_2 \in E_2.\ [Q_1(e_1), Q_2(e_2)] = 0$.  Define
  $\conc{\es{U}_1}{\es{U}_2} = \qpes{E}{\leq}{\#}{Q}$ as:
  \begin{align*}
    & E = E_1 \uplus E_2  \\
    & \leq\ = \leq_1 \uplus \leq_2 \\
    & \# = \#_1 \uplus \#_2 \\
    & Q(e) =
      \begin{cases}
        Q_1(e) & \text{ if } e \in E_1 \\
        Q_2(e) & \text{ if } e \in E_2 \\
      \end{cases}
  \end{align*}
\end{definition}

\begin{lemma}\label{lem:conc-es3}
  Let $\es{U}_1$ and $\es{U}_2$ be unitary event structures. $\conc{\es{U}_1}{\es{U}_2}$ is a
  unitary event structure.
\end{lemma}

When defining the measurement command in terms of unitary event structures, we must consider the
restriction in Definition~\ref{def:qes} that requires the sum of operators associated with events in
minimal conflict to be the identity.
\begin{definition}[Measurement composition]\label{def:pes-meas3}
  Let $\es{U}_1 = \qpes{E_1}{\leq_1}{\#_1}{Q_1}$ and $\es{U}_2 = \qpes{E_2}{\leq_2}{\#_2}{Q_2}$ be unitary
  event structures. Define $\meas{n}{\es{U}_1}{\es{U}_2} = \qpes{E}{\leq}{\#}{Q}$ as:
  \begin{align*}
    & E = \set{\tau_0^n,\, \tau_1^n} \uplus E_1 \uplus E_2 \\
    & \leq = \set{\tau_0^n \leq e \mid (e = \tau_0^n \vee e \in E_1)}
        \cup \set{\tau_1^n \leq e \mid (e = \tau_1^n \vee e \in E_2)}
      \cup \leq_1 \uplus \leq_2 \\
    & \# = \set{e \# e' \mid (e = \tau_0^n \vee e \in E_1), (e' = \tau_1^n \vee e' \in E_2)} \\
    & \hspace*{0.8cm}
      \cup \set{e' \# e \mid (e = \tau_0^n \vee e \in E_1), (e' = \tau_1^n \vee e' \in E_2)}
      \cup \#_1 \uplus \#_2 \\
    & Q(e) =
      \begin{cases}
        P_0(n) & \text{ if } e = \tau_0^n \\
        P_1(n) & \text{ if } e = \tau_1^n \\
        Q_1(e) & \text{ if } e \in E_1 \\
        Q_2(e) & \text{ if } e \in E_2
      \end{cases}
  \end{align*}
\end{definition}

\begin{lemma}\label{lem:pes-meas3}
  Let $\es{U}_1$ and $\es{U}_2$ be unitary event structures. $\meas{n}{\es{U}_1}{\es{U}_2}$ is a
  unitary event structure.
\end{lemma}

Note that, in Definition~\ref{def:pes-meas3}, the fact that $Q(\tau_0^n) = P_0(n)$ and
$Q(\tau_1^n) = P_1(n)$, ensures us that $Q(\tau_0^n) + Q(\tau_1^n) = P_0(n) + P_1(n) = Id$.

\begin{remark}
  As previously stated in the operational semantics, the measurement command can be seen as
  $\nd{\seq{P_0(n)}{\com{C}_1}}{\seq{P_1(n)}{\com{C}_2}}$. To transport such relation to the setting
  of unitary event structures, we find helpful to write $\meas{n}{\es{U}_1}{\es{U}_2}$ as
  $\nd{\seq{\es{P}_0^n}{\es{U}_1}}{\seq{\es{P}_1^n}{\es{U}_2}}$, where
  \begin{align*}
    & \es{P}_0^n = \qpes{\set{\tau_0^n}}{\set{\tau_0^n \leq \tau_0^n}}{\emptyset}{Q(\tau_0^n) = P_0(n)} \\
    & \es{P}_1^n = \qpes{\set{\tau_1^n}}{\set{\tau_1^n \leq \tau_1^n}}{\emptyset}{Q(\tau_1^n) = P_1(n)}
  \end{align*}

  Note that by doing this, we are abusing notation, since $\es{P}_0^n$ and $\es{P}_1^n$ are quantum
  event structures and not unitary event structures, because the quantum operators associated to the
  events $\tau_0^n$ and $\tau_1^n$ are not unitary. However, this notational abuse allows the
  measurement command to connect to the operational counterpart directly.
\end{remark}

In Definition~\ref{def:pes-meas3}, the events associated with projections include a superscript
indicating the qubit on which the projection is performed. Whenever it is clear from the context, we
will drop the superscript.

We are now ready to interpret commands of the language in Section~\ref{subsec:lan-3} as unitary
event structures.
\begin{definition}\label{def:den-sem3}
  We interpret commands as unitary event structures as follows:
  \begin{align*}
    & \mf{\com{skip}} = (\set{sk}, \set{sk \leq sk}, \emptyset, Q(sk) = Id) \\
    & \mf{\com{U}(\vec{n})} = (\set{U_{\vec{n}}}, \set{U_{\vec{n}} \leq U_{\vec{n}}}, \emptyset,
      Q(U_{\vec{n}}) = U(\vec{n})) \\
    & \mf{\meas{n}{\com{C}_1}{\com{C}_2}} =
      \seq{\es{P}_0^n}{\mf{\com{C}_1}} + \seq{\es{P}_1^n}{\mf{\com{C}_2}} \\
    & \mf{\seq{\com{C_1}}{\com{C_2}}} = \seq{\mf{\com{C_1}}}{\mf{\com{C_2}}} \\
    & \mf{\conc{\com{C_1}}{\com{C_2}}} = \conc{\mf{\com{C_1}}}{\mf{\com{C_2}}} \\
  \end{align*}
\end{definition}

We extend Definition~\ref{def:pes-sub1} to the quantum setting. In this case, we need to ensure that
the quantum operator is preserved across unitary event structures, \ie\ if
$\es{U}_1 \sqsubseteq \es{U}_2$ then the quantum operators of the events from $\es{U}_2$ that are
also events in $\es{U}_1$ should be the same. Recall what was said regarding plain and composite
events in Section~\ref{subsec:constructions-es-1} above Definition~\ref{def:pes-sub1}.
\begin{definition}\label{def:pes-sub3}
  Let $\es{U}_1 = \qpes{E_1}{\leq_1}{\#_1}{Q_1}$ and $\es{U}_2 = \qpes{E_2}{\leq_2}{\#_2}{Q_2}$ be
  unitary event structures.  Say $\es{U}_1 \sqsubseteq \es{U}_2$ whenever exists an injective
  function $f: E_1 \rightarrow E_2$ such that $\forall e,e' \in E_1$:
  \begin{align*}
    & \pi(f(e)) = \pi(e) \\
    & e \leq_1 e' \Leftrightarrow f(e) \leq_2 f(e') \\
    & e \#_1 e' \Leftrightarrow f(e) \#_2 f(e') \\
    & Q_1(e) = Q_2(f(e))
  \end{align*}
  We say that two event structures $\es{U}_1, \es{U}_2$ are similar, denoted
  $\es{U}_1 \equiv \es{U}_2$, iff $\es{U}_1 \sqsubseteq \es{U}_2$ and
  $\es{U}_2 \sqsubseteq \es{U}_1$.
\end{definition}

We now extend the removal of an initial event to the quantum setting. Intuitively, we need to
preserve the quantum operators of the events that were not removed.
\begin{definition}[Remove initial event]\label{def:rem-init3}
  Let $\es{U} = \qpes{E}{\leq}{\#}{Q}$ be a unitary event structure and $a \in \init{\es{U}}$.
  Define $\es{U} \backslash a = \qpes{E'}{\leq'}{\#'}{Q'}$ as
  \begin{align*}
    & E' = \set{e \in E \mid \neg (e \# a),\, e \neq a} \\
    & \leq'\, =\, \set{e \leq e' \mid e,e' \in E' } \\
    & \#'\, =\, \set{e \# e' \mid e,e' \in E' } \\
    & Q' = Q|_{E'}
  \end{align*}
\end{definition}
\begin{lemma}\label{lem:rem-init-es3}
  Let $\es{U}$ be a unitary event structure and $a \in \init{\es{U}}$.  $\es{U} \backslash a$ is a
  unitary event structure.
\end{lemma}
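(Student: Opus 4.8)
The plan is to reduce to Lemma~\ref{lem:rem-init-es1} and then check only the three extra clauses of Definition~\ref{def:qes}. Write $\es{U} = \qpes{E}{\leq}{\#}{Q}$ and $\es{U} \backslash a = \qpes{E'}{\leq'}{\#'}{Q'}$ with $Q' = Q|_{E'}$. By Lemma~\ref{lem:rem-init-es1} the triple $\pes{E'}{\leq'}{\#'}$ is an event structure, so it only remains to verify: (i) $\econc{e}{e'}$ in $\es{U}\backslash a$ implies $[Q'(e),Q'(e')]=0$; (ii) $\minconflict$ is transitive in $\es{U}\backslash a$; and (iii) for every $e \in E'$, $\sum_{e'\in[e]'}Q'(e')$ is unitary, where $[e]'$ is the equivalence class of $e$ computed inside $\es{U}\backslash a$.

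First I would record two facts about the restriction. Since $\leq'$ and $\#'$ are the restrictions of $\leq$ and $\#$ to $E'$, for $e,e'\in E'$ we have $e \leq' e' \Leftrightarrow e \leq e'$ and $e \#' e' \Leftrightarrow e \# e'$; hence $\econc{e}{e'}$ in $\es{U}\backslash a$ implies $\econc{e}{e'}$ in $\es{U}$, which gives (i) immediately from $Q' = Q|_{E'}$ and the corresponding clause for $\es{U}$. The second fact is that, for $e,e'\in E'$, $e \minconflict e'$ in $\es{U}\backslash a$ iff $e \minconflict e'$ in $\es{U}$: in the $\Leftarrow$ direction, any pair $(f\le e,\ g\le e')$ in $\es{U}$ witnessing non-minimality necessarily lies in $E'$ — it sits below elements of $E'$, and $a$ being initial together with conflict heredity rules out $f$ or $g$ being $a$ or conflicting with $a$ without forcing $e \# a$ or $e' \# a$ — so it is also a witness in $\es{U}\backslash a$, and the $\Rightarrow$ direction is similar. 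Transitivity of $\minconflict$ in $\es{U}\backslash a$ then follows from transitivity in $\es{U}$ restricted to $E'$, giving (ii).

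For (iii) the key point is that the entire minimal-conflict class of a surviving event survives, i.e. $[e]' = [e]$ for every $e \in E'$. By the previous paragraph $[e]' = [e]\cap E'$, so it suffices to show $[e]\subseteq E'$. First, $a\notin[e]$: otherwise $[e]=[a]$ and, since $e\neq a$, we would get $e\minconflict a$, hence $e\# a$, contradicting $e\in E'$. Next, if $e'\in[e]$ with $e'\# a$, then walking down the (finite, by Definition~\ref{def:pes}) causal predecessors of $e'$ one finds a minimal $\hat e\le e'$ with $\hat e\# a$, and $a$ being initial makes this a minimal conflict $\hat e\minconflict a$; combining this with transitivity of $\minconflict$ to propagate along the class forces $a\in[e]$, contradicting what we just showed. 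Hence $[e]=[e]'$, so $\sum_{e'\in[e]'}Q'(e') = \sum_{e'\in[e]}Q(e')$, which is unitary by the third clause of Definition~\ref{def:qes} applied to $\es{U}$.

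The main obstacle is precisely this last step — arguing that removing $a$ cannot delete a member of the minimal-conflict class of an event it keeps. Making the "propagation along the class" argument airtight is the delicate part: it must combine conflict heredity, the minimality of $a$ (so the only causal predecessor of $a$ is $a$ itself), and transitivity of $\minconflict$, and this is exactly where the two axioms that unitary event structures add to Winskel's definition are used. Everything else is a routine transfer of the corresponding clause of Definition~\ref{def:qes} from $\es{U}$ through the restriction $Q' = Q|_{E'}$.
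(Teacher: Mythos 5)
Your reduction to the three extra clauses of Definition~\ref{def:qes} is exactly the route the paper takes, and your handling of clauses (i) and (ii) — including the observation that $\minconflict$ is computed the same way in $\es{U}$ and in $\es{U}\backslash a$, because any witness of non-minimality below events of $E'$ already lies in $E'$ — is correct, and in fact more careful than the paper's own proof, which just asserts that all three conditions follow directly from the restriction $Q'=Q|_{E'}$.

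The gap is in clause (iii), at precisely the step you flag as delicate, and it cannot be closed. Your argument needs $[e]\subseteq E'$ for every $e\in E'$, and the ``propagation along the class'' only works when the minimal witness $\hat e$ of the conflict $e'\# a$ is $e'$ itself: then $e'\minconflict a$, so $a\in[e']=[e]$ and you get your contradiction. When $\hat e<e'$ strictly, the conflict $e'\# a$ is inherited rather than minimal, $\hat e$ need not belong to $[e]$, and transitivity of $\minconflict$ gives nothing. Indeed the claim is false for general unitary event structures: take $E=\{a,b,c,d\}$ with $b<c$ the only nontrivial causal dependency, minimal conflicts $a\minconflict b$ and $c\minconflict d$ (so $a\# c$ by heredity, but not minimally), and $Q(a)=P_0^1$, $Q(b)=P_1^1$, $Q(c)=P_0^2$, $Q(d)=P_1^2$. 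Every clause of Definition~\ref{def:qes} holds — $[a]=\{a,b\}$ and $[c]=\{c,d\}$ each sum to the identity, concurrent events act on different qubits, and $\minconflict$ is trivially transitive — and $a\in\init{\es{U}}$. Removing $a$ deletes $b$ and $c$ but keeps $d$, so in $\es{U}\backslash a$ the class of $d$ is the singleton $\{d\}$ and $\sum_{e'\in[d]}Q'(e')=P_1^2$, which is not unitary. So $[e]'$ can be a strict, non-unitary fragment of $[e]$. To be fair, the paper's proof silently assumes the classes are preserved and has the same hole; the statement only holds for structures in which a minimal-conflict class is always removed or kept as a whole (as happens for the denotations of commands, where minimal conflicts come from the $\tau_0^n/\tau_1^n$ pair of a measurement), and some such hypothesis would have to be added to make either proof go through.
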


\subsection{Results}\label{subsec:results-3}
Here we present the results obtained.  Similarly to the previous subsection, we will just list what
was proved.  We postpone the addition of the proofs as well as the examples for some results for
future versions of the document.

For this section, we interpret $\checkmark$ as the empty unitary event structure, \ie\
$\mf{\checkmark} = \qpes{\emptyset}{\emptyset}{\emptyset}{! : \emptyset \rightarrow
  Op(\mathcal{H})}$.

We start by showing that the sequential, parallel, and measurement operators are monotonic with
respect to Definition~\ref{def:pes-sub3}.
\begin{lemma}\label{lem:seq-mono3}
  Let $\es{U}_1 = \qpes{E_1}{\leq_1}{\#_1}{Q_1}$ and $\es{U}_2 = \qpes{E_2}{\leq_2}{\#_2}{Q_2}$ be
  unitary event structures.  If $\es{U}_1 \sqsubseteq \es{U}'_1$ and
  $\es{U}_2 \sqsubseteq \es{U}'_2$ then
  $\seq{\es{U}_1}{\es{U}_2} \sqsubseteq \seq{\es{U}'_1}{\es{U}'_2}$.
\end{lemma}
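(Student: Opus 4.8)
The plan is to reduce everything to the already-established plain case and then add only the operator clause. Observe first that the underlying event structure of $\seq{\es{U}_1}{\es{U}_2}$ produced by Definition~\ref{def:pes-seq3} is exactly $\seq{\es{E}_1}{\es{E}_2}$ in the sense of Definition~\ref{def:pes-seq1}, and that the first three clauses of the sub-qES relation (Definition~\ref{def:pes-sub3}) are literally the three clauses of the sub-PES relation (Definition~\ref{def:pes-sub1}) for the underlying event structures. So I would start by noting that $\es{U}_1 \sqsubseteq \es{U}'_1$ and $\es{U}_2 \sqsubseteq \es{U}'_2$ imply the corresponding sub-PES relations between the underlying event structures, apply Lemma~\ref{lem:seq-mono1}, and conclude that $\underline{E} \subseteq \underline{E'}$ together with the causal and conflict biconditionals hold for $\seq{\es{U}_1}{\es{U}_2}$ and $\seq{\es{U}'_1}{\es{U}'_2}$.

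The only genuinely new obligation is then the operator-preservation clause: for every event $e$ of $\seq{\es{U}_1}{\es{U}_2}$ one must have $Q(e) = Q'(e)$, where $Q$ and $Q'$ are the operator maps of $\seq{\es{U}_1}{\es{U}_2}$ and $\seq{\es{U}'_1}{\es{U}'_2}$ respectively. I would do a case split on the shape of $e$ dictated by Definition~\ref{def:pes-seq3}. If $e \in E_1$, then $Q(e) = Q_1(e)$ and $Q'(e) = Q'_1(e)$, and $\es{U}_1 \sqsubseteq \es{U}'_1$ gives $Q_1(e) = Q'_1(e)$, so these agree. If $e = (e_2, x)$ with $e_2 \in E_2$ and $x \in \confmax{\es{U}_1}$, then by Definition~\ref{def:pes-seq3} $Q(e) = Q_2(e_2)$, the corresponding copy in $\seq{\es{U}'_1}{\es{U}'_2}$ is assigned $Q'_2(e_2)$, and $\es{U}_2 \sqsubseteq \es{U}'_2$ yields $Q_2(e_2) = Q'_2(e_2)$.

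The point requiring a little care — essentially the only thing beyond pure bookkeeping — is the treatment of the copies of events of $\es{U}_2$: the operator attached to a copied event $(e_2, x)$ depends only on the base event $e_2$ and not on the maximal configuration $x$, so the ``forget the copy'' identification underlying the $\underline{(\cdot)}$ notation in Definition~\ref{def:pes-sub3} is compatible with operator preservation, and in particular one never needs $\confmax{\es{U}_1}$ and $\confmax{\es{U}'_1}$ to coincide. Once this is observed the case analysis closes immediately, and combined with the appeal to Lemma~\ref{lem:seq-mono1} it completes the proof; I expect no serious obstacle here, only the notational care around copies.
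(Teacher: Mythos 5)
Your proposal is correct and matches the paper's proof exactly: it invokes Lemma~\ref{lem:seq-mono1} to discharge the event, causal, and conflict clauses, and then verifies $Q(e) = Q'(e)$ by the same two-case split on $e \in E_1$ versus $e = (e_2,x) \in E_2 \times \confmax{\es{U}_1}$, using that the operator attached to a copy depends only on the base event. Your extra remark about the compatibility of operator assignment with the ``forget the copy'' identification is a useful clarification but not a departure from the paper's argument.
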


\begin{lemma}\label{lem:conc-mono3}
  Let $\es{U}_1 = \qpes{E_1}{\leq_1}{\#_1}{Q_1}$ and $\es{U}_2 = \qpes{E_2}{\leq_2}{\#_2}{Q_2}$ be
  unitary event structures.  If $\es{U}_1 \sqsubseteq \es{U}'_1$ and
  $\es{U}_2 \sqsubseteq \es{U}'_2$ then
  $\conc{\es{U}_1}{\es{U}_2} \sqsubseteq \conc{\es{U}'_1}{\es{U}'_2}$.
\end{lemma}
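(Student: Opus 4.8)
The plan is to follow exactly the same template used for the monotonicity lemmas of sequential and measurement composition (Lemma~\ref{lem:seq-mono3} and Lemma~\ref{lem:meas-mono3}), since parallel composition on unitary event structures (Definition~\ref{def:pes-conc3}) is just the parallel composition of the underlying event structures (Definition~\ref{def:pes-conc1}) equipped with the obvious case-split operator map. First I would name all the structures: let $\es{U}_1 = \qpes{E_1}{\leq_1}{\#_1}{Q_1}$, $\es{U}'_1 = \qpes{E'_1}{\leq'_1}{\#'_1}{Q'_1}$, $\es{U}_2 = \qpes{E_2}{\leq_2}{\#_2}{Q_2}$, $\es{U}'_2 = \qpes{E'_2}{\leq'_2}{\#'_2}{Q'_2}$, and write $\conc{\es{U}_1}{\es{U}_2} = \qpes{E}{\leq}{\#}{Q}$, $\conc{\es{U}'_1}{\es{U}'_2} = \qpes{E'}{\leq'}{\#'}{Q'}$, under the hypotheses $\es{U}_1 \sqsubseteq \es{U}'_1$ and $\es{U}_2 \sqsubseteq \es{U}'_2$.

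Then I would observe that by Definition~\ref{def:pes-sub3} there are four things to check: (i)~$\underline{E} \subseteq \underline{E'}$; (ii)~the iff-characterisation of $\leq$ via $\leq'$ restricted to $E$; (iii)~the analogous iff-characterisation of $\#$; and (iv)~$\forall e \in E.\ Q(e) = Q'(e)$. For conditions (i)--(iii) I would simply invoke Lemma~\ref{lem:conc-mono1}: the set of events, causal order, and conflict relation of $\conc{\es{U}_1}{\es{U}_2}$ are, by Definition~\ref{def:pes-conc3}, literally those of the event structure $\conc{\es{E}_1}{\es{E}_2}$, so $\es{E}_1 \sqsubseteq \es{E}'_1$ and $\es{E}_2 \sqsubseteq \es{E}'_2$ (which follow from $\es{U}_1 \sqsubseteq \es{U}'_1$ and $\es{U}_2 \sqsubseteq \es{U}'_2$ by discarding the operator clause) give exactly these three conditions via Lemma~\ref{lem:conc-mono1}. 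That leaves only the operator condition (iv), which is a routine case-split: for $e \in E_1$ we have $Q(e) = Q_1(e) = Q'_1(e) = Q'(e)$ using Definition~\ref{def:pes-conc3} for the outer equalities and $\es{U}_1 \sqsubseteq \es{U}'_1$ for the middle one; symmetrically for $e \in E_2$ using $\es{U}_2 \sqsubseteq \es{U}'_2$. I would present this essentially verbatim as in Lemma~\ref{lem:meas-mono3}'s proof of its last condition.

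There is no real obstacle here — the only point worth a moment's care is making sure that $\es{U}_1 \sqsubseteq \es{U}'_1$ genuinely implies $\es{E}_1 \sqsubseteq \es{E}'_1$ so that Lemma~\ref{lem:conc-mono1} can be applied; this is immediate because Definition~\ref{def:pes-sub3}'s first three clauses are identical to the three clauses of Definition~\ref{def:pes-sub1} for sub-PES (the fourth operator clause is simply extra). One should also double-check that the right-hand sides $\conc{\es{U}'_1}{\es{U}'_2}$ and $\conc{\es{U}_1}{\es{U}_2}$ are actually unitary event structures so the statement is well-typed, but that is guaranteed by the preceding lemma on parallel composition of unitary event structures (and, in any case, the commutation side-condition $[Q_1(e_1),Q_2(e_2)]=0$ needed to form $\conc{\es{U}_1}{\es{U}_2}$ transfers to $\conc{\es{U}'_1}{\es{U}'_2}$ precisely because the operator maps agree on common events). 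Here is the proof.

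\begin{proof}
  Let $\es{U}_1 = \qpes{E_1}{\leq_1}{\#_1}{Q_1}$,
  $\es{U}'_1 = \qpes{E'_1}{\leq'_1}{\#'_1}{Q'_1}$,
  $\es{U}_2 = \qpes{E_2}{\leq_2}{\#_2}{Q_2}$,
  $\es{U}'_2 = \qpes{E'_2}{\leq'_2}{\#'_2}{Q'_2}$,
  $\conc{\es{U}_1}{\es{U}_2} = \qpes{E}{\leq}{\#}{Q}$, and
  $\conc{\es{U}'_1}{\es{U}'_2} = \qpes{E'}{\leq'}{\#'}{Q'}$,
  such that $\es{U}_1 \sqsubseteq \es{U}'_1$ and $\es{U}_2 \sqsubseteq \es{U}'_2$.

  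By Definition~\ref{def:pes-sub3}, the first three clauses of $\es{U}_1 \sqsubseteq \es{U}'_1$ say
  exactly that $\es{E}_1 \sqsubseteq \es{E}'_1$ in the sense of Definition~\ref{def:pes-sub1}, and
  similarly $\es{E}_2 \sqsubseteq \es{E}'_2$.  Since, by Definition~\ref{def:pes-conc3}, the events,
  causal order and conflict relation of $\conc{\es{U}_1}{\es{U}_2}$ are those of the event structure
  $\conc{\es{E}_1}{\es{E}_2}$ (and likewise for the primed structures), Lemma~\ref{lem:conc-mono1}
  yields the first three conditions of Definition~\ref{def:pes-sub3}:
  $\underline{E} \subseteq \underline{E'}$,
  $\forall e,e'\ .\ e \leq e' \Leftrightarrow e,e' \in E \wedge e \leq' e'$, and
  $\forall e,e'\ .\ e \# e' \Leftrightarrow e,e' \in E \wedge e \#' e'$.

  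It remains to show $\forall e \in E\ .\ Q(e) = Q'(e)$.  Let $e \in E$.  By
  Definition~\ref{def:pes-conc3}, $E = E_1 \uplus E_2$, so we have two cases.
  \begin{enumerate}
  \item $e \in E_1$

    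By Definition~\ref{def:pes-conc3}, $Q(e) = Q_1(e)$ and $Q'(e) = Q'_1(e)$.  Since
    $\es{U}_1 \sqsubseteq \es{U}'_1$, we have $Q_1(e) = Q'_1(e)$.  Hence $Q(e) = Q'(e)$.

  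\item $e \in E_2$

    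By Definition~\ref{def:pes-conc3}, $Q(e) = Q_2(e)$ and $Q'(e) = Q'_2(e)$.  Since
    $\es{U}_2 \sqsubseteq \es{U}'_2$, we have $Q_2(e) = Q'_2(e)$.  Hence $Q(e) = Q'(e)$.
  \end{enumerate}
  Therefore $\conc{\es{U}_1}{\es{U}_2} \sqsubseteq \conc{\es{U}'_1}{\es{U}'_2}$.
\end{proof}
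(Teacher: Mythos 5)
Your proof is correct and follows essentially the same route as the paper's: invoke Lemma~\ref{lem:conc-mono1} to discharge the event, order, and conflict conditions, then verify $Q(e)=Q'(e)$ by the case split $e\in E_1$ versus $e\in E_2$ using the operator clause of Definition~\ref{def:pes-sub3}. The extra remarks on well-typedness and on why $\es{U}_1 \sqsubseteq \es{U}'_1$ implies $\es{E}_1 \sqsubseteq \es{E}'_1$ are harmless elaborations of what the paper leaves implicit.
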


\begin{lemma}\label{lem:meas-mono3}
  Let $\es{U}_1 = \qpes{E_1}{\leq_1}{\#_1}{Q_1}$ and $\es{U}_2 = \qpes{E_2}{\leq_2}{\#_2}{Q_2}$ be
  unitary event structures.  If $\es{U}_1 \sqsubseteq \es{U}'_1$ and
  $\es{U}_2 \sqsubseteq \es{U}'_2$ then
  $\meas{n}{\es{U}_1}{\es{U}_2} \sqsubseteq \meas{n}{\es{U}'_1}{\es{U}'_2}$.
\end{lemma}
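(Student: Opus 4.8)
The plan is to mirror the structure of the proof of Lemma~\ref{lem:seq-mono3}: the statement about $\sqsubseteq$ for $\meas{n}{-}{-}$ decomposes into the three event-structure conditions (inclusion of underlying events, agreement of $\leq$, agreement of $\#$) plus the new unitary-event-structure condition that $Q$ agrees on common events. First I would invoke the remark following Definition~\ref{def:pes-meas3}, which rewrites $\meas{n}{\es{U}_1}{\es{U}_2}$ as $\nd{\seq{\es{P}_0^n}{\es{U}_1}}{\seq{\es{P}_1^n}{\es{U}_2}}$ with $\es{P}_0^n$, $\es{P}_1^n$ the fixed single-event structures carrying $P_0^n$, $P_1^n$. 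Since $\es{P}_0^n \sqsubseteq \es{P}_0^n$ and $\es{P}_1^n \sqsubseteq \es{P}_1^n$ trivially, and since $\es{U}_1 \sqsubseteq \es{U}'_1$, $\es{U}_2 \sqsubseteq \es{U}'_2$ by hypothesis, the first three conditions would follow from Lemma~\ref{lem:seq-mono1} applied to $\seq{\es{P}_i^n}{\es{U}_i}$ together with Lemma~\ref{lem:nd-mono1} for the non-deterministic combination (these handle the event set, causality, and conflict parts without reference to the operators).

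The remaining work is the operator condition: for every event $e$ in $\meas{n}{\es{U}_1}{\es{U}_2}$ I must check $Q(e) = Q'(e)$ where $Q'$ is the map of $\meas{n}{\es{U}'_1}{\es{U}'_2}$. I would do a case split on $e$ following Definition~\ref{def:pes-meas3}: if $e = \tau_0^n$ then $Q(e) = P_0^n = Q'(e)$, and likewise for $e = \tau_1^n$; if $e \in E_1$ then $Q(e) = Q_1(e)$ and, since $\es{U}_1 \sqsubseteq \es{U}'_1$ gives $Q_1(e) = Q'_1(e)$, Definition~\ref{def:pes-meas3} yields $Q'_1(e) = Q'(e)$; the case $e \in E_2$ is symmetric. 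This is entirely routine and parallels the second half of the proof of Lemma~\ref{lem:seq-mono3}.

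The one genuinely new point — and the step I expect to need the most care — is checking that the ambient object $\meas{n}{\es{U}'_1}{\es{U}'_2}$ is itself a \emph{unitary} event structure, so that $Q'$ is well-defined in the first place; but this is already established by the lemma following Definition~\ref{def:pes-meas3}, so I can simply cite it. A subtler worry is whether $\sqsubseteq$ in Definition~\ref{def:pes-sub3} interacts correctly with the $\underline{(-)}$ forgetting of copies when one side of the measurement is sequentially composed with $\es{P}_i^n$: the copies introduced are $(e_2, x)$ with $x$ a maximal configuration, and $\underline{E_i}$ collapses these, so the inclusion $\underline{E} \subseteq \underline{E'}$ reduces to $\underline{E_1} \subseteq \underline{E'_1}$ and $\underline{E_2} \subseteq \underline{E'_2}$ together with $\set{\tau_0^n, \tau_1^n} \subseteq \set{\tau_0^n, \tau_1^n}$, all of which hold. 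I would state this explicitly to be safe. Overall the proof is short: cite the remark and the well-definedness lemma, invoke Lemma~\ref{lem:seq-mono1} and Lemma~\ref{lem:nd-mono1} for the structural conditions, and finish with the three-way case analysis on $e$ for the operator equality.
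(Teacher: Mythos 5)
Your proof is correct and its substantive core --- the four-way case analysis on $e$ ($\tau_0^n$, $\tau_1^n$, $e \in E_1$, $e \in E_2$) for the operator condition $Q(e)=Q'(e)$ --- is exactly what the paper does. The only divergence is in how the three structural conditions are discharged: the paper simply asserts that they ``follow directly from Definition~\ref{def:pes-meas3}'', whereas you route them through the remark identifying $\meas{n}{\es{U}_1}{\es{U}_2}$ with $\nd{\seq{\es{P}_0^n}{\es{U}_1}}{\seq{\es{P}_1^n}{\es{U}_2}}$ and then cite Lemma~\ref{lem:seq-mono1} and Lemma~\ref{lem:nd-mono1}. Your route is slightly more honest about where the work lives, but it leans on that remark being an exact identification, which it is only up to the copy-forgetting $\underline{(-)}$ (sequential composition with $\es{P}_i^n$ replaces each $e_j \in E_j$ by the copy $(e_j,\set{\tau_i^n})$, while Definition~\ref{def:pes-meas3} uses $E_j$ directly); you correctly flag this and observe that Definition~\ref{def:pes-sub3} compares events only after collapsing copies, so nothing breaks. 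Either justification of the structural part is acceptable; yours buys reuse of existing lemmas at the cost of one extra bookkeeping observation, the paper's buys brevity.
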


Next, we show how unitary event structures composed by the sequential, concurrent, and measurement
operators interact with the removal of initial events.
\begin{lemma}\label{lem:seq-rem-init3}
  Let $\es{U}_1 = \qpes{E_1}{\leq_1}{\#_1}{Q_1}$ and $\es{U}_2 = \qpes{E_2}{\leq_2}{\#_2}{Q_2}$ be
  unitary event structures. Consider $\seq{\es{U}_1}{\es{U}_2}$ such that
  $l \in \init{\seq{\es{U}_1}{\es{U}_2}}$. Then
  $(\seq{\es{U}_1}{\es{U}_2}) \backslash l \equiv \seq{(\es{U}_1\backslash l)}{\es{U}_2}$.
\end{lemma}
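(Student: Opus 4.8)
The plan is to follow the pattern already used for Lemma~\ref{lem:seq-mono3}, Lemma~\ref{lem:meas-mono3}, and Lemma~\ref{lem:conc-mono3}: reduce the event-structure content entirely to the analogous statement for plain event structures, here Lemma~\ref{lem:seq-rem-init1}, and then verify only the extra clause of Definition~\ref{def:pes-sub3} concerning the operator maps. Concretely, I would set $\es{U}_1 = \qpes{E_1}{\leq_1}{\#_1}{Q_1}$, $\es{U}_2 = \qpes{E_2}{\leq_2}{\#_2}{Q_2}$, $\seq{\es{U}_1}{\es{U}_2} = \qpes{E_{\seq{1}{2}}}{\leq_{\seq{1}{2}}}{\#_{\seq{1}{2}}}{Q_{\seq{1}{2}}}$, $(\seq{\es{U}_1}{\es{U}_2}) \backslash l = \qpes{E}{\leq}{\#}{Q}$, $\es{U}_1 \backslash l = \qpes{E_1^l}{\leq_1^l}{\#_1^l}{Q_1^l}$, and $\seq{(\es{U}_1 \backslash l)}{\es{U}_2} = \qpes{E'}{\leq'}{\#'}{Q'}$. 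As in the proof of Lemma~\ref{lem:seq-rem-init1}, $l \in \init{\seq{\es{U}_1}{\es{U}_2}}$ forces $l \in \init{\es{U}_1}$, hence $l \in E_1$ and $\es{U}_1 \backslash l$ is defined. Forgetting the operator maps, Lemma~\ref{lem:seq-rem-init1} already yields that the underlying event structures of $(\seq{\es{U}_1}{\es{U}_2}) \backslash l$ and $\seq{(\es{U}_1 \backslash l)}{\es{U}_2}$ satisfy the first three conditions of Definition~\ref{def:pes-sub3} in both directions.

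It then remains to check $Q(e) = Q'(e)$ for every event $e$ of the common carrier (the two directions being symmetric). By Definition~\ref{def:pes-seq3} and Definition~\ref{def:rem-init3}, an event of $(\seq{\es{U}_1}{\es{U}_2}) \backslash l$ is either (i) some $e \in E_1$ with $e \neq l$ and $\neg(e \# l)$, and then $Q(e) = Q_{\seq{1}{2}}(e) = Q_1(e)$; or (ii) a copy $(e_2, x)$ with $e_2 \in E_2$ and $x \in \confmax{\es{U}_1}$, and then $Q((e_2,x)) = Q_{\seq{1}{2}}((e_2,x)) = Q_2(e_2)$, since $Q_{\seq{1}{2}}$ assigns a copy the operator of its first component only. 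Dually, an event of $\seq{(\es{U}_1 \backslash l)}{\es{U}_2}$ is either (i') some $e \in E_1^l \subseteq E_1$, with $Q'(e) = Q_1^l(e) = Q_1|_{E_1^l}(e) = Q_1(e)$; or (ii') a copy $(e_2, x')$ with $x' \in \confmax{\es{U}_1 \backslash l}$, with $Q'((e_2,x')) = Q_2(e_2)$. The identification of the two lists is exactly the one made in Lemma~\ref{lem:seq-rem-init1}: an $E_1$-event survives on one side iff it survives on the other, carrying the same operator $Q_1(e)$; and since Definition~\ref{def:pes-sub1}/Definition~\ref{def:pes-sub3} ignores the configuration component when comparing carriers, the copy $(e_2,x)$ matches the copy $(e_2, x \backslash l)$, and both receive $Q_2(e_2)$. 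Hence $Q$ and $Q'$ agree, and the lemma follows.

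The only content beyond mechanical transcription — and thus the main thing to be careful about — is the bookkeeping on copies: one must record that $\confmax{\es{U}_1 \backslash l} = \set{x \backslash l \mid x \in \confmax{\es{U}_1}}$ (removing the initial event of an already-maximal configuration again gives a maximal configuration, a fact already used in Lemma~\ref{lem:seq-rem-init1}), and that the operator attached to a copy by Definition~\ref{def:pes-seq3} is insensitive to this relabelling of the second component. Everything else is a direct specialization of Lemma~\ref{lem:seq-rem-init1}, so I do not expect any genuine obstacle.
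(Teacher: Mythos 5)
Your proposal is correct and follows essentially the same route as the paper: defer the underlying event-structure content to the classical Lemma~\ref{lem:seq-rem-init1} and then verify only the operator clause of Definition~\ref{def:pes-sub3}, splitting into the case $e \in E_1$ (where $Q$ restricts to $Q_1$ on the surviving events, which are exactly $E_1^l$) and the case of copies $(e_2,x)$ (where both sides assign $Q_2(e_2)$, insensitive to the configuration component). Your explicit remark that $\confmax{\es{U}_1 \backslash l}$ is obtained from $\confmax{\es{U}_1}$ by deleting $l$ makes the copy-matching slightly more careful than the paper's version, but it is the same argument.
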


\begin{lemma}\label{lem:meas-rem-init3}
  Let $\es{U}_1 = \qpes{E_1}{\leq_1}{\#_1}{Q_1}$ and $\es{U}_2 = \qpes{E_2}{\leq_2}{\#_2}{Q_2}$ be
  unitary event structures. Consider $\meas{n}{\es{U}_1}{\es{U}_2}$ such that
  $l \in \init{\meas{n}{\es{U}_1}{\es{U}_2}}$. Then
  \[
    (\meas{n}{\es{U}_1}{\es{U}_2}) \backslash l \equiv
    \begin{cases}
      \es{U}_1 & \text{ if } l = \tau_0^n \\
      \es{U}_2 & \text{ if } l = \tau_1^n
    \end{cases}
  \]
\end{lemma}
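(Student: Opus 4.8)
The plan is to mirror the proof of Lemma~\ref{lem:nd-rem-init1}, adding the verification of the operator map. First I would observe that, by Definition~\ref{def:pes-meas3}, $\init{\meas{n}{\es{U}_1}{\es{U}_2}} = \set{\tau_0^n, \tau_1^n}$: every event of $E_1$ lies above $\tau_0^n$, every event of $E_2$ lies above $\tau_1^n$, and neither $\tau_0^n$ nor $\tau_1^n$ has anything strictly below it. Hence the hypothesis $l \in \init{\meas{n}{\es{U}_1}{\es{U}_2}}$ forces $l = \tau_0^n$ or $l = \tau_1^n$, which is exactly the case split in the statement. Since Definition~\ref{def:pes-meas3} is symmetric under swapping $\tau_0^n \leftrightarrow \tau_1^n$, $E_1 \leftrightarrow E_2$ and $\es{U}_1 \leftrightarrow \es{U}_2$, it suffices to treat $l = \tau_0^n$; the case $l = \tau_1^n$ is obtained by relabelling. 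Recalling that $\es{U} \equiv \es{U}'$ means $\es{U} \sqsubseteq \es{U}'$ and $\es{U}' \sqsubseteq \es{U}$, the goal becomes checking the four conditions of Definition~\ref{def:pes-sub3} in both directions between $(\meas{n}{\es{U}_1}{\es{U}_2}) \backslash \tau_0^n$ and $\es{U}_1$.

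The core of the argument is a direct computation with Definition~\ref{def:rem-init3}. The events in conflict with $\tau_0^n$ are, by the $\#$-clause of Definition~\ref{def:pes-meas3} together with conflict heredity, precisely $\set{\tau_1^n} \cup E_2$; no event of $E_1$ is in conflict with $\tau_0^n$. Therefore the event set of $(\meas{n}{\es{U}_1}{\es{U}_2}) \backslash \tau_0^n$ is $\bigl(\set{\tau_0^n,\tau_1^n}\uplus E_1\uplus E_2\bigr)\setminus\bigl(\set{\tau_0^n,\tau_1^n}\cup E_2\bigr)=E_1$. Restricting $\leq$ and $\#$ to $E_1$ recovers $\leq_1$ and $\#_1$, since every relation added in Definition~\ref{def:pes-meas3} beyond $\leq_1\uplus\leq_2$ and $\#_1\uplus\#_2$ mentions $\tau_0^n$, $\tau_1^n$, or an event of $E_2$. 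Finally $Q$ restricted to $E_1$ is $Q_1$ by the defining clause for $Q$ in Definition~\ref{def:pes-meas3}. So $(\meas{n}{\es{U}_1}{\es{U}_2}) \backslash \tau_0^n$ coincides with $\es{U}_1$, and each of the four sub-qES conditions of Definition~\ref{def:pes-sub3} holds in both directions, giving the claimed equivalence.

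The step that needs the most care is purely bookkeeping: under the strict reading of $\uplus$ the copy of $E_1$ sitting inside the measurement structure carries disjoint-union tags, so the identification $(\meas{n}{\es{U}_1}{\es{U}_2}) \backslash \tau_0^n = \es{U}_1$ is really an isomorphism that forgets those tags, which is exactly what the $\underline{(-)}$ notation and the $\equiv$ of Definition~\ref{def:pes-sub3} are designed to absorb; stating the two $\sqsubseteq$-inclusions rather than literal equality sidesteps the issue cleanly. An alternative route, which I would mention but not pursue, is to use the remark that $\meas{n}{\es{U}_1}{\es{U}_2} = \nd{\seq{\es{P}_0^n}{\es{U}_1}}{\seq{\es{P}_1^n}{\es{U}_2}}$ and combine a unitary analogue of Lemma~\ref{lem:nd-rem-init1} with Lemma~\ref{lem:seq-rem-init3}, using that $\es{P}_0^n \backslash \tau_0^n$ is the empty unitary event structure and that $\seq{\bot}{\es{U}_1} \equiv \es{U}_1$; this is more modular but duplicates work, so the direct computation is preferable here.
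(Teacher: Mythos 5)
Your proof is correct and follows essentially the same route as the paper: fix $l=\tau_0^n$ by symmetry, observe that removing $\tau_0^n$ together with its conflict set $\set{\tau_1^n}\cup E_2$ leaves exactly $E_1$ with $\leq_1$, $\#_1$, and $Q_1$, and verify both $\sqsubseteq$-inclusions of Definition~\ref{def:pes-sub3}. The only difference is presentational: the paper delegates the event/order/conflict conditions to the argument of Lemma~\ref{lem:nd-rem-init1} and writes out only the operator condition $Q^l(e)=Q_1(e)$, whereas you carry out the (short) direct computation for all four conditions, which is equally valid.
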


\begin{lemma}\label{lem:conc-rem-init3}
  Let $\es{U}_1 = \qpes{E_1}{\leq_1}{\#_1}{Q_1}$ and $\es{U}_2 = \qpes{E_2}{\leq_2}{\#_2}{Q_2}$ be
  unitary event structures.  Consider $\conc{\es{U}_1}{\es{U}_2}$ such that
  $l \in \init{\conc{\es{U}_1}{\es{U}_2}}$. Then
  $(\conc{\es{U}_1}{\es{U}_2}) \backslash l \equiv
  \conc{(\es{U}_1\backslash l)}{(\es{U}_2 \backslash l)}$.
\end{lemma}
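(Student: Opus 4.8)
The plan is to mimic the proof of Lemma~\ref{lem:conc-rem-init1} (and its probabilistic refinement Lemma~\ref{lem:conc-rem-init2}) and reduce everything to a check on the operator labelling, since the underlying event-structure content is already in place. First I would fix notation, writing $\es{U}_1 = \qpes{E_1}{\leq_1}{\#_1}{Q_1}$, $\es{U}_2 = \qpes{E_2}{\leq_2}{\#_2}{Q_2}$, $\conc{\es{U}_1}{\es{U}_2} = \qpes{E}{\leq}{\#}{Q}$, $(\conc{\es{U}_1}{\es{U}_2})\backslash l = \qpes{E'}{\leq'}{\#'}{Q'}$, $\es{U}_1\backslash l = \qpes{E_1^l}{\leq_1^l}{\#_1^l}{Q_1^l}$, $\es{U}_2\backslash l = \qpes{E_2^l}{\leq_2^l}{\#_2^l}{Q_2^l}$, and $\conc{(\es{U}_1\backslash l)}{(\es{U}_2\backslash l)} = \qpes{E^l}{\leq^l}{\#^l}{Q^l}$. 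By Definition~\ref{def:pes-conc3} the hypothesis $l \in \init{\conc{\es{U}_1}{\es{U}_2}}$ forces $l \in \init{\es{U}_1}$ or $l \in \init{\es{U}_2}$; since $\conc{-}{-}$ is symmetric it suffices to treat $l \in \init{\es{U}_1}$, in which case $l \notin E_2$ and $l$ is in conflict with no event of $E_2$, so (under the convention that removing an absent event is a no-op) $\es{U}_2\backslash l = \es{U}_2$, hence $E_2^l = E_2$ and $Q_2^l = Q_2$; this is exactly what makes the two sides carry the same operator on every $E_2$-event, and I would state it explicitly at the start, as in Lemma~\ref{lem:conc-rem-init2}.

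Next, since $\equiv$ unfolds to $\sqsubseteq$ in both directions and the event-structure part of $\sqsubseteq$ (Definition~\ref{def:pes-sub3}) is already handled by Lemma~\ref{lem:conc-rem-init1}, the only remaining clause is $\forall e\ .\ Q(e) = Q'(e)$ for each inclusion. For $(\conc{\es{U}_1}{\es{U}_2})\backslash l \sqsubseteq \conc{(\es{U}_1\backslash l)}{(\es{U}_2\backslash l)}$, take $e \in E'$; by Definition~\ref{def:rem-init3} we have $Q'(e) = Q|_{E'}(e)$, and by Definition~\ref{def:pes-conc3} either $e \in E_1$ or $e \in E_2$. If $e \in E_1$, then $e \neq l$ and $\neg(e \# l)$ give $\neg(e \#_1 l)$ and hence $e \in E_1^l$, so $Q(e) = Q_1(e) = Q_1^l(e) = Q^l(e)$ by Definitions~\ref{def:pes-conc3}, \ref{def:rem-init3}, \ref{def:pes-conc3} in turn. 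If $e \in E_2$, then $Q(e) = Q_2(e) = Q_2^l(e) = Q^l(e)$, using $\es{U}_2\backslash l = \es{U}_2$. The reverse inclusion $\conc{(\es{U}_1\backslash l)}{(\es{U}_2\backslash l)} \sqsubseteq (\conc{\es{U}_1}{\es{U}_2})\backslash l$ is entirely symmetric: an event of $E^l$ lies in $E_1^l$ (whence $e \in E_1$, $e \neq l$, $\neg(e \#_1 l)$, so $e \in E'$ and $Q^l(e) = Q_1^l(e) = Q_1(e) = Q(e)$) or in $E_2^l = E_2$ (whence $Q^l(e) = Q_2(e) = Q(e)$).

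I do not expect a real obstacle here: parallel composition introduces no new events, no new conflicts, and no `copies', so the bookkeeping is lighter than in the sequential case (Lemma~\ref{lem:seq-rem-init3}), and the distinctive axioms of unitary event structures — commutation of operators of concurrent events, transitivity of $\minconflict$, and unitarity of $\sum_{e' \in [e]} Q(e')$ — never need to be re-verified, because $\sqsubseteq$ of Definition~\ref{def:pes-sub3} does not demand them and they are already guaranteed by the lemmas showing that $\conc{\es{U}_1}{\es{U}_2}$ and $\es{U}_i\backslash l$ are unitary event structures. The only point deserving a line of care, as noted above, is the reduction $\es{U}_2\backslash l = \es{U}_2$ when $l \in \init{\es{U}_1}$; once that is dispatched, the argument is a short two-case computation on $Q$.
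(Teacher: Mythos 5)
Your proposal is correct and follows essentially the same route as the paper's proof: reduce to the operator-labelling clause via the classical lemma, fix $l \in \init{\es{U}_1}$ by symmetry, observe $\es{U}_2\backslash l = \es{U}_2$ (hence $Q_2^l = Q_2$), and finish with the same two-case computation $Q(e) = Q_1^l(e)$ or $Q(e) = Q_2(e)$ in each direction. The one point worth flagging is that you correctly isolate the convention that $\es{U}_2\backslash l$ is a no-op when $l \notin E_2$, which the paper also relies on implicitly.
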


It is straightforward to see that:
\begin{lemma}\label{lem:conc-symmetric3}
  Let $\es{U}_1, \es{U}_2$ be unitary event structures.  Then
  $\conc{\es{U}_1}{\es{U}_2} = \conc{\es{U}_2}{\es{U}_1}$.
\end{lemma}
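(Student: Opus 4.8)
The plan is to unfold Definition~\ref{def:pes-conc3} on both sides and check, component by component, that $\conc{\es{U}_1}{\es{U}_2}$ and $\conc{\es{U}_2}{\es{U}_1}$ agree. The underlying event structure of a parallel composition of unitary event structures is exactly the one given by Definition~\ref{def:pes-conc1}, so the equality of the carrier set $E_1 \uplus E_2 = E_2 \uplus E_1$, of the causal order $\leq_1 \uplus \leq_2 = \leq_2 \uplus \leq_1$, and of the conflict relation $\#_1 \uplus \#_2 = \#_2 \uplus \#_1$ is already supplied by Lemma~\ref{lem:conc-symmetric1} (these hold because $\uplus$ is commutative, once the $0$/$1$ tags are read as in the footnote to Definition~\ref{def:pes-conc1}). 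Hence the only genuinely new content is the operator-assignment function $Q$ and the commutation side-condition under which the parallel composition is defined at all.

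First I would observe that $Q$ in Definition~\ref{def:pes-conc3} is specified by the case split ``$Q_1(e)$ if $e \in E_1$, $Q_2(e)$ if $e \in E_2$''; interchanging the roles of $\es{U}_1$ and $\es{U}_2$ yields ``$Q_2(e)$ if $e \in E_2$, $Q_1(e)$ if $e \in E_1$'', and this is literally the same total function on $E_1 \uplus E_2$ because the two clauses have disjoint domains whose union is all of $E$. Next I would check that the hypothesis required to form $\conc{\es{U}_1}{\es{U}_2}$, namely $[Q_1(e_1),Q_2(e_2)] = 0$ for all $e_1 \in E_1$, $e_2 \in E_2$, is equivalent to the one required to form $\conc{\es{U}_2}{\es{U}_1}$, namely $[Q_2(e_2),Q_1(e_1)] = 0$, since the commutator is antisymmetric so one vanishes exactly when the other does. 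Thus whenever one side is well-defined so is the other, and they are built from the same data.

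I do not expect a real obstacle: the statement is a bookkeeping check that $\uplus$ and the pointwise operator assignment are symmetric, plus the trivial remark $[A,B] = 0 \Leftrightarrow [B,A] = 0$. The only mild care, should one wish to be pedantic about the tagging in $\uplus$, is to invoke the footnote convention and note that the canonical bijection $E_1 \uplus E_2 \cong E_2 \uplus E_1$ is being treated as an identity throughout (exactly as it already is in Lemma~\ref{lem:conc-symmetric1}), so no isomorphism juggling is needed. The proof therefore reduces to: unfold Definition~\ref{def:pes-conc3}, apply Lemma~\ref{lem:conc-symmetric1} for the event-structure part, and use antisymmetry of the commutator for the operator part.
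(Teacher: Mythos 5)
Your proposal is correct and matches the paper's argument: the paper's entire proof is ``It follows directly from Definition~\ref{def:pes-conc3},'' and your write-up is simply the explicit unfolding of that definition (event-structure part via Lemma~\ref{lem:conc-symmetric1}, symmetry of the case-split defining $Q$, and antisymmetry of the commutator for the side-condition). No substantive difference in approach.
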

\begin{proof}
  It follows directly from Definition~\ref{def:pes-conc3}.
\end{proof}

We now establish a relation between the operational and denotational semantics by a soundness and
adequacy theorem.  The formulation of the following lemmas and theorems is equal to the ones in
Section~\ref{subsec:results-1}. We thus suggest that the reader recalls what was said about these
lemmas and theorems. It is useful to recall that $\bot = \mf{\checkmark}$.
\begin{lemma}[Soundness I]\label{res:soundI-3}
  If $\com{C} \xrightarrow{l} \com{C'}$ then $\mf{\com{C'}} \equiv \mf{\com{C}} \backslash l$.
\end{lemma}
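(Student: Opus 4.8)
The plan is to mirror the proof of Lemma~\ref{res:soundI-1} exactly, proceeding by induction on the derivation of $\com{C} \xrightarrow{l} \com{C'}$ through the rules of Figure~\ref{fig:op-small3}, the only new ingredient being that each equivalence $\equiv$ now also has to carry the operator-assignment clause of Definition~\ref{def:pes-sub3}. That clause comes essentially for free: the monotonicity facts we need — Lemma~\ref{lem:seq-mono3} for sequential composition and Lemma~\ref{lem:conc-mono3} for parallel composition — are stated with respect to the quantum order $\sqsubseteq$, so applying each in both directions turns $\equiv$ on the components into $\equiv$ on the composite (the quantum analogue of Lemma~\ref{lem:op-sim-1}). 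Throughout I will use $\mf{\checkmark} = \qpes{\emptyset}{\emptyset}{\emptyset}{Id}$ together with $\seq{\mf{\checkmark}}{\es{U}} = \es{U}$ and $\conc{\mf{\checkmark}}{\es{U}} = \es{U}$.

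The base cases $\com{skip} \xrightarrow{sk} \checkmark$ and $\com{U}(\vec{n}) \xrightarrow{U(\vec{n})} \checkmark$ are immediate from Definition~\ref{def:rem-init3}: $\mf{\checkmark} = \emptyset \equiv \mf{\com{skip}} \backslash sk$ and $\mf{\checkmark} = \emptyset \equiv \mf{\com{U}(\vec{n})} \backslash U(\vec{n})$. The interesting base cases are the measurement rules $\meas{n}{\com{C}_1}{\com{C}_2} \xrightarrow{P_0^n} \com{C}_1$ and $\meas{n}{\com{C}_1}{\com{C}_2} \xrightarrow{P_1^n} \com{C}_2$, which take over the role played by the non-deterministic axioms in Lemma~\ref{res:soundI-1}. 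By Definition~\ref{def:den-sem3}, $\mf{\meas{n}{\com{C}_1}{\com{C}_2}} = \seq{\es{P}_0^n}{\mf{\com{C}_1}} + \seq{\es{P}_1^n}{\mf{\com{C}_2}}$, which by the remark after Definition~\ref{def:pes-meas3} is $\nd{\seq{\es{P}_0^n}{\mf{\com{C}_1}}}{\seq{\es{P}_1^n}{\mf{\com{C}_2}}}$; identifying the operational label $P_0^n$ with the initial event $\tau_0^n$ carrying that operator, Lemma~\ref{lem:meas-rem-init3} gives $\mf{\meas{n}{\com{C}_1}{\com{C}_2}} \backslash P_0^n \equiv \mf{\com{C}_1}$, and symmetrically for $P_1^n$ and $\com{C}_2$. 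Because this rule produces a plain command rather than a distribution (contrast Figure~\ref{fig:op-small2}), the statement stays a single $\equiv$, with no second ``$\sqsubseteq$'' clause, exactly as in the non-deterministic case.

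The inductive cases are the sequential and parallel rules, transcribed from Lemma~\ref{res:soundI-1} with the quantum lemmas plugged in. For $\seq{\com{C}_1}{\com{C}_2} \xrightarrow{l} \com{C}_2$ coming from $\com{C}_1 \xrightarrow{l} \checkmark$: the induction hypothesis gives $\mf{\checkmark} \equiv \mf{\com{C}_1} \backslash l$, Lemma~\ref{lem:seq-mono3} gives $\seq{\mf{\checkmark}}{\mf{\com{C}_2}} \equiv \seq{(\mf{\com{C}_1}\backslash l)}{\mf{\com{C}_2}}$, and Lemma~\ref{lem:seq-rem-init3} together with $\seq{\mf{\checkmark}}{\mf{\com{C}_2}} = \mf{\com{C}_2}$ concludes that $\mf{\com{C}_2} \equiv \mf{\seq{\com{C}_1}{\com{C}_2}} \backslash l$. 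The case $\seq{\com{C}_1}{\com{C}_2} \xrightarrow{l'} \seq{\com{C}'_1}{\com{C}_2}$ uses the induction hypothesis $\mf{\com{C}'_1} \equiv \mf{\com{C}_1} \backslash l'$ then Lemmas~\ref{lem:seq-mono3} and~\ref{lem:seq-rem-init3}; the four parallel rules are handled identically with Lemmas~\ref{lem:conc-mono3} and~\ref{lem:conc-rem-init3} in place of the sequential ones, using $\conc{\mf{\checkmark}}{\es{U}} = \es{U}$ and the observation that $l \notin \init{\mf{\com{C}_i}}$ for the non-moving component, so $\mf{\com{C}_i} = \mf{\com{C}_i} \backslash l$.

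I expect the only genuine subtlety — the main obstacle — to be the bookkeeping around measurement events: the operational label $P_i^n$ is the operator $Q(\tau_i^n)$ carried by the event $\tau_i^n$, so one must be precise about the identification under which $\mf{\com{C}} \backslash P_i^n$ denotes removal of $\tau_i^n$, and check that this identification is consistent with how covering chains of $\mf{\com{C}}$ are labelled in the later soundness and adequacy theorems. Everything else is a routine re-run of the non-deterministic argument, because the operator-preservation clause rides along inside the $\sqsubseteq$-monotonicity lemmas.
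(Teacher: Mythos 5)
Your proposal is correct and follows essentially the same route as the paper: induction over the rules of Figure~\ref{fig:op-small3}, with Lemmas~\ref{lem:seq-mono3}, \ref{lem:conc-mono3}, \ref{lem:seq-rem-init3}, \ref{lem:conc-rem-init3} and~\ref{lem:meas-rem-init3} plugged into the skeleton of Lemma~\ref{res:soundI-1}. If anything, your handling of the measurement axioms is more explicit than the paper's (which merely asserts the inclusion $\seq{\es{P}_0^n}{\mf{\com{C}_1}} \sqsubseteq \mf{\meas{n}{\com{C}_1}{\com{C}_2}}$ rather than invoking Lemma~\ref{lem:meas-rem-init3} directly), and your remark about identifying the operational label $P_i^n$ with the event $\tau_i^n$ flags a genuine notational looseness that the paper itself glosses over.
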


\begin{theorem}[Soundness II]\label{res:soundII-3}
  If $\com{C} \xtwoheadrightarrow{\omega} \com{C'}$ then $\exists x \in \mathcal{C}(\mf{\com{C}})$
  such that $\emptyset \stackrel{\omega}{\chain\ } x$.
\end{theorem}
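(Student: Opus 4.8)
The plan is to mimic the proof of Theorem~\ref{res:soundII-1} almost verbatim, since the notions of configuration and covering chain depend only on the underlying event structure of a unitary event structure and are completely insensitive to the operator assignment $Q$. I would proceed by induction on the length $|\omega|$ of the word, using Lemma~\ref{res:soundI-3} (the unitary Soundness~I) and Definition~\ref{def:rem-init3} (removal of an initial event for unitary event structures) in place of their PES counterparts.

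For the base case $|\omega| = 1$, a transition $\com{C} \xtwoheadrightarrow{l} \com{C'}$ can only come from $\com{C} \xrightarrow{l} \com{C'}$, and inspecting the small-step rules in Figure~\ref{fig:op-small3} one sees that the triggering label $l$ is an initial event of $\mf{\com{C}}$, i.e.\ $l \in \init{\mf{\com{C}}}$ (this is also witnessed by Lemma~\ref{res:soundI-3}, which presupposes it). Hence $\set{l} \in \confES{\mf{\com{C}}}$ and $\emptyset \stackrel{l}{\chain\ } \set{l}$, giving the required $x = \set{l}$.

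For the inductive step $|\omega| > 1$, write $\omega = l : \omega'$. The n-step rule in Figure~\ref{fig:op-nstep3} gives $\com{C} \xrightarrow{l} \com{C''}$ and $\com{C''} \xtwoheadrightarrow{\omega'} \com{C'}$. Lemma~\ref{res:soundI-3} yields $\mf{\com{C''}} \equiv \mf{\com{C}} \backslash l$, and the induction hypothesis produces a configuration $y \in \confES{\mf{\com{C''}}}$ with $\emptyset \stackrel{\omega'}{\chain\ } y$. Transporting this covering chain along $\equiv$ to $\mf{\com{C}} \backslash l$ and then, by Definition~\ref{def:rem-init3}, re-inserting the removed initial event $l$ at the front, I obtain $\set{l} \cup y \in \confES{\mf{\com{C}}}$ together with the chain $\emptyset \stackrel{l}{\chain\ } \set{l} \stackrel{\omega'}{\chain\ } \set{l} \cup y =: x$, which is the desired witness.

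The only genuinely delicate points are bookkeeping ones rather than anything quantum-specific. First, one must argue that the existence of a covering chain labelled by a fixed word is preserved under the equivalence $\equiv$ of Definition~\ref{def:pes-sub3} (the copies introduced by sequential composition do not affect which label sequences arise). Second, one must invoke Definition~\ref{def:rem-init3} to see that a configuration of $\mf{\com{C}} \backslash l$, with $\set{l}$ prepended, is again a down-closed, conflict-free subset of $\mf{\com{C}}$ — which holds because $l$ is initial and removing it deletes exactly $l$ and its conflict set, so nothing in $y$ conflicts with $l$ and $l$ has no predecessors. I expect this second step to be the main (though still routine) obstacle; the $Q$-component contributes nothing new here, as covering chains never refer to it.
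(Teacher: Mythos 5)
Your proposal is correct and follows essentially the same route as the paper's proof: induction on $|\omega|$, with the base case giving $x=\set{l}$ directly, and the inductive step combining the n-step rule, Lemma~\ref{res:soundI-3}, the induction hypothesis, and Definition~\ref{def:rem-init3} to form $\set{l}\cup y$. The two ``delicate points'' you flag (transporting chains along $\equiv$ and re-inserting the initial event) are exactly the steps the paper leaves implicit, so your write-up is if anything slightly more careful than the original.
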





\begin{lemma}[Adequacy I]\label{res:adI-3}
  Let $l \in \init{\mf{\com{C}}}$. Then $\exists \com{C'} \in (\com{C} \cup \{\checkmark\})$ s.t
  $\com{C} \xrightarrow{l} \com{C'}$ and $\mf{\com{C}} \backslash l \equiv \mf{\com{C'}}$.
\end{lemma}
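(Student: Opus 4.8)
The plan is to prove this by structural induction on the command $\com{C}$, following the same template as Lemma~\ref{res:adI-1} in the non-deterministic setting; the only genuinely new ingredient is that the statement $\mf{\com{C}}\backslash l \equiv \mf{\com{C'}}$ now also requires the operator labellings $Q$ to agree, and this is already subsumed by the $\sqsubseteq$-conditions established in the monotonicity lemmas (Lemma~\ref{lem:seq-mono3}, Lemma~\ref{lem:meas-mono3}, Lemma~\ref{lem:conc-mono3}) and the removal lemmas (Lemma~\ref{lem:seq-rem-init3}, Lemma~\ref{lem:meas-rem-init3}, Lemma~\ref{lem:conc-rem-init3}). For the base cases $\com{C} = \com{skip}$ and $\com{C} = \com{U}(\vec{n})$ I would take $\com{C'} = \checkmark$: the small-step rules give $\com{skip}\xrightarrow{sk}\checkmark$ and $\com{U}(\vec{n})\xrightarrow{U(\vec{n})}\checkmark$, and by Definition~\ref{def:rem-init3} removing the single event of the denotation leaves the empty unitary event structure $\mf{\checkmark}$.

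For $\com{C} = \meas{n}{\com{C}_1}{\com{C}_2}$, Definition~\ref{def:den-sem3} and Definition~\ref{def:pes-meas3} give $\init{\mf{\com{C}}} = \set{\tau_0^n,\tau_1^n}$, so $l$ is $\tau_0^n$ or $\tau_1^n$. If $l = \tau_0^n$, the rule $\meas{n}{\com{C}_1}{\com{C}_2}\xrightarrow{\tau_0^n}\com{C}_1$ applies; taking $\com{C'} = \com{C}_1$, Lemma~\ref{lem:meas-rem-init3} gives $\mf{\com{C}}\backslash \tau_0^n \equiv \mf{\com{C}_1}$, and symmetrically for $l = \tau_1^n$ (no induction hypothesis is needed in this case). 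For $\com{C} = \seq{\com{C}_1}{\com{C}_2}$, Definition~\ref{def:pes-seq3} shows $\init{\mf{\com{C}}} = \init{\mf{\com{C}_1}}$, so the induction hypothesis on $\com{C}_1$ yields $\com{C'}$ with $\com{C}_1\xrightarrow{l}\com{C'}$ and $\mf{\com{C}_1}\backslash l \equiv \mf{\com{C'}}$; splitting on whether $\com{C'} = \checkmark$ (giving $\seq{\com{C}_1}{\com{C}_2}\xrightarrow{l}\com{C}_2$) or $\com{C'} = \com{C}'_1$ (giving $\seq{\com{C}_1}{\com{C}_2}\xrightarrow{l}\seq{\com{C}'_1}{\com{C}_2}$), I would push $\equiv$ through sequential composition with Lemma~\ref{lem:seq-mono3}, then conclude with Lemma~\ref{lem:seq-rem-init3} and Definition~\ref{def:den-sem3} exactly as in the proof of Lemma~\ref{res:soundI-3}. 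For $\com{C} = \conc{\com{C}_1}{\com{C}_2}$, the disjoint union in Definition~\ref{def:pes-conc3} forces $l$ to be an initial event of exactly one copy, say $l\in\init{\mf{\com{C}_1}}$ (the other case follows by Lemma~\ref{lem:conc-symmetric3}), hence $l\notin\init{\mf{\com{C}_2}}$ and $\mf{\com{C}_2} = \mf{\com{C}_2}\backslash l$; applying the induction hypothesis to $\com{C}_1$, splitting on $\com{C'}$, and invoking Lemma~\ref{lem:conc-mono3}, Lemma~\ref{lem:conc-rem-init3} and Definition~\ref{def:den-sem3} closes this case just as in Lemma~\ref{res:soundI-3}.

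The main obstacle is minor and purely a matter of bookkeeping in the $\meas$ case: one must keep the operational label ($P_0^n$, resp.\ $P_1^n$) and the corresponding event ($\tau_0^n$, resp.\ $\tau_1^n$) in sync, and check that the structure $\seq{\es{P}_0^n}{\mf{\com{C}_1}} + \seq{\es{P}_1^n}{\mf{\com{C}_2}}$ produced by Definition~\ref{def:den-sem3} is indeed the structure $\meas{n}{\mf{\com{C}_1}}{\mf{\com{C}_2}}$ to which Lemma~\ref{lem:meas-rem-init3} applies, via the remark following Definition~\ref{def:pes-meas3}. Everything ``quantum'' — agreement of the operator maps across $\equiv$ — has already been discharged by the monotonicity and removal lemmas, so no fresh computation with operators is required here.
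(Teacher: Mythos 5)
Your proposal is correct and follows essentially the same route as the paper's own proof: structural induction on $\com{C}$, with the measurement case discharged directly by Lemma~\ref{lem:meas-rem-init3} and the operational rules, and the sequential and parallel cases handled via the induction hypothesis together with the corresponding monotonicity and removal lemmas, exactly as the paper does. The bookkeeping point you flag about keeping $P_i^n$ and $\tau_i^n$ in sync is real (the paper itself silently conflates the two labels in this case) but does not change the argument.
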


\begin{theorem}[Adequacy II]\label{res:adII-3}
  If $ \emptyset \neq x \in \mathcal{C}(\mf{\com{C}})$ s.t.
  $\emptyset \stackrel{\omega}{\chain\ } x$ then $\exists \com{C'}$ s.t.
  $\com{C} \xtwoheadrightarrow{\omega} \com{C'}$.
\end{theorem}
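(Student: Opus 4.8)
The plan is to prove Theorem~\ref{res:adII-3} (Adequacy II for unitary event structures) by induction on the length of the word $\omega$, exactly mirroring the structure of the proof of Theorem~\ref{res:adII-1} in Section~\ref{subsec:results-1}, since the quantum decoration does not interfere with the combinatorial content of covering chains. The base case $|\omega| = 1$ is immediate: if $\emptyset \cchain{l} \set{l}$ then $l \in \init{\mf{\com{C}}}$, so by Lemma~\ref{res:adI-3} there exists $\com{C}'$ with $\com{C} \xrightarrow{l} \com{C}'$, and the rules in Figure~\ref{fig:op-nstep3} give $\com{C} \xtwoheadrightarrow{l} \com{C}'$.

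For the inductive step $|\omega| > 1$, I would write $\omega = l_0 : \omega'$ so that the covering chain factors as $\emptyset \cchain{l_0} \set{l_0} \cchain{\omega'} x$; hence $l_0 \in \init{\mf{\com{C}}}$. Applying Lemma~\ref{res:adI-3} yields $\com{C}'$ with $\com{C} \xrightarrow{l_0} \com{C}'$ and $\mf{\com{C}'} \equiv \mf{\com{C}} \backslash l_0$. Then, by Definition~\ref{def:rem-init3}, there exists $y \in \confES{\mf{\com{C}'}}$ with $\emptyset \cchain{\omega'} y$ (the remaining covering chain lives inside $\mf{\com{C}} \backslash l_0$, and the equivalence $\equiv$ preserves configurations and covering chains since $\sqsubseteq$ by Definition~\ref{def:pes-sub3} agrees with $\es{E}$-level $\sqsubseteq$ on the underlying events, causality and conflict). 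The induction hypothesis applied to $\com{C}'$ and $\omega'$ gives some $\com{C}''$ with $\com{C}' \xtwoheadrightarrow{\omega'} \com{C}''$, and then the second rule of Figure~\ref{fig:op-nstep3} assembles $\com{C} \xtwoheadrightarrow{l_0 : \omega'} \com{C}''$, completing the step.

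The one point requiring a little care is the passage from $\emptyset \cchain{\omega'} x$ in $\mf{\com{C}}$ (after $l_0$) to a covering chain in $\mf{\com{C}'}$: I would note that $x \setminus \set{l_0}$ is a configuration of $\mf{\com{C}} \backslash l_0$ because removing an initial event only deletes $l_0$ and its conflict set, and no event of a configuration containing $l_0$ is in conflict with $l_0$; combined with $\mf{\com{C}'} \equiv \mf{\com{C}} \backslash l_0$ this transfers the covering chain. This is precisely the argument already used in the proof of Theorem~\ref{res:soundII-3} (in the opposite direction) and in Theorem~\ref{res:adII-1}, so the main obstacle is essentially notational bookkeeping rather than a genuine mathematical difficulty — the quantum operators $Q$ play no role here, since covering chains and configurations are defined purely at the level of the underlying event structure.

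\begin{proof}
  Induction over the length of $\omega$.
  \begin{itemize}
  \item $|\omega| = 1$

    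We have $\set{l} \in \confES{\mf{\com{C}}}$ such that $\emptyset \cchain{l} \set{l}$, hence
    $l \in \init{\mf{\com{C}}}$.  By Lemma~\ref{res:adI-3}, $\com{C} \xrightarrow{l} \com{C}'$ and
    $\mf{\com{C}'} \equiv \mf{\com{C}} \backslash l$.  By the rules in Figure~\ref{fig:op-nstep3},
    $\com{C} \xtwoheadrightarrow{l} \com{C}'$.

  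\item $|\omega| > 1$

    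We have $x \in \confES{\mf{\com{C}}}$ such that $\emptyset \cchain{\omega} x$.  Since
    $\omega = l_0 l_1 \dots l_n$, then $\emptyset \cchain{l_0} \set{l_0} \cchain{\omega'} x$, so
    $l_0 \in \init{\mf{\com{C}}}$.  By Lemma~\ref{res:adI-3}, $\com{C} \xrightarrow{l_0} \com{C}'$
    and $\mf{\com{C}'} \equiv \mf{\com{C}} \backslash l_0$.  By Definition~\ref{def:rem-init3},
    $\exists y \in \confES{\mf{\com{C}'}}$ such that $\emptyset \cchain{\omega'} y$.  By i.h.,
    $\exists \com{C}''$ such that $\com{C}' \xtwoheadrightarrow{\omega'} \com{C}''$.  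By the rules
    in Figure~\ref{fig:op-nstep3}, $\com{C} \xtwoheadrightarrow{\omega} \com{C}''$, where
    $\omega = l_0 : \omega'$.
  \end{itemize}
\end{proof}
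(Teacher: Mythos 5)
Your proof is correct and follows essentially the same route as the paper's own proof: induction on $|\omega|$, with the base case and inductive step both resting on Lemma~\ref{res:adI-3}, Definition~\ref{def:rem-init3}, and the rules in Figure~\ref{fig:op-nstep3}. The extra remarks you add about transferring the covering chain across $\mf{\com{C}} \backslash l_0 \equiv \mf{\com{C}'}$ only make explicit a step the paper leaves implicit.
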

  





\subsection{Unitary Event Structures with initial state}\label{subsec:quantum-to-prob}
This section shows how the restrictions added to Definition~\ref{def:qes} (minimal conflict being
transitive and the sum of the operators in minimal conflict, or an event itself, should be a unitary
operator), when compared to Winskel's definition of quantum event
structures~\cite[Definition~2]{winskel14}, allow to extend a result obtained by
Winskel~\cite[Theorem~3]{winskel14}. Furthermore, in this section, we show most of the proofs after
stating a result, since these are not so evident.

According to~\cite[Theorem~3]{winskel14}, an elementary quantum event structure, \ie\ an event
structure without conflicting events, paired with an initial state $\rho$ and a valuation function
defined as $v(x) = \text{Tr}(A_x^\dagger A_x \rho)$, corresponds to a probabilistic event
structure. Recall that by Proposition~\cite[Proposition~5]{winskel14} it suffices to verify
Equation~\ref{eq:prob-es-condition} for $y \cchain{e_1} x_1,\, \dots,\, y \cchain{e_n} x_n$ to check
if a structure is a probabilistic event structure or not. Hence, we focus only on the relation
between these events. These events are either in conflict or are concurrent. The reason why Winskel
considers elementary quantum event structures lies in the fact that in a probabilistic event
structure, the sum of the probability of events in conflict is less than or equal to one.  For
example, consider two conflicting events $a$ and $b$, such that $v(\{a\}) = 1$ and $v(\{b\}) = 1$.
In this case, the sum condition in Definition~\ref{def:prob-es} fails when taking $y$ as the empty
set and $x_1$, $x_2$ as $\set{a}$ and $\set{b}$, respectively. Through some calculations, the sum
simplifies to $v(\emptyset) - (v(\set{a}) + v(\set{b})) = 1 - (1 + 1) = -1$, which does not meet the
criteria of being greater than or equal to $0$.

We aim to remove the restriction of considering only elementary quantum event structures
from~\cite[Theorem~3]{winskel14}. Hence, based on the previous example, we must restrict the
probabilities associated with events in conflict. To do this, Definition~\ref{def:qes} includes the
condition that ``$\minconflict$ is transitive''. Recall that the intuition behind this condition is
that events in minimal conflict correspond to projections that form a measurement. Since these
projections are orthogonal, they are inherently captured by conflicting events. For instance, if we
project onto $\ket{0}$, we cannot project onto $\ket{1}$. Furthermore, the probabilities associated
with projections are complementary, \ie\ if the probability of projecting onto $\ket{0}$ is $p$,
then the probability of projecting onto $\ket{1}$ must be $1-p$. By applying this new information to
the previous example of the valuation of conflicting events, we have that
$v(\emptyset) - (v(\set{a}) + v(\set{b})) = 1 - (p + (1-p)) = 1 - 1 = 0$, which meets the criteria
of being greater than or equal to $0$. Here, the event $a$ represents the projection to $\ket{0}$
and the event $b$ represents the projection to $\ket{1}$. Furthermore, Definition~\ref{def:qes} also
demands the sum of events in minimal conflict or the event itself to be a unitary. This is useful to
ensure that we are only considering unitary operations.

To achieve our goal, the difference between our definition and Winskel's quantum event structures
lies in the additional restrictions we impose. Therefore, we use as basis the proof outlined
in~\cite[Theorem~3]{winskel14}. This entails that we need only to consider the case in which all the
events are mapped to projections such that either all events are in conflict or there are events in
conflict. For the former, we have everything. For the latter, we need extra machinery, which we show
here.

Recall that according to~\cite[Proposition~5]{winskel14}, to show that a structure is a
probabilistic event structure, we only need to show that the condition in
Definition~\ref{def:prob-es} holds for $y \cchain{e_1} x_1,\, \dots,\, y \cchain{e_n} x_n$.  We then
build a unitary event structure formed by the events of
$y \cchain{e_1} x_1,\, \dots,\, y \cchain{e_n} x_n$.
\begin{definition}\label{def:E-tilde}
  Let $\es{U} = \qpes{E}{\leq}{\#}{Q}$ be a unitary event structure and $y \in \confES{\es{U}}$.
  Define $\tilde{\es{U}}_y = \qpes{\tilde{E}}{\tilde{\leq}}{\tilde{\#}}{\tilde{Q}}$ as follows:
  \begin{align*}
    & \tilde{E} = \set{e \mid y \cchain{e} y \cup \set{e}} \\
    & \tilde{\leq} = \set{ (e, e) \mid e \in \tilde{E}} \\
    & \tilde{\#} = \# \cap (\tilde{E} \times \tilde{E}) \\
    & \tilde{Q} = Q|_{\tilde{E}}
  \end{align*}
\end{definition}
\begin{lemma}\label{lem:E-tilde}
  $\tilde{\es{U}}_y = \qpes{\tilde{E}}{\tilde{\leq}}{\tilde{\#}}{\tilde{Q}}$ is a unitary event
  structure.
\end{lemma}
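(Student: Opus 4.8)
The plan is to verify that $\tilde{\es{U}}_y = \qpes{\tilde{E}}{\tilde{\leq}}{\tilde{\#}}{\tilde{Q}}$ satisfies Definition~\ref{def:qes}, namely that $\tilde{\es{E}}_y = \qpes{\tilde{E}}{\tilde{\leq}}{\tilde{\#}}{\ }$ is an event structure and that the three quantum conditions (commutation of operators of concurrent events, transitivity of minimal conflict, and unitarity of the sum over an equivalence class) hold. First I would observe that $\tilde{\leq}$ is trivially a partial order (it is the identity relation on $\tilde{E}$), so the downward-closure-finite condition holds since each $\set{e' \mid e' \tilde{\leq} e} = \set{e}$ is a singleton; the conflict-heredity condition $e \tilde{\#} e' \tilde{\leq} e'' \Rightarrow e \tilde{\#} e''$ holds vacuously because $e' \tilde{\leq} e''$ forces $e' = e''$. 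Since $\tilde{\#}$ is obtained by restricting the symmetric, irreflexive relation $\#$ to $\tilde{E} \times \tilde{E}$, it is still symmetric and irreflexive. Hence $\tilde{\es{E}}_y$ is an event structure.

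Next I would check the quantum conditions, leaning on the fact that $\tilde{Q} = Q|_{\tilde{E}}$ is a restriction of $Q$ and that $\tilde{\#} \subseteq \#$. For commutation: if $e_1, e_2 \in \tilde{E}$ are concurrent in $\tilde{\es{U}}_y$, then in particular $\neg(e_1 \tilde{\#} e_2)$, hence $\neg(e_1 \# e_2)$ (as $\tilde\# \subseteq \#$), and since they are also incomparable in $\leq$ — indeed both $e_1, e_2$ are initial in the sense of being addable to $y$, and one can argue that two distinct events each covering $y$ are either concurrent or in conflict in $\es{U}$ — they are concurrent in $\es{U}$, so $Q(e_1)Q(e_2) = Q(e_2)Q(e_1)$, i.e. $\tilde{Q}(e_1)\tilde{Q}(e_2) = \tilde{Q}(e_2)\tilde{Q}(e_1)$. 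For transitivity of minimal conflict in $\tilde{\es{U}}_y$: I would show that for events in $\tilde{E}$, the minimal-conflict relation $\minconflict$ computed in $\tilde{\es{U}}_y$ agrees with (or is contained in) the one computed in $\es{U}$ — since all elements of $\tilde E$ are $\tilde\le$-minimal, any conflict between them in $\tilde{\es{U}}_y$ is automatically minimal, and the corresponding conflict in $\es{U}$ is also minimal because $e_1, e_2$ cover $y$, so there is nothing strictly below them that could witness a smaller conflict except possibly events inside $y$; a short argument rules this out using that $y$ is conflict-free. Then transitivity of $\minconflict$ on $\tilde E$ follows from transitivity of $\minconflict$ on $\es{U}$.

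Finally, for the unitary-sum condition: given $e \in \tilde E$, I would relate the equivalence class $[e]$ taken in $\tilde{\es{U}}_y$ to the equivalence class $[e]$ taken in $\es{U}$. The previous paragraph's identification of minimal conflict gives $[e]_{\tilde{\es{U}}_y} = [e]_{\es{U}} \cap \tilde E$; the key remaining point is that this intersection is in fact all of $[e]_{\es{U}}$, i.e. every event minimally conflicting with $e$ in $\es{U}$ also covers $y$. This is where I expect the main obstacle: one must show that if $e$ covers $y$ and $e' \minconflict e$ in $\es{U}$, then $e'$ also covers $y$, i.e. $y \cup \set{e'}$ is a configuration. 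Down-closure of $y \cup \set{e'}$: any $e'' < e'$ is not in conflict with $e'$ hence (using minimality of the conflict $e' \# e$ and heredity) compatible with the relevant part of $y$, and in fact $e'' \le e$-predecessors lie in $y$ since $y \cup \set e$ is down-closed; conflict-freeness of $y \cup \set{e'}$: if some $a \in y$ had $a \# e'$, then by heredity along $a \le$-paths and the minimal conflict $e' \minconflict e$ one derives $a \# e$, contradicting that $y \cup \set e$ is conflict-free. Once this is established, $[e]_{\tilde{\es{U}}_y} = [e]_{\es{U}}$, so $\sum_{e' \in [e]} \tilde{Q}(e') = \sum_{e' \in [e]} Q(e')$ is unitary by the corresponding property of $\es{U}$, completing the proof. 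If the desired equality $[e]_{\tilde{\es{U}}_y} = [e]_{\es{U}}$ turned out to fail in general, the fallback would be to note that Winskel's argument only needs the sum over the \emph{present} alternatives, and to invoke instead that the sum of the $Q(e')$ for $e'$ covering $y$ that are pairwise in minimal conflict is a partial isometry bounded by a unitary, which still suffices for the drop condition downstream; but I expect the clean statement to go through.
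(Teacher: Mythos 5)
Your overall strategy is the same as the paper's (direct verification of the five conditions), and your treatment of the first four conditions is correct — in fact more careful than the paper's, e.g.\ your observation that two distinct events covering $y$ cannot be causally related in $\es{U}$, which is needed to transfer concurrency, and your remark that one must compare $\minconflict$ computed in $\tilde{\es{U}}_y$ with $\minconflict$ computed in $\es{U}$. You have also correctly located the crux of the whole lemma: the unitary-sum condition only follows if the equivalence class $[e]$ taken in $\tilde{\es{U}}_y$ coincides with the class $[e]$ taken in $\es{U}$. The paper's proof silently assumes this identification ("$\sum_{e' \in [e]} \tilde{Q}(e') = \sum_{e' \in [e]} Q|_{\tilde{E}}(e')$ is unitary") without ever checking it, so making it explicit is a genuine improvement.

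However, your argument for the key claim — that $e' \minconflict e$ in $\es{U}$ and $e$ covering $y$ force $e'$ to cover $y$ — does not go through, and the claim itself is false for arbitrary unitary event structures. The problem is the down-closure step: minimality of the conflict $e' \# e$ tells you that no strict predecessor of $e'$ is in conflict with $e$ (or, after a short argument using transitivity of $\minconflict$, with anything in $y$), but it gives you no control over whether the strict predecessors of $e'$ actually \emph{belong} to $y$. Concretely, take $E = \set{b, e, e'}$ with $e$ initial, $b < e'$, $b$ concurrent with $e$, $e' \minconflict e$, $Q(b)$ unitary, $Q(e) = P_0^n$, $Q(e') = P_1^n$: all conditions of Definition~\ref{def:qes} hold, yet for $y = \emptyset$ we get $\tilde{E} = \set{b, e}$, $\tilde{\#} = \emptyset$, so $[e]_{\tilde{\es{U}}_\emptyset} = \set{e}$ and the required sum is $P_0^n$, which is not unitary. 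So the "clean statement" you hoped for fails from Definition~\ref{def:qes} alone; the lemma needs the additional invariant that events in minimal conflict have the same causal predecessors (which does hold for every unitary event structure produced by Definition~\ref{def:den-sem3}, since minimal conflicts only arise between the two branches $\tau_0^n, \tau_1^n$ of a measurement), or else your fallback of working with a non-unitary partial sum. You anticipated exactly this possibility, which is to your credit, but as written neither your proof nor the paper's closes this gap.
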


Next, we merge the conflicting events from Definition~\ref{def:E-tilde}. This gives us an empty
conflict relation, hence an elementary unitary event structure (recall that
$ [e] = \set{ e' \mid e = e',\ e \minconflict e'}$).
\begin{definition}\label{def:E-hat}
  Let $\tilde{\es{U}}_y$ be a unitary event structure.  Define
  $\hat{\es{U}} = \qpes{\hat{E}}{\hat{\leq}}{\hat{\#}}{\hat{Q}}$ as follows:
  \begin{align*}
    & \hat{E} = \set{[e] \mid e \in \tilde{\es{U}}} \\
    & \hat{\leq} = \set{([e],[e]) \mid [e] \in \hat{E}} \\
    & \hat{\#} = \emptyset \\
    & \hat{Q}([e]) =
      \begin{cases}
        Q(e) & \text{ if } |[e]| = 1 \\
        \sum_{e' \in [e]} Q(e') & \text{ if } |[e]| > 1 
      \end{cases}
  \end{align*}  
\end{definition}
\begin{lemma}\label{lem:E-hat}
  $\hat{\es{U}} = \qpes{\hat{E}}{\hat{\leq}}{\hat{\#}}{\hat{Q}}$ is a unitary event structure.
\end{lemma}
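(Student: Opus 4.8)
The plan is to verify directly that $\hat{\es{U}} = \qpes{\hat{E}}{\hat{\leq}}{\hat{\#}}{\hat{Q}}$ satisfies the axioms of Definition~\ref{def:pes} together with the three extra conditions in Definition~\ref{def:qes}. Since $\hat{E}$ is the set of $\minconflict$-equivalence classes of $\tilde{E}$, $\hat{\leq}$ is just the diagonal, and $\hat{\#}$ is empty, most of the event-structure axioms are immediate. First I would observe that $\set{[e'] \mid [e'] \hat{\leq} [e]} = \set{[e]}$, which is finite, so the finite-downward-closure condition holds; and the conflict-heredity condition $e \hat{\#} e' \hat{\leq} e'' \Rightarrow e \hat{\#} e''$ holds vacuously because $\hat{\#} = \emptyset$. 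This already shows $\hat{\es{U}}$ is an event structure (indeed an elementary one, as advertised just before Definition~\ref{def:E-hat}).

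Next I would check the three unitary conditions. For the commutation condition, $\econc{[e_1]}{[e_2]}$ in $\hat{\es{U}}$ requires $[e_1] \neq [e_2]$ and $\neg([e_1] \hat{\#} [e_2])$, and since $\hat{\#}=\emptyset$ and $\hat{\leq}$ is the diagonal, any two distinct classes are concurrent. So I must show $\hat{Q}([e_1])$ and $\hat{Q}([e_2])$ commute for all distinct classes. Here I would argue: if $[e_1]$ and $[e_2]$ are distinct classes of events coming from $\tilde{E}$, then every $e_1' \in [e_1]$ and $e_2' \in [e_2]$ satisfy $y \cchain{e_1'} y\cup\set{e_1'}$ and $y \cchain{e_2'} y\cup\set{e_2'}$ in $\es{U}$; they are not $\minconflict$-related (different classes) and not causally related (both are extensions of $y$ by a single event), and they are not in conflict — since a conflict $e_1' \# e_2'$ between two events both covering $y$ would, by heredity and the transitivity of $\minconflict$ we have assumed, force them into the same $\minconflict$-class, contradicting $[e_1]\neq[e_2]$. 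Hence $\econc{e_1'}{e_2'}$ in $\es{U}$, so $Q(e_1')$ and $Q(e_2')$ commute. Summing over $e_1' \in [e_1]$ and $e_2' \in [e_2]$ (using the definition of $\hat{Q}$ as either a single $Q(e')$ or a sum $\sum_{e'\in[e]}Q(e')$) gives $\hat{Q}([e_1])\hat{Q}([e_2]) = \hat{Q}([e_2])\hat{Q}([e_1])$ by bilinearity of composition. The transitivity of $\hat{\minconflict}$ is trivial since $\hat{\#}=\emptyset$ means there is no minimal conflict at all. Finally, for the unitarity of $\sum_{e'' \in [[e]]} \hat{Q}(e'')$: since $\hat{\#}$ is empty, each $\minconflict$-class $[[e]]$ in $\hat{\es{U}}$ is the singleton $\set{[e]}$, so the sum is just $\hat{Q}([e])$, and I must show this single operator is unitary. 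By definition $\hat{Q}([e]) = \sum_{e'\in[e]} Q(e')$ (this covers both cases $|[e]|=1$ and $|[e]|>1$), and by the third condition of Definition~\ref{def:qes} applied in $\es{U}$, $\sum_{e'\in[e]} Q(e')$ is unitary — noting that $[e]$ computed in $\tilde{\es{U}}_y$ (hence in $\es{U}$, since $\tilde{\#}$ is the restriction of $\#$) agrees with the $\minconflict$-class of $e$ in $\es{U}$ intersected with $\tilde{E}$; the assumed transitivity of $\minconflict$ in $\es{U}$ is what guarantees these classes partition cleanly and are not truncated by the restriction to $\tilde E$.

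The main obstacle I expect is the commutation step, specifically the claim that two distinct $\minconflict$-classes among the covers of $y$ really consist of pairwise concurrent events in $\es{U}$ — this is where the added hypothesis that $\minconflict$ is transitive does the real work, ruling out the situation where $e_1' \# e_2'$ without $e_1' \minconflict e_2'$ while both are minimal extensions of $y$. I would want to state and use a small observation: if $e, e' \in \tilde{E}$ (both cover $y$) and $e \# e'$, then $e \minconflict e'$, because neither has a strictly smaller conflicting predecessor inside $y\cup\set{e}$ resp. $y\cup\set{e'}$ (any such predecessor would lie in $y$, contradicting that $y$ is conflict-free). Given that lemma, distinct classes are conflict-free relative to each other, and the rest is bookkeeping with bilinearity. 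I would also double-check that the equivalence class $[e]$ as used in Definition~\ref{def:E-hat} is well-defined as a partition of $\tilde E$, which again is exactly the transitivity-of-$\minconflict$ hypothesis.
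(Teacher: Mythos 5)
Your proposal is correct and follows essentially the same route as the paper: a direct case-by-case verification of the event-structure axioms (trivial, since $\hat{\leq}$ is the diagonal and $\hat{\#}=\emptyset$) followed by the three unitary conditions, with the commutation condition reduced to pairwise concurrency of events in distinct classes and the unitarity condition reduced to the third clause of Definition~\ref{def:qes} applied in $\es{U}$. If anything, your write-up is more complete than the paper's: the paper simply asserts that ``all the events in $[e]$ are concurrent with the events in $[e']$'' for distinct classes, whereas you actually prove it (any conflict between two covers of the conflict-free configuration $y$ must be a minimal conflict, which by transitivity would merge the classes), and you also flag the subtlety of whether $[e]$ restricted to $\tilde{E}$ coincides with the full $\minconflict$-class in $\es{U}$, a point the paper glosses over in the unitarity case.
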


We then define a map of event structures, $proj_{\tilde{\es{E}}}$, between the underlying event
structures of $\tilde{\es{U}}_y$ and $\hat{\es{U}}$. Note that $proj_{\tilde{\es{E}}}$ is a total
map of event structures. Hence, by recalling Example~\ref{ex:map-es-1}, for any configuration
$x \in \confES{\tilde{\es{E}}_y}$ we have $|x| = |proj_{\tilde{\es{E}}}[x]|$.  Formally, the
definition of the map of event structures is as follows:
$proj_{\tilde{\es{E}}} : \pes{\tilde{E}}{\tilde{\leq}}{\tilde{\#}} \rightarrow
\pes{\hat{E}}{\hat{\leq}}{\hat{\#}}$, where
\begin{align*}
  proj_{\tilde{\es{E}}} :
  \tilde{E} &\rightarrow \hat{E} \\
  e &\mapsto [e]
\end{align*}

\begin{lemma}\label{lem:map-es}
  $proj_{\tilde{\es{E}}} :
\pes{\tilde{E}}{\tilde{\leq}}{\tilde{\#}} \rightarrow
\pes{\hat{E}}{\hat{\leq}}{\hat{\#}}$
 is a map of events structures.
\end{lemma}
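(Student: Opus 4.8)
The plan is to verify the two defining conditions of a (total) map of event structures from Definition~\ref{def:map-es}, namely configuration preservation and local injectivity; totality is immediate, since $proj_{\tilde{\es{E}}}$ sends every $e \in \tilde{E}$ to $[e]$, which lies in $\hat{E}$ by Definition~\ref{def:E-hat}.

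For configuration preservation I would observe that both configuration conditions are vacuous on the target side. Given $x \in \confES{\tilde{\es{U}}_y}$, the image $proj_{\tilde{\es{E}}}(x) = \set{[e] \mid e \in x}$ is a finite subset of $\hat{E}$; since $\hat{\leq}$ is the identity relation on $\hat{E}$, every subset of $\hat{E}$ is down-closed, and since $\hat{\#} = \emptyset$, every subset is conflict-free. Hence $proj_{\tilde{\es{E}}}(x) \in \confES{\hat{\es{U}}}$ with no further work.

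The only substantive point is local injectivity. Suppose $a \neq b$ belong to a common configuration $x \in \confES{\tilde{\es{U}}_y}$; I must show $[a] \neq [b]$. First I would note that on $\tilde{E}$ the relation ``$e = e'$ or $e \minconflict e'$'' is reflexive, symmetric (symmetry of $\#$), and transitive (using the transitivity of $\minconflict$ required in Definition~\ref{def:qes}), hence an equivalence relation whose classes are precisely the sets $[e]$. Thus $[a] = [b]$ would force $a = b$ or $a \minconflict b$; the former is excluded by hypothesis, and the latter gives $a \# b$, hence $a \tilde{\#} b$ because $a, b \in \tilde{E}$ and $\tilde{\#} = \# \cap (\tilde{E} \times \tilde{E})$. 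But $a, b \in x$ with $x$ conflict-free is a contradiction, so $[a] \neq [b]$. Concluding, $proj_{\tilde{\es{E}}}$ is total, configuration preserving, and locally injective, hence a map of event structures.

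I do not expect any genuine obstacle: the constructions of $\tilde{\es{U}}_y$ (identity causal order, so all subsets down-closed) and of $\hat{\es{U}}$ (empty conflict, so everything on the target is trivially a configuration) are tailored so that the only real content is the implication ``$a,b$ in one configuration $\Rightarrow$ $a,b$ in different $\minconflict$-classes'', and that is a one-line consequence of conflict heredity together with transitivity of minimal conflict.
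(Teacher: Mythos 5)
Your proposal is correct and follows essentially the same route as the paper: configuration preservation is vacuous because $\hat{\#}$ is empty (and $\hat{\leq}$ is the identity), and local injectivity reduces to the observation that two distinct events in a common configuration of $\tilde{\es{U}}_y$ cannot be in the same $\minconflict$-class, since that would put them in conflict. You merely spell out the details (the equivalence-relation structure of $[\cdot]$ and the contradiction with conflict-freeness) that the paper leaves as ``straightforward''.
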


We now prove that our intuition about a feature of $proj_{\tilde{\es{E}}}$ is true, \ie\ that for a
given configuration $x \in \confES{\tilde{\es{E}}_y}$ we have $|x| = |proj_{\tilde{\es{E}}}[x]|$.
\begin{lemma}\label{lem:map-proj}
  Consider $\tilde{\es{E}}_y$, $\hat{\es{E}}$, and $proj_{\tilde{\es{E}}}$.
  If $x \in \confES{\tilde{\es{E}}_y}$ then $|proj_{\tilde{\es{E}}}[x]| = |x|$.
\end{lemma}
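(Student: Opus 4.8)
The statement is essentially a direct consequence of the fact that $proj_{\tilde{\es{E}}}$ has just been shown to be a map of event structures together with the observation that it is \emph{total}. The plan is to unwind the definition of the image of a configuration under a partial map (Definition~\ref{def:map-es}), namely $proj_{\tilde{\es{E}}}(x) = \set{ proj_{\tilde{\es{E}}}(e) \mid e \in x,\ proj_{\tilde{\es{E}}}(e) \text{ defined}}$, and then use local injectivity to conclude that the assignment $e \mapsto proj_{\tilde{\es{E}}}(e)$ is injective on $x$, so that no collapsing of elements occurs when forming the image.

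First I would fix $x \in \confES{\tilde{\es{E}}_y}$. Since $proj_{\tilde{\es{E}}}$ is total, $proj_{\tilde{\es{E}}}(e)$ is defined for every $e \in x$, so the image is simply $proj_{\tilde{\es{E}}}(x) = \set{ [e] \mid e \in x }$ and the restriction $proj_{\tilde{\es{E}}}\!\restriction_x : x \to \hat{E}$ is a total function. Next I would invoke the local injectivity clause of Definition~\ref{def:map-es}, which applies precisely because $x$ is a configuration of $\tilde{\es{E}}_y$ and $proj_{\tilde{\es{E}}}$ is defined on all of $x$: for any two distinct $e, e' \in x$ we get $proj_{\tilde{\es{E}}}(e) \neq proj_{\tilde{\es{E}}}(e')$. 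Hence $proj_{\tilde{\es{E}}}\!\restriction_x$ is injective, and an injective function between finite sets sends an $n$-element set to an $n$-element set, giving $|proj_{\tilde{\es{E}}}(x)| = |x|$.

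There is no real obstacle here — the content was already foreshadowed in the paragraph immediately preceding the lemma (``$proj_{\tilde{\es{U}}}$ is a total map of event structures'', recalling Example~\ref{ex:map-es-1}), and every finiteness assumption is automatic since configurations of $\tilde{\es{E}}_y$ are finite (indeed $\tilde{\es{E}}_y$ has trivial causal order, so its configurations are exactly its conflict-free subsets, which are finite because $y$ is finite and $\tilde{E} = \set{e \mid y \cchain{e} y \cup \set{e}}$). The only point requiring any care is to make sure the local injectivity hypothesis is genuinely satisfiable, i.e.\ that $proj_{\tilde{\es{E}}}$ is defined on both arguments, which is guaranteed by totality; after that the argument is a one-line counting step.
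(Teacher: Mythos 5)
Your proposal is correct and follows exactly the paper's own argument: totality of $proj_{\tilde{\es{E}}}$ gives $proj_{\tilde{\es{E}}}(x) = \set{[e] \mid e \in x}$, and local injectivity on the configuration $x$ prevents any collapsing, so the cardinalities agree. The extra remarks on finiteness are harmless but not needed beyond what the paper already assumes.
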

\begin{proof}
  Let $x \in \confES{\tilde{\es{E}}_y}$.  We know that $proj_{\tilde{\es{E}}}$ is total, hence
  \[
    proj_{\tilde{\es{E}}}[x] =
    \set{ proj_{\tilde{\es{E}}}(e) \mid e \in x } =
    \set{ [e] \mid e \in x}
  \]
  Furthermore, $proj_{\tilde{\es{E}}}$ is locally injective.
  Hence if $e_1, \dots, e_n \in x$ then
  \[
    proj_{\tilde{\es{E}}}(e_1), \dots, proj_{\tilde{\es{E}}}(e_n) \in proj_{\tilde{\es{E}}}[x]
  \]
  
  Thus $|x| = |proj_{\tilde{\es{E}}}[x]|$.
\end{proof}

The next lemma states that for any $\hat{x} \in \confES{\hat{\es{E}}}$ and
$\tilde{x} \in \confES{\tilde{\es{E}}}$, the set of configurations $\tilde{x}$ satisfying
$proj_{\tilde{\es{E}}}[\tilde{x}] = \hat{x}$ consists precisely of those configurations formed by
selecting one event from each equivalence class that makes up $\hat{x}$.  Furthermore, the
equivalence classes in $\hat{x}$ are composed of single events or events in minimal
conflict. Consequently, multiple configurations in $\confES{\tilde{\es{E}}}$ can be mapped to the
same configuration in $\confES{\hat{\es{E}}}$.
\begin{lemma}\label{lem:help-sum}
  Consider $\tilde{\es{E}}_y$, $\hat{\es{E}}$. Let
  $\hat{x} = \set{[e_1], \dots, [e_n]} \in \confES{\hat{\es{E}}}$ and
  $\tilde{x} \in \confES{\tilde{\es{E}}_y}$.
  Then $\set{\tilde{x} \mid proj_{\tilde{\es{E}}}[\tilde{x}] = \hat{x}} =
  \set{ \set{\tilde{e}_1, \dots, \tilde{e}_n} \mid \forall i, \tilde{e}_i \in [e_i] }$.
\end{lemma}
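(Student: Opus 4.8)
The plan is to prove the set equality by double inclusion, after first reading the right-hand side as $\set{\set{\tilde e_1,\dots,\tilde e_n}\mid \forall i,\ \tilde e_i\in[e_i]}$ (the bare $[e]$ there being a typo for the $i$-th class $[e_i]$, and the $[e_i]$ being the $n$ distinct elements of $\hat x$). First I would handle the inclusion $\subseteq$: take $\tilde x\in\confES{\tilde{\es{E}}_y}$ with $proj_{\tilde{\es{E}}}(\tilde x)=\hat x$. Since $proj_{\tilde{\es{E}}}$ is a total, locally injective map of event structures, Lemma~\ref{lem:map-proj} gives $|\tilde x|=|proj_{\tilde{\es{E}}}(\tilde x)|=|\hat x|=n$, so write $\tilde x=\set{\tilde e_1,\dots,\tilde e_n}$ with distinct elements. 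Then $\set{[\tilde e_1],\dots,[\tilde e_n]}=proj_{\tilde{\es{E}}}(\tilde x)=\hat x=\set{[e_1],\dots,[e_n]}$, and since both sides have exactly $n$ members, after relabelling we get $[\tilde e_i]=[e_i]$, i.e.\ $\tilde e_i\in[e_i]$, so $\tilde x$ lies in the right-hand side.

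For the reverse inclusion the key structural fact I would establish first is that, on $\tilde E$, the conflict relation $\tilde\#$ coincides with the minimal conflict $\minconflict$ of $\es{U}$. Indeed, if $e,e'\in\tilde E$ and $e\,\tilde\#\,e'$, then $e\#e'$ in $\es{U}$; and since $e\in\tilde E$ means $y\cup\set e\in\confES{\es{U}}$, which is down-closed and conflict-free, every event strictly below $e$ already lies in $y$ and $e$ is in conflict with no member of $y$ (likewise for $e'$). Hence whenever $e_1\leq e$, $e_2\leq e'$ with $e_1\#e_2$ we must have $e_1=e$ and $e_2=e'$, so $e\minconflict e'$; conversely $e\minconflict e'\Rightarrow e\#e'$ gives $e\,\tilde\#\,e'$ when both are in $\tilde E$. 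Consequently the relation ``$e=e'$ or $e\minconflict e'$'' restricted to $\tilde E$ is an equivalence relation (reflexive and symmetric trivially, transitive by the added axiom of Definition~\ref{def:qes}), so the classes $[e]$ partition $\tilde E$, and two events drawn from distinct classes are neither equal nor $\tilde\#$-related, hence concurrent in $\tilde{\es{E}}_y$ (whose causal order is the identity).

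With that in hand, the inclusion $\supseteq$ is routine: given $\tilde x=\set{\tilde e_1,\dots,\tilde e_n}$ with $\tilde e_i\in[e_i]$, the classes $[e_1],\dots,[e_n]$ are pairwise distinct (being the distinct elements of $\hat x$), so the $\tilde e_i$ lie in pairwise distinct classes and are therefore pairwise distinct and pairwise non-conflicting in $\tilde{\es{E}}_y$; thus $\tilde x\subseteq\tilde E$ is conflict-free and trivially down-closed, so $\tilde x\in\confES{\tilde{\es{E}}_y}$, and $proj_{\tilde{\es{E}}}(\tilde x)=\set{[\tilde e_1],\dots,[\tilde e_n]}=\set{[e_1],\dots,[e_n]}=\hat x$. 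The main obstacle is precisely the structural fact in the middle paragraph---that on $\tilde E$ the ambient conflict reduces to minimal conflict---since this is what makes the equivalence classes $[e]$ behave as a partition of $\tilde E$ into blocks that are mutually concurrent; it rests on the one-step-extension description of $\tilde E$ together with transitivity of $\minconflict$, and everything else is bookkeeping around Lemma~\ref{lem:map-proj}.
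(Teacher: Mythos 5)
Your proof is correct and follows essentially the same route as the paper's: double inclusion, with the forward direction resting on $proj_{\tilde{\es{E}}}$ being total and locally injective (Lemma~\ref{lem:map-proj}) and the reverse direction checking conflict-freeness and (trivial) down-closure of $\set{\tilde e_1,\dots,\tilde e_n}$. The one place you go beyond the paper is welcome: the paper silently passes from $\neg(\tilde e\minconflict\tilde e')$ to $\neg(\tilde e\,\tilde\#\,\tilde e')$, whereas you explicitly establish the underlying fact that on the one-step extensions of $y$ the ambient conflict coincides with minimal conflict (so the classes $[e]$ partition $\tilde E$ into mutually concurrent blocks), which is exactly the justification that step needs.
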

\begin{proof}
  We have two cases:
  \begin{itemize}
  \item $\set{\tilde{x} \mid proj_{\tilde{\es{E}}}[\tilde{x}] = \hat{x}} \subseteq
    \set{ \set{\tilde{e}_1, \dots, \tilde{e}_n} \mid \forall i, \tilde{e}_i \in [e_i] }$

    Let $\tilde{x} = \set{\tilde{e}_1, \dots, \tilde{e}_n} \in \confES{\tilde{\es{E}}_y}$.  By
    Definition~\ref{def:map-es}, $proj_{\tilde{\es{E}}}[\tilde{x}] = \hat{x} \in \confES{\hat{\es{E}}}$.
    Furthermore,
    $proj_{\tilde{\es{E}}}[\tilde{x}]
    = proj_{\tilde{\es{E}}}(e_1), \dots, proj_{\tilde{\es{E}}}(e_n)
    = \set{ [e_1], \dots, [e_n] }$.
    By definition of $[e]$, we know that $\forall i\, .\, \tilde{e}_i \in \tilde{x}$ we have
    $\tilde{e}_i \in [e_i]$.
    Hence, we are done.
    
  \item $\set{ \set{\tilde{e}_1, \dots, \tilde{e}_n} \mid \forall i, \tilde{e}_i \in [e_i] }
    \subseteq \set{\tilde{x} \mid proj_{\tilde{\es{E}}}[\tilde{x}] = \hat{x}}$

    Let $\set{\tilde{e}_1, \dots, \tilde{e}_n} \in
    \set{ \set{\tilde{e}_1, \dots, \tilde{e}_n} \mid \forall i, \tilde{e}_i \in [e_i] }$.
    We need to show that $\set{\tilde{e}_1, \dots, \tilde{e}_n} \in \confES{\tilde{\es{E}}_y}$.
    \begin{enumerate}
    \item $\forall \tilde{e}, \tilde{e}' \in \set{\tilde{e}_1, \dots, \tilde{e}_n}\, .\,
      \neg (\tilde{e} \# \tilde{e}')$

      Let $\tilde{e}, \tilde{e}' \in \set{\tilde{e}_1, \dots, \tilde{e}_n}$.  Then we know that
      $\tilde{e} \in [e]$ and $\tilde{e}' \in [e']$.  By definition of $[e]$, we have that
      $\neg (\tilde{e} \minconflict \tilde{e}')$, which by Definition~\ref{def:E-tilde} means
      $\neg (\tilde{e} \# \tilde{e}')$.
      
    \item $\forall \tilde{e}, \tilde{e}'\, .\,
      \tilde{e}' \tilde{\leq} \tilde{e} \wedge \tilde{e} \in
      \set{\tilde{e}_1, \dots, \tilde{e}_n} \Rightarrow
      \tilde{e}' \in \set{\tilde{e}_1, \dots, \tilde{e}_n}$

      By Definition~\ref{def:E-tilde}, the causal relation is the equality.  Hence this condition
      trivially holds.
    \end{enumerate}

    Since $\set{\tilde{e}_1, \dots, \tilde{e}_n} \in \confES{\tilde{\es{E}}_y}$, it lacks to show
    that
    $proj_{\tilde{\es{E}}_y}[\set{\tilde{e}_1, \dots, \tilde{e}_n}] \in \confES{\tilde{\es{E}}_y} =
    \hat{x}$.  That comes directly from applying $proj_{\tilde{\es{E}}_y}$ to
    $\set{\tilde{e}_1, \dots, \tilde{e}_n} \in \confES{\tilde{\es{E}}_y}$, as follows:
    \[
      proj_{\tilde{\es{E}}_y}[\set{\tilde{e}_1, \dots, \tilde{e}_n}]
      = \set{proj_{\tilde{\es{E}}_y}(\tilde{e}_1), \dots, proj_{\tilde{\es{E}}_y}(\tilde{e}_n)}
      = \set{[e_1], \dots, [e_n]}
      = \hat{x}
    \]
  \end{itemize}
\end{proof}

To better understand the mapping of event structures, $proj_{\tilde{\es{E}}}$,
Lemma~\ref{lem:map-proj}, and Lemma~\ref{lem:help-sum} consider the following example.
\begin{example}\label{ex:tilde-U-to-hat-U}
  Let us consider the following unitary event structure (where we omit the associated quantum
  operators):
  \begin{figure}[ht!]
    \centering
    \begin{tikzpicture}[tikzfig]
      \begin{pgfonlayer}{nodelayer}
        \node [style=event] (0) at (0, 1) {$e_1$};
        \node [style=event] (1) at (-1, -1) {$e_2$};
        \node [style=event] (2) at (1, -1) {$e_3$};
        \node [style=event] (3) at (3, -1) {$e_4$};
        \node [style=event] (4) at (1, -3) {$e_5$};
        \node [style=event] (5) at (3, -3) {$e_6$};
      \end{pgfonlayer}
      \begin{pgfonlayer}{edgelayer}
        \draw [style=wiggle] (2) to (3);
        \draw [style=arrow] (0) to (2);
        \draw [style=arrow] (2) to (4);
        \draw [style=arrow] (3) to (5);
        \draw [style=arrow, bend left=15] (0) to (3);
        \draw [style=arrow] (0) to (1);
      \end{pgfonlayer}
    \end{tikzpicture}
    \caption{Unitary event structure $\es{U}$}
    \label{fig:unit-es-U}
  \end{figure}

  From Figure~\ref{fig:unit-es-U} we can deduce all possible covering chains and consequently all
  possible configurations.
  \begin{figure}[ht!]
    \centering
    \begin{tikzpicture}[tikzfig]
      \begin{pgfonlayer}{nodelayer}
        \node [style=event] (0) at (-13, 0) {$\emptyset$};
        \node [style=event] (1) at (-10, 0) {$\{e_1\}$};
        \node [style=event] (2) at (-5, 0) {$\{e_1, e_2\}$};
        \node [style=event] (3) at (1, 2) {$\{e_1, e_2, e_3\}$};
        \node [style=event] (4) at (1, -2) {$\{e_1, e_2, e_4\}$};
        \node [style=event] (5) at (1, -3.75) {$\{e_1, e_4, e_6\}$};
        \node [style=event] (6) at (1, 4) {$\{e_1, e_3, e_5\}$};
        \node [style=event] (7) at (8.5, 1.25) {$\{e_1, e_2, e_3, e_5\}$};
        \node [style=event] (8) at (8.5, -0.75) {$\{e_1, e_2, e_4, e_6\}$};
        \node [style=event] (9) at (-5, 2) {$\{e_1, e_3\}$};
        \node [style=event] (10) at (-5, -2) {$\{e_1, e_4\}$};
        \node [style=none] (11) at (-11.5, 0.5) {$e_1$};
        \node [style=none] (12) at (-7.5, 2) {$e_3$};
        \node [style=none] (13) at (-7.25, 0.5) {$e_2$};
        \node [style=none] (14) at (-7.75, -1.75) {$e_4$};
        \node [style=none] (15) at (-2, 2.5) {$e_2$};
        \node [style=none] (16) at (-1.75, -2.25) {$e_2$};
        \node [style=none] (17) at (6, 3.25) {$e_2$};
        \node [style=none] (18) at (5.5, -3.25) {$e_2$};
        \node [style=none] (19) at (-2.5, 1.25) {$e_3$};
        \node [style=none] (20) at (-2, -0.5) {$e_4$};
        \node [style=none] (21) at (-3, 3.75) {$e_5$};
        \node [style=none] (22) at (4.5, 2) {$e_5$};
        \node [style=none] (23) at (4.25, -1.25) {$e_6$};
        \node [style=none] (24) at (-2.25, -3.75) {$e_6$};
      \end{pgfonlayer}
      \begin{pgfonlayer}{edgelayer}
        \draw [style=cchain] (0) to (1);
        \draw [style=cchain] (1) to (2);
        \draw [style=cchain, bend left=15] (1) to (9);
        \draw [style=cchain, bend right=15] (1) to (10);
        \draw [style=cchain] (2) to (3);
        \draw [style=cchain] (2) to (4);
        \draw [style=cchain] (9) to (3);
        \draw [style=cchain, bend left=15] (9) to (6);
        \draw [style=cchain, bend right=15] (10) to (5);
        \draw [style=cchain] (10) to (4);
        \draw [style=cchain] (3) to (7);
        \draw [style=cchain] (4) to (8);
        \draw [style=cchain, bend right=15] (5) to (8);
        \draw [style=cchain, bend left=15] (6) to (7);
      \end{pgfonlayer}
    \end{tikzpicture}    
    \caption{Covering chains of $\es{U}$}
    \label{fig:cov-chain-unit-es-U}
  \end{figure}

  Let us now focus on Figure~\ref{fig:frag-cov-chain-unit-es-U}, which is a specific part of
  Figure~\ref{fig:cov-chain-unit-es-U}.
    \begin{figure}[ht!]
    \centering
    \begin{tikzpicture}[tikzfig]
      \begin{pgfonlayer}{nodelayer}
        \node [style=event] (1) at (-10, 0) {$\{e_1\}$};
        \node [style=event] (2) at (-5, 0) {$\{e_1, e_2\}$};
        \node [style=event] (9) at (-5, 2) {$\{e_1, e_3\}$};
        \node [style=event] (10) at (-5, -2) {$\{e_1, e_4\}$};
        \node [style=none] (12) at (-7.5, 2) {$e_3$};
        \node [style=none] (13) at (-7.25, 0.5) {$e_2$};
        \node [style=none] (14) at (-7.75, -1.75) {$e_4$};
      \end{pgfonlayer}
      \begin{pgfonlayer}{edgelayer}
        \draw [style=cchain] (1) to (2);
        \draw [style=cchain, bend left=15] (1) to (9);
        \draw [style=cchain, bend right=15] (1) to (10);
      \end{pgfonlayer}
    \end{tikzpicture}    
    \caption{Fragment of the covering chains of $\es{U}$}
    \label{fig:frag-cov-chain-unit-es-U}
  \end{figure}

  From Figure~\ref{fig:frag-cov-chain-unit-es-U} we can build the unitary event structure
  $\tilde{U}$, shown in Figure~\ref{fig:unit-es-tilde-U}.
  \begin{figure}[ht!]
    \centering
    \begin{tikzpicture}[tikzfig]
      \begin{pgfonlayer}{nodelayer}
        \node [style=event] (1) at (-1, -1) {$e_2$};
        \node [style=event] (2) at (1, -1) {$e_3$};
        \node [style=event] (3) at (3, -1) {$e_4$};
      \end{pgfonlayer}
      \begin{pgfonlayer}{edgelayer}
        \draw [style=wiggle] (2) to (3);
      \end{pgfonlayer}
    \end{tikzpicture}
    \caption{Unitary event structure $\tilde{\es{U}}_y$}
    \label{fig:unit-es-tilde-U}
  \end{figure}

  By applying the mapping of event structures $proj_{\tilde{\es{E}}}$ we obtain:
  \begin{align*}
    e_3, e_4 &\stackrel{proj_{\tilde{\es{E}}}}{\mapsto} [e_{3,4}] \\
    e_2 &\stackrel{proj_{\tilde{\es{E}}}}{\mapsto} [e_2]
  \end{align*}
  which gives the unitary event structure in Figure~\ref{fig:unit-es-hat-U}.
  \begin{figure}[ht!]
    \centering
    \begin{tikzpicture}[tikzfig]
      \begin{pgfonlayer}{nodelayer}
        \node [style=event] (1) at (-1, -1) {$[e_2]$};
        \node [style=event] (2) at (1, -1) {$[e_{3,4}]$};
      \end{pgfonlayer}
    \end{tikzpicture}
    \caption{Unitary event structure $\hat{\es{U}}_y$}
    \label{fig:unit-es-hat-U}
  \end{figure}

  Note that all the events of $\tilde{\es{U}}$ were mapped to $\hat{\es{U}}$. Hence the map
  $proj_{\tilde{\es{E}}}$ is total. Furthermore, the events $e_3$ and $e_4$ are mapped to the same
  event $[e_{3,4}]$.

  Now we verify that the size of a configuration $\tilde{x} \in \confES{\tilde{\es{E}}}$ remains
  unchanged after applying the map $proj_{\tilde{\es{E}}}$.  The set of configurations of
  $\tilde{\es{E}}$ is
  $\confES{\tilde{\es{E}}} = \{\emptyset, \{e_2\}, \{e_3\}, \{e_4\}, \{e_3, e_2\}, \{e_4, e_2\}\}$
  and the set of configurations of $\hat{\es{E}}$ is
  $\confES{\hat{\es{E}}} = \{\emptyset, \{[e_2]\}, \{[e_{3,4}]\}, \{[e_{3,4}], [e_2]\}\}$.  By
  applying $proj_{\tilde{\es{E}}}$ to each configuration of $\confES{\tilde{\es{E}}}$ we have:
  \begin{align*}
    \emptyset &\stackrel{proj_{\tilde{\es{E}}}}{\mapsto} \emptyset \\
    \{e_2\} &\stackrel{proj_{\tilde{\es{E}}}}{\mapsto} [e_2] \\
    \{e_3\}, \{e_4\} &\stackrel{proj_{\tilde{\es{E}}}}{\mapsto} [e_{3,4}] \\
    \{e_3, e_2\}, \{e_4, e_2\} &\stackrel{proj_{\tilde{\es{E}}}}{\mapsto} \{[e_{3,4}], [e_2]\}
  \end{align*}
  We can then conclude that the map $proj_{\tilde{\es{E}}}$ preserves the size of the configurations
  in $\confES{\tilde{\es{E}}}$.

  Now it lacks to see Lemma~\ref{lem:help-sum} in action.  For this we focus on the case where
  \[
    \{e_3, e_2\}, \{e_4, e_2\} \stackrel{proj_{\tilde{\es{E}}}}{\mapsto} \{[e_{3,4}], [e_2]\}
  \]

  First we expand the equivalence classes: $\{\{e_3, e_4\}, \{e_2\}\}$.  Next, we calculate the set
  of configurations formed by picking one event from each equivalence class of
  $\{[e_{3,4}], [e_2]\}$:
  \[
    \set{ \set{\tilde{e}_1, \dots, \tilde{e}_n} \mid \forall i, \tilde{e}_i \in [e_i] }
    =
    \{\{e_3, e_2\}, \{e_4, e_2\}\}
  \]
  From this, we can see that the configurations we obtained are precisely the ones that map to
  $\{[e_{3,4}], [e_2]\}$ after applying $proj_{\tilde{\es{E}}}$.
\end{example}

To prove our claim, we need one more auxiliary lemma.  The idea behind the following lemma is: if we
can show that $\dropc{n}{\tilde{v}}{\emptyset}{\set{e_1}, \dots, \set{e_n}} \geq 0$, where
$\tilde{v}(\tilde{x}) = \dfrac{v(y \cup \tilde{x})}{v(y)}$, then it follows directly that
$\dropc{n}{v}{y}{y \cup \set{e_1}, \dots, y \cup \set{e_n}} \geq 0$.  In other words, the condition
in Definition~\ref{def:prob-es} is satisfied.
\begin{lemma}\label{lem:qes-to-ppes-aux}
  Consider $\tilde{\es{U}}_y$ together with an initial state $\rho_y$, such that all the events of
  $\tilde{\es{U}}_y$ are projections.  For any $\tilde{x} \in \confES{\tilde{\es{U}}_y}$ let
  $\tilde{v}(\tilde{x}) = \dfrac{v(y \cup \tilde{x})}{v(y)}$.  Then
  $\dropc{n}{\tilde{v}}{\emptyset}{\set{e_1}, \dots, \set{e_n}} \geq 0$.
\end{lemma}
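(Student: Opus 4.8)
The plan is to reduce the drop-condition inequality to a single statement of linear algebra: that the trace of a projection against a positive operator is non-negative. The crucial observation is that, because $\tilde{\es{U}}_y$ carries the trivial causal order, any two of its events are either concurrent or in minimal conflict, and in \emph{both} cases the associated operators commute. For concurrent events this is the first axiom of Definition~\ref{def:qes}. For conflicting events $e \# e'$ with $e, e' \in \tilde{E}$, I would first argue that $e \minconflict e'$ already holds in $\es{U}$: since $y \cchain{e} y \cup \set{e}$ and $y \cchain{e'} y \cup \set{e'}$, the down-closures of $e$ and $e'$ sit inside the conflict-free configurations $y \cup \set{e}$ and $y \cup \set{e'}$, so any $e_1 \leq e$, $e_1' \leq e'$ with $e_1 \# e_1'$ other than $e_1 = e$, $e_1' = e'$ would produce a conflict inside $y$, inside $y \cup \set{e}$, or inside $y \cup \set{e'}$ --- impossible. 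Hence $e' \in [e]$, and by transitivity of $\minconflict$ and the third axiom, $\sum_{e'' \in [e]} Q(e'')$ is a unitary; being also self-adjoint and positive (a sum of projections), it must equal $Id$, and from $\sum_{e'' \in [e]} Q(e'') = Id$ a short argument on ranges yields $Q(e)Q(e') = 0 = Q(e')Q(e)$, so these operators commute.

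Granting commutativity, for any $S \subseteq \tilde{E}$ the product $\prod_{e \in S} Q(e)$ is order-independent, equals $0$ when $S$ is not conflict-free, and is a projection otherwise. I would then rewrite $\tilde{v}$ as a trace: set $\sigma := A_y \rho A_y^{\dagger}$, a positive operator with $\text{Tr}(\sigma) = v(y) \neq 0$ (nonzero since $\tilde{v}$ is defined). For a configuration $x \in \confES{\tilde{\es{U}}_y}$ the events of $x$ are pairwise concurrent, so a covering chain of $y \cup x$ may cover $y$ first and then adjoin them, giving $A_{y \cup x} = \left(\prod_{e \in x} Q(e)\right) A_y$; since $\prod_{e \in x} Q(e)$ is a projection, $v(y \cup x) = \text{Tr}\!\left(\left(\prod_{e \in x} Q(e)\right) \sigma\right)$. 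As both sides vanish on non-configurations, the identity $\tilde{v}(S) = \text{Tr}\!\left(\left(\prod_{e \in S} Q(e)\right) \sigma\right)/\text{Tr}(\sigma)$ holds uniformly for every $S \subseteq \tilde{E}$.

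The computation then closes. By definition $\dropc{n}{\tilde{v}}{\emptyset}{\set{e_1}, \dots, \set{e_n}} = \sum_{I \subseteq \set{1, \dots, n}} (-1)^{|I|} \tilde{v}(\set{e_i \mid i \in I})$; substituting the trace formula and pulling the linear operation $\text{Tr}$ and the positive scalar $1/\text{Tr}(\sigma)$ out of the sum gives $\frac{1}{\text{Tr}(\sigma)} \text{Tr}\!\left(\left(\sum_{I} (-1)^{|I|} \prod_{i \in I} Q(e_i)\right) \sigma\right)$. Because the $Q(e_i)$ pairwise commute, $\sum_{I} (-1)^{|I|} \prod_{i \in I} Q(e_i) = \prod_{i=1}^{n} (Id - Q(e_i))$, a product of commuting projections, hence itself a projection $Q_{\perp}$. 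Then $\text{Tr}(Q_{\perp} \sigma) = \text{Tr}(Q_{\perp} \sigma Q_{\perp}) \geq 0$ since $Q_{\perp} \sigma Q_{\perp}$ is positive, and dividing by $\text{Tr}(\sigma) > 0$ finishes the proof.

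The routine ingredients --- products of commuting projections are projections, traces of positive operators are non-negative, the expansion of $\prod_i(Id - Q(e_i))$ --- are standard, and the relation $\tilde v(S) = \text{Tr}(\cdot)/\text{Tr}(\sigma)$ is essentially the definition of the valuation from \cite[Theorem 3]{winskel14}. The one step that needs genuine care, and the only place the new axioms of Definition~\ref{def:qes} are really used, is the orthogonality of operators of conflicting events in $\tilde{\es{U}}_y$ --- the passage from ``a sum of projections is unitary'' to ``those projections are mutually orthogonal'', obtained via the fact that a unitary, self-adjoint, positive operator equals $Id$. That is the step I would write out in full detail.
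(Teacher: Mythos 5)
Your proof is correct, but it takes a genuinely different route from the paper's. The paper proves this lemma by passing to the auxiliary structure $\hat{\es{U}}$ of Definition~\ref{def:E-hat}, in which each minimal-conflict class $[e]$ is collapsed to a single event carrying the summed operator; since $\hat{\es{U}}$ is conflict-free, \cite[Corollary 3]{winskel14} applies to give a probabilistic event structure, and the paper then shows $\hat{v}(\hat{x}) = \sum_{proj(\tilde{y})=\hat{x}} \tilde{v}(\tilde{y})$ (the cross terms $A_{\tilde{y}}^\adjoint A_{\tilde{y}'}$ vanishing by orthogonality of conflicting projections) and finishes by a combinatorial matching of the index sets $I$ and $J$ in the two drop sums, using the counting facts of Lemmas~\ref{lem:map-proj} and~\ref{lem:help-sum}. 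You instead work entirely inside $\tilde{\es{U}}_y$: you establish that \emph{every} pair of events there has commuting operators (conflicting pairs being orthogonal, via the observation that a positive self-adjoint unitary is $Id$, hence a sum of projections equal to $Id$ consists of mutually orthogonal projections), rewrite $\tilde{v}(S) = \mathrm{Tr}\bigl(\bigl(\prod_{e\in S}Q(e)\bigr)\sigma\bigr)/\mathrm{Tr}(\sigma)$ uniformly over all subsets $S$, and collapse the alternating sum to $\mathrm{Tr}\bigl(\prod_i(Id-Q(e_i))\,\sigma\bigr)\ge 0$. Your route avoids the $\hat{\es{U}}$ and $proj_{\tilde{\es{E}}}$ machinery and the appeal to Winskel's conflict-free case, and it makes explicit the orthogonality fact that the paper's step $(\star)$ asserts without justification; the paper's route, in exchange, reuses an existing theorem rather than redoing the positivity argument by hand. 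Two small points to pin down when writing yours up: state that the class $[e]$ in your unitarity argument is taken within $\tilde{\es{U}}_y$ (so that the hypothesis ``all events of $\tilde{\es{U}}_y$ are projections'' covers every summand), and note that $v(y)>0$ is needed for $\tilde{v}$ to be defined, which is what licenses the final division.
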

\begin{proof}
  To show $\dropc{n}{\tilde{v}}{\emptyset}{\set{e_1}, \dots, \set{e_n}} \geq 0$, it is helpful to
  consider $\hat{\es{E}}$ and $proj_{\tilde{\es{E}}}$.

  Recall that $\hat{\es{E}}$ does not have events in conflict.  Hence, by~\cite[Corollary~3]{winskel14},
  we have that $\hat{\es{U}}$ with $\rho_y$ and
  $\hat{v}(\hat{x}) = Tr(A_{\hat{x}}^{\adjoint} A_{\hat{x}} \rho_y)$ is a probabilistic event
  structure.

  We need to show that
  $\hat{v}(\hat{x}) =
  Tr(A_{\hat{x}}^{\adjoint} A_{\hat{x}} \rho_y) =
  \sum_{
    \substack{\tilde{y} \in \confES{\tilde{\es{U}}} \\
      proj_{\tilde{\es{E}}_y}[\tilde{y}] = \hat{x}}} \tilde{v}(\tilde{y})$.

  We note that $\hat{Q}([e]) = U = \sum_{\tilde{e} \in [e]} \tilde{Q}(\tilde{e})$ when $|[e]| > 1$
  and that for a configuration $\hat{x} \in \confES{\hat{\es{E}}}$, the operator
  $A_{\hat{x}} = \hat{Q}([e_n]) \cdots \hat{Q}([e_1]) = \prod_{[e] \in \hat{x}} \hat{Q}([e])$.

  Let us expand $\prod_{[e] \in \hat{x}} \hat{Q}([e])$:
  \begin{align*}
    \prod_{[e] \in \hat{x}} \hat{Q}([e])
    =& \prod_{[e] \in \hat{x}}
       \left( \sum_{\tilde{e} \in [e]} \tilde{Q}(\tilde{e}) \right)
       = \prod_{i=1}^n
       \left( \sum_{\tilde{e}_i \in [e_i]} \tilde{Q}(\tilde{e}_i) \right)
       = \sum_{\substack{
       \tilde{e}_1, \cdots, \tilde{e}_n \in \tilde{E}_y \\
    \forall i, \tilde{e}_i \in [e_i]
    }}
    \left( \prod_{i=1}^n \tilde{Q}(\tilde{e}_i) \right) \\
    =& \sum_{\substack{
       \tilde{e}_1, \cdots, \tilde{e}_n \in \tilde{E}_y \\
    \forall i, \tilde{e}_i \in [e_i]
    }}
    \left(
    \prod_{\tilde{e} \in \set{\tilde{e}_1, \cdots, \tilde{e}_n}} \tilde{Q}(\tilde{e})
    \right) 
    \stackrel{
    \text{(Lemma~\ref{lem:help-sum})}
    }{=} \sum_{\substack{
          \tilde{x} \in \confES{\tilde{\es{U}}_y} \\
    proj_{\tilde{\es{E}}_y}[\tilde{x}] = \hat{x}
    }} \left(
    \prod_{\tilde{e} \in \tilde{x}} \tilde{Q}(\tilde{e})
    \right)
    = \sum_{\substack{
    \tilde{x} \in \confES{\tilde{\es{U}}_y} \\
    proj_{\tilde{\es{E}}_y}[\tilde{x}] = \hat{x}
    }} A_{\tilde{x}} 
  \end{align*}
  
  Then it follows directly that:
  \begin{align*}
    \hat{v}(\hat{x})
    &= Tr(A_{\hat{x}}^{\adjoint} A_{\hat{x}} \rho_y)
      = Tr \left(
      \left(
      \sum_{\substack{
      \tilde{y} \in \confES{\tilde{\es{U}}_y} \\
    proj_{\tilde{\es{E}}_y}[\tilde{y}] = \hat{x}}} A_{\tilde{y}}^{\adjoint}
    \right)
    \left(
    \sum_{\substack{
    \tilde{y}' \in \confES{\tilde{\es{U}}_y} \\
    proj_{\tilde{\es{E}}_y}[\tilde{y}'] = \hat{x}}} A_{\tilde{y}'}
    \right)
    \rho_y
    \right) \\
    &= Tr \left(
    \sum_{\substack{
    \tilde{y} \in \confES{\tilde{\es{U}}} \\
    proj_{\tilde{\es{E}}_y}[\tilde{y}] = \hat{x}
    }}
    \sum_{\substack{
    \tilde{y}' \in \confES{\tilde{\es{U}}} \\
    proj_{\tilde{\es{E}}_y}[\tilde{y}'] = \hat{x}
    }}
    A_{\tilde{y}}^{\adjoint} A_{\tilde{y}'}
    \rho_y
    \right)
    \stackrel{(\star)}{=}
      Tr \left(
      \sum_{\substack{
      \tilde{y} \in \confES{\tilde{\es{U}}} \\
    proj_{\tilde{\es{E}}_y}[\tilde{y}] = \hat{x}}} A_{\tilde{y}}^{\adjoint} A_{\tilde{y}}
    \rho_y
    \right) \\
    &= \sum_{\substack{
      \tilde{y} \in \confES{\tilde{\es{U}}} \\
    proj_{\tilde{\es{E}}_y}[\tilde{y}] = \hat{x}}}
    \left(
    Tr(A_{\tilde{y}}^{\adjoint} A_{\tilde{y}} \rho_y)
    \right)
    = \sum_{
      \substack{\tilde{y} \in \confES{\tilde{\es{U}}} \\
    proj_{\tilde{\es{E}}_y}[\tilde{y}] = \hat{x}}} \tilde{v}(\tilde{y}) 
  \end{align*}

  Where in step $(\star)$ we note that if $\tilde{y} \neq \tilde{y}'$ then
  $A_{\tilde{y}}^\adjoint A_{\tilde{y}'} = 0$.  That is straightforward to see because
  $proj_{\tilde{\es{E}}_y}[\tilde{y}] = \hat{x} = proj_{\tilde{\es{E}}_y}[\tilde{y}']$. Hence it
  exists $\tilde{e} \in \tilde{y}$ and $\tilde{e}' \in \tilde{y}'$ such that
  $\tilde{Q}(\tilde{e}) \cdot \tilde{Q}(\tilde{e}') = 0$. In other words, $\tilde{e}$ and
  $\tilde{e}'$ are in conflict.

  Now we are ready to show that
  $\dropc{n}{\tilde{v}}{\emptyset}{\set{e_1}, \dots, \set{e_n}} \geq 0$.

  By\cite[Proposition~1]{winskel14},
  \begin{align*}
    \dropc{n}{\tilde{v}}{\emptyset}{\set{e_1}, \dots, \set{e_n}}
    &= \tilde{v}(\emptyset) - \sum_{\emptyset \neq I \subseteq \set{1, \dots, n}}
      (-1)^{|I|+1} \tilde{v} \left( \bigcup_{i \in I} \set{e_i} \right) \\
    &= \sum_{I \subseteq \set{1, \dots, n}} (-1)^{|I|} \tilde{v} \left(\bigcup_{i \in I} \set{e_i} \right) 
  \end{align*}

  Now we note that it exists events that are in conflict, and since the union of events that are in
  conflict do not form a configuration, we have that its valuation is zero.
  We can then remove those terms from the sum.
  \begin{align*}
    & \sum_{I \subseteq \set{1, \dots, n}} (-1)^{|I|} \tilde{v} \left(\bigcup_{i \in I} \set{e_i} \right) \\
    &= \sum_{\substack{
      I \subseteq \set{1, \dots, n} \\
    \forall i,j \in I\, .\, \econc{e_i}{e_j}
    }}
    (-1)^{|I|} \tilde{v} \left(\bigcup_{i \in I} \set{e_i} \right) \\
    &= \sum_{\substack{
      I \subseteq \set{1, \dots, n} \\
    \bigcup_{i \in I} \set{e_i} \in \confES{\tilde{\es{U}}}
    }}
    (-1)^{|I|} \tilde{v} \left(\bigcup_{i \in I} \set{e_i} \right) \\
  \end{align*}

  On the other side, on $\hat{\es{U}}$, we have:
  \begin{align*}
    \dropc{k}{\hat{v}}{\emptyset}{\hat{x}_1, \dots, \hat{x}_k}
    &= \hat{v}(\emptyset) -
      \sum_{\emptyset \neq J \subseteq \set{1, \dots, k}}
      (-1)^{|J| + 1} \hat{v}(\bigcup_{j \in J} \hat{x}_j) \\
    &= \sum_{J \subseteq \set{1, \dots, k}} (-1)^{|J|} \hat{v}(\bigcup_{j \in J} \hat{x}_j) \\
    &= \sum_{J \subseteq \set{1, \dots, k}} (-1)^{|J|} \hat{v}(\bigcup_{j \in J} \set{\hat{e}_j}) \\
    &= \sum_{\substack{
      J \subseteq \set{1, \dots, k}\\
    \tilde{y} \in \confES{\tilde{\es{U}}_y}\\
    proj_{\tilde{\es{E}}_y}[\tilde{y}] = \bigcup_{j \in J} \set{\hat{e}_j}
    }} (-1)^{|J|} \tilde{v}(\tilde{y}) \\
    &= \sum_{\substack{
      J \subseteq \set{1, \dots, k}\\
    I \subseteq \set{1, \dots, n}\\
    \bigcup_{i \in I} \set{e_i} \in \confES{\tilde{\es{U}}_y}\\
    proj_{\tilde{\es{E}}_y}[\bigcup_{i \in I} \set{e_i}] = \bigcup_{j \in J} \set{\hat{e}_j}
    }} (-1)^{|J|} \tilde{v}(\bigcup_{i \in I} \set{e_i})
  \end{align*}

  By Lemma~\ref{lem:map-proj} we know that
  $|proj_{\tilde{\es{E}}_y}[\bigcup_{i \in I} \set{e_i}]| = |\bigcup_{i \in I} \set{e_i}| = |I|$.
  Furthermore we also know that $|\bigcup_{j \in J} \set{e_j}| = |J|$.  Since
  $proj_{\tilde{\es{E}}_y}[\bigcup_{i \in I} \set{e_i}] = \bigcup_{j \in J} \set{\hat{e}_j}$, again
  by Lemma~\ref{lem:map-proj} we have
  $|proj_{\tilde{\es{E}}_y}[\bigcup_{i \in I} \set{e_i}]| = |\bigcup_{j \in J} \set{\hat{e}_j}|$.
  Thus $|I| = |J|$.  Hence
  \begin{align*}
    &\sum_{\substack{
      J \subseteq \set{1, \dots, k}\\
    I \subseteq \set{1, \dots, n}\\
    \bigcup_{i \in I} \set{e_i} \in \confES{\tilde{\es{U}}_y}\\
    proj_{\tilde{\es{E}}_y}[\bigcup_{i \in I} \set{e_i}] = \bigcup_{j \in J} \set{\hat{e}_j}
    }} (-1)^{|J|} \tilde{v}(\bigcup_{i \in I} \set{e_i})
    =
    \sum_{\substack{
    J \subseteq \set{1, \dots, k}\\
    I \subseteq \set{1, \dots, n}\\
    \bigcup_{i \in I} \set{e_i} \in \confES{\tilde{\es{U}}_y}\\
    proj_{\tilde{\es{E}}_y}[\bigcup_{i \in I} \set{e_i}] = \bigcup_{j \in J} \set{\hat{e}_j}
    }} (-1)^{|I|} \tilde{v}(\bigcup_{i \in I} \set{e_i}) \\
    &=
      \sum_{\substack{
      I \subseteq \set{1, \dots, n}\\
    \bigcup_{i \in I} \set{e_i} \in \confES{\tilde{\es{U}}_y}
    }} \left(
    \sum_{\substack{
    J \subseteq \set{1, \dots, k}\\
    proj_{\tilde{\es{E}}_y}[\bigcup_{i \in I} \set{e_i}] = \bigcup_{j \in J} \set{\hat{e}_j}
    }} (-1)^{|I|} \tilde{v}(\bigcup_{i \in I} \set{e_i})
    \right)
    \stackrel{(\star)}{=}
    \sum_{\substack{
    I \subseteq \set{1, \dots, n}\\
    \bigcup_{i \in I} \set{e_i} \in \confES{\tilde{\es{U}}_y}
    }}  (-1)^{|I|} \tilde{v}(\bigcup_{i \in I} \set{e_i}) \\
    &=
      \dropc{n}{\tilde{v}}{\emptyset}{\tilde{x}_1, \dots, \tilde{x}_n}
  \end{align*}

  In step $(\star)$ the sum no longer depends on $J$, hence we drop it.

  We shown that
  $ \dropc{n}{\tilde{v}}{\emptyset}{\tilde{x}_1, \dots, \tilde{x}_n} =
  \dropc{k}{\hat{v}}{\emptyset}{\hat{x}_1, \dots, \hat{x}_k}$.

  Hence $\dropc{n}{\tilde{v}}{\emptyset}{\tilde{x}_1, \dots, \tilde{x}_n} \geq 0$.  
\end{proof}

Now we are ready to extend~\cite[Theorem~3]{winskel14}.
\begin{proposition}\label{lem:winskel-extension}
  Let $\es{U} = \qpes{E}{\leq}{\#}{Q}$ be a unitary event structure with initial state $\rho$.  For
  each $x \in \confES{\es{U}}$ let $v(x) = Tr(\rho_x) = Tr(A_x^\adjoint A_x \rho)$.  Then
  $\es{U} = \ppes{E}{\leq}{\#}{v}$ is a probabilistic event structure.
\end{proposition}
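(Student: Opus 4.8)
The goal is to check the two defining conditions of Definition~\ref{def:prob-es} for the pair $(\es{E}, v)$ with $v(x) = \mathrm{Tr}(A_x^{\adjoint} A_x \rho)$. First, $v$ is well defined because $A_x$ is (by the Mazurkiewicz-trace argument recalled after the definition of $A_x$), and $v(x) = \mathrm{Tr}(A_x \rho A_x^{\adjoint}) \in [0,1]$ since conjugating the state $\rho$ by a product of projections and unitaries yields a positive operator of trace at most $\mathrm{Tr}(\rho) = 1$. Since $A_{\emptyset} = Id$, we get $v(\emptyset) = \mathrm{Tr}(\rho) = 1$. All the remaining content is thus the drop condition $\dropc{n}{v}{y}{x_1,\dots,x_n} \geq 0$.

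By \cite[Proposition 5]{winskel14} it suffices to prove this inequality when $x_1,\dots,x_n$ are exactly the single-step covers of $y$, i.e.\ $x_i = y \cup \set{e_i}$ with $y \cchain{e_i} x_i$; fix such a $y$ and put $\rho_y = A_y \rho A_y^{\adjoint}$ for the partial state reached along any covering chain of $y$, so that $v(y \cup \tilde x) = v(y)\,\tilde v(\tilde x)$ with $\tilde v$ the renormalised valuation $\tilde v(\tilde x) = v(y\cup\tilde x)/v(y)$ of Lemma~\ref{lem:qes-to-ppes-aux}. Following the argument of \cite[Theorem 3]{winskel14}, it is enough to treat the case where the enabling events $e_1,\dots,e_n$ are all projections. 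Now split into two cases. In the first, $e_1,\dots,e_n$ are pairwise in conflict; since each $e_i$ is enabled at $y$ and $y$ is conflict-free, every such conflict is in fact a minimal conflict, so by transitivity of $\minconflict$ (Definition~\ref{def:qes}) all the $e_i$ lie in a single equivalence class $C = [e_1] = \dots = [e_n]$. For $|I| \geq 2$ the set $\bigcup_{i\in I} x_i$ is not conflict-free, hence $v$ vanishes on it, and the drop collapses to $v(y) - \sum_{i=1}^n \mathrm{Tr}(Q(e_i)\rho_y Q(e_i)^{\adjoint})$; using that $\sum_{e'\in C} Q(e')$ is unitary (so conjugation of $\rho_y$ by it preserves trace) and that distinct projections in $C$ are orthogonal measurement outcomes, one gets $\sum_{i=1}^n \mathrm{Tr}(Q(e_i)\rho_y Q(e_i)^{\adjoint}) \leq \sum_{e'\in C}\mathrm{Tr}(Q(e')\rho_y Q(e')^{\adjoint}) = \mathrm{Tr}(\rho_y) = v(y)$, so the drop is $\geq 0$.

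In the second case there is at least one conflicting pair among $e_1,\dots,e_n$ but they are not all mutually in conflict; here we invoke the machinery already developed in the excerpt, namely the constructions $\tilde{\es{U}}_y$, $\hat{\es{U}}$ and $proj_{\tilde{\es{E}}}$ and Lemma~\ref{lem:qes-to-ppes-aux}. Applying that lemma to $\tilde{\es{U}}_y$ with initial state $\rho_y$ gives $\dropc{n}{\tilde v}{\emptyset}{\set{e_1},\dots,\set{e_n}} \geq 0$, and since $\tilde v(\tilde x) = v(y\cup\tilde x)/v(y)$ with $v(y)\geq 0$, multiplying through by $v(y)$ yields exactly $\dropc{n}{v}{y}{x_1,\dots,x_n}\geq 0$. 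Combining the two cases with \cite[Proposition 5]{winskel14} concludes that $(\es{E},v)$ is a probabilistic event structure.

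\textbf{Main obstacle.} The bulk of the difficulty is concentrated in the mixed case, but it is already encapsulated in Lemma~\ref{lem:qes-to-ppes-aux}; the work that remains for the proof proper is mostly bookkeeping: confirming that $\tilde v(\tilde x) = v(y \cup \tilde x)/v(y)$ is the correct renormalised valuation to feed into that lemma, and that the reduction to projections borrowed from \cite[Theorem 3]{winskel14} is legitimate for our enlarged class of event structures (which differs from Winskel's only by the two extra conditions of Definition~\ref{def:qes}, both of which survive the reduction).
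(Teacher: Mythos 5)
Your proof follows essentially the same route as the paper's: reduce to the single-step covers of $y$ via \cite[Proposition 5]{winskel14}, delegate the unitary and all-concurrent subcases to \cite[Theorem 3]{winskel14}, and settle the mixed case by feeding $\tilde v(\tilde x) = v(y\cup\tilde x)/v(y)$ into Lemma~\ref{lem:qes-to-ppes-aux} and clearing the denominator. The only real divergence is the all-in-conflict subcase, where you argue more explicitly than the paper (which just asserts the drop is $0$ ``by case 2''): you justify that covering events in conflict above a common $y$ are in \emph{minimal} conflict, hence by transitivity lie in one class whose operator sum is unitary, yielding $\sum_i \mathrm{Tr}\bigl(Q(e_i)\rho_y Q(e_i)^{\adjoint}\bigr) \leq \mathrm{Tr}(\rho_y) = v(y)$ and so a nonnegative drop, which is all that is needed.
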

\begin{proof}
  From~\cite[Proposition~5]{winskel14} we need to show $\dropc{n}{v}{y}{x_1, \dots x_n}$ when
  $y \cchain{e_1, \dots, e_n} x_1, \dots, x_n$.  We then identify the following cases:
  \begin{enumerate}
  \item $v(\emptyset) = 1$
  \item $\exists e_i \in e_1, \dots, e_n$ such that $Q(e_i)$ is a unitary
  \item $\forall e_i \in e_1, \dots, e_n$ we have $Q(e_i)$ is a projection
    \begin{enumerate}
    \item all the events are concurrent
    \item all events are in conflict
    \item there are events in conflict
    \end{enumerate}
  \end{enumerate}

  The proof of $1$, $2$, and $3.$(a) can be found in~\cite[Theorem~3]{winskel14}.  Thus, we only
  show the proof of $3.$(b) and $3.$(c).
  \begin{itemize}
  \item[$3.$(b)] Case every event is in conflict we know that the sum of the associated quantum
    operators is a unitary. Hence we are in case $2.$ and consequently
    $\dropc{n}{v}{y}{x_1, \dots x_n} = 0$.

  \item[$3$.(c)] Case there are events in conflict :
    \begin{align*}
      &\dropc{n}{v}{y}{x_1, \dots x_n} \\
      =& v(y) -
         \sum_{\emptyset \neq I \subseteq \set{1, \dots, n}} (-1)^{|I| + 1} v(\bigcup_{i \in I} x_i) \\
      =& v(y) - \left(
         \sum_{\substack{
         \emptyset \neq I \subseteq \set{1, \dots, n} \\ |I| = 1
      }}
      (-1)^{|I| + 1}
      v(\bigcup_{i \in I} x_i)
      +
      \sum_{\substack{
      \emptyset \neq I \subseteq \set{1, \dots, n} \\ |I| > 1
      }}
      (-1)^{|I| + 1}
      v(\bigcup_{i \in I} x_i)
      \right) \\
      =& v(y) - \sum_{i=1}^n v(x_i) -
         \sum_{\substack{
         \emptyset \neq I \subseteq \set{1, \dots, n} \\ |I| > 1
      }}
      (-1)^{|I| + 1}
      v(\bigcup_{i \in I} x_i)
    \end{align*}

    Focus on
    \[
      \sum_{\substack{
          \emptyset \neq I \subseteq \set{1, \dots, n} \\ |I| > 1
        }}
      (-1)^{|I| + 1}
      v(\bigcup_{i \in I} x_i)
    \]

    Since $x_i = \set{e_i} \cup y$ then
    \[
      \sum_{\substack{
          \emptyset \neq I \subseteq \set{1, \dots, n} \\ |I| > 1
        }}
      (-1)^{|I| + 1}
      v(\bigcup_{i \in I} x_i) =
      \sum_{\substack{
          \emptyset \neq I \subseteq \set{1, \dots, n} \\ |I| > 1
        }}
      (-1)^{|I| + 1}
      v(\bigcup_{i \in I} \set{e_i} \cup y)
    \]

    We know that with $|I| > 1$ we are not considering singletons. Hence we are either making the
    union of events that are concurrent or are in conflict. W.l.o.g consider
    $v(\set{e_j, e_k} \cup y)$ with $1 \leq j \neq k \leq n$. Case $e_j \# e_k$ then we know that
    $Q(e_j) \cdot Q(e_k) = 0 = Q(e_k) \cdot Q(e_j)$ and consequently we have
    $v(\set{e_j, e_k} \cup y) = Tr(A_y^\adjoint \cdot (Q(e_j) \cdot Q(e_k))^\adjoint \cdot Q(e_j)
    \cdot Q(e_k) \cdot A_y \rho) = 0$.  On the other side, case $\econc{e_j}{e_k}$ then we know that
    $Q(e_j) \cdot Q(e_k) = Q(e_k) \cdot Q(e_j)$ and consequently $v(\set{e_j, e_k} \cup y) \geq 0$.

    When events are in conflict their contribution to the sum is null, hence we can discard them. As
    a consequence, the sum is composed of elements that are concurrent.
    Hence we have
    \begin{align*}
      & v(y) - \sum_{i=1}^n v(x_i) -
        \sum_{\substack{
        \emptyset \neq I \subseteq \set{1, \dots, n} \\ |I| > 1
      }}
      (-1)^{|I| + 1}
      v(\bigcup_{i \in I} x_i) \\
      =& v(y) - \sum_{i=1}^n v(x_i) -
         \sum_{\substack{
         \emptyset \neq I \subseteq \set{1, \dots, n} \\ |I| > 1 \\ \bigcup_{i \in I} x_i \in \confES{\es{U}}
      }}
      (-1)^{|I| + 1}
      v(\bigcup_{i \in I} x_i) \\ 
      =& v(y) -
         \sum_{\substack{
         \emptyset \neq I \subseteq \set{1, \dots, n} \\ |I| = 1
      }}
      (-1)^{|I| + 1}
      v(\bigcup_{i \in I} x_i)
      - \sum_{\substack{
      \emptyset \neq I \subseteq \set{1, \dots, n} \\ |I| > 1 \\ \forall (i \neq j) \in I\, .\, \econc{e_i}{e_j}
      }}
      (-1)^{|I| + 1}
      v(\bigcup_{i \in I} x_i) \\
      =& v(y) -
         \sum_{\substack{
         \emptyset \neq I \subseteq \set{1, \dots, n} \\ \forall (i \neq j) \in I\, .\, \econc{e_i}{e_j}
      }}
      (-1)^{|I| + 1}
      v(\bigcup_{i \in I} x_i)
    \end{align*}
  \end{itemize}

  Despite removing the valuations from ill-configurations, we are not in case $3.$(a) since there
  are still events in conflict.
  We thus resort to Lemma~\ref{lem:qes-to-ppes-aux}.
  Concretely:
  \begin{align*}
    & \dropc{n}{\tilde{v}}{\emptyset}{\tilde{x}_1, \dots, \tilde{x}_n} \geq 0 \\
    \Leftrightarrow & \tilde{v}(\emptyset) -
                      \sum_{\substack{
                      \emptyset \neq I \subseteq \set{1, \dots, n} \\
    \bigcup_{i \in I} \set{e_i} \in \confES{\tilde{E}_y}
    }} (-1)^{|I| + 1} \tilde{v}(\bigcup_{i \in I} \set{e_i}) \geq 0 \\
    \Leftrightarrow & \sum_{\substack{
                      I \subseteq \set{1, \dots, n} \\
    \bigcup_{i \in I} \set{e_i} \in \confES{\tilde{E}_y}
    }} (-1)^{|I|} \tilde{v}(\bigcup_{i \in I} \set{e_i}) \geq 0 \\
    \Leftrightarrow & \sum_{\substack{
                      I \subseteq \set{1, \dots, n} \\
    \bigcup_{i \in I} \set{e_i} \in \confES{\tilde{E}_y}
    }} (-1)^{|I|} \dfrac{v(y \cup \bigcup_{i \in I} \set{e_i})}{v(y)} \geq 0 \\
    \Leftrightarrow & \sum_{\substack{
                      I \subseteq \set{1, \dots, n} \\
    \bigcup_{i \in I} \set{e_i} \in \confES{\tilde{E}_y}
    }} (-1)^{|I|} v(y \cup \bigcup_{i \in I} \set{e_i}) \geq 0 \\
    \Leftrightarrow& \dropc{n}{v}{y}{x_1, \dots, x_n} \geq 0
  \end{align*}

  With all cases proved, we have that $\es{U} = \ppes{E}{\leq}{\#}{v}$ is a probabilistic event
  structure.
\end{proof}

Hence we proved that any unitary event structure with an initial state is a probabilistic event
structure.

We can now show soundness and adequacy having in mind initial states.  The intuition behind the
n-step in Section~\ref{sec:qes} is: given a command $\com{C}$ and a list of instructions, which is a
word, $a: \omega'$ we reach a command $\com{C'}$ in n-steps.  If we give an initial state to
$\com{C}$, the evolution of the state will correspond to the application of the word to the initial
state.  We define a state to be a partial density operator, \ie\ a density operator whose trace is
less or equal to one and denote the set of partial density operator as
$\mathcal{D}_{\leq 1}(\mathcal{H})$.  We now define how a word is applied to a state:

\begin{definition}
  Let $\omega$ be a word and $\rho$ a partial density operator.
  Define $\omega(\rho)$ inductively as follows:
  \[
    \omega(\rho) =
    \begin{cases}
      a \rho a^\adjoint & \text{ if } \omega = a \\
      \omega' (a \rho a^\adjoint) & \text{ if } \omega = a:\omega' \\
    \end{cases}
  \]
\end{definition}

Note that when applying a word $\omega$ to a state $\rho$, the first action to be applied on $\rho$
is the head of $\omega$.

Following Lemma~\ref{lem:winskel-extension} we can define a new denotational semantics for quantum
event structures who takes into consideration initial states.
\begin{definition}\label{def:den-sem-init-state}
  We define
  ($\mf{-}_{\_} : \com{C} \times \mathcal{D}_{\leq 1}(\mathbb{C}^2) \rightarrow
  \pes{\es{U}}{\rho}{v}$, where $\es{U}$ is a unitary event structure) as follows:
  \begin{align*}   
    & \mf{\com{skip}}_\rho = \pes{\mf{\com{skip}}}{\rho}{v(\set{sk}) = 1} \\
    & \mf{\com{U}(\vec{n})}_\rho = \pes{\mf{\com{U}(\vec{n})}}{\rho}{v(\set{U_{\vec{n}}}) = 1} \\
    & \mf{\meas{n}{\com{C}_1}{\com{C}_2}}_\rho = \pes{\mf{\meas{n}{\com{C}_1}{\com{C}_2}}}{\rho}{v} \\
    & \mf{\seq{\com{C_1}}{\com{C_2}}}_\rho = \pes{\mf{\seq{\com{C_1}}{\com{C_2}}}}{\rho}{v}\\
    & \mf{\conc{\com{C_1}}{\com{C_2}}}_\rho = \pes{\mf{\conc{\com{C_1}}{\com{C_2}}}}{\rho}{v}
  \end{align*}
\end{definition}

At this point we can establish an equivalence between the semantics with or without initial state.
However we note that for the former we first need to show the equivalence without initial state.

Consider the case without initial state. We observe that both the operational and denotational
semantics presented in this section closely resemble those developed in
Section~\ref{sec:es}. Consequently, the results obtained in Section~\ref{subsec:results-1}
can be straightforwardly adapted to the quantum setting. It is worth to emphasize that removing an
initial element form $\meas{n}{\es{U}_1}{\es{U}_2}$ is equal to $\es{U}_1$ or to $\es{U}_2$ if the
event removed is $\tau_0^n$ or $\tau_1^n$, respectively.

To show the equivalence of the semantics with an initial state we state the following.
\begin{theorem}[Soundness]\label{res:soundII-3-init-state}
  Let $\rho$ be an initial state.  If $\com{C} \xtwoheadrightarrow{\omega} \com{C'}$ then
  $\exists x \in \mathcal{C}(\mf{\com{C}}_\rho)$ such that $\emptyset \stackrel{\omega}{\chain\ } x$ and
  $v(x) = \omega(\rho)$.
\end{theorem}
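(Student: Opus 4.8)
The plan is to prove Theorem~\ref{res:soundII-3-init-state} by induction on the length of the word $\omega$, closely mirroring the proof of Theorem~\ref{res:soundII-3} (the version without initial state). The base case $|\omega| = 1$ is handled directly: if $\com{C} \xtwoheadrightarrow{l} \com{C'}$ then $\com{C} \xrightarrow{l} \com{C'}$, so $l \in \init{\mf{\com{C}}}$, and $\set{l} \in \confES{\mf{\com{C}}_\rho}$ with $\emptyset \cchain{l} \set{l}$. The valuation identity $v(\set{l}) = \mathrm{Tr}(A_{\set{l}}^\adjoint A_{\set{l}} \rho) = \mathrm{Tr}(Q(l)^\adjoint Q(l)\rho) = Q(l)\rho Q(l)^\adjoint$ in trace — i.e.\ $v(\set{l}) = l(\rho) = \omega(\rho)$ — is exactly the definition of how a single-action word acts on a state, combined with Definition~\ref{def:den-sem-init-state} and Proposition~\ref{lem:winskel-extension} which tells us $\mf{\com{C}}_\rho$ carries the valuation $v(x) = \mathrm{Tr}(A_x^\adjoint A_x\rho)$.

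For the inductive step, I would decompose $\com{C} \xtwoheadrightarrow{\omega} \com{C'}$ via the n-step rules (Figure~\ref{fig:op-nstep3}) as $\com{C} \xrightarrow{l} \com{C''}$ and $\com{C''} \xtwoheadrightarrow{\omega'} \com{C'}$ with $\omega = l:\omega'$. By Soundness~I for the quantum setting (Lemma~\ref{res:soundI-3}), $\mf{\com{C''}} \equiv \mf{\com{C}} \backslash l$. The key point is to track the initial state: the command $\com{C''}$ should be given the state $l(\rho) = Q(l)\rho Q(l)^\adjoint$, i.e.\ I claim $\mf{\com{C''}}_{l(\rho)} \equiv (\mf{\com{C}}_\rho)$ with its initial event $l$ removed in the sense of Definition~\ref{def:rem-init3} lifted to the probabilistic/stateful structure — this is precisely the content that Definition~\ref{def:rem-init2}-style renormalization, here realized by updating $\rho$ to $l(\rho)$, is designed to capture. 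Concretely, for a configuration $x'$ of $\mf{\com{C''}}$, the operator $A_{\set{l}\cup x'}$ in $\mf{\com{C}}$ factors as $A_{x'}^{(\com{C''})} Q(l)$ up to the Mazurkiewicz-trace reshuffling discussed after Definition~\ref{def:qes}, so $\mathrm{Tr}(A_{\set{l}\cup x'}^\adjoint A_{\set{l}\cup x'}\rho) = \mathrm{Tr}((A_{x'})^\adjoint A_{x'}\, l(\rho))$, which is exactly the valuation that $\mf{\com{C''}}_{l(\rho)}$ assigns to $x'$. Then apply the induction hypothesis to $\com{C''} \xtwoheadrightarrow{\omega'} \com{C'}$ with state $l(\rho)$: this yields $x' \in \confES{\mf{\com{C''}}_{l(\rho)}}$ with $\emptyset \cchain{\omega'} x'$ and $v(x') = \omega'(l(\rho))$. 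Set $x = \set{l}\cup x'$; by Definition~\ref{def:rem-init3} this is a configuration of $\mf{\com{C}}$ with $\emptyset \cchain{l} \set{l} \cchain{\omega'} x$, and $v(x) = \omega'(l(\rho)) = \omega(\rho)$ by the definition of word application.

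I expect the main obstacle to be making the factorization $A_{\set{l}\cup x'} \sim A_{x'}^{(\com{C''})} Q(l)$ fully rigorous across the structural induction on command shape, because $\mf{\com{C}}$ is built by the constructions of Section~\ref{subsec:constructions-es-3} (sequential, parallel, measurement), and in particular the copying of events in sequential composition (Definition~\ref{def:pes-seq3}) and the $\tau_0^n, \tau_1^n$ events of measurement mean that ``removing the initial event $l$'' interacts nontrivially with how operators are assigned. One would want an auxiliary lemma: for $\es{U}$ a unitary event structure with initial state $\rho$ and $a \in \init{\es{U}}$ with $Q(a)$ such that $Q(a)\rho Q(a)^\adjoint \neq 0$, the probabilistic event structure $\pes{\es{U}\backslash a}{a(\rho)}{v'}$ obtained from Proposition~\ref{lem:winskel-extension} coincides (in the sense of $\equiv$) with the ``remove initial event'' of $\pes{\es{U}}{\rho}{v}$ as a probabilistic event structure — essentially the quantum analogue of Lemma~\ref{lem:seq-rem-init2}/Lemma~\ref{lem:conc-rem-init2}. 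Granting that auxiliary lemma (which is stated by the paper's authors to be a routine adaptation, given the remark that ``the results obtained in Section~\ref{subsec:results-1} can be straightforwardly adapted''), the inductive argument above goes through; the only genuinely new ingredient over Theorem~\ref{res:soundII-3} is threading the state update $\rho \mapsto l(\rho)$ and checking the valuation bookkeeping, which is the computation sketched in the previous paragraph.
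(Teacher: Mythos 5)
Your proof is correct in outline, but it takes a noticeably heavier route than the paper does. The paper does not re-run the induction with the state threaded through: it simply invokes Theorem~\ref{res:soundII-3} to obtain the configuration $x$ with $\emptyset \stackrel{\omega}{\chain\ } x$, and then observes that the covering chain produced there consists of exactly the labels of $\omega$ applied in the same order, so $A_x = Q(l_n)\cdots Q(l_1)$ coincides with the composite operator that the word applies to the state; hence $v(x) = \mathrm{Tr}(A_x^\adjoint A_x\rho) = \mathrm{Tr}(A_x\rho A_x^\adjoint) = \mathrm{Tr}(\omega(\rho))$ ``comes freely.'' Your induction instead re-derives the existence of the covering chain while simultaneously updating $\rho \mapsto l(\rho)$ at each step, which forces you to introduce and prove the auxiliary ``remove initial event with state update'' lemma (the quantum-with-state analogue of Lemma~\ref{lem:seq-rem-init2}) that the paper's factorization avoids entirely. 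What your route buys is explicitness: it makes the valuation bookkeeping visible step by step and would generalize more readily if the operational and denotational label sequences could differ; what the paper's route buys is economy, since all the combinatorial work is already done in Theorem~\ref{res:soundII-3} and only the single identity $A_x\rho A_x^\adjoint = \omega(\rho)$ remains. Two small points of care in your version: the statement $v(x) = \omega(\rho)$ equates a scalar with an operator, so as in the paper you should read it as $v(x) = \mathrm{Tr}(\omega(\rho))$; and your intermediate structures $\mf{\com{C''}}_{l(\rho)}$ carry an unnormalized partial density operator with $\mathrm{Tr}(l(\rho)) \le 1$, so the induction hypothesis must be stated for arbitrary partial density operators rather than for normalized states (the paper's definition of state as a partial density operator already permits this, but your appeal to Proposition~\ref{lem:winskel-extension}, whose valuation satisfies $v(\emptyset)=1$ only for trace-one $\rho$, should be adjusted accordingly).
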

\begin{theorem}[Adequacy]\label{res:adII-3-init-state}
  Let $\rho$ be an initial state.  If $(x \neq \emptyset) \in \mathcal{C}(\mf{\com{C}}_\rho)$ s.t.
  $\emptyset \stackrel{\omega}{\chain\ } x$ then $\exists \com{C'}$ s.t.
  $\com{C} \xtwoheadrightarrow{\omega} \com{C'}$ and $v(x) = Tr(\omega(\rho))$.
\end{theorem}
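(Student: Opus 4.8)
The plan is to prove both statements on top of their state-free counterparts, Theorem~\ref{res:soundII-3} and Theorem~\ref{res:adII-3}, which already supply the configuration $x$ and the matching covering chain $\emptyset \stackrel{\omega}{\chain\ } x$; the only new content is the equation relating $v(x)$ to $\omega(\rho)$. Writing $\rho_x = A_x\rho A_x^{\adjoint}$ for the (unnormalised) state obtained along a covering chain of $x$, the substance of both theorems is the operator identity $\rho_x = \omega(\rho)$, from which $v(x) = \mathrm{Tr}(A_x^{\adjoint}A_x\rho) = \mathrm{Tr}(\rho_x) = \mathrm{Tr}(\omega(\rho))$ follows by cyclicity of the trace. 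The single algebraic fact that drives everything is: if $l$ is an initial event of $\mf{\com{C}}$ and $l \in x$, then some covering chain of $x$ may be chosen to begin with $l$, and dropping that first step leaves a covering chain of $x \backslash l$ in $\mf{\com{C}} \backslash l$ whose operators agree with those of $\mf{\com{C}}$ (Definition~\ref{def:rem-init3} merely restricts $Q$); hence $A_x = A_{x \backslash l}\cdot Q(l)$ and so $\rho_x = A_{x \backslash l}\,(Q(l)\rho Q(l)^{\adjoint})\,A_{x \backslash l}^{\adjoint}$. Moreover $Q(l)$ is exactly the operator the label $l$ denotes ($Id$ for $sk$, $U(\vec{n})$ for $U(\vec{n})$, $P_0^n$ or $P_1^n$ for a measurement branch), so $Q(l)\rho Q(l)^{\adjoint} = l(\rho)$, which identifies the update of the initial state with one step of $\omega(\rho)$.

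For Soundness (Theorem~\ref{res:soundII-3-init-state}) I would induct on $|\omega|$, tracking the proof of Theorem~\ref{res:soundII-3}. If $|\omega| = 1$, then $\com{C} \xrightarrow{l} \com{C}'$, the configuration is $\set{l}$ with $\emptyset \cchain{l} \set{l}$, $A_{\set{l}} = Q(l)$, so $\rho_{\set{l}} = Q(l)\rho Q(l)^{\adjoint} = l(\rho) = \omega(\rho)$ and $v(\set{l}) = \mathrm{Tr}(\omega(\rho))$. If $|\omega| > 1$, write $\omega = l : \omega'$; then $\com{C} \xrightarrow{l} \com{C}''$ and $\com{C}'' \xtwoheadrightarrow{\omega'} \com{C}'$, and by Lemma~\ref{res:soundI-3} $\mf{\com{C}''} \equiv \mf{\com{C}} \backslash l$. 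Apply the induction hypothesis to $\com{C}''$ with initial state $l(\rho)$ to get $y \in \confES{\mf{\com{C}''}}$ with $\emptyset \cchain{\omega'} y$ and state $\omega'(l(\rho))$; transport $y$ across the equivalence $\equiv$ (which by Definition~\ref{def:pes-sub3} preserves events, order, conflict, and $Q$) to $x \backslash l$ in $\mf{\com{C}} \backslash l$, and prepend $l$ to obtain $x = \set{l} \cup y \in \confES{\mf{\com{C}}}$ with $\emptyset \cchain{l}\set{l}\cchain{\omega'} x$. The state identity then follows from the algebraic fact above: $\rho_x = A_{x \backslash l}\,(l(\rho))\,A_{x \backslash l}^{\adjoint} = \omega'(l(\rho)) = \omega(\rho)$.

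For Adequacy (Theorem~\ref{res:adII-3-init-state}) the induction on $|\omega|$ mirrors the proof of Theorem~\ref{res:adII-3}. For $|\omega| = 1$, $x = \set{l}$ with $l \in \init{\mf{\com{C}}}$; Lemma~\ref{res:adI-3} gives $\com{C} \xrightarrow{l} \com{C}'$, hence $\com{C} \xtwoheadrightarrow{l} \com{C}'$ by the $n$-step rule, with $\mf{\com{C}'} \equiv \mf{\com{C}} \backslash l$, and $\rho_{\set{l}} = Q(l)\rho Q(l)^{\adjoint} = l(\rho) = \omega(\rho)$. For $|\omega| > 1$, write $\omega = l_0 : \omega'$ so that $\emptyset \cchain{l_0}\set{l_0}\cchain{\omega'} x$ and $l_0 \in \init{\mf{\com{C}}}$; Lemma~\ref{res:adI-3} yields $\com{C} \xrightarrow{l_0} \com{C}'$ with $\mf{\com{C}'} \equiv \mf{\com{C}} \backslash l_0$, and Definition~\ref{def:rem-init3} produces $y = x \backslash l_0 \in \confES{\mf{\com{C}'}}$ with $\emptyset \cchain{\omega'} y$. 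The induction hypothesis for $\com{C}'$ with initial state $l_0(\rho)$ gives $\com{C}' \xtwoheadrightarrow{\omega'} \com{C}''$ with state $\omega'(l_0(\rho))$; the $n$-step rule then gives $\com{C} \xtwoheadrightarrow{\omega} \com{C}''$, and $\rho_x = A_{x \backslash l_0}\,(l_0(\rho))\,A_{x \backslash l_0}^{\adjoint} = \omega'(l_0(\rho)) = \omega(\rho)$ as before.

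The main obstacle is the rigorous justification of $A_x = A_{x \backslash l}\cdot Q(l)$ together with the invariance of the $A$-operators (hence of $v$) under $\equiv$. This requires: (i) that a covering chain of a finite configuration containing an initial event $l$ can always be chosen to begin with $l$, since $l$ is $\leq$-minimal; (ii) well-definedness of $A_x$ independently of the chosen covering chain, already granted in the excerpt via Mazurkiewicz trace equivalence of covering chains of $x$; (iii) that in $\mf{\com{C}} \backslash l$ the restricted order and the restriction $Q|_{E'}$ turn the tail of that covering chain into a genuine covering chain of $x \backslash l$ with the same operators; and (iv) that $\equiv$ of Definition~\ref{def:pes-sub3}, besides preserving configurations and their covering chains, preserves $Q$ on common events, so $\mf{\com{C}''}_{l(\rho)}$ and $(\mf{\com{C}} \backslash l)_{l(\rho)}$ assign equal valuations to corresponding configurations. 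A secondary but unavoidable point is the measurement case: one must identify the event $\tau_0^n$ (resp.\ $\tau_1^n$) of Definition~\ref{def:pes-meas3} with the operational label $P_0^n$ (resp.\ $P_1^n$), so that $\omega(\rho)$, defined over $(L')^+$, and the operator product $A_x$ over the events of a covering chain denote the same map; Lemma~\ref{lem:meas-rem-init3} is precisely what shows that removing $\tau_0^n$ behaves as executing $P_0^n$.
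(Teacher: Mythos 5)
Your proposal is correct and follows essentially the same route as the paper: the paper likewise reduces to Theorems~\ref{res:soundII-3} and~\ref{res:adII-3} and then observes that the operators applied along $\xtwoheadrightarrow{\omega}$ and along $\emptyset \stackrel{\omega}{\chain\ } x$ coincide, which is exactly your identity $A_x = A_{x\backslash l}\cdot Q(l)$ unfolded by induction on $|\omega|$. You merely supply the detail the paper leaves implicit (the label-to-operator identification for $\tau_i^n$ versus $P_i^n$, the invariance of $Q$ under $\equiv$, and the reading of the claimed equality $v(x)=\omega(\rho)$ as $v(x)=\mathrm{Tr}(\omega(\rho))$, since the left side is a scalar and the right side an operator).
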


To prove the above statements we make use of Theorem~\ref{res:soundII-3} and
Theorem~\ref{res:adII-3}, respectively.  What is left to show is that $v(x) =
Tr(\omega(\rho))$. However that comes freely because the operations applied on
$\xtwoheadrightarrow{\omega}$ and on $\emptyset \cchain{\omega} x$ are the same.

\begin{example}\label{ex:small-3}
  In Figure~\ref{fig:ex2-3}, we have the labeled transition system of
  $\seq{\com{H}(n)}{\meas{n}{\com{X}(n)}{\com{Z}(n)}}$. This program applies first the Hadamard gate
  to qubit $n$ and then measures it. If the measurement was made by $P_0^n$ then we apply the $X$
  gate to qubit $n$ and we are done. On the other side, if the measurement was performed by $P_1^n$
  we apply the $Z$ gate to qubit $n$ finishing the computation.  With the help of
  Figure~\ref{fig:ex2-3}, it is straightforward to see that the words that lead to a terminal
  command are: $H(n) P_0^n X(n)$ and $H(n) P_1^n Z(n)$.  By applying each word to the state
  $\rho = \ket{0}\bra{0}$, we obtain the following possible final states:
  $(H(n) P_0^n X(n)) (\ket{0}\bra{0}) = \dfrac{1}{2} \ket{1}\bra{1}$ and
  $(H(n) P_1^n Z(n)) (\ket{0}\bra{0}) = \dfrac{1}{2} \ket{1}\bra{1}$.
  \begin{figure}[ht!]
    \centering
    \begin{tikzpicture}
      \begin{pgfonlayer}{nodelayer}
        \node [style=command] (0) at (0, 0) { $\seq{\com{H}(n)}{\meas{n}{\com{X}(n)}{\com{Z}(n)}}$ };
        \node [style=command] (1) at (0, -1.25) {$\meas{n}{\com{X}(n)}{\com{Z}(n)}$};
        \node [style=command] (2) at (-1, -2.5) {$\com{X}(n)$};
        \node [style=command] (3) at (1, -2.5) {$\com{Z}(n)$};
        \node [style=command] (4) at (-1, -3.75) {$\checkmark$};
        \node [style=command] (5) at (1, -3.75) {$\checkmark$};
        \node [style=command] (6) at (0.5, -0.75) {$H(n)$};
        \node [style=command] (7) at (-1, -2) {$P_0^n$};
        \node [style=command] (8) at (1, -2) {$P_1^n$};
        \node [style=command] (9) at (-1.5, -3.25) {$X(n)$};
        \node [style=command] (10) at (1.5, -3.25) {$Z(n)$};
      \end{pgfonlayer}
      \begin{pgfonlayer}{edgelayer}
        \draw [style=op] (0) to (1);
        \draw [style=op] (1) to (2);
        \draw [style=op] (2) to (4);
        \draw [style=op] (1) to (3);
        \draw [style=op] (3) to (5);
      \end{pgfonlayer}
    \end{tikzpicture}
    \caption{Labeled transition system of $\seq{\com{H}(n)}{\meas{n}{\com{X}(n)}{\com{Z}(n)}}$}
    \label{fig:ex2-3}
  \end{figure}
\end{example}

\subsection{Introducing cyclic behavior}

Differently from what was done in Section~\ref{subsec:cyclic-beh-1} and
Section~\ref{subsec:cyclic-beh-2}, here the cyclic behavior will not be given by recursion. Instead
it will be given by a while loop. By doing this we still manage to keep the intended philosophy in
the operational semantics, because the while loop is defined in terms of a measurement. In other
words, Section~\ref{subsec:lan-3} already has all that we need to implement the while loop.

The set of commands allowed by the language are given by the following grammar:
\[
  \com{C} ::= \com{skip} \mid U(\vec{n}) \mid \seq{\com{C}}{\com{C}} \mid \meas{n}{\com{C}_1}{\com{C}_2}
  \mid \conc{\com{C}}{\com{C}} \mid \qwhi{n}{\com{C}}
\]
where $U(\vec{n})$ applies the unitary gate $U$ to the qubits presented in $\vec{n}$, the parallel
composition is disjoint~\footnote{$\conc{\com{C}_1}{\com{C}_2}$ being disjoint means that
  $\com{C}_1$ and $\com{C}_2$ do not share any qubit} $\meas{n}{\com{C}_1}{\com{C}_2}$ represents
the measurement of a qubit $n$ such that if the measurement is made by $P_0^n$ then we execute
$\com{C}_1$, else if the measurement is made by $P_1^n$ then we execute $\com{C}_2$, and
$\qwhi{n}{\com{C}}$ is a while loop that stops the computation if the measurement is made by
$P_0^n$. Note that the behavior of $\meas{n}{\com{C}_1}{\com{C}_2}$ is similar to that of a
classical if clause.

\begin{remark}
  In this section, we used a while loop instead of a recursive command for cyclic behavior, unlike
  Sections~\ref{subsec:lan-1} and \ref{subsec:lan-2}. The reason to opt by a while loop in this
  section comes from the behavior of a measurement resembling an if-then-else command. Furthermore
  projections decide if the computation stops or continues, allowing us to implement the while loop
  without needing a notion of state. On the other hand, implementing the while loop in
  Sections~\ref{subsec:lan-1} and \ref{subsec:lan-2} would require a notion of state associated with
  the command, which would obliged us to change the operational semantics we have designed without
  loops.
\end{remark}

The set of qubits being used in a command $\com{C}$ is defined as follows:
\begin{align*}
  & \qvar{\com{skip}} = \emptyset \\
  & \qvar{\com{U}(\vec{n})} = \vec{n} \\
  & \qvar{\meas{n}{\com{C}_1}{\com{C}_2}} = \set{n} \cup \qvar{\com{C}_1} \cup \qvar{\com{C}_2} \\
  & \qvar{\seq{\com{C}_1}{\com{C}_2}} = \qvar{\com{C}_1} \cup \qvar{\com{C}_2} \\ 
  & \qvar{\conc{\com{C}_1}{\com{C}_2}} = \qvar{\com{C}_1} \cup \qvar{\com{C}_2} \\
  & \qvar{\qwhi{n}{\com{C}}} = \set{n} \cup \qvar{\com{C}}
\end{align*}

We add the following rules to Figure~\ref{fig:op-small3}.
\begin{align*}
  \qwhi{n}{\com{C}} \xrightarrow{P_0^n} \checkmark
  \qquad
  \qwhi{n}{\com{C}} \xrightarrow{P_1^n} \seq{\com{C}}{\qwhi{n}{\com{C}}}
\end{align*}

\begin{example}\label{ex:loop-3}
  Figure~\ref{fig:ex-loop-3} illustrates the behavior of a quantum toss coin, which, similarly
  to Example~\ref{ex:loop-1}, produces a possibly empty sequence $H(n) P_1^n$ that finishes with
  $P_0^n$. To understand this we observe that the initial program has two possible transitions:
  (1) transits through $P_0^n$ and the computation finishes; (2) transits through $P_1^n$ to
  $\seq{\com{H}(n)}{\qwhi{n}{\com{H}(n)}}$, which executes $\com{H}(n)$ to transit to
  $\qwhi{n}{\com{H}(n)}$, which is the same command as the initial one.
  \begin{figure}[ht!]
    \centering
    \begin{tikzpicture}
      \begin{pgfonlayer}{nodelayer}
        \node [style=command] (0) at (0, 0) {$\qwhi{n}{\com{H}(n)}$};
        \node [style=command] (1) at (-1.75, -1.25) {$\checkmark$};
        \node [style=command] (2) at (1, -1.25) {$\seq{\com{H}(n)}{\qwhi{n}{\com{H}(n)}}$};
        \node [style=none] (3) at (-1.75, -0.75) {$P_0^n$};
        \node [style=none] (4) at (1, -0.75) {$P_1^n$};
        \node [style=command] (5) at (1, -2.5) {$\qwhi{n}{\com{H}(n)}$};
        \node [style=command] (6) at (-1, -3.75) {$\checkmark$};
        \node [style=command] (7) at (2, -3.75) {$\seq{\com{H}(n)}{\qwhi{n}{\com{H}(n)}}$};
        \node [style=none] (8) at (-1, -3.25) {$P_0^n$};
        \node [style=none] (9) at (2, -3.25) {$P_1^n$};
        \node [style=command] (10) at (2, -5) {$\qwhi{n}{\com{H}(n)}$};
        \node [style=command] (11) at (0, -6.25) {$\checkmark$};
        \node [style=command] (12) at (3, -6.25) {$\seq{\com{H}(n)}{\qwhi{n}{\com{H}(n)}}$};
        \node [style=none] (13) at (0, -5.75) {$P_0^n$};
        \node [style=none] (14) at (3, -5.75) {$P_1^n$};
        \node [style=none] (15) at (1.5, -2) {$H(n)$};
        \node [style=none] (16) at (2.5, -4.5) {$H(n)$};
        \node [style=command] (17) at (3, -7.5) {$\vdots$};
        \node [style=none] (18) at (3.5, -7) {$H(n)$};
      \end{pgfonlayer}
      \begin{pgfonlayer}{edgelayer}
        \draw [style=op] (0) to (1);
        \draw [style=op] (0) to (2);
        \draw [style=op] (5) to (6);
        \draw [style=op] (5) to (7);
        \draw [style=op] (10) to (11);
        \draw [style=op] (10) to (12);
        \draw [style=op] (2) to (5);
        \draw [style=op] (7) to (10);
        \draw [style=op] (12) to (17);
      \end{pgfonlayer}
    \end{tikzpicture}
    \caption{Fragment of the execution of $\qwhi{n}{\com{H}(n)}$}
    \label{fig:ex-loop-3}
  \end{figure}
\end{example}

\begin{definition}\label{def:pes-fix-order-3}  
  Let $\es{U}_1 = \qpes{E_1}{\leq_1}{\#_1}{Q_1}$ and $\es{U}_2 = \qpes{E_2}{\leq_2}{\#_2}{Q_2}$ be
  unitary event structures.  Say $\es{U}_1 \trianglelefteq \es{U}_2$ if:
  \begin{align*}
    & E_1 \subseteq E_2 \\
    & \forall e,e'\ .\ e \leq_1 e'
      \Leftrightarrow
      e, e' \in E_1 \wedge e \leq_2 e' \\
    & \forall e,e'\ .\ e \#_1 e'
      \Leftrightarrow
      e, e' \in E_1 \wedge e \#_2 e' \\
    & \forall e \in E_1\, .\, Q_1(e) = Q_2(e)
  \end{align*}
\end{definition}

The following lemmas confirm that Definition~\ref{def:pes-fix-order-3} is a partial order and that
it has a least element.
\begin{lemma}\label{lem:po-3}
  $\trianglelefteq$ is a partial order.
\end{lemma}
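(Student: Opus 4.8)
The statement is the exact analogue, for unitary event structures, of Lemma~\ref{lem:po-2} for probabilistic event structures: Definition~\ref{def:pes-fix-order-3} is Definition~\ref{def:pes-fix-order-1} together with one extra clause, namely $Q_1(e) = Q_2(e)$ for all $e \in E_1$. So the plan is to invoke Lemma~\ref{lem:po-1}, which already establishes that the first three clauses define a partial order on the underlying event structures, and then to check that the added clause on the operator maps $Q$ is compatible with reflexivity, transitivity and antisymmetry. Throughout, write $\es{U}_1 = \qpes{E_1}{\leq_1}{\#_1}{Q_1}$, $\es{U}_2 = \qpes{E_2}{\leq_2}{\#_2}{Q_2}$, $\es{U}_3 = \qpes{E_3}{\leq_3}{\#_3}{Q_3}$.

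For reflexivity, $\es{U}_1 \trianglelefteq \es{U}_1$: the first three conditions hold by Lemma~\ref{lem:po-1}, and $\forall e \in E_1\, .\, Q_1(e) = Q_1(e)$ is immediate.

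For transitivity, assume $\es{U}_1 \trianglelefteq \es{U}_2$ and $\es{U}_2 \trianglelefteq \es{U}_3$. By Lemma~\ref{lem:po-1} the first three conditions of $\es{U}_1 \trianglelefteq \es{U}_3$ already hold, in particular $E_1 \subseteq E_2 \subseteq E_3$. For the operator clause, take $e \in E_1$. Then $e \in E_2$ and, since $\es{U}_1 \trianglelefteq \es{U}_2$, $Q_1(e) = Q_2(e)$; since $\es{U}_2 \trianglelefteq \es{U}_3$ and $e \in E_2$, $Q_2(e) = Q_3(e)$. Hence $Q_1(e) = Q_3(e)$, so $\es{U}_1 \trianglelefteq \es{U}_3$.

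For antisymmetry, assume $\es{U}_1 \trianglelefteq \es{U}_2$ and $\es{U}_2 \trianglelefteq \es{U}_1$. By the antisymmetry part of Lemma~\ref{lem:po-1} the underlying event structures coincide, so $E_1 = E_2$, $\leq_1\,=\,\leq_2$ and $\#_1\,=\,\#_2$. It remains to show $Q_1 = Q_2$. From $\es{U}_1 \trianglelefteq \es{U}_2$ we get $Q_1(e) = Q_2(e)$ for all $e \in E_1 = E_2$, so the two partial functions $Q_1$ and $Q_2$ agree on their common domain, i.e. $Q_1 = Q_2$. Therefore $\es{U}_1 = \es{U}_2$, and $\trianglelefteq$ is a partial order. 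There is no real obstacle here — the only point to be careful about is that the extra clause is quantified over $E_1$ (not symmetrically), so in the transitivity step one must first note $E_1 \subseteq E_2$ before transporting the equality of operators, and in the antisymmetry step one relies on $E_1 = E_2$, which comes from Lemma~\ref{lem:po-1}, to upgrade pointwise agreement to equality of the maps.
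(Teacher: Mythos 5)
Your proof is correct and follows essentially the same route as the paper's: both defer the first three clauses to Lemma~\ref{lem:po-1} and then check the operator clause $Q_1(e)=Q_2(e)$ for reflexivity, transitivity, and antisymmetry. Your version is in fact slightly more explicit than the paper's (noting $E_1 \subseteq E_2$ before transporting the operator equality in the transitivity step, and spelling out why pointwise agreement on $E_1 = E_2$ gives $Q_1 = Q_2$), but the substance is identical.
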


\begin{lemma}\label{lem:po-least-elem-3}
  Define $\bot = \qpes{\emptyset}{\emptyset}{\emptyset}{! : \emptyset \rightarrow Op(\mathcal{H})}$.
  $\bot$ is the least element of $\trianglelefteq$.
\end{lemma}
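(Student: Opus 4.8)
The plan is to mirror the proofs of Lemma~\ref{lem:po-least-elem-1} and Lemma~\ref{lem:po-least-elem-2}, splitting the argument into two parts: first that $\bot$ is a unitary event structure, and second that $\bot \trianglelefteq \es{U}$ for every unitary event structure $\es{U}$. Throughout, the key observation is that $\bot$ has no events, so that every universally quantified condition ranging over events of $\bot$ holds vacuously, and the map $!\colon\emptyset\rightarrow Op(\mathcal{H})$ is simply the unique empty function.

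First I would show $\bot$ satisfies Definition~\ref{def:qes}. By Lemma~\ref{lem:po-least-elem-1}, the underlying structure $\pes{\emptyset}{\emptyset}{\emptyset}$ is an event structure, so the finiteness-of-downward-closure and conflict-heredity conditions are already discharged. It remains to check the three conditions added for unitary event structures: commutation of operators on concurrent events, transitivity of $\minconflict$, and unitarity of $\sum_{e'\in[e]} Q(e')$ for each $e\in E$. Since $E = \emptyset$, there are no pairs $e_1,e_2$, no minimal-conflict triples, and no event $e$, so all three conditions hold vacuously. Hence $\bot$ is a unitary event structure.

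Next I would verify $\bot \trianglelefteq \es{U}$ for an arbitrary unitary event structure $\es{U} = \qpes{E}{\leq}{\#}{Q}$, by checking the four clauses of Definition~\ref{def:pes-fix-order-3}. By Lemma~\ref{lem:po-least-elem-1}, the first three clauses ($\emptyset \subseteq E$, the biconditional characterising $\leq_{\bot}$ in terms of $\leq$, and the biconditional characterising $\#_{\bot}$ in terms of $\#$) already hold, since these are exactly the conditions defining $\trianglelefteq$ at the level of event structures and $\bot$ is the least event structure there. The fourth clause requires $\forall e \in E_{\bot}\, .\, Q_{\bot}(e) = Q(e)$, i.e.\ $\forall e \in \emptyset\, .\, !(e) = Q(e)$, which holds vacuously. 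Therefore $\bot \trianglelefteq \es{U}$.

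I do not anticipate a real obstacle here: the entire argument is by vacuity over the empty event set, and the only thing to be careful about is to explicitly invoke Lemma~\ref{lem:po-least-elem-1} to reuse the event-structure-level facts rather than re-deriving them, and to note that the new valuation clause of Definition~\ref{def:pes-fix-order-3} quantifies only over events of the smaller structure, which for $\bot$ is empty.
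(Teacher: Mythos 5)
Your proposal is correct and follows essentially the same route as the paper's proof: both parts (that $\bot$ is a unitary event structure, and that $\bot \trianglelefteq \es{U}$ for any $\es{U}$) are discharged by invoking Lemma~\ref{lem:po-least-elem-1} for the event-structure-level clauses and by vacuity over the empty event set for the quantum-operator conditions. The only cosmetic difference is that the paper phrases the final vacuity argument as a (trivial) proof by contradiction, whereas you state it directly as a vacuous universal quantification, which is cleaner.
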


We now extend Definition~\ref{def:lub-1} to the quantum setting.
\begin{definition}\label{def:lub-3}  
  Let $\es{U}_1 \trianglelefteq \dots \trianglelefteq \es{U}_n \trianglelefteq \dots$ be a
  $\omega$-chain. Let $\es{U}^{\omega} = \qpes{E^\omega}{\leq^\omega}{\#^\omega}{Q^\omega}$ be its
  least upper bound where:
  \begin{itemize}
  \item $E^\omega = \cup_{n \in \omega} E_n$
  \item $\leq^\omega = \cup_{n \in \omega} \leq_n$
  \item $\#^\omega = \cup_{n \in \omega} \#_n$
  \item $Q^\omega(e) \Leftrightarrow \exists n \in \omega\, .\, e \in E_n$ and $Q_n(e) = Q^\omega(e)$
  \end{itemize}
\end{definition}

We then show that the structure in Definition~\ref{def:lub-3} is a unitary event structure and a
least upper bound in a chain of unitary event structures.
\begin{lemma}\label{lem:lub-es-3}
  $\es{U}^\omega$ is a unitary event structure.
\end{lemma}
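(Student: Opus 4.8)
The plan is to proceed exactly as in Lemma~\ref{lem:lub-es-1} and Lemma~\ref{lem:lub-es-2}. By Lemma~\ref{lem:lub-es-1} the underlying triple $\pes{E^\omega}{\leq^\omega}{\#^\omega}$ of $\es{U}^\omega$ is already an event structure, so it only remains to check the three extra axioms of Definition~\ref{def:qes} for $Q^\omega$: (i) $\econc{e_1}{e_2}$ implies $Q^\omega(e_1)Q^\omega(e_2)=Q^\omega(e_2)Q^\omega(e_1)$; (ii) $\minconflict$ is transitive; and (iii) $\sum_{e' \in [e]} Q^\omega(e')$ is unitary for every $e \in E^\omega$. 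As in Lemma~\ref{lem:lub-1}, $\es{U}^\omega$ is an upper bound of the chain, so for every $n$ we have $\es{U}_n \trianglelefteq \es{U}^\omega$; concretely $\leq_n$ and $\#_n$ are the restrictions of $\leq^\omega$ and $\#^\omega$ to $E_n$, and $Q_n = Q^\omega|_{E_n}$. Since the $E_n$ increase to $E^\omega$ and every downward closure $\{e' : e' \leq^\omega e\}$ is finite, for any finite $S \subseteq E^\omega$ we may pick $n$ with $S$ and the downward closures of all its members contained in $E_n$; for such an $n$ the relations $\leq$, $\#$, $\minconflict$ and concurrency restricted to $S$ agree whether computed in $\es{U}_n$ or in $\es{U}^\omega$ (for $\minconflict$ this uses that the minimality clause only quantifies over downward closures, which are finite and hence eventually stable along the chain).

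Given this, conditions (i) and (ii) are immediate. For (i): if $\econc{e_1}{e_2}$ in $\es{U}^\omega$, take $n$ with $e_1,e_2 \in E_n$; then $\econc{e_1}{e_2}$ in $\es{U}_n$, whence $Q_n(e_1)Q_n(e_2)=Q_n(e_2)Q_n(e_1)$ because $\es{U}_n$ is a unitary event structure, and $Q^\omega(e_i)=Q_n(e_i)$. For (ii): if $e_1 \minconflict e_2$ and $e_2 \minconflict e_3$ in $\es{U}^\omega$, choose $n$ so that $\{e_1,e_2,e_3\}$ together with their downward closures lies in $E_n$; then $e_1 \minconflict e_2$ and $e_2 \minconflict e_3$ hold in $\es{U}_n$, transitivity of $\minconflict$ in the unitary event structure $\es{U}_n$ gives $e_1 \minconflict e_3$ there, and transporting back (using the agreement of the relations on this finite set) yields $e_1 \minconflict e_3$ in $\es{U}^\omega$.

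The remaining work, and the part I expect to be the main obstacle, is condition (iii). Unlike concurrency or a single minimal conflict, the whole equivalence class $[e]$ need not be captured by any fixed $\es{U}_n$, because $\minconflict$ interacts badly with the chain: a new predecessor appearing at a higher level can destroy minimality, so $[e]$ computed in $\es{U}^\omega$ is only contained in, not obviously equal to, the classes computed far down the chain. The plan here is to first argue that each class has at most $\dim\mathcal{H}$ elements — the operators attached to events in one minimal-conflict class are pairwise orthogonal projections (the measurement intuition used in Section~\ref{subsec:quantum-to-prob}), so their sum, being simultaneously a unitary and a sum of orthogonal projections on a finite-dimensional space, is the identity, which bounds the number of nonzero summands. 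From this bound the $\omega$-class $[e]$ is finite (more than $\dim\mathcal{H}$ of its members would all show up in a single $\es{U}_N$, contradicting the bound), so choosing $N$ large enough to contain $e$, every member of $[e]$, and all their finite downward closures gives $[e]^{\es{U}^\omega} \subseteq [e]^{\es{U}_N}$; the reverse inclusion is then forced, since an $e'$ lying in $[e]^{\es{U}_N}$ but not in $[e]^{\es{U}^\omega}$ would make the unitary-sum axiom fail at some higher level of the chain. With $[e]^{\es{U}^\omega} = [e]^{\es{U}_N}$ and $Q^\omega|_{E_N}=Q_N$, condition (iii) for $\es{U}^\omega$ reduces to the same condition for $\es{U}_N$, which holds by hypothesis. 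The delicate bookkeeping about exactly when minimal conflicts are preserved along the chain is where the real care is needed; everything else is routine once the ``every relevant finite configuration lives in some $\es{U}_n$'' principle is in place.
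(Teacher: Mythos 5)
Your treatment of conditions (i) and (ii) matches the paper's own proof of this lemma: in both cases one picks a stage $\es{U}_n$ containing the finitely many relevant events and transports the property down from $\es{U}_n$, and your extra care in choosing $n$ large enough that the downward closures (and hence $\leq$, $\#$, concurrency and $\minconflict$ restricted to those events) have stabilised is an improvement on the paper, since Definition~\ref{def:pes-fix-order-3} does not prevent new predecessors or new conflicts from appearing at later stages. For condition (iii) you have correctly isolated the real difficulty, which the paper's proof does not address at all: it simply takes some $n$ with $e \in E_n$ and writes $\sum_{e' \in [e]} Q^\omega(e') = \sum_{e' \in [e]} Q_n(e')$, tacitly assuming both that the class $[e]$ computed in $\es{U}^\omega$ lies inside $E_n$ and that it coincides with the class computed in $\es{U}_n$.

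However, your proposed repair of (iii) does not go through. The bound $|[e]| \leq \dim\mathcal{H}$ rests on the claim that the operators attached to the members of one minimal-conflict class are pairwise orthogonal projections; Definition~\ref{def:qes} guarantees no such thing --- $Q$ may assign unitaries to these events, and even when it assigns projections nothing forces orthogonality. That claim is only an intuition realised by the measurement construction of Definition~\ref{def:pes-meas3}. In fact the finiteness of $[e]^{\es{U}^\omega}$ can genuinely fail: take $\mathcal{H} = \mathbb{C}$, let $E_n = \set{e_1, \dots, e_n}$ be a trivially ordered set with all distinct events in (minimal) conflict, and put $Q(e_1) = 1$ and $Q(e_{k+1}) = e^{i(k-1)\pi/3}(e^{i\pi/3}-1)$, which has modulus $1$ and is therefore unitary on $\mathbb{C}$; the partial sums telescope to $e^{i(n-1)\pi/3}$, so every $\es{U}_n$ is a unitary event structure and the sequence is a $\trianglelefteq$-chain, yet in the limit $[e_1]$ is infinite and the sum in condition (iii) does not even converge. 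Separately, your claim that the reverse inclusion $[e]^{\es{U}_N} \subseteq [e]^{\es{U}^\omega}$ is ``forced'' is unjustified: a pair losing minimality at a later stage merely shrinks the class, and the old and new class-sums may both be (different) unitaries without any axiom failing. So the obstacle you identified is real, but it cannot be overcome from the stated hypotheses; the lemma needs an additional assumption (for instance that the classes $[e]$ stabilise along the chain, or an orthogonality requirement on events in minimal conflict) rather than a cleverer proof.
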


\begin{lemma}\label{lem:lub-3}
  Let $\es{U}_1 \trianglelefteq \dots \trianglelefteq \es{U}_n \trianglelefteq \dots$ be a
  $\omega$-chain. Then $\es{U}^\omega$ is its least upper bound.
\end{lemma}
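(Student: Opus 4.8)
The plan is to mirror the proofs of Lemma~\ref{lem:lub-1} and Lemma~\ref{lem:lub-2}, exploiting the fact that the order $\trianglelefteq$ on unitary event structures (Definition~\ref{def:pes-fix-order-3}) is exactly the order on the underlying event structures (Definition~\ref{def:pes-fix-order-1}) together with the extra clause $\forall e \in E_1\, .\, Q_1(e) = Q_2(e)$. Since Lemma~\ref{lem:lub-es-3} already establishes that $\es{U}^\omega$ is a unitary event structure, and Lemma~\ref{lem:lub-1} handles the event-structure part of the order (inclusion of event sets and the biconditionals for $\leq$ and $\#$), it suffices to verify the clause on the operator maps $Q$. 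I would split the argument into the two usual halves, and beforehand record one small observation: $Q^\omega$ is well defined, i.e.\ if $e \in E_n \cap E_m$ then $Q_n(e) = Q_m(e)$, since the chain is $\trianglelefteq$-increasing and hence $\es{U}_{\min(n,m)} \trianglelefteq \es{U}_{\max(n,m)}$ forces agreement of the operator maps on the smaller event set.

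First I would show $\es{U}^\omega$ is an upper bound: for each $n \in \omega$ I must check $\es{U}_n \trianglelefteq \es{U}^\omega$. By Lemma~\ref{lem:lub-1} the conditions on $E_n \subseteq E^\omega$ and on $\leq$, $\#$ hold, so only $\forall e \in E_n\, .\, Q_n(e) = Q^\omega(e)$ remains. This is immediate from Definition~\ref{def:lub-3}: by construction $Q^\omega(e) = Q_m(e)$ for some $m$ with $e \in E_m$, and by the well-definedness observation $Q_n(e) = Q_m(e) = Q^\omega(e)$.

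Then I would show it is the least upper bound: let $\es{U} = \qpes{E}{\leq}{\#}{Q}$ be any upper bound of the chain, so $\es{U}_n \trianglelefteq \es{U}$ for all $n$; I must prove $\es{U}^\omega \trianglelefteq \es{U}$. Lemma~\ref{lem:lub-1} already delivers $E^\omega \subseteq E$ and the biconditionals for $\leq$ and $\#$, so once again only $\forall e \in E^\omega\, .\, Q^\omega(e) = Q(e)$ is left. Given $e \in E^\omega$, pick $n$ with $e \in E_n$ and $Q^\omega(e) = Q_n(e)$ as in Definition~\ref{def:lub-3}; from $\es{U}_n \trianglelefteq \es{U}$ we get $Q_n(e) = Q(e)$, hence $Q^\omega(e) = Q(e)$, as required.

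As for obstacles: there is no real difficulty here — the content is entirely bookkeeping, parallel to the two earlier least-upper-bound lemmas, with the underlying-poset work already outsourced to Lemma~\ref{lem:lub-1}. The only point deserving care is to state the well-definedness of $Q^\omega$ once, cleanly, so that both halves of the argument may invoke it without tacitly assuming it; this is the same subtlety that underlies the antisymmetry case of Lemma~\ref{lem:po-3}. I expect the final write-up to be a few lines per bullet, citing Lemma~\ref{lem:lub-1}, Definition~\ref{def:lub-3}, and Definition~\ref{def:pes-fix-order-3}.
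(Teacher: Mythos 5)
Your proposal is correct and follows essentially the same route as the paper: both defer the event-set, causality, and conflict clauses to Lemma~\ref{lem:lub-1} and then verify only the operator clause $Q_n(e)=Q^\omega(e)$ (resp.\ $Q^\omega(e)=Q(e)$) directly from Definition~\ref{def:lub-3} and the upper-bound hypothesis. Your explicit remark that $Q^\omega$ is well defined on overlaps $E_n\cap E_m$ is a small point the paper leaves implicit, but it does not change the argument.
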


We have that sequential, concurrent, and measurement composition are monotone and continuous with
respect to Definition~\ref{def:pes-fix-order-3}. Recall that the monotone and continuous definitions
are given by Definition~\ref{def:cont-1}.
\begin{lemma}\label{lem:seq-fix-mono-3}
  Let $\es{U}, \es{U}_1, \es{U}_2$ be unitary event structures.  If
  $\es{U}_1 \trianglelefteq \es{U}_2$ then
  $\seq{\es{U}}{\es{U}_1} \trianglelefteq \seq{\es{U}}{\es{U}_2}$.
\end{lemma}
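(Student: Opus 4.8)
The plan is to follow exactly the template already used for the analogous right-monotonicity results in the plain case (Lemma~\ref{lem:seq-fix-mono-1}) and the probabilistic case (Lemma~\ref{lem:seq-fix-mono-2}): first reduce to the underlying event structure via Lemma~\ref{lem:seq-fix-mono-1}, and then verify only the single extra clause that Definition~\ref{def:pes-fix-order-3} adds on top of Definition~\ref{def:pes-fix-order-1}, namely agreement of the operator maps on the smaller event set.

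Concretely, I would write $\es{U} = \qpes{E}{\leq}{\#}{Q}$, $\es{U}_1 = \qpes{E_1}{\leq_1}{\#_1}{Q_1}$, $\es{U}_2 = \qpes{E_2}{\leq_2}{\#_2}{Q_2}$, $\seq{\es{U}}{\es{U}_1} = \qpes{E^1}{\leq^1}{\#^1}{Q^1}$, $\seq{\es{U}}{\es{U}_2} = \qpes{E^2}{\leq^2}{\#^2}{Q^2}$, and assume $\es{U}_1 \trianglelefteq \es{U}_2$. The first three clauses of $\seq{\es{U}}{\es{U}_1} \trianglelefteq \seq{\es{U}}{\es{U}_2}$ — the inclusion $E^1 \subseteq E^2$ and the biconditionals relating $\leq^1,\leq^2$ and $\#^1,\#^2$ — are precisely the statement of Lemma~\ref{lem:seq-fix-mono-1}, since by Definition~\ref{def:pes-seq3} the underlying event structure of $\seq{\es{U}}{\es{U}_i}$ coincides with the sequential composition of Definition~\ref{def:pes-seq1} applied to the underlying structures, and the first three clauses of Definition~\ref{def:pes-fix-order-3} are exactly Definition~\ref{def:pes-fix-order-1}, so $\es{U}_1 \trianglelefteq \es{U}_2$ restricts to $\trianglelefteq$ on the underlying event structures. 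Hence only $\forall e \in E^1 .\ Q^1(e) = Q^2(e)$ remains to be checked.

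For the operator clause, take $e \in E^1 = E \uplus (E_1 \times \confmax{\es{U}})$ and split along the two cases of Definition~\ref{def:pes-seq3}. If $e \in E$, then $Q^1(e) = Q(e) = Q^2(e)$ immediately, since both sequential compositions leave the operator on the left-hand component $\es{U}$ untouched. If $e = (e_1, x)$ with $e_1 \in E_1$ and $x \in \confmax{\es{U}}$, then $Q^1(e) = Q_1(e_1)$; from $\es{U}_1 \trianglelefteq \es{U}_2$ we obtain $e_1 \in E_2$ and $Q_1(e_1) = Q_2(e_1)$, so $(e_1, x) \in E_2 \times \confmax{\es{U}}$, and again by Definition~\ref{def:pes-seq3} we get $Q_2(e_1) = Q^2((e_1,x)) = Q^2(e)$. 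This exhausts all cases, which proves the claim.

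I do not anticipate any real obstacle here; the proof is essentially a case analysis mirroring Lemma~\ref{lem:seq-mono3}. The only point worth stating carefully is that the ``copies'' on the right factor are indexed by $\confmax{\es{U}}$, the maximal configurations of the \emph{common} left-hand structure, which is literally the same set in $\seq{\es{U}}{\es{U}_1}$ and $\seq{\es{U}}{\es{U}_2}$; consequently a pair $(e_1,x)$ lands in the target event set with no reindexing, and since the operator assignment of Definition~\ref{def:pes-seq3} depends only on the first coordinate, the agreement of $Q_1$ and $Q_2$ on $E_1$ transfers verbatim.
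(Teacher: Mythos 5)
Your proposal is correct and follows essentially the same route as the paper's own proof: reduce the first three clauses to Lemma~\ref{lem:seq-fix-mono-1} and then verify $Q^1(e) = Q^2(e)$ by the same two-case split on $e \in E$ versus $e = (e_1,x) \in E_1 \times \confmax{\es{U}}$, using $Q_1(e_1) = Q_2(e_1)$ from the hypothesis. Your added remark that the copies are indexed by the maximal configurations of the common left factor, so no reindexing is needed, is a point the paper leaves implicit but is entirely consistent with its argument.
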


\begin{lemma}\label{lem:conc-fix-mono-3}
  Let $\es{U}_1, \es{U}'_1, \es{U}_2, \es{U}'_2$ be unitary event structures.  If
  $\es{U}_1 \trianglelefteq \es{U}'_1$ and $\es{U}_2 \trianglelefteq \es{U}'_2$ then
  $\conc{\es{U}_1}{\es{U}_2} \trianglelefteq \conc{\es{U}'_1}{\es{U}'_2}$.
\end{lemma}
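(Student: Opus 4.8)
The plan is to follow exactly the template of Lemma~\ref{lem:seq-fix-mono-3}: reduce everything to the condition on the operator assignment by invoking the already–established event‑structure‑level monotonicity of parallel composition. Write $\es{U}_1 = \qpes{E_1}{\leq_1}{\#_1}{Q_1}$, $\es{U}'_1 = \qpes{E'_1}{\leq'_1}{\#'_1}{Q'_1}$, $\es{U}_2 = \qpes{E_2}{\leq_2}{\#_2}{Q_2}$, $\es{U}'_2 = \qpes{E'_2}{\leq'_2}{\#'_2}{Q'_2}$, $\conc{\es{U}_1}{\es{U}_2} = \qpes{E}{\leq}{\#}{Q}$, and $\conc{\es{U}'_1}{\es{U}'_2} = \qpes{E'}{\leq'}{\#'}{Q'}$, under the hypotheses $\es{U}_1 \trianglelefteq \es{U}'_1$ and $\es{U}_2 \trianglelefteq \es{U}'_2$; by Definition~\ref{def:pes-fix-order-3} this gives, in particular, $Q_1(e) = Q'_1(e)$ for every $e \in E_1$ and $Q_2(e) = Q'_2(e)$ for every $e \in E_2$.

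First I would record that both parallel compositions are genuinely unitary event structures: Definition~\ref{def:pes-conc3} carries the side condition $[Q'_1(e'_1), Q'_2(e'_2)] = 0$ for all $e'_1 \in E'_1$, $e'_2 \in E'_2$, which holds by assumption for $\conc{\es{U}'_1}{\es{U}'_2}$; since $E_1 \subseteq E'_1$ and $E_2 \subseteq E'_2$ and the operator maps agree on the smaller sets, the corresponding condition for $\conc{\es{U}_1}{\es{U}_2}$ is a special case of it. Next, the underlying event structure of $\conc{\es{U}_i}{\es{U}_j}$ is the parallel composition of the underlying event structures (compare Definition~\ref{def:pes-conc1} with Definition~\ref{def:pes-conc3}), so Lemma~\ref{lem:conc-fix-mono-1} immediately supplies the first three clauses of $\conc{\es{U}_1}{\es{U}_2} \trianglelefteq \conc{\es{U}'_1}{\es{U}'_2}$, namely $E \subseteq E'$, the biconditional $e \leq e' \Leftrightarrow e,e' \in E \wedge e \leq' e'$, and the biconditional $e \# e' \Leftrightarrow e,e' \in E \wedge e \#' e'$.

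The only remaining obligation is $\forall e \in E\, .\, Q(e) = Q'(e)$. Since $E = E_1 \uplus E_2$ by Definition~\ref{def:pes-conc3}, I would case‑split. If $e \in E_1$, then $Q(e) = Q_1(e)$ by Definition~\ref{def:pes-conc3}, which equals $Q'_1(e)$ by $\es{U}_1 \trianglelefteq \es{U}'_1$, which in turn equals $Q'(e)$ by Definition~\ref{def:pes-conc3} (using $E_1 \subseteq E'_1$). If $e \in E_2$ the argument is symmetric, using $\es{U}_2 \trianglelefteq \es{U}'_2$. This closes the case analysis.

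I do not expect a genuine obstacle: as in the $\sqsubseteq$–analogue Lemma~\ref{lem:conc-mono3}, every step is a direct appeal either to a definition or to Lemma~\ref{lem:conc-fix-mono-1}. The single point worth spelling out rather than leaving to "follows directly" is the well‑definedness remark above — that the commutation side condition of parallel composition for the smaller pair is inherited from the larger pair — since without it the conclusion $\conc{\es{U}_1}{\es{U}_2} \trianglelefteq \conc{\es{U}'_1}{\es{U}'_2}$ would not even be meaningful.
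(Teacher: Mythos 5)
Your proposal is correct and follows essentially the same route as the paper: reduce to the operator condition via Lemma~\ref{lem:conc-fix-mono-1}, then case-split on $e \in E_1$ versus $e \in E_2$ and chain $Q(e) = Q_i(e) = Q'_i(e) = Q'(e)$ through Definition~\ref{def:pes-conc3} and the hypothesis $\es{U}_i \trianglelefteq \es{U}'_i$. Your extra remark that the commutation side condition for $\conc{\es{U}_1}{\es{U}_2}$ is inherited from that of $\conc{\es{U}'_1}{\es{U}'_2}$ is a sensible well-definedness check that the paper leaves implicit, but it does not change the argument.
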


\begin{lemma}\label{lem:meas-fix-mono-3}
  Let $\es{U}_1, \es{U}'_1, \es{U}_2, \es{U}'_2$ be unitary event structures.  If
  $\es{U}_1 \trianglelefteq \es{U}'_1$ and $\es{U}_2 \trianglelefteq \es{U}'_2$ then
  $\meas{n}{\es{U}_1}{\es{U}_2} \trianglelefteq \meas{n}{\es{U}'_1}{\es{U}'_2}$.
\end{lemma}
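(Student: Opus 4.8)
The plan is to unfold both $\meas{n}{\es{U}_1}{\es{U}_2}$ and $\meas{n}{\es{U}'_1}{\es{U}'_2}$ according to Definition~\ref{def:pes-meas3} — writing the former as $\qpes{E}{\leq}{\#}{Q}$ with $E = \set{\tau_0^n,\ \tau_1^n} \uplus E_1 \uplus E_2$ and the latter as $\qpes{E'}{\leq'}{\#'}{Q'}$ with $E' = \set{\tau_0^n,\ \tau_1^n} \uplus E'_1 \uplus E'_2$ — and then verify the four clauses of Definition~\ref{def:pes-fix-order-3}, proceeding exactly as in the proof of Lemma~\ref{lem:meas-mono3} but with the stronger hypotheses $E_1 \subseteq E'_1$, $E_2 \subseteq E'_2$ replacing the quotiented inclusions $\underline{E_i} \subseteq \underline{E'_i}$, and with the extra quantum clause handled verbatim.

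First I would dispatch $E \subseteq E'$: the hypotheses $\es{U}_1 \trianglelefteq \es{U}'_1$ and $\es{U}_2 \trianglelefteq \es{U}'_2$ give $E_1 \subseteq E'_1$ and $E_2 \subseteq E'_2$, and the two fresh events $\tau_0^n,\tau_1^n$ are common to $E$ and $E'$, so the inclusion is immediate. For the causal-order clause I would case-split on the form of a pair $e \leq e'$ dictated by Definition~\ref{def:pes-meas3}: either it is $\tau_0^n \leq e$ with $e \in E_1$, or $\tau_1^n \leq e$ with $e \in E_2$, or it lies in $\leq_1$, or in $\leq_2$. In the two $\tau$-cases membership and the relation transfer to $E'$ (and back) directly, since $\tau_0^n,\tau_1^n \in E'$ and $E_i \subseteq E'_i$; in the $\leq_i$-cases the biconditional $e \leq_i e' \Leftrightarrow e,e' \in E_i \wedge e \leq'_i e'$ supplied by $\es{U}_i \trianglelefteq \es{U}'_i$ closes both directions at once. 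The conflict clause is handled in the same way, reading $\#$ as the cross conflict $\set{e \# e' \mid (e = \tau_0^n \vee e \in E_1),\ (e' = \tau_1^n \vee e' \in E_2)}$ together with $\#_1 \uplus \#_2$: the cross part transfers because the $\tau$-events are shared and $E_i \subseteq E'_i$, and the $\#_i$ parts transfer via the hypotheses.

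Finally, for the operator clause $\forall e \in E.\ Q(e) = Q'(e)$ I would case-split on $e$: if $e = \tau_0^n$ then $Q(e) = P_0^n = Q'(e)$, and likewise $Q(\tau_1^n) = P_1^n = Q'(\tau_1^n)$ by Definition~\ref{def:pes-meas3}; if $e \in E_1$ then $Q(e) = Q_1(e)$, which equals $Q'_1(e)$ since $\es{U}_1 \trianglelefteq \es{U}'_1$, and $Q'_1(e) = Q'(e)$ again by Definition~\ref{def:pes-meas3}; the case $e \in E_2$ is symmetric, using $\es{U}_2 \trianglelefteq \es{U}'_2$. There is essentially no hard step: the proof is pure bookkeeping over the four-way disjoint structure of $E$, and the only point needing a moment's care is the $\Leftarrow$ directions of the order and conflict biconditionals, where one must recall that a relation holding in the larger structure together with both endpoints lying in the smaller structure forces the relation in the smaller structure — which is precisely what $\trianglelefteq$ guarantees for the $\leq_i,\#_i$ components and what sharing $\tau_0^n,\tau_1^n$ guarantees for the remaining components. (Alternatively one could invoke the remark following Definition~\ref{def:pes-meas3} to present $\meas{n}{-}{-}$ as $\nd{\seq{\es{P}_0^n}{-}}{\seq{\es{P}_1^n}{-}}$ and combine Lemma~\ref{lem:seq-fix-mono-3} with monotonicity of non-deterministic composition, but the direct verification is shorter since no $\nd$-monotonicity lemma has been recorded for unitary event structures in this section.)
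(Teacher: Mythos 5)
Your proof is correct and follows essentially the same route as the paper's: the paper also reduces the claim to the four clauses of the order, dismisses the first three as following directly from Definition~\ref{def:pes-meas3}, and proves the operator clause by exactly your four-way case split on $e \in \set{\tau_0^n, \tau_1^n} \uplus E_1 \uplus E_2$. You merely spell out in more detail the bookkeeping for the inclusion, order, and conflict clauses that the paper leaves implicit.
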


\begin{lemma}\label{lem:seq-cont-3}
  $\bigsqcup_m(\seq{\es{U}}{\es{U}_m}) = \seq{\es{U}}{\bigsqcup_m \es{U}_m}$.
\end{lemma}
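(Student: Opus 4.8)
The plan is to follow the same strategy used for Lemma~\ref{lem:seq-cont-1} and Lemma~\ref{lem:seq-cont-2}, relying on the unitary analogue of Lemma~\ref{lem:cont-1} (the one just announced to be ``similar''). That lemma reduces continuity of a unary operator to two facts: monotonicity with respect to $\trianglelefteq$, and that every event of $\seq{\es{U}}{\bigsqcup_m \es{U}_m}$ is already an event of $\bigsqcup_m(\seq{\es{U}}{\es{U}_m})$.

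First I would invoke Lemma~\ref{lem:seq-fix-mono-3}, which gives right-monotonicity of sequential composition with respect to $\trianglelefteq$ on unitary event structures; since the statement of Lemma~\ref{lem:seq-cont-3} varies only the right-hand argument, this is exactly the monotonicity hypothesis required. Second I would verify the event-containment condition, which is the same case analysis carried out in Lemma~\ref{lem:seq-cont-1}: by Definition~\ref{def:pes-seq3} an event of $\seq{\es{U}}{\bigsqcup_m \es{U}_m}$ is either an event $e$ of $\es{U}$, in which case it is an event of every $\seq{\es{U}}{\es{U}_m}$ and hence of their least upper bound; or it is of the form $(e_m, x)$ with $e_m$ an event of $\bigsqcup_m \es{U}_m$ and $x \in \confmax{\es{U}}$, and then there is some $m$ with $e_m$ an event of $\es{U}_m$, so $(e_m, x)$ is an event of $\seq{\es{U}}{\es{U}_m}$ and therefore of $\bigsqcup_m(\seq{\es{U}}{\es{U}_m})$.

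The only genuinely new ingredient beyond Lemma~\ref{lem:seq-cont-1} is that $\trianglelefteq$-equality in the unitary setting also demands agreement of the operator maps, and I would note that this is automatic. On a shared event $e$ of $\es{U}$ both structures assign $Q(e)$ by Definition~\ref{def:pes-seq3}; on a copy $(e_m, x)$ the operator assigned by $\seq{\es{U}}{\bigsqcup_m \es{U}_m}$ is the operator of $\bigsqcup_m \es{U}_m$ on $e_m$, which by Definition~\ref{def:lub-3} equals $Q_m(e_m)$ for a witnessing index $m$, and this is precisely what $\seq{\es{U}}{\es{U}_m}$ assigns to $(e_m, x)$; consistency across different witnesses is guaranteed by the clause $Q_1(e) = Q_2(e)$ of Definition~\ref{def:pes-fix-order-3}. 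With monotonicity and event-containment in hand, the unitary version of Lemma~\ref{lem:cont-1} yields the equality $\bigsqcup_m(\seq{\es{U}}{\es{U}_m}) = \seq{\es{U}}{\bigsqcup_m \es{U}_m}$.

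I do not expect a substantive obstacle: the structural part of the argument is a verbatim transfer of the earlier proofs, and the only point needing care is the bookkeeping of the operator assignments described above, which is where the quantum structure actually enters the picture.
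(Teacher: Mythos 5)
Your proposal is correct and follows exactly the route the paper intends: the paper's proof of this lemma is literally ``Similar to Lemma~\ref{lem:seq-cont-1}'', i.e.\ right-monotonicity from Lemma~\ref{lem:seq-fix-mono-3} plus the event-containment check via the unitary analogue of Lemma~\ref{lem:cont-1}. Your additional remark about the agreement of the operator maps on shared events and on copies $(e_m,x)$ is the only genuinely new bookkeeping in the unitary setting, and you handle it correctly.
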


\begin{lemma}\label{lem:conc-cont-3}
  $\bigsqcup_{n,m}(\conc{\es{U}_n}{\es{U}_m}) = \conc{\bigsqcup_n \es{U}_n}{\bigsqcup_m \es{U}_m}$.
\end{lemma}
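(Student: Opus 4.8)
The plan is to mirror the proofs of Lemma~\ref{lem:conc-cont-1} and Lemma~\ref{lem:conc-cont-2}: reduce continuity of parallel composition to the two hypotheses of Lemma~\ref{lem:cont-1} (monotonicity, plus an event-level inclusion), discharge the monotonicity part by citing Lemma~\ref{lem:conc-fix-mono-3}, and recycle the event-level argument from Lemma~\ref{lem:conc-cont-1} almost verbatim, since the underlying event structure of $\conc{\es{U}_n}{\es{U}_m}$ is exactly the event-structure parallel composition of the underlying event structures. The only genuinely new content, compared with the earlier sections, is checking that the operator map $Q$ imposes no extra proof obligation.

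First I would fix two $\omega$-chains, written for clarity $\es{U}_1 \trianglelefteq \es{U}_2 \trianglelefteq \cdots$ and $\es{U}'_1 \trianglelefteq \es{U}'_2 \trianglelefteq \cdots$, with least upper bounds $\bigsqcup_n \es{U}_n$ and $\bigsqcup_m \es{U}'_m$ (Lemma~\ref{lem:lub-3}). By Lemma~\ref{lem:conc-fix-mono-3} the family $\{\conc{\es{U}_n}{\es{U}'_m}\}_{n,m}$ is directed under $\trianglelefteq$ (its diagonal $\conc{\es{U}_n}{\es{U}'_n}$ is an $\omega$-chain cofinal in it), so $\bigsqcup_{n,m}(\conc{\es{U}_n}{\es{U}'_m})$ exists, just as in Lemma~\ref{lem:conc-cont-1}. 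Monotonicity of $\conc{\blank}{\blank}$ in both arguments (Lemma~\ref{lem:conc-fix-mono-3}) then gives one direction immediately, namely $\bigsqcup_{n,m}(\conc{\es{U}_n}{\es{U}'_m}) \trianglelefteq \conc{\bigsqcup_n \es{U}_n}{\bigsqcup_m \es{U}'_m}$.

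Next I would establish the reverse event inclusion. Let $e$ be an event of $\conc{\bigsqcup_n \es{U}_n}{\bigsqcup_m \es{U}'_m}$. By Definition~\ref{def:pes-conc3}, either $e$ is an event of $\bigsqcup_n \es{U}_n$ — hence, by Definition~\ref{def:lub-3}, an event of some $\es{U}_n$, hence an event of $\conc{\es{U}_n}{\es{U}'_m}$ for every $m$, hence an event of $\bigsqcup_{n,m}(\conc{\es{U}_n}{\es{U}'_m})$ — or symmetrically $e$ is an event of $\bigsqcup_m \es{U}'_m$. This is precisely the case split already carried out in Lemma~\ref{lem:conc-cont-1}, so I would simply invoke it. Applying Lemma~\ref{lem:cont-1} (in the binary form used throughout this section) then upgrades the $\trianglelefteq$-comparison to equality: monotonicity supplies one direction, the event inclusion forces the two event sets to coincide, and Definition~\ref{def:pes-fix-order-3} then forces the causal, conflict, and $Q$ components to coincide as well once the event sets do.

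The main obstacle — such as it is — is entirely cosmetic and lives in that last clause: one must confirm that no separate reasoning about the operator maps is required. But $Q$ on a unitary parallel composition is, by Definition~\ref{def:pes-conc3}, just the copairing of the component maps, and $\trianglelefteq$ on unitary event structures already demands $Q$-agreement on shared events (Definition~\ref{def:pes-fix-order-3}); so once the two event sets have been shown equal, agreement of the $Q$ maps is automatic from the $\trianglelefteq$-comparison we already have, exactly as the valuation agreement was automatic in Lemma~\ref{lem:conc-cont-2}. Hence the quantum layer contributes nothing beyond a one-line remark, and the proof is as short as its predecessors: \emph{``By Lemma~\ref{lem:conc-fix-mono-3}, parallel composition is monotone w.r.t.\ $\trianglelefteq$; showing that each event of $\conc{\bigsqcup_n \es{U}_n}{\bigsqcup_m \es{U}'_m}$ is an event of $\bigsqcup_{n,m}(\conc{\es{U}_n}{\es{U}'_m})$ is already done in Lemma~\ref{lem:conc-cont-1}, and the $Q$ maps are inherited; by Lemma~\ref{lem:cont-1} we are done.''}
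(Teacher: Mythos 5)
Your proposal is correct and follows essentially the same route as the paper, whose entire proof is ``Similar to Lemma~\ref{lem:conc-cont-1}'': monotonicity from Lemma~\ref{lem:conc-fix-mono-3}, the event-level inclusion recycled from Lemma~\ref{lem:conc-cont-1}, and Lemma~\ref{lem:cont-1} to conclude. Your additional observation that the $Q$-maps impose no extra obligation (since $\trianglelefteq$ already forces $Q$-agreement once the event sets coincide) is the right justification for why the classical argument transfers, and is in fact more explicit than what the paper records.
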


\begin{lemma}\label{lem:meas-cont-3}
  $\bigsqcup_{n,m}(\meas{q}{\es{U}_n}{\es{U}_m}) = \meas{q}{\bigsqcup_n \es{U}_n}{\bigsqcup_m \es{U}_m}$.
\end{lemma}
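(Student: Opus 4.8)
The plan is to reuse the continuity criterion of Lemma~\ref{lem:cont-1}, exactly in the style of the proofs of Lemma~\ref{lem:seq-cont-3} and Lemma~\ref{lem:conc-cont-3}. Concretely, I would split the argument into the two clauses of that lemma: monotonicity of the measurement composition with respect to $\trianglelefteq$, and the statement that every event of $\meas{q}{\bigsqcup_n \es{U}_n}{\bigsqcup_m \es{U}_m}$ is already an event of $\bigsqcup_{n,m}(\meas{q}{\es{U}_n}{\es{U}_m})$.

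For the first clause there is nothing new to do: Lemma~\ref{lem:meas-fix-mono-3} already establishes that measurement composition is monotone in both arguments with respect to $\trianglelefteq$. For the second clause I would fix two $\omega$-chains $\es{U}_1 \trianglelefteq \es{U}_2 \trianglelefteq \dots$ and $\es{U}'_1 \trianglelefteq \es{U}'_2 \trianglelefteq \dots$ with least upper bounds $\bigsqcup_n \es{U}_n$ and $\bigsqcup_m \es{U}_m$, take an arbitrary event $e$ of $\meas{q}{\bigsqcup_n \es{U}_n}{\bigsqcup_m \es{U}_m}$, and consult Definition~\ref{def:pes-meas3}, whose event set is $\set{\tau_0^q, \tau_1^q} \uplus (\bigcup_n E_n) \uplus (\bigcup_m E'_m)$. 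This gives three cases. If $e = \tau_0^q$ or $e = \tau_1^q$, then by Definition~\ref{def:pes-meas3} the event $e$ belongs to $\meas{q}{\es{U}_n}{\es{U}_m}$ for every $n,m$, hence to the least upper bound. If $e$ is an event of $\bigsqcup_n \es{U}_n$, then by Definition~\ref{def:lub-3} there is some $n$ with $e$ an event of $\es{U}_n$, and Definition~\ref{def:pes-meas3} then places $e$ in $\meas{q}{\es{U}_n}{\es{U}_m}$ for any $m$, hence in $\bigsqcup_{n,m}(\meas{q}{\es{U}_n}{\es{U}_m})$; the case where $e$ is an event of $\bigsqcup_m \es{U}_m$ is symmetric. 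Invoking Lemma~\ref{lem:cont-1} then yields the claimed equality (which, by Definition~\ref{def:pes-fix-order-3}, also records that the $Q$-maps of the two sides coincide).

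The only point requiring a little care, and it is not really an obstacle, is the bookkeeping of the two fresh events $\tau_0^q$ and $\tau_1^q$: unlike the events contributed by the chains in the sequential and parallel cases, these are introduced anew by the construction, so one must note that they occur at every stage of the chain $\meas{q}{\es{U}_1}{\es{U}'_1} \trianglelefteq \meas{q}{\es{U}_2}{\es{U}'_2} \trianglelefteq \dots$, are mapped to $P_0^q$ and $P_1^q$ there independently of the arguments (so the side condition $Q(\tau_0^q) + Q(\tau_1^q) = Id$ of Definition~\ref{def:pes-meas3} is automatically met at each level), and therefore contribute consistently to the least upper bound. Since $P_0^q$ and $P_1^q$ are fixed, the agreement of the operator maps demanded by Definition~\ref{def:pes-fix-order-3} is immediate, and the rest of the proof is a direct transcription of the earlier continuity arguments.
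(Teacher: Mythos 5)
Your proposal is correct and follows essentially the same route as the paper's own proof: monotonicity is discharged by Lemma~\ref{lem:meas-fix-mono-3}, and the remaining clause of Lemma~\ref{lem:cont-1} is verified by the same case split on whether the event is one of $\tau_0^q,\tau_1^q$ or comes from one of the two chains, using Definition~\ref{def:lub-3} to locate it at some finite stage. The extra remark about the fresh $\tau$-events is a harmless elaboration of what the paper dispatches with ``follows directly from Definition~\ref{def:pes-meas3}.''
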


When adapting Lemma~\ref{lem:cont-1} (helps to show that operators defined in event structures are
continuous) and Lemma~\ref{lem:fix-prop-1} (a version of the Kleene fixed-point theorem for the case
of event structures) to the quantum realm, we notice that the proofs we need to do are analogous to
the ones done in the non-deterministic case. Hence, we omit their formulation here.

We now extend Definition~\ref{def:den-sem3} with the while command and let $\mathbb{U}$ denote the
class of unitary event structures.
\begin{definition}\label{def:den-fix-sem3}
  We interpret commands as unitary event structures as follows:
  \centering
  \begin{align*}
    & \mf{\com{skip}} = (\set{sk}, \set{sk \leq sk}, \emptyset, Q(sk) = Id) \\
    & \mf{\com{U}_{\vec{n}}} = (\set{U_{\vec{n}}}, \set{U_{\vec{n}} \leq U_{\vec{n}}}, \emptyset,
      Q(U_{\vec{n}}) = U(\vec{n})) \\
    & \mf{\meas{n}{\com{C}_1}{\com{C}_2}} =
      \seq{\es{P}_0^n}{\mf{\com{C}_1}} + \seq{\es{P}_1^n}{\mf{\com{C}_2}} \\
    & \mf{\seq{\com{C_1}}{\com{C_2}}} = \seq{\mf{\com{C_1}}}{\mf{\com{C_2}}} \\
    & \mf{\conc{\com{C_1}}{\com{C_2}}} = \conc{\mf{\com{C_1}}}{\mf{\com{C_2}}} \\
    & \mf{\qwhi{n}{\com{C}}} = fix(\Gamma^n)
  \end{align*}
  where $\Gamma^n : \es{U} \rightarrow \es{U}$ is given by
  $\Gamma^n(\es{U}) = \es{P}_0^n + \seq{\es{P}_1^{n}}{\es{U}}$.
\end{definition}

Furthermore, note that
$\mf{\qwhi{n}{\com{C}}} = \meas{n}{\checkmark}{\mf{\seq{\com{C}}{\qwhi{n}{\com{C}}}}} = \es{P}_0^n +
\seq{\es{P}_1^{n}}{\mf{\seq{\com{C}}{\qwhi{n}{\com{C}}}}}$ and $\bot = \mf{\checkmark}$. These facts
will be useful when showing the equivalence between the semantics.

We note that $\Gamma^n$ is continuous because it is composed of continuous functions.

To show the equivalence between the operational and the denotational semantics, we reuse what was
done in Section~\ref{subsec:results-3}. The only lemmas in which we need to add the proof for the
recursion case are the following:
\begin{lemma}[Soundness I]\label{res:soundI-fix-3}
  If $\com{C} \xrightarrow{l} \com{C'}$ then $\mf{\com{C'}} \equiv \mf{\com{C}}\backslash l$.
\end{lemma}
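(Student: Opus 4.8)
The plan is to prove this exactly as Lemma~\ref{res:soundI-3}, by induction on the derivation of $\com{C} \xrightarrow{l} \com{C'}$ using the rules of Figure~\ref{fig:op-small3}. Every case except the two new ones for $\qwhi{n}{\com{C}}$ is word-for-word the corresponding case of Lemma~\ref{res:soundI-3}, relying on Lemma~\ref{lem:seq-mono3}, Lemma~\ref{lem:conc-mono3}, Lemma~\ref{lem:meas-mono3}, Lemma~\ref{lem:seq-rem-init3}, Lemma~\ref{lem:conc-rem-init3}, Lemma~\ref{lem:meas-rem-init3} and Definition~\ref{def:den-fix-sem3}; I would not repeat them. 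So I only add the two while cases, and for both I first rewrite the denotational semantics of the loop using the identity recorded right after Definition~\ref{def:den-fix-sem3}, namely $\mf{\qwhi{n}{\com{C}}} = \meas{n}{\mf{\checkmark}}{\mf{\seq{\com{C}}{\qwhi{n}{\com{C}}}}}$, which is justified by the fixpoint property (Lemma~\ref{lem:fix-prop-1}) together with continuity of $\Gamma^n$ (itself a composition of the continuous operators of Lemma~\ref{lem:seq-cont-3}, Lemma~\ref{lem:conc-cont-3} and Lemma~\ref{lem:meas-cont-3}) and the fact $\mf{\checkmark} = \bot$.

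For the rule $\qwhi{n}{\com{C}} \xrightarrow{\tau_0^n} \checkmark$ I would apply Lemma~\ref{lem:meas-rem-init3} in the sub-case $l = \tau_0^n$ to the rewritten form, obtaining $\mf{\qwhi{n}{\com{C}}} \backslash \tau_0^n \equiv \mf{\checkmark}$; since here $\com{C}' = \checkmark$ this is the required $\mf{\com{C}'} \equiv \mf{\com{C}} \backslash l$. For the rule $\qwhi{n}{\com{C}} \xrightarrow{\tau_1^n} \seq{\com{C}}{\qwhi{n}{\com{C}}}$ I would apply Lemma~\ref{lem:meas-rem-init3} in the sub-case $l = \tau_1^n$, obtaining $\mf{\qwhi{n}{\com{C}}} \backslash \tau_1^n \equiv \mf{\seq{\com{C}}{\qwhi{n}{\com{C}}}}$, which by Definition~\ref{def:den-fix-sem3} equals $\mf{\com{C}'}$ for $\com{C}' = \seq{\com{C}}{\qwhi{n}{\com{C}}}$. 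Throughout I also use that $\equiv$ is a congruence for the unitary-event-structure constructions (a consequence of the monotonicity lemmas above) to push the equivalence through the unfolded form.

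I expect the only delicate point to be the unfolding step: one must be sure that $\mf{\qwhi{n}{\com{C}}}$ really can be presented as a measurement $\meas{n}{\mf{\checkmark}}{\mf{\seq{\com{C}}{\qwhi{n}{\com{C}}}}}$ whose initial events are $\tau_0^n$ and $\tau_1^n$, so that Lemma~\ref{lem:meas-rem-init3} applies verbatim; this uses $\seq{\es{P}_0^n}{\mf{\checkmark}} = \es{P}_0^n$ (sequential composition with the empty structure leaves the left factor unchanged) and the fixpoint equation. Once that identification is in hand, the two new cases are immediate, mirroring how Lemma~\ref{res:soundI-3} dispatches the $\meas{n}{\com{C}_1}{\com{C}_2}$ rules.
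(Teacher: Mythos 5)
Your proposal matches the paper's proof: the paper likewise treats only the two while-loop rules (reusing Lemma~\ref{res:soundI-3} for the rest), rewrites $\mf{\qwhi{n}{\com{C}}}$ as $\mf{\meas{n}{\checkmark}{\seq{\com{C}}{\qwhi{n}{\com{C}}}}}$ via the fixpoint identity recorded after Definition~\ref{def:den-fix-sem3}, and then applies Lemma~\ref{lem:meas-rem-init3} to each of the two transitions. Your extra care in justifying the unfolding step (continuity of $\Gamma^n$, $\bot = \mf{\checkmark}$, and $\seq{\es{P}_0^n}{\mf{\checkmark}} = \es{P}_0^n$) is consistent with, and slightly more explicit than, what the paper states.
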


    


\begin{lemma}[Adequacy I]\label{res:adI-fix-3}
  Let $l \in \init{\mf{\com{C}}}$. Then $\exists \com{C'} \in (\com{C} \cup \{\checkmark\})$ s.t
  $\com{C} \xrightarrow{l} \com{C'}$ and $\mf{\com{C}} \backslash l \equiv \mf{\com{C'}}$.
\end{lemma}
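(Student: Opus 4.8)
The plan is to proceed exactly as in Lemma~\ref{res:adI-3}, namely by induction on the structure of the command $\com{C}$. All of the base cases ($\com{skip}$, $U(\vec{n})$) and the inductive cases for measurement, sequential composition, and parallel composition are literally the same as in Lemma~\ref{res:adI-3} (with $\xrightarrow{l}$ the relation of Figure~\ref{fig:op-small3} and $\mf{-}$ now reading as in Definition~\ref{def:den-fix-sem3}), so the only genuinely new case to supply is $\com{C} = \qwhi{n}{\com{C}_0}$.

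For that case, the key move is to replace the fixpoint presentation $\mf{\qwhi{n}{\com{C}_0}} = fix(\Gamma^n)$ by the unfolded form recorded right after Definition~\ref{def:den-fix-sem3}, namely
\[
  \mf{\qwhi{n}{\com{C}_0}} = \mf{\meas{n}{\checkmark}{\seq{\com{C}_0}{\qwhi{n}{\com{C}_0}}}}
  = \seq{\es{P}_0^n}{\mf{\checkmark}} + \seq{\es{P}_1^n}{\mf{\seq{\com{C}_0}{\qwhi{n}{\com{C}_0}}}}.
\]
From Definition~\ref{def:pes-meas3} this immediately gives $\init{\mf{\qwhi{n}{\com{C}_0}}} = \set{\tau_0^n,\, \tau_1^n}$, so the hypothesis $l \in \init{\mf{\qwhi{n}{\com{C}_0}}}$ forces $l = \tau_0^n$ or $l = \tau_1^n$. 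If $l = \tau_0^n$, take $\com{C}' = \checkmark$: the operational rule yields $\qwhi{n}{\com{C}_0} \xrightarrow{\tau_0^n} \checkmark$, and Lemma~\ref{lem:meas-rem-init3} (applied to $\meas{n}{\checkmark}{\seq{\com{C}_0}{\qwhi{n}{\com{C}_0}}}$, with the left branch being $\mf{\checkmark} = \bot$) gives $\mf{\qwhi{n}{\com{C}_0}} \backslash \tau_0^n \equiv \mf{\checkmark}$. If $l = \tau_1^n$, take $\com{C}' = \seq{\com{C}_0}{\qwhi{n}{\com{C}_0}}$: the operational rule yields $\qwhi{n}{\com{C}_0} \xrightarrow{\tau_1^n} \seq{\com{C}_0}{\qwhi{n}{\com{C}_0}}$, and Lemma~\ref{lem:meas-rem-init3} gives $\mf{\qwhi{n}{\com{C}_0}} \backslash \tau_1^n \equiv \mf{\seq{\com{C}_0}{\qwhi{n}{\com{C}_0}}}$. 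In both subcases $\com{C}' \in (\com{C} \cup \set{\checkmark})$ and $\mf{\com{C}} \backslash l \equiv \mf{\com{C}'}$, as required; note that no appeal to the induction hypothesis is even needed here, since the while loop is built directly from a measurement.

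The main obstacle is purely in justifying the unfolding identity for $\mf{\qwhi{n}{\com{C}_0}}$: it rests on $\Gamma^n$ being continuous (so that $fix(\Gamma^n)$ exists as the least upper bound of the $\trianglelefteq$-chain $\bot \trianglelefteq \Gamma^n(\bot) \trianglelefteq \dots$, using Lemmas~\ref{lem:po-3}, \ref{lem:po-least-elem-3}, \ref{lem:lub-3}) together with the fixpoint property $\Gamma^n(fix(\Gamma^n)) = fix(\Gamma^n)$ from Lemma~\ref{lem:fix-prop-1} and the definition $\Gamma^n(\es{U}) = \es{P}_0^n + \seq{\es{P}_1^n}{\es{U}}$. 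As the text already asserts this identity and the continuity of $\Gamma^n$ (being a composition of the continuous operations of Lemmas~\ref{lem:seq-cont-3} and \ref{lem:meas-cont-3}), the remainder of the argument is the routine case split above, and the proof closes by invoking the inductive hypothesis for the already-treated constructors exactly as in Lemma~\ref{res:adI-3}.
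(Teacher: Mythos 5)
Your proposal matches the paper's own proof: the paper likewise handles only the new $\qwhi{n}{\com{C}}$ case, unfolds $\mf{\qwhi{n}{\com{C}}}$ as $\mf{\meas{n}{\checkmark}{\seq{\com{C}}{\qwhi{n}{\com{C}}}}}$ to conclude $l \in \set{\tau_0^n, \tau_1^n}$, and then applies Lemma~\ref{lem:meas-rem-init3} in each subcase with the corresponding operational rule, deferring all other constructors to Lemma~\ref{res:adI-3}. Your added justification of the unfolding via continuity of $\Gamma^n$ and Lemma~\ref{lem:fix-prop-1} is a welcome elaboration of what the paper only asserts (and you correctly note that the $\tau_1^n$ case yields $\mf{\seq{\com{C}}{\qwhi{n}{\com{C}}}}$, where the paper's text has a small slip writing $\mf{\checkmark}$).
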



      


Similarly to what was done in Section~\ref{subsec:results-3}, we can consider the equivalence
between semantics with an initial state. However, doing it is very similar to what we already have,
hence we postpone it.

\begin{example}
  The unitary event structure in Example~\ref{ex:qes-1} corresponds to the interpretation of the
  command in Example~\ref{ex:small-3}.

  To see the equivalence between both semantics, recall the maximal configurations in
  Example~\ref{ex:qes-1} and the words used in Example~\ref{ex:small-3}.  It is trivial to see that
  for each word we have a corresponding covering chain, and vice-versa.

  It lacks to verify the probability when an initial state is given.  Consider that the initial
  state is $\rho = \ket{0}\bra{0}$. Applying the word $H(n) P_0^n X(n)$ to $\rho$ yields a
  probability of $0.5$, which matches the probability of the respective covering chain.  Similarly,
  when we apply the word $H(n) P_1^n Z(n)$ to $\rho$, we obtain a probability of $0.5$, once again
  matching the probability of the respective covering chain.

  Conversely, if we obtained the probability from the trace of $A_x \rho$, where $x$ is a
  configuration from a covering chain, we observe that applying the respective word to $\rho$ gives
  the same probability. Concretely, the covering chain of $\set{H_1, \tau_0^1, X_1}$ is
  $\emptyset \cchain{H_1} \set{H_1} \cchain{\tau_0^1} \set{H_1, \tau_0^1} \cchain{X_1} \set{H_1,
    \tau_0^1, X_1}$. The associated operator $A_x$ is $X(1) P_0^1 H(1)$. By applying $A_x$ to $\rho$
  we obtain the state $\ket{1}\bra{1}$ with probability $0.5$, which corresponds to the probability
  of applying the respective word to $\rho$.

  


\end{example}

\begin{example}
  Figure~\ref{fig:ex3-3} shows the event structure corresponding to the interpretation of
  $\mf{\seq{\com{H}(n)}{\meas{n}{\com{skip}}{\com{X}(n)}}}$. The set of configurations is
  $\set{\emptyset, \set{H_n}, \set{H_n, \tau_0^n}, \set{H_n, \tau_1^n}, \set{H_n, \tau_0^n, sk},
    \set{H_n, \tau_1^n, X_n}}$.
  
  To see the equivalence between both semantics through an example, we first derive the words that
  can be formed by the n-step in Example~\ref{ex:small-3}: $H(n)$, $H(n) P_0^n$, $H(n) P_1^n$,
  $H(n) P_0^n sk$ and $H(n) P_1^n X(n)$.

  Each word corresponds to a covering chain, which represents a configuration. For example the words
  $H(n) P_0^n sk$ and $H(n) P_1^n X(n)$ correspond to the covering chains
  $\emptyset \cchain{H_n} \set{H_n} \cchain{\tau_0^n} \set{H_n, \tau_0^n} \cchain{sk} \set{H_n,
    \tau_0^n, sk} = x_1$ and
  $\emptyset \cchain{H_n} \set{H_n} \cchain{\tau_1^n} \set{H_n, \tau_1^n} \cchain{X_n} \set{H_n,
    \tau_1^n, X_n} = x_2$, respectively.

  Furthermore, given as initial state $\rho = \ket{0}\bra{0}$, we have the following probabilities:
  $v(x_1) = 0.5$ and $v(x_2) = 0.5$, which correspond to the probabilities obtained by respectively
  applying the words $H(n) P_0^n sk$ and $H(n) P_1^n X(n)$ to the same state, as shown in
  Example~\ref{ex:small-3}.
  \begin{figure}[ht!]
    \centering
    \begin{minipage}{0.3\textwidth}
      \begin{tikzpicture}
        \begin{pgfonlayer}{nodelayer}
          \node [style=event] (0) at (0, 0) {$H_n$};
          \node [style=event] (1) at (-0.75, -1) {$\tau_0^n$};
          \node [style=event] (2) at (0.75, -1) {$\tau_1^n$};
          \node [style=event] (3) at (-0.75, -2) {$sk$};
          \node [style=event] (4) at (0.75, -2) {$X_n$};
        \end{pgfonlayer}
        \begin{pgfonlayer}{edgelayer}
          \draw [style=arrow] (0) to (1);
          \draw [style=arrow] (1) to (3);
          \draw [style=arrow] (0) to (2);
          \draw [style=arrow] (2) to (4);
          \draw [style=wiggle] (1) to (2);
        \end{pgfonlayer}
      \end{tikzpicture}
    \end{minipage}
    \begin{minipage}{0.3\textwidth}
      \begin{align*}
        & Q(H_n) = H(n),\\
        & Q(\tau_0^1) = P_0^n,\ 
          Q(\tau_1^1) = P_1^n,\\
        & Q(sk) = Id(n),\
          Q(X_n) = X(n)
      \end{align*}
    \end{minipage}
    \caption{Event structure of $\mf{\seq{\com{H}(n)}{\meas{n}{\com{skip}}{\com{X}(n)}}}$}
    \label{fig:ex3-3}
  \end{figure}
\end{example}

\section{Related Work}\label{sec:rel-work}
Most work on event structures extend them to different computational effects and when they give
denotational semantics for a language, most of the languages include notions of communication, which
are absent in the languages we consider.

In the classical setting, Winskel used event structures to give denotational semantics to
CCS~\cite{winskel82,winskel88}. In the probabilistic setting, Varacca and Yoshida used a
probabilistic version of event structures~\cite{varacca06} to interpret a probabilistic
$\pi$-calculus~\cite{varacca06a}. Marc de Visme later adapted Winskel's probabilistic event
structures~\cite{winskel14}, equivalent to Varacca's definition, to furnish a probabilistic
CCS~\cite{baier97} with a denotational semantics. In the quantum setting event structures have only
been used as the backbone for game semantics~\cite{clairambault19}.

A closer approach to ours is found in Castellan’s work~\cite{castellan16}, where event structures
interpret a simple imperative and concurrent language in the context of weak memory models. His goal
was to capture execution paths generated by compilers during code optimization, missed by
interleaving semantics. Interestingly, his definition of sequential and parallel composition are
similar to ours.

\section{Conclusion}

In this paper, we discussed how Winskel's event structures can be tamed as a model of computation
for representing sequences of actions, with causal and conflicting relationships, and even refined
Winskel's notion of quantum event structure to better match the probabilistic ones. We show how
Winskel's event structures support non-deterministic, probabilistic and quantum effects.

\newpage
\bibliographystyle{alphaurl}
\bibliography{biblio}

\newpage
\appendix
\section*{Proofs of Section~\ref{sec:es}}\label{app-chap:proofs-es}
\subsubsection*{Proof of Lemma~\ref{lem:seq-es1}}
\begin{proof}
  Let 
  \begin{align*}
    & \es{E}_1 = \pes{E_1}{\leq_1}{\#_1} \\
    & \es{E}_2 = \pes{E_2}{\leq_2}{\#_2} \\
    & \seq{\es{E}_1}{\es{E}_2} = \pes{E}{\leq}{\#}
  \end{align*}

  We need to show that $\leq$ is a partial order and that $\#$ is symmetric and irreflexive.
  \begin{itemize}
  \item $\leq$ is a partial order:
    \begin{itemize}
    \item Reflexivity ($e \leq e$): we have two cases
      \begin{enumerate}
      \item Case $e \in E_1$. This entails $e \leq_1 e$. Since $\leq_1$ is a partial order we are
        done.
      \item Case $(e,x) \in E_2 \times \confmax{\es{E}_1}$. This entails $e \leq_2 e$. Since
        $\leq_2$ is a partial order we are done.
      \end{enumerate}
    \item Transitivity ($e \leq e'$ and $e' \leq e''$ then $e \leq e''$): if $e, e', e'' \in E_1$ or
      $e, e', e'' \in E_2$, we are done since $\leq_1$ and $\leq_2$ are partial orders. We then have
      two more cases:
      \begin{enumerate}
      \item Case $e \leq (e', x)$ and $(e',x) \leq (e'', x)$. From $e \leq (e', x)$ we have that
        $e \in x$. From $(e',x) \leq (e'', x)$ we know that $e' \leq_2 e''$ and since $x$ is the
        same for both events, ten $e \in x$. Thus $e \leq (e'', x)$.
      \item Case $e \leq e'$ and $e' \leq (e'',x)$. From $e \leq e'$ we know that $e' \in x$.  From
        $e \leq e'$ we know that $e \leq_1 e'$. Since $x$ is a maximal configuration in $\es{E}_1$
        and $e \leq_1 e'$ then $e \in x$. Thus $e \leq (e'', x)$
      \end{enumerate}
    \item Antisymmetry ($e \leq e'$ and $e' \leq e$ then $e'=e$): if $e, e', e'' \in E_1$ or
      $e, e', e'' \in E_2$, we are done since $\leq_1$ and $\leq_2$ are partial orders. Since it is
      not possible to have $(e',x) \leq e$ we are done.
    \end{itemize}

    Hence $\leq$ is a partial order.

  \item $\#$ is symmetric and irreflexive:
    \begin{itemize}
    \item Symmetric (if $e \# e'$ then $e' \# e$): we have two cases:
      \begin{enumerate}
      \item $(e_2, x) \# (e'_2, x)$ then $(e'_2, x) \# (e_2, x)$. From $(e_2, x) \# (e'_2, x)$ we
        have that $e_2 \#_2 e'_2$, which is symmetric since $\#_2$ is a partial order. Hence
        $e'_2 \#_e e_2$ and $(e'_2, x) \# (e_2, x)$.
      \item $e \# e'$ then $e' \# e$. $e \# e'$ entails the existence of $e_1 \leq e$ and
        $e'_1 \leq e'$ such that $e_1 \#_1 e'_1$. Since $\#_1$ is a partial order, it is
        symmetric. then $e'_1 \#_1 e_1$. Furthermore, since $e_1 \leq e$ and $e'_1 \leq e'$ then
        $e' \# e$.
      \end{enumerate}
    \item Irreflexive ($\neg(e \# e)$): we show this by showing that the elements of $\#$ are of the
      form $e \# e'$ with $e \neq e'$. We then have four cases:
      \begin{enumerate}
      \item Case $e, e' \in E_1$. Then we have $e_1 \leq e$, $e'_1 \leq e'$, and $e_1 \#_1 e'_1$.
        Furthermore, we have $e_1 \leq_1 e$ and $e'_1 \leq_1 e'$.  Since $e_1 \#_1 e'_1$ and
        $e_1 \leq_1 e$ we have $e'_1 \#_1 e$.  Since $e'_1 \#_1 e$ and $e'_1 \leq_1 e'$ we have
        $e \#_1 e'$.  By irreflexivity of $\#_1$ we have that $e \neq e'$.
      \item Case $(e,x), (e',x') \in E_2 \times \confmax{\es{E}_1}$. If $x \neq x'$ then it is clear
        that $(e,x) \neq (e', x')$. If $x=x'$ then we have $e_1 \leq (e,x)$ and $e'_1 \leq (e', x)$,
        which entails that $e_1, e'_1 \in x$. However, $e_1 \leq_1 e'_1$, hence $e_1, e'_1 \not\in x$.
        Thus we have a contradiction, and consequently $(e,x) \neq (e', x)$.
      \item Case $e \in E_1$ and $e' \in E_2 \times \confmax{\es{E}_1}$. They are clearly different.
      \item Case $e \in E_2 \times \confmax{\es{E}_1}$ and $e' \in E_1$. They are clearly different.
      \end{enumerate}
    \end{itemize}

    Hence $\#$ is symmetric and irreflexive.
  \end{itemize}

  We need to show that $\forall e, e', e'' \in E$:
  \begin{enumerate}
  \item $\set{e' \mid e' \leq e}$ is finite
    \begin{enumerate}
    \item Case $e \in E_1$ we are done.
    \item Case $e \in E_2 \times \confmax{\es{E}_1}$.  Then $e = (e_2, x_1)$ with $e_2 \in E_2$ and
      $x_1 \in \confmax{\es{E}_1}$.  We know that
      $\set{e' \mid e' \leq (e_2, x_1)} = \set{(e'_2,x_1) \mid (e'_2, x_1) \leq (e_2, x_1)} \cup
      \set{e_1 \mid e_1 \leq (e_2, x_1),\, e_1 \in x_1}$. Both sets are finite because $\es{E}_2$ is
      an event structure and $x_1$ is finite, respectively. Since both sets are finite and the union
      of finite sets is finite, then $\set{e' \mid e' \leq (e_2, x_1)}$ is finite.
    \end{enumerate}
  \item $e \# e' \leq e'' \Rightarrow e \# e''$
    \begin{enumerate}
    \item Case $e, e', e'' \in E_1$ or $a, a', e, e', e'' \in E_2 \times \confmax{\es{E}_1}$ we are
      done.
    \item Case $e, e' \in E_1$ and $e'' \in E_2 \times \confmax{\es{E}_1}$.  We want to show that
      $e \# e''$. Hence we have to show that
      $\exists(e_1 \leq e,\ e'_1 \leq e')\ .\ e_1 \#_1 e'_1$.  Since $e, e' \in E_1$ then
      $e_1, e'_1 \in E_1$.  Let $e_1 = e$ and $e'_1 = e'$.  Hence we have
      $e_1 \leq e \Leftrightarrow e \leq e \Leftrightarrow e \leq_1 e$.  By the initial assumption,
      $e' \leq e'' \Leftrightarrow e'_1 \leq e''$. It lacks to show that $e_1 \#_1 e'_1$. That
      follows directly from the initial assumption
      $e \# e' \Leftrightarrow e \#_1 e' \Leftrightarrow e_1 \#_1 e'_1$.
    \end{enumerate}
  \end{enumerate}
\end{proof}

\subsubsection*{Proof of Lemma~\ref{lem:conc-es1}}
\begin{proof}
  Let
  \begin{align*}
    & \es{E}_1 = \pes{E_1}{\leq_1}{\#_1} \\
    & \es{E}_2 = \pes{E_2}{\leq_2}{\#_2} \\
    & \conc{\es{E}_1}{\es{E}_2} = \pes{E}{\leq}{\#}
  \end{align*}

  We need to show that $\leq$ is a partial order and that $\#$ is symmetric and irreflexive.
  \begin{itemize}
  \item $\leq$ is a partial order:
    \begin{itemize}
    \item Reflexivity ($e \leq e$): we have two cases: $e_1 \leq_1 e'_1$ or $e_2 \leq_2 e'_2$. Since
      $\leq_1$ and $\leq_2$ are partial orders, we are done.
    \item Transitivity ($e \leq e'$ and $e' \leq e''$ then $e \leq e''$): we have two cases:
      ($e \leq_1 e'$ and $e' \leq_1 e''$) or ($e \leq_2 e'$ and $e' \leq_2 e''$). Since $\leq_1$ and
      $\leq_2$ are partial orders, we have $e \leq_1 e''$ or $e \leq_2 e''$. Hence $e \leq e''$.
    \item Antisymmetry ($e \leq e'$ and $e' \leq e$ then $e'=e$): we have two cases: ($e \leq_1 e'$
      and $e' \leq_1 e$) or ($e \leq_2 e'$ and $e' \leq_2 e$). Since $\leq_1$ and $\leq_2$ are
      partial orders, we have $e=e'$.
    \end{itemize}

    Hence $\leq$ is a partial order.

  \item $\#$ is symmetric and irreflexive:
    \begin{itemize}
    \item Symmetric (if $e \# e'$ then $e' \# e$): we have either $e \#_1 e'$ or $e \#_2 e'$. Since
      $\#_1$ and $\#_2$ are symmetric then we have $e' \#_1 e$ or $e' \#_2 e$. Thus $e' \# e$.
    \item Irreflexive ($\neg(e \# e)$): We have either $\neg(e \#_1 e)$ or $\neg(e \#_2 e)$. Since
      $\#_1$ and $\#_2$ are irreflexive then $\neg(e \# e)$.
    \end{itemize}

    Hence $\#$ is symmetric and irreflexive.
  \end{itemize}

  We need to show that $\forall e, e', e'' \in E$:
  \begin{enumerate}
  \item $\set{e' \mid e' \leq e}$ is finite
    \begin{enumerate}
    \item Case $e \in E_1$ then $\set{e' \mid e' \leq e} = {e' \mid e' \leq_1 e}$, which is finite
      since $\es{E}_1$ is a event structure.
      \item Case $e \in E_2$ then $\set{e' \mid e' \leq e} = {e' \mid e' \leq_2 e}$, which is finite
      since $\es{E}_2$ is a event structure.
    \end{enumerate}
  \item $e \# e' \leq e'' \Rightarrow e \# e''$

    We have two cases: $e, e', e'' \in E_1$ or $e, e', e'' \in E_2$.  In both cases this holds
    because $\es{E}_1$ and $\es{E}_2$ are event structures.
  \end{enumerate}
\end{proof}

\subsubsection*{Proof of Lemma~\ref{lem:nd-es1}}
\begin{proof}
    Let
  \begin{align*}
    & \es{E}_1 = \pes{E_1}{\leq_1}{\#_1} \\
    & \es{E}_2 = \pes{E_2}{\leq_2}{\#_2} \\
    & \nd{\es{E}_1}{\es{E}_2} = \pes{E}{\leq}{\#}
  \end{align*}

  We need to show that $\leq$ is a partial order and that $\#$ is symmetric and irreflexive.
  \begin{itemize}
  \item $\leq$ is a partial order:
    \begin{itemize}
    \item Reflexivity ($e \leq e$): we have two cases: $e_1 \leq_1 e'_1$ or $e_2 \leq_2 e'_2$. Since
      $\leq_1$ and $\leq_2$ are partial orders, we are done.
    \item Transitivity ($e \leq e'$ and $e' \leq e''$ then $e \leq e''$): we have two cases:
      ($e \leq_1 e'$ and $e' \leq_1 e''$) or ($e \leq_2 e'$ and $e' \leq_2 e''$). Since $\leq_1$ and
      $\leq_2$ are partial orders, we have $e \leq_1 e''$ or $e \leq_2 e''$. Hence $e \leq e''$.
    \item Antisymmetry ($e \leq e'$ and $e' \leq e$ then $e'=e$): we have two cases: ($e \leq_1 e'$
      and $e' \leq_1 e$) or ($e \leq_2 e'$ and $e' \leq_2 e$). Since $\leq_1$ and $\leq_2$ are
      partial orders, we have $e=e'$.
    \end{itemize}

    Hence $\leq$ is a partial order.

  \item $\#$ is symmetric and irreflexive:
    \begin{itemize}
    \item Symmetric (if $e \# e'$ then $e' \# e$): we have the following cases
      \begin{enumerate}
      \item if $e \#_1 e'$ or $e \#_2 e'$. Since $\#_1$ and $\#_2$ are symmetric then we have
        $e' \#_1 e$ or $e' \#_2 e$. Thus $e' \# e$.
      \item if $e \in E_1$ and $e' \in E_2$, or vice-versa, then by Definition~\ref{def:pes-nd1} we
        also have $e' \leq e$.
      \end{enumerate}
    \item Irreflexive ($\neg(e \# e)$): We have either $\neg(e \#_1 e)$ or $\neg(e \#_2 e)$. Since
      $\#_1$ and $\#_2$ are irreflexive then $\neg(e \# e)$. There are no more cases since either
      $e \in E_1$ or $e \in E_2$.
    \end{itemize}

    Hence $\#$ is symmetric and irreflexive.
  \end{itemize}
  
  We need to show that $\forall e, e', e'' \in E$:
  \begin{enumerate}
  \item $\set{e' \mid e' \leq e}$ is finite
    \begin{enumerate}
    \item Case $e \in E_1$ then $\set{e' \mid e' \leq e} = {e' \mid e' \leq_1 e}$, which is finite
      since $\es{E}_1$ is an event structure.
      \item Case $e \in E_2$ then $\set{e' \mid e' \leq e} = {e' \mid e' \leq_2 e}$, which is finite
      since $\es{E}_2$ is an event structure.
    \end{enumerate}
  \item $e \# e' \leq e'' \Rightarrow e \# e''$

    We have two cases: $e, e', e'' \in E_1$ or $e, e', e'' \in E_2$.  In both cases this holds
    because $\es{E}_1$ and $\es{E}_2$ are event structures.
  \end{enumerate}
\end{proof}

\subsubsection*{Proof of Lemma~\ref{lem:rem-init-es1}}
\begin{proof}
  Let
  \begin{align*}
    & \es{E} = \pes{E}{\leq}{\#} \\
    & \es{E} \backslash a = \pes{E'}{\leq'}{\#'}
  \end{align*}

  We need to show that $\leq'$ is a partial order and $\#'$ is symmetric and irreflexive.  Since
  $\leq'$ and $\#'$ respectively equal $\leq$ and $\#$ for the events in $E'$, it follows directly
  that $\leq'$ is a partial order and $\#'$ is symmetric and irreflexive.
  
  We need to show that $\forall e, e', e'' \in E$:
  \begin{enumerate}
  \item $\set{e' \mid e' \leq e}$ is finite

    Since $\set{e' \mid e' \leq e}$ is finite, then so it is
    $\set{e' \mid e' \leq' e} = \set{e' \mid e' \leq e} \backslash l$.
    
  \item $e \# e' \leq e'' \Rightarrow e \# e''$

    Let $e, e', e'' \in E'$.  Then $e, e', e'' \neq a$ and $\neg (e, e', e'' \# a)$.  By
    Definition~\ref{def:rem-init1}, $e \#' e'$ entails $e \# e'$ and $e, e' \in E'$ and
    $e' \leq' e''$ entails $e' \leq e''$ and $e', e'' \in E$.  Since $\es{E}$ is an event structure,
    we have $e \# e' \leq e'' \Rightarrow e \# e''$.  Thus
    $e \#' e' \leq' e'' \Rightarrow e \#' e''$ and $e, e', e'' \in E$.
  \end{enumerate}
\end{proof}

\subsubsection*{Proof of Lemma~\ref{lem:seq-mono1}}
\begin{proof}
  Let
  \begin{align*}
    & \es{E}_1 = \pes{E_1}{\leq_1}{\#_1} \\
    & \es{E}'_1 = \pes{E'_1}{\leq'_1}{\#'_1} \\
    & \es{E}_2 = \pes{E_2}{\leq_2}{\#_2} \\
    & \es{E}'_2 = \pes{E'_2}{\leq'_2}{\#'_2} \\
    & \seq{\es{E}_1}{\es{E}_2} = \pes{E}{\leq}{\#} \\
    & \seq{\es{E}'_1}{\es{E}'_2} = \pes{E'}{\leq'}{\#'}
  \end{align*}
  such that $\es{E}_1 \sqsubseteq \es{E}'_1$ and $\es{E}_2 \sqsubseteq \es{E}'_2$.

  \begin{enumerate}
  \item We start by defining $f : E \rightarrow E'$ such that
    \[
      f(e) =
      \begin{cases}
        f_1(e) & \text{ if } e \in E_1 \\
        (f_2(e), m_1(x)) & \text{ if } e \in E_2 \times \confmax{\es{E}_1}
      \end{cases}
    \]
    where $m_1(x)$ is defined as follows: $\forall x \in \confmax{\es{E}_1}$,
    $m_1(x) \in \confmax{\es{E}'_1}$ such that $f_1[x] \subseteq m_1(x)$. If multiple are possible,
    $m_1(x)$ is arbitrarily chosen among them.

    Now we show that $f$ is injective, \ie\ if $a \neq b$ then $f(a) \neq f(b)$.
    We have two cases:
    \begin{itemize}
    \item if $a,b \in E_1$ we are done
    \item if $a, b \in E_2 \times \confmax{\es{E}_1}$

      Let $a = (e,x)$ and $b = (e,y)$.
      For $(e,x) \neq (e',y)$ we have two cases:
      \begin{itemize}
      \item $e \neq e'$

        We then have
        \begin{align*}
          f(e,x) \neq f(e', y)
          & \Leftrightarrow (f_2(e), m_1(x)) \neq (f_2(e'), m_1(y)) \\
          & \Leftrightarrow f_2(e) \neq f_2(e') \text{ or } m_1(x) \neq m_1(y)
        \end{align*}
        Since $f_2$ is injective we are done.

      \item $e = e'$ and $x \neq y$

        We then have
        \begin{align*}
          f(e,x) \neq f(e', y)
          & \Leftrightarrow (f_2(e), m_1(x)) \neq (f_2(e'), m_1(y)) \\
          & \Leftrightarrow f_2(e) \neq f_2(e') \text{ or } m_1(x) \neq m_1(y)
        \end{align*}

        Since we know that $f_2$ is an injective function, then we should have $f_2(e) =
        f_2(e')$. Otherwise we enter in a contradiction. Hence it lacks to see if
        $m_1(x) \neq m_1(y)$.  From definition of $m_1(x)$ and $m_1(y)$ we have, respectively,
        $f_1[x] \subseteq m_1(x)$ and $f_2[y] \subseteq m_2(y)$. Thus
        $f_1[x], f_1[y] \in \confES{\es{E}'_1}$. Since $f_1$ is injective and $x \neq y$ then it
        exists $e \in x$ and $e' \in y$, with $e \neq e'$, such that $f_1(e) \neq f_1(e')$. Thus
        $f_1[x] \neq f_1[y]$, and consequently $m_1(x) \neq m_1(y)$.
      \end{itemize}
    \end{itemize}

    Hence $f$ is injective.

  \item $\pi(f(a)) = \pi(a)$

    We have two cases:
    \begin{itemize}
    \item $a \in E_1$

      Then $\pi(f(a)) = \pi(f_1(a)) = \pi(a)$ since $\es{E}_1 \sqsubseteq \es{E}'_1$.

    \item $(a, x) \in E_2 \times \confmax{\es{E}_1}$

      Then $\pi(f(a,x)) = \pi(f_2(a), m_1(x)) = \pi(f_2(a)) = \pi(a)$ since
      $\es{E}_2 \sqsubseteq \es{E}'_2$.
    \end{itemize}
    
  \item $a \leq b \Leftrightarrow f(a) \leq' f(b)$
    \begin{itemize}
    \item[$\Rightarrow$] Assume $a \leq b$.
      
      By Definition~\ref{def:pes-seq1} we have:
      \begin{enumerate}
      \item $a,b \in E_1$.

        Hence $a \leq_1 b$. Since $\es{E}_1 \sqsubseteq \es{E}'_1$, then $f_1(a) \leq'_1 f_1(b)$.
        By Definition~\ref{def:pes-seq1}, $f_1(a) \leq' f_1(b)$.

      \item $(a,x), (b,x) \in E_2 \times \confmax{\es{E}_1}$

        Hence $(a,x) \leq (b,x)$ such that $a \leq_2 b$.  Since $\es{E}_2 \sqsubseteq \es{E}'_2$
        then $f_2(a) \leq'_2 f_2(b)$.  Now let $m_1(x) \in \confmax{\es{E}'_1}$ such that
        $f_1[x] \subseteq m_1(x)$.  Then $(f_2(a), m_1(x)) \leq' (f_2(b), m_1(x))$.  By
        Definition~\ref{def:pes-seq1}, $f(a,x) \leq' f(b,x)$.

      \item $a \in E_1$ and $(b,x) \in E_2 \times \confmax{\es{E}_1}$ such that $a \in x$

        Hence $a \leq (b,x)$. From $(b,x) \in E_2 \times \confmax{\es{E}_1}$ we have $b \in E_2$ and
        $x \in \confmax{\es{E}_1}$. Since $\es{E}_1 \sqsubseteq \es{E}'_1$ and
        $\es{E}_2 \sqsubseteq \es{E}'_2$, then $f_1(a) \in E'_1$ and $f_2(a) \in E'_2$.  Now let
        $m_1(x) \in \confmax{\es{E}'_1}$ such that $f_1[x] \subseteq m_1(x)$.  Since $a \in x$ then
        $f_1(a) \in f_1[x]$.  Thus $f_1(a) \in m_1(x)$. Hence
        $(f_2(b), m_1(x)) = f(b,x) \in E'_2 \times \confmax{\es{E}'_1}$.  By
        Definition~\ref{def:pes-seq1}, $f(a) \leq' f(b,x)$.
      \end{enumerate}

    \item[$\Leftarrow$] Assume $f(a) \leq' f(b)$
      \begin{enumerate}
      \item $f(a) \leq'_1 f(b)$

        We have $f(a) = f_1(a), f(b) = f_1(b) \in E'_1$. Hence $f_1(a) \leq'_1 f_1(b)$.
        Since $\es{E}_1 \sqsubseteq \es{E}'_1$ then $a \leq_1 b$.
        By Definition~\ref{def:pes-seq1}, $a \leq b$.

      \item $f(a,x) \leq' f(b,x)$

        We have
        $f(a,x) = (f_2(a), m_1(x)), f(b,x) = (f_2(b), m_1(x)) \in E'_2 \times \confmax{\es{E}'_1}$.
        From $m_1(x)$ we have $f_1[x] \subseteq m_1(x)$ such that $x \in \confmax{\es{E}_1}$.  From
        Definition~\ref{def:pes-seq1} we have $f_2(a) \leq'_2 f_2(b)$.  Since
        $\es{E}_2 \sqsubseteq \es{E}'_2$ then $a \leq_2 b$.  By Definition~\ref{def:pes-seq1}
        $(a,x) \leq (b,x)$.

      \item $f(a) \leq' f(b,x)$ such that $f(a) \in m_1(x)$.

        We have $f(a) = f_1(a) \in E'_1$ and
        $f(b,x) = (f_2(b), m_1(x)) \in E'_2 \times \confmax{\es{E}'_1}$, which gives
        $f_2(b) \in E'_2$ and $m_1(x) \in \confmax{\es{E}'_1}$.  From $m_1(x)$ we have
        $f_1[x] \subseteq m_1(x)$ such that $x \in \confmax{\es{E}_1}$. Since $f_1(a) \in f_1[x]$
        then $a \in x$.  Since $\es{E}_1 \sqsubseteq \es{E}'_1$ and $\es{E}_2 \sqsubseteq \es{E}'_2$
        then $a \in E_1$ and $b \in E_2$.  We then have $(b,x) \in E_2 \times \confmax{\es{E}_1}$.
        By Definition~\ref{def:pes-seq1} $a \leq (b,x)$.
      \end{enumerate}
    \end{itemize}

  \item $a \# b \Leftrightarrow f(a) \#' f(b)$
    \begin{itemize}
    \item[$\Rightarrow$] Assume $a \# b$.

      From Definition~\ref{def:pes-seq1} we have two cases:
      \begin{itemize}
      \item $\exists (e_1 \leq a,\ e'_1 \leq b)\ .\ e_1 \#_1 e'_1$

        Here we have two subcases:
        \begin{itemize}
        \item $a, b \in E_1$

          Since $\es{E}_1 \sqsubseteq \es{E}'_1$ we have $f_1(a) \#'_1 f_1(b)$.

        \item $a \in E_1$ and $(b,x) \in E_2 \times \confmax{\es{E}_1}$

          Then $e_1 \leq a, e'_1 \leq (b,x)$ such that $e_1 \#_1 e'_1$.  By
          Definition~\ref{def:pes-seq1}, $e_1 \leq_1 a$ and $e'_1 \leq (b,x)$ such that
          $e'_1 \in x$.  Since $\es{E}_1 \sqsubseteq \es{E}'_1$ we have
          $f_1(e_1) \leq_1 f_1(a) \Leftrightarrow f(e_1) \leq f(a)$ and $f_1(e_1) \#'_1 f_1(e'_1)$.
          From $(b,c) \in E_2 \times \confmax{\es{E}_1}$ we have $b \in E_2$ and
          $x \in \confmax{\es{E}_1}$.  Now let $m_1(x) \in \confmax{\es{E}'_1}$ such that
          $f_1[x] \subseteq m_1(x)$.  Since $e'_1 \in x$ then $f_1(e'_1) \in f_1[x]$.  Since
          $\es{E}_2 \sqsubseteq \es{E}'_2$ then $f_2(b) \in E'_2$.  Thus
          $(f_2(b), m_1(x)) = f(b,x) \in E'_2 \times \confmax{\es{E}'_1}$.  Hence
          $f(e'_1) \leq f(b,x)$.  Thus $f(a) \#' f(b,x)$.
        \end{itemize}

      \item $(a,x) \# (b,x)$ such that $a \#_2 b$

        From $(a,x), (b,x) \in E_2 \times \confmax{\es{E}_1}$ we have $a,b \in E_2$ and
        $x \in \confmax{\es{E}_1}$.  From $\es{E}_2 \sqsubseteq \es{E}'_2$ we have
        $f_2(a), f_2(b) \in E'_2$ and $f_2(a) \#'_2 f_2(b)$.  Now let
        $m_1(x) \in \confmax{\es{E}'_1}$ such that $f_1[x] \subseteq m_1(x)$.  Then
        $(f_2(a), m_1(x)) = f(a,x), (f_2(b), m_1(x)) = f(b,x) \in E'_2 \times \confmax{\es{E}'_1}$.
        Thus $f(a,x) \#' f(b,x)$.
      \end{itemize}

    \item[$\Leftarrow$] Assume $f(a) \#' f(b)$

      From Definition~\ref{def:pes-seq1} we have two cases:
      \begin{itemize}
      \item $f(a) \#' f(b)$ such that
        $\exists (f(e_1) \leq f(a), f(e'_1) \leq f(b))\ .\ f_1(e_1) \#'_1 f_1(e'_1)$

        Here we have two subcases:
        \begin{itemize}
        \item $f(a), f(b) \in E'_1$.
          
          We are done since $\es{E}_1 \sqsubseteq \es{E}'_1$.

        \item $f(a) \in E'_1$ and $f(b,x) \in E'_2 \times \confmax{\es{E}'_1}$

          Then $f(e_1) \leq f(a) \Leftrightarrow f_1(e_1) \leq f_1(a)$ and
          $f(e'_1) \leq f(b,x) \Leftrightarrow f_1(e'_1) \leq (f_2(b), m_1(x))$.  By
          Definition~\ref{def:pes-seq1}, $f_1(e_1) \leq_1 f_1(a)$ and
          $f_1(e'_1) \leq (f_2(b), m_1(x))$ such that $f_1(e'_1) \in m_1(x)$.  Since
          $\es{E}_1 \sqsubseteq \es{E}'_1$, $e_1 \leq_1 a$ and $e_1 \#_1 e'_1$.  From
          $(f_2(b), m_1(x)) \in E'_2 \times \confmax{\es{E}'_1}$ we have $f_2(b) \in E'_2$ and
          $m_1(x) \in \confmax{\es{E}'_1}$.  Since $\es{E}_2 \sqsubseteq \es{E}'_2$ then
          $b \in E_2$.  Since $f_1(e'_1) \in m_1(x)$ and $f_1[x] \subseteq m_1(x)$ then
          $e'_1 \in x$.  Thus $(b,x) \in E_2 \times \confmax{\es{E}_1}$.  Hence $e'_1 \leq (b,x)$.
          Thus $a \# (b,x)$.
        \end{itemize}

      \item $f(a,x) \#' f(b,x)$ such that $f_2(a) \#'_2 f_2(b)$

        From
        $f(a,x) = (f_2(a), m_1(x)), f(b,x) = (f_2(b), m_1(x)) \in E'_2 \times \confmax{\es{E}'_1}$
        we have that $f_2(a), f_2(b) \in E_2$ and $m_1(x) \in \confmax{\es{E'_1}}$.  From
        $\es{E}_2 \sqsubseteq \es{E}'_2$ we have $a,b \in E_2$ and $a \#_2 b$.  From $m_1(x)$ we
        know that $x \in \confmax{\es{E}_1}$.  Thus $(a,x) \# (b,x)$.
      \end{itemize}
      
    \end{itemize}
  \end{enumerate}
\end{proof}

\subsubsection*{Proof of Lemma~\ref{lem:nd-mono1}}
\begin{proof}
  Let
  \begin{align*}
    & \es{E}_1 = \pes{E_1}{\leq_1}{\#_1} \\
    & \es{E}'_1 = \pes{E'_1}{\leq'_1}{\#'_1} \\
    & \es{E}_2 = \pes{E_2}{\leq_2}{\#_2} \\
    & \es{E}'_2 = \pes{E'_2}{\leq'_2}{\#'_2} \\
    & \nd{\es{E}_1}{\es{E}_2} = \pes{E}{\leq}{\#} \\
    & \nd{\es{E}'_1}{\es{E}'_2} = \pes{E'}{\leq'}{\#'}
  \end{align*}
  such that $\es{E}_1 \sqsubseteq \es{E}'_1$ and $\es{E}_2 \sqsubseteq \es{E}'_2$.

  \begin{enumerate}
  \item We start by defining a function $f: E \rightarrow E'$ such that
    \[
      f(e) =
      \begin{cases}
        f_1(e) & \text{ if } e \in E_1 \\
        f_2(e) & \text{ if } e \in E_2 \\
      \end{cases}
    \]

    Clearly $f$ is injective, since if $e \in E_1$ then $f(e) = f_1(e)$ and $f_1$ is injective and if
    $e \in E_2$ then $f(e) = f_2(e)$ and $f_2$ is injective.

  \item $\pi(f(a)) = \pi(a)$

    We have two similar cases, $a \in E_1$ or $a \in E_2$. We only show the former.

    $\pi(f(a)) = \pi(f_1(a)) = \pi(a)$, since $\es{E}_1 \sqsubseteq \es{E}'_1$.

  \item $a \leq b \Leftrightarrow f(a) \leq' f(b)$
    \begin{itemize}
    \item[$\Rightarrow$] Assume $a \leq b$.

      By Definition~\ref{def:pes-nd1}, $a \leq_1 b$ or $a \leq_2 b$.  Since
      $\es{E}_1 \sqsubseteq \es{E}'_1$ and $\es{E}_2 \sqsubseteq \es{E}'_2$, then
      $f_1(a) \leq'_1 f_1(b)$ and $f_2(a) \leq'_2 f_2(b)$.  By Definition~\ref{def:pes-nd1} we are
      done.

    \item[$\Leftarrow$] Assume $f(a) \leq' f(b)$.

      By Definition~\ref{def:pes-nd1} $f(a) \leq'_1 f(b) \Leftrightarrow f_1(a) \leq'_1 f_1(b)$ or
      $f(a) \leq'_2 f(b) \Leftrightarrow f_2(a) \leq'_2 f_2(b)$. Since
      $\es{E}_1 \sqsubseteq \es{E}'_1$ and $\es{E}_2 \sqsubseteq \es{E}'_2$, then $a \leq_1 b$ and
      $a \leq_2 b$.  By Definition~\ref{def:pes-nd1} we are done.
    \end{itemize}

  \item $a \# b \Leftrightarrow f(a) \#' f(b)$
    \begin{itemize}
    \item[$\Rightarrow$] Assume $a \# b$.

      We have the following cases:
      \begin{itemize}
      \item If $a \#_1 b$ or $a \#_2 b$, then we are done since $\es{E}_1 \sqsubseteq \es{E}'_1$ and
        $\es{E}_2 \sqsubseteq \es{E}'_2$.
      \item If $a \in E_1$ and $b \in E_2$ (or vice-versa).

        Since $\es{E}_1 \sqsubseteq \es{E}'_1$ and $\es{E}_2 \sqsubseteq \es{E}'_2$, then
        $f_1(a) \in E'_1$ and $f_2(b) \in E'_2$.  By Definition~\ref{def:pes-nd1},
        $f_1(a) \#' f_2(b) \Leftrightarrow f(a) \#' f(b)$.
      \end{itemize}

    \item[$\Leftarrow$] Assume $f(a) \#' f(b)$.

      We have the following cases:
      \begin{itemize}
      \item If $f(a) \#'_1 f(b) \Leftrightarrow f_1(a) \#'_1 f_1(b)$ or
        $f(a) \#'_2 f(b) \Leftrightarrow f_2(a) \#'_2 f_2(b)$, we are done since
        $\es{E}_1 \sqsubseteq \es{E}'_1$ and $\es{E}_2 \sqsubseteq \es{E}'_2$.
      \item If $f(a) = f_1(a) \in E'_1$ and $f(b) = f_2(b) \in E'_2$ (or vice-versa).

        Since $\es{E}_1 \sqsubseteq \es{E}'_1$ and $\es{E}_2 \sqsubseteq \es{E}'_2$, then
        $a \in E_1$ and $b \in E_2$.  By Definition~\ref{def:pes-nd1}, $a \# b$.
      \end{itemize}
    \end{itemize}
  \end{enumerate}
\end{proof}

\subsubsection*{Proof of Lemma~\ref{lem:conc-mono1}}
\begin{proof}
  Let
  \begin{align*}
    & \es{E}_1 = \pes{E_1}{\leq_1}{\#_1} \\
    & \es{E}'_1 = \pes{E'_1}{\leq'_1}{\#'_1} \\
    & \es{E}_2 = \pes{E_2}{\leq_2}{\#_2} \\
    & \es{E}'_2 = \pes{E'_2}{\leq'_2}{\#'_2} \\
    & \conc{\es{E}_1}{\es{E}_2} = \pes{E}{\leq}{\#} \\
    & \conc{\es{E}'_1}{\es{E}'_2} = \pes{E'}{\leq'}{\#'}
  \end{align*}
  such that $\es{E}_1 \sqsubseteq \es{E}'_1$ and $\es{E}_2 \sqsubseteq \es{E}'_2$.

  \begin{enumerate}
  \item We start by defining a function $f: E \rightarrow E'$ such that
    \[
      f(e) =
      \begin{cases}
        f_1(e) & \text{ if } e \in E_1 \\
        f_2(e) & \text{ if } e \in E_2 \\
      \end{cases}
    \]

    Clearly $f$ is injective, since if $e \in E_1$ then $f(e) = f_1(e)$ and $f_1$ is injective and if
    $e \in E_2$ then $f(e) = f_2(e)$ and $f_2$ is injective.

  \item $\pi(f(a)) = \pi(a)$

    We have two similar cases, $a \in E_1$ or $a \in E_2$. We only show the former.

    $\pi(f(a)) = \pi(f_1(a)) = \pi(a)$, since $\es{E}_1 \sqsubseteq \es{E}'_1$.

  \item $a \leq b \Leftrightarrow f(a) \leq' f(b)$
    \begin{itemize}
    \item[$\Rightarrow$] Assume $a \leq b$.

      By Definition~\ref{def:pes-conc1}, $a \leq_1 b$ or $a \leq_2 b$.  Since
      $\es{E}_1 \sqsubseteq \es{E}'_1$ and $\es{E}_2 \sqsubseteq \es{E}'_2$, then
      $f_1(a) \leq'_1 f_1(b)$ and $f_2(a) \leq'_2 f_2(b)$.  By Definition~\ref{def:pes-conc1} we are
      done.

    \item[$\Leftarrow$] Assume $f(a) \leq' f(b)$.

      By Definition~\ref{def:pes-conc1} $f(a) \leq'_1 f(b) \Leftrightarrow f_1(a) \leq'_1 f_1(b)$ or
      $f(a) \leq'_2 f(b) \Leftrightarrow f_2(a) \leq'_2 f_2(b)$. Since
      $\es{E}_1 \sqsubseteq \es{E}'_1$ and $\es{E}_2 \sqsubseteq \es{E}'_2$, then $a \leq_1 b$ and
      $a \leq_2 b$.  By Definition~\ref{def:pes-conc1} we are done.
    \end{itemize}

  \item $a \# b \Leftrightarrow f(a) \#' f(b)$

    Similar to the previous case.
  \end{enumerate}
\end{proof}

\subsubsection*{Proof of Lemma~\ref{lem:op-sim-1}}
\begin{proof}
  \begin{align*}
    & \es{E}_1 \equiv \es{E}_1' \text{ and } \es{E}_2 \equiv \es{E}_2' \\
    \Rightarrow & \set{ \text{Definition~\ref{def:pes-sub1}} } \\
    & \es{E}_1 \sqsubseteq \es{E}_1' \text{ and } \es{E}_1' \sqsubseteq \es{E}_1,
      \es{E}_2 \sqsubseteq \es{E}_2' \text{ and } \es{E}_2' \sqsubseteq \es{E}_2 \\
    \Rightarrow & \set{ op \text{ monotone}} \\
    & \es{E}_1\, op\, \es{E}_2 \sqsubseteq \es{E}_1'\, op\, \es{E}_2' \text{ and }
      \es{E}_1'\, op\, \es{E}_2' \sqsubseteq \es{E}_1\, op\, \es{E}_2 \\
    \Rightarrow & \set{ \text{Definition~\ref{def:pes-sub1}} } \\
    & \es{E}_1\, op\, \es{E}_2 \equiv \es{E}_1'\, op\, \es{E}_2'
  \end{align*}
\end{proof}

\subsubsection*{Proof of Lemma~\ref{lem:seq-rem-init1}}
\begin{proof}
  To prove $(\seq{\es{E_1}}{\es{E_2}}) \backslash l \equiv \seq{(\es{E_1}\backslash l)}{\es{E_2}}$, we
  need to verify that:
  \begin{itemize}
    \item $(\seq{\es{E_1}}{\es{E_2}}) \backslash l \sqsubseteq \seq{(\es{E_1}\backslash l)}{\es{E_2}}$
    \item $\seq{(\es{E_1}\backslash l)}{\es{E_2}} \sqsubseteq (\seq{\es{E_1}}{\es{E_2}}) \backslash l$
  \end{itemize}
  
  Let
  \begin{align*}
    & \es{E}_1 = \pes{E_1}{\leq_1}{\#_1} \\
    & \es{E}_2 = \pes{E_2}{\leq_2}{\#_2} \\
    & \seq{\es{E}_1}{\es{E}_2} = \pes{E_{\seq{1}{2}}}{\leq_{\seq{1}{2}}}{\#_{\seq{1}{2}}} \\
    & (\seq{\es{E}_1}{\es{E}_2}) \backslash l = \pes{E}{\leq}{\#} \\
    & \es{E}_1 \backslash l = \pes{E_1^l}{\leq_1^l}{\#_1^l} \\
    & \seq{(\es{E}_1 \backslash l)}{\es{E}_2} = \pes{E'}{\leq'}{\#'} \\
    & l \in \init{\seq{\es{E_1}}{\es{E_2}}}
  \end{align*}

  
  \begin{itemize}
  \item $(\seq{\es{E_1}}{\es{E_2}}) \backslash l \sqsubseteq \seq{(\es{E_1}\backslash l)}{\es{E_2}}$
    \begin{enumerate}
    \item We start by defining a function $f: E \rightarrow E'$ such that
      \[
        f(e) =
        \begin{cases}
          e & \text{ if } e \in E_1 \\
          (e, x \backslash l) & \text{ if } e \in E_2 \times \confmax{\es{E}_1}
        \end{cases}
      \]
      where $x \backslash l = \{e \mid e \in x \wedge e \neq l\}$.

      First we show that $x \backslash l \in \confmax{\es{E}_1 \backslash l}$.  It is clear that
      $(x \backslash l) \cup \{l\} = x \in \confmax{\es{E}_1}$.  Furthermore, since
      $x \in \confmax{\es{E}_1}$ then $\nexists y \in \confES{\es{E}_1}$ such that $x \cchain{}\ y$.
      Now let $y \backslash l \in \confES{\es{E}_1\backslash l}$ such that
      $x \backslash l \cchain{}\ y \backslash l$. By adding $l$ to both configurations we would have
      $x \cchain{}\ y$, which means that $x \not\in \confmax{\es{E}_1}$, hence a contradiction. Thus
      $x \backslash l \in \confmax{\es{E}_1 \backslash l}$.

      Another way of seeing that $x \backslash l \in \confmax{\es{E}_1 \backslash l}$ is by noting
      that $x \backslash l$ removes the initial event from $x$, which is itself a maximal
      configuration.

      Now we show that $f$ is injective. We have two cases:
      \begin{itemize}
      \item $a,b \in E_1$

        Then we are done since the mapping is the identity.

      \item $(a, x), (b, y) \in E_2 \times \confmax{\es{E}_1}$

        From $(a, x) \neq (b, y)$ we have two cases:
        \begin{itemize}
        \item $a \neq b$

          Trivially holds because the function is the identity

        \item $a=b$ and $x \neq y$

          Then
          $f(a, x) \neq f(b, y) \Leftrightarrow (a, x \backslash l) \neq (b, y \backslash l)
          \Leftrightarrow a \neq b \text{ or } x \backslash l \neq y \backslash l$.

          Clearly it cannot be the case $a \neq b$, otherwise it would contradict our assumption
          that $a=b$, and also the fact that the identity function is injective. Hence it can only
          be $x \backslash l \neq y \backslash l$. Since $x \neq y$ then $\exists e \in x$ such that
          $e \not\in y$. Since $e \neq l$, then we have $x \backslash l \neq y \backslash l$.
        \end{itemize}
      \end{itemize}

    \item $\pi(f(a)) = \pi(a)$

      We have two cases:
      \begin{itemize}
      \item $a \in E_1$

        It follows directly that $\pi(f(a)) = \pi(a)$, since $f(a) = a$

      \item $(a,x) \in E_2 \times \confmax{\es{E}_1}$

        Then $\pi(f(a,x)) = \pi(a, x \backslash l) = \pi(a)$
      \end{itemize}
      
    \item $a \leq b \Leftrightarrow f(a) \leq' f(b)$
      \begin{itemize}         
      \item[$\Rightarrow$] Assume $a \leq b$.

        By Definition~\ref{def:rem-init1}, $a, b\ \in E$ (\ie\ $a,b \neq l$ and
        $\neg(a,b \#_{\seq{1}{2}} l)$), and $a \leq_{\seq{1}{2}} b$. We have three cases:
        \begin{enumerate}
        \item[$(i)$] $a \leq_{\seq{1}{2}} b$ is of the form $a \leq_1 b$.

          We have $a,b \in E_1$.  Note that $f(a) = a, f(b) = b \in E'$, hence $a,b \neq l$ and
          $\neg(a,b \#_{\seq{1}{2}} l)$. By Definition~\ref{def:rem-init1} $a \leq_1^l b$. By
          Definition~\ref{def:pes-seq1} $a \leq' b$.
          
        \item[$(ii)$] $a \leq_{\seq{1}{2}} b$ is of the form $(a,x) \leq_{\seq{1}{2}} (b,x)$.

          It follows directly that $(a, x \backslash l) = f(a,x) \leq' f(b,x) = (b, x \backslash l)$
          since $a,b \in E_2$ and $x \backslash l \in \confmax{\es{E}_1 \backslash l}$.
                    
        \item[$(iii)$] $a \leq_{\seq{1}{2}} b$ is of the form $a \leq_{\seq{1}{2}} (b, x)$.

          We know that $f(a) = a$ and $f(b,x) = (b, x \backslash l)$ such that
          $x \backslash l \in \confmax{\es{E}_1 \backslash l}$ and $a \in x \backslash l$ since
          $a \neq l$ and $\neg(a \#_{\seq{1}{2}} l)$.  By Definition~\ref{def:pes-seq1},
          $a \leq' (b, x \backslash l)$.
        \end{enumerate}

      \item[$\Leftarrow$] Assume $f(a) \leq' f(b)$.

        We have three cases:
        \begin{enumerate}
        \item[$(i)$] $f(a) \leq' f(b)$ comes from $f(a) \leq_1^l f(b)$.

          We know that $f(a) = a$ and $f(b) = b$.  Hence we have $a \leq_1^l b$ and, consequently,
          $a,b \neq l$ and $a,b \#_1 l$.  By Definition~\ref{def:rem-init1}, $a \leq_1 b$.  By
          Definition~\ref{def:pes-seq1} and Definition~\ref{def:rem-init1} that $a \leq b$.
                    
        \item[$(ii)$] $f(a) \leq' f(b)$ is of the form $f(a,x) \leq' f(b,x)$.

          We know that $f(a,x) = (a, x \backslash l)$ and $f(b,x) = (b, x \backslash l)$.
          Hence we have $(a, x \backslash l) \leq' (b, x \backslash l)$.
          By Definition~\ref{def:pes-seq1} we have $a \leq_2 b$.
          Again by Definition~\ref{def:pes-seq1}, we have $(a, x) \leq_{\seq{1}{2}} (b,x)$.
          Since $x \backslash l \in \confmax{\es{E}_1 \backslash l}$, we know that none of the events in $x$ are in conflict with $l$.
          Hence, $\neg( (a,x), (b,x) \#_{\seq{1}{2}} l)$.
          By Definition~\ref{def:rem-init1}, we have $(a,x) \leq (b,x)$.
          
        \item[$(iii)$] $f(a) \leq' f(b)$ is of the form $f(a) \leq' f(b,x)$.

          We know that $f(a) = a$ and $f(b,x) = (b, x \backslash l)$.
          Hence we have $a \leq' (b, x \backslash l)$, such that $a \in x \backslash l \in \confmax{\es{E}_1 \backslash l}$,  $a \in E_1^l$, and $b \in E_2$.
          By Definition~\ref{def:rem-init1}, we have $a \in E_1$, such that $a \neq l$ and $\neg( a \#_1 l)$, and $x \in \confmax{\es{E}_1}$.
          By Definition\ref{def:pes-seq1}, we have $a \leq_{\seq{1}{2}} (b,x)$, such that $\neg( (b,x) \#_{\seq{1}{2}} l)$, since $x$ has no events in conflict with $l$.
          By Definition~\ref{def:rem-init1}, we have $a \leq (b,x)$.
        \end{enumerate}
      \end{itemize}

    \item $a \# b \Leftrightarrow f(a) \#' f(b)$
      \begin{itemize}
      \item[$\Rightarrow$] Assume $a \# b$.

        By Definition~\ref{def:rem-init1} it entails $a \#_{\seq{1}{2}} b$ and $a,b \in E$ (\ie\
        $a,b \neq l$ and $\neg(a,b \#_{\seq{1}{2}} l)$).  By Definition~\ref{def:pes-seq1} it
        entails $\exists (e_1 \leq_{\seq{1}{2}} a, e'_1 \leq_{\seq{1}{2}} b)\, .\, e_1 \#_1 e'_1$ or
        $a \#_{2} b$. We have three cases:
        \begin{enumerate}
        \item[$(i)$] $a \#_1 b$.

          Then $e_1 \leq_1 a, e'_1 \leq_1 b$.
          We know that $e_1, e'_1 \neq l$  since if $e_1 = l$ then $e_1 \#_1 e'_1 \leq_1 b \Rightarrow e_1 \#_1 b$, which would entail $b \not\in E$, a contradiction.
          Furthermore, $\neg(e_1, e'_1 \#_1 l)$ with similar arguments.
          We know that $f(e) = e$ for $e \in \{e_1, e'_1, a, b\}$.
          By Definition~\ref{def:rem-init1}, $a \#_1^l b$, $e_1 \leq_1^l a$, $e'_1 \leq_1^l b$, and $e_1 \#_1^l e'_1$.
          By Definition~\ref{def:pes-seq1}, $a \#' b$.
          
        \item[$(ii)$] $a \#_{2} b$.

          We then know that $(a,x) \# (b,x)$ and that $a, b \in E_2$.
          Clearly $(a,x), (b,x) \neq l$ and $\neg( (a,x), (b,x) \#_{\seq{1}{2}} l)$.
          Furthermore, $x \backslash l \in \confmax{\es{E}_1 \backslash l}$.
          Thus $(a, x \backslash l) = f(a,x), (b, x \backslash l) = f(b,x) \in E_2 \times \confmax{\es{E}_1 \backslash l}$.
          It then follows by Definition~\ref{def:pes-seq1},  $(a, x \backslash l) \#' (b, x \backslash l)$.

        \item[$(iii)$] $a \# (b,x)$

          Then $e_1 \leq_1 a$, and $e'_1 \leq (b,x)$ with $e'_1 \in x$.
          We know that $e_1, e'_1 \neq l$  since if $e'_1 = l$ then $e'_1 \#_1 e_1 \leq_1 a \Rightarrow e'_1 \#_1 a$, which would entail $a \not\in E$, a contradiction.
          Furthermore, $\neg(e_1, e'_1 \#_1 l)$ with similar arguments.
          We know that $f(e) = e$ for $e \in \{e_1, e'_1, a\}$ and $f(b,x) = (b, x \backslash l)$, since  $x \backslash l \in \confmax{\es{E}_1 \backslash l}$.
          By Definition~\ref{def:rem-init1}, $a \#_1^l b$, $e_1 \leq_1^l a$, $e'_1 \leq_1^l b$, and $e_1 \#_1^l e'_1$.
          By Definition~\ref{def:pes-seq1}, $a \#' (b, x \backslash l)$.
        \end{enumerate}
        
      \item[$\Leftarrow$] Assume $f(a) \#' f(b)$.

        $f(a) \#' f(b)$ entails
        $\exists(f(e_1) \leq' f(a), f(e'_1) \leq' f(b))\, .\, f(e_1) \#_1^l f(e'_1)$ or
        $f(a) \#_{2} f(b)$ (note that $f(e_1) = e_1$ and $f(e'_1) = e'_1$ because
        $e_1, e'_1 \in E_1$).  We have three cases:
        \begin{enumerate}
        \item[$(i)$] $f(a) \#_1^l f(b)$.

          We know that for $e \in \{e_1, e'_1, a, b\}$, we have  $f(e) = e$, $e \leq l$, and $\neg(e \#_{1} l)$.
          We then have $a \#_1^l b$.
          By Definition~\ref{def:rem-init1}, $a \#_1 b$, $e_1 \leq_1 a$, $e'_1 \leq_1 b$, and $e_1 \#_1 e'_1$.
          By Definition~\ref{def:pes-seq1}, $a \#_{\seq{1}{2}} b$, $e_1 \leq_{\seq{1}{2}} a$, $e'_1 \leq_{\seq{1}{2}} b$, and $e_1 \#_{\seq{1}{2}} e'_1$.
          By Definition~\ref{def:rem-init1}, $a \# b$.
                    
        \item[$(ii)$] $f(a,x) \#' f(b,x)$.

          We know that $f(a,x) = (a, x \backslash l)$ and $f(b,x) = (b, x \backslash l)$.
          Then $(a, x \backslash l) \#' (b, x \backslash l)$.
          By Definition~\ref{def:pes-seq1}, $a \#_2 b$.
          Furthermore, $(x \backslash l) \cup \{l\} = x \in \confmax{\es{E}_1}$, since $x \backslash l \in \confmax{\es{E}_1 \backslash l}$.
          We then have $(a, x), (b, x) \in E_2 \times \confmax{\es{E}_1}$ and $\neg( (a,x), (b,x) \#_{\seq{1}{2}} l)$, since $x$ has no events in conflict with $l$.
          By Definition~\ref{def:pes-seq1}, $(a, x) \# (b, x)$.

        \item[$(iii)$] $f(a) \#' f(b,x)$.

          This entails $f(e_1) \leq' f(a), f(e'_1) \leq' f(b,x)\, .\, f(e_1) \#_1^l f(e'_1)$.
          We know that for $e \in \{e_1, e'_1, a\}$, we have  $f(e) = e$, $e \leq l$, and $\neg(e \#_{1} l)$.
          Furthermore, $f(b,x) = (b, x \backslash l)$.
          We then have $a \#' (b, x \backslash l)$ and $e_1 \leq' a, e'_1 \leq' (b,x \backslash l)\, .\, e_1 \#_1^l e'_1$.
          By Definition~\ref{def:pes-seq1}, $e_1 \leq_1^l a$ and $e'_1 \in x \backslash l \in \confmax{\es{E}_1 \backslash l}$.
          By Definition~\ref{def:rem-init1}, $e_1 \leq_1 a$ and $e_1 \#_1 e'_1$.
          By Definition~\ref{def:pes-seq1},  $e_1 \leq_{\seq{1}{2}} a$ and $e_1 \#_{\seq{1}{2}} e'_1$.
          Furthermore, $(x \backslash l) \cup \{l\} = x \in \confmax{\es{E}_1}$, since $x \backslash l \in \confmax{\es{E}_1 \backslash l}$.
          Thus $e'_1 \in x$ and, consequently, $e'_1 \leq_{\seq{1}{2}} (b, x)$.
          Furthermore, $\neg( (b,x) \#_{\seq{1}{2}} l)$ since $x$ has no events in conflict with $l$.
          By Definition~\ref{def:rem-init1}, $a \# (b,x)$.
        \end{enumerate}
      \end{itemize}
    \end{enumerate}

  \item $\seq{(\es{E_1}\backslash l)}{\es{E_2}} \sqsubseteq (\seq{\es{E_1}}{\es{E_2}}) \backslash l$
    \begin{enumerate}
    \item We start by defining $f: E' \rightarrow E$ such that
      \[
        f(e) =
        \begin{cases}
          e & \text{ if } e \in E_1^l \\
          (e, x \cup \{l\}) & \text{ if } e \in E_2 \times \confmax{\es{E}_1 \backslash l}
        \end{cases}
      \]

      It is straightforward to see that if $x \in \confmax{\es{E}_1 \backslash l}$ then $x \cup \{l\} \in \confmax{\es{E}_1}$.

      We now show that $f$ is injective.
      We have two cases:
      \begin{itemize}
      \item $a,b \in E_1^l$

        Then we are done since the mapping is the identity.

      \item $(a, x), (b, y) \in E_2 \times \confmax{\es{E}_1 \backslash l}$

        From $(a, x) \neq (b, y)$ we have two cases:
        \begin{itemize}
        \item $a \neq b$

          Trivially holds because the function is the identity.

        \item $a=b$ and $x \neq y$

          Then
          $f(a, x) \neq f(b, y) \Leftrightarrow (a, x \cup \{l\}) \neq (b, y \cup \{l\})
          \Leftrightarrow a \neq b \text{ or } x \cup \{l\} \neq y \cup \{l\}$.

          Clearly it cannot be the case $a \neq b$, otherwise it would contradict our assumption
          that $a=b$, and also the fact that the identity function is injective. Hence it can only
          be $x \cup \{l\} \neq y \cup \{l\}$. Since $x \neq y$ then $\exists e \in x$ such that
          $e \not\in y$. Since $e \neq l$, then we have $x \cup \{l\} \neq y \cup \{l\}$.
        \end{itemize}
      \end{itemize}
    \end{enumerate}

  \item $\pi(f(a)) = \pi(a)$

    Similar to the other case

  \item $a \leq' b \Leftrightarrow f(a) \leq f(b)$

    Similar to the other case

  \item  $a \#' b \Leftrightarrow f(a) \# f(b)$

    Similar to the other case
  \end{itemize}
\end{proof}

\subsubsection*{Proof of Lemma~\ref{lem:nd-rem-init1}}
\begin{proof}
  We need to prove
  \begin{itemize}
  \item $(\nd{\es{E_1}}{\es{E_2}}) \backslash l \sqsubseteq \es{E}_1 \backslash l$,
    $\es{E}_1 \backslash l \sqsubseteq (\nd{\es{E_1}}{\es{E_2}}) \backslash l $ when
    $l \in \init{\es{E_1}}$
  \item $(\nd{\es{E_1}}{\es{E_2}}) \backslash l \sqsubseteq \es{E}_2 \backslash l$,
    $\es{E}_2 \backslash l \sqsubseteq (\nd{\es{E_1}}{\es{E_2}}) \backslash l $ when
    $l \in \init{\es{E_2}}$
  \end{itemize}
  Since both cases are identical, we focus solely when $l \in \init{\es{E_1}}$.

  Let
  \begin{align*}
    & \es{E}_1 = \pes{E_1}{\leq_1}{\#_1} \\
    & \es{E}_2 = \pes{E_2}{\leq_2}{\#_2} \\
    & \nd{\es{E}_1}{\es{E}_2} = \pes{E}{\leq}{\#} \\
    & (\nd{\es{E}_1}{\es{E}_2}) \backslash l = \pes{E'}{\leq'}{\#'} \\
    & \es{E}_1 \backslash l = \pes{E_1^l}{\leq_1^l}{\#_1^l} \\
    & l \in \init{\nd{\es{E_1}}{\es{E_2}}}
  \end{align*}
  


  Consider $l \in \init{\es{E_1}}$.
  \begin{itemize}
  \item $(\nd{\es{E_1}}{\es{E_2}}) \backslash l \sqsubseteq \es{E}_1 \backslash l$
    \begin{enumerate}
    \item We begin by defining the function $f: E' \rightarrow E_1^l$ as the identity, which is
      injective.

    \item $\pi(f(a)) = \pi(a)$

      It follows directly since $f$ is the identity
      
    \item $e \leq' e' \Leftrightarrow e \leq_1^l e'$
      \begin{itemize}
      \item[$\Rightarrow$] Assume $e \leq' e'$.

        By Definition~\ref{def:rem-init1}, $e \leq' e'$ entails $e, e' \in E'$.  From $f$ we know
        that $e, e' \not\in E_2$.  Hence $\neg(e \leq_2 e')$.  Thus it only remains that
        $e \leq_1 e'$ with $e, e' \in E_1$. Again, from $f$ we know that $e, e' \in E_1^l$ since
        $e, e' \neq l$ and $\neg(e \# l), \neg(e' \# l)$. Thus, by Definition~\ref{def:rem-init1},
        $e \leq_1^l e'$.
        
      \item[$\Leftarrow$] Assume $e \leq_1^l e'$.

        By Definition~\ref{def:rem-init1}, $e \leq_1^l e'$ entails $e \leq_1 e'$ with
        $e, e' \in E_1^l$. Hence $e, e' \neq l$, $\neg (e \#_1 l), \neg (e' \#_1 l)$, and
        $e, e' \in E_1$.  By Definition~\ref{def:pes-nd1} we have $e, e' \in E$ and $e \leq
        e'$. Furthermore, we also have $\neg (e \# l), \neg (e' \# l)$ and consequently
        $e, e' \in E'$. Thus $e \leq' e'$.
        
      \end{itemize}

    \item $e \#' e' \Leftrightarrow e \#_1^l e'$
      \begin{itemize}
      \item[$\Rightarrow$] Assume $e \#' e'$.

        By Definition~\ref{def:rem-init1} we have $e \# e'$ and $e, e' \in E'$ such that
        $e, e' \neq l$ and $\neg(e \# l), \neg(e' \# l)$.  By Definition~\ref{def:pes-nd1},
        $e \# e'$ entails $e \#_1 e'$ or $e \#_2 e'$ or $\set{e \# e' \mid e \in E_1, e' \in E_2}$.
        Since $l \in \init{\es{E}_1}$, then $l \in E_1$ and consequently $e, e' \not\in E_2$. Thus,
        by exclusion of hypothesis we have $e \#_1 e'$ and consequently
        $\neg(e \#_1 l), \neg(e' \#_1 l)$. Thus by Definition~\ref{def:rem-init1} we have
        $e \#_1^l e'$.

      \item[$\Leftarrow$] Assume $e \#_1^l e'$.

        By Definition~\ref{def:rem-init1} we have $e \#_1 e'$ and $e, e' \in E_1^l$, such that
        $e, e' \neq l$ and $\neg(e \#_1 l), \neg(e' \#_1 l)$. Furthermore $e, e' \in E_1$.  By
        Definition~\ref{def:pes-nd1}, $e, e' \in E$ and $\neg(e \# l), \neg(e' \# l)$. By
        Definition~\ref{def:rem-init1} we have $e \#' e'$.
      \end{itemize}
    \end{enumerate}

  \item $\es{E}_1 \backslash l \sqsubseteq (\nd{\es{E_1}}{\es{E_2}}) \backslash l$

    For this case, we notice that the function $f : E' \rightarrow E$ is again the identity, which
    is injective.

    We omit the remaining cases since they are similarly proved.
  \end{itemize}
\end{proof}

\subsubsection*{Proof of Lemma~\ref{lem:conc-rem-init1}}
\begin{proof}
  It is straightforward to see that if $l \in \init{\conc{\es{E_1}}{\es{E_2}}}$ then either
  $l \in \init{\es{E}_1}$ or $l \in \init{\es{E}_2}$, by Definition~\ref{def:pes-conc1}. Let us
  focus in the case where $l \in \init{\es{E}_1}$, since the proof for the other way around is
  similar.
  
  We need to prove
  \begin{itemize}
  \item $(\conc{\es{E_1}}{\es{E_2}}) \backslash l \sqsubseteq
    \conc{(\es{E_1}\backslash l)}{\es{E_2}}$
  \item $\conc{(\es{E_1}\backslash l)}{\es{E_2}} \sqsubseteq
    (\conc{\es{E_1}}{\es{E_2}}) \backslash l$
  \end{itemize}

  Let
  \begin{align*}
    & \es{E}_1 = \pes{E_1}{\leq_1}{\#_1} \\
    & \es{E}_2 = \pes{E_2}{\leq_2}{\#_2} \\
    & \conc{\es{E}_1}{\es{E}_2} = \pes{E}{\leq}{\#} \\
    & (\conc{\es{E}_1}{\es{E}_2}) \backslash l = \pes{E'}{\leq'}{\#'} \\
    & \es{E}_1 \backslash l = \pes{E_1^l}{\leq_1^l}{\#_1^l} \\
    & \es{E}_2 \backslash l = \pes{E_2^l}{\leq_2^l}{\#_2^l} \\
    & \conc{(\es{E}_1 \backslash l)}{\es{E}_2} = \pes{E^l}{\leq^l}{\#^l} \\
    & l \in \init{\conc{\es{E_1}}{\es{E_2}}}
  \end{align*}


  \begin{itemize}
  \item $(\conc{\es{E_1}}{\es{E_2}}) \backslash l \sqsubseteq
    \conc{(\es{E_1}\backslash l)}{\es{E_2}}$
    \begin{enumerate}
    \item We start by defining $f: E' \rightarrow E^l$ as being the identity, which is injective.

    \item $\pi(f(a)) = \pi(a)$

      It follows directly since $f$ is the identity
      
    \item $e \leq' e' \Leftrightarrow e \leq^l e'$

      \begin{itemize}
      \item[$\Rightarrow$] Assume $e \leq' e'$

        By Definition~\ref{def:rem-init1}, $e \leq e'$ and $e, e' \in E'$.
        We have two cases:
        \begin{enumerate}
        \item[$(i)$] $e \leq e'$ is of the form $e \leq_1 e'$

          Then $e, e' \in E_1$.  Since $e, e' \in E'$, then $e,e' \neq l$ and
          $\neg (e \# l), \neg (e' \# l)$. By Definition~\ref{def:pes-conc1}
          $\neg (e \#_1 l), \neg (e' \#_1 l)$. By Definition~\ref{def:rem-init1}, $e, e' \in E_1^l$
          and $e \leq_1^l e'$.  By Definition~\ref{def:pes-conc1}, $e \le^l e'$.
          
        \item[$(ii)$] $e \leq e'$ is of the form $e \leq_2 e'$

          It follows directly that $e \leq^l e'$, since $l \in E_1$.
        \end{enumerate}
        
      \item[$\Leftarrow$] Assume $e \leq^l e'$
        
        By Definition~\ref{def:rem-init1} and by Definition~\ref{def:pes-conc1} we have two cases:
        \begin{enumerate}
        \item[$(i)$] $e, e' \in E_1^l$ and $e \leq^l e'$ is of the form $e \leq_1^l e'$

          Since $e, e' \in E_1^l$, by Definition~\ref{def:rem-init1} we have $e, e' \in E_1$,
          $e, e' \neq l$ and $\neg (e \#_1 l), \neg (e' \#_1 l)$.  Hence, by
          Definition~\ref{def:pes-conc1} and Definition~\ref{def:rem-init1} we have $e, e' \in E'$.
          From $e \leq_1^l e'$, we know that from Definition~\ref{def:rem-init1} we have
          $e \leq_1 e'$ and $e, e' \in E_1^l$. It then follows by Definition~\ref{def:pes-conc1} and
          Definition~\ref{def:rem-init1} that $e \leq' e'$.
          
        \item[$(ii)$] $e, e' \in E_2^l$ and $e \leq^l e'$ is of the form $e \leq_2^l e'$

          It follows directly that $e \leq' e'$, since $l \in E_1$.
        \end{enumerate}
      \end{itemize}
      
    \item $e \#' e' \Leftrightarrow e \#^l e'$

      The reasoning for this case is similar to the previous one, since the definitions are
      identical.
    \end{enumerate}

    \item $\conc{(\es{E_1}\backslash l)}{\es{E_2}} \sqsubseteq
      (\conc{\es{E_1}}{\es{E_2}}) \backslash l$

      For this case, we notice that the function $f: E^l \rightarrow E'$ is again the identity,
      which is injective.

      We omit the remaining cases since they are similarly proved.
  \end{itemize}
\end{proof}

\subsubsection*{Proof of Lemma~\ref{lem:conc-symmetric1}}
\begin{proof}
  It follows directly from Definition~\ref{def:pes-conc1}.
\end{proof}

\subsubsection*{Proof of Lemma~\ref{res:soundI-1}}
\begin{proof}
  Induction over rules in Figure~\ref{fig:op-small1}.

  \begin{itemize}
  \item $\com{skip} \xrightarrow{sk} \checkmark$

    It follows directly that
    $\mf{\checkmark} \equiv \mf{\com{skip}} \backslash sk \equiv \emptyset$.

  \item $\com{a} \xrightarrow{a} \checkmark$

    It follows directly that
    $\mf{\checkmark} \equiv \mf{\com{a}} \backslash a \equiv \emptyset$.

  \item $\seq{\com{C}_1}{\com{C}_2} \xrightarrow{l} \com{C}_2$
    \begin{align*}
      & \seq{\com{C}_1}{\com{C}_2} \xrightarrow{l} \com{C}_2 \\
      \Rightarrow & \set{\text{Figure~\ref{fig:op-small1} entails}} \\
      & \com{C}_1 \xrightarrow{l} \checkmark \\
      \Rightarrow & \set{\text{i.h.}} \\
      & \mf{\checkmark} \equiv \mf{\com{C}_1} \backslash l \\
      \Rightarrow & \set{ \text{Lemma~\ref{lem:op-sim-1}}} \\
      & \seq{\mf{\checkmark}}{\mf{\com{C}_2}} \equiv
        \seq{(\mf{\com{C}_1} \backslash l)}{\mf{\com{C}_2}} \\
      \Rightarrow & \set{
                    \seq{\mf{\checkmark}}{\mf{\com{C}_2}} \equiv \mf{\com{C}_2},
                    \text{ Lemma~\ref{lem:seq-rem-init1}}
                    } \\
      & \mf{\com{C}_2} \equiv (\seq{\mf{\com{C}_1}}{\mf{\com{C}_2}}) \backslash l \\
      \Rightarrow & \set{\text{Definition~\ref{def:den-sem1}}} \\
      & \mf{\com{C}_2} \equiv \mf{\seq{\com{C}_1}{\com{C}_2}} \backslash l \\
    \end{align*}

  \item $\seq{\com{C}_1}{\com{C}_2} \xrightarrow{l} \seq{\com{C}'_1}{\com{C}_2}$
    \begin{align*}
      & \seq{\com{C}_1}{\com{C}_2} \xrightarrow{l} \seq{\com{C}'_1}{\com{C}_2} \\
      \Rightarrow & \set{\text{Figure~\ref{fig:op-small1} entails}} \\
      & \com{C}_1 \xrightarrow{l} \com{C}'_1 \\
      \Rightarrow & \set{\text{i.h.}} \\
      & \mf{\com{C}'_1} \equiv \mf{\com{C}_1} \backslash l \\
      \Rightarrow & \set{ \text{Lemma~\ref{lem:op-sim-1}}} \\
      & \seq{\mf{\com{C}'_1}}{\mf{\com{C}_2}} \equiv \seq{(\mf{\com{C}_1} \backslash l)}{\mf{\com{C}_2}} \\
      \Rightarrow & \set{\text{Lemma~\ref{lem:seq-rem-init1}}} \\
      & \seq{\mf{\com{C}'_1}}{\mf{\com{C}_2}} \equiv (\seq{\mf{\com{C}_1}}{\mf{\com{C}_2}}) \backslash l \\
      \Rightarrow & \set{\text{Definition~\ref{def:den-sem1}}} \\
      & \mf{\seq{\com{C}'_1}{\com{C}_2}} \equiv \mf{\seq{\com{C}_1}{\com{C}_2}} \backslash l \\
    \end{align*}

  \item $\nd{\com{C}_1}{\com{C}_2} \xrightarrow{l} \checkmark$
    \begin{align*}
      & \nd{\com{C}_1}{\com{C}_2} \xrightarrow{l} \checkmark \\
      \Rightarrow & \set{\text{Figure~\ref{fig:op-small1} entails}} \\
      & \com{C}_1 \xrightarrow{l} \checkmark
        \text{ or }
        \com{C}_2 \xrightarrow{l} \checkmark \\
      \Rightarrow & \set{\text{i.h.}} \\
      & \mf{\checkmark} \equiv \mf{\com{C}_1} \backslash l
        \text{ or }
        \mf{\checkmark} \equiv \mf{\com{C}_2} \backslash l \\
      \Rightarrow & \set{\text{Lemma~\ref{lem:nd-rem-init1} for both cases,
                    \text{Definition~\ref{def:den-sem1}}
                    }} \\
      & \mf{\checkmark} \equiv \mf{\nd{\com{C}_1}{\com{C}_2}} \backslash l
    \end{align*}

  \item $\nd{\com{C}_1}{\com{C}_2} \xrightarrow{l} \com{C}'$
    \begin{align*}
      & \nd{\com{C}_1}{\com{C}_2} \xrightarrow{l} \com{C}' \\
      \Rightarrow & \set{\text{Figure~\ref{fig:op-small1} entails}} \\
      & \com{C}_1 \xrightarrow{l} \com{C}'_1
        \text{ or }
        \com{C}_2 \xrightarrow{l} \com{C}'_2 \\
      \Rightarrow & \set{\text{i.h.}} \\
      & \mf{\com{C}'_1} \equiv \mf{\com{C}_1} \backslash l
        \text{ or }
        \mf{\com{C}'_2} \equiv \mf{\com{C}_2} \backslash l \\
      \Rightarrow & \set{\text{Lemma~\ref{lem:nd-rem-init1} for both cases},
                    \text{Definition~\ref{def:den-sem1}} } \\
      & \mf{\com{C}'_1} \equiv (\mf{\nd{\com{C}_1}{\com{C}_2}}) \backslash l
        \text{ or }
        \mf{\com{C}'_2} \equiv \mf{\nd{\com{C}_1}{\com{C}_2}} \backslash l
    \end{align*}

  \item $\conc{\com{C}_1}{\com{C}_2} \xrightarrow{l} \com{C}_2$
    \begin{align*}
      & \conc{\com{C}_1}{\com{C}_2} \xrightarrow{l} \com{C}_2 \\
      \Rightarrow & \set{\text{Figure~\ref{fig:op-small1} entails}} \\
      & \com{C}_1 \xrightarrow{l} \checkmark \\
      \Rightarrow & \set{\text{i.h.}} \\
      & \mf{\checkmark} \equiv \mf{\com{C}_1} \backslash l \\
      \Rightarrow & \set{\text{Lemma~\ref{lem:op-sim-1}}} \\
      & \conc{\mf{\checkmark}}{\mf{\com{C}_2}} \equiv \conc{(\mf{\com{C}_1} \backslash l)}{\mf{\com{C}_2}} \\
      \Rightarrow & \set{\conc{\mf{\checkmark}}{\mf{\com{C}_2}} \equiv \mf{\com{C}_2} } \\
      & \mf{\com{C}_2} \equiv \conc{(\mf{\com{C}_1} \backslash l)}{\mf{\com{C}_2}} \\
      \Rightarrow & \set{\text{Lemma~\ref{lem:conc-rem-init1}},
                    \text{Definition~\ref{def:den-sem1}}} \\
      & \mf{\com{C}_2} \equiv \mf{\conc{\com{C}_1}{\com{C}_2}} \backslash l \\
    \end{align*}

  \item $\conc{\com{C}_1}{\com{C}_2} \xrightarrow{l} \conc{\com{C}'_1}{\com{C}_2}$
    \begin{align*}
      & \conc{\com{C}_1}{\com{C}_2} \xrightarrow{l} \conc{\com{C}'_1}{\com{C}_2} \\
      \Rightarrow & \set{\text{Figure~\ref{fig:op-small1} entails}} \\
      & \com{C}_1 \xrightarrow{l} \com{C}'_1 \\
      \Rightarrow & \set{\text{i.h.}} \\
      & \mf{\com{C}'_1} \equiv \mf{\com{C}_1} \backslash l \\
      \Rightarrow & \set{\text{Lemma~\ref{lem:op-sim-1}}} \\
      & \conc{\mf{\com{C}'_1}}{\mf{\com{C}_2}} \equiv \conc{(\mf{\com{C}_1} \backslash l)}{\mf{\com{C}_2}} \\
      \Rightarrow & \set{\text{Lemma~\ref{lem:conc-rem-init1}},
                    \text{Definition~\ref{def:den-sem1}}} \\
      & \mf{\conc{\com{C}'_1}{\com{C}_2}} \equiv \mf{\conc{\com{C}_1}{\com{C}_2}} \backslash l \\
    \end{align*}

  \item $\conc{\com{C}_1}{\com{C}_2} \xrightarrow{l} \com{C}_1$
    \begin{align*}
      & \conc{\com{C}_1}{\com{C}_2} \xrightarrow{l} \com{C}_1 \\
      \Rightarrow & \set{\text{Figure~\ref{fig:op-small1} entails}} \\
      & \com{C}_2 \xrightarrow{l} \checkmark \\
      \Rightarrow & \set{\text{i.h.}} \\
      & \mf{\checkmark} \equiv \mf{\com{C}_2} \backslash l \\
      \Rightarrow & \set{\text{Lemma~\ref{lem:op-sim-1}}} \\
      & \conc{\mf{\com{C}_1}}{\mf{\checkmark}} \equiv
        \conc{\mf{\com{C}_1}}{(\mf{\com{C}_2} \backslash l)} \\
      \Rightarrow & \set{\conc{\mf{\com{C}_1}}{\mf{\checkmark}} \equiv \mf{\com{C}_1} } \\
      & \mf{\com{C}_1} \equiv \conc{\mf{\com{C}_1}}{(\mf{\com{C}_2} \backslash l)} \\
      \Rightarrow & \set{\text{Lemma~\ref{lem:conc-rem-init1}},
                    \text{Definition~\ref{def:den-sem1}}} \\
      & \mf{\com{C}_1} \equiv \mf{\conc{\com{C}_1}{\com{C}_2}} \backslash l \\
    \end{align*}

  \item $\conc{\com{C}_1}{\com{C}_2} \xrightarrow{l} \conc{\com{C}_1}{\com{C}'_2}$
    \begin{align*}
      & \conc{\com{C}_1}{\com{C}_2} \xrightarrow{l} \conc{\com{C}_1}{\com{C}'_2} \\
      \Rightarrow & \set{\text{Figure~\ref{fig:op-small1} entails}} \\
      & \com{C}_2 \xrightarrow{l} \com{C}'_2 \\
      \Rightarrow & \set{\text{i.h.}} \\
      & \mf{\com{C}'_2} \equiv \mf{\com{C}_2} \backslash l \\
      \Rightarrow & \set{\text{Lemma~\ref{lem:op-sim-1}}} \\
      & \conc{\mf{\com{C}_1}}{\mf{\com{C}'_2}} \equiv \conc{\mf{\com{C}_1}}{(\mf{\com{C}_2} \backslash l)} \\
      \Rightarrow & \set{\text{Lemma~\ref{lem:conc-rem-init1}},
                    \text{Definition~\ref{def:den-sem1}}} \\
      & \mf{\conc{\com{C}_1}{\com{C}'_2}} \equiv \mf{\conc{\com{C}_1}{\com{C}_2}} \backslash l \\
    \end{align*}
  \end{itemize}
\end{proof}

\subsubsection*{Proof of Theorem~\ref{res:soundII-1}}
\begin{proof}
  Induction over the length of $\omega$, which is denoted by $|\omega|$.

  \begin{itemize}
  \item $|\omega| = 1$

    It follows directly that
    $\exists \set{l} \in \confES{\mf{\com{C}}}\, .\, \emptyset \cchain{l} \set{l}$

  \item $|\omega| > 1$
    \begin{align*}
      & \com{C} \xtwoheadrightarrow{\omega} \com{C}' \\
      \Rightarrow & \set{\text{Definition~\ref{fig:op-nstep1}}} \\
      & \com{C} \xrightarrow{l} \com{C}'' \qquad
        \com{C}'' \xtwoheadrightarrow{\omega'} \com{C}' \\
      \Rightarrow & \set{\text{Lemma~\ref{res:soundI-1}, i.h.}} \\
      & \mf{\com{C}''} \equiv \mf{\com{C}} \backslash l \qquad
        \exists y \in \confES{\mf{\com{C}''}}\, .\, \emptyset \cchain{\omega'} y \\
      \Rightarrow & \set{\text{Definition~\ref{def:rem-init1}}} \\
      & \set{l} \cup y \in \confES{\mf{\com{C}}}\, .\, \emptyset \cchain{l} \set{l} \cchain{\omega'} \set{l} \cup y = x
    \end{align*}
  \end{itemize}
\end{proof}

\subsubsection*{Proof of Lemma~\ref{res:adI-1}}
\begin{proof}
  Induction over the interpretation of commands.

  \begin{itemize}
  \item $sk \in \init{\mf{\com{skip}}}$

    Let $\com{C}' = \checkmark$.  It follows directly that $\com{skip} \xrightarrow{sk} \checkmark$
    and that $\mf{\com{skip}} \backslash sk \equiv \mf{\checkmark}$.

  \item $a \in \init{\mf{\com{a}}}$

    Let $\com{C}' = \checkmark$.  It follows directly that $\com{a} \xrightarrow{a} \checkmark$
    and that $\mf{\com{a}} \backslash a \equiv \mf{\checkmark}$.

  \item $l \in \init{\mf{\seq{\com{C}_1}{\com{C}_2}}}$

    By Definition~\ref{def:pes-seq1} we have that $l \in \init{\mf{\com{C}_1}}$.  By i.h.,
    $\exists \com{C}'$ such that $\com{C}_1 \xrightarrow{l} \com{C}'$ and
    $\mf{\com{C}_1} \backslash l = \mf{\com{C}'}$. We have two cases:
    \begin{enumerate}
    \item $\com{C}' = \checkmark$

      We have $\com{C}_1 \xrightarrow{l} \checkmark$ and
      $\mf{\com{C}_1} \backslash l \equiv \mf{\checkmark}$.  By the rules in Figure~\ref{fig:op-small1},
      $\seq{\com{C}_1}{\com{C}_2} \xrightarrow{l} \com{C}_2$.
      By Definition~\ref{def:pes-seq1}, $\seq{(\mf{\com{C}_1} \backslash l)}{\mf{\com{C}_2}} \equiv
      \seq{\mf{\checkmark}}{\mf{\com{C}_2}} \equiv \mf{\com{C}_2}$.
      
    \item $\com{C}' = \com{C}'_1$

      We have $\com{C}_1 \xrightarrow{l} \com{C}'_1$ and
      $\mf{\com{C}_1} \backslash l \equiv \mf{\com{C}'_1}$.  By the rules in Figure~\ref{fig:op-small1},
      $\seq{\com{C}_1}{\com{C}_2} \xrightarrow{l} \seq{\com{C}'_1}{\com{C}_2}$.  By
      Definition~\ref{def:pes-seq1},
      $\seq{(\mf{\com{C}_1} \backslash l)}{\mf{\com{C}_2}} \equiv \seq{\mf{\com{C}'_1}}{\mf{\com{C}_2}}$.
      By Definition~\ref{def:den-sem1}, $\mf{\seq{\com{C}'_1}{\com{C}_2}}$.
    \end{enumerate}

  \item $l \in \init{\mf{\nd{\com{C}_1}{\com{C}_2}}}$

    By Definition~\ref{def:pes-nd1} we have two cases: 
    \begin{enumerate}
    \item $l \in \init{\mf{\com{C}_1}}$

      By i.h. $\exists \com{C}'\, .\, \com{C}_1 \xrightarrow{l} \com{C}'$ and
      $\mf{\com{C}_1} \backslash l \equiv \mf{\com{C}'}$. 
      By the rules in Figure~\ref{fig:op-small1} we have two cases:
      \begin{enumerate}
      \item $\com{C}' = \checkmark$

        We have $\com{C}_1 \xrightarrow{l} \checkmark$ and
        $\mf{\com{C}_1} \backslash l \equiv \mf{\checkmark}$.  By the rules in Figure~\ref{fig:op-small1}
        we have $\nd{\com{C}_1}{\com{C}_2} \xrightarrow{l} \checkmark$.  By
        Lemma~\ref{lem:nd-rem-init1} we have
        $\mf{\checkmark} \equiv \mf{\com{C}_1} \backslash l \equiv
        (\nd{\mf{\com{C}_1}}{\mf{\com{C}_2}}) \backslash l$.
        By Definition~\ref{def:den-sem1}, $\mf{\nd{\com{C}_1}{\com{C}_2}} \backslash l$.

      \item $\com{C}' = \com{C}'_1$

        We have $\com{C}_1 \xrightarrow{l} \com{C}'_1$ and
        $\mf{\com{C}_1} \backslash l \equiv \mf{\com{C}'_1}$.  By the rules in Figure~\ref{fig:op-small1}
        we have $\nd{\com{C}_1}{\com{C}_2} \xrightarrow{l} \com{C}'_1$.  By
        Lemma~\ref{lem:nd-rem-init1} we have
        $\mf{\com{C}'_1} \equiv \mf{\com{C}_1} \backslash l \equiv (\nd{\mf{\com{C}_1}}{\mf{\com{C}_2}})
        \backslash l$.  By Definition~\ref{def:den-sem1},
        $\mf{\nd{\com{C}_1}{\com{C}_2}} \backslash l$.
      \end{enumerate}
      
    \item $l \in \init{\mf{\com{C}_2}}$
      
      By i.h. $\exists \com{C}'\, .\, \com{C}_2 \xrightarrow{l} \com{C}'$ and
      $\mf{\com{C}_2} \backslash l \equiv \mf{\com{C}'}$. 
      By the rules in Figure~\ref{fig:op-small1} we have two cases:
      \begin{enumerate}
      \item $\com{C}' = \checkmark$

        We have $\com{C}_2 \xrightarrow{l} \checkmark$ and
        $\mf{\com{C}_2} \backslash l \equiv \mf{\checkmark}$.  By the rules in Figure~\ref{fig:op-small1}
        we have $\nd{\com{C}_1}{\com{C}_2} \xrightarrow{l} \checkmark$.  By
        Lemma~\ref{lem:nd-rem-init1} we have
        $\mf{\checkmark} \equiv \mf{\com{C}_2} \backslash l \equiv
        (\nd{\mf{\com{C}_1}}{\mf{\com{C}_2}}) \backslash l$.
        By Definition~\ref{def:den-sem1}, $\mf{\nd{\com{C}_1}{\com{C}_2}} \backslash l$.

      \item $\com{C}' = \com{C}'_2$

        We have $\com{C}_2 \xrightarrow{l} \com{C}'_2$ and
        $\mf{\com{C}_2} \backslash l \equiv \mf{\com{C}'_2}$.  By the rules in Figure~\ref{fig:op-small1}
        we have $\nd{\com{C}_1}{\com{C}_2} \xrightarrow{l} \com{C}'_2$.  By
        Lemma~\ref{lem:nd-rem-init1} we have
        $\mf{\com{C}'_2} \equiv \mf{\com{C}_2} \backslash l \equiv
        (\nd{\mf{\com{C}_1}}{\mf{\com{C}_2}}) \backslash l$.
        By Definition~\ref{def:den-sem1}, $\mf{\nd{\com{C}_1}{\com{C}_2}} \backslash l$.
      \end{enumerate}
    \end{enumerate}

  \item $l \in \init{\mf{\conc{\com{C}_1}{\com{C}_2}}}$

    By Definition~\ref{def:pes-conc1} we have two cases: 
    \begin{enumerate}
    \item $l \in \init{\mf{\com{C}_1}}$

      By i.h. $\exists \com{C}'\, .\, \com{C}_1 \xrightarrow{l} \com{C}'$ and
      $\mf{\com{C}_1} \backslash l \equiv \mf{\com{C}'}$. 
      By the rules in Figure~\ref{fig:op-small1} we have two cases:
      \begin{enumerate}
      \item $\com{C}' = \checkmark$

        We have $\com{C}_1 \xrightarrow{l} \checkmark$ and
        $\mf{\com{C}_1} \backslash l \equiv \mf{\checkmark}$.  By the rules in
        Figure~\ref{fig:op-small1} we have $\conc{\com{C}_1}{\com{C}_2} \xrightarrow{l} \com{C}_2$.
        By Definition~\ref{def:pes-conc1} we have
        $\conc{(\mf{\com{C}_1}\backslash l)}{\mf{\com{C}_2}}$.
        By Lemma~\ref{lem:conc-rem-init1} we have
        $(\conc{\mf{\com{C}_1}}{\mf{\com{C}_2}}) \backslash l$.  By
        Definition~\ref{def:den-sem1}, $\mf{\conc{\com{C}_1}{\com{C}_2}} \backslash l$.

      \item $\com{C}' = \com{C}'_1$

        We have $\com{C}_1 \xrightarrow{l} \com{C}'_1$ and
        $\mf{\com{C}_1} \backslash l \equiv \mf{\com{C}'_1}$.  By the rules in
        Figure~\ref{fig:op-small1} we have
        $\conc{\com{C}_1}{\com{C}_2} \xrightarrow{l} \conc{\com{C}'_1}{\com{C}_2}$. By
        Definition~\ref{def:pes-conc1},
        $\conc{(\mf{\com{C}_1}\backslash l)}{\mf{\com{C}_2}}$.
        By Lemma~\ref{lem:conc-rem-init1} we have
        $(\conc{\mf{\com{C}_1}}{\mf{\com{C}_2}}) \backslash l$.  By
        Definition~\ref{def:den-sem1}, $\mf{\conc{\com{C}_1}{\com{C}_2}} \backslash l$.
      \end{enumerate}
      
    \item $l \in \init{\mf{\com{C}_2}}$
      
      By i.h. $\exists \com{C}'\, .\, \com{C}_2 \xrightarrow{l} \com{C}'$ and
      $\mf{\com{C}_2} \backslash l \equiv \mf{\com{C}'}$. 
      By the rules in Figure~\ref{fig:op-small1} we have two cases:
      \begin{enumerate}
      \item $\com{C}' = \checkmark$

        We have $\com{C}_2 \xrightarrow{l} \checkmark$ and
        $\mf{\com{C}_2} \backslash l \equiv \mf{\checkmark}$.  By the rules in
        Figure~\ref{fig:op-small1} we have $\conc{\com{C}_1}{\com{C}_2} \xrightarrow{l} \com{C}_1$.
        By Definition~\ref{def:pes-conc1} we have
        $\conc{\mf{\com{C}_1}}{(\mf{\com{C}_2}\backslash l)}$.
        By Lemma~\ref{lem:conc-rem-init1} we have
        $(\conc{\mf{\com{C}_1}}{\mf{\com{C}_2}}) \backslash l$.  By
        Definition~\ref{def:den-sem1}, $\mf{\conc{\com{C}_1}{\com{C}_2}} \backslash l$.

      \item $\com{C}' = \com{C}'_2$

        We have $\com{C}_2 \xrightarrow{l} \com{C}'_2$ and
        $\mf{\com{C}_2} \backslash l \equiv \mf{\com{C}'_2}$.  By the rules in
        Figure~\ref{fig:op-small1} we have
        $\conc{\com{C}_1}{\com{C}_2} \xrightarrow{l} \conc{\com{C}_1}{\com{C}'_2}$. By
        Definition~\ref{def:pes-conc1},
        $\conc{\mf{\com{C}_1}}{(\mf{\com{C}_2}\backslash l)}$.
        By Lemma~\ref{lem:conc-rem-init1} we have
        $(\conc{\mf{\com{C}_1}}{\mf{\com{C}_2}}) \backslash l$.  By
        Definition~\ref{def:den-sem1}, $\mf{\conc{\com{C}_1}{\com{C}_2}} \backslash l$.
      \end{enumerate}
    \end{enumerate}
  \end{itemize}
\end{proof}

\subsubsection*{Proof of Theorem~\ref{res:adII-1}}
\begin{proof}
  Induction over the length of $\omega$.  
  \begin{itemize}
  \item $|\omega| = 1$

    We have $\set{l} \in \confES{\mf{\com{C}}}$ such that $\emptyset \cchain{l} \set{l}$.
    Furthermore $l \in \init{\mf{\com{C}}}$.  By Lemma~\ref{res:adI-1},
    $\com{C} \xrightarrow{l} \com{C}'$ and
    $\mf{\com{C}'} \equiv \mf{\com{C}} \backslash l$.  By the rules in
    Figure~\ref{fig:op-nstep1}, $\com{C} \xtwoheadrightarrow{l} \com{C}'$.

  \item $|\omega| > 1$

    We have $x \in \confES{\mf{\com{C}}}$ such that $\emptyset \cchain{\omega} x$.  Since
    $\omega = l_0 l_1 \dots l_n$, then $\emptyset \cchain{l_0} \set{l_0} \cchain{\omega'} x$.  Hence
    $l_0 \in \init{\mf{\com{C}}}$. By Lemma~\ref{res:adI-1},
    $\com{C} \xrightarrow{l_0} \com{C}'$ and
    $\mf{\com{C}'} \equiv \mf{\com{C}} \backslash l_0$.  By
    Definition~\ref{def:rem-init1}, $\exists y \in \confES{\mf{\com{C}'}}$ such that
    $\emptyset \cchain{\omega'} y$.  By i.h. $\exists \com{C}''$ such that
    $\com{C}' \xtwoheadrightarrow{\omega'} \com{C}''$.  By the rules in Figure~\ref{fig:op-nstep1},
    $\com{C} \xtwoheadrightarrow{\omega} \com{C}''$, where $\omega = l_0 : \omega'$.
  \end{itemize}
\end{proof}

\subsubsection*{Proof of Lemma~\ref{lem:po-1}}
\begin{proof}
  Let $\es{E}_1$, $\es{E}_2$, and $\es{E}_3$ be event structures.
  \begin{itemize}
  \item Reflexivity: $\es{E}_1 \trianglelefteq \es{E}_1$

    It follows directly from Definition~\ref{def:pes-fix-order-1}.

  \item Transitivity: $\es{E}_1 \trianglelefteq \es{E}_2, \es{E}_2 \trianglelefteq \es{E}_3 \Rightarrow \es{E}_1 \trianglelefteq \es{E}_3$
    \begin{enumerate}
    \item $E_1 \subseteq E_3$

      Let $e \in E_1$. Since $\es{E}_1 \trianglelefteq \es{E}_2$ then $e \in E_2$. Since
      $\es{E}_2 \trianglelefteq \es{E}_3$ then $e \in E_3$. Hence $E_1 \subseteq E_3$.

    \item $e \leq_1 e' \Leftrightarrow e, e' \in E_1, e \leq_3 e'$
      \begin{itemize}
      \item[$\Rightarrow$] Let $e \leq_1 e'$.

        Clearly $e, e' \in E_1$. Since $\es{E}_1 \trianglelefteq \es{E}_2$ then $e \leq_2
        e'$. Furthermore $e, e' \in E_2$.  Since $\es{E}_2 \trianglelefteq \es{E}_3$ then
        $e \leq_3 e'$.
      \item[$\Leftarrow$] Let $e, e' \in E_1, e \leq_3 e'$.

        Since $\es{E}_1 \trianglelefteq \es{E}_2$ then $e, e' \in E_2$.  Since
        $\es{E}_2 \trianglelefteq \es{E}_3$ then $e \leq_2 e'$.  Since
        $\es{E}_1 \trianglelefteq \es{E}_2$ $e \leq_1 e'$.
      \end{itemize}
      
    \item $e \#_1 e' \Leftrightarrow e, e' \in E_1, e \#_3 e'$

      Similar to $\leq$, \ie\ previous point.
    \end{enumerate}

  \item Antisymmetry: $\es{E}_1 \trianglelefteq \es{E}_2, \es{E}_2 \trianglelefteq \es{E}_1 \Rightarrow \es{E}_1 = \es{E}_2$
    \begin{enumerate}
    \item $E_1 = E_2$

      Let $e \in E_1$. Since $\es{E}_1 \trianglelefteq \es{E}_2$ then $e \in E_2$.  Let $e \in
      E_2$. Since $\es{E}_2 \trianglelefteq \es{E}_1$ then $e \in E_1$.  Hence $E_1 = E_2$.

    \item $(e \leq_1 e' \Leftrightarrow e, e' \in E_1, e \leq_2 e') = (e \leq_2 e' \Leftrightarrow e, e' \in E_2, e \leq_1 e')$
      \begin{itemize}
      \item[$\Rightarrow$] Let $e \leq_1 e'$.  Clearly $e, e' \in \es{E}_1$.  From
        $\es{E}_1 \trianglelefteq \es{E}_2$, $e \leq_2 e'$.

        Let $e \leq_2 e'$.  Clearly $e, e' \in \es{E}_2$.  From
        $\es{E}_2 \trianglelefteq \es{E}_1$, $e \leq_1 e'$.

      \item[$\Leftarrow$] Let $e, e' \in E_1, e \leq_2 e'$. Since
        $\es{E}_1 \trianglelefteq \es{E}_2$ then $e \leq_1 e'$.  Let $e, e' \in E_2, e \leq_1
        e'$. Since $\es{E}_2 \trianglelefteq \es{E}_1$ then $e \leq_2 e'$.
      \end{itemize}
      Hence $\leq_1 = \leq_2$

      \item $(e \#_1 e' \Leftrightarrow e, e' \in E_1, e \#_2 e') = (e \#_2 e' \Leftrightarrow e, e' \in E_2, e \#_1 e')$

    Similar reasoning as previous point.
    \end{enumerate}
  \end{itemize}
\end{proof}

\subsubsection*{Proof of Lemma~\ref{lem:po-least-elem-1}}
\begin{proof}
  \begin{itemize}
  \item $\bot$ is an event structure

    It follows directly that all conditions in Definition~\ref{def:pes} are trivially satisfied
    because $\bot$ has no events.

  \item For any event structure $\es{E} = \pes{E}{\leq}{\#}$ we want to show
    $\bot \trianglelefteq \es{E}$.
    \begin{enumerate}
    \item $\emptyset \subseteq E$

      Trivially holds.
      
    \item $e \leq_{\bot} e' \Leftrightarrow e, e' \in \emptyset, e \leq e'$

      Since $\bot$ has no events and the causal relation is empty, it follows that $e \leq_{\bot} e'$
      and $e, e' \in \emptyset$ are false. Hence the condition trivially holds.

    \item $e \#_{\bot} e' \Leftrightarrow e, e' \in \emptyset, e \# e'$

      Similar to previous point.
    \end{enumerate}
  \end{itemize}
\end{proof}

\subsubsection*{Proof of Lemma~\ref{lem:lub-es-1}}
\begin{proof}
  It follows directly from Definition~\ref{def:lub-1} that $\leq^{\omega}$ is a partial order and
  that $\#^{\omega}$ is symmetric and irreflexive, since for every case, it exists $n \in \omega$
  such that $\leq^{\omega} = \leq_n$ and $\#^{\omega} = \#_n$, with $\leq_n$ being a partial order
  and $\#_n$ being symmetric and irreflexive.
  
  \begin{itemize}
  \item $\set{e' \mid e' \leq^\omega e}$ is finite

    By Definition~\ref{def:lub-1}, $e' \leq^\omega e'$ entails $\exists n \in \omega$ such that
    $e' \leq_n e$.  Consequently, $e, e' \in E_n$.  Furthermore, $\es{E}_n$ is an event
    structure. Hence $\set{e' \mid e' \leq_n e}$ is finite.  Thus $\set{e' \mid e' \leq^\omega e}$
    is finite.

  \item $e \#^\omega e' \leq^\omega e'' \Rightarrow e \#^\omega e''$

    By Definition~\ref{def:lub-1}, $e' \leq^\omega e'$ entails $\exists n \in \omega$ such that
    $e \#_n e' \leq_n e''$, where $\es{E}_n$ is an event structure.  Hence $e \#_n e''$. Thus
    $e \#^\omega e''$.
  \end{itemize}
\end{proof}

\subsubsection*{Proof of Lemma~\ref{lem:lub-1}}
\begin{proof}
  \begin{itemize}
  \item $\es{E}^\omega$ is an upper bound

    $\forall n \in \omega$ we need to have $\es{E}_n \trianglelefteq \es{E}^\omega$.  It follows
    directly from Definition~\ref{def:pes-fix-order-1} that
    $\forall n \in \omega\, \es{E}_n \trianglelefteq \es{E}^\omega$ since $\es{E}^\omega$ is by
    definition the union of all $\es{E}_n$.
    
  \item $\es{E}^\omega$ is the least upper bound

    Let $\es{E} = \pes{E}{\leq}{\#}$ be an upper bound of the chain. We need to show that if
    $\es{E}_n \trianglelefteq \es{E}^\omega$ and $\es{E}_n \trianglelefteq \es{E}$ then
    $\es{E}^\omega \trianglelefteq \es{E}$.
    \begin{enumerate}
    \item $E^\omega \subseteq E$

      Let $e \in E^\omega$.  By Definition~\ref{def:lub-1}, $\exists n \in \omega$ such that
      $e \in E_n$. By $\es{E}_n \trianglelefteq \es{E}$ we have $e \in E$.
      
    \item $e \leq^\omega e' \Leftrightarrow e, e' \in E^\omega$ and $e \leq e'$
      \begin{itemize}
      \item[$\Rightarrow$] Let $ e \leq^\omega e'$.

        By Definition~\ref{def:lub-1}, $\exists n \in \omega$ such that $e \leq_n e'$.  It is then
        clear that $e, e' \in E_n \subseteq E^\omega$.  Since $\es{E}_n \trianglelefteq \es{E}$ we
        have $e \leq e'$.
        
      \item[$\Leftarrow$] $e, e' \in E^\omega$ and $e \leq e'$

        By Definition~\ref{def:lub-1}, $\exists n \in \omega$ such that $e, e' \in E_n$.  Since
        $\es{E}_n \trianglelefteq \es{E}$, $e \leq_n e'$.  By Definition~\ref{def:lub-1},
        $e \leq^\omega e'$.
      \end{itemize}
      
    \item $e \#^\omega e' \Leftrightarrow e, e' \in E^\omega$ and $e \# e'$

      Similar to previous point.
    \end{enumerate}
  \end{itemize}
\end{proof}

\subsubsection*{Proof of Lemma~\ref{lem:seq-fix-mono-1}}
\begin{proof}
  Let $\es{E} = \pes{E}{\leq}{\#}$, $\es{E}_1 = \pes{E_1}{\leq_1}{\#_1}$,
  $\es{E}_2 = \pes{E_2}{\leq_2}{\#_2}$, $\seq{\es{E}}{\es{E}_1} = \pes{E^1}{\leq^1}{\#^1}$, and
  $\seq{\es{E}}{\es{E}_2} = \pes{E^2}{\leq^2}{\#^2}$, such that $\es{E}_1 \trianglelefteq \es{E}_2$.

  \begin{enumerate}
  \item $E^1 \subseteq E^2 \Leftrightarrow E \uplus (E_1 \times \confmax{\es{E}}) \subseteq E \uplus (E_2 \times \confmax{\es{E}})$

    By Definition~\ref{def:pes-seq1} we have two cases:
    \begin{enumerate}
    \item $e \in E$

      Then we are done.

    \item $e \in E_1 \times \confmax{\es{E}}$

      We know that $e$ is of the form $(e_1, x)$ where $e_1 \in E_1$ and $x \in \confmax{\es{E}}$.
      Since $\es{E}_1 \trianglelefteq \es{E}_2$ then $e_1 \in E_2$ and consequently
      $e \in \confmax{\es{E}}$.      
    \end{enumerate}
    
  \item $\forall e, e'\ .\ e \leq^1 e' \Leftrightarrow e, e' \in E^1$ and $e \leq^2 e'$
    \begin{itemize}
    \item[$\Rightarrow$]
      Assume $e \leq^1 e$.
      Clearly $e, e' \in E^1$.
      By Definition~\ref{def:pes-seq1} we have three cases:
      \begin{enumerate}
      \item $e \leq^1 e'$ is of the form $e \leq e'$

        Hence $e, e' \in E$. By Definition~\ref{def:pes-seq1} $e \leq^2 e'$.

      \item $e \leq e'$ is of the form $e \leq_{1} e'$

        We know that $e, e'$ are of the form $(e,x), (e',x) \in E_1 \times \confmax{\es{E}}$, which
        entails $e, e' \in E_1$ and $x \in \confmax{\es{E}}$. Since
        $\es{E}_1 \trianglelefteq \es{E}_2$, $e, e' \in E_2$ and $e \leq_2 e'$. By
        Definition~\ref{def:pes-seq1} we have $(e, x) \leq^2 (e', x)$.
        
      \item $e \leq^1 e'$ is of the form $e \leq^1 (e',x)$

        We know that $e \in E$, $e \in x \in \confmax{\es{E}}$, and
        $(e',x) \in E_1 \times \confmax{es{E}}$, with the last entailing $e' \in E_1$. Since
        $\es{E}_1 \trianglelefteq \es{E}_2$, $e' \in E_2$ and consequently
        $(e', x) \in E_2 \times \confmax{\es{E}}$.  By Definition~\ref{def:pes-seq1} we have
        $e \leq^2 (e',x)$.
      \end{enumerate}

    \item[$\Leftarrow$] Assume $e, e' \in E^1$ and $e \leq^2 e'$. The cases are distinguished by
      $\leq^2$.
      \begin{enumerate}
      \item $e \leq^2 e'$ is of the form $e \leq e'$

        Hence $e, e' \in E$. By Definition~\ref{def:pes-seq1}, $e \leq^1 e'$.

      \item $e \leq^2 e$ is of the form $e \leq_{2} e'$

        We know that $e, e'$ are of the form $(e,x), (e',x) \in E_2 \times \confmax{\es{E}}$, which
        entails $e, e' \in E_2$ and $x \in \confmax{\es{E}}$. Since
        $\es{E}_1 \trianglelefteq \es{E}_2$ and $e, e' \in E^1$, which entails for this case that
        $e, e' \in E_1$, then $e \leq_1 e'$. By Definition~\ref{def:pes-seq1} we have
        $(e, x) \leq^1 (e', x)$.

      \item $e \leq^2 e'$ is of the form $e \leq^2 (e',x)$

        We know that $e \in E$, $e \in x \in \confmax{\es{E}}$, and
        $(e',x) \in E_2 \times \confmax{es{E}}$, with the last entailing $e' \in E_2$. Since
        $\es{E}_1 \trianglelefteq \es{E}_2$ and $e' \in E^1$, which entails for this case that
        $e' \in E_1$, then $(e', x) \in E_1 \times \confmax{\es{E}}$.  By
        Definition~\ref{def:pes-seq1} we have $e \leq^1 (e',x)$.
      \end{enumerate}
    \end{itemize}

  \item $\forall e, e'\ .\ e \#^1 e' \Leftrightarrow e, e' \in E$ and $e \#^2 e'$
    \begin{itemize}
    \item[$\Rightarrow$] Assume $e \#^1 e'$.

      Clearly $e, e' \in E^1$.  From Definition~\ref{def:pes-seq1} we have that
      $\exists (a \leq^1 e,\ a' \leq^1 e')$ such that $a \# a'$ or $e \#_1 e'$. For the former we
      have that $a \leq^1 e,\ a' \leq^1 e'$ entails $a \leq^2 e,\ a' \leq^2 e'$. For the latter we
      have that $\es{E}_1 \trianglelefteq \es{E}_2$, hence $e \#_2 e'$.  Thus $e \#^2 e'$.

    \item[$\Leftarrow$] Assume $e, e' \in E^1$ and $e \#^2 e'$.

      By Definition~\ref{def:pes-seq1} it exists $a \leq^2 e$ and $a' \leq^2 e'$ such that $a \# a'$
      or $e \#_2 e'$. Since $e, e' \in E^1$ and $\es{E}_1 \trianglelefteq \es{E}_2$ we have
      $a \leq^1 e,\ a' \leq^1 e'$ and $e \#_1 e'$.  It then follows directly that $e \#^1 e'$.
    \end{itemize}
  \end{enumerate}
\end{proof}

\subsubsection*{Proof of Lemma~\ref{lem:conc-fix-mono-1}}
\begin{proof}
  It follows directly from Definition~\ref{def:pes-conc1}.
\end{proof}

\subsubsection*{Proof of Lemma~\ref{lem:nd-fix-mono-1}}
\begin{proof}
  It follows directly from Definition~\ref{def:pes-nd1}.
\end{proof}

\subsubsection*{Proof of Lemma~\ref{lem:cont-1}}
\begin{proof}
 \begin{itemize}
 \item $\Rightarrow$: Assume $op$ is continuous.

   We have $op(\bigsqcup_n \es{E}_n) = \bigsqcup_n op (\es{E}_n)$.  Let
   $\es{E}_1 \trianglelefteq \dots \trianglelefteq \es{E}_n \trianglelefteq \dots$ and
   $\es{E}'_1 \trianglelefteq \dots \trianglelefteq \es{E}'_n \trianglelefteq \dots$ be two
   $\omega$-chains such that
   $\es{E}_1 \trianglelefteq \es{E}'_1, \dots, \es{E}_n \trianglelefteq \es{E}'_n$. We want to show
   that
   $\es{E}_1 \trianglelefteq \es{E}'_1 \Rightarrow op \es{E}_1 \trianglelefteq op \es{E}'_1, \dots,
   \es{E}_n \trianglelefteq \es{E}'_n \Rightarrow op \es{E}_n \trianglelefteq op \es{E}'_n, \dots$
   For that we can make use of the least upper bound, \ie\
   $\bigsqcup_n \es{E}_n \trianglelefteq \bigsqcup_n \es{E}'_n \Rightarrow op (\bigsqcup_n \es{E}_n)
   \trianglelefteq \bigsqcup_n op (\es{E}'_n)$.  Since $op$ is continuous,
   $op \bigsqcup_n \es{E}_n \trianglelefteq \bigsqcup_n op \es{E}'_n$. Hence $op$ is monotonic. Now
   it lacks to show that each event of $op (\bigsqcup_n \es{E}_n)$ is an event of
   $\bigsqcup_n op (\es{E}'_n)$. But that comes freely since
   $op(\bigsqcup_n \es{E}_n) = \bigsqcup_n op (\es{E}_n)$.
   
  \item $\Leftarrow$: Assume $1.$ and $2.$ above.

    We want to show $op(\bigsqcup_n \es{E}_n) = \bigsqcup_n op (\es{E}_n)$.  Let
    $\es{E}_1 \trianglelefteq \dots \trianglelefteq \es{E}_n \trianglelefteq \dots$ be a
    $\omega$-chain.  By $1.$ we know that $op$ is monotonic, hence
    $\es{E}_n \trianglelefteq \bigsqcup_n \es{E}_n$ entails
    $op(\es{E}_n) \trianglelefteq op( \bigsqcup_n \es{E}_N)$ that leads to
    $op(\bigsqcup_n \es{E}_n) \trianglelefteq \bigsqcup_n op (\es{E}_n)$.  By $2.$, each event of
    $op(\bigsqcup_n \es{E}_n)$ is an event of $\bigsqcup_n op (\es{E}_n)$. Hence by
    Definition~\ref{def:pes-fix-order-1}, $op(\bigsqcup_n \es{E}_n) = \bigsqcup_n op (\es{E}_n)$.
 \end{itemize} 
\end{proof}

\subsubsection*{Proof of Lemma~\ref{lem:seq-cont-1}}
\begin{proof}
  By Lemma~\ref{lem:seq-fix-mono-1} we know that sequential composition is monotone w.r.t
  $\trianglelefteq$ at right.  It lacks to show that each event of
  $\seq{\es{E}}{\bigsqcup_m \es{E}_m}$ is an event of $\bigsqcup_m(\seq{\es{E}}{\es{E}_m})$.  Let
  $\es{E}_1 \trianglelefteq \dots \trianglelefteq \es{E}_m \trianglelefteq \dots$ be an
  $\omega$-chain such that $\bigsqcup_m \es{E}_m$ is its least upper bound and
  $\seq{\es{E}}{\es{E}_1} \trianglelefteq \dots \trianglelefteq \seq{\es{E}}{\es{E}_m}
  \trianglelefteq \dots$ be another $\omega$-chain with $\bigsqcup_m(\seq{\es{E}}{\es{E}_m})$ as its
  least upper bound.  Let $e$ be an event of $\seq{\es{E}}{\bigsqcup_m \es{E}_m}$. By
  Definition~\ref{def:pes-seq1} we have two cases:
  \begin{enumerate}
  \item $e$ is an event of $\es{E}$

    Then we are done, since $\forall m$, $e$ is an event of $\seq{\es{E}}{\es{E}_m}$. Hence it is an
    event of $\bigsqcup_m(\seq{\es{E}}{\es{E}_m})$.
  \item $e$ is an event of $(\bigcup_{m \in \omega} E_m) \times \confmax{\es{E}}$

    We know that $e$ is of the form $(e_m, x)$ with $e_m$ an event of $\bigsqcup_m \es{E}_m$ and
    $x \in \confmax{\es{E}}$.  The former entails $\exists m$ such that $e_m$ is an event of
    $\es{E}_m$. By Definition~\ref{def:pes-seq1} we have $(e_m, x)$ as an event of
    $\seq{\es{E}}{\es{E}_m}$. Consequently $(e_m, x)$ is an event of
    $\bigsqcup_m(\seq{\es{E}}{\es{E}_m})$.
  \end{enumerate}
  By Lemma~\ref{lem:cont-1} we are done.
\end{proof}

\subsubsection*{Proof of Lemma~\ref{lem:conc-cont-1}}
\begin{proof}
  By Lemma~\ref{lem:conc-fix-mono-1} we know that parallel composition is monotone w.r.t
  $\trianglelefteq$.  It lacks to show that each event of
  $\conc{\bigsqcup_n \es{E}_n}{\bigsqcup_m \es{E}_m}$ is an event of
  $\bigsqcup_{n,m}(\conc{\es{E}_n}{\es{E}_m})$.

  Let $\es{E}_1 \trianglelefteq \dots \trianglelefteq \es{E}_n \trianglelefteq \dots$ and
  $\es{E}'_1 \trianglelefteq \dots \trianglelefteq \es{E}'_m \trianglelefteq \dots$ be 
  $\omega$-chains with least upper bound $\bigsqcup_n \es{E}_n$ and  $\bigsqcup_m \es{E}_m$, respectively.
  Let $e$ be an event of $\conc{\bigsqcup_n \es{E}_n}{\bigsqcup_m \es{E}_m}$.
  By Definition~\ref{def:pes-conc1} we have two cases:
  \begin{enumerate}
  \item $e$ is an event of $\bigsqcup_n \es{E}_n$

    Then $\exists n \in \omega$ such that $e$ is an event of $\es{E}_n$. By
    Definition~\ref{def:pes-conc1}, $e$ is an event of $\conc{\es{E}_n}{\es{E}_m}$ and consequently
    is an event of $\bigsqcup_{n,m}(\conc{\es{E}_n}{\es{E}_m})$.

  \item $e$ is an event of $\bigsqcup_m \es{E}_m$
    Similar to previous point.
  \end{enumerate}
  By Lemma~\ref{lem:cont-1} we are done.
\end{proof}

\subsubsection*{Proof of Lemma~\ref{lem:nd-cont-1}}
\begin{proof}
  By Lemma~\ref{lem:nd-fix-mono-1} we know that non-deterministic composition is monotone w.r.t
  $\trianglelefteq$.  It lacks to show that each event of
  $\nd{\bigsqcup_n \es{E}_n}{\bigsqcup_m \es{E}_m}$ is an event of
  $\bigsqcup_{n,m}(\nd{\es{E}_n}{\es{E}_m})$.

  Let $\es{E}_1 \trianglelefteq \dots \trianglelefteq \es{E}_n \trianglelefteq \dots$ and
  $\es{E}'_1 \trianglelefteq \dots \trianglelefteq \es{E}'_m \trianglelefteq \dots$ be 
  $\omega$-chains with least upper bound $\bigsqcup_n \es{E}_n$ and  $\bigsqcup_m \es{E}_m$, respectively.
  Let $e$ be an event of $\nd{\bigsqcup_n \es{E}_n}{\bigsqcup_m \es{E}_m}$.
  By Definition~\ref{def:pes-nd1} we have two cases:
  \begin{enumerate}
  \item $e$ is an event of $\bigsqcup_n \es{E}_n$

    Then $\exists n \in \omega$ such that $e$ is an event of $\es{E}_n$. By
    Definition~\ref{def:pes-nd1}, $e$ is an event of $\nd{\es{E}_n}{\es{E}_m}$ and consequently
    is an event of $\bigsqcup_{n,m}(\nd{\es{E}_n}{\es{E}_m})$.

  \item $e$ is an event of $\bigsqcup_m \es{E}_m$
    Similar to previous point.
  \end{enumerate}
  By Lemma~\ref{lem:cont-1} we are done.
\end{proof}

\subsubsection*{Proof of Lemma~\ref{lem:fix-prop-1}}
\begin{proof}
  $\Gamma(fix(\Gamma)) = fix(\Gamma) \Leftrightarrow \Gamma(\bigsqcup_n \Gamma^n(\bot)) =
  \bigsqcup_n(\Gamma^{n}(\bot))$.  Since $\Gamma$ is continuous, we have that
  $\Gamma(\bigsqcup_n \Gamma^n(\bot)) = \bigsqcup_n \Gamma\Gamma^n(\bot) = \bigsqcup_n
  \Gamma^{n+1}(\bot)$.  We note that:
  $\bot \sqcup \bigsqcup_n \Gamma\Gamma^n(\bot) = \bigsqcup_n \Gamma^{n}(\bot)$.  Since $\bot$ is
  the `identity of the least upper bound' we have:
  $\bot \sqcup \bigsqcup_n \Gamma\Gamma^n(\bot) = \bigsqcup_n \Gamma^{n}(\bot) \Leftrightarrow
  \bigsqcup_n \Gamma\Gamma^n(\bot) = \bigsqcup_n \Gamma^{n}(\bot) \Leftrightarrow \Gamma(\bigsqcup_n
  \Gamma^n(\bot)) = \bigsqcup_n \Gamma^{n}(\bot) $.

  Now we need to show that $fix(\Gamma)$ is the least fixpoint.  Let $E$ be an event structure,
  $\Gamma(E) \trianglelefteq E$, and $\perp \trianglelefteq E$.  By the monotonic property
  $\Gamma(\perp) \trianglelefteq \Gamma(E)$.  Since $\Gamma(E) \trianglelefteq E$ then
  $\Gamma(\perp) \trianglelefteq E$.  By induction $\Gamma^{n}(E) \trianglelefteq E$.  Thus
  $fix(\Gamma) = \bigsqcup_n \Gamma^{n}(\perp) \trianglelefteq E$.  Hence $fix(\Gamma)$ is the least
  fixpoint.
\end{proof}

\subsubsection*{Proof of Lemma~\ref{lem:gamma-cont-1}}
\begin{proof}
  \begin{itemize}
  \item
    \begin{align*}
      & \Gamma^{\seq{\com{C}_1}{\com{C}_2}, \gamma}(\bigsqcup_n \es{E}_n) \\
      =& \set{ \text{Definition~\ref{def:fix-den-sem-1}}} \\
      & \mf{\seq{\com{C}_1}{\com{C}_2}}_{\gamma(X \leftarrow \bigsqcup_n \es{E}_n)} \\
      =& \set{ \text{Definition~\ref{def:fix-den-sem-1}}} \\
      & \seq{\mf{\com{C}_1}}{\mf{\com{C}_2}_{\gamma(X \leftarrow \bigsqcup_n \es{E}_n)}} \\
      =& \set{ \text{Definition~\ref{def:fix-den-sem-1}}} \\
      & \seq{\mf{\com{C}_1}}{\bigsqcup_n \Gamma^{\com{C}_2,\gamma}(\es{E}_n)} \\
      =& \set{\text{Lemma~\ref{lem:seq-cont-1}}} \\
      &= \bigsqcup_n (\seq{\mf{\com{C}_1}}{\Gamma^{\com{C}_2,\gamma}(\es{E}_n)}) \\
      =& \set{ \text{Definition~\ref{def:fix-den-sem-1}}} \\
      &= \bigsqcup_n (\seq{\mf{\com{C}_1}}{\mf{\com{C}_2}_{\gamma(X \leftarrow \es{E}_n)}})\\
      =& \set{ \text{Definition~\ref{def:fix-den-sem-1}}} \\
      & \bigsqcup_n \mf{\seq{\com{C}_1}{\com{C}_2}}_{\gamma(X \leftarrow \es{E}_n)}
    \end{align*}

  \item
    \begin{align*}
      & \Gamma^{\conc{\com{C}_1}{\com{C}_2}, \gamma}(\bigsqcup_n \es{E}_n) \\
      =& \set{ \text{Definition~\ref{def:fix-den-sem-1}}} \\
      & \mf{\conc{\com{C}_1}{\com{C}_2}}_{\gamma(X \leftarrow \bigsqcup_n \es{E}_n)} \\
      =& \set{ \text{Definition~\ref{def:fix-den-sem-1}}} \\
      & \conc{\mf{\com{C}_1}_{\gamma(X \leftarrow \bigsqcup_n \es{E}_n)}}{\mf{\com{C}_2}_{\gamma(X \leftarrow \bigsqcup_n \es{E}_n)}} \\
      =& \set{ \text{Definition~\ref{def:fix-den-sem-1}}} \\
      & \conc{\bigsqcup_n \Gamma^{\com{C}_1,\gamma}(\es{E}_n)}{\bigsqcup_n \Gamma^{\com{C}_2,\gamma}(\es{E}_n)} \\
      =& \set{\text{Lemma~\ref{lem:conc-cont-1}}} \\
      &= \bigsqcup_n (\conc{\Gamma^{\com{C}_1,\gamma}(\es{E}_n)}{\Gamma^{\com{C}_2,\gamma}(\es{E}_n)}) \\
      =& \set{ \text{Definition~\ref{def:fix-den-sem-1}}} \\
      &= \bigsqcup_n (\conc{\mf{\com{C}_1}_{\gamma(X \leftarrow \es{E}_n)}}{\mf{\com{C}_2}_{\gamma(X \leftarrow \es{E}_n)}})\\
      =& \set{ \text{Definition~\ref{def:fix-den-sem-1}}} \\
      & \bigsqcup_n \mf{\conc{\com{C}_1}{\com{C}_2}}_{\gamma(X \leftarrow \es{E}_n)}
    \end{align*}

  \item
    \begin{align*}
      & \Gamma^{\nd{\com{C}_1}{\com{C}_2}, \gamma}(\bigsqcup_n \es{E}_n) \\
      =& \set{ \text{Definition~\ref{def:fix-den-sem-1}}} \\
      & \mf{\nd{\com{C}_1}{\com{C}_2}}_{\gamma(X \leftarrow \bigsqcup_n \es{E}_n)} \\
      =& \set{ \text{Definition~\ref{def:fix-den-sem-1}}} \\
      & \nd{\mf{\com{C}_1}_{\gamma(X \leftarrow \bigsqcup_n \es{E}_n)}}{\mf{\com{C}_2}_{\gamma(X \leftarrow \bigsqcup_n \es{E}_n)}} \\
      =& \set{ \text{Definition~\ref{def:fix-den-sem-1}}} \\
      & \nd{\bigsqcup_n \Gamma^{\com{C}_1,\gamma}(\es{E}_n)}{\bigsqcup_n \Gamma^{\com{C}_2,\gamma}(\es{E}_n)} \\
      =& \set{\text{Lemma~\ref{lem:nd-cont-1}}} \\
      &= \bigsqcup_n (\nd{\Gamma^{\com{C}_1,\gamma}(\es{E}_n)}{\Gamma^{\com{C}_2,\gamma}(\es{E}_n)}) \\
      =& \set{ \text{Definition~\ref{def:fix-den-sem-1}}} \\
      &= \bigsqcup_n (\nd{\mf{\com{C}_1}_{\gamma(X \leftarrow \es{E}_n)}}{\mf{\com{C}_2}_{\gamma(X \leftarrow \es{E}_n)}})\\
      =& \set{ \text{Definition~\ref{def:fix-den-sem-1}}} \\
      & \bigsqcup_n \mf{\nd{\com{C}_1}{\com{C}_2}}_{\gamma(X \leftarrow \es{E}_n)}
    \end{align*}

  \item
    \begin{align*}
      & \Gamma^{\rec{X}{\com{C}, \gamma}}(\bigsqcup_n \es{E}_n) \\
      =& \set{ \text{Definition~\ref{def:fix-den-sem-1}}} \\
      & \mf{\rec{X}{\com{C}}}_{\gamma(X \leftarrow \bigsqcup_n \es{E}_n)} \\
      =& \set{ \text{Definition~\ref{def:fix-den-sem-1}}} \\
      & fix(\Gamma^{\com{C},\gamma(X \leftarrow \bigsqcup_n \es{E}_n)})
        = fix(\es{E} \mapsto \Gamma^{\com{C},\gamma(X \leftarrow \bigsqcup_n \es{E}_n)} (\es{E})) \\
      =& \set{ \text{Definition~\ref{def:fix-den-sem-1}}} \\
      & fix(\es{E} \mapsto \mf{\com{C}}_{\gamma(X \leftarrow \bigsqcup_n \es{E}_n, Y \leftarrow E)}) \\
      =& \set{\text{i.h.}} \\
      & fix (\es{E} \mapsto \bigsqcup_n \mf{\com{C}}_{\gamma(X \leftarrow \es{E}_n, Y \leftarrow \es{E})}) \\
      =& \set{\text{\cite[Proposition~2.1.18]{abramsky94}}} \\
      & fix (\bigsqcup_n (\es{E} \mapsto \mf{\com{C}}_{\gamma(X \leftarrow \es{E}_n, Y \leftarrow \es{E})})) \\
      =& \set{\text{$fix$ continuous}} \\
      & \bigsqcup_n fix (\es{E} \mapsto \mf{\com{C}}_{\gamma(X \leftarrow \es{E}_n, Y \leftarrow \es{E})}) \\
      =& \set{ \text{Definition~\ref{def:fix-den-sem-1}}} \\
      & \bigsqcup_n fix (\es{E} \mapsto \Gamma^{\com{C}, \gamma(X \leftarrow \es{E}_n)}(\es{E})) =
        \bigsqcup_n fix (\Gamma^{\com{C}, \gamma(X \leftarrow \es{E}_n)}) \\
      =& \set{ \text{Definition~\ref{def:fix-den-sem-1}}} \\
      & \bigsqcup_n \mf{\rec{X}{\com{C}}}_{\gamma(X \leftarrow \es{E}_n)} \\
      =& \set{ \text{Definition~\ref{def:fix-den-sem-1}}} \\
      & \bigsqcup_n \Gamma^{\rec{X}{\com{C}}, \gamma}(\es{E}_n)
    \end{align*}
  \end{itemize}
\end{proof}

\subsubsection*{Proof of Lemma~\ref{lem:1-1}}
\begin{proof}
  \begin{itemize}
  \item $\mf{\com{skip}[X \leftarrow \mf{\rec{X}{\com{C}}}_\gamma]}_\gamma$

    It follows directly that $\mf{\com{skip}}_{\gamma{(X \leftarrow \mf{\rec{X}{\com{C}}}_\gamma)}}$.

  \item $\mf{\com{a}[X \leftarrow \mf{\rec{X}{\com{C}}}_\gamma]}_\gamma$

    It follows directly that $\mf{\com{a}}_{\gamma{(X \leftarrow \mf{\rec{X}{\com{C}}}_\gamma)}}$.

  \item
    \begin{align*}
      & \mf{(\seq{\com{C}_1}{\com{C}_2})[X \leftarrow \mf{\rec{X}{\com{C}}}_\gamma]}_\gamma \\
      =& \set{\text{Definition~\ref{fig:substitution-1}}} \\
      & \mf{\seq{\com{C_1}}{(\com{C}_2[X \leftarrow \mf{\rec{X}{\com{C}}}_\gamma]})}_\gamma \\
      =& \set{\text{Definition~\ref{def:fix-den-sem-1}}} \\
      & \seq{\mf{\com{C}_1}}{(\mf{\com{C}_2}_\gamma[X \leftarrow \mf{\rec{X}{\com{C}}}_\gamma])} \\
      =& \set{\text{i.h.}} \\
      & \seq{\mf{\com{C}_1}}{\mf{\com{C}_2}_{\gamma(X \leftarrow \mf{\rec{X}{\com{C}}}_\gamma)}} \\
      =& \set{\text{Definition~\ref{def:fix-den-sem-1}}} \\
      & \mf{\seq{\com{C}_1}{\com{C}_2}}_{\gamma(X \leftarrow \mf{\rec{X}{\com{C}}}_\gamma)}
    \end{align*}

  \item
    \begin{align*}
      & \mf{(\conc{\com{C}_1}{\com{C}_2})[X \leftarrow \mf{\rec{X}{\com{C}}}_\gamma]}_\gamma \\
      =& \set{\text{Definition~\ref{fig:substitution-1}}} \\
      & \mf{\conc{\com{C_1}[X \leftarrow \mf{\rec{X}{\com{C}}}_\gamma]}{\com{C}_2[X \leftarrow \mf{\rec{X}{\com{C}}}_\gamma]}}_\gamma \\
      =& \set{\text{Definition~\ref{def:fix-den-sem-1}}} \\
      & \conc{\mf{\com{C}_1}[X \leftarrow \mf{\rec{X}{\com{C}}}_\gamma]}{\mf{\com{C}_2}_\gamma[X \leftarrow \mf{\rec{X}{\com{C}}}_\gamma]} \\
      =& \set{\text{i.h.}} \\
      & \conc{\mf{\com{C}_1}_{\gamma(X \leftarrow \mf{\rec{X}{\com{C}}}_\gamma)}}{
        \mf{\com{C}_2}_{\gamma(X \leftarrow \mf{\rec{X}{\com{C}}}_\gamma)}
        } \\
      =& \set{\text{Definition~\ref{def:fix-den-sem-1}}} \\
      & \mf{\conc{\com{C}_1}{\com{C}_2}}_{\gamma(X \leftarrow \mf{\rec{X}{\com{C}}}_\gamma)}
    \end{align*}

  \item
    \begin{align*}
      & \mf{(\nd{\com{C}_1}{\com{C}_2})[X \leftarrow \mf{\rec{X}{\com{C}}}_\gamma]}_\gamma \\
      =& \set{\text{Definition~\ref{fig:substitution-1}}} \\
      & \mf{\nd{\com{C_1}[X \leftarrow \mf{\rec{X}{\com{C}}}_\gamma]}{\com{C}_2[X \leftarrow \mf{\rec{X}{\com{C}}}_\gamma]}}_\gamma \\
      =& \set{\text{Definition~\ref{def:fix-den-sem-1}}} \\
      & \nd{\mf{\com{C}_1}[X \leftarrow \mf{\rec{X}{\com{C}}}_\gamma]}{\mf{\com{C}_2}_\gamma[X \leftarrow \mf{\rec{X}{\com{C}}}_\gamma]} \\
      =& \set{\text{i.h.}} \\
      & \nd{\mf{\com{C}_1}_{\gamma(X \leftarrow \mf{\rec{X}{\com{C}}}_\gamma)}}{
        \mf{\com{C}_2}_{\gamma(X \leftarrow \mf{\rec{X}{\com{C}}}_\gamma)}
        } \\
      =& \set{\text{Definition~\ref{def:fix-den-sem-1}}} \\
      & \mf{\nd{\com{C}_1}{\com{C}_2}}_{\gamma(X \leftarrow \mf{\rec{X}{\com{C}}}_\gamma)}
    \end{align*}

  \item
    \begin{align*}
      & \mf{(\rec{Y}{\com{C}'})[X \leftarrow \mf{\rec{X}{\com{C}}}_\gamma]}_\gamma \\
      =& \set{\text{Definition~\ref{fig:substitution-1}}} \\
      & \mf{\rec{Y}{(\com{C}'[X \leftarrow \mf{\rec{X}{\com{C}}}_\gamma])}}_\gamma \\
      =& \set{\text{Definition~\ref{def:fix-den-sem-1}}} \\
      & fix(\Gamma^{\com{C}'[X \leftarrow \mf{\rec{X}{\com{C}}}_\gamma], \gamma})
        = fix(\es{E} \mapsto \Gamma^{\com{C}'[X \leftarrow \mf{\rec{X}{\com{C}}}_\gamma], \gamma}(\es{E})) \\
      =& \set{\text{Definition~\ref{def:fix-den-sem-1}}} \\
      & fix(\es{E} \mapsto \mf{\com{C}'[X \leftarrow \mf{\rec{X}{\com{C}}}_\gamma]}_{\gamma(Y \leftarrow \es{E})}) \\
      =& \set{\text{i.h.}} \\
      & fix(\es{E} \mapsto \mf{\com{C}'}_{\gamma(Y \leftarrow \es{E}, X \leftarrow \mf{\rec{X}{\com{C}}}_\gamma)}) \\
      =& \set{\text{Definition~\ref{def:fix-den-sem-1}}} \\
      & fix(\es{E} \mapsto \Gamma^{\com{C'}, \gamma(Y \leftarrow \es{E}, X \leftarrow \mf{\rec{X}{\com{C}}}_\gamma)} (\es{E}))
        = fix(\Gamma^{\com{C'}, \gamma(X \leftarrow \mf{\rec{X}{\com{C}}}_\gamma)} ) \\
      =& \seq{\text{Definition~\ref{def:fix-den-sem-1}}} \\
      & \mf{\rec{Y}{\com{C'}}}_{\gamma(X \leftarrow \mf{\rec{X}{\com{C}}}_\gamma)}
    \end{align*}
  \end{itemize}
\end{proof}

\subsubsection*{Proof of Lemma~\ref{lem:2-1}}
\begin{proof}
  \begin{align*}
    & \mf{\rec{X}{\com{C}}}_\gamma \\
    =& \text{Definition~\ref{def:fix-den-sem-1}} \\
    & fix(\Gamma^{\com{C}, \gamma}) \\
    =& \set{\text{Lemma~\ref{lem:fix-prop-1}}} \\
    & \Gamma^{\com{C}, \gamma}(fix(\Gamma^{\com{C},\gamma})) \\
    =& \set{\text{Definition~\ref{def:fix-den-sem-1}}} \\
    & \Gamma^{\com{C}, \gamma}(\mf{\rec{X}{\com{C}}}_\gamma) \\
    =& \set{\text{Definition~\ref{def:fix-den-sem-1}}} \\
    & \mf{\com{C}}_{\gamma(X \leftarrow \mf{\rec{X}{\com{C}}}_\gamma)}
  \end{align*}
\end{proof}

\subsubsection*{Proof of Lemma~\ref{res:soundI-fix-1}}
\begin{proof}
  \begin{align*}
    & \rec{X}{\com{C}} \\
    \Rightarrow & \set{\text{Figure~\ref{fig:op-small1} entails}} \\
    & \com{C} \xrightarrow{l} \com{C}' \\
    \Rightarrow & \set{\text{i.h.}} \\
    & \mf{\com{C}'}_\gamma \equiv \mf{\com{C}}_\gamma \backslash l \\
    \Rightarrow & \set{\text{setting } \gamma = \gamma(X \leftarrow \mf{\rec{X}{\com{C}}}_\gamma)} \\
    & \mf{\com{C}'}_{\gamma(X \leftarrow \mf{\rec{X}{\com{C}}}_\gamma)} \equiv
      \mf{\com{C}}_{\gamma(X \leftarrow \mf{\rec{X}{\com{C}}}_\gamma)} \backslash l \\
    \Rightarrow & \set{\text{Lemma~\ref{lem:1-1}, Lemma~\ref{lem:2-1}}} \\
    & \mf{\com{C}'[X \leftarrow \mf{\rec{X}{\com{C}}}_\gamma]}_\gamma \equiv
      \mf{\rec{X}{\com{C}}}_\gamma \backslash l
  \end{align*}
\end{proof}

\subsubsection*{Proof of Lemma~\ref{res:adI-fix-1}}
\begin{proof}
  \begin{itemize}
  \item $l \in \init{\mf{\rec{X}{\com{C}}}_\gamma}$

    By Definition~\ref{def:fix-den-sem-1} and Definition~\ref{def:lub-1},
    $l \in \init{\mf{\com{C}}_\gamma}$.  By i.h., $\exists \com{C}'$ such that
    $\com{C} \xrightarrow{l} \com{C}'$ and
    $\mf{\com{C}'}_\gamma \equiv \mf{\com{C}}_\gamma \backslash l$.  By the rules in
    Figure~\ref{fig:op-small1} and by setting
    $\gamma = \gamma(X \leftarrow \mf{\rec{X}{\com{C}}}_\gamma)$,
    $\rec{X}{\com{C}} \xrightarrow{l} \com{C}'[X \leftarrow \rec{X}{\com{C}}]$ and    
    $\mf{\com{C}'}_{\gamma(X \leftarrow \mf{\rec{X}{\com{C}}}_\gamma)} \equiv
    \mf{\com{C}}_{\gamma(X \leftarrow \rec{X}{\com{C}})} \backslash l$
  \end{itemize}
\end{proof}

\newpage
\section*{Proofs of Section~\ref{sec:pes}}\label{app-sec:proofs-prob-es}
\begin{remark}
  Here we may found useful to write the sum in Equation~\ref{eq:prob-es-condition} in another way.
  Concretely:
  \begin{align*}
    v(y) - \sum_{\emptyset \neq I \subseteq \set{1, \dots, n}} (-1)^{|I|+1}
    v
    \left(
    \bigcup_{i \in I} x_i
    \right)
    = \sum_{I \subseteq \set{1, \dots, n}} (-1)^{|I|}
    v
    \left(y \cup
    \bigcup_{i \in I} x_i
    \right)
  \end{align*}
\end{remark}

\subsubsection*{Proof of Lemma~\ref{lem:seq-es2}}
\begin{proof}
  We begin by showing that $v(x)$ is well-defined, \ie\ that given
  $x \in \confES{\seq{\es{P}_1}{\es{P}_2}}$ such that $x = x_1 \cup (x_2 \times \{x_1\})$ and
  $x = x'_1 \cup (x'_2 \times \{x'_1\})$ then $x_1 = x'_1$ and $x_2 = x'_2$.

  Let $x \in \confES{\seq{\es{P}_1}{\es{P}_2}}$, $x_1, x'_1 \in \confmax{\es{P}_1}$,
  $x_2, x'_2 \in \confES{\es{P}_2}$, and
  $x = x_1 \cup (x_2 \times \{x_1\}) = x'_1 \cup (x'_2 \times \{x'_1\})$.

  We start by showing that $x_1 = x'_1$.  Since $x_1, x'_1 \in \confmax{\es{P}_1}$, by
  Lemma~\ref{lem:max-config-e-conflict}, distinct maximal configurations of $\es{P}_1$ have events
  in conflict:
  \[
    x_1 \neq x'_1 \Rightarrow \exists e1 \in x_1, e'_1 \in x'_1 \text{ such that } e_1 \#_1 e'_1
  \]

  By contradiction, assume that $x_1 \neq x'_1$.  Since maximal configurations have events in
  conflict, then there exist $e_1, e'_1$ such that $e_1 \#_1 e'_1$.  Furthermore, we have that
  $x_1, x'_1 \subseteq x$. Hence $e_1, e'_1 \in x$.  However, by Definition~\ref{def:pes-seq2}, the
  conflict relation $\#$ restricts to $\#_1$ on events from $E_1$. Thus $e_1 \# e'_1$.  Since $x$
  contains events in conflict, it contradicts that $x$ is a configuration (configurations are
  conflict-free by definition). Therefore, $x_1 = x'_1$.

  We now show that $x_2 = x'_2$.  We already know that $x_1 = x'_1$. Hence we can rewrite
  $x = x_1 \cup (x_2 \times \{x_1\}) = x_1 \cup (x'_2 \times \{x_1\})$.  Now, we only need to show
  that $x_2 \times \{x_1\} = x'_2 \times \{x_1\}$. For that, consider the map
  $f_{x_1} : \confES{\es{P}_2} \rightarrow \Pow({\es{E}_2 \times \{x_1\}}),\ x_2 \mapsto x_2 \times
  \{x_1\}$. Such map is injective (since we pair each configuration $x_2$ with the maximal
  configuration $x_1$). Hence, it follows from injectivity that $x_2 = x'_2$.
  
  Let
  \begin{align*}
    & \es{P}_1 = \pespq{\es{E}_1}{v_1} \\
    & \es{P}_2 = \pespq{\es{E}_2}{v_2} \\
    & \seq{\es{P}_1}{\es{P}_2} = \pespq{\es{E}}{v}
  \end{align*}
  

  By Lemma~\ref{lem:seq-es1} we know that $\es{E}$ is an event structure.
  Hence we focus solely on the valuation part.
  \begin{enumerate}
  \item $v(\emptyset) = 1$

    Since $\emptyset \in \confES{\es{P}_1}$ then $v(\emptyset) = v_1(\emptyset) = 1$.

  \item $\dropc{n}{v}{y}{x_1, \dots, x_n} \geq 0$ for all $n \geq 1$ and
    $y, x_1, \dots, x_n \in \confES{\seq{\es{P}_1}{\es{P}_2}}$ with $y \subseteq x_1, \dots, x_n$

    We have two cases based on $n$:
    \begin{enumerate}
    \item $n = 0$

      We have $\dropc{0}{v}{y}{} = v(y)$.
      By Definition~\ref{def:pes-seq2} we have two cases:
      \begin{enumerate}
      \item $y \in \confES{\seq{\es{P}_1}{\es{P}_2}}$ such that $y \in \confES{\es{P}_1}$

        It follows directly that $v(y) = v_1(y) \geq 0$.

      \item $y \in \confES{\seq{\es{P}_1}{\es{P}_2}}$ such that $y = y_1 \cup (y_2 \times \{y_1\})$
        with $y_1 \in \confmax{\es{P}_1}, y_2 \in \confES{\es{P}_{2}}$ .

        It follows directly that $v(y) = v_1(y_1) \cdot v_2(y_2) \geq 0$, since $v_1, v_2$ are
        valuations.
      \end{enumerate}
      
    \item $n > 0$

      By~\cite[Proposition 5]{winskel14} we only need to check the condition for
      $y \chain\ x_1, \dots, x_n$.
      We have three cases:
      \begin{enumerate}
      \item $y \in \confES{\es{P}_1}$ but $y \not\in \confmax{\es{P}_1}$

        By~\cite[Proposition 5]{winskel14} we know that $x_1, \dots, x_n \in \confES{\es{P}_1}$,
        since $\cchain{}\ $ is a ``single-step'' relation.  We have
        $y, x_1, \dots, x_n \in \confES{\es{P}_1}$.  It follows directly that
        $\dropc{n}{v}{y}{x_1, \dots, x_n} = \dropc{n}{v_1}{y}{x_1, \dots, x_n} \geq 0$.
        
      \item $y \in \confmax{\es{P}_1}$

        By~\cite[Proposition 5]{winskel14} we know that
        $x_1, \dots, x_n \in \confES{\seq{\es{P}_1}{\es{P}_2}}$ such that
        $x_1, \dots, x_n \not\in \confES{\es{P}_1}$, since $\cchain{}\ $ is a ``single-step''
        relation.  Hence exists $x'_1, \dots, x'_n \in \confES{\es{P}_{2}}$ such that
        $x_1 = y \cup (x'_1 \times \{y\}), \dots, x_n = y \cup (x'_n \times \{y\})$.  Furthermore
        $\bigcup_{i \in I} x_i = y \cup \bigcup_{i \in I} (x'_i \times \{y\})$ and let
        $I \subseteq \set{1, \dots, n}$.
        \begin{align*}
          \dropc{n}{v}{y}{x_1, \dots, x_n}
          & = \sum_I (-1)^{|I|} v \left(y \cup \bigcup_{i \in I} x_i \right) \\
          & = \sum_I (-1)^{|I|} v \left(y \cup \bigcup_{i \in I} (y \cup (x'_i \times \{y\})) \right) \\
          & = \sum_I (-1)^{|I|} v \left(y \cup \bigcup_{i \in I} (x'_i\times \{y\}) \right) \\
          & = \sum_I (-1)^{|I|} v_1(y) \cdot  v_2\left(\bigcup_{i \in I} x'_i \right) \\
          & = v_1(y) \cdot \sum_I (-1)^{|I|} v_2\left(\bigcup_{i \in I} x'_i \right) \\
          & = v_1(y) \cdot \dropc{n}{v_2}{\emptyset}{x'_1, \dots, x'_n}
        \end{align*}

        Since $v_1(y) \geq 0$ and $\dropc{n}{v_2}{\emptyset}{x'_1, \dots, x'_n} \geq 0$, then
        $v_1(y) \cdot \dropc{n}{v_2}{\emptyset}{x'_1, \dots, x'_n} \geq 0$.



        
      \item $y \in \confES{\seq{\es{P}_1}{\es{P}_2}}$ but $y \not\in \confES{\es{P}_1}$.

        By~\cite[Proposition 5]{winskel14} we know that $\exists y_1 \in \confmax{\es{P}_1}$ and
        $y', x'_1, \dots x'_n \in \confES{\es{P}_{2}}$ such that
        $y = y_1 \cup (y' \times \{y_1\}), x_1 = y_1 \cup (x'_1 \times \{y_1\}) ,\dots, x_n = y_1
        \cup (x'_n \times \{y_1\})$, since $\cchain{}\ $ is a ``single-step'' relation. Furthermore
        $\bigcup_{i \in I} x_i = \bigcup_{i \in I} (y_1 \cup (x'_i \times \{y_1\})) = y_1 \cup
        \bigcup_{i \in I} (x'_i \times \{y_1\})$ and let $I \subseteq \set{1, \dots, n}$.
        \begin{align*}
          \dropc{n}{v}{y}{x_1, \dots, x_n}
          & = \sum_I (-1)^{|I|} v \left(y \cup \bigcup_{i \in I} x_i \right) \\
          & = \sum_I (-1)^{|I|} v \left(
            (y_1 \cup (y' \times \{y_1\})) \cup \bigcup_{i \in I} (y_1 \cup (x'_i \times \{y_1\}))
            \right) \\          
          & = \sum_I (-1)^{|I|} v \left(
            y_1 \cup \left(
            (y' \times \{y_1\}) \cup \bigcup_{i \in I} (x'_i \times \{y_1\})
            \right)
            \right) \\
          & = \sum_I (-1)^{|I|} v_1(y_1) \cdot v_2\left(y' \cup \bigcup_{i \in I} x'_i \right) \\
          & = v_1(y) \cdot \sum_I (-1)^{|I|}  v_2\left(y' \cup \bigcup_{i \in I} x'_i \right) \\
          & = v_1(y) \cdot \dropc{n}{v_2}{y'}{x'_1, \dots, x'_n}
        \end{align*}

        Since $v_1(y) \geq 0$ and $\dropc{n}{v_2}{y'}{x'_1, \dots, x'_n} \geq 0$, then
        $v_1(y) \cdot \dropc{n}{v_2}{y'}{x'_1, \dots, x'_n} \geq 0$.
        


      \end{enumerate}
    \end{enumerate}
  \end{enumerate}
\end{proof}

\subsubsection*{Proof of Lemma~\ref{lem:pes-prob2}}
\begin{proof}
  Let
  \begin{align*}
    & \es{P}_1 = \ppes{E_1}{\leq_1}{\#_1}{v_1} \\
    & \es{P}_2 = \ppes{E_2}{\leq_2}{\#_2}{v_2} \\
    & \probC{\es{P}_1}{p}{\es{P}_2} = \ppes{E}{\leq}{\#}{v} 
  \end{align*}
  
  We need to show that $\leq$ is a partial order and that $\#$ is symmetric and irreflexive.
  \begin{itemize}
  \item $\leq$ is a partial order:
    \begin{itemize}
    \item Reflexivity ($e \leq e$): we have the following cases:
      \begin{enumerate}
      \item Case $e = \tau$. Then we are done since $\tau \leq \tau$.
      \item Case $e_1 \leq_1 e'_1$ or $e_2 \leq_2 e'_2$. Since $\leq_1$ and $\leq_2$ are partial
        orders, we are done.
      \end{enumerate}
    \item Transitivity ($e \leq e'$ and $e' \leq e''$ then $e \leq e''$): we have the following
      cases:
      \begin{enumerate}
      \item Case $e=\tau$ and ($e', e'' \in E_1$ or $e', e'' \in E_2$). Then $\tau \leq e'$ and
        $e' \leq e''$. Since $e'' \in E$ the $\tau \leq e''$.
      \item Case ($e \leq_1 e'$ and $e' \leq_1 e''$) or ($e \leq_2 e'$ and $e' \leq_2 e''$). Since
        $\leq_1$ and $\leq_2$ are partial orders, we have $e \leq_1 e''$ or $e \leq_2 e''$. Hence
        $e \leq e''$.
      \end{enumerate} 
    \item Antisymmetry ($e \leq e'$ and $e' \leq e$ then $e'=e$): we have the following cases:
      \begin{enumerate}
      \item Case ($e \leq_1 e'$ and $e' \leq_1 e$) or ($e \leq_2 e'$ and $e' \leq_2 e$). Since
        $\leq_1$ and $\leq_2$ are partial orders, we have $e=e'$.
      \item Case $e=\tau$ and ($e' \in E_1$ or $e' \in E_2$). We are done since $e' \leq \tau$ is
        not possible.
      \end{enumerate}
    \end{itemize}

    Hence $\leq$ is a partial order.

  \item $\#$ is symmetric and irreflexive:
    \begin{itemize}
    \item Symmetric (if $e \# e'$ then $e' \# e$): we have the following cases
      \begin{enumerate}
      \item if $e \#_1 e'$ or $e \#_2 e'$. Since $\#_1$ and $\#_2$ are symmetric then we have
        $e' \#_1 e$ or $e' \#_2 e$. Thus $e' \# e$.
      \item if $e \in E_1$ and $e' \in E_2$, or vice-versa, then by Definition~\ref{def:pes-prob2}
        we also have $e' \leq e$.
      \end{enumerate}
    \item Irreflexive ($\neg(e \# e)$): We have either $\neg(e \#_1 e)$ or $\neg(e \#_2 e)$. Since
      $\#_1$ and $\#_2$ are irreflexive then $\neg(e \# e)$. There are no more cases since either
      $e \in E_1$ or $e \in E_2$.
    \end{itemize}

    Hence $\#$ is symmetric and irreflexive.
  \end{itemize}
  
  Let $e, e', e'' \in E$. We have four conditions to check:
  \begin{enumerate}
  \item $\set{e' \mid e' \leq e}$ is finite

    We have three cases:
    \begin{enumerate}
    \item $e = \tau$

      It follows directly that $\set{e' \mid e' \leq \tau} = \set{\tau}$ since
      $\tau \in \init{\probC{\es{P}_1}{p}{\es{P}_2}}$.

    \item $e \in E_1$

      We have that $\set{e' \mid e' \leq e} = \set{\tau} \cup \set{e' \mid e' \leq_1 e}$.  Since
      $\es{P}_1$ is a probabilistic event structure, then we know that $\set{e' \mid e' \leq_1 e}$
      is finite.  Hence $\set{\tau} \cup \set{e' \mid e' \leq_1 e}$ is finite.
      
    \item $e \in E_2$

      We have that $\set{e' \mid e' \leq e} = \set{\tau} \cup \set{e' \mid e' \leq_2 e}$.  Since
      $\es{P}_2$ is a probabilistic event structure, then we know that $\set{e' \mid e' \leq_2 e}$
      is finite.  Hence $\set{\tau} \cup \set{e' \mid e' \leq_2 e}$ is finite.
    \end{enumerate}
    
  \item $e \# e' \leq e'' \Rightarrow e \# e''$

    Since $\tau$ is not in conflict with any event, this condition trivially holds because we either
    have $e, e', e'' \in E_1$ or $e, e', e'' \in E_2$ and $\es{P}_1, \es{P_2}$ are probabilistic
    event structures.
    
  \item $v(\emptyset) = 1$

    It follows directly from the definition.
    
  \item $\dropc{n}{v}{y}{x_1, \dots, x_n} \geq 0$ for all $n \geq 1$ and
    $y, x_1, \dots, x_n \in \confES{\probC{\es{P}_1}{p}{\es{P}_2}}$ with
    $y \subseteq x_1, \dots, x_n$

    By~\cite[Proposition 5]{winskel14} we only need to check the condition for
    $y \chain\ x_1, \dots, x_n$, \ie\ $y \cchain{e_1, \dots, e_n}\ x_1, \dots, x_n$.  We have three
    cases:
    \begin{enumerate}
    \item $y = \emptyset$

      We then have $\emptyset \cchain{\tau} \set{\tau}$.

      It follows that 
      $\dropc{1}{v}{\emptyset}{\set{\tau}} = v(\emptyset) - v(\set{\tau}) = 1 -1 = 0$
      
    \item $y \backslash \tau \in \confES{\es{P}_1}$
      We have three cases (let $1 \leq i \leq n$):
      \begin{enumerate}
      \item $\forall e_i \in E_1$

        We have $x_1 \backslash \tau, \dots, x_n \backslash \tau \in \confES{\es{P}_1}$.
        Let $I \subseteq \set{1, \dots, n}$.
        \begin{align*}
          \dropc{n}{v}{y}{x_1, \dots, x_n}
          &= \sum_I (-1)^{|I|} v(y \cup \bigcup_{i \in I} x_i) \\
          &= p \cdot \sum_I (-1)^{|I|} v_1((y \backslash \tau) \bigcup_{i \in I} (x_i \backslash \tau)) \\
          &= p \cdot \dropc{n}{v_1}{y}{x_1, \dots, x_n}
        \end{align*}

        Since $p \in ]0,1[$ and $\dropc{n}{v_1}{y}{x_1, \dots, x_n}$, because $\es{P}_1$ is a
        probabilistic event structure, then $p \cdot \dropc{n}{v_1}{y}{x_1, \dots, x_n} \geq 0$.
        
      \item $\forall e_i \in E_2$

        Since $y \backslash \tau \in \confES{\es{P}_1}$, $x_i = \set{e_i} \cup y$, and for all
        $1 \leq i \leq n$ we have $e_i \# e \in y \backslash \tau \in \confES{\es{P}_1}$, then
        $v(x_i) = 0$.  Hence
        \begin{align*}
          \dropc{n}{v}{y}{x_1, \dots, x_n}
          &= \sum_I (-1)^{|I|} v(y \cup \bigcup_{i \in I} x_i) \\
          &= v(y) \\
          &= p \cdot v_1(y \backslash \tau)
        \end{align*}

        Since $p \in ]0,1[$ and $v_1(y \backslash \tau) \geq 0$, because $\es{P}_1$ is a
        probabilistic event structure, we have $p \cdot v_1(y \backslash \tau) \geq 0$.
        
      \item $\exists e_i \in E_2$

        Since $y \backslash \tau \in \confES{\es{P}_1}$, $x_i = \set{e_i} \cup y$, and for all
        $1 \leq i \leq n$ such that $e_i \# e \in y \backslash \tau \in \confES{\es{P}_1}$, we have
        $v(x_i) = 0$.  Hence
        \begin{align*}
          \dropc{n}{v}{y}{x_1, \dots, x_n}
          &= \sum_I (-1)^{|I|} v(y \cup \bigcup_{i \in I} x_i) \\
          &= \sum_{I'} (-1)^{|I'|} v(y \cup \bigcup_{i \in I'} x_i) \\
          &= p \cdot \sum_{I'} (-1)^{|I'|} v_1((y \backslash \tau) \bigcup_{i \in I'} (x_i \backslash \tau)) \\
          &= p \cdot \dropc{m}{v_1}{y}{x_1, \dots, x_m}
        \end{align*}
        where $I' = \set{1, \dots, m}$, \ie\ $I'$ is $I$ without those $e_i \in E_2$.

        Since $p \in ]0,1[$ and $\dropc{m}{v_1}{y}{x_1, \dots, x_m}$, because $\es{P}_1$ is a
        probabilistic event structure, then $p \cdot \dropc{m}{v_1}{y}{x_1, \dots, x_m} \geq 0$.
      \end{enumerate}
    \item $y \backslash \tau \in \confES{\es{P}_2}$

      Similar to previous case.
    \end{enumerate}
  \end{enumerate}
\end{proof}

\subsubsection*{Proof of Lemma~\ref{lem:conc-es2}}
\begin{proof}
  Let
  \begin{align*}
    & \es{P}_1 = \pespq{\es{E}_1}{v_1} \\
    & \es{P}_2 = \pespq{\es{E}_2}{v_2} \\
    & \conc{\es{P}_1}{\es{P}_2} = \pespq{\es{E}}{v}
  \end{align*}
  

  By Lemma~\ref{lem:conc-es1} we know that $\es{E}$ is an event structure.
  Hence we focus solely on the valuation part.
  \begin{enumerate}
  \item $v(\emptyset) = 1$
    \begin{align*}
      v(\emptyset)
      = v_1 (\emptyset \cap E_1) \cdot v_2(\emptyset \cap E_2)
      = v_1(\emptyset) \cdot v_2(\emptyset)
      = 1 \cdot 1
      = 1
    \end{align*}
    
  \item $\dropc{n}{v}{y}{x_1, \dots, x_n} \geq 0$ for all $n \geq 1$ and
    $y, x_1, \dots, x_n \in \confES{\probC{\es{P}_1}{p}{\es{P}_2}}$ with
    $y \subseteq x_1, \dots, x_n$

    By~\cite[Proposition 5]{winskel14} we only need to check the condition for
    $y \chain\ x_1, \dots, x_n$, \ie\ $y \cchain{e_1, \dots, e_n}\ x_1, \dots, x_n$.  We want to
    show that
    $\dropc{n}{v}{y}{x_1, \dots, x_n} =
    \dropc{n_1}{v_1}{y \cap E_1}{x_1 \cap E_1, \dots, x_n \cap E_1} \cdot
    \dropc{n_2}{v_2}{y \cap E_2}{x_1 \cap E_2,\dots, x_n \cap E_2}$

    Let $I_1 \subseteq \set{1, \dots, n_1}$, $I_2 \subseteq \set{1, \dots, n_2}$, and
    $I = I_1 \uplus I_2$.
    \begin{align*}
      &\dropc{n_1}{v_1}{y \cap E_1}{x_1 \cap E_1, \dots, x_n \cap E_1} \cdot
        \dropc{n_2}{v_2}{y \cap E_2}{x_1 \cap E_2,\dots, x_n \cap E_2} \\
      =& \sum_{I_1} (-1)^{|I_1|} v_1 \left( (y \cap E_1) \cup \left( \bigcup_{i \in I_1} x_i \cap E_1 \right) \right)
         \cdot
         \sum_{I_2} (-1)^{|I_2|} v_2 \left( (y \cap E_2) \cup \left( \bigcup_{j \in I_2} x_j \cap E_2 \right) \right) \\
      =& \sum_{I_1} (-1)^{|I_1|} v_1 \left( \left( y \cup \bigcup_{i \in I_1} x_i \right) \cap E_1 \right)
         \cdot
         \sum_{I_2} (-1)^{|I_2|} v_2 \left( \left( y \cup \bigcup_{j \in I_2} x_j \right) \cap E_2 \right) \\
      =& \sum_{I_1} \sum_{I_2} (-1)^{|I_1| + |I_2|}
         v_1 \left( \left( y \cup \bigcup_{i \in I_1} x_i \right) \cap E_1 \right)
         v_2 \left( \left( y \cup \bigcup_{j \in I_2} x_j \right) \cap E_2 \right) \\
      =& \sum_{I_1,I_2} (-1)^{|I_1| + |I_2|}
         v_1 \left( \left( y \cup \bigcup_{i \in I_1} x_i \right) \cap E_1 \right)
         v_2 \left( \left( y \cup \bigcup_{j \in I_2} x_j \right) \cap E_2 \right) \\
      =& \sum_{I_1,I_2} (-1)^{|I_1| + |I_2|}
         v \left( y \cup \bigcup_{i \in (I_1 \uplus I_2)} x_i \right)  \\
      =& \sum_{I} (-1)^{|I|}
         v \left( y \cup \bigcup_{i \in I} x_i \right) \\
      =& \dropc{n}{v}{y}{x_1, \dots, x_n} \geq 0
    \end{align*}
    
  \end{enumerate}
\end{proof}

\subsubsection*{Proof of Lemma~\ref{lem:rem-init-es2}}
\begin{proof}
  Let
  \begin{align*}
    & P = \pespq{\es{E}}{v} \\
    & P \backslash a = \pespq{\es{E}'}{v'}
  \end{align*}
  By Lemma~\ref{lem:rem-init-es1} we know that $\es{E'}$ is an event structure.
  Hence we focus solely on the valuation part.
  \begin{enumerate}
  \item $v'(\emptyset) = 1$
    \begin{align*}
      v'(\emptyset)
      = \dfrac{v_1(\emptyset \cup \set{a})}{v_1(\set{a})}
      = \dfrac{v_1(\set{a})}{v_1(\set{a})}
      = 1
    \end{align*}
    
  \item $\dropc{n}{v}{y}{x_1, \dots, x_n} \geq 0$ for all $n \geq 1$ and
    $y, x_1, \dots, x_n \in \confES{\probC{\es{P}_1}{p}{\es{P}_2}}$ with
    $y \subseteq x_1, \dots, x_n$

    By~\cite[Proposition 5]{winskel14} we only need to check the condition for
    $y \chain\ x_1, \dots, x_n$, \ie\ $y \cchain{e_1, \dots, e_n}\ x_1, \dots, x_n$.

    Let $I \subseteq \set{1, \dots, n}$.
    \begin{align*}
      \dropc{n}{v}{y}{x_1, \dots, x_n} \geq 0
      &\Leftrightarrow
        \sum_I (-1)^{|I|} v \left( y \cup \bigcup_{i \in I} x_i \right) \geq 0\\
      &\Leftrightarrow
        \sum_I (-1)^{|I|} \dfrac{v_1 \left( \left( y \cup \bigcup_{i \in I} x_i \right) \cup \set{a} \right)}{v_1(\set{a})}  \geq 0 \\
      &\Leftrightarrow
        \sum_I (-1)^{|I|} v_1 \left( \left( y \cup \bigcup_{i \in I} x_i \right) \cup \set{a} \right) \geq 0 \\
      &\Leftrightarrow
        \dropc{n}{v_1}{y \cup \set{a}}{x_1 \cup \set{a}, \dots, x_n \cup \set{a}} \geq 0
    \end{align*}
  \end{enumerate}
\end{proof}

\subsubsection*{Proof of Lemma~\ref{lem:seq-mono2}}
\begin{proof}
  Let
  \begin{align*}
  & \es{P}_1 = \pespq{\es{E}_1}{v_1} \\
  & \es{P}'_1 = \pespq{\es{E}'_1}{v'_1} \\
  & \es{P}_2 = \pespq{\es{E}_2}{v_2} \\
  & \es{P}'_2 = \pespq{\es{E}'_2}{v'_2} \\
  & \seq{\es{P}_1}{\es{P}_2} = \pespq{\es{E}}{v} \\
  & \seq{\es{P}'_1}{\es{P}'_2} = \pespq{\es{E}'}{v'}
  \end{align*}
  Due to
  Lemma~\ref{lem:seq-mono1}, we only need to show
  $\forall x \in \confES{\seq{\es{P}_1}{\es{P}_2}},
  y \in \confES{\seq{\es{P}'_1}{\es{P}'_2}}\, .\,
  f[x] \subseteq y \Rightarrow v(x) \geq v'(y)$.

  Let $x \in \confES{\seq{\es{P}_1}{\es{P}_2}}$ and $y \in \confES{\seq{\es{P}'_1}{\es{P}'_2}}$ such
  that $f[x] \subseteq y$. We have three cases:
  \begin{enumerate}
  \item $x \in \confES{\seq{\es{P}_1}{\es{P}_2}}$ such that $x \in \confES{\es{P}_1}$ and
    $y \in \confES{\seq{\es{P}'_1}{\es{P}'_2}}$ such that $y \in \confES{\es{P}'_1}$

    It follows directly that $v(x) \geq v'(y) \Leftrightarrow v_1(x) \geq v'_1(y)$, since
    $v(x) = v'_1(x)$, $v'(y) = v'_1(y)$ and $\es{P}_1 \sqsubseteq \es{P}'_1$.
    
  \item $x \in \confES{\seq{\es{P}_1}{\es{P}_2}}$ such that $x \in \confES{\es{P}_1}$ and
    $y \in \confES{\seq{\es{P}'_1}{\es{P}'_2}}$ such that
    $\exists y_1 \in \confmax{\es{P}'_1}, y_2 \in \confES{\es{P}'_2}$ such that
    $y = y_1 \cup (y_2 \times \{y_1\})$.

    We know that $v(x) = v_1(x)$ and that $v'(y) = v'_1(y_1) \cdot v'_2(y_2)$.  Since
    $\es{P}_1 \sqsubseteq \es{P}'_1$ then $f_1[x] \subseteq y_1$ and $v_1(x) \geq v'_1(y_1)$.  It
    then follows directly that
    $v(x) = v_1(x) \geq v'_1(y_1) \geq v'_1(y_1) \cdot v'_2(y_2) = v'(y)$.
    
  \item $x \in \confES{\seq{\es{P}_1}{\es{P}_2}}$ such that
    $\exists x_1 \in \confES{\es{P}_1}, x_2 \in \confES{\es{P}_2}$ such that
    $x = x_1 \cup (x_2 \times \{x_1\})$ and $y \in \confES{\seq{\es{P}'_1}{\es{P}'_2}}$ such that
    $\exists y_1 \in \confmax{\es{P}'_1}, y_2 \in \confES{\es{P}'_2}$ such that
    $y = y_1 \cup (y_2 \times \{y_1\})$

    We know that $v(x) = v_1(x_1) \cdot v_2(x_2)$ and $v'(y) = v'_1(y_1) \cdot v'_2(y_2)$.  Since
    $\es{P}_1 \sqsubseteq \es{P}'_1$ then $f_1[x_1] \subseteq y_1$ and
    $v_1(x_1) \geq v'_1(y_1)$, and $\es{P}_2 \sqsubseteq \es{P}'_2$ then
    $f_2[x_2] \subseteq y_2$ and $v_2(x_2) \geq v'_2(y_2)$.

    Furthermore,
    \begin{align*}
      v_1(x_1) \geq v'_1(y_1)
      &\Leftrightarrow
        v'_1(y_1) \leq v_1(x_1) \\
      &\Leftrightarrow
        \dfrac{v'_1(y_1)}{v_1(x_1)} \leq 1 \\
      &\Leftrightarrow
        \left(
        \dfrac{v'_1(y_1)}{v_1(x_1)} = 1
        \right)
        \text{ or }
        \left(
        \dfrac{v'_1(y_1)}{v_1(x_1)} < 1
        \right)
    \end{align*}
  \end{enumerate}

  Now we show that $v(x) \geq v'(y)$.
  \begin{align*}
    v(x) \geq v'(y)
    &\Leftrightarrow
      v_1(x_1) \cdot v_2(x_2) \geq v'_1(y_1) \cdot v'_2(y_2) \\
    &\Leftrightarrow
      v_2(x_2) \geq \dfrac{v'_1(y_1)}{v_1(x_1)} \cdot v'_2(y_2)
  \end{align*}
  We have two cases:
  \begin{enumerate}
  \item $\dfrac{v'_1(y_1)}{v_1(x_1)} = 1$

    $v_2(x_2) \geq \dfrac{v'_1(y_1)}{v_1(x_1)} \cdot v'_2(y_2)
    \Leftrightarrow v_2(x_2) \geq v'_2(y_2)$ and we are done.
    
  \item $\dfrac{v'_1(y_1)}{v_1(x_1)} < 1$

    Since $v_2(x_2) \geq v'_2(y_2)$ and $v'_2(y_2) \geq \dfrac{v'_1(y_1)}{v_1(x_1)} \cdot v'_2(y_2)$
    it follows that
    $v_2(x_2) \geq v'_2(y_2) \geq \dfrac{v'_1(y_1)}{v_1(x_1)} \cdot v'_2(y_2)$
    
  \end{enumerate}

\end{proof}

\subsubsection*{Proof of Lemma~\ref{lem:probc-mono2}}
\begin{proof}
  Let
  \begin{align*}
    & \es{P}_1 = \pespq{\es{E}_1}{v_1} \\
    & \es{P}'_1 \pespq{\es{E}'_1}{v'_1} \\
    & \es{P}_2 = \pespq{\es{E}_2}{v_2} \\
    & \es{P}'_2 = \pespq{\es{E}'_2}{v'_2} \\
    & \probC{\es{P}_1}{p}{\es{P}_2} = \pespq{\es{E}}{v} \\
    & \probC{\es{P}'_1}{p}{\es{P}'_2} = \pespq{\es{E}'}{v'}
  \end{align*}
  
  \begin{enumerate}
  \item We start by defining $f : E \rightarrow E'$ such that
    \[
      f(e) =
      \begin{cases}
        e & \text{ if } e = \tau \in \init{\probC{\es{P}_1}{p}{\es{P}_2}} \\
        f_1(e) & \text{ if } e \in E_1 \\
        f_2(e) & \text{ if } e \in E_2
      \end{cases}
    \]

    It is straightforward to see that $f$ is injective.
    
  \item $\pi(f(e)) = \pi(e)$

    If $e \in E_1$ or $e \in E_2$ we are done, since $\es{P}_1 \sqsubseteq \es{P}'_1$ and $\es{P}_2 \sqsubseteq \es{P}'_2$.
    If $e = \tau$, then $\pi(f(e)) = \pi(e)$, and we are done.
    
  \item $e \leq e' \Leftrightarrow f(e) \leq' f(e')$

    If $e,e' \in E_1$ or $e,e' \in E_2$ we are done, since $\es{P}_1 \sqsubseteq \es{P}'_1$ and $\es{P}_2 \sqsubseteq \es{P}'_2$.
    If $e = \tau$ then $f(e)$. Consequently $e \leq e' \Leftrightarrow e \leq' f(e')$, which is trivially satisfied by Definition~\ref{def:pes-prob2}.
    
  \item $e \# e' \Leftrightarrow f(e) \#' f(e')$

    Since $\tau$ is not in conflict with any event, this case is similar to that of Lemma~\ref{lem:nd-mono1}.
    
  \item $\forall x \in \confES{\probC{\es{P}_1}{p}{\es{P}_2}},
    y \in \confES{\probC{\es{P}_1}{p}{\es{P}_2}}\, .\,
    f[x] \subseteq y \Rightarrow v(x) \geq v'(y)$

    Let $x \in \confES{\probC{\es{P}_1}{p}{\es{P}_2}}$ and
    $y \in \confES{\probC{\es{P}_1}{p}{\es{P}_2}}$ such that $f[x] \subseteq y$.
    We have two cases:
    \begin{enumerate}
    \item $x \backslash \tau \in \confES{\es{P}_1}$ and $y \backslash \tau \in \confES{\es{P}'_1}$

      It follows directly that $v(x) \geq v'(y)$, since $\es{P}_1 \sqsubseteq \es{P}'_1$ and
      $v(x) = p \cdot v_1(x \backslash \tau) \geq p \cdot v'_1(y \backslash \tau) = v'(y)$.
      
    \item $x \backslash \tau \in \confES{\es{P}_2}$ and $y \backslash \tau \in \confES{\es{P}'_2}$

      It follows directly that $v(x) \geq v'(y)$, since $\es{P}_2 \sqsubseteq \es{P}'_2$ and
      $v(x) = (1-p) \cdot v_2(x \backslash \tau) \geq (1-p) \cdot v'_2(y \backslash \tau) = v'(y)$.
    \end{enumerate}
  \end{enumerate}
\end{proof}

\subsubsection*{Proof of Lemma~\ref{lem:conc-mono2}}
\begin{proof}
  Let
  \begin{align*}
    & \es{P}_1 = \pespq{\es{E}_1}{v_1} \\
    & \es{P}'_1 = \pespq{\es{E}'_1}{v'_1} \\
    & \es{P}_2 = \pespq{\es{E}_2}{v_2} \\
    & \es{P}'_2 = \pespq{\es{E}'_2}{v'_2} \\
    & \conc{\es{P}_1}{\es{P}_2} = \pespq{\es{E}}{v} \\
    & \conc{\es{P}'_1}{\es{P}'_2} = \pespq{\es{E}'}{v'}
  \end{align*}
  Due to Lemma~\ref{lem:conc-mono1}, we only need to show
  $\forall x \in \confES{\conc{\es{P}_1}{\es{P}_2}}, y \in \confES{\conc{\es{P}'_1}{\es{P}'_2}}\,
  .\, f[x] \subseteq y \Rightarrow v(x) \geq v'(y)$.

  Let $x \in \confES{\conc{\es{P}_1}{\es{P}_2}}$ and $y \in \confES{\conc{\es{P}'_1}{\es{P}'_2}}$
  such that $f[x] \subseteq y$.  Since $\es{P}_1 \sqsubseteq \es{P}'_1$ then
  $\forall x_1 \in \confES{\es{P}_1}, y_1 \in \confES{\es{P}'_1}$ such that
  $f_1[x_1] \subseteq y_1$ we have $v_1(x_1) \geq v'_1(y_1)$ and that
  $\es{P}_2 \sqsubseteq \es{P}'_2$ entails
  $\forall x_2 \in \confES{\es{P}_2}, y_2 \in \confES{\es{P}'_2}$ such that
  $f_2[x_2] \subseteq y_2$ we have $v_2(x_2) \geq v'_2(y_2)$.  By
  Definition~\ref{def:pes-conc2}, $v(x) = v_1(x_1) \cdot v_2(x_2)$ and
  $v'(y) = v'_1(y_1) \cdot v'_2(y_2)$, where $x_1 = x \cap E_1$, $x_2 = x \cap E_1$,
  $y_1 = y \cap E'_1$, and $y_2 = y \cap E'_2$.  We then have:
  \begin{align*}
    v(x)
    = v_1(x_1) \cdot v_2(x_2)
    \geq v'_1(y_1) \cdot v'_2(y_2)
    = v'(y)
  \end{align*}
\end{proof}

\subsubsection*{Proof of Lemma~\ref{lem:seq-rem-init2}}
\begin{proof}
  Let
  \begin{align*}
    & \es{P}_1 = \pespq{\es{E}_1}{v_1} \\
    & \es{P}_2 = \pespq{\es{E}_2}{v_2} \\
    & \seq{\es{P}_1}{\es{P}_2} = \pespq{\es{E}_{\seq{1}{2}}}{v_{\seq{1}{2}}} \\
    & (\seq{\es{P}_1}{\es{P}_2}) \backslash l = \pespq{\es{E}}{v} \\
    & \es{P}_1 \backslash l = \pespq{\es{E}_1^l}{v_1^l} \\
    & \seq{(\es{P}_1 \backslash l)}{\es{P}_2} = \pespq{\es{E}'}{v'} \\
    & l \in \init{\seq{\es{P}_1}{\es{P}_2}}
  \end{align*}

  Due to Lemma~\ref{lem:seq-rem-init1} we only need to show
  \begin{enumerate}
  \item $\forall x \in \confES{(\seq{\es{P}_1}{\es{P}_2}) \backslash l},
    y \in \confES{\seq{(\es{P}_1 \backslash l)}{\es{P}_2}}\, .\,
    f[x] \subseteq y \Rightarrow v(x) \geq v'(y)$

    Let $x \in \confES{(\seq{\es{P}_1}{\es{P}_2}) \backslash l}$ and
    $y \in \confES{\seq{(\es{P}_1 \backslash l)}{\es{P}_2}}$ such that
    $f[x] \subseteq y$.  We have two cases:
    \begin{enumerate}
    \item $\set{l} \cup x \in \confES{\seq{\es{P}_1}{\es{P}_2}}$ such that
      $\set{l} \cup x \in \confES{\es{P}_1}$
      \begin{align*}
        v(x)
        = \dfrac{v_{\seq{1}{2}} (\set{l} \cup x)}{v_{\seq{1}{2}}(\set{l})}
        = \dfrac{v_1 (\set{l} \cup x)}{v_1(\set{l})}
        = v_1^l(x)
        = v'(x)
      \end{align*}

      Since $\seq{(\es{P}_1 \backslash l)}{\es{P}_2}$ is a probabilistic event structure, then for
      $x, y \in \confES{\seq{(\es{P}_1 \backslash l)}{\es{P}_2}}$ such that
      $f[x] \subseteq y$, we have $v'(x) \geq v'(y)$, since
      $\dropc{1}{v'}{x}{y} \geq 0 \Leftrightarrow v'(x) - v'(y) \geq 0 \Leftrightarrow v'(x) \geq
      v'(y)$.

      Hence $v(x) \geq v'(y)$.
      
    \item $\set{l} \cup x \in \confES{\seq{\es{P}_1}{\es{P}_2}}$ such that
      $\exists (\set{l} \cup x_1) \in \confmax{\es{P}_1}, x_2 \in \confES{\es{P}_2}$ where
      $\set{l} \cup x = (\set{l} \cup x_1) \cup (x_2 \times \{\set{l} \cup x_1\})$
      \begin{align*}
        v(x)
        = \dfrac{v_{\seq{1}{2}} (\set{l} \cup x)}{v_{\seq{1}{2}}(\set{l})}
        = \dfrac{v_1(\set{l} \cup x_1) \cdot v_2(x_2)}{v_1(\set{l})}
        = v_1^l(x_1) \cdot v_2(x_2)
        = v'(x)
      \end{align*}
    \end{enumerate}

    Since $\seq{(\es{P}_1 \backslash l)}{\es{P}_2}$ is a probabilistic event structure, we obtain
    $v(x) \geq v'(y)$.
    
  \item $\forall x \in \confES{\seq{(\es{P}_1 \backslash l)}{\es{P}_2}},
    y \in \confES{(\seq{\es{P}_1}{\es{P}_2}) \backslash l} \, .\,
    f[x] \subseteq y \Rightarrow v'(x) \geq v(y)$

    Let $x \in \confES{\seq{(\es{P}_1 \backslash l)}{\es{P}_2}}$ and
    $y \in \confES{(\seq{\es{P}_1}{\es{P}_2}) \backslash l}$ such that $f[x] \subseteq y$.  We have
    two cases:
    \begin{enumerate}
    \item $x \in \confES{\seq{\es{P}_1 \backslash l}{\es{P}_2}}$ such that
      $x \in \confES{\es{P}_1 \backslash l}$
      \begin{align*}
        v'(x)
        = v_1^l(x)
        = \dfrac{v_1 (\set{l} \cup x)}{v_1(\set{l})}
        = \dfrac{v_{\seq{1}{2}} (\set{l} \cup x)}{v_{\seq{1}{2}}(\set{l})}
        = v(x)
      \end{align*}

      Since $(\seq{\es{P}_1}{\es{P}_2}) \backslash l$ is a probabilistic event structure, we obtain
    $v'(x) \geq v(y)$.
      
    \item $x \in \confES{\seq{\es{P}_1 \backslash l}{\es{P}_2}}$ such that
      $\exists x_1 \in \confmax{\es{P}_1 \backslash l}, x_2 \in \confES{\es{P}_2}$ such that
      $x = x_1 \cup (x_2 \times \{x_1\})$
      \begin{align*}
        v'(x)
        = v_1^l(x_1) \cdot v_2(x_2)
        = \dfrac{v_1(\set{l} \cup x_1) \cdot v_2(x_2)}{v_1(\set{l})}
        = \dfrac{v_{\seq{1}{2}} (\set{l} \cup x)}{v_{\seq{1}{2}}(\set{l})}
        = v(x)
      \end{align*}

      Since $(\seq{\es{P}_1}{\es{P}_2}) \backslash l$ is a probabilistic event structure, we obtain
      $v'(x) \geq v(y)$.
    \end{enumerate}    
  \end{enumerate}
\end{proof}

\subsubsection*{Proof of Lemma~\ref{lem:conc-rem-init2}}
\begin{proof}
  Let
  \begin{align*}
    & \es{P}_1 = \pespq{\es{E}_1}{v_1} \\
    & \es{P}_2 = \pespq{\es{E}_2}{v_2} \\
    & \conc{\es{P}_1}{\es{P}_2} = \pespq{\es{E}}{v} \\
    & (\conc{\es{P}_1}{\es{P}_2}) \backslash l = \pespq{\es{E}'}{v'} \\
    & \es{P}_1 \backslash l = \pespq{\es{E}_1^l}{v_1^l} \\
    & \es{P}_2 \backslash l = \pespq{\es{E}_2^l}{v_2^l} \\
    & \conc{(\es{P}_1 \backslash l)}{\es{P}_2} = \pespq{\es{E}^l}{v^l} \\
    & l \in \init{\conc{\es{P}_1}{\es{P}_2}}
  \end{align*}

  Due to Lemma~\ref{lem:seq-rem-init1}, and similarly to it, we focus when $l \in \init{\es{E}_1}$
  and only show
  \begin{enumerate}
  \item $\forall x \in \confES{(\conc{\es{P}_1}{\es{P}_2}) \backslash l},
    y \in \confES{\conc{(\es{P}_1 \backslash l)}{\es{P}_2}}\, .\,
    f[x] \subseteq y \Rightarrow v'(x) \geq v^l(y)$
    \begin{align*}
      v'(x)
      &= \dfrac{v(\set{l} \cup x)}{v(\set{l})} 
        = \dfrac{v_1((\set{l} \cup x) \cap E_1) \cdot v_2((\set{l} \cup x) \cap E_2)}{v_1(\set{l})} \\
      &= \dfrac{v_1(\set{l} \cup (x \cap E_1)) \cdot v_2(x \cap E_2)}{v_1(\set{l})}
        = v_1^l(x \cap E_1) \cdot v_2(x \cap E_2)
        = v^l(x)
    \end{align*}

    Since $\conc{(\es{P}_1 \backslash l)}{\es{P}_2}$ is a probabilistic event structure, we obtain
    $v'(x) \geq v^l(y)$.
    
  \item $\forall x \in \confES{\conc{(\es{P}_1 \backslash l)}{\es{P}_2}} ,
    y \in \confES{(\conc{\es{P}_1}{\es{P}_2}) \backslash l}\, .\,
    f[x] \subseteq y \Rightarrow v^l(x) \geq v'(y)$
    \begin{align*}
      v^l(x)
      &= v_1^l(x \cap E_1) \cdot v_2(x \cap E_2)
        = \dfrac{v_1(\set{l} \cup (x \cap E_1)) \cdot v_2(x \cap E_2)}{v_1(\set{l})}\\
      &= \dfrac{v_1((\set{l} \cup x) \cap E_1) \cdot v_2((\set{l} \cup x) \cap E_2)}{v_1(\set{l})}
        = \dfrac{v(\set{l} \cup x)}{v(\set{l})}
        = v'(x)
    \end{align*}

    Since $(\conc{\es{P}_1}{\es{P}_2}) \backslash l$ is a probabilistic event structure, we obtain
    $v^l(x) \geq v'(y)$.
  \end{enumerate}
\end{proof}

\subsubsection*{Proof of Lemma~\ref{lem:conc-symmetric2}}
\begin{proof}
  It follows directly from Definition~\ref{def:pes-conc2}.
\end{proof}

\subsubsection*{Proof of Lemma~\ref{lem:prob-init}}
\begin{proof}
  \begin{itemize}
  \item $sk \in \init{\mf{\com{skip}}}$

    It follows directly that $v(\set{sk}) = 1$.

  \item $a \in \init{\mf{\com{a}}}$

    It follows directly that $v(\set{a}) = 1$.

  \item $\tau \in \init{\mf{\probC{\com{C}_1}{p}{\com{C}_2}}}$

    It follows directly that $v(\set{\tau}) = 1$.

  \item $l' \in \init{\mf{\seq{\com{C}_1}{\com{C}_2}}}$

    By Definition~\ref{def:pes-seq2} we have $l' \in \init{\mf{\com{C}_1}}$.  By i.h.,
    $v(\set{l'}) = 1$ and since $l' \in \init{\mf{\seq{\com{C}_1}{\com{C}_2}}}$ we are done.

  \item $l' \in \init{\mf{\conc{\com{C}_1}{\com{C}_2}}}$

    By Definition~\ref{def:pes-conc2} we have $l' \in \init{\mf{\com{C}_1}}$ or
    $l' \in \init{\mf{\com{C}_2}}$.  By i.h., $v(\set{l'}) = 1$ for both cases. Since
    $l' \in \init{\mf{\conc{\com{C}_1}{\com{C}_2}}}$ we are done.
  \end{itemize}
\end{proof}

\subsubsection*{Proof of Lemma~\ref{lem:sound-aux-seq-2}}
\begin{proof}
  Follows directly from the respective definitions.
\end{proof}

\subsubsection*{Proof of Lemma~\ref{lem:sound-aux-conc-2}}
\begin{proof}
  Let
  \begin{align*}
    \es{P} &= \ppes{E}{\leq}{\#}{v} \\
    \es{P}_i &= \ppes{E_i}{\leq_i}{\#_i}{v_i} \\
    \conc{\es{P}_i}{\es{P}} &= \ppes{E'_i}{\leq'_i}{\#'_i}{v'_i} \\
    \sum_i p_i \cdot \es{P}_i &= \ppes{E^1}{\leq^1}{\#^1}{v^1} \\
    \conc{(\sum_i p_i \cdot \es{P}_i)}{\es{P}} &= \ppes{E^2}{\leq^2}{\#^2}{v^2} \\
    \sum_i p_i \cdot (\conc{\es{P}_i}{\es{P}}) &= \ppes{E^3}{\leq^3}{\#^3}{v^3}
  \end{align*}

  \begin{enumerate}
  \item
    $\conc{\left( \sum_i p_i \cdot \es{P}_i \right)}{\es{P}} \sqsubseteq
    \sum_i p_i \cdot (\conc{\es{P}_i}{\es{P}})$

    We know that $E^2 = E^1 \uplus E = (\{\tau\} \uplus \biguplus_{i} E_i) \uplus E$ and
    $E^3 = \{\tau\} \uplus \biguplus_{i}(E_i \uplus E)$.  Note that with
    $\biguplus_{i}(E_i \uplus E)$, for each $i$ we are making a copy of $E$.  We then denote the
    events of $E$ from $\biguplus_{i}(E_i \uplus E)$ as $e^i$, for each $i$ (it is important to
    highlight that the plain events of $e$ and $e^i$ are the same) and we write $E^{(i)}$ for the
    $P$-copy inside branch $i$.

    \begin{enumerate}
    \item We begin by defining, for a fixed $i$, $f_i : E^2 \rightarrow E^3$ such that
      \[
        f_i(e) =
        \begin{cases}
          e & \text{ if } e \in \{\tau\} \uplus E_i  \subseteq E^1 \\
          e^i & \text{ if } e \in E
        \end{cases}
      \]

      Since $i$ is fixed, it is straightforward to see that $f_i$ is injective ($f_i$ is essentially
      the identity map, where for each $i$ it creates a copy of the events in $E$).

    \item $\pi(f_i(e)) = \pi(e)$

      If $e \in \{\tau\} \uplus E_i$ then we are done since $f_i(e) = e$.
      If $e \in E$, then $\pi(f_i(e)) = \pi(e^i) = \pi(e)$, since $e^i$ is a copy of $e$ for a fixed $i$.
      
    \item $e \leq^2 e' \implies f_i(e) \leq^3 f_i(e')$

      Assume $e \leq^2 e'$. By Definition~\ref{def:pes-conc2}, $e \leq^1 e'$ or $e \leq e'$.  If
      $e \leq e'$, then by Definition~\ref{def:pes-conc2} and for a given $i$,
      $f_i(e) \leq'_i f_i(e')$.  By Remark~\ref{rem:probc-n} $f_i(e) \leq^3 f_i(e')$.

      If $e \leq^1 e'$, then by Remark~\ref{rem:probc-n}, exists $i$ such that $e \leq_i e'$.  By
      Definition~\ref{def:pes-conc2}, $e \leq'_i e'$. By Remark~\ref{rem:probc-n}
      $f_i(e) \leq^3 f_i(e')$.
      
    \item $e \#^2 e' \implies f_i(e) \#^3 f_i(e')$

      Similar to the previous case.

    \item $\forall x \in \confES{\conc{(\sum_i p_i \cdot \es{P}_i)}{\es{P}}},
      y \in \confES{\sum_i p_i \cdot (\conc{\es{P}_i}{\es{P}})}\, .\,
      f_i[x] \subseteq y \Rightarrow v^2(x) \geq v^3(y)$

      Let $x \in \confES{\conc{(\sum_i p_i \cdot \es{P}_i)}{\es{P}}}$,
      $y \in \confES{\sum_i p_i \cdot (\conc{\es{P}_i}{\es{P}})}$, such that
      $f_i[x] \subseteq y$.
      
      By Definition~\ref{def:pes-conc2} and Remark~\ref{rem:probc-n} we have:
      \[
        v^2(x) =
        \begin{cases}
          p_i v_i((x \cap E^1) \backslash \tau) \cdot v(x \cap E)
          & \text{ if } (x \cap E^1) \backslash \tau \in \confES{\es{P}_i} \\
          v(x \cap E)
          & \text{ if } (x \cap E^1) = \emptyset \text{ or } (x \cap E^1 = \{\tau\})
        \end{cases}
      \]

      By Remark~\ref{rem:probc-n} we have:
      \[
        v^3(x) =
        \begin{cases}
          p_i v'_i(y \backslash \tau)
          & \text{ if } y \backslash \tau \in \confES{\conc{\es{P}_i}{\es{P}}}\\
          1
          & \text{ if } (y = \{\tau\}) \text{ or } (y = \emptyset)
        \end{cases}
      \]

      We have the following cases:
      \begin{enumerate}
      \item $y = \emptyset$

        Thus $v^3(y) = 1$.  Since $f_i[x] \subseteq y$, then $x = \emptyset$. Thus
        $v^2(x) = v(\emptyset \cap E) = v(\emptyset) = 1$.

      \item $y = \{\tau\}$

        Thus $v^3(y) = 1$.  Since $f_i[x] \subseteq y$, then $x = \emptyset$ or $x = \{\tau\}$.  If
        $x = \emptyset$ we are done.  If $x = \{\tau\}$, then
        $v^2(\{\tau\}) = v(\{\tau\} \backslash E)$, which entails $\{\tau\} \cap E^1 = \{\tau\}$.
        Then $\{\tau\} \cap E = \emptyset$ since $\tau \not\in E$.  Thus
        $v(\{\tau\} \cap E) = v(\emptyset) = 1$.

      \item $y \backslash \tau \in \confES{\conc{\es{P}_i}{\es{P}}}$

        Thus $v^3(y) = p_i \cdot v'_i(y \backslash \tau)$.  By Definition~\ref{def:pes-conc2},
        $v'_i(y \backslash \tau) = v_i(y \backslash \tau \cap E_i) \cdot v(y \backslash \tau \cap
        E^{(i)})$. Since $\tau \not\in E_i$ and $\tau \not\in E$, then
        $y\backslash \tau \cap E_i = y \cap E_i$ and
        $y \backslash \tau \cap E^{(i)} = y \cap E^{(i)}$.  Thus
        $p_i \cdot v'_i(y \backslash \tau) = p_i \cdot v_i(y \cap E_i) \cdot v(y \cap E^{(i)})$.

        Since $f_i[x] \subseteq y$, if $x = \emptyset$ or $x = \{\tau\}$ we know that $v^2(x) = 1$.
        Hence $1 \geq p_i \cdot v_i(y \cap E_i) \cdot v(y \cap E)$, since $v_i$ and $v$ are
        configuration-valuations.

        It lacks to consider the case where
        $v^2(x) = p_i v_i((x \cap E^1)\backslash \tau) \cdot v(x \cap E)$.  To show that
        $p_i v_i((x \cap E^1)\backslash \tau) \cdot v(x \cap E) \geq p_i \cdot v_i(y \cap E_i) \cdot
        v(y \cap E^{(i)})$, it suffices to show that
        $v_i((x \cap E^1)\backslash \tau) \geq v_i(y \cap E_i)$ and
        $v(x \cap E) \geq v(y \cap E^{(i)})$, since product is monotone in each factor.

        We argue first that $v(x \cap E) \geq v(y \cap E^{(i)})$. We know that
        $f_i[x_i \cap E] = \{ f_i(e) \mid e \in x \cap E \} = f_i[x_i] \cap E^{(i)}$. This set is
        composed by the events of $E$ that belong in $x$ in the $i$-th copy. Thus, we have
        $f_i[x_i \cap E] = f_i[x_i] \cap E^{(i)} \subseteq y \cap E^{(i)}$, from
        $f_i[x_i] \subseteq y$.  Since $P$ is a probabilistic event structure, then
        $v(f_i[x \cap E]) \geq v(y \cap E^{(i)})$. Hence $v(f_i[x \cap E]) = v(x \cap E)$. Thus
        $v(x \cap E) \geq v(y \cap E^{(i)})$.  With similar arguments, we have
        $v_i((x \cap E^1)\backslash \tau) \geq v_i(y \cap E_i)$.  Thus
        $p_i v_i((x \cap E^1)\backslash \tau) \cdot v(x \cap E) \geq p_i \cdot v_i(y \cap E_i) \cdot
        v(y \cap E^{(i)})$.
      \end{enumerate}
    \end{enumerate}

  \item $x \in \confmax{\conc{\left( \sum_i p_i \cdot \es{P}_i \right)}{\es{P}}}$ iff
    $x \in \confmax{\sum_i p_i \cdot (\conc{\es{P}_i}{\es{P}})}$

    For both cases, it is relevant to notice the following: let $\es{P}_1, \es{P}_2$ be two
    probabilistic event structures such that $x_1 \in \confES{\es{P}_1}$ and
    $x_2 \in \confES{\es{P}_2}$. Then $x = x_1 \cup x_2 \in \confES{\conc{\es{P}_1}{\es{P}_2}}$,
    since by Definition~\ref{def:pes-conc2} there is no conflict between events of $\es{P}_1$ and
    events of $\es{P}_2$.
    \begin{itemize}
    \item[$\Leftarrow$] If $x \in \confmax{\conc{\left( \sum_i p_i \cdot \es{P}_i \right)}{\es{P}}}$
      then $x \in \confmax{\sum_i p_i \cdot (\conc{\es{P}_i}{\es{P}})}$

      Let $x \in \confmax{\conc{\left( \sum_i p_i \cdot \es{P}_i \right)}{\es{P}}}$.  We can
      represent $x$ as follows: $x = x \cap (E \uplus E_i) = (x \cap E) \uplus (x \cap E_i)$, for
      $E_i$ in $\biguplus_i E_i$ and where $x \cap E \in \confmax{\es{P}}$ and
      $x \cap E_i \in \confmax{\sum_i p_i \cdot \es{P}_i}$.
      Hence, it follows directly that
      $(x \cap E) \uplus (x \cap E_i) = x \cap (E \uplus E_i) = x
      \in \confmax{\sum_i p_i \cdot (\conc{\es{P}_i}{\es{P}})}$.
      
    \item[$\Rightarrow$] If $x \in \confmax{\sum_i p_i \cdot (\conc{\es{P}_i}{\es{P}})}$ then
      $x \in \confmax{\conc{\left( \sum_i p_i \cdot \es{P}_i \right)}{\es{P}}}$

      Let $x \in \confmax{\sum_i p_i \cdot (\conc{\es{P}_i}{\es{P}})}$.  Hence
      $\exists x_i \in \confmax{\sum_i p_i \cdot \es{P}_i}, y \in \confmax{\es{P}}$ such that
      $x = x_i \cup y$. Since $x_i \in \confmax{\sum_i p_i \cdot \es{P}_i}$, then
      $\exists i\, .\, x_i \in \confmax{\es{P}_i}$. We then have
      $x_i \cup y = x \in \confmax{\conc{\es{P}_i}{\es{P}}}$ and consequently
      $x \in \confmax{\conc{\left( \sum_i p_i \cdot \es{P}_i \right)}{\es{P}}}$.
    \end{itemize}
  \end{enumerate}
\end{proof}

\subsubsection*{Proof of Lemma~\ref{lem:sound-aux-seq-conc-2}}
\begin{proof}
  \begin{enumerate}
  \item $\com{C} \equiv \seq{\com{C}_1}{\com{C}_2}$

    We know that
    $\seq{\com{C}_1}{\com{C}_2} \rightarrow \sum_i p_i \cdot (\tau, \seq{\com{C}_i}{\com{C}_2})$.
    By the rules in Figure~\ref{fig:op-small2} we have
    $\com{C}_1 \rightarrow \sum_i p_i \cdot (\tau, \com{C}_i)$.  By i.h. we have
    $x_1 \in \confmax{\mf{\com{C}_1}_{\gamma}}$ and
    $x_1 \in \confmax{\sum_i p_i \cdot \mf{\com{C}_i}_{\gamma}}$ such that
    $\exists \mf{\com{C}_i}_{\gamma}\, .\, v_1(x_1) = v_i(x_1)$.  Let
    $x_2 \in \confmax{\mf{\com{C}_2}_{\gamma}}$.  By Definition~\ref{def:pes-seq2} we have
    $x_1 \cup (x_2 \times \{x_1\}) \in \confmax{\mf{\seq{\com{C}_1}{\com{C}_2}}_{\gamma}}$ and
    $x_1 \cup (x_2 \times \{x_1\}) \in \confmax{\seq{(\sum_i p_i \cdot
        \mf{\com{C}_i}_{\gamma})}{\mf{\com{C}_2}_{\gamma}}}$, and
    $v(x) = v_1(x_1) \cdot v_2(x_2) = v_i(x_1) \cdot v_2(x_2)$.  By Lemma~\ref{lem:sound-aux-seq-2}
    we have
    $x_1 \cup (x_2 \times \{x_1\}) \in \confmax{\sum_i p_i \cdot
      \mf{\seq{\com{C}_i}{\com{C}_2}}_{\gamma}}$.

  \item $\com{C} \equiv \conc{\com{C}_1}{\com{C}_2}$

    We know that
    $\conc{\com{C}_1}{\com{C}_2} \rightarrow \sum_i p_i \cdot (\tau, \conc{\com{C}_i}{\com{C}_2})$.
    By the rules in Figure~\ref{fig:op-small2} we have
    $\com{C}_1 \rightarrow \sum_i p_i \cdot (\tau, \com{C}_i)$.  By i.h. we have
    $x_1 \in \confmax{\mf{\com{C}_1}_{\gamma}}$ and
    $x_1 \in \confmax{\sum_i p_i \cdot \mf{\com{C}_i}_{\gamma}}$ such that
    $\exists \mf{\com{C}_i}_{\gamma}\, .\, v_1(x_1) = v_i(x_1)$.  Let
    $x_2 \in \confmax{\mf{\com{C}_2}_{\gamma}}$.  By Definition~\ref{def:pes-conc2} we have
    $x_1 \cup x_2 \in \confmax{\mf{\conc{\com{C}_1}{\com{C}_2}}_{\gamma}}$ and
    $x_1 \cup x_2 \in \confmax{\conc{(\sum_i p_i \cdot
        \mf{\com{C}_i}_{\gamma})}{\mf{\com{C}_2}}_{\gamma}}$, and
    $v(x) = v_1(x_1) \cdot v_2(x_2) = v_i(x_1) \cdot v_2(x_2)$.  By Lemma~\ref{lem:sound-aux-conc-2}
    we have $x_1 \cup x_2 \in \confmax{\sum_i p_i \cdot \mf{\conc{\com{C}_i}{\com{C}_2}}}_{\gamma}$.

    A similar reasoning is applied when
    $\conc{\com{C}_1}{\com{C}_2} \rightarrow \sum_j p_j \cdot (\tau, \conc{\com{C}_2}{\com{C}_j})$.
  \end{enumerate}
\end{proof}

\subsubsection*{Proof of Lemma~\ref{lem:prog-2}}
\begin{proof}
  Induction over $\com{C}$.
  \begin{itemize}
  \item $\com{C} \equiv \com{skip}$.

    It follows directly that $\com{skip} \twoheadrightarrow 1 \cdot (sk, \checkmark)$

  \item $\com{C} \equiv \com{a}$.

    It follows directly that $\com{a} \twoheadrightarrow 1 \cdot (a, \checkmark)$

  \item $\com{C} \equiv \probC{\com{C}_1}{p}{\com{C}_2}$

    By Figure~\ref{fig:op-small2},
    $\probC{\com{C}_1}{p}{\com{C}_2} \rightarrow p \cdot (\tau, \com{C}_1) + (1-p) \cdot (\tau,
    \com{C}_2)$.  By i.h., $\exists$ $\sum_n p_n (\omega_n, \com{C}_n)$,
    $\sum_m p_m (\omega_m, \com{C}_m)$ s.t.
    $\com{C}_1 \twoheadrightarrow \sum_n p_n (\omega_n, \com{C}_n)$ and
    $\com{C}_2 \twoheadrightarrow \sum_m p_m (\omega_m, \com{C}_m)$.  By Figure~\ref{fig:op-nstep2},
    $\probC{\com{C}_1}{p}{\com{C}_2} \twoheadrightarrow \sum_n p_n (\tau : \omega_n, \com{C}_n) + \sum_m
    p_m (\tau : \omega_m, \com{C}_m)$.

  \item $\com{C} \equiv \seq{\com{C}_1}{\com{C}_2}$

    According to Figure~\ref{fig:op-small2} we have three cases:
    \begin{enumerate}
    \item $\seq{\com{C}_1}{\com{C}_2} \rightarrow 1 \cdot (l, \com{C}_2)$

      By i.h., $\exists$ $\sum_n p_n (\omega_n, \com{C}_n)$ s.t.
      $\com{C}_2 \twoheadrightarrow \sum_n p_n (\omega_n, \com{C}_n)$.  By
      Figure~\ref{fig:op-nstep2},
      $\seq{\com{C}_1}{\com{C}_2} \twoheadrightarrow \sum_n p_n (l : \omega_n, \com{C}_n)$.

    \item $\seq{\com{C}_1}{\com{C}_2} \rightarrow 1 \cdot (l, \seq{\com{C}_1'}{\com{C}_2})$

      By i.h., $\exists$ $\sum_n p_n (\omega_n, \com{C}_n)$ s.t.
      $\seq{\com{C}_1'}{\com{C}_2} \twoheadrightarrow \sum_n p_n (\omega_n, \com{C}_n)$.  By
      Figure~\ref{fig:op-nstep2},
      $\seq{\com{C}_1}{\com{C}_2} \twoheadrightarrow \sum_n p_n (l : \omega_n, \com{C}_n)$.

    \item
      $\seq{\com{C}_1}{\com{C}_2} \rightarrow \sum_i p_i \cdot (\tau, \seq{\com{C}_i}{\com{C}_2})$

      By i.h., $\forall i$, $\exists$ $\sum_n p_n (\omega_{i n}, \com{C}_{i n})$ s.t.
      $\seq{\com{C}_i}{\com{C}_2} \twoheadrightarrow \sum_n p_n (\omega_{i n}, \com{C}_{i n})$.  By
      Figure~\ref{fig:op-nstep2},
      $\seq{\com{C}_1}{\com{C}_2} \twoheadrightarrow \sum_i p_i \sum_n p_n
      (\tau : \omega_{i n}, \com{C}_{i n})$.
    \end{enumerate}

  \item $\com{C} \equiv \conc{\com{C}_1}{\com{C}_2}$

    According to Figure~\ref{fig:op-small2} we have three cases:
    \begin{enumerate}
    \item $\conc{\com{C}_1}{\com{C}_2} \rightarrow 1 \cdot (l, \com{C}_2)$

      By i.h., $\exists$ $\sum_n p_n (\omega_n, \com{C}_n)$ s.t.
      $\com{C}_2 \twoheadrightarrow \sum_n p_n (\omega_n, \com{C}_n)$.  By
      Figure~\ref{fig:op-nstep2},
      $\conc{\com{C}_1}{\com{C}_2} \twoheadrightarrow \sum_n p_n (l : \omega_n, \com{C}_n)$.

    \item $\conc{\com{C}_1}{\com{C}_2} \rightarrow 1 \cdot (l, \conc{\com{C}_1'}{\com{C}_2})$

      By i.h., $\exists$ $\sum_n p_n (\omega_n, \com{C}_n)$ s.t.
      $\conc{\com{C}_1'}{\com{C}_2} \twoheadrightarrow \sum_n p_n (\omega_n, \com{C}_n)$.  By
      Figure~\ref{fig:op-nstep2},
      $\conc{\com{C}_1}{\com{C}_2} \twoheadrightarrow \sum_n p_n (l : \omega_n, \com{C}_n)$.

    \item
      $\conc{\com{C}_1}{\com{C}_2} \rightarrow \sum_i p_i \cdot (\tau, \conc{\com{C}_i}{\com{C}_2})$

      By i.h., $\forall i$, $\exists$ $\sum_n p_n (\omega_{i n}, \com{C}_{i n})$ s.t.
      $\conc{\com{C}_i}{\com{C}_2} \twoheadrightarrow \sum_n p_n (\omega_{i n}, \com{C}_{i n})$.  By
      Figure~\ref{fig:op-nstep2},
      $\conc{\com{C}_1}{\com{C}_2} \twoheadrightarrow \sum_i p_i \sum_n p_n
      (\tau : \omega_{i n}, \com{C}_{i n})$.
    \end{enumerate}

  \item $\com{C} \equiv \rec{X}{\com{D}}$

    According to Figure~\ref{fig:op-small2} we have two cases:
    \begin{enumerate}
    \item $\rec{X}{\com{D}} \rightarrow 1 \cdot (l, \com{D}'[X \leftarrow \rec{X}{\com{D}}])$

      By i.h., $\exists$ $\sum_n p_n (\omega_n, \com{D}_n)$ s.t.
      $\com{D}'[X \leftarrow \rec{X}{\com{D}}] \twoheadrightarrow \sum_n p_n (\omega_n, \com{D}_n)$.
      By Figure~\ref{fig:op-nstep2},
      $\rec{X}{\com{D}} \twoheadrightarrow \sum_n p_n (l : \omega_n, \com{D}_n)$.

    \item $\rec{X}{\com{D}} \rightarrow \sum_i p_i (\tau, \com{D}_i [X \leftarrow \rec{X}{\com{D}}])$

      By i.h.,$\forall i\ \exists\ \sum_n p_n (\omega_{i n}, \com{D}_{i n})$ s.t.
      $\com{D}_i [X \leftarrow \rec{X}{\com{D}}] \twoheadrightarrow \sum_n p_n (\omega_{i n},
      \com{D}_{i n})$.  By Figure~\ref{fig:op-nstep2},
      $\rec{X}{\com{D}} \twoheadrightarrow \sum_i p_i \sum_n p_n (\tau : \omega_{i n}, \com{D}_{i n})$.
    \end{enumerate}
  \end{itemize}
\end{proof}

\subsubsection*{Proof of Lemma~\ref{res:soundI-2}}
\begin{proof}
  Induction over rules in Figure~\ref{fig:op-small2}.

  \begin{itemize}
  \item $\com{skip} \rightarrow 1 \cdot (sk, \checkmark)$

    It follows directly that $\mf{\checkmark} \equiv \mf{\com{skip}} \backslash sk \equiv \emptyset$.

  \item $\com{a} \rightarrow 1 \cdot (a, \checkmark)$

    It follows directly that $\mf{\checkmark} \equiv \mf{\com{a}} \backslash a \equiv \emptyset$.

  \item $\probC{\com{C}_1}{p}{\com{C}_2} \rightarrow p \cdot (\tau, \com{C}_1) + (1-p) \cdot (\tau, \com{C}_2)$

    It follows directly that
    $p \cdot \mf{\com{C}_1} + (1-p) \cdot \mf{\com{C}_2} = \mf{\probC{\com{C}_1}{p}{\com{C}_2}}$.

  \item $\seq{\com{C}_1}{\com{C}_2} \rightarrow 1 \cdot (l, \com{C}_2)$
    \begin{align*}
      & \seq{\com{C}_1}{\com{C}_2} \xrightarrow{l} \com{C}_2 \\
      \Rightarrow & \set{ \text{Figure~\ref{fig:op-small2} entails}} \\
      & \com{C}_1 \xrightarrow{l} \checkmark \\
      \Rightarrow & \set{\text{i.h.}} \\
      & \mf{\checkmark} \equiv \mf{\com{C}_1} \backslash l \\
      \Rightarrow & \set{ \text{Lemma~\ref{lem:seq-mono2}}} \\
      & \seq{\mf{\checkmark}}{\mf{\com{C}_2}} \equiv
        \seq{(\mf{\com{C}_1} \backslash l)}{\mf{\com{C}_2}} \\
      \Rightarrow & \set{
                    \seq{\mf{\checkmark}}{\mf{\com{C}_2}} \equiv \mf{\com{C}_2},
                    \text{ Lemma~\ref{lem:seq-rem-init2}}
                    } \\
      & \mf{\com{C}_2} \equiv (\seq{\mf{\com{C}_1}}{\mf{\com{C}_2}}) \backslash l \\
      \Rightarrow & \set{\text{Definition~\ref{def:den-sem2}}} \\
      & \mf{\com{C}_2} \equiv \mf{\seq{\com{C}_1}{\com{C}_2}} \backslash l \\
    \end{align*}

  \item $\seq{\com{C}_1}{\com{C}_2} \rightarrow 1 \cdot (l, \seq{\com{C}'_1}{\com{C}_2})$
    \begin{align*}
      & \seq{\com{C}_1}{\com{C}_2} \xrightarrow{l} \seq{\com{C}'_1}{\com{C}_2} \\
      \Rightarrow & \set{ \text{Figure~\ref{fig:op-small2} entails}} \\
      & \com{C}_1 \xrightarrow{l} \com{C}'_1 \\
      \Rightarrow & \set{\text{i.h.}} \\
      & \mf{\com{C}'_1} \equiv \mf{\com{C}_1} \backslash l \\
      \Rightarrow & \set{ \text{Lemma~\ref{lem:seq-mono2}}} \\
      & \seq{\mf{\com{C}'_1}}{\mf{\com{C}_2}} \equiv \seq{(\mf{\com{C}_1} \backslash l)}{\mf{\com{C}_2}} \\
      \Rightarrow & \set{\text{Lemma~\ref{lem:seq-rem-init2}}} \\
      & \seq{\mf{\com{C}'_1}}{\mf{\com{C}_2}} \equiv (\seq{\mf{\com{C}_1}}{\mf{\com{C}_2}}) \backslash l \\
      \Rightarrow & \set{\text{Definition~\ref{def:den-sem2}}} \\
      & \mf{\seq{\com{C}'_1}{\com{C}_2}} \equiv \mf{\seq{\com{C}_1}{\com{C}_2}} \backslash l \\
    \end{align*}

  \item $\seq{\com{C}_1}{\com{C}_2} \rightarrow \sum_i p_i \cdot (\tau, \seq{\com{C}_i}{\com{C}_2})$
    \begin{align*}
      & \seq{\com{C}_1}{\com{C}_2} \rightarrow \sum_i p_i \cdot (\tau, \seq{\com{C}_i}{\com{C}_2}) \\
      \Rightarrow &  \set{ \text{Figure~\ref{fig:op-small2} entails}} \\
      & \com{C}_1 \rightarrow \sum_i p_i \cdot (\tau, \com{C}_i) \\
      \Rightarrow & \set{\text{i.h.}} \\
      & \mf{\com{C}_1} \sqsubseteq \sum_i p_i \cdot \mf{\com{C}_i} \\
      \Rightarrow & \set{ \text{Lemma~\ref{lem:seq-mono2}}} \\
      & \mf{\seq{\com{C}_1}{\com{C}_2}} \sqsubseteq \seq{(\sum_i p_i \cdot \mf{\com{C}_i})}{\mf{\com{C}_2}} \\
      \Rightarrow & \set{\text{Lemma~\ref{lem:sound-aux-seq-2}}} \\
      & \mf{\seq{\com{C}_1}{\com{C}_2}} \sqsubseteq \sum_i p_i \cdot \mf{\seq{\com{C}_i}{\com{C}_2}}
    \end{align*}

  \item $\conc{\com{C}_1}{\com{C}_2} \rightarrow 1 \cdot (l, \com{C}_2)$
    \begin{align*}
      & \conc{\com{C}_1}{\com{C}_2} \xrightarrow{l} \com{C}_2 \\
      \Rightarrow & \set{ \text{Figure~\ref{fig:op-small2} entails}} \\
      & \com{C}_1 \xrightarrow{l} \checkmark \\
      \Rightarrow & \set{\text{i.h.}} \\
      & \mf{\checkmark} \equiv \mf{\com{C}_1} \backslash l \\
      \Rightarrow & \set{\text{Lemma~\ref{lem:conc-mono2}}} \\
      & \conc{\mf{\checkmark}}{\mf{\com{C}_2}} \equiv \conc{(\mf{\com{C}_1} \backslash l)}{\mf{\com{C}_2}} \\
      \Rightarrow & \set{\conc{\mf{\checkmark}}{\mf{\com{C}_2}} \equiv \mf{\com{C}_2} } \\
      & \mf{\com{C}_2} \equiv \conc{(\mf{\com{C}_1} \backslash l)}{\mf{\com{C}_2}} \\
      \Rightarrow & \set{\text{Lemma~\ref{lem:conc-rem-init2}},
                    \text{Definition~\ref{def:den-sem2}}} \\
      & \mf{\com{C}_2} \equiv \mf{\conc{\com{C}_1}{\com{C}_2}} \backslash l \\
    \end{align*}

  \item $\conc{\com{C}_1}{\com{C}_2} \rightarrow 1 \cdot (l, \conc{\com{C}'_1}{\com{C}_2})$
    \begin{align*}
      & \conc{\com{C}_1}{\com{C}_2} \xrightarrow{l} \conc{\com{C}'_1}{\com{C}_2} \\
      \Rightarrow & \set{ \text{Figure~\ref{fig:op-small2} entails}} \\
      & \com{C}_1 \xrightarrow{l} \com{C}'_1 \\
      \Rightarrow & \set{\text{i.h.}} \\
      & \mf{\com{C}'_1} \equiv \mf{\com{C}_1} \backslash l \\
      \Rightarrow & \set{\text{Lemma~\ref{lem:conc-mono2}}} \\
      & \conc{\mf{\com{C}'_1}}{\mf{\com{C}_2}} \equiv \conc{(\mf{\com{C}_1} \backslash l)}{\mf{\com{C}_2}} \\
      \Rightarrow & \set{\text{Lemma~\ref{lem:conc-rem-init2}},
                    \text{Definition~\ref{def:den-sem2}}} \\
      & \mf{\conc{\com{C}'_1}{\com{C}_2}} \equiv \mf{\conc{\com{C}_1}{\com{C}_2}} \backslash l \\
    \end{align*}

  \item $\conc{\com{C}_1}{\com{C}_2} \rightarrow \sum_i p_i \cdot (\tau, \conc{\com{C}_i}{\com{C}_2})$
    \begin{align*}
      & \conc{\com{C}_1}{\com{C}_2} \rightarrow \sum_i p_i \cdot (\tau, \conc{\com{C}_i}{\com{C}_2}) \\
      \Rightarrow &  \set{ \text{Figure~\ref{fig:op-small2} entails}} \\
      & \com{C}_1 \rightarrow \sum_i p_i \cdot (\tau, \com{C}_i) \\
      \Rightarrow & \set{\text{i.h.}} \\
      & \mf{\com{C}_1} \sqsubseteq \sum_i p_i \cdot \mf{\com{C}_i} \\
      \Rightarrow & \set{ \text{Lemma~\ref{lem:conc-mono2}}} \\
      & \mf{\conc{\com{C}_1}{\com{C}_2}} \sqsubseteq \conc{(\sum_i p_i \cdot \mf{\com{C}_i})}{\mf{\com{C}_2}} \\
      \Rightarrow & \set{\text{Lemma~\ref{lem:sound-aux-conc-2}, Definition~\ref{def:den-sem2}}} \\
      & \mf{\conc{\com{C}_1}{\com{C}_2}} \sqsubseteq \sum_i p_i \cdot \mf{\conc{\com{C}_i}{\com{C}_2}}
    \end{align*}

  \item $\conc{\com{C}_1}{\com{C}_2} \xrightarrow{l} \com{C}_1$
    \begin{align*}
      & \conc{\com{C}_1}{\com{C}_2} \xrightarrow{l} \com{C}_1 \\
      \Rightarrow & \set{ \text{Figure~\ref{fig:op-small2} entails}} \\
      & \com{C}_2 \xrightarrow{l} \checkmark \\
      \Rightarrow & \set{\text{i.h.}} \\
      & \mf{\checkmark} \equiv \mf{\com{C}_2} \backslash l \\
      \Rightarrow & \set{\text{Lemma~\ref{lem:conc-mono2}}} \\
      & \conc{\mf{\com{C}_1}}{\mf{\checkmark}} \equiv
        \conc{\mf{\com{C}_1}}{(\mf{\com{C}_2} \backslash l)} \\
      \Rightarrow & \set{\conc{\mf{\com{C}_1}}{\mf{\checkmark}} \equiv \mf{\com{C}_1} } \\
      & \mf{\com{C}_1} \equiv \conc{\mf{\com{C}_1}}{(\mf{\com{C}_2} \backslash l)} \\
      \Rightarrow & \set{\text{Lemma~\ref{lem:conc-rem-init2}},
                    \text{Definition~\ref{def:den-sem2}}} \\
      & \mf{\com{C}_1} \equiv \mf{\conc{\com{C}_1}{\com{C}_2}} \backslash l \\
    \end{align*}

  \item $\conc{\com{C}_1}{\com{C}_2} \xrightarrow{l} \conc{\com{C}_1}{\com{C}'_2}$
    \begin{align*}
      & \conc{\com{C}_1}{\com{C}_2} \xrightarrow{l} \conc{\com{C}_1}{\com{C}'_2} \\
      \Rightarrow & \set{ \text{Figure~\ref{fig:op-small2} entails}} \\
      & \com{C}_2 \xrightarrow{l} \com{C}'_2 \\
      \Rightarrow & \set{\text{i.h.}} \\
      & \mf{\com{C}'_2} \equiv \mf{\com{C}_2} \backslash l \\
      \Rightarrow & \set{\text{Lemma~\ref{lem:conc-mono2}}} \\
      & \conc{\mf{\com{C}_1}}{\mf{\com{C}'_2}} \equiv \conc{\mf{\com{C}_1}}{(\mf{\com{C}_2} \backslash l)} \\
      \Rightarrow & \set{\text{Lemma~\ref{lem:conc-rem-init2}},
                    \text{Definition~\ref{def:den-sem2}}} \\
      & \mf{\conc{\com{C}_1}{\com{C}'_2}} \equiv \mf{\conc{\com{C}_1}{\com{C}_2}} \backslash l \\
    \end{align*}

  \item $\conc{\com{C}_1}{\com{C}_2} \rightarrow \sum_j p_j \cdot (\tau, \conc{\com{C}_1}{\com{C}_j})$
    \begin{align*}
      & \conc{\com{C}_1}{\com{C}_2} \rightarrow \sum_j p_j \cdot (\tau, \conc{\com{C}_1}{\com{C}_j}) \\
      \Rightarrow &  \set{ \text{Figure~\ref{fig:op-small2} entails}} \\
      & \com{C}_2 \rightarrow \sum_j p_j \cdot (\tau, \com{C}_j) \\
      \Rightarrow & \set{\text{i.h.}} \\
      & \mf{\com{C}_2} \sqsubseteq \sum_j p_j \cdot \mf{\com{C}_j} \\
      \Rightarrow & \set{ \text{Lemma~\ref{lem:conc-mono2}}} \\
      & \mf{\conc{\com{C}_1}{\com{C}_2}} \sqsubseteq \conc{\mf{\com{C}_1}}{(\sum_j p_j \cdot \mf{\com{C}_j})} \\
      \Rightarrow & \set{\text{Lemma~\ref{lem:sound-aux-conc-2}, Definition~\ref{def:den-sem2}}} \\
      & \mf{\conc{\com{C}_1}{\com{C}_2}} \sqsubseteq \sum_j p_j \cdot \mf{\conc{\com{C}_1}{\com{C}_j}}
    \end{align*}
  \end{itemize}
\end{proof}

\subsubsection*{Proof of Theorem~\ref{res:soundII-2}}
\begin{proof}
  Induction over the size of $\omega_0$.
  \begin{itemize}
  \item $|\omega_0| = 1$

    We have that $\com{C} \twoheadrightarrow 1 \cdot (l, \checkmark)$.  It follows directly that
    $\{ l \} \in \confmax{\mf{\com{C}}}$ such that $\emptyset \cchain{l} \{ l \}$ and
    $v(\{ l \}) = 1$.

  \item $|\omega| > 1$

    Let us rewrite $p_0 (\omega_0, \checkmark) + \sum_k p_k (\omega_{k}, \com{C}_k)$ as
    $\sum_{ij} p_{ij} (\omega_{ij}, \com{C}_{ij})$
    \begin{align*}
      & \com{C} \twoheadrightarrow \sum_{ij} p_{ij} (\omega_{ij}, \com{C}_{ij}) \\
      \Rightarrow & \{ \text{Figure~\ref{fig:op-nstep2} entails}  \} \\
      & \com{C} \rightarrow \sum_i p_i (l', \com{C}_i)
        \qquad
        \forall i\, \com{C}_i \twoheadrightarrow \sum_j p_j (\omega_{ij}', \com{C}_{ij})
    \end{align*}

    We have two cases:
    \begin{enumerate}
    \item Case $l' \neq \tau$
      \begin{align*}
        & \com{C} \rightarrow 1 \cdot (l', \com{C}')
          \qquad \com{C'} \twoheadrightarrow p_0 (\omega_{0}', \checkmark)
          + \sum_{j \neq 0} p_j (\omega_j', \com{C}_j) \\
        \Rightarrow & \{ \text{Lemma~\ref{res:soundI-2}, i.h.}  \} \\
        & \mf{\com{C}} \backslash l' \equiv \mf{\com{C}'}
          \qquad
          \exists x_0 \in \confmax{\mf{\com{C}'}} \text{ such that }
          \emptyset \cchain{\omega_j'} x_0'
          \text{ and } p_0 = v'(x_0') \\
        \Rightarrow & \{ \text{Definition~\ref{def:rem-init2}} \} \\
        & \{ l' \} \cup x_0' \in \confmax{\mf{\com{C}}} \text{ such that }
          \emptyset \cchain{l'} \{ l' \} \cchain{\omega_j'} \{ l' \} \cup x_0'
          \text{ and }
          p_0 = v(\{ l' \} \cup x_0')
      \end{align*}
    \item Case $l' = \tau$
      \begin{align*}
        & \com{C} \rightarrow \sum_i p_i (l', \com{C}_i)
        \qquad
          \exists i\, \com{C}_i \twoheadrightarrow
          p_0' (\omega_{i0}', \checkmark) +
          \sum_{j \neq 0} p_j (\omega_{ij}', \com{C}_{ij}) \\
        \Rightarrow & \{ \text{Lemma~\ref{res:soundI-2}, i.h.} \} \\
        & \mf{\com{C}} \sqsubseteq \sum_i p_i \mf{\com{C}_i} \\
        & \exists i, \exists x_{i0} \in \confmax{\mf{\com{C}_i}} \text{ such that }
          \emptyset \cchain{\omega_{i0}'} x_{i0} \text{ and } p'_0 = v_i(x_{i0})
      \end{align*}
      Now we have two sub-cases:
      \begin{enumerate}
      \item Case $\com{C} = \probC{\com{C}_1}{p}{\com{C}_2}$

        By Definition~\ref{def:rem-init2},
        $\exists i, \exists \set{\tau} \cup x_{i0} \in \confmax{\mf{\com{C}}}$ such that 
        \begin{align*}
          &
            \emptyset \cchain{\tau} \set{\tau}
            \cchain{\omega_{i0}'} \set{\tau} \cup x_{i0}',\
            v_i(\set{\tau} \cup x_{i0}') = p_i \cdot p_0' = p_0
        \end{align*}
      \item Case $\com{C} = \seq{\com{C}_1}{\com{C}_2}$ or $\com{C} = \conc{\com{C}_1}{\com{C}_2}$

        By Remark~\ref{rem:probc-n} and Lemma~\ref{lem:sound-aux-seq-conc-2}
        $\exists i, \exists \set{\tau} \cup x_{i0} \in \confmax{\mf{\com{C}}}$ such that
        \begin{align*}
          &
            \emptyset \cchain{\tau} \set{\tau}
            \cchain{\omega_{i0}'} \set{\tau} \cup x_{i0}',\
            v_i(\set{\tau} \cup x_{i0}') = p_i \cdot p_0' = p_0
        \end{align*}

      \end{enumerate}
    \end{enumerate}
  \end{itemize}

\end{proof}

\subsubsection*{Proof of Lemma~\ref{res:adI-2}}
\begin{proof}

  \begin{itemize}
  \item $sk \in \init{\mf{\com{skip}}}$

    Let $\com{C}' = \checkmark$.  It follows directly that $\com{skip} \rightarrow 1 \cdot (sk, \checkmark)$
    and that $\mf{\com{skip}} \backslash sk \equiv \mf{\checkmark}$.

  \item $a \in \init{\mf{\com{a}}}$

    Let $\com{C}' = \checkmark$.  It follows directly that $\com{a} \rightarrow 1 \cdot (a, \checkmark)$
    and that $\mf{\com{a}} \backslash a \equiv \mf{\checkmark}$.

  \item $\tau \in \init{\mf{\probC{\com{C}_1}{p}{\com{C}_2}}}$

    By Definition~\ref{def:pes-prob2} we have that
    $\mf{\probC{\com{C}_1}{p}{\com{C}_2}} = p \cdot \mf{\com{C}_1} + (1-p) \cdot \mf{\com{C}_2}$,
    hence $\tau \in \init{p \cdot \mf{\com{C}_1} + (1-p) \cdot \mf{\com{C}_2}}$.  Let
    $l \in \init{\mf{\com{C}_1}}$ By Definition~\ref{def:pes-prob2} and Lemma~\ref{lem:prob-init} we
    have that $v(\set{\tau, l}) = p \cdot v_1(\set{l}) = p$.  It then follows directly that
    $\probC{\com{C}_1}{p}{\com{C}_2} \rightarrow
    p \cdot (\tau, \com{C}_1) + (1-p) \cdot (\tau, \com{C}_2)$ and
    $\mf{\probC{\com{C}_1}{p}{\com{C}_2}} = p \cdot \mf{\com{C}_1} + (1-p) \cdot \mf{\com{C}_2}$.
    Similarly we do the same when $l \in \init{\mf{\com{C}_2}}$.

  \item $l' \in \init{\mf{\seq{\com{C}_1}{\com{C}_2}}}$

    We have two cases:
    \begin{enumerate}
    \item $l' \neq \tau$

      By Definition~\ref{def:pes-seq2} we have that $l \in \init{\mf{\com{C}_1}}$.  By i.h.,
      $\exists \com{C}'$ such that $\com{C}_1 \rightarrow 1 \cdot (l, \com{C}')$ and
      $\mf{\com{C}_1} \backslash l \equiv \mf{\com{C}'}$. We have two cases:
      \begin{enumerate}
      \item $\com{C}' = \checkmark$

        We have $\com{C}_1 \rightarrow 1 \cdot (l, \checkmark)$ and
        $\mf{\com{C}_1} \backslash l \equiv \mf{\checkmark}$.  By the rules in Figure~\ref{fig:op-small2},
        $\seq{\com{C}_1}{\com{C}_2} \rightarrow 1 \cdot (l,\com{C}_2)$.
        By Definition~\ref{def:pes-seq2}, $\seq{(\mf{\com{C}_1} \backslash l)}{\mf{\com{C}_2}} \equiv
        \seq{\mf{\checkmark}}{\mf{\com{C}_2}} = \mf{\com{C}_2}$.
        
      \item $\com{C}' = \com{C}'_1$

        We have $\com{C}_1 \xrightarrow{l} \com{C}'_1$ and
        $\mf{\com{C}_1} \backslash l \equiv \mf{\com{C}'_1}$.  By the rules in Figure~\ref{fig:op-small2},
        $\seq{\com{C}_1}{\com{C}_2} \rightarrow 1 \cdot (l, \seq{\com{C}'_1}{\com{C}_2})$.  By
        Definition~\ref{def:pes-seq2},
        $\seq{(\mf{\com{C}_1} \backslash l)}{\mf{\com{C}_2}} \equiv \seq{\mf{\com{C}'_1}}{\mf{\com{C}_2}}$.
        By Definition~\ref{def:den-sem2}, $\mf{\seq{\com{C}'_1}{\com{C}_2}}$.
      \end{enumerate}
      
    \item $l' = \tau$

      We have $\tau \in \init{\mf{\seq{\com{C}_1}{\com{C}_2}}}$, which by
      Definition~\ref{def:pes-seq2} gives us that $\tau \in \init{\mf{\com{C}_1}}$.  By i.h.,
      $\exists \com{C}', \com{C}''$ such that
      $\com{C}_1 \rightarrow p \cdot (\tau, \com{C}') + (1-p) \cdot (\tau, \com{C}'')$ with
      $p = v(\set{\tau, e'})$ and $e' \in \init{\com{C}'}$, and
      $\mf{\com{C}_1} \sqsubseteq p \cdot \mf{\com{C}'} + (1-p) \cdot \mf{\com{C}''}$.  By the rules
      in Figure~\ref{fig:op-small2}, we have
      $\seq{\com{C}_1}{\com{C}_2} \rightarrow
      p \cdot (\tau, \seq{\com{C}'}{\com{C}_2}) + (1-p) \cdot (\tau, \seq{\com{C}''}{\com{C}_2})$.
      By Lemma~\ref{lem:seq-mono2},
      $\seq{\mf{\com{C}_1}}{\mf{\com{C}_2}} \sqsubseteq
      \seq{(p \cdot \mf{\com{C}'} + (1-p) \cdot \mf{\com{C}''})}{\com{C}_2}$.
      By Lemma~\ref{lem:sound-aux-seq-2} and Definition~\ref{def:den-sem2},
      $\mf{\seq{\com{C}_1}{\com{C}_2}}
      \sqsubseteq p \cdot \mf{\seq{\com{C}'}{\com{C}_2}}
      + (1-p) \cdot \mf{\seq{\com{C}''}{\com{C}_2}}$.
    \end{enumerate}

  \item $l \in \init{\mf{\conc{\com{C}_1}{\com{C}_2}}}$

    We have two cases:
    \begin{enumerate}
    \item $l' \neq \tau$

      By Definition~\ref{def:pes-conc2} we have two cases: 
      \begin{enumerate}
      \item $l \in \init{\mf{\com{C}_1}}$

        By i.h. $\exists \com{C}'\, .\, \com{C}_1 \rightarrow 1 \cdot (l, \com{C}')$ and
        $\mf{\com{C}_1} \backslash l \equiv \mf{\com{C}'}$. 
        By the rules in Figure~\ref{fig:op-small2} we have two cases:
        \begin{enumerate}
        \item $\com{C}' = \checkmark$

          We have $\com{C}_1 \rightarrow 1 \cdot (l, \checkmark)$ and
          $\mf{\com{C}_1} \backslash l \equiv \mf{\checkmark}$.  By the rules in
          Figure~\ref{fig:op-small1} we have
          $\conc{\com{C}_1}{\com{C}_2} \rightarrow 1 \cdot (l, \com{C}_2)$.  By
          Definition~\ref{def:pes-conc2}, $\conc{(\mf{\com{C}_1}\backslash l)}{\mf{\com{C}_2}}$.
          By Lemma~\ref{lem:conc-rem-init2} we have
          $(\conc{\mf{\com{C}_1}}{\mf{\com{C}_2}}) \backslash l$.
          By Definition~\ref{def:den-sem2}, $\mf{\conc{\com{C}_1}{\com{C}_2}} \backslash l$.

        \item $\com{C}' = \com{C}'_1$

          We have $\com{C}_1 \rightarrow 1 \cdot (l, \com{C}'_1)$ and
          $\mf{\com{C}_1} \backslash l \equiv \mf{\com{C}'_1}$.  By the rules in
          Figure~\ref{fig:op-small2} we have
          $\conc{\com{C}_1}{\com{C}_2} \rightarrow 1 \cdot (l, \conc{\com{C}'_1}{\com{C}_2})$. By
          Definition~\ref{def:pes-conc2}, $\conc{(\mf{\com{C}_1}\backslash l)}{\mf{\com{C}_2}}$.
          By Lemma~\ref{lem:conc-rem-init2} we have
          $(\conc{\mf{\com{C}_1}}{\mf{\com{C}_2}}) \backslash l$.  By Definition~\ref{def:den-sem2},
          $\mf{\conc{\com{C}_1}{\com{C}_2}} \backslash l$.
        \end{enumerate}
        
      \item $l \in \init{\mf{\com{C}_2}}$
        
        By i.h. $\exists \com{C}'\, .\, \com{C}_2 \rightarrow 1 \cdot (l, \com{C}')$ and
        $\mf{\com{C}_2} \backslash l \equiv \mf{\com{C}'}$.  By the rules in
        Figure~\ref{fig:op-small2} we have two cases:
        \begin{enumerate}
        \item $\com{C}' = \checkmark$

          We have $\com{C}_2 \rightarrow 1 \cdot (l, \checkmark)$ and
          $\mf{\com{C}_2} \backslash l \equiv \mf{\checkmark}$.  By the rules in
          Figure~\ref{fig:op-small2} we have
          $\conc{\com{C}_1}{\com{C}_2} \rightarrow 1 \cdot (l, \com{C}_1)$.  By
          Definition~\ref{def:pes-conc2}, $\conc{\mf{\com{C}_1}}{(\mf{\com{C}_2}\backslash l)}$.
          By Lemma~\ref{lem:conc-rem-init2} we have
          $(\conc{\mf{\com{C}_1}}{\mf{\com{C}_2}}) \backslash l$.
          By Definition~\ref{def:den-sem2}, $\mf{\conc{\com{C}_1}{\com{C}_2}} \backslash l$.

        \item $\com{C}' = \com{C}'_2$

          We have $\com{C}_2 \rightarrow 1 \cdot (l, \com{C}'_2)$ and
          $\mf{\com{C}_2} \backslash l \equiv \mf{\com{C}'_2}$.  By the rules in
          Figure~\ref{fig:op-small2} we have
          $\conc{\com{C}_1}{\com{C}_2} \rightarrow 1 \cdot (l, \conc{\com{C}_1}{\com{C}'_2})$. By
          Definition~\ref{def:pes-conc2}, $\conc{\mf{\com{C}_1}}{(\mf{\com{C}_2}\backslash l)}$.
          By Lemma~\ref{lem:conc-rem-init2} we have
          $(\conc{\mf{\com{C}_1}}{\mf{\com{C}_2}}) \backslash l$.  By Definition~\ref{def:den-sem2},
          $\mf{\conc{\com{C}_1}{\com{C}_2}} \backslash l$.
        \end{enumerate}
      \end{enumerate}
      
    \item $l' = \tau$

      We have $\tau \in \init{\mf{\conc{\com{C}_1}{\com{C}_2}}}$, which by
      Definition~\ref{def:pes-conc2} entails $\tau \in \init{\mf{\com{C}_1}}$ or
      $\tau \in \init{\mf{\com{C}_2}}$.
      We have two cases:
      \begin{enumerate}
      \item $\tau \in \init{\mf{\com{C}_1}}$

        By i.h., $\exists \com{C}', \com{C}''$ such that
        $\com{C}_1 \rightarrow p \cdot (\tau, \com{C}') + (1-p) \cdot (\tau, \com{C}'')$ with
        $p = v(\set{\tau, e'})$ and $e' \in \init{\mf{\com{C}'}}$, and
        $\mf{\com{C}_1} \sqsubseteq p \cdot \mf{\com{C}'} + (1-p) \cdot \mf{\com{C}''}$.  By the rules
        in Figure~\ref{fig:op-small2}, we have
        $\conc{\com{C}_1}{\com{C}_2} \rightarrow
        p \cdot (\tau, \conc{\com{C}'}{\com{C}_2}) + (1-p) \cdot (\tau, \conc{\com{C}''}{\com{C}_2})$.
        By Lemma~\ref{lem:conc-mono2},
        $\conc{\mf{\com{C}_1}}{\mf{\com{C}_2}} \sqsubseteq
        \conc{(p \cdot \mf{\com{C}'} + (1-p) \cdot \mf{\com{C}''})}{\mf{\com{C}_2}}$.
        By Lemma~\ref{lem:sound-aux-conc-2} and Definition~\ref{def:den-sem2},
        $\mf{\conc{\com{C}_1}{\com{C}_2}} \sqsubseteq
        p \cdot \mf{\conc{\com{C}'}{\com{C}_2}} + (1-p) \cdot \mf{\conc{\com{C}''}{\com{C}_2}}$.
        
      \item $\tau \in \init{\mf{\com{C}_2}}$

        By i.h., $\exists \com{C}', \com{C}''$ such that
        $\com{C}_2 \rightarrow p \cdot (\tau, \com{C}') + (1-p) \cdot (\tau, \com{C}'')$ with
        $p = v(\set{\tau, e'})$ and $e' \in \init{\mf{\com{C}'}}$, and
        $\mf{\com{C}_2} \sqsubseteq p \cdot \mf{\com{C}'} + (1-p) \cdot \mf{\com{C}''}$.  By the rules
        in Figure~\ref{fig:op-small2}, we have
        $\conc{\com{C}_1}{\com{C}_2} \rightarrow
        p \cdot (\tau, \conc{\com{C}_1}{\com{C}'}) + (1-p) \cdot (\tau, \conc{\com{C}_1}{\com{C}''})$.
        By Lemma~\ref{lem:conc-mono2},
        $\conc{\mf{\com{C}_1}}{\mf{\com{C}_2}} \sqsubseteq
        \conc{\com{C}_1}{(p \cdot \mf{\com{C}'} + (1-p) \cdot \mf{\com{C}''})}$.
        By Lemma~\ref{lem:sound-aux-conc-2} and Definition~\ref{def:den-sem2},
        $\mf{\conc{\com{C}_1}{\com{C}_2}} \sqsubseteq
        p \cdot \mf{\conc{\com{C}_1}{\com{C}'}} + (1-p) \cdot \mf{\conc{\com{C}_1}{\com{C}''}}$.
      \end{enumerate}
    \end{enumerate}
  \end{itemize}
\end{proof}

\subsubsection*{Proof of Theorem~\ref{res:adII-2}}
\begin{proof}
  Induction over the size of $\omega_{x_0}$.
  \begin{itemize}
  \item $|\omega_{x_0}| = 1$

    We have $\{ l \} \in \confmax{\mf{\com{C}}}$.  It follows directly that
    $\com{C} \twoheadrightarrow 1 \cdot (l, \checkmark)$ and $v(\{l\}) = 1$.
    
  \item $|\omega_{x_0}| > 1$

    We have $x_0 \in \confmax{\mf{\com{C}}}$.
    We know that $\omega_{x_0} = l_0 l_1 \dots l_n$.
    Hence $\emptyset \cchain{l_0} \{l_0\} \cchain{\omega_{x_0'}} \{l_0\} \cup x_0'$.
    We then have $l_0 \in \init{\mf{\com{C}}}$.
    By Lemma~\ref{res:adI-2} we have two cases:
    \begin{enumerate}
    \item $l_0 \neq \tau$

      Hence $\com{C} \rightarrow 1 \cdot (l_0, \com{C}')$ and
      $\mf{\com{C}} \backslash l_0 \equiv \mf{\com{C}'}$.  By Definition~\ref{def:rem-init2},
      $x_0 \backslash l_0 \in \confmax{\mf{\com{C}'}}$ such that
      $\emptyset \cchain{\omega_{x_0'}} x_0 \backslash l_0$.  By i.h.,
      $\com{C}' \twoheadrightarrow v(x_0 \backslash l_0) (\omega_{x_0'}, \checkmark) + \sum_k p_k
      (\omega_k, \com{C}_k)$, for some $p_k, \omega_k, \com{C}_k$.  By Figure~\ref{fig:op-nstep2},
      $\com{C} \twoheadrightarrow v(x_0 \backslash l_0) (l_0 : \omega_{x_0'}, \checkmark) + \sum_k
      p_k (l_0 : \omega_k, \com{C}_k)$, for some $p_k, \omega_k, \com{C}_k$.

    \item $l_0 = \tau$

      Hence $\com{C} \rightarrow p \cdot (\tau, \com{C}') + (1-p) \cdot (\tau, \com{C}'')$ and
      $\mf{\com{C}} \sqsubseteq p \mf{\com{C}'} + (1-p) \mf{\com{C}''}$.
      Now we have two sub-cases:
      \begin{enumerate}
      \item Case $\com{C} = \probC{\com{C}'}{p}{\com{C}''}$

        We then have $\mf{\com{C}} = p \mf{\com{C}'} + (1-p) \mf{\com{C}''}$, by
        Definition~\ref{def:den-sem2}, and consequently
        $x_0 \in \confmax{p \mf{\com{C}'} + (1-p) \mf{\com{C}''}}$.  By
        Definition~\ref{def:pes-prob2}, $x_0 \backslash \tau \in \confmax{\mf{\com{C}'}}$ or
        $x_0 \backslash \tau \in \confmax{\mf{\com{C}''}}$.  We only consider the former, since the
        latter has a similar reasoning.  By i.h.,
        $\com{C}' \twoheadrightarrow v(x_1) (w_{x_1}, \checkmark) + \sum_n p_n (\omega_n,
        \com{C}_n)$, for some $p_n, \omega_n, \com{C}_n$.  By Lemma~\ref{lem:prog-2},
        $\com{C}'' \twoheadrightarrow \sum_m p_m (\omega_m, \com{C}_m)$.
        By Figure~\ref{fig:op-nstep2},
        \begin{align*}
          \com{C} \twoheadrightarrow
          & p \cdot \left(
            v(x_1) (\tau : w_{x_1}, \checkmark) + \sum_n p_n (\tau : \omega_n, \com{C}_n)
            \right) \\
          & + (1-p) \cdot \sum_m p_m (\tau : \omega_m, \com{C}_m)
        \end{align*}
        
      \item Case $\com{C} = \seq{\com{C}_1}{\com{C}_2}$ or $\com{C} = \conc{\com{C}_1}{\com{C}_2}$

        By Lemma~\ref{lem:sound-aux-seq-conc-2},
        $x_0 \in \confmax{p \mf{\com{C}'} + (1-p) \mf{\com{C}''}}$.  By
        Definition~\ref{def:pes-prob2}, $x_0 \backslash \tau \in \confmax{\mf{\com{C}'}}$ or
        $x_0 \backslash \tau \in \confmax{\mf{\com{C}''}}$.  We only consider the former, since the
        latter has a similar reasoning.  By i.h.,
        $\com{C}' \twoheadrightarrow v(x_1) (w_{x_1}, \checkmark) + \sum_n p_n (\omega_n,
        \com{C}_n)$, for some $p_n, \omega_n, \com{C}_n$.  By Lemma~\ref{lem:prog-2},
        $\com{C}'' \twoheadrightarrow \sum_m p_m (\omega_m, \com{C}_m)$.
        By Figure~\ref{fig:op-nstep2},
        \begin{align*}
          \com{C} \twoheadrightarrow
          & p \cdot \left(
            v(x_1) (\tau : w_{x_1}, \checkmark) + \sum_n p_n (\tau : \omega_n, \com{C}_n)
            \right) \\
          & + (1-p) \cdot \sum_m p_m (\tau : \omega_m, \com{C}_m)
        \end{align*}
      \end{enumerate}
    \end{enumerate}
  \end{itemize}
\end{proof}

\subsubsection*{Proof of Lemma~\ref{lem:po-2}}
\begin{proof}
  Due to Lemma~\ref{lem:po-1} we only need to check the condition of the valuations.  Consider
  $\es{P}_1 = (\es{E}_1, v_1)$, $\es{P}_2 = (\es{E}_2, v_2)$, and $\es{P}_3 = (\es{E}_3, v_3)$ to be
  probabilistic event structures.
  \begin{itemize}
  \item Reflexivity: $\es{P}_1 = \es{P}_1$

    We want to show that $\forall x \in \confES{\es{P}_1}\, .\, v_1(x) = v_1(x)$.
    It holds straightforwardly.
    
  \item Transitivity: $\es{P}_1 \trianglelefteq \es{P}_2,\ \es{P}_2 \trianglelefteq \es{P}_3 \Rightarrow \es{P}_1 \trianglelefteq \es{P}_3$

    We want to show $\forall x \in \confES{\es{P}_1}\, .\, v_1(x) = v_3(x)$.
    From $\es{P}_1 \trianglelefteq \es{P}_2$, $\forall x \in \confES{\es{P}_1}\, .\, v_1(x) = v_2(x)$.
    From $\es{P}_2 \trianglelefteq \es{P}_3$, $\forall x \in \confES{\es{P}_2}\, .\, v_2(x) = v_3(x)$.
    Hence, $\forall x \in \confES{\es{P}_1}\, .\, v_1(x) = v_3(x)$.
    
  \item Antisymmetry: $\es{P}_1 \trianglelefteq \es{P}_2,\ \es{P}_2 \trianglelefteq \es{P}_1 \Rightarrow \es{P}_1 = \es{P}_2$

    We want to show $\forall x \in \confES{\es{P}_1}, \confES{\es{P}_2}\, .\, v_1(x) = v_2(x)$.
    From $\es{P}_1 \trianglelefteq \es{P}_2$, $\forall x \in \confES{\es{P}_1}\, .\, v_1(x) = v_2(x)$.
    From $\es{P}_2 \trianglelefteq \es{P}_1$, $\forall x \in \confES{\es{P}_2}\, .\, v_2(x) = v_1(x)$.
    Hence, $\forall x \in \confES{\es{P}_1}, \confES{\es{P}_2}\, .\, v_1(x) = v_2(x)$.
  \end{itemize}
\end{proof}

\subsubsection*{Proof of Lemma~\ref{lem:po-least-elem-2}}
\begin{proof}
  We first show that $\bot$ is a probabilistic event structure. From Lemma~\ref{lem:po-least-elem-1}
  $\bot = \pes{\emptyset}{\emptyset}{\emptyset}$ is an event structure. It lacks to see the
  conditions on the valuations.  It follows directly the definition that $v(\emptyset) =
  1$. Furthermore the only configuration in $\confES{\bot}$ is $\emptyset$. Hence we trivially have
  that $v(\emptyset) \geq 0$.

  To show that $\bot$ is the least element, consider any probabilistic event structure $\es{P}$. We
  need to show that $\bot \trianglelefteq \es{P}$. Due to Lemma~\ref{lem:po-least-elem-1} we focus
  solely on the valuations. Since the empty configuration is the only one in $\confES{\emptyset}$
  and since $\es{P}$ is a probabilistic event structure it holds that
  $v_{\bot}(\emptyset) = 1 = v(\emptyset)$.
\end{proof}

\subsubsection*{Proof of Lemma~\ref{lem:lub-es-2}}
\begin{proof}
  Due to Lemma~\ref{lem:lub-es-1} we focus only on the valuation part, where we have two conditions to verify:
  \begin{itemize}
  \item $v^\omega(\emptyset) = 1$

    From Definition~\ref{def:lub-2} we know that
    $\exists n \in \omega\, .\, v^\omega(\emptyset) = v_n(\emptyset) = 1$.

  \item $\forall y, x_1, \dots, x_m \in \confES{\es{P}^\omega}$ such that
    $y \subseteq x_1, \dots, x_m$,
    $\sum_{I \subseteq \set{1, \dots, m}} (-1)^{|I|} v^\omega(y \cup \cup_{i \in I} x_i) \geq 0$

    Following~\cite[Propostion 5]{winskel14} we only need to focus on $y \cchain{} x_1, \dots, x_m$.
    From Definition~\ref{def:lub-2} we know it $\exists n \in \omega\, .\, v^\omega(y) = v_n(y)$. We
    then have three cases, depending if the events are in $E_n$, in $E_{n+1}$, or in both.
    \begin{enumerate}
    \item the events are in $E_n$

      We know that
      $\sum_{\emptyset \neq I \subseteq \set{1, \dots, m}} (-1)^{|I+1|} v^\omega(\cup_{i \in I} x_i)
      = \sum_{\emptyset \neq I \subseteq \set{1, \dots, m}} (-1)^{|I+1|} v_n(\cup_{i \in I} x_i)$
      and consequently
      $v_n(y) - \sum_{\emptyset \neq I \subseteq \set{1, \dots, m}} (-1)^{|I+1|} v_n(\cup_{i\in I}
      x_i) \geq 0$, since $\es{P}_n$ is a probabilistic event structure
      
    \item the events are in $E_{n+1}$

      We know that $v_n(y) = v_{n+1}(y)$ since $\es{P}_n \trianglelefteq \es{P}_{n+1}$. Furthermore
      \[
        \sum_{\emptyset \neq I \subseteq \set{1, \dots, m}} (-1)^{|I+1|} v^\omega(\cup_{i \in I}
        x_i) = \sum_{\emptyset \neq I \subseteq \set{1, \dots, m}} (-1)^{|I+1|} v_{n+1}(\cup_{i \in
          I} x_i)
      \]
      and consequently
      \[
        v_{n+1}(y) - \sum_{\emptyset \neq I \subseteq \set{1, \dots, m}} (-1)^{|I+1|}
        v_{n+1}(\cup_{i\in I} x_i) \geq 0
      \]
      since $\es{P}_{n+1}$ is a probabilistic event structure
      
    \item the events are in both

      Since $\es{P}_n \trianglelefteq \es{P}_{n+1}$ we know that
      $\forall x \in \confES{\es{P}_n}\, .\, v_n(x) = v_{n+1}(x)$, which leads us to the previous
      case.
    \end{enumerate}
  \end{itemize}
\end{proof}

\subsubsection*{Proof of Lemma~\ref{lem:lub-2}}
\begin{proof}
  Due to Lemma~\ref{lem:lub-1} we focus only on the valuations.
  \begin{itemize}
  \item $\es{P}^\omega$ is an upper bound

    $\forall n \in \omega$ we need to have $\es{P}_n \trianglelefteq \es{P}^\omega$.  It follows
    directly from Definition~\ref{def:pes-fix-order-2} that $\forall n \in \omega$ we have
    $\es{P}_n \trianglelefteq \es{P}^\omega$ since by Definition~\ref{def:lub-2},
    $\forall x \in \confES{\es{P}^\omega},\, \exists n \in \omega\, .\, v^\omega(x) = v_n(x)$.
    
  \item $\es{P}^\omega$ is the least upper bound

    Let $\es{P} = (\es{E}, v)$ be an upper bound of the chain.  We need to show that if
    $\es{P}_n \trianglelefteq \es{P}^\omega$ and $\es{P}_n \trianglelefteq \es{P}$ then
    $\es{P}^\omega \trianglelefteq \es{P}$.  From $\es{P}_n \trianglelefteq \es{P}^\omega$,
    $\forall x \in \confES{\es{P}_n}\, .\, v_n(x) = v^\omega(x)$.  From Definition~\ref{def:lub-2},
    $\forall x \in \confES{\es{P}^\omega},\ \exists n \in \omega\, .\, v^\omega(x) = v_n(x)$.  From
    $\es{P}_n \trianglelefteq \es{P}$, $\forall x \in \confES{\es{P}_n}\, .\, v_n(x) = v(x)$.  Thus
    $\forall x \in \confES{\es{P}^\omega}, \exists n \in \omega\, .\, v^\omega(x) = v_n(x) = v(x)$.
  \end{itemize}
\end{proof}

\subsubsection*{Proof of Lemma~\ref{lem:seq-fix-mono-2}}
\begin{proof}
  Let
  \begin{align*}
    & \es{P} = (\es{E}, v) \\
    & \es{P}_1 = (\es{E}_1, v_1) \\
    & \es{P}_2 = (\es{E}_2, v_2) \\
    & \seq{\es{P}}{\es{P}_1} = (\es{E}^1, v^1) \\
    & \seq{\es{P}}{\es{P}_2} = (\es{E}^2, v^2)
  \end{align*}
  We want to show $\forall x \in \confES{\seq{\es{P}}{\es{P}_1}}\, .\, v^1(x) = v^2(x)$.  Due to
  Lemma~\ref{lem:seq-fix-mono-1} we only focus on the valuations.
  According to Definition~\ref{def:pes-seq2} we have two cases:
  \begin{enumerate}
  \item $x \in \confES{\seq{\es{P}}{\es{P}_1}}$ such that $x \in \confES{\es{P}}$

    Then we are done because $v^1(x) = v(x) = v^2(x)$.
    
  \item $x \in \confES{\seq{\es{P}}{\es{P}_1}}$ such that
    $\exists y \in \confmax{\es{P}},\, y' \in \confES{\es{P}_1}\, .\, x = y \cup (y' \times \{y\})$

    Then we have $v^1(x) = v(y) \cdot v_1(y')$.  Since $\es{P}_1 \trianglelefteq \es{P}_2$,
    $\forall y' \in \confES{\es{P}_1}\, .\, v_1(y') = v_2(y')$.  Then
    $v(y) \cdot v_1(y') = v(y) \cdot v_2(y') = v^2(x)$.  Hence $v^1(x) = v^2(x)$.
  \end{enumerate}
\end{proof}

\subsubsection*{Proof of Lemma~\ref{lem:conc-fix-mono-2}}
\begin{proof}
  Let
  \begin{align*}
    & \es{P}_1 = (\es{E}_1, v_1) \\
    & \es{P}_2 = (\es{E}_2, v_2) \\
    & \es{P}'_1 = (\es{E}'_1, v'_1) \\
    & \es{P}'_2 = (\es{E}'_2, v'_2) \\
    & \conc{\es{P}_1}{\es{P}_2} = (\es{E}, v) \\
    & \conc{\es{P}'_1}{\es{P}'_2} = (\es{E}', v')
  \end{align*}
  We want to show $\forall x \in \confES{\conc{\es{P}_1}{\es{P}_2}}\, .\, v(x) = v'(x)$.
  Due to Lemma~\ref{lem:conc-fix-mono-1} we only focus on the valuations.  

  Let $x \in \confES{\conc{\es{P}_1}{\es{P}_2}}\, .\, v(x) = v_1(x \cap E_1) \cdot v_2(x \cap E_2)$,
  such that $x_1 = x \cap E_1$ and $x_2 = x \cap E_2$.  Since $\es{P}_1 \trianglelefteq \es{P}'_1$
  and $\es{P}_2 \trianglelefteq \es{P}'_2$ then
  $\forall x_1 \in \confES{\es{P}_1}\, .\, v_1(x_1) = v'_1(x_1)$ and
  $\forall x_2 \in \confES{\es{P}_2}\, .\, v_2(x_2) = v'_1(x_2)$, respectively.  Hence
  $v(x) = v_1(x_1) \cdot v_2(x_2) = v'_1(x_1) \cdot v'_2(x_2) = v'(x)$.
\end{proof}

\subsubsection*{Proof of Lemma~\ref{lem:probc-fix-mono-2}}
\begin{proof}
  Let
  \begin{align*}
    & \es{P}_1 = \pespq{\es{E}_1}{v_1} \\
    & \es{P}'_1 = \pespq{\es{E}'_1}{v'_1} \\
    & \es{P}_2 = \pespq{\es{E}_2}{v_2} \\
    & \es{P}'_2 = \pespq{\es{E}'_2}{v'_2} \\
    & \probC{\es{P}_1}{p}{\es{P}_2} = \pespq{\es{E}}{v} \\
    & \probC{\es{P}'_1}{p}{\es{P}'_2} = \pespq{\es{E}'}{v'}
  \end{align*}

  The conditions to check are:
  \begin{enumerate}
  \item $E \subseteq E'$     
  \item $\forall e, e'\, .\, e \leq e' \Leftrightarrow e,e' \in E \wedge e \leq' e'$
  \item $\forall e, e'\, .\, e \# e' \Leftrightarrow e,e' \in E \wedge e \#' e'$
  \item $\forall x \in \confES{\probC{\es{P}_1}{p}{\es{P}_2}}\, .\, v(x) \geq v'(y)$
  \end{enumerate}

  The first three conditions follow directly from Definition~\ref{def:pes-prob2}.  Hence we focus on
  the last one.

  Let $x \in \confES{\probC{\es{P}_1}{p}{\es{P}_2}}$.  We have two cases:
  \begin{enumerate}
  \item $x \backslash \tau \in \confES{\es{P}_1}$

    It follows directly that $v(x) = v'(x)$, since $\es{P}_1 \trianglelefteq \es{P}'_1$ and
    $v(x) = p \cdot v_1(x \backslash \tau) = p \cdot v'_1(x \backslash \tau) = v'(x)$.
    
  \item $x \backslash \tau \in \confES{\es{P}_2}$

    It follows directly that $v(x) = v'(x)$, since $\es{P}_2 \trianglelefteq \es{P}'_2$ and
    $v(x) = (1-p) \cdot v_2(x \backslash \tau) = (1-p) \cdot v'_2(y \backslash \tau) = v'(x)$.
  \end{enumerate}  
\end{proof}

\subsubsection*{Proof of Lemma~\ref{lem:seq-cont-2}}
\begin{proof}
  By Lemma~\ref{lem:seq-fix-mono-2}, the sequential composition is monotone at right.  Furthermore,
  showing that each event of $\seq{\es{P}}{\bigsqcup_m \es{P}_m}$ is an event of
  $\bigsqcup_m(\seq{\es{P}}{\es{P}_m})$ is already done in Lemma~\ref{lem:seq-cont-1}.  By
  Lemma~\ref{lem:cont-1} we are done.
\end{proof}

\subsubsection*{Proof of Lemma~\ref{lem:conc-cont-2}}
\begin{proof}
  By Lemma~\ref{lem:conc-fix-mono-2}, the parallel composition is monotone at right.
  Furthermore, showing that each event of $\conc{\bigsqcup_n \es{P}_n}{\bigsqcup_m \es{P}_m}$ is an
  event of $\bigsqcup_{n,m}(\conc{\es{P}_n}{\es{P}_m})$ is already done in
  Lemma~\ref{lem:conc-cont-1}.  By Lemma~\ref{lem:cont-1} we are done.
\end{proof}

\subsubsection*{Proof of Lemma~\ref{lem:probc-cont-2}}
\begin{proof}
  By Lemma~\ref{lem:probc-fix-mono-2} we know that the probabilistic choice is monotone.  It lacks
  to show that each event of $\probC{\bigsqcup_n \es{P}_n}{p}{\bigsqcup_m \es{P}_m}$ is an event of
  $\bigsqcup_{n,m}(\probC{\es{P}_n}{p}{\es{P}_m})$.

  Let $\es{P}_1 \trianglelefteq \dots \trianglelefteq \es{P}_n \trianglelefteq \dots$ and
  $\es{P}'_1 \trianglelefteq \dots \trianglelefteq \es{P}'_m \trianglelefteq \dots$ be
  $\omega$-chains with least upper bound $\bigsqcup_n \es{P}_n$ and $\bigsqcup_m \es{P}_m$,
  respectively.  Let $e$ be an event of $\probC{\bigsqcup_n \es{P}_n}{p}{\bigsqcup_m \es{P}_m}$.
  By Definition~\ref{def:pes-prob2} we have three cases:
  \begin{enumerate}
  \item $e = \tau$

    It follows directly from Definition~\ref{def:pes-prob2} that $\tau$ is an event of
    $\probC{\es{P}_n}{p}{\es{P}_m}$.  Consequently it is an event of
    $\bigsqcup_{n,m}(\probC{\es{P}_n}{p}{\es{P}_m})$.
    
  \item $e$ is an event of $\bigsqcup_n \es{P}_n$

    By Definition~\ref{def:lub-2}, $\exists n \in \omega\, .\, e$ is an event of $\es{P}_n$. By
    Definition~\ref{def:pes-prob2}, $e$ is an event of $\probC{\es{P}_n}{p}{\es{P}_m}$ and
    consequently it is an event of $\bigsqcup_{n,m}(\probC{\es{P}_n}{p}{\es{P}_m})$.
    
  \item $e$ is an event of $\bigsqcup_m \es{P}_m$

    Similar to the previous point.
  \end{enumerate}
  By Lemma~\ref{lem:cont-1} we are done.
\end{proof}

\subsubsection*{Proof of Lemma~\ref{lem:gamma-cont-2}}
\begin{proof}
  We only do for the probabilistic choice, since for the remaining cases the prove is the same as in
  Lemma~\ref{lem:gamma-cont-1}.
  \begin{align*}
    & \Gamma^{\probC{\com{C}_1}{p}{\com{C}_2}, \gamma}(\bigsqcup_n \es{P}_n) \\
    =& \set{ \text{Definition~\ref{def:fix-den-sem-2}}} \\
    & \mf{\probC{\com{C}_1}{p}{\com{C}_2}}_{\gamma(X \leftarrow \bigsqcup_n \es{P}_n)} \\
    =& \set{ \text{Definition~\ref{def:fix-den-sem-2}}} \\
    & \probC{\mf{\com{C}_1}_{\gamma(X \leftarrow \bigsqcup_n \es{P}_n)}}{p}{\mf{\com{C}_2}_{\gamma(X \leftarrow \bigsqcup_n \es{P}_n)}} \\
    =& \set{ \text{Definition~\ref{def:fix-den-sem-2}}} \\
    & \probC{\bigsqcup_n \Gamma^{\com{C}_1,\gamma}(\es{P}_n)}{p}{\bigsqcup_n \Gamma^{\com{C}_2,\gamma}(\es{P}_n)} \\
    =& \set{\text{Lemma~\ref{lem:probc-cont-2}}} \\
    &= \bigsqcup_n (\probC{\Gamma^{\com{C}_1,\gamma}(\es{P}_n)}{p}{\Gamma^{\com{C}_2,\gamma}(\es{P}_n)}) \\
    =& \set{ \text{Definition~\ref{def:fix-den-sem-2}}} \\
    &= \bigsqcup_n (\probC{\mf{\com{C}_1}_{\gamma(X \leftarrow \es{P}_n)}}{p}{\mf{\com{C}_2}_{\gamma(X \leftarrow \es{P}_n)}})\\
    =& \set{ \text{Definition~\ref{def:fix-den-sem-2}}} \\
    & \bigsqcup_n \mf{\probC{\com{C}_1}{p}{\com{C}_2}}_{\gamma(X \leftarrow \es{P}_n)}
  \end{align*}
\end{proof}

\subsubsection*{Proof of Lemma~\ref{lem:1-2}}
\begin{proof}
  We only show the probabilistic choice, since the proof for the other cases is in
  Lemma~\ref{lem:1-1}.
  \begin{align*}
    & \mf{(\probC{\com{C}_1}{p}{\com{C}_2})[X \leftarrow \mf{\rec{X}{\com{C}}}_\gamma]}_\gamma \\
    =& \set{\text{Adaptation of Definition~\ref{fig:substitution-1}}} \\
    & \mf{\probC{\com{C_1}[X \leftarrow \mf{\rec{X}{\com{C}}}_\gamma]}{p}{\com{C}_2[X \leftarrow \mf{\rec{X}{\com{C}}}_\gamma]}}_\gamma \\
    =& \set{\text{Definition~\ref{def:fix-den-sem-2}}} \\
    & \probC{\mf{\com{C}_1}[X \leftarrow \mf{\rec{X}{\com{C}}}_\gamma]}{p}{\mf{\com{C}_2}_\gamma[X \leftarrow \mf{\rec{X}{\com{C}}}_\gamma]} \\
    =& \set{\text{i.h.}} \\
    & \probC{\mf{\com{C}_1}_{\gamma(X \leftarrow \mf{\rec{X}{\com{C}}}_\gamma)}}{p}{
      \mf{\com{C}_2}_{\gamma(X \leftarrow \mf{\rec{X}{\com{C}}}_\gamma)}
      } \\
    =& \set{\text{Definition~\ref{def:fix-den-sem-2}}} \\
    & \mf{\probC{\com{C}_1}{p}{\com{C}_2}}_{\gamma(X \leftarrow \mf{\rec{X}{\com{C}}}_\gamma)}
  \end{align*}
\end{proof}

\subsubsection*{Proof of Lemma~\ref{lem:fix-conf-max-2}}
\begin{proof}
  \begin{align*}
    & \rec{X}{\com{C}} \rightarrow \sum_i p_i \cdot (\tau, \com{C}_i[X \leftarrow \rec{X}{\com{C}}])\\
    \Rightarrow & \set{\text{rules in Figure~\ref{fig:op-small2}}} \\
    & \com{C} \rightarrow \sum_i p_i (\tau, \com{C}_i) \\
    \Rightarrow & \set{\text{i.h.}} \\
    & x \in \confmax{\mf{C}_\gamma} \text{ and }
      x \in \confmax{\sum_i p_i \cdot \mf{\com{C}_i}_\gamma} \text{ s.t. }
      \exists \mf{\com{C}_i}_\gamma\, .\, v_i(x) = v(x) \\
    \Rightarrow & \set{\gamma = \gamma(X \leftarrow \mf{\rec{X}{\com{C}})}_\gamma} \\
    & x \in \confmax{\mf{C}_{\gamma(X \leftarrow \mf{\rec{X}{\com{C}}}_\gamma)}} \text{ and }
      x \in \confmax{\sum_i p_i \cdot \mf{\com{C}_i}_{\gamma(X \leftarrow \mf{\rec{X}{\com{C}}}_\gamma)}} \\
    \Rightarrow & \set{\text{Lemma~\ref{lem:2-1} and Lemma~\ref{lem:1-2}}} \\
    & x \in \confmax{\rec{X}{\com{C}}}_\gamma \text{ and }
      x \in \confmax{\sum_i p_i \mf{\com{C}_i[X \leftarrow \mf{\rec{X}{\com{C}}}_\gamma]}_\gamma}
  \end{align*}
\end{proof}

\subsubsection*{Proof of Lemma~\ref{res:soundI-fix-2}}
\begin{proof}
  \begin{itemize}
  \item
    \begin{align*}
      & \rec{X}{\com{C}} \rightarrow 1 \cdot (l, \com{C}'[x \leftarrow \rec{X}{\com{C}}]) \\
      \Rightarrow & \set{\text{Figure~\ref{fig:op-small2}}} \\
      & \com{C} \rightarrow 1 \cdot (l, \com{C}') \\
      \Rightarrow & \set{\text{i.h.}} \\
      & \mf{\com{C}}_\gamma \backslash l \equiv \mf{\com{C}'}_\gamma \\
      \Rightarrow & \set{\gamma =  \gamma(X \leftarrow \mf{\rec{X}{\com{C}}}_\gamma)} \\
      & \mf{\com{C}}_{\gamma(X \leftarrow \mf{\rec{X}{\com{C}}}_\gamma)} \backslash l
        \equiv \mf{\com{C}'}_{\gamma(X \leftarrow \mf{\rec{X}{\com{C}}}_\gamma)} \\
      \Rightarrow & \set{\text{Lemma~\ref{lem:2-1}, Lemma~\ref{lem:1-2}}} \\
      & \mf{\rec{X}{\com{C}}}_\gamma \backslash l \equiv \mf{\com{C}'[X \leftarrow \mf{\rec{X}{\com{C}}}_\gamma]}_\gamma
    \end{align*}

  \item
    \begin{align*}
      & \rec{X}{\com{C}} \rightarrow \sum_i p_i \cdot (\tau, \com{C}_i[x \leftarrow \rec{X}{\com{C}}]) \\
      \Rightarrow & \set{\text{Figure~\ref{fig:op-small2}}} \\
      & \com{C} \rightarrow \sum_i p_i \cdot (\tau, \com{C}_i) \\
      \Rightarrow & \set{\text{i.h.}} \\
      & \mf{\com{C}}_\gamma \sqsubseteq \sum_i p_i \mf{\com{C}_i}_\gamma \\
      \Rightarrow & \set{\gamma =  \gamma(X \leftarrow \mf{\rec{X}{\com{C}}}_\gamma)} \\
      & \mf{\com{C}}_{\gamma(X \leftarrow \mf{\rec{X}{\com{C}}}_\gamma)}
        \equiv \sum_I p_i \cdot \mf{\com{C}_i}_{\gamma(X \leftarrow \mf{\rec{X}{\com{C}}}_\gamma)} \\
      \Rightarrow & \set{\text{Lemma~\ref{lem:2-1}, Lemma~\ref{lem:1-2}}} \\
      & \mf{\rec{X}{\com{C}}}_\gamma \equiv
        \sum_i p_i \mf{\com{C}_i[X \leftarrow \mf{\rec{X}{\com{C}}}_\gamma]}_\gamma
    \end{align*}    
  \end{itemize}
\end{proof}

\subsubsection*{Proof of Theorem~\ref{res:soundII-fix-2}}
\begin{proof}
  We only need to add the following sub-case when the size of the word is bigger than one and the
  transition is made by $\tau$.
  \begin{itemize}
  \item Case $\com{C} = \rec{X}{\com{D}}$

    By Remark~\ref{rem:probc-n} and Lemma~\ref{lem:fix-conf-max-2}
    \begin{align*}
      &\forall i, 
        \exists \set{\tau} \cup x_{i0} \in \confmax{\mf{\rec{X}{\com{D}}}}\, .\,
        \emptyset \cchain{\tau} \set{\tau}
        \cchain{\omega_{i0}'} \set{\tau} \cup x_{i0}',\
        v_i(\set{\tau} \cup x_{i0}') = p_i \cdot p_0' = p_0
    \end{align*}  
  \end{itemize}
\end{proof}

\subsubsection*{Proof of Lemma~\ref{res:adI-fix-2}}
\begin{proof}
  \begin{itemize}
  \item $l' \neq \tau \in \init{\mf{\rec{X}{\com{C}}}_\gamma}$

    By Definition~\ref{def:fix-den-sem-2} and Definition~\ref{def:lub-2},
    $l' \in \init{\mf{\com{C}}_\gamma}$.  By i.h., $\exists \com{C}'$ such that
    $\com{C} \rightarrow 1 \cdot (l', \com{C}')$ and
    $\mf{\com{C}}_\gamma \backslash l' \equiv \mf{\com{C}'}_\gamma$.  By Figure~\ref{fig:op-small2}
    and by letting $\gamma =  \gamma(X \leftarrow \mf{\rec{X}{\com{C}}}_\gamma)$ and Lemma~\ref{lem:2-1} and
    Lemma~\ref{lem:1-2},
    $\rec{X}{\com{C}} \rightarrow 1 \cdot (l', \com{C}'[X \leftarrow \rec{X}{\com{C}}])$ and
    $\mf{\rec{X}{\com{C}}}_\gamma \backslash l' \equiv \mf{\com{C}'[X \leftarrow
      \mf{\rec{X}{\com{C}}}_\gamma]}_\gamma$.
    
  \item $l' = \tau \in \init{\mf{\rec{X}{\com{C}}}_\gamma}$

    By Definition~\ref{def:fix-den-sem-2} and Definition~\ref{def:lub-2},
    $l' \in \init{\mf{\com{C}}_\gamma}$. By i.h. $\exists \com{C}', \com{C}''$,
    $\exists e \in \init{\mf{\com{C}'}_\gamma}$ such that
    $\com{C} \rightarrow p \cdot (\tau, \com{C}') + (1-p) \cdot (\tau, \com{C}'')$ with
    $p = v(\set{\tau, e})$ and
    $\mf{\com{C}}_\gamma \sqsubseteq p \cdot \mf{\com{C}'}_\gamma + (1-p) \cdot
    \mf{\com{C}''}_\gamma$.  By Figure~\ref{fig:op-small2} and by letting
    $\gamma = \gamma(X \leftarrow \mf{\rec{X}{\com{C}}}_\gamma)$ and Lemma~\ref{lem:2-1} and
    Lemma~\ref{lem:1-2},
    $\rec{X}{\com{C}} \rightarrow p \cdot (\tau, \com{C}'[X \leftarrow \rec{X}{\com{C}}]) + (1-p)
    \cdot (\tau, \com{C}''[X \leftarrow \rec{X}{\com{C}}])$ and
    $\mf{\rec{X}{\com{C}}}_\gamma \sqsubseteq p \cdot \mf{\com{C}'[X \leftarrow
      \mf{\rec{X}{\com{C}}}_\gamma]}_\gamma + (1-p) \cdot \mf{\com{C}''[X \leftarrow
      \mf{\rec{X}{\com{C}}}_\gamma}_\gamma$.
  \end{itemize}
\end{proof}

\subsubsection*{Proof of Theorem~\ref{res:adII-fix-2}}
\begin{proof}
  We only need to add the following sub-case when the size of the word is bigger than one and the
  transition is made by $\tau$.
  \begin{itemize}
  \item Case $\com{C} = \rec{X}{\com{D}}$

    By Lemma~\ref{lem:fix-conf-max-2},
    $x_0 \in \confmax{p \mf{\com{C}'[X \leftarrow \rec{X}{\com{D}}]} + (1-p) \mf{\com{C}''[X
        \leftarrow \rec{X}{\com{D}}]}}$.  By Definition~\ref{def:pes-prob2},
    $x_0 \backslash \tau \in \confmax{\mf{\com{C}'[X \leftarrow \rec{X}{\com{D}}]}}$ or
    $x_0 \backslash \tau \in \confmax{\mf{\com{C}''[X \leftarrow \rec{X}{\com{D}}]}}$.  We only
    consider the former, since the latter has a similar reasoning.  By i.h.,
    $\com{C}'[X \leftarrow \rec{X}{\com{D}}] \twoheadrightarrow v(x_1) (w_{x_1}, \checkmark) +
    \sum_n p_n (\omega_n, \com{C}_n)$, for some $p_n, \omega_n, \com{C}_n$.  By
    Lemma~\ref{lem:prog-2},
    $\com{C}''[X \leftarrow \rec{X}{\com{D}}] \twoheadrightarrow \sum_m p_m (\omega_m, \com{C}_m)$.
    By Figure~\ref{fig:op-nstep2},
    \begin{align*}
      \com{C} \twoheadrightarrow
      p \cdot \left(
      v(x_1) (\tau : w_{x_1}, \checkmark) + \sum_n p_n (\tau : \omega_n, \com{C}_n)
      \right) +
      (1-p) \cdot \sum_m p_m (\tau : \omega_m, \com{C}_m)
    \end{align*}
  \end{itemize}
\end{proof}

\newpage
\section*{Proofs of Section~\ref{sec:qes}}\label{app-sec:proofs-unit-es}
\subsubsection*{Proof of Lemma~\ref{lem:seq-es3}}
\begin{proof}
  Let
  \begin{align*}
    & \es{U}_1 = \qpes{E_1}{\leq_1}{\#_1}{Q_1} \\
    & \es{U}_2 = \qpes{E_2}{\leq_2}{\#_2}{Q_2} \\
    & \seq{\es{U}_1}{\es{U}_2} = \qpes{E}{\leq}{\#}{Q}
  \end{align*}  

  Due to Lemma~\ref{lem:seq-es1} we only need show the conditions added in the definition of unitary
  event structures.
  \begin{enumerate}
  \item $\forall e, e' \in E,\, \econc{e}{e'} \Rightarrow [Q(e), Q(e')] = 0$

    Since $\econc{e}{e'}$ only if $e, e' \in E_1$ or $e, e' \in E_2 \times \confmax{\es{U}_1}$ we
    are done.
    
  \item $\minconflict$ is transitive

    It follows directly since $\minconflict$ only occurs between events of the same set of events.
    
  \item $\forall e \in E  \sum_{e' \in [e]} Q(e')$ is unitary
    
    We have two cases, since there is no minimal conflict between events in $E_1$ and
    $E_2 \times \confmax{\es{U}_1}$:
    \begin{enumerate}
    \item $\forall e \in E_1$

      Since $E_1$ is a unitary event structure, we are done.

    \item $\forall e \in E_2 \times \confmax{\es{U}_1}$

      Since $E_2$ is a unitary event structure, we are done.
    \end{enumerate}
  \end{enumerate}
\end{proof}

\subsubsection*{Proof of Lemma~\ref{lem:conc-es3}}
\begin{proof}
  Let
  \begin{align*}
    & \es{U}_1 = \qpes{E_1}{\leq_1}{\#_1}{Q_1} \\
    & \es{U}_2 = \qpes{E_2}{\leq_2}{\#_2}{Q_2} \\ 
    & \conc{\es{U}_1}{\es{U}_2} = \qpes{E}{\leq}{\#}{Q}
  \end{align*}

  Due to Lemma~\ref{lem:conc-es1} we only need show the conditions added in the definition of unitary
  event structures.
  \begin{enumerate}
  \item $\forall e, e' \in E,\, \econc{e}{e'} \Rightarrow [Q(e), Q(e')] = 0$

    We have two cases:
    \begin{enumerate}
    \item $e, e' \in E_1$ or $e, e' \in E_2$

      The condition trivially holds, since $\es{U}_1$ and $\es{U}_2$ are unitary event structures.
      
    \item $e \in E_1$ and $e' \in E_2$

      It follows directly from Definition~\ref{def:pes-conc3}.
    \end{enumerate}

  \item $\minconflict$ is transitive

    It follows directly since the parallel composition does not create new conflicts and that the
    conflict relation is inherited from $\es{U}_1$ and $\es{U}_2$ which are unitary event
    structures.

  \item $\forall e \in E, \sum_{e' \in [e]} Q(e')$ is unitary

    Since there is no minimal conflict between events in $E_1$ and $E_2$, it follows directly that
    if $\forall e \in E_1$ or $\forall e \in E_2$ the condition holds since $\es{U}_1$ and
    $\es{U}_2$ are unitary event structures.    
  \end{enumerate}
\end{proof}

\subsubsection*{Proof of Lemma~\ref{lem:pes-meas3}}
\begin{proof}
  Let
  \begin{align*}
    & \es{U}_1 = \qpes{E_1}{\leq_1}{\#_1}{Q_1} \\
    & \es{U}_2 = \qpes{E_2}{\leq_2}{\#_2}{Q_2} \\
    & \meas{n}{\es{U}_1}{\es{U}_2} = \qpes{E}{\leq}{\#}{Q}
  \end{align*}

    We need to show that $\leq$ is a partial order and that $\#$ is symmetric and irreflexive.
  \begin{itemize}
  \item $\leq$ is a partial order:
    \begin{itemize}
    \item Reflexivity ($e \leq e$): we have the following cases:
      \begin{enumerate}
      \item Case $e \in \{\tau_0^n, \tau_1^n\}$. It follows directly from
        Definition~\ref{def:pes-meas3} that either $\tau_0^n \leq \tau_0^n$ or
        $\tau_1^n \leq \tau_1^n$.
      \item Case $e_1 \leq_1 e'_1$ or $e_2 \leq_2 e'_2$. Since $\leq_1$ and $\leq_2$ are partial
        orders, we are done.
      \end{enumerate}
    \item Transitivity ($e \leq e'$ and $e' \leq e''$ then $e \leq e''$): we have the following cases:
      \begin{enumerate}
      \item Case ($e \leq_1 e'$ and $e' \leq_1 e''$) or ($e \leq_2 e'$ and $e' \leq_2 e''$). Since
        $\leq_1$ and $\leq_2$ are partial orders, we have $e \leq_1 e''$ or $e \leq_2 e''$. Hence
        $e \leq e''$.
      \item Case $e = \tau_0^n$ and $e', e'' \in E_1$. We have $\tau_0^n \leq e'$ and
        $e' \leq e''$. Since $e'' \in E_1$ then $\tau_n^0 \leq e''$.
      \item Case $e = \tau_1^n$ and $e', e'' \in E_2$. We have $\tau_1^n \leq e'$ and
        $e' \leq e''$. Since $e'' \in E_2$ then $\tau_n^1 \leq e''$.
      \end{enumerate}
    \item Antisymmetry ($e \leq e'$ and $e' \leq e$ then $e'=e$): we have two cases: ($e \leq_1 e'$
      and $e' \leq_1 e$) or ($e \leq_2 e'$ and $e' \leq_2 e$). Since $\leq_1$ and $\leq_2$ are
      partial orders, we have $e=e'$. Since there are no more cases we are done.
    \end{itemize}

    Hence $\leq$ is a partial order.

  \item $\#$ is symmetric and irreflexive:
    \begin{itemize}
    \item Symmetric (if $e \# e'$ then $e' \# e$): we have the following cases
      \begin{enumerate}
      \item Case $e \#_1 e'$ or $e \#_2 e'$. Since $\#_1$ and $\#_2$ are symmetric then we have
        $e' \#_1 e$ or $e' \#_2 e$. Thus $e' \# e$.
      \item Case ($e = \tau_0^n$ or $e \in E_1$) and ($e' = \tau_1^n$ or $e' \in E_2$). It follows
        directly from Definition~\ref{def:pes-meas3}.
      \item Case ($e = \tau_1^n$ or $e \in E_2$) and ($e' = \tau_0^n$ or $e' \in E_1$). It follows
        directly from Definition~\ref{def:pes-meas3}.
      \end{enumerate}
    \item Irreflexive ($\neg(e \# e)$): We have either $\neg(e \#_1 e)$ or $\neg(e \#_2 e)$. Since
      $\#_1$ and $\#_2$ are irreflexive then $\neg(e \# e)$. There are no more cases since either
      $e \in E_1$ or $e \in E_2$.
    \end{itemize}

    Hence $\#$ is symmetric and irreflexive.
  \end{itemize}
  
  We need to prove:
  \begin{enumerate}
  \item $\set{e' \mid e' \leq e}$ is finite

    We have four cases:
    \begin{enumerate}
    \item $e = \tau_0^n$

      It follows directly that $\set{e' \mid e' \leq \tau_0^n} = \set{\tau_0^n}$ since
      $\tau_0^n \in \init{\meas{n}{\es{U}_1}{\es{U}_2}}$.
      
    \item $e = \tau_1^n$

      It follows directly that $\set{e' \mid e' \leq \tau_1^n} = \set{\tau_1^n}$ since
      $\tau_1^n \in \init{\meas{n}{\es{U}_1}{\es{U}_2}}$.
      
    \item $e \in E_1$

      We have that $\set{e' \mid e' \leq e} = \set{\tau_0^n} \cup \set{e' \mid e' \leq_1 e}$.  Since
      $\es{U}_1$ is a unitary event structure, then we know that $\set{e' \mid e' \leq_1 e}$ is
      finite.  Hence $\set{\tau_0^n} \cup \set{e' \mid e' \leq_1 e}$ is finite.
      
    \item $e \in E_2$

      We have that $\set{e' \mid e' \leq e} = \set{\tau_1^n} \cup \set{e' \mid e' \leq_2 e}$.  Since
      $\es{U}_2$ is a unitary event structure, then we know that $\set{e' \mid e' \leq_2 e}$ is
      finite.  Hence $\set{\tau_1^n} \cup \set{e' \mid e' \leq_2 e}$ is finite.
    \end{enumerate}
    
  \item $e \# e' \leq e'' \Rightarrow e \# e''$
    It follows directly by Definition~\ref{def:pes-meas3} that $e \# e''$.

      

      
  \item $\econc{e}{e'} \Rightarrow [Q(e), Q(e')] = 0$

    The concurrent events are either in $\es{U}_1$ or in $\es{U}_2$, which are unitary event
    structures, hence the condition trivially holds.
    
  \item $\minconflict$ is transitive

    It follows directly since the conflict relation is inherited from $\es{U}_1, \es{U}_2$, which
    are unitary event structures, and from the fact that the new events, $\tau_0^n$ and $\tau_1^n$,
    are in minimal conflict between them, \ie\ $\tau_0^n \minconflict \tau_1^n$.

  \item $\forall e \in E, \sum_{e' \in [e]} Q(e')$ is unitary

    We have two cases (since if $e_1 \in E_1, e_2 \in E_2$ then $\neg(e_1 \minconflict e_2)$):
    \begin{enumerate}
    \item $e = \tau_0^n, e' = \tau_1^n$ or vice-versa

      It follows directly from Definition~\ref{def:pes-meas3} that $Q(\tau_0^n) + Q(\tau_1^n) = Id$,
      which is unitary.

    \item $\forall e \in E_1$ or $\forall e \in E_2$

      It follows directly from $\es{U}_1$ and $\es{U}_2$ being unitary events structures.
    \end{enumerate}
  \end{enumerate}
\end{proof}

\subsubsection*{Proof of Lemma~\ref{lem:rem-init-es3}}
\begin{proof}
  Let $\es{U} = \qpes{E}{\leq}{\#}{Q}$ and $\es{U} \backslash a = \qpes{E'}{\leq'}{\#'}{Q'}$.

  Due to Lemma~\ref{lem:rem-init-es1} we only need to check the conditions added in the definition
  of unitary event structures.
  \begin{enumerate}
  \item $\forall e, e' \in E',\, \econc{e}{e'} \Rightarrow [Q'(e), Q'(e')] = 0$

    It follows directly from Definition~\ref{def:rem-init3} that
    $[Q'(e), Q'(e')] = [Q|_{E'} (e), Q|_{E'} (e')] = 0$

  \item $\minconflict$ is transitive

    It follows directly since the conflict relation $\#'$ is the restriction of $\#$ to the events
    of $E'$.
    
  \item $\forall e \in E, \sum_{e' \in [e]} Q(e')$ is unitary

    It follows directly from Definition~\ref{def:rem-init3} that
    $\sum_{e' \in [e]} Q'(e') = \sum_{e' \in [e]} Q|_{E'}(e')$ which is unitary.

  \end{enumerate}
\end{proof}

\subsubsection*{Proof of Lemma~\ref{lem:seq-mono3}}
\begin{proof}
  Let
  \begin{align*}
    & \es{U}_1 = \qpes{E_1}{\leq_1}{\#_1}{Q_1} \\
    & \es{U}'_1 = \qpes{E'_1}{\leq'_1}{\#'_1}{Q'_1} \\
    & \es{U}_2 = \qpes{E_2}{\leq_2}{\#_2}{Q_2} \\
    & \es{U}'_2 = \qpes{E'_2}{\leq'_2}{\#'_2}{Q'_2} \\
    & \seq{\es{U}_1}{\es{U}_2} = \qpes{E}{\leq}{\#}{Q} \\
    & \seq{\es{U}'_1}{\es{U}'_2} = \qpes{E'}{\leq'}{\#'}{Q'}
  \end{align*}
  such that $\es{U}_1 \sqsubseteq \es{U}'_1$ and $\es{U}_2 \sqsubseteq \es{U}'_2$.

  Due to Lemma~\ref{lem:seq-mono1} we only need to show
  $\forall e \in E\ .\ Q(e) = Q'(f(e))$.

  Let $e \in E$.
  We have two cases:
  \begin{enumerate}
  \item $e \in E_1$

    Since $\es{U}_1 \sqsubseteq \es{U}'_1$, it follows directly that
    $Q(e) = Q_1(e) = Q_1'(e) = Q'(e)$.
    
  \item $e=(e_2,x) \in E_2 \times \confmax{\es{U}_1}$

    By Definition~\ref{def:pes-seq3}, we know that $Q(e_2, x) = Q_{2}(e_2)$.  Since
    $\es{U}_2 \sqsubseteq \es{U}'_2$, then $Q_2(e_2) = Q'_2(e_2)$.  By
    Definition~\ref{def:pes-seq3}, we have that $Q'_2(e_2) = Q'(e_2, m_1(x))$.
  \end{enumerate}



    

\end{proof}

\subsubsection*{Proof of Lemma~\ref{lem:conc-mono3}}
\begin{proof}
  Let
  \begin{align*}
    & \es{U}_1 = \qpes{E_1}{\leq_1}{\#_1}{Q_1} \\
    & \es{U}'_1 = \qpes{E'_1}{\leq'_1}{\#'_1}{Q'_1} \\
    & \es{U}_2 = \qpes{E_2}{\leq_2}{\#_2}{Q_2} \\
    & \es{U}'_2 = \qpes{E'_2}{\leq'_2}{\#'_2}{Q'_2} \\
    & \conc{\es{U}_1}{\es{U}_2} = \qpes{E}{\leq}{\#}{Q} \\
    & \conc{\es{U}'_1}{\es{U}'_2} = \qpes{E'}{\leq'}{\#'}{Q'}
  \end{align*}
  such that $\es{U}_1 \sqsubseteq \es{U}'_1$ and $\es{U}_2 \sqsubseteq \es{U}'_2$.

  Due to Lemma~\ref{lem:conc-mono1} we only need to show
  $\forall e \in E\ .\ Q(e) = Q'(f(e))$.

  Let $e \in E$.  If $e \in E_1$ then by Definition~\ref{def:pes-conc3} we have $Q(e) = Q_1(e)$,
  which by $\es{U}_1 \sqsubseteq \es{U}_1'$ gives $Q_1(e) = Q_1'(e)$ that by
  Definition~\ref{def:pes-conc3} gives $Q_1'(e) = Q'(e)$.  Similarly when $e \in E_2$.
\end{proof}

\subsubsection*{Proof of Lemma~\ref{lem:meas-mono3}}
\begin{proof}
  Let
  \begin{align*}
    & \es{U}_1 = \qpes{E_1}{\leq_1}{\#_1}{Q_1} \\
    & \es{U}'_1 = \qpes{E'_1}{\leq'_1}{\#'_1}{Q'_1} \\
    & \es{U}_2 = \qpes{E_2}{\leq_2}{\#_2}{Q_2} \\
    & \es{U}'_2 = \qpes{E'_2}{\leq'_2}{\#'_2}{Q'_2} \\
    & \meas{n}{\es{U}_1}{\es{U}_2} = \qpes{E}{\leq}{\#}{Q} \\
    & \meas{n}{\es{U}'_1}{\es{U}'_2} = \qpes{E'}{\leq'}{\#'}{Q'}
  \end{align*}
  such that $\es{U}_1 \sqsubseteq \es{U}'_1$ and $\es{U}_2 \sqsubseteq \es{U}'_2$.

  We have to show that:
  \begin{enumerate}
  \item We start by defining $f : E \rightarrow E'$ such that
    \[
      f(e) =
      \begin{cases}
        e & \text{ if } e \in \init{\meas{n}{\es{U}_1}{\es{U}_2}} \\
        f_1(e) & \text{ if } e \in E_1 \\
        f_2(e) & \text{ if } e \in E_2
      \end{cases}
    \]

    It is straightforward to see that $f$ is injective
    
  \item $\pi(f(e)) = \pi(e)$

    If $e \in E_1$ or $e \in E_2$, we are done since $\es{U}_1 \sqsubseteq \es{U'}_1$ and $\es{U}_2 \sqsubseteq \es{U'}_2$.
    If $e \in \init{\meas{n}{\es{U}_1}{\es{U}_2}}$, then $\pi(f(e)) = \pi(e)$, since $f(e) = e$.
    
  \item $e \leq e' \Leftrightarrow f(e) \leq' f(e')$

    If $e \in E_1$ or $e \in E_2$, we are done since $\es{U}_1 \sqsubseteq \es{U'}_1$ and $\es{U}_2 \sqsubseteq \es{U'}_2$.
    If $e \in \init{\meas{n}{\es{U}_1}{\es{U}_2}}$, since $f(e) = e$. Consequently $e \leq e' \Leftrightarrow e \leq' f(e')$, which is trivially satisfied by Definition~\ref{def:pes-meas3}.
    
  \item $e \# e' \Leftrightarrow f(e) \#' f(e')$

    This case is similar to that of Lemma~\ref{lem:nd-mono1}.

  \item $\forall e \in E\ .\ Q(e) = Q'(f(e))$

    If $e = \tau_0^n$ or $e = \tau_1^n$ then we are done, since $f(e) = e$ and, consequently,
    $Q(e) = Q'(e)$. If $e \in E_1$ or $e \in E_2$ we are done since $\es{U}_1 \sqsubseteq \es{U}'_1$ and
    $\es{U}_2 \sqsubseteq \es{U}'_2$.
  \end{enumerate}

  The first three conditions follow directly from Definition~\ref{def:pes-meas3}.  For the last
  condition we argue as follows: 
\end{proof}

\subsubsection*{Proof of Lemma~\ref{lem:seq-rem-init3}}
\begin{proof}
  Let
  \begin{align*}
    & \es{U}_1 = \qpes{E_1}{\leq_1}{\#_1}{Q_1} \\
    & \es{U}_2 = \qpes{E_2}{\leq_2}{\#_2}{Q_2} \\
    & \seq{\es{U}_1}{\es{U}_2} = \qpes{E_{\seq{1}{2}}}{\leq_{\seq{1}{2}}}{\#_{\seq{1}{2}}}{Q_{\seq{1}{2}}} \\
    & (\seq{\es{U}_1}{\es{U}_2}) \backslash l = \qpes{E}{\leq}{\#}{Q} \\
    & \es{U}_1 \backslash l = \qpes{E_1^l}{\leq_1^l}{\#_1^l}{Q_1^l} \\
    & \seq{(\es{U}_1 \backslash l)}{\es{U}_2} = \qpes{E'}{\leq'}{\#'}{Q'} \\ 
    & l \in \init{\seq{\es{U}_1}{\es{U}_2}}
  \end{align*}
 
  Due to Lemma~\ref{lem:seq-rem-init1} we focus only on the quantum part.
  \begin{itemize}
  \item $(\seq{\es{U}_1}{\es{U}_2}) \backslash l \sqsubseteq \seq{(\es{U}_1\backslash l)}{\es{U}_2}$

    We need to show $\forall e \in E\ .\ Q(e) = Q'(f(e))$.  Let $e \in E$.  By
    Definition~\ref{def:rem-init3}, $Q(e) = Q_{\seq{1}{2}}|_{E} (e)$.
    By Definition~\ref{def:pes-seq3} we have two cases:
    \begin{itemize}
    \item $e \in E_1$
   
      By Definition~\ref{def:pes-seq3} and since $e \neq l$ and $\neg (e \# l)$, which gives
      $e \in E_1^l$, then $Q_{\seq{1}{2}}|_{E} (e) = Q_1|_{E_1^l}(e)$.  By
      Definition~\ref{def:rem-init3}, $Q_1|_{E_1^l}(e) = Q_1^l(e)$.  By
      Definition~\ref{def:pes-seq3}, $Q_1^l(e) = Q'(e)$.
      
    \item $e=(e_2,x) \in E_2 \times \confmax{\es{U}_1}$

      By Definition~\ref{def:pes-seq3} and since $l \not\in E_2$,
      $Q_{\seq{1}{2}}|_{E} (e_2,x) = Q_{2}(e_2)$.  Hence, again by
      Definition~\ref{def:pes-seq3}, $Q_2(e_2) = Q'(e_2, m_1(x))$.
    \end{itemize}
  \item $\seq{(\es{U}_1\backslash l)}{\es{U}_2} \sqsubseteq (\seq{\es{U}_1}{\es{U}_2}) \backslash l$

    Similar reasoning to the previous bullet.
  \end{itemize}
\end{proof}

\subsubsection*{Proof of Lemma~\ref{lem:meas-rem-init3}}
\begin{proof}
  Let
  \begin{align*}
    & \es{U}_1 = \qpes{E_1}{\leq_1}{\#_1}{Q_1} \\ 
    & \es{U}_2 = \qpes{E_2}{\leq_2}{\#_2}{Q_2} \\ 
    & \meas{n}{\es{U}_1}{\es{U}_2} = \qpes{E}{\leq}{\#}{Q} \\
    & (\meas{n}{\es{U}_1}{\es{U}_2}) \backslash l = \qpes{E'}{\leq'}{\#'}{Q'}
  \end{align*}
  
  Let us focus on the case where $l = \tau_0^n$, since the reasoning when $l = \tau_1^n$ is equal to
  the one below.
  \begin{itemize}
  \item $(\meas{n}{\es{U}_1}{\es{U}_2}) \backslash \tau_0^n \sqsubseteq \es{U}_1$
    \begin{enumerate}
    \item We start by defining $f : E' \rightarrow E_1$ as being the identity, which is injective.

    \item $\pi(f(e)) = \pi(e)$

      It follows directly since $f$ is the identity.

    \item $e \leq' e' \Leftrightarrow f(e) \leq_1 f(e')$

      Since $f$ is the identity, $e \leq' e' \Leftrightarrow e \leq_1 e'$.  It then follows directly
      from Definition~\ref{def:rem-init3} and Definition~\ref{def:pes-meas3}.

    \item $e \#' e' \Leftrightarrow f(e) \#_1 f(e')$

      Since $f$ is the identity, $e \#' e' \Leftrightarrow e \#_1 e'$.  It then follows directly
      from Definition~\ref{def:rem-init3} and Definition~\ref{def:pes-meas3}.

    \item $\forall e \in E'\ .\ Q'(e) = Q_1(f(e))$

      Since $f$ is the identity, $Q'(e) = Q_1(e)$.  Let $e \in E'$.  By
      Definition~\ref{def:rem-init3}, $Q'(e) = Q|_{E'}(e)$.  Since $\tau_0^n \leq e$ we know that
      $e \in E_1$, hence by Definition~\ref{def:pes-meas3}, $Q|_{E'}(e) = Q_1(e)$.
    \end{enumerate}

  \item $\es{U}_1 \sqsubseteq (\meas{n}{\es{U}_1}{\es{U}_2}) \backslash \tau_0^n$
    \begin{enumerate}
    \item We start by defining $f : E_1 \rightarrow E'$ as being the identity, which is injective.

    \item $\pi(f(e)) = \pi(e)$

      It follows directly since $f$ is the identity.

    \item $e \leq_1 e' \Leftrightarrow f(e) \leq' f(e')$

      Since $f$ is the identity, $e \leq_1 e' \Leftrightarrow e \leq' e'$.  It then follows directly
      from Definition~\ref{def:rem-init3} and Definition~\ref{def:pes-meas3}.

    \item $e \#_1 e' \Leftrightarrow f(e) \#' f(e')$

      Since $f$ is the identity, $e \#_1 e' \Leftrightarrow e \#' e'$.  It then follows directly
      from Definition~\ref{def:rem-init3} and Definition~\ref{def:pes-meas3}.

    \item $\forall e \in E_1\ .\ Q_1(e) = Q'(f(e))$

      Since $f$ is the identity, $Q_1(e) = Q'(e)$.  Let $e \in E_1$.  By
      Definition~\ref{def:pes-meas3}, $Q(e) = Q_1(e)$.  Since $e \neq l$ and $\neg(e \# l)$, then
      $Q'(e) = Q(e)$.  Thus $Q_1(e) = Q'(e)$.
    \end{enumerate}
  \end{itemize}
\end{proof}

\subsubsection*{Proof of Lemma~\ref{lem:conc-rem-init3}}
\begin{proof}
  Let $l \in \init{\es{E}_1}$ and
  \begin{align*}
    & \es{U}_1 = \qpes{E_1}{\leq_1}{\#_1}{Q_1} \\
    & \es{U}_2 = \qpes{E_2}{\leq_2}{\#_2}{Q_2} \\
    & \conc{\es{U}_1}{\es{U}_2} = \qpes{E_{\conc{1}{2}}}{\leq_{\conc{1}{2}}}{\#_{\conc{1}{2}}}{Q_{\conc{1}{2}}} \\
    & (\conc{\es{U}_1}{\es{U}_2}) \backslash l = \qpes{E}{\leq}{\#}{Q} \\
    & \es{U}_1 \backslash l = \qpes{E_1^l}{\leq_1^l}{\#_1^l}{Q_1^l} \\
    & \es{U}_2 \backslash l = \qpes{E_2^l}{\leq_2^l}{\#_2^l}{Q_2^l} \\
    & \conc{(\es{U}_1 \backslash l)}{\es{U}_2} = \qpes{E'}{\leq'}{\#'}{Q'} \\ 
    & l \in \init{\conc{\es{U}_1}{\es{U}_2}}
  \end{align*}
 
  Due to Lemma~\ref{lem:conc-rem-init1} we focus only on the quantum part.  
  \begin{itemize}
  \item $(\conc{\es{U}_1}{\es{U}_2}) \backslash l \sqsubseteq \conc{(\es{U}_1\backslash l)}{\es{U}_2}$

    We need to show $\forall e \in E\ .\ Q(e) = Q'(f(e))$.
    Let $e \in E$.
    By Definition~\ref{def:rem-init3}, $Q(e) = Q_{\conc{1}{2}}|_{E}(e)$.
    By Definition~\ref{def:pes-conc3} we have two cases:
    \begin{enumerate}
    \item $e \in E_1$

      Since $l \in \init{\es{U}_1}$ then $e \neq l$ and $\neg(e \# l)$.  Thus $e \in E_1^l$.
      Consequently, $Q_{\conc{1}{2}}|_{E}(e) = Q|_{E_1^l}(e)$.  By Definition~\ref{def:rem-init3},
      $Q1|_{E_1^l}(e) = Q_1^l(e)$.  By Definition~\ref{def:pes-conc3}, $Q_1^l(e) = Q'(e)$.

    \item  $e \in E_2$

      Then $Q_{\conc{1}{2}}|_{E}(e) = Q_2(e)$.  By Definition~\ref{def:pes-conc3}, $Q_2(e) = Q'(e)$.
    \end{enumerate}
    
  \item $\conc{(\es{U}_1\backslash l)}{\es{U}_2} \sqsubseteq (\conc{\es{U}_1}{\es{U}_2}) \backslash l$

    We need to show $\forall e \in E'\ .\ Q'(e) = Q(e)$.
    By Definition~\ref{def:pes-conc3} we have two cases:
    \begin{enumerate}
    \item $e \in E_1^l$

      Then $Q'(e) = Q_1^l(e)$.  By Definition~\ref{def:rem-init3}, $Q_1^l(e) = Q_1|_{E_1^l}(e)$.  By
      Definition~\ref{def:pes-conc3}, $Q_1|_{E_1^l}(e) = Q_{\conc{1}{2}}|_{E_1^l \uplus E_2}$.  By
      Definition~\ref{def:rem-init3}, $Q_{\conc{1}{2}}|_{E_1^l \uplus E_2} = Q(e)$.

    \item $e \in E_2^l$

      Then $Q'(e) = Q_2(e)$. By Definition~\ref{def:pes-conc3}, $Q_2(e) = Q_{\conc{1}{2}}(e)$.  By
      Definition~\ref{def:rem-init3}, $Q_{\conc{1}{2}}(e) = Q(e)$.
    \end{enumerate}
  \end{itemize}
\end{proof}

\subsubsection*{Proof of Lemma~\ref{lem:conc-symmetric3}}
\begin{proof}
  It follows directly from Definition~\ref{def:pes-conc3}.
\end{proof}

\subsubsection*{Proof of Lemma~\ref{res:soundI-3}}
\begin{proof}
  Induction over rules in Figure~\ref{fig:op-small3}.

  \begin{itemize}
  \item $\com{skip} \xrightarrow{sk} \checkmark$

    It follows directly that $\mf{\checkmark} \equiv \mf{\com{skip}} \backslash sk \equiv \emptyset$.

  \item $\com{a} \xrightarrow{a} \checkmark$

    It follows directly that $\mf{\checkmark} \equiv \mf{\com{a}} \backslash a \equiv \emptyset$.

  \item $\meas{n}{\com{C}_1}{\com{C}_2} \xrightarrow{\tau_0^n} \com{C}_1$

    It follows directly since
    $\seq{\es{P}_0^n}{\mf{\com{C}_1}}
    \sqsubseteq \seq{\es{P}_0^n}{\mf{\com{C}_1}} + \seq{\es{P}_1^n}{\mf{\com{C}_2}}
    = \mf{\meas{n}{\com{C}_1}{\com{C}_2}}$.

  \item $\meas{n}{\com{C}_1}{\com{C}_2} \xrightarrow{\tau_1^n} \com{C}_2$

    It follows directly since
    $\seq{\es{P}_1^n}{\mf{\com{C}_2}}
    \sqsubseteq \seq{\es{P}_0^n}{\mf{\com{C_1}}} + \seq{\es{P}_1^n}{\mf{\com{C_2}}}
    = \mf{\meas{n}{\com{C}_1}{\com{C}_2}}$.

  \item $\seq{\com{C}_1}{\com{C}_2} \xrightarrow{l} \com{C}_2$
    \begin{align*}
      & \seq{\com{C}_1}{\com{C}_2} \xrightarrow{l} \com{C}_2 \\
      \Rightarrow & \set{ \xrightarrow{l} \text{ entails}} \\
      & \com{C}_1 \xrightarrow{l} \checkmark \\
      \Rightarrow & \set{\text{i.h.}} \\
      & \mf{\checkmark} \equiv \mf{\com{C}_1} \backslash l \\
      \Rightarrow & \set{ \text{Lemma~\ref{lem:seq-mono3}}} \\
      & \seq{\mf{\checkmark}}{\mf{\com{C}_2}} \equiv
        \seq{(\mf{\com{C}_1} \backslash l)}{\mf{\com{C}_2}} \\
      \Rightarrow & \set{
                    \seq{\mf{\checkmark}}{\mf{\com{C}_2}} \equiv \mf{\com{C}_2},
                    \text{ Lemma~\ref{lem:seq-rem-init3}}
                    } \\
      & \mf{\com{C}_2} \equiv (\seq{\mf{\com{C}_1}}{\mf{\com{C}_2}}) \backslash l \\
      \Rightarrow & \set{\text{Definition~\ref{def:den-sem3}}} \\
      & \mf{\com{C}_2} \equiv \mf{\seq{\com{C}_1}{\com{C}_2}} \backslash l \\
    \end{align*}

  \item $\seq{\com{C}_1}{\com{C}_2} \xrightarrow{l} \seq{\com{C}'_1}{\com{C}_2}$
    \begin{align*}
      & \seq{\com{C}_1}{\com{C}_2} \xrightarrow{l} \seq{\com{C}'_1}{\com{C}_2} \\
      \Rightarrow & \set{ \xrightarrow{l} \text{ entails}} \\
      & \com{C}_1 \xrightarrow{l} \com{C}'_1 \\
      \Rightarrow & \set{\text{i.h.}} \\
      & \mf{\com{C}'_1} \equiv \mf{\com{C}_1} \backslash l \\
      \Rightarrow & \set{ \text{Lemma~\ref{lem:seq-mono3}}} \\
      & \seq{\mf{\com{C}'_1}}{\mf{\com{C}_2}} \equiv \seq{(\mf{\com{C}_1} \backslash l)}{\mf{\com{C}_2}} \\
      \Rightarrow & \set{\text{Lemma~\ref{lem:seq-rem-init3}}} \\
      & \seq{\mf{\com{C}'_1}}{\mf{\com{C}_2}} \equiv (\seq{\mf{\com{C}_1}}{\mf{\com{C}_2}}) \backslash l \\
      \Rightarrow & \set{\text{Definition~\ref{def:den-sem3}}} \\
      & \mf{\seq{\com{C}'_1}{\com{C}_2}} \equiv \mf{\seq{\com{C}_1}{\com{C}_2}} \backslash l \\
    \end{align*}

  \item $\conc{\com{C}_1}{\com{C}_2} \xrightarrow{l} \com{C}_2$
    \begin{align*}
      & \conc{\com{C}_1}{\com{C}_2} \xrightarrow{l} \com{C}_2 \\
      \Rightarrow & \set{ \xrightarrow{l} \text{ entails}} \\
      & \com{C}_1 \xrightarrow{l} \checkmark \\
      \Rightarrow & \set{\text{i.h.}} \\
      & \mf{\checkmark} \equiv \mf{\com{C}_1} \backslash l \\
      \Rightarrow & \set{\text{Lemma~\ref{lem:conc-mono3}}} \\
      & \conc{\mf{\checkmark}}{\mf{\com{C}_2}} \equiv \conc{(\mf{\com{C}_1} \backslash l)}{\com{C}_2} \\
      \Rightarrow & \set{\conc{\mf{\checkmark}}{\mf{\com{C}_2}} \equiv \mf{\com{C}_2} } \\
      & \mf{\com{C}_2} \equiv \conc{(\mf{\com{C}_1} \backslash l)}{\com{C}_2} \\
      \Rightarrow & \set{\text{Lemma~\ref{lem:conc-rem-init3}},
                    \text{Definition~\ref{def:den-sem3}}} \\
      & \mf{\com{C}_2} \equiv \mf{\conc{\com{C}_1}{\com{C}_2}} \backslash l \\
    \end{align*}

  \item $\conc{\com{C}_1}{\com{C}_2} \xrightarrow{l} \conc{\com{C}'_1}{\com{C}_2}$
    \begin{align*}
      & \conc{\com{C}_1}{\com{C}_2} \xrightarrow{l} \conc{\com{C}'_1}{\com{C}_2} \\
      \Rightarrow & \set{ \xrightarrow{l} \text{ entails}} \\
      & \com{C}_1 \xrightarrow{l} \com{C}'_1 \\
      \Rightarrow & \set{\text{i.h.}} \\
      & \mf{\com{C}'_1} \equiv \mf{\com{C}_1} \backslash l \\
      \Rightarrow & \set{\text{Lemma~\ref{lem:conc-mono3}}} \\
      & \conc{\mf{\com{C}'_1}}{\mf{\com{C}_2}} \equiv \conc{(\mf{\com{C}_1} \backslash l)}{\com{C}_2} \\
      \Rightarrow & \set{\text{Lemma~\ref{lem:conc-rem-init3}},
                    \text{Definition~\ref{def:den-sem3}}} \\
      & \mf{\conc{\com{C}'_1}{\com{C}_2}} \equiv \mf{\conc{\com{C}_1}{\com{C}_2}} \backslash l \\
    \end{align*}

  \item $\conc{\com{C}_1}{\com{C}_2} \xrightarrow{l} \com{C}_1$
    \begin{align*}
      & \conc{\com{C}_1}{\com{C}_2} \xrightarrow{l} \com{C}_1 \\
      \Rightarrow & \set{ \xrightarrow{l} \text{ entails}} \\
      & \com{C}_2 \xrightarrow{l} \checkmark \\
      \Rightarrow & \set{\text{i.h.}} \\
      & \mf{\checkmark} \equiv \mf{\com{C}_2} \backslash l \\
      \Rightarrow & \set{\text{Lemma~\ref{lem:conc-mono3}}} \\
      & \conc{\mf{\com{C}_1}}{\mf{\checkmark}} \equiv
        \conc{\mf{\com{C}_1}}{(\mf{\com{C}_2} \backslash l)} \\
      \Rightarrow & \set{\conc{\mf{\com{C}_1}}{\mf{\checkmark}} \equiv \mf{\com{C}_1} } \\
      & \mf{\com{C}_1} \equiv \conc{\mf{\com{C}_1}}{(\mf{\com{C}_2} \backslash l)} \\
      \Rightarrow & \set{\text{Lemma~\ref{lem:conc-rem-init3}},
                    \text{Definition~\ref{def:den-sem3}}} \\
      & \mf{\com{C}_1} \equiv \mf{\conc{\com{C}_1}{\com{C}_2}} \backslash l \\
    \end{align*}

  \item $\conc{\com{C}_1}{\com{C}_2} \xrightarrow{l} \conc{\com{C}_1}{\com{C}'_2}$
    \begin{align*}
      & \conc{\com{C}_1}{\com{C}_2} \xrightarrow{l} \conc{\com{C}_1}{\com{C}'_2} \\
      \Rightarrow & \set{ \xrightarrow{l} \text{ entails}} \\
      & \com{C}_2 \xrightarrow{l} \com{C}'_2 \\
      \Rightarrow & \set{\text{i.h.}} \\
      & \mf{\com{C}'_2} \equiv \mf{\com{C}_2} \backslash l \\
      \Rightarrow & \set{\text{Lemma~\ref{lem:conc-mono3}}} \\
      & \conc{\mf{\com{C}_1}}{\mf{\com{C}'_2}} \equiv \conc{\com{C}_1}{(\mf{\com{C}_2} \backslash l)} \\
      \Rightarrow & \set{\text{Lemma~\ref{lem:conc-rem-init3}},
                    \text{Definition~\ref{def:den-sem3}}} \\
      & \mf{\conc{\com{C}_1}{\com{C}'_2}} \equiv \mf{\conc{\com{C}_1}{\com{C}_2}} \backslash l \\
    \end{align*}


    

  \end{itemize}
\end{proof}

\subsubsection*{Proof of Theorem~\ref{res:soundII-3}}
\begin{proof}
  \begin{itemize}
  \item $|\omega| = 1$

    It follows directly that
    $\exists \set{l} \in \confES{\mf{\com{C}}}\, .\, \emptyset \cchain{l} \set{l}$

  \item $|\omega| > 1$
    \begin{align*}
      & \com{C} \xtwoheadrightarrow{\omega} \com{C}' \\
      \Rightarrow & \set{\text{Definition~\ref{fig:op-nstep3}}} \\
      & \com{C} \xrightarrow{l} \com{C}'' \qquad
        \com{C}'' \xtwoheadrightarrow{\omega'} \com{C}' \\
      \Rightarrow & \set{\text{Lemma~\ref{res:soundI-3}, i.h.}} \\
      & \mf{\com{C}''} \equiv \mf{\com{C}} \backslash l \qquad
        \exists y \in \confES{\mf{\com{C}''}}\, .\, \emptyset \cchain{\omega'} y \\
      \Rightarrow & \set{\text{Definition~\ref{def:rem-init3}}} \\
      & \set{l} \cup y \in \confES{\mf{\com{C}}}\, .\, \emptyset \cchain{l} \set{l} \cchain{\omega'} \set{l} \cup y = x
    \end{align*}
  \end{itemize}
\end{proof}

\subsubsection*{Proof of Lemma~\ref{res:adI-3}}
\begin{proof}
  Induction over the interpretation of commands.

  \begin{itemize}
  \item $sk \in \init{\com{skip}}$

    Let $\com{C}' = \checkmark$.  It follows directly that $\com{skip} \xrightarrow{sk} \checkmark$
    and that $\mf{\com{skip}} \backslash sk \equiv \mf{\checkmark}$.

  \item $a \in \init{\com{a}}$

    Let $\com{C}' = \checkmark$.  It follows directly that $\com{a} \xrightarrow{a} \checkmark$
    and that $\mf{\com{a}} \backslash a \equiv \mf{\checkmark}$.

  \item $l \in \init{\meas{n}{\com{C}_1}{\com{C}_2}}$

    By Definition~\ref{def:den-sem3}, $\mf{{\meas{n}{\com{C}_1}{\com{C}_2}}} =
    \seq{\es{P}_0^n}{\mf{\com{C}_1}} + \seq{\es{P}_1^n}{\mf{\com{C}_2}}$.
    We have two cases:
    \begin{enumerate}
    \item $l' = \tau_0^n$

      By Lemma~\ref{lem:meas-rem-init3},
      $(\seq{\es{P}_0^n}{\mf{\com{C}_1}} + \seq{\es{P}_1^n}{\mf{\com{C}_2}}) \backslash \tau_0^n 
      \equiv \mf{\com{C}_1}$.  Furthermore
      $\meas{n}{\com{C}_1}{\com{C}_2} \xrightarrow{\tau_0^n} \com{C}_1$.
      
    \item $l' = \tau_1^n$

      By Lemma~\ref{lem:meas-rem-init3},
      $(\seq{\es{P}_0^n}{\mf{\com{C}_1}} + \seq{\es{P}_1^n}{\mf{\com{C}_2}}) \backslash \tau_1^n
      \equiv \mf{\com{C}_2}$.  Furthermore
      $\meas{n}{\com{C}_1}{\com{C}_2} \xrightarrow{\tau_1^n} \com{C}_2$.
    \end{enumerate}
    
  \item $l' \in \init{\seq{\com{C}_1}{\com{C}_2}}$

    By Definition~\ref{def:pes-seq3} we have that $l' \in \init{\mf{\com{C}_1}}$.  By i.h.,
    $\exists \com{C}'$ such that $\com{C}_1 \xrightarrow{l} \com{C}'$ and
    $\mf{\com{C}_1} \backslash l' \equiv \mf{\com{C}'}$. We have two cases:
    \begin{enumerate}
    \item $\com{C}' = \checkmark$

      We have $\com{C}_1 \xrightarrow{l'} \checkmark$ and
      $\mf{\com{C}_1} \backslash l' \equiv \mf{\checkmark}$.  By the rules in Figure~\ref{fig:op-small3},
      $\seq{\com{C}_1}{\com{C}_2} \xrightarrow{l'} \com{C}_2$.
      By Definition~\ref{def:pes-seq3}, $\seq{(\mf{\com{C}_1} \backslash l')}{\mf{\com{C}_2}} \equiv
      \seq{\mf{\checkmark}}{\mf{\com{C}_2}} \equiv \mf{\com{C}_2}$.
      
    \item $\com{C}' = \com{C}'_1$

      We have $\com{C}_1 \xrightarrow{l'} \com{C}'_1$ and
      $\mf{\com{C}_1} \backslash l' \equiv \mf{\com{C}'_1}$.  By the rules in Figure~\ref{fig:op-small3},
      $\seq{\com{C}_1}{\com{C}_2} \xrightarrow{l'} \seq{\com{C}'_1}{\com{C}_2}$.  By
      Definition~\ref{def:pes-seq3},
      $\seq{(\mf{\com{C}_1} \backslash l')}{\mf{\com{C}_2}} \equiv \seq{\mf{\com{C}'_1}}{\mf{\com{C}_2}}$.
      By Definition~\ref{def:den-sem3}, $\mf{\seq{\com{C}'_1}{\com{C}_2}}$.
    \end{enumerate}

  \item $l' \in \init{\conc{\com{C}_1}{\com{C}_2}}$

    By Definition~\ref{def:pes-conc3} we have two cases: 
    \begin{enumerate}
    \item $l' \in \init{\mf{\com{C}_1}}$

      By i.h. $\exists \com{C}'\, .\, \com{C}_1 \xrightarrow{l'} \com{C}'$ and
      $\mf{\com{C}_1} \backslash l' \equiv \mf{\com{C}'}$. 
      By the rules in Figure~\ref{fig:op-small3} we have two cases:
      \begin{enumerate}
      \item $\com{C}' = \checkmark$

        We have $\com{C}_1 \xrightarrow{l'} \checkmark$ and
        $\mf{\com{C}_1} \backslash l' \equiv \mf{\checkmark}$.  By the rules in
        Figure~\ref{fig:op-small3} we have $\conc{\com{C}_1}{\com{C}_2} \xrightarrow{l'} \com{C}_2$.
        By Definition~\ref{def:pes-conc3}, $\conc{(\mf{\com{C}_1}\backslash l')}{\mf{\com{C}_2}}$.
        By Lemma~\ref{lem:conc-rem-init3} we have
        $(\conc{\mf{\com{C}_1}}{\mf{\com{C}_2}}) \backslash l'$.  By Definition~\ref{def:den-sem3},
        $\mf{\conc{\com{C}_1}{\com{C}_2}} \backslash l'$.

      \item $\com{C}' = \com{C}'_1$

        We have $\com{C}_1 \xrightarrow{l'} \com{C}'_1$ and
        $\mf{\com{C}_1} \backslash l' \equiv \mf{\com{C}'_1}$.  By the rules in Figure~\ref{fig:op-small3}
        we have $\conc{\com{C}_1}{\com{C}_2} \xrightarrow{l'} \conc{\com{C}'_1}{\com{C}_2}$. By
        Definition~\ref{def:pes-conc3}, $\conc{(\mf{\com{C}_1}\backslash l')}{\mf{\com{C}_2}}$.
        By Lemma~\ref{lem:conc-rem-init3} we have
        $(\conc{\mf{\com{C}_1}}{\mf{\com{C}_2}}) \backslash l'$.  By Definition~\ref{def:den-sem3},
        $\mf{\conc{\com{C}_1}{\com{C}_2}} \backslash l'$.
      \end{enumerate}
      
    \item $l' \in \init{\mf{\com{C}_2}}$
      
      By i.h. $\exists \com{C}'\, .\, \com{C}_2 \xrightarrow{l} \com{C}'$ and
      $\mf{\com{C}_2} \backslash l' = \mf{\com{C}'}$. 
      By the rules in Figure~\ref{fig:op-small3} we have two cases:
      \begin{enumerate}
      \item $\com{C}' = \checkmark$

        We have $\com{C}_2 \xrightarrow{l'} \checkmark$ and
        $\mf{\com{C}_2} \backslash l' \equiv \mf{\checkmark}$.  By the rules in Figure~\ref{fig:op-small3}
        we have $\conc{\com{C}_1}{\com{C}_2} \xrightarrow{l'} \com{C}_1$.  By
        Definition~\ref{def:pes-conc3}, $\conc{\mf{\com{C}_1}}{(\mf{\com{C}_2}\backslash l')}$.
        By Lemma~\ref{lem:conc-rem-init3} we have
        $(\conc{\mf{\com{C}_1}}{\mf{\com{C}_2}}) \backslash l'$.
        By Definition~\ref{def:den-sem3}, $\mf{\conc{\com{C}_1}{\com{C}_2}} \backslash l'$.

      \item $\com{C}' = \com{C}'_2$

        We have $\com{C}_2 \xrightarrow{l'} \com{C}'_2$ and
        $\mf{\com{C}_2} \backslash l' \equiv \mf{\com{C}'_2}$.  By the rules in Figure~\ref{fig:op-small3}
        we have $\conc{\com{C}_1}{\com{C}_2} \xrightarrow{l'} \conc{\com{C}_1}{\com{C}'_2}$. By
        Definition~\ref{def:pes-conc3}, $\conc{\mf{\com{C}_1}}{(\mf{\com{C}_2}\backslash l')}$.
        By Lemma~\ref{lem:conc-rem-init3} we have
        $(\conc{\mf{\com{C}_1}}{\mf{\com{C}_2}}) \backslash l'$.  By Definition~\ref{def:den-sem3},
        $\mf{\conc{\com{C}_1}{\com{C}_2}} \backslash l'$.
      \end{enumerate}
    \end{enumerate}



      

  \end{itemize}
\end{proof}

\subsubsection*{Proof of Theorem~\ref{res:adII-3}}
\begin{proof}
  Induction over the length of $\omega$.
  
  \begin{itemize}
  \item $|\omega| = 1$

    We have $\set{l'} \in \confES{\com{C}}$ such that $\emptyset \cchain{l'} \set{l'}$.  Furthermore
    $l' \in \init{\mf{\com{C}}}$.  By Lemma~\ref{res:adI-3}, $\com{C} \xrightarrow{l'} \com{C}'$ and
    $\mf{\com{C}'} \equiv \mf{\com{C}}\backslash l'$.  By the rules in Figure~\ref{fig:op-nstep3},
    $\com{C} \xtwoheadrightarrow{l'} \com{C}'$.

  \item $|\omega| > 1$

    We have $x \in \confES{\mf{C}}$ such that $\emptyset \cchain{\omega} x$.  Since
    $\omega = l_0 l_1 \dots l_n$, then $\emptyset \cchain{l_0} \set{l_0} \cchain{\omega'} x$.  Hence
    $l_0 \in \init{\mf{\com{C}}}$. By Lemma~\ref{res:adI-3}, $\com{C} \xrightarrow{l_0} \com{C}'$
    and $\mf{\com{C}'} \equiv \mf{\com{C}} \backslash l_0$.  By Definition~\ref{def:rem-init3},
    $\exists y \in \confES{\mf{\com{C}'}}$ such that $\emptyset \cchain{\omega'} y$.  By
    i.h. $\exists \com{C}''$ such that $\com{C}' \xtwoheadrightarrow{\omega'} \com{C}''$.  By the
    rules in Figure~\ref{fig:op-nstep3}, $\com{C} \xtwoheadrightarrow{\omega} \com{C}''$, where
    $\omega = l_0 : \omega'$.

  \end{itemize}
\end{proof}

\subsubsection*{Proof of Lemma~\ref{lem:E-tilde}}
\begin{proof}
  We show that $\tilde{\leq}$ is a partial order and that $\tilde{\#}$ is symmetric and irreflexive.
  \begin{itemize}
  \item $\tilde{\leq}$ is a partial order:
    \begin{itemize}
    \item Reflexivity follows directly from Definition~\ref{def:E-tilde}.
    \item Transitivity we note that the pairs are compose of the same event. Hence transitivity
      holds.
    \item Antisymmetry holds with the same argument as in transitivity.
    \end{itemize}
  \item $\tilde{\#}$ is symmetric and irreflexive since $\tilde{\#}$ equals $\#$ for the events in
    $\tilde{E}$.
  \end{itemize}
  
  We show $\tilde{\es{U}}_y$ obeys the conditions of a unitary event structure.
  \begin{itemize}
  \item $\set{e' \mid e' \tilde{\leq} e}$ is finite

    Trivially holds because every $e \in \tilde{E}$ is only causally related to itself.
    
  \item $e \tilde{\#} e' \tilde{\leq} e'' \Rightarrow e \tilde{\#} e''$

    Trivially holds because every $e \in \tilde{E}$ is only causally related to itself.  Hence
    $e' \tilde{\leq} e''$ is by definition $e' \tilde{\leq} e'$. It then follows directly
    $e \tilde{\#} e' \tilde{\leq} e' \Rightarrow e \tilde{\#} e'$.
    
  \item $\econc{e}{e'} \Rightarrow [\tilde{Q}(e), \tilde{Q}(e')] = 0$

    It follows directly since if $\econc{e}{e'}$ in $\tilde{{\es{U}}}$, then $\econc{e}{e'}$ in
    $\es{U}$, in which $[Q(e), Q(e')] = 0$.  Hence
    $[\tilde{Q}(e), \tilde{Q}(e')] = [Q|_{\tilde{E}}(e) ,Q|_{\tilde{E}}(e')] = 0$

  \item $\minconflict$ is transitive

    It follows directly from the fact that $\tilde{\#}$ is inherited from $\es{U}$.

  \item $\forall e \in \tilde{E}, \sum_{e' \in [e]} Q(e')$ is unitary

    Let $ e \in \tilde{E}$.  By definition we have
    $\tilde{Q} = Q|_{\tilde{E}}$.  It then follows that
    $\sum_{e' \in [e]} \tilde{Q}([e]) = \sum_{e' \in [e]} Q|_{\tilde{E}}(e')$ is unitary.
  \end{itemize}
\end{proof}

\subsubsection*{Proof of Lemma~\ref{lem:E-hat}}
\begin{proof}
  We show that $\hat{\leq}$ is a partial order and that $\hat{\#}$ is symmetric and irreflexive.
  \begin{itemize}
  \item $\hat{\leq}$ is a partial order:
    \begin{itemize}
    \item Reflexivity follows directly from Definition~\ref{def:E-hat}.
    \item Transitivity we note that the pairs are compose of the same event. Hence transitivity
      holds.
    \item Antisymmetry holds with the same argument as in transitivity.
    \end{itemize}
  \item $\hat{\#}$ is symmetric and irreflexive since $\hat{\#}$ is the empty relation.
  \end{itemize}  
  
  We show $\hat{\es{U}}$ obeys the conditions of a unitary event structure.  For convenience we
  sometimes write $U$ instead of $\sum_{e' \in [e]} Q(e')$.
  \begin{itemize}
  \item $\set{[e'] \mid [e'] \hat{\leq} [e]}$ is finite

    Trivially holds because every $[e] \in \hat{E}$ is only causally related to itself.
    
  \item $[e] \hat{\#} [e'] \hat{\leq} [e''] \Rightarrow [e] \hat{\#} [e'']$

    Trivially holds because the conflict relation is empty.
    
  \item $\econc{[e]}{[e']} \Rightarrow [\hat{Q}([e]), \hat{Q}([e'])] = 0$

    We have three cases:
    \begin{enumerate}
    \item $|[e]| = |[e']| = 1$

      It follows directly that $[\hat{Q}([e]), \hat{Q}([e'])] = [Q(e), Q(e')] = 0$

    \item $|[e]| = 1$ and $|[e']| > 1$

      We have
      $[\hat{Q}([e]), \hat{Q}([e'])] =
      [Q(e), U] =
      0$, since the event $e$ and all the events in $[e']$ are concurrent.

    \item $|[e]| > 1$ and $|[e']| > 1$

      We have
      $[\hat{Q}([e]), \hat{Q}([e'])] =
      [U_1, U_2] =
      0$, since all the events in $[e]$ are concurrent with the events in $[e']$.
    \end{enumerate}

  \item $\minconflict$ is transitive

    It follows directly since $\hat{\#} = \emptyset$.
    
  \item $\forall [e] \in \hat{E}, \sum_{[e'] \in [[e]]} \hat{Q}([e'])$ is unitary

    Since $\hat{\#} = \emptyset$ then $\sum_{[e'] \in [[e]]} \hat{Q}([e']) = \hat{Q}([e'])$ since
    $|[[e]]| = 1$.  By Definition~\ref{def:E-hat} we have two cases:
    \begin{enumerate}
    \item $|[e']| = 1$

      Then $\hat{Q}([e']) = Q(e')$ which is a unitary.
      
    \item $|[e]| > 1$

      Then $\hat{Q}([e']) = U$ which is a unitary.
    \end{enumerate}
  \end{itemize}
\end{proof}

\subsubsection*{Proof of Lemma~\ref{lem:map-es}}
\begin{proof}
  We show that $proj_{\tilde{\es{E}}}$ satisfies the conditions to be a map of event structures.
  \begin{itemize}
  \item $\forall x \in \confES{\tilde{\es{E}}} \Rightarrow
    proj_{\tilde{\es{E}}}(x) \in \confES{\hat{\es{E}}}$

    It follows straightforwardly since $\hat{\#}$ is empty and $\hat{\leq}$ is the trivial one.
    
  \item $\forall (e \neq e') \in x \in \confES{\tilde{\es{E}}}$, if $proj_{\tilde{\es{E}}}$ is
    defined in both then $proj_{\tilde{\es{E}}}(e) \neq proj_{\tilde{\es{E}}}(e')$

    Let $(e \neq e') \in x \in \confES{\tilde{\es{E}}}$.  Since $proj_{\tilde{\es{E}}}$ is total,
    $proj_{\tilde{\es{E}}}$ is defined in both and since $e, e' \in x$ then $\neg(e \tilde{\#} e')$.
    Hence it follows straightforwardly that $proj_{\tilde{\es{E}}}(e) \neq proj_{\tilde{\es{E}}}(e')$.
  \end{itemize}  
\end{proof}

\subsubsection*{Proof of Lemma~\ref{lem:po-3}}
\begin{proof}
  Due to Lemma~\ref{lem:po-1} we only focus on the condition of the quantum operators.  Let
  $\es{U}_1 = \qpes{E_1}{\leq_1}{\#_1}{Q_1}$, $\es{U}_2 = \qpes{E_2}{\leq_2}{\#_2}{Q_2}$, and
  $\es{U}_3 = \qpes{E_3}{\leq_3}{\#_3}{Q_3}$ be quantum event structures.
  \begin{itemize}
  \item Reflexivity: $\es{U}_1 \trianglelefteq \es{U}_1$

    It follows directly that $\forall e \in E_1\, .\, Q_1(e) = Q_1(e)$
    
  \item Transitivity : $\es{U}_1 \trianglelefteq \es{U}_2,\, \es{U}_2 \trianglelefteq \es{U}_3 \Rightarrow \es{U}_1 \trianglelefteq \es{U}_3$

    From $\es{U}_1 \trianglelefteq \es{U}_2$, $\forall e \in E_1\, .\, Q_1(e) = Q_2(e)$.  From
    $\es{U}_2 \trianglelefteq \es{U}_3$, $\forall e \in E_2\, .\, Q_2(e) = Q_3(e)$.  Hence
    $\forall e \in E_1\, .\, Q_1(e) = Q_2(e) = Q_3(e)$.
    
  \item Antisymmetry: $\es{U}_1 \trianglelefteq \es{U}_2,\, \es{U}_2 \trianglelefteq \es{U}_1 \Rightarrow \es{U}_1 = \es{U}_2$

    From Lemma~\ref{lem:po-1} we know that $E_1 = E_2$.  Hence it follows directly that
    $\es{U}_1 = \es{U}_2$.
  \end{itemize}
\end{proof}

\subsubsection*{Proof of Lemma~\ref{lem:po-least-elem-3}}
\begin{proof}
  We begin by showing that $\bot$ is a unitary event structure. We already know that
  $\pes{\emptyset}{\emptyset}{\emptyset}$ is an event structure, hence it lacks to verify the
  conditions regarding the quantum operator. However, since there are no events, the conditions
  trivially holds.

  To show that $\bot$ is the least element, consider any unitary event structure $\es{U}$. We need
  to show $\bot \trianglelefteq \es{U}$.  Due to Lemma~\ref{lem:po-least-elem-1} we only focus on
  the quantum operator. We need to show that for every event in $\bot$, its mapping through $!$ and
  $Q$ is the same. We show it by contradiction. Thus, we need to find an event $e \in \emptyset$
  such that its mapping through $!$ and $Q$ is not the same. However, there are no events in
  $\bot$. Thus the condition holds.
\end{proof}

\subsubsection*{Proof of Lemma~\ref{lem:lub-es-3}}
\begin{proof}
  Due to Lemma~\ref{lem:lub-es-1} we focus on the quantum operator condition.
  \begin{itemize}
  \item $\forall e, e' \in E^\omega\, ,\ \econc{e}{e'} \Rightarrow [Q^\omega(e), Q^\omega(e')] = 0$

    Let $e, e' \in E^\omega$ such that $\econc{e}{e'}$. We have two cases:
    \begin{enumerate}
    \item $e, e' \in E_n$

      By Definition~\ref{def:lub-3} we have $Q^\omega(e) = Q_n(e)$ and $Q^\omega(e') = Q_n(e')$ and
      since $\es{U}_n$ is a unitary event structure we are done.

    \item $e \in E_n$ and $e' \in E_m$ such that $\es{U}_n \trianglelefteq \es{U}_m$

      By Definition~\ref{def:lub-3} we have $Q^\omega(e) = Q_n(e)$ and $Q^\omega(e') = Q_m(e')$.
      From $\es{U}_n \trianglelefteq \es{U}_m$, we have that $Q_n(e) = Q_m(e)$ and since $\es{U}_m$
      is a unitary event structure, then we are done.
    \end{enumerate}
    
  \item $\minconflict$ is transitive

    We want to show that for $e, e', e'' \in E^\omega$ if $e \minconflict e'$ and
    $e' \minconflict e''$ then $e \minconflict e''$.
    According to Definition~\ref{def:lub-3} we have three cases:
    \begin{enumerate}
    \item $e, e', e'' \in E_n$

      Then we are done, since $\es{U}_n$ is a unitary event structure.
      
    \item $e \in E_n$ and $e', e'' \in E_m$ such that $\es{U}_n \trianglelefteq \es{U}_m$

      From $\es{U}_n \trianglelefteq \es{U}_m$ we know that $E_n \subseteq E_m$, hence $e \in E_m$.
      Since $\es{U}_m$ is a unitary event structure we are done.
      
    \item $e \in E_n$, $e' \in E_m$, and $e'' \in E_k$ such that $\es{U}_n \trianglelefteq \es{U}_m \trianglelefteq \es{U}_k$

      From $\es{U}_n \trianglelefteq \es{U}_m \trianglelefteq \es{U}_k$ we have that $E_n \subseteq E_m \subseteq E_k$.
      Hence, $e, e' \in E_k$.
      Since $\es{U}_k$ is a unitary event structure we are done.
    \end{enumerate}
    
  \item $\forall e \in E^\omega,\ |[e]| > 1 \Rightarrow \sum_{e' \in [e]} Q(e') = U$

    Let $e \in E^\omega\, .\, |[e]| > 1$.  By Definition~\ref{def:lub-3}, $\exists n \in \omega$
    such that $e \in E_n$ and $Q^\omega(e) = Q_n(e)$. Since $\es{U}_n$ is a unitary event structure
    we have $\sum_{e' \in [e]} Q^\omega(e') = \sum_{e' \in [e]} Q_n(e') = U$.
  \end{itemize}
\end{proof}

\subsubsection*{Proof of Lemma~\ref{lem:lub-3}}
\begin{proof}
  Due to Lemma~\ref{lem:lub-1} we only need to focus on the quantum condition.
  \begin{itemize}
  \item $\es{U}^\omega$ is an upper bound

    $\forall n \in \omega$ we need to have $\es{U}_n \trianglelefteq \es{U}^\omega$.  We need to
    check that $\forall e \in E_n\, .\, Q_n(e) = Q^\omega(e)$.  It follows directly from
    Definition~\ref{def:lub-3} that $\exists n \in \omega\, .\, e \in E_n$ and
    $Q^\omega(e) = Q_n(e)$.
    
  \item $\es{U}^\omega$ is a least upper bound

    Let $\es{U} = \qpes{E}{\leq}{\#}{Q}$ be an upper bound of the chain.  We need to show that if
    $\es{U}_n \trianglelefteq \es{U}^\omega$ and $\es{U}_n \trianglelefteq \es{U}$ then
    $\es{U}^\omega \trianglelefteq \es{U}$.  From $\es{U}_n \trianglelefteq \es{U}^\omega$,
    $\forall e \in E_n\, .\, Q_n(e) = Q^\omega(e)$.  By Definition~\ref{def:lub-3},
    $\exists n \in \omega\, .\, e \in E_n$ and $Q^\omega(e) = Q_n(e)$.  From
    $\es{U}_n \trianglelefteq \es{U}$, $\forall e \in E_n\, .\, Q_n(e) = Q(e)$.  Thus
    $\forall e \in \es{U}^\omega,\ \exists n \in \omega\, .\, e \in E_n$ and
    $Q^\omega(e) = Q_n(e) = Q(e)$.
  \end{itemize}
\end{proof}

\subsubsection*{Proof of Lemma~\ref{lem:seq-fix-mono-3}}
\begin{proof}
  Let
  \begin{align*}
    & \es{U} = \qpes{E}{\leq}{\#}{Q} \\
    & \es{U}_1 = \qpes{E_1}{\leq_1}{\#_1}{Q_1} \\
    & \es{U}_2 = \qpes{E_2}{\leq_2}{\#_2}{Q_2} \\
    & \seq{\es{U}}{\es{U}_1} = \qpes{E^1}{\leq^1}{\#^1}{Q^1} \\ 
    & \seq{\es{U}}{\es{U}_2} = \qpes{E^2}{\leq^2}{\#^2}{Q^2}
  \end{align*}
  such that $\es{U}_1 \trianglelefteq \es{U}_2$.

  We want to show $\forall e \in E^1\, .,\ Q^1(e) = Q^2(e)$.  Due to Lemma~\ref{lem:seq-fix-mono-1}
  we focus solely on the quantum condition.  Let $e \in E^1$.  By Definition~\ref{def:pes-seq3} we
  have two cases:
  \begin{enumerate}
  \item $e \in E$

    It follows directly that $Q^1(e) = Q(e) = Q^2(e)$.
    
  \item $e=(e_1,x) \in E_1 \times \confmax{\es{U}}$

    We have $Q^1(e) =  Q_1(e_1)$.
    From $\es{U}_1 \trianglelefteq \es{U}_2$, $Q_1(e_1) = Q_2(e_1)$.
    By Definition~\ref{def:pes-seq3}, $Q_2(e_1) = Q^2(e)$.
    Thus $Q^1(e) = Q^2(e)$.
  \end{enumerate}
\end{proof}

\subsubsection*{Proof of Lemma~\ref{lem:conc-fix-mono-3}}
\begin{proof}
  Let
  \begin{align*}
    & \es{U}_1 = \qpes{E_1}{\leq_1}{\#_1}{Q_1} \\ & \es{U}_2 = \qpes{E_2}{\leq_2}{\#_2}{Q_2} \\
    & \es{U}'_1 = \qpes{E'_1}{\leq'_1}{\#'_1}{Q'_1} \\ & \es{U}_2 = \qpes{E'_2}{\leq'_2}{\#'_2}{Q'_2} \\
    & \conc{\es{U}_1}{\es{U}_2} = \qpes{E}{\leq}{\#}{Q} \\ 
    & \conc{\es{U}'_1}{\es{U}'_2} = \qpes{E'}{\leq'}{\#'}{Q'}
  \end{align*}
  such that $\es{U}_1 \trianglelefteq \es{U}'_1$ and $\es{U}_2 \trianglelefteq \es{U}'_2$.

  We want to show that $\forall e \in E\, .\, Q(e) = Q'(e)$.  Due to Lemma~\ref{lem:conc-fix-mono-1}
  we focus solely on the quantum condition.  Let $e \in E$.  By Definition~\ref{def:pes-conc3} we
  have two cases:
  \begin{enumerate}
  \item $e \in E_1$

    We know that $Q(e) = Q_1(e)$.  Since $\es{U}_1 \trianglelefteq \es{U}'_1$, $Q_1(e) = Q'_1(e)$.
    By Definition~\ref{def:pes-conc3}, $Q'_1(e) = Q'(e)$.  Thus $Q(e) = Q'(e)$.

  \item $e \in E_2$

    Similar to the previous.
  \end{enumerate}
\end{proof}

\subsubsection*{Proof of Lemma~\ref{lem:meas-fix-mono-3}}
\begin{proof}
  Let
  \begin{align*}
    & \es{U}_1 = \qpes{E_1}{\leq_1}{\#_1}{Q_1} \\ & \es{U}_2 = \qpes{E_2}{\leq_2}{\#_2}{Q_2} \\
    & \es{U}'_1 = \qpes{E'_1}{\leq'_1}{\#'_1}{Q'_1} \\ & \es{U}_2 = \qpes{E'_2}{\leq'_2}{\#'_2}{Q'_2} \\
    & \meas{n}{\es{U}_1}{\es{U}_2} = \qpes{E}{\leq}{\#}{Q} \\
    & \meas{n}{\es{U}'_1}{\es{U}'_2} = \qpes{E'}{\leq'}{\#'}{Q'}
  \end{align*}
  such that $\es{U}_1 \trianglelefteq \es{U}'_1$ and $\es{U}_2 \trianglelefteq \es{U}'_2$.

  The conditions to check are:
  \begin{enumerate}
  \item $E \subseteq E'$     
  \item $\forall e, e'\, .\, e \leq e' \Leftrightarrow e,e' \in E \wedge e \leq' e'$
  \item $\forall e, e'\, .\, e \# e' \Leftrightarrow e,e' \in E \wedge e \#' e'$
  \item $\forall e \in E\, .\, Q(e) = Q'(e)$
  \end{enumerate}

  The first three conditions follow directly from Definition~\ref{def:pes-meas3}.  Hence we focus on
  the last one.

  Let $e \in E$.
  We have the four cases:
  \begin{enumerate}
  \item $e = \tau_0^n$

    By Definition~\ref{def:pes-meas3} we are done since, $Q(\tau_0^n) = Q'(\tau_0^n)$.
    
  \item $e = \tau_1^n$

    By Definition~\ref{def:pes-meas3} we are done since, $Q(\tau_1^n) = Q'(\tau_1^n)$.
    
  \item $e \in E_1$

    We know that $Q(e) = Q_1(e)$.
    From $\es{U}_1 \trianglelefteq \es{U}'_1$, $Q_1(e) = Q'_1(e)$.
    By Definition~\ref{def:pes-meas3}, $Q'_1(e) = Q'(e)$.
    Thus $Q(e) = Q'(e)$.
    
  \item $e \in E_2$

    Similar to the previous point.
  \end{enumerate}
\end{proof}

\subsubsection*{Proof of Lemma~\ref{lem:seq-cont-3}}
\begin{proof}
  Similar to Lemma~\ref{lem:seq-cont-1}.
\end{proof}

\subsubsection*{Proof of Lemma~\ref{lem:conc-cont-3}}
\begin{proof}
  Similar to Lemma~\ref{lem:conc-cont-1}.  
\end{proof}

\subsubsection*{Proof of Lemma~\ref{lem:meas-cont-3}}
\begin{proof}
  By Lemma~\ref{lem:meas-fix-mono-3}, the measurement is monotone.  Then we only need to show that
  each event of $\meas{q}{\bigsqcup_n \es{U}_n}{\bigsqcup_m \es{U}_m}$ is an event of
  $\bigsqcup_{n,m}(\meas{q}{\es{U}_n}{\es{U}_m})$.  Let $e$ be an event of
  $\meas{q}{\bigsqcup_n \es{U}_n}{\bigsqcup_m \es{U}_m}$. We have four cases:
  \begin{enumerate}
  \item $e = \tau_0^n$

    It follows directly from Definition~\ref{def:pes-meas3}.
    
  \item $e = \tau_1^n$

    It follows directly from Definition~\ref{def:pes-meas3}.
    
  \item $e$ is an event of $\bigsqcup_n \es{U}_n$

    By Definition~\ref{def:lub-3}, $\exists n \in \omega$ such that $e$ is an event of $\es{U}_n$.
    By Definition~\ref{def:pes-meas3}, $e$ is an event of $\meas{q}{\es{U}_n}{\es{U}_m}$.  By
    Definition~\ref{def:lub-3}, $e$ is an event of $\bigsqcup_{n,m}(\meas{q}{\es{U}_n}{\es{U}_m})$.
    
  \item $e$ is an event of $\bigsqcup_m \es{U}_m$

    Similar to the previous point.
  \end{enumerate}
\end{proof}

\subsubsection*{Proof of Lemma~\ref{res:soundI-fix-3}}
\begin{proof}
  \begin{itemize}
  \item $\qwhi{n}{\com{C}} \xrightarrow{\tau_0^n} \checkmark$

    We know that
    \[
      \mf{\qwhi{n}{\com{C}}} = \mf{\meas{n}{\checkmark}{\seq{\com{C}}{\qwhi{n}{\com{C}}}}}
    \]
    Hence
    \[
      \mf{\qwhi{n}{\com{C}}} \backslash \tau_0^n =
      \mf{\meas{n}{\checkmark}{\seq{\com{C}}{\qwhi{n}{\com{C}}}}} \backslash \tau_0^n
    \]
    By Lemma~\ref{lem:meas-rem-init3}, we have $\mf{\checkmark}$.  Hence
    $\mf{\qwhi{n}{\com{C}}} \backslash \tau_0^n = \mf{\checkmark}$.
    
  \item $\qwhi{n}{\com{C}} \xrightarrow{\tau_1^n} \seq{\com{C}}{\qwhi{n}{\com{C}}}$

    We know that
    \[
      \mf{\qwhi{n}{\com{C}}} = \mf{\meas{n}{\checkmark}{\seq{\com{C}}{\qwhi{n}{\com{C}}}}}\]
    Hence
    \[
      \mf{\qwhi{n}{\com{C}}} \backslash \tau_1^n =
      \mf{\meas{n}{\checkmark}{\seq{\com{C}}{\qwhi{n}{\com{C}}}}} \backslash \tau_1^n
    \]
    By Lemma~\ref{lem:meas-rem-init3}, we have $\mf{\seq{\com{C}}{\qwhi{n}{\com{C}}}}$.  Hence
    $\mf{\qwhi{n}{\com{C}}} \backslash \tau_0^n = \mf{\seq{\com{C}}{\qwhi{n}{\com{C}}}}$.
  \end{itemize}
\end{proof}

\subsubsection*{Proof of Lemma~\ref{res:adI-fix-3}}
\begin{proof}
  \begin{itemize}
  \item $l \in \init{\mf{\qwhi{n}{\com{C}}}}$

    Since $\mf{\qwhi{n}{\com{C}}} = \mf{\meas{n}{\checkmark}{\seq{\com{C}}{\qwhi{n}{\com{C}}}}}$, we have
    that $l = \tau_0^n$ or $l = \tau_1^n$.
    We have two cases:
    \begin{enumerate}
    \item $l = \tau_0^n$

      Let $\com{C}' = \checkmark$.  We know that
      \[
        \mf{\qwhi{n}{\com{C}}} = \mf{\meas{n}{\checkmark}{\seq{\com{C}}{\qwhi{n}{\com{C}}}}}
      \]
      Hence
      \[
        \mf{\meas{n}{\checkmark}{\seq{\com{C}}{\qwhi{n}{\com{C}}}}} \backslash \tau_0^n
      \]
      By Lemma~\ref{lem:meas-rem-init3}, we have $\mf{\checkmark}$.  Thus it follows directly that
      $\qwhi{n}{\com{C}} \xrightarrow{\tau_0^n} \checkmark$.
      
    \item $l = \tau_1^n$

      Let $\com{C}' = \seq{\com{C}}{\qwhi{n}{\com{C}}}$.  We know that
      \[
        \mf{\qwhi{n}{\com{C}}} = \mf{\meas{n}{\checkmark}{\seq{\com{C}}{\qwhi{n}{\com{C}}}}}
      \]
      Hence
      \[
        \mf{\meas{n}{\checkmark}{\seq{\com{C}}{\qwhi{n}{\com{C}}}}} \backslash \tau_1^n
      \]
      By Lemma~\ref{lem:meas-rem-init3}, we have $\mf{\checkmark}$.  Thus it follows directly that
      $\qwhi{n}{\com{C}} \xrightarrow{\tau_1^n} \seq{\com{C}}{\qwhi{n}{\com{C}}}$.
    \end{enumerate}
  \end{itemize}
\end{proof}

\end{document}